\newtheorem{definition}{Definition}[section]
\newtheorem{proposition}[definition]{Proposition}
\newtheorem{example}[definition]{Exemple}
\newtheorem{corollary}[definition]{Corollary}
\newtheorem{theorem}[definition]{Theorem}
\newtheorem{lemma}[definition]{Lemma}
\newtheorem{remark}[definition]{Remark}
\newtheorem{notation}[definition]{Notation}
\newtheorem{scholium}[definition]{Scholium}
\definecolor{myblue}{RGB}{0,0,0} 
\titleformat{\chapter}[display]
  {\normalfont\bfseries\color{myblue}}
  {\filleft%
    \begin{tikzpicture}
    \node[
      outer sep=0pt,
      text width=2.5cm,
      minimum height=3cm,
      fill=myblue,
      font=\color{white}\fontsize{80}{90}\selectfont,
      align=center
      ] (num) {\thechapter};
    \node[
      rotate=90,
      anchor=south,
      font=\color{black}\Large\normalfont
      ] at ([xshift=-5pt]num.west) {\textls[180]{\textsc{\chaptertitlename}}};  
    \end{tikzpicture}%
  }
  {10pt}
  {\titlerule[2.5pt]\vskip3pt\titlerule\vskip4pt\Large\centering\sffamily}
\newcommand{\calgebra}{\mathfrak{A}}
\newcommand{\nalgebra}{\mathfrak{M}}
\newcommand{\algebra}{\mathfrak{S}}
\newcommand{\nanalytic}{\mathfrak{M}_\mathcal{A}}
\newcommand{\ex}{\mathfrak{E}}
\newcommand{\hilbert}{\mathcal{H}}
\newcommand{\iu}{i\mkern1mu}
\newcommand{\Dom}[1]{\mathcal{D}\left(#1\right)}
\newcommand{\ie}{\textit{i}.\textit{e}.}
\newcommand{\eg}{\textit{e}.\textit{g}.}
\newcommand{\strip}[1]{\mathcal{D}_{#1}}
\newcommand{\ip}[2]{\left\langle #1 , #2 \right \rangle}
\newcommand{\cchull}[1]{\overline{conv}\left(#1\right)}
\newcommand{\chull}[1]{conv\left(#1\right)}
\newcommand*{\defeq}{\mathrel{\rlap{%
                     \raisebox{0.3ex}{$\m@th\cdot$}}%
                     \raisebox{-0.3ex}{$\m@th\cdot$}}%
                     =}
\DeclareMathOperator{\supp}{supp}
\DeclareMathOperator{\Ker}{Ker}
\renewcommand{\ker}[1]{\operatorname{Ker}\left(#1\right)}
\DeclareMathOperator{\Ran}{Ran}
\newcommand{\ran}[1]{\operatorname{Ran}\left(#1\right)}
\newcommand{\Tr}[0]{\mathrm{Tr}}
\let\Re\relax
\newcommand{\Re}[1]{\operatorname{Re}\left(#1\right)}
\let\Im\relax
\newcommand{\Im}[1]{\operatorname{Im}\left(#1\right)}
\newcommand{\sgn}[1]{\operatorname{sgn}\left(#1\right)}
\title{Noncommutative $L_p$-Spaces \\and\\ Perturbations of KMS States}
\author{Ricardo Correa da Silva}
\keywords{{von Neumann Algebras} {KMS States} {Non-Commutative Integration} {}}
\newcolumntype{L}[1]{>{\let\\\arraybackslash\hspace{0pt}}p{#1}}
\newcolumntype{C}[1]{>{\centering\let\newline\\\arraybackslash\hspace{0pt}}p{#1}}
\newcolumntype{R}[1]{>{\raggedleft\let\newline\\\arraybackslash\hspace{0pt}}p{#1}}
	\renewcommand*{\glsgroupheading}[1]{}%
\newglossaryentry{balgebra}{type=symbols,name=$\mathfrak{S}$,
	description={Banach algebra}}
\newglossaryentry{calgebra}{type=symbols,name=$\calgebra$,
	description={$C^\ast$-algebra}}
\newglossaryentry{nalgebra}{type=symbols,name=$\nalgebra$,
	description={von Neumann algebra}}
\newglossaryentry{projections}{type=symbols,name=$\nalgebra_p$,
	description={set of all orthogonal projections in $\nalgebra$}}
\newglossaryentry{nanalytic}{type=symbols,name=$\nanalytic$,
	description={set of analytic elements of a von Neumann algebra with respect to a one parameter automorphism group}}
\newglossaryentry{extr}{type=symbols,name=$\mathcal{E}(X)$,
	description={set of extremal points of $X$}}
\newglossaryentry{exp}{type=symbols,name=$\ex_p$,
	description={set of $p$-exponentiable elements}}
\newglossaryentry{hilbert}{type=symbols,name=$\hilbert$,
	description={Hilbert space}}
\newglossaryentry{strip}{type=symbols,name=$\strip{\gamma}$,
	description={the strip in the complex plane with \mbox{$0<\sgn{\gamma}\Im{z}< |\gamma|$}}}
\newglossaryentry{chull}{type=symbols,name=$\chull{X}$,
	description={convex hull of the set $X$}}
\newglossaryentry{weight}{type=symbols,name={$\phi$,$\psi$},
	description={weights on a $C^\ast$-algebra}}
\newglossaryentry{state}{type=symbols,name=$\omega$,
	description={states on a $C^\ast$-algebra}}
\newglossaryentry{trace}{type=symbols,name=$\tau$,
	description={trace on a $C^\ast$-algebra}}
\newglossaryentry{modaut}{type=symbols,name={$\tau^\phi$,$\tau_\Phi$},
	description={the modular automorphism group with respect to the weight $\phi$ or to the vector $\Phi$, respectively}}
\newglossaryentry{modDelta}{type=symbols,name=$\Delta_\Phi$,
	description={the modular operator with respect to the cyclic and separating vector $\Phi$}}
\newglossaryentry{modJ}{type=symbols,name=$J_\Phi$,
	description={the modular conjugation operator with respect to the cyclic and separating vector $\Phi$}}
\newglossaryentry{cones}{type=symbols,name=$V_\Phi^\alpha$,
	description={the cones}}
\newglossaryentry{Nphi}{type=symbols,name=$\mathfrak{N}_\Phi$,
	description={}}
\newglossaryentry{Fphi}{type=symbols,name=$\mathfrak{F}_\Phi$,
	description={}}
\newglossaryentry{Mphi}{type=symbols,name=$\mathfrak{M}_\Phi$,
	description={}}
\newglossaryentry{N2phi}{type=symbols,name=$N_\Phi$,
	description={}}
\newglossaryentry{fourierT}{type=symbols,name=$\hat{f}$,
	description={the Fourier transform of f}}
\newglossaryentry{affil}{type=symbols,name=$\nalgebra_\eta$,
	description={set of the operators affiliated to $\nalgebra$}}
\newglossaryentry{measurable}{type=symbols,name=$\nalgebra_\tau$,
	description={set of all $\tau$-measurable operators affiliated to $\nalgebra$}}
\newglossaryentry{posimeasurable}{type=symbols,name=$\nalgebra_\tau^+$,
	description={set of all positive $\tau$-measurable operators affiliated to $\nalgebra$}}
\newglossaryentry{realnumbers}{type=symbols,name=$\mathbb{R}_+$,
	description={set of the positive real numbers, which includes zero}}
\newglossaryentry{realextendedline}{type=symbols,name=$\overline{\mathbb{R}}$,
	description={the extended real line}}	
\begin{document}

\frontmatter


%
%


\begin{titlepage}
	\newgeometry{a4paper,top=3.5cm,left=3cm,right=3cm,bottom=2.5cm}
	\pagestyle{empty}
	\begin{center}
		
		{\fontsize{16}{16} \selectfont Universidade de S\~ao Paulo \\}
		\vspace{0.1cm}
		{\fontsize{16}{16} \selectfont Instituto de F\'{i}sica}
		\vspace{2.5cm}
		
		{\fontsize{22}{22}\selectfont Espa{\c c}os $L_p$ N\~ao-Comutativos \\ \vspace{0.2cm} e \\ \vspace{0.2cm} Perturba{\c c}\~oes de Estados KMS\par}
		\vspace{1.2cm}
		
		
		{\fontsize{18}{18}\selectfont Ricardo Correa da Silva\par}
		
		\vspace{1.5cm}
		
	\end{center}
	
	\leftskip 6cm
	\begin{flushright}	
		\leftskip 6cm
		Orientador: Prof. Dr. Jo\~ao Carlos Alves Barata
		\leftskip 6cm
		Co-orientador: Prof. Dr. Christian Dieter J\"akel
	\end{flushright}	
	
	\vspace{0.8cm}    
	
	
	\par
	\leftskip 6cm
	\noindent {Tese de doutorado apresentada ao Instituto de F\'{i}sica da Universidade de S\~{a}o Paulo, como requisito parcial para a obten\c{c}\~{a}o do t\'{i}tulo de Doutor em Ci\^{e}ncias.}
	\par
	\leftskip 0cm
	\vskip 1.5cm
	
	
	\noindent Banca Examinadora: \\
	\noindent Prof. Dr. Jo\~ao Carlos Alves Barata - Orientador (IFUSP)\\
	Prof. Dr. C\'esar Rog\'erio de Oliveira (UFSCar)\\
	Prof. Dr. Domingos Humberto Urbano Marchetti (IFUSP)\\
	Prof. Dr Jean-Bernard Bru (BCAM)\\
	Prof. Dr. Walter Alberto de Siqueira Pedra (IFUSP)\\
	\vspace{1cm}

	\centering
	{S\~ao Paulo \\  2018}
	\clearpage
	
		\pagebreak
	
	\vspace*{240pt}
	\centering
	{\fontfamily{\sfdefault}\selectfont
		\fontsize{14pt}{16pt}{\bfseries{FICHA CATALOGR\'AFICA\\
				Preparada pelo Servi\c{c}o de Biblioteca e Informa\c{c}\~ao\\
				do Instituto de F\'isica da Universidade de S\~ao Paulo}}}\\
	\vspace{0.2cm}
	\fbox{\begin{minipage}{14cm}
			\vspace{12pt}
			
			{\fontfamily{\sfdefault}\selectfont
				\hspace{0.8cm}Silva, Ricardo Correa da
				
				\vspace{12pt}	
				
				\hspace{0.8cm}Espa\c{c}os Lp n\~ao-comutativos e perturba\c{c}\~oes de estados KMS /
				Noncommutative Lp-spaces and perturbations of KMS states.  S\~ao Paulo, 2018.
				
				\vspace{12pt}
				
				\hspace{0.8cm} Tese (Doutorado) - Universidade de S\~ao Paulo. Instituto de F\'isica. Depto. de F\'isica Matem\'atica.
				
				\vspace{12pt}
				
				\hspace{0.8cm}Orientador: Prof. Dr. Jo\~ao Carlos Alves Barata
				
				\hspace{0.8cm}\'Area de Concentra\c{c}\~ao: Matem\'atica Aplicada.
				
				\vspace{12pt}
				
				\hspace{0.8cm} Unitermos: 1. F\'isica matem\'atica; 2. Mec\^anica estat\'istica; 3. Mec\^anica qu\^antica; 4. M\'etodos matem\'aticos da f\'isica.
				
				\vspace{36pt}
				
				USP/IF/SBI-066/2018}
		\end{minipage}
	}
	
\end{titlepage}

\begin{titlepage}
	\newgeometry{a4paper,top=3.5cm,left=3cm,right=3cm,bottom=2.5cm}
	\pagestyle{empty}
	\begin{center}
		
		{\fontsize{16}{16} \selectfont University of S\~ao Paulo \\}
		\vspace{0.1cm}
		{\fontsize{16}{16} \selectfont Physics Institute}
		\vspace{2.5cm}
		
		{\fontsize{22}{22}\selectfont Noncommutative $L_p$-Spaces \\ \vspace{0.2cm} and \\ \vspace{0.2cm} Perturbations of KMS States \par}
		\vspace{1.2cm}
		
		
		{\fontsize{18}{18}\selectfont Ricardo Correa da Silva\par}
		
		\vspace{1.5cm}
		
	\end{center}
	
	\leftskip 6cm
	\begin{flushright}	
		\leftskip 6cm
		Supervisor: Prof. Dr. Jo\~ao Carlos Alves Barata\\
		Co-supervisor: Prof. Dr. Christian Dieter J\"akel
	\end{flushright}	
	
	\vspace{0.8cm}    
	
	
	\par
	\leftskip 6cm
	\noindent {Thesis submitted to the Physics Institute of the University of S\~ao Paulo in partial fulfillment of the requirements for the degree of Doctor of Science.}
	\par
	\leftskip 0cm
	\vskip 1.5cm
	
	
	\noindent Examining Committee: \\
	\noindent 	Prof. Dr. Jo\~ao Carlos Alves Barata - Supervisor (IFUSP)\\
	Prof. Dr. C\'esar Rog\'erio de Oliveira (UFSCar)\\
	Prof. Dr. Domingos Humberto Urbano Marchetti (IFUSP)\\
	Prof. Dr Jean-Bernard Bru (BCAM)\\
	Prof. Dr. Walter Alberto de Siqueira Pedra (IFUSP)\\
	\\
	\vspace{1cm}

	\centering
	{S\~ao Paulo \\  2018}
	\clearpage

\end{titlepage}

\restoregeometry


\begin{acknowledgements}      

I would like to express my gratitude to my advisors Prof. Dr. Christian J\"akel and Prof. Dr. Jo\~ao Carlos Alves Barata for trusting me such a wonderful research project, for their guidance during its development, and for all valuable advices over the last years.

I also want to thank Prof. Dr. Jean-Bernard Bru, Prof. Dr. Walter de Siqueira Pedra, and the Basque Center of Applied Mathematics the opportunity of visiting this fantastic research centre where I found a wonderful academic environment that also helped me improve some ideas on this thesis.

To all my friends, whose names I will not write here to avoid the risk of forgetting one, for all those fruitful discussions that ended up on a chalk board, those enjoyable moments around a coffee table, and for still having faith in me even when I didn't, I can't thank you enough.

To my family, the most reliable and supportive people in my life, I would never be able to finish this work without you.

Finally, thanks to CAPES for the financial support.

\end{acknowledgements}


\thispagestyle{empty}
\setlength\epigraphwidth{10cm}
\setlength\epigraphrule{0pt}
\epigraph{\vspace{15cm} \textit{A mathematician is a person who can find analogies between theorems; a better mathematician is one who can see analogies between proofs and the best mathematician can notice analogies between theories. One can imagine that the ultimate mathematician is one who can see analogies between analogies.}}{\textit{Stefan Banach}}


\begin{abstract}
We extend the theory of perturbations of KMS states to some class of unbounded perturbations using noncommutative $L_p$-spaces. We also prove certain stability of the domain of the Modular Operator associated to a $\|\cdot\|_p$-continuous state. This allows us to define an analytic multiple-time KMS condition and to obtain its analyticity together with some bounds to its norm. The main results are Theorem \ref{TR0}, Theorem \ref{TR1} and Corollary \ref{CR1}.

Apart from that, this work contains a detailed review, with minor contributions due to the author, starting with the description of $C^\ast$-algebras and von Neumann algebras followed by weights and representations, a whole chapter is devoted to the study of KMS states and its physical interpretation as the states of thermal equilibrium, then the Tomita-Takesaki Modular Theory is presented, furthermore, we study analytical properties of the modular operator automorphism group, positive cones and bounded perturbations of states, and finally we start presenting multiple versions of noncommutative $L_p$-spaces.

\vspace{1cm}
\noindent \textbf{Keywords:} KMS states, noncommutative $L_p$-spaces, unbounded perturbations.

\end{abstract}

\begin{resumo}
	
Apresentamos uma extens\~ao da teoria de perturba\c{c}\~oes de estados KMS para uma classe de operadores ilimitados atrav\'es dos espa\c{c}os $L_p$ n\~ao-comutativos. Al\'em disso, provamos certa estabilidade do dom\'inio do Operador Modular de um estado \mbox{$\|\cdot\|_p$-cont\'inuo}, o que nos permite escrever a condi\c{c}\~ao KMS para tempos m\'ultiplos e obter sua analiticidade junto com majorantes para sua norma. Os principais resultados s\~ao o Teorema \ref{TR0}, o Teorema \ref{TR1} e o Corol\'ario \ref{CR1}.

Al\'em disso, nesse trabalho fazemos uma detalhada revis\~ao, com contribui{\c c}\~oes menores devidas ao autor, come{\c c}ando com uma descri{\c c}\~ao de \'algebras $C^\ast$ e \'algebras de von Neumann, seguida por pesos e representa{\c c}\~oes, um cap\'itulo inteiro \'e dedicado ao estudo de estados KMS e sua interpreta{\c c}\~ao como estados de equil\'ibrio t\'ermico, depois apresentamos a Teoria Modular de Tomita-Takesaki, al\'em disso, estudamos as propriedades de analiticidade do grupo de automorfismo modular, cones positivos e perturba{\c c}\~oes de estados e finalmente, come{\c c}amos a apresentar m\'ultiplas vers\~oes dos espa{\c c}os $L_p$ n\~ao-comutativos.

\vspace{1cm}
\noindent \textbf{Palavras-chave:} estados KMS, espa\c{c}os $L_p$ n\~ao-comutativos, perturba\c{c}\~oes ilimitadas.
\end{resumo}




\tableofcontents




\printnomencl

\mainmatter

\renewcommand\thechapter{P}
\chapter{Preamble}


\section{Motives and Objectives}

\textit{Why this thesis?} The answer to this question demands a discussion about the meaning of several mathematical structures and their historical origin.

The first appearance of the term ``von Neumann algebra'' occurred in Dixmier's book ``Les Alg\'ebres d'Op\'erateurs dans l'Espace Hilbertien'' in 1957, following a suggestion by Dieudonn\'e, but the first appearance of such structure goes back to von Neumann article of 1930 \cite{Neumann1930} where he proved his fundamental Double Commutant Theorem.

His studies in quantum mechanics began when von Neumann started as Hilbert's assistant in 1926, Hilbert was very interested in mathematical formalization of physics, as can be seen in his sixth problem. He gave lectures on mathematical foundations of Quantum Mechanics that year, von Neumann attended the lectures and published a article in 1927 based on them, his first article on the subject. Hilbert strongly influenced von Neumann to work in both: formalization of Quantum Mechanics and Mathematical Logic. It is important to mention that it was von Neumann who gave the abstract definition of Hilbert spaces as known today and it was he who formulated the eigenvalue problem for self-adjoint operators in therms of spectral measures, solving the problems encountered in Dirac's work. 

Next, he started the study of a structure called, at that time, rings of operators. It was a natural consequence of von Neumann work on rigorous mathematical formalization of Quantum Mechanics, since rings of operators are closely related to observables as seen in Definition \ref{DCCA} and \ref{DCvNA}.

After that, von Neumann algebras were left out, reappearing only five years later in a series of articles by Murray and von Neumann \cite{Murray1936}, \cite{Murray1937}, \cite{Neumann1940}, and \cite{Murray1943}, giving rise to the classification theory of von Neumann algebras. Interestingly, at first just the type I von Neumann algebras seemed to be of physical interest while type III were almost mystical.  In fact, it wasn't even known if such an algebra existed, as its existence was proven only in 1940 in \cite{Neumann1940}. Nowadays, however, type III von Neumann algebras became the type with most physical significance, especially to Quantum Statistical Mechanics and Quantum Field Theory.

The core of classification theory of von Neumann algebras are the projections, as will be presented in Theorem \ref{projectionsLattice}, which form a complete lattice. This property was one of the clues used by Murray to propose an order relation for the set of projections, and to prove a Cantor-Schr\"oder-Bernstein type theorem.

Probably the more powerful tool in $C^\ast$-algebras is the functional calculus, a result of ideas and efforts of Stone and von Neumann started in an article in 1937 treating the subject (just for von Neumann algebras at that time). This new ideas also allow to treat unbounded operator, which are vital to the study of Quantum Mechanics. The ultimate result on this direction was made by Gelfand and Naimark in 1943 where they proved Theorem \ref{TGN} using the ideas developed by Stone, Murray and von Neumann in the above mentioned articles.

At that point the theory had slightly moved away from physics, but the work of Haag, Hugenholtz and Winnink \cite{haag67} changed that. Problems in quantum statistical mechanics were treated with a density matrix, \index{density matrix} but the limitations of this approach are evident: density matrices express the expectation value of an observable trough a trace, $\langle A \rangle=Tr(\rho A)$ where $\rho$ is the density matrix. Since this trace must be normalized, it only makes sense if the matrix has a $\ell_1$ sequence on its diagonal (remember we are supposing that the Hilbert space is separable), but then the condition of maximizing entropy in equilibrium fails, since the Lagrange multiplier method gives us a uniform probability. Another problem is that it is known that, in a box, the spectrum of an operator like the Hamiltonian typically appears in a countable number, but when the limit for infinit boxes is taken, the operator produces a continuum spectrum and the density matrix formalism crumbles completely. These problems forbid us to formulate Gibb's equilibrium states directly in thermodynamical limit.

The main contribution of Haag, Hugenholtz and Winnink, that will be discussed in Chapter \ref{chapKMS}, was to describe the precise condition of equilibrium of a system with infinite degrees of freedom, known as KMS condition, see reference \cite{haag67}. The KMS condition is a result of an independent work of Kubo, who introduced it, and Martin and Schwinger who defined thermodynamical Green functions. The KMS condition is central in the study of von Neumann algebras and thermodynamical equilibrium and will be central in this work.

Another central contribution in this direction was made by Tomita and Takesaki. Takesaki attended to a conference in operator algebras and applications where Haag, Hugenholtz and Winnink presented a seminar about the work cited above and Tomita presented a work answering an open question about the relation between an algebra and its commutant. Takesaki put the ideas together and created the Tomita-Takesaki modular theory. Of course it has physical significance since it is related with equilibrium states. Tomita-Takesaki Modular Theory proved to be a powerful tool.

Finally, for physical reasons, we expect some ``good behaviour'' from equilibrium states under perturbations. Araki has proved that for a bounded perturbation this is true, but not much has been done in the last 35 years, the only work was \cite{Derezinski03}. Our proposal is this thesis is to understand perturbations on KMS states in the light of noncommutative $L_p$ spaces.

\section{Thesis Organization}

This work is going to present a long revision because there is no text book with all necessary subjects and results to define modular theory, perturbation of KMS states, and noncommutative $L_p$ spaces accordingly (probably \cite{Takesaki2002} and \cite{Takesaki2003} are the closest to this aim). It is indispensable, for our goal, to define a coherent language and to understand all the connections between previous results. Moreover, while the evolution of important physical and mathematical concepts has been discussed above, more than a few words are usually required to make concepts and connections clear. Therefore, a more extensive analysis may be presented in some cases.

Below, a more detailed explanation of contents in each chapter:

Chapter \ref{chapCW}: this chapter is devoted to theory of $C^\ast$-algebras and $W^\ast$-algebras. After the first definitions we present the advances in the theory in order to prove the GNS-Representation Theorem and Functional Calculus, this course pass through states and weights, which are fundamental concepts by themselves.

Chapter \ref{chapKMS}: here we present the definition of dynamical systems, analytic elements and equivalences and important physical and mathematical proprieties of the so called KMS states. The majority of the presentation of this chapter can be found in \cite{Bratteli1} and \cite{haag67}.

Chapter \ref{chapTTMT}: we present the construction of both Modular and Relative Modular theory. Despite the first has many good presentations in articles and books, \eg \cite{Araki74}, \cite{Blackadar2006}, \cite{Bratteli1}, \cite{Dou72}, \cite{Takesaki2002}, the second one is difficult to find and it is presented with slightly different definitions, equivalent at least on the context. We also show useful Modular Operator proprieties involving analyticity, positive cones and dual cones. Since Relative Modular Operator is in the core of the definition of noncommutative $L_p$ spaces, we will present the construction using weights. We also present in this chapter the noncommutative Radon-Nikodym theorem, which is central to our results, in an direct organized way and in all details just with some minor modifications to help comprehension and to suit the language of this thesis, the original presentation can be found in \cite{Pedersen73} and \cite{sakai65}.

Chapter \ref{chapNCLp}: in this chapter we present the definition and theory of noncommutative $L_p$ spaces with results from an extensive list of references, with some different proofs due to the author.

Chapter \ref{chapExtensionPerturb}: this chapter contains the results of this project, which means, our extension of the theory of perturbations which includes unbounded perturbations. It starts discussing a property of bounded perturbations that does not hold for unbounded ones. A less restrictive property is proposed and examples of unbounded perturbation satisfying this property are given. Finally, we prove a inequality that allows us to state some conditions under which Dyson's series converges.

\renewcommand{\thechapter}{\arabic{chapter}}
\setcounter{chapter}{0}


\chapter{$C^\ast, W^\ast$-Algebras, States, Weights, Representations, and all that.}
\label{chapCW}

We start this chapter with well-known definitions and properties. Those definitions motivate the study of weights, which is much less common in the standard literature, as well as the GNS construction for weights. In fact, much of the technical steps in this chapter were written by the author and some other proofs were adapted or recreated in order to present a self-contained and linear development of the theory. 
\section{Basic Definitions in $C^\ast, W^\ast$-Algebras }

\begin{definition}
A Banach $\ast$-algebra \index{algebra! Banach} \gls{balgebra} is a structure $(\mathfrak{B},\mathbb{K},\cdot,+,\circ,\ast,\|\cdot\|)$ consisting of a set $\mathfrak{B}$; a field $\mathbb{K}=\mathbb{R}, \mathbb{C}$; three binary operations: a scalar product $\cdot:\mathbb{K}\times \mathfrak{B} \to \mathfrak{B}$, a vector sum $+:\mathfrak{B}\times\mathfrak{B}\to\mathfrak{B}$ and an associative product $\circ:\mathfrak{B}\times \mathfrak{B} \to \mathfrak{B}$; an involution $\ast:\mathfrak{B} \to \mathfrak{B}$; and a norm $\|\cdot\|:\mathfrak{B} \to \mathbb{R}_+$, such that $(\mathfrak{B},\mathbb{K},\cdot,+,\|\cdot\|)$ is a Banach space over the field $\mathbb{K}$ and  $(\mathfrak{B},\mathbb{K},\cdot,+,\circ)$ is an associative algebra over the field $\mathbb{K}$ with the additional relation between the norm and the product
$$\|A\circ B\|\leq \|A\|\|B\| \qquad \forall A,B \in \mathfrak{B}.$$
\end{definition}

\begin{notation}
We used the notation $A\circ B$ to denote the multiplication but, as usual, we will omit this sign and write $AB$ instead of $A\circ B$.
\end{notation}

Note that we related the product to the norm, which ensures the continuity of the multiplication since, for a fixed $A, B\in \mathfrak{B}$ and $0<\epsilon<1$ given, take ${\delta=\frac{\epsilon}{2\left(\|B\|+1\right)\left(\|A\|+1\right)}}$, then $\forall A_1\times B_1 \in B_\delta(A)\times B_\delta(B)$ we get 
$$\begin{aligned}
\|AB-A_1B_1\|	&=\|AB-A_1B+A_1B-A_1B_1\|\\
				&\leq \|A-A_1\|\|B\|+\|A_1\|\|B-B_1\|\\
				&\leq \delta \|B\|+\|A_1\|\delta\\
				&\leq \delta\|B\|+\left(\|A\|+\delta\right)\delta\\
				&< \epsilon.
\end{aligned}$$

To this point no topological condition was required over the involution.

There are two interesting conditions available to impose over $\ast$:
\begin{flalign}
\label{Ccondition1}
\hspace{5.5 cm} &\|A^\ast A\|=\|A\|^2, &(B^\ast \text{ condition})\\
\label{Ccondition2}
\hspace{5.5 cm} &\|A^\ast A\|=\|A^\ast\|\|A\|. &(C^\ast \text{ condition})
\end{flalign} 

Of course a Banach algebra that satisfies the $B^\ast$ condition also meets the $C^\ast$ one, because due to the submultiplicative condition we have $\|A\|^2=\|A^\ast A\|\leq \|A^\ast\|\|A\| \Rightarrow \|A\|\leq\|A^\ast\|$, but with the substitution $A\mapsto A^\ast$ we obtain the opposite inequality, thus equality. The converse is true but non-trivial: a Banach $\ast$-algebra satisfying the $C^\ast$-condition also satisfies the $B^\ast$-condition. A stronger result which states that a complete normed $\ast$-algebra with continuous involution satisfies the $B^\ast$-condition that can be found in \cite{ArakiElliott73}. An alternative reference that includes an discussion on the topic is \cite{doran86}.

\begin{definition}
A (abstract) $C^\ast$-algebra \index{algebra! $C^\ast$} \gls{calgebra} is a Banach $\ast$-algebra that satisfies the $C^\ast$-condition. We say that $\calgebra$ is unital \index{algebra! unital} if it has an identity, \ie, and element $\mathbbm{1} \in \calgebra$ such that $\mathbbm{1}A=A\mathbbm{1}$ for every $A\in\calgebra$.
\end{definition} 

Since equations \eqref{Ccondition1} and \eqref{Ccondition2} are equivalent, we are going to use the first one, \ie, the $B^\ast$-condition.

\begin{proposition}[Polarization Identity]
Let $\calgebra$ be a $C^\ast$-algebra and $A, B \in \calgebra$. Then
\begin{equation}
\label{PolIden}
AB^\ast=\frac{1}{4}\sum_{n=0}^{3}\iu^n (A+\iu^n B)(A+\iu^n B)^\ast
\end{equation}
\end{proposition}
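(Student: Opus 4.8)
The plan is to verify the identity by a direct algebraic expansion of the right-hand side, exploiting only the conjugate-linearity of the involution and the vanishing of certain sums of fourth roots of unity; there is no analytic content, and in particular the $C^\ast$-condition plays no role. First I would record how $\ast$ acts on each summand: since the involution is conjugate-linear, $(A+\iu^n B)^\ast = A^\ast + \overline{\iu^n}\, B^\ast = A^\ast + \iu^{-n} B^\ast$, where I have used $\overline{\iu}=\iu^{-1}$, hence $\overline{\iu^n}=\iu^{-n}$.

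Next I would expand the generic product. Writing out $(A+\iu^n B)(A^\ast + \iu^{-n} B^\ast)$ gives $AA^\ast + \iu^{-n} AB^\ast + \iu^n BA^\ast + BB^\ast$, where the last term carries coefficient $\iu^n\iu^{-n}=1$. Multiplying through by the prefactor $\iu^n$ and summing over $n=0,1,2,3$, the term $AB^\ast$ acquires coefficient $\sum_{n=0}^{3}\iu^n\iu^{-n}=4$. The key observation is that every other term is annihilated by a root-of-unity sum: the coefficients of $AA^\ast$ and of $BB^\ast$ are both $\sum_{n=0}^{3}\iu^n = 1+\iu-1-\iu = 0$, while the coefficient of $BA^\ast$ is $\sum_{n=0}^{3}\iu^{2n}=\sum_{n=0}^{3}(-1)^n=0$. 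Thus the whole sum collapses to $4\,AB^\ast$, and dividing by $4$ delivers the stated identity.

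I do not expect any genuine obstacle here; the computation is purely formal and valid in any $\ast$-algebra over $\mathbb{C}$. The only point requiring care is the bookkeeping of the powers of $\iu$ under conjugation — in particular not conflating $\iu^n$ with its conjugate $\iu^{-n}$ — so that exactly the term $AB^\ast$ (and not $A^\ast B$ or $BA^\ast$) is the one whose coefficient sums to $4$. Provided one tracks these factors correctly, the two vanishing geometric sums do all the work.
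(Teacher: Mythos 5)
Your proof is correct. The paper itself states this proposition without any proof, so there is nothing to compare against: your direct expansion — using conjugate-linearity of the involution to get $(A+\iu^n B)^\ast = A^\ast + \iu^{-n}B^\ast$, then collapsing the cross terms via the vanishing sums $\sum_{n=0}^{3}\iu^n = 0$ and $\sum_{n=0}^{3}\iu^{2n} = 0$ while the $AB^\ast$ coefficient sums to $4$ — is exactly the standard argument the author left implicit, and your remark that the identity holds in any complex $\ast$-algebra (the $C^\ast$-condition being irrelevant) is accurate.
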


The previous proposition is a useful result, specially because we will see later it means that, in a unital $C^\ast$-algebra, every operator is a linear combination of four positive elements. Going a little further, we will see it is the case even when the $C^\ast$-algebra has no unit. 

\begin{notation}
	Throughout this work we will denote by $\hilbert$ a Hilbert space over $\mathbb{C}$ and by $\calgebra$ a von Neumann algebra acting on $\hilbert$.
\end{notation}

\begin{example}
	The first example of a $C^\ast$-algebra is $B(\hilbert)$, the set of all bounded operators on a Hilbert space $\hilbert$,  provided with the usual addition, scalar multiplication, adjoint operation, product, and norm.
	
	Of course, any norm closed $\ast$-subalgebra of $B(\hilbert)$ is another example, and, as we will see later, these are the only examples.
\end{example}

This motivates the following definition.

\begin{definition}
\label{DCCA}
A (concrete) $C^\ast$-algebra is a norm closed $\ast$-invariant sub-algebra of $B(\hilbert)$, where $\hilbert$ is a Hilbert space.
\end{definition}

One of the aims of this chapter is to prove that the abstract and the concrete definitions are equivalent.

\begin{definition}
Let $\hilbert$ be a Hilbert space, $B(\hilbert)$ the set of all bounded operators on $\hilbert$ provided with the usual addition, scalar multiplication, adjoint operation and product. The following families of seminorms define five vector space topologies on $B(\hilbert)$:
\begin{enumerate}[(i)]
\item $\rho(A)=\|A\|$; \hfill (norm or uniform topology)
\item $\rho_x(A)=\|Ax\|$, $x\in \hilbert$; \hfill \index{topology! strong operator} (strong operator topology)
\item $\rho_{x,y}(A)=|\ip{Ax}{y}|$, $x, y\in \hilbert$; \hfill \index{topology! weak operator} (weak operator topology) 
\item $\displaystyle  \rho_{(x_n)_{n\in\mathbb{N}}}(A)=\left(\sum_{n=1}^\infty\|Ax_n\|^2\right)^\frac{1}{2}$, $\displaystyle (x_n)_{n\in\mathbb{N}}\in \ell_2\left(\hilbert\right)$; \hfill \index{topology! ultra-strong operator} (ultra-strong operator topology)
\item $\displaystyle  \rho_{(x_n)_{n\in\mathbb{N}},(y_n)_{n\in\mathbb{N}}}(A)=\left(\sum_{n=1}^\infty|\ip{Ax_n}{y_n}|\right)^\frac{1}{2}$, $\displaystyle (x_n)_{n\in\mathbb{N}},(y_n)_{n\in\mathbb{N}}\in \ell_2\left(\hilbert\right)$.

\hfill \index{topology! ultra-weak operator} (ultra-weak operator topology)

\end{enumerate}
\begin{notation}
	We will use the acronyms SOT and WOT for strong operator topology and weak operator topology, respectively.
\end{notation}
\end{definition}

\begin{definition}
Let $\calgebra$ be a concrete $C^\ast$-algebra. We define the \index{commutant} commutant of $\calgebra$ to be the set
$$\calgebra^\prime=\left\{A^\prime \in B(\hilbert) \ \middle| \ AA^\prime=A^\prime A \quad \forall A\in \calgebra\right\}.$$ 
\end{definition}

We will skip the proof of von Neumann's Double Commutant Theorem, since it is a classical result which can be found in \cite{Bratteli1} and \cite{KR86}.
\begin{theorem}[Double Commutant Theorem]\index{theorem! double commutant}]
\label{TDC}
Let $\algebra$ be a unital $\ast$-invariant subalgebra of $B(\hilbert)$. The following are equivalent:
\begin{enumerate}[(i)]
	\item $\algebra=\algebra^{\prime \prime}$;
	\item $\algebra$ is weak-operator closed;
	\item $\algebra$ is strong-operator closed.
	\item $\algebra$ is ultra-weak operator closed;
	\item $\algebra$ is ultra-strong operator closed.
\end{enumerate}
\end{theorem}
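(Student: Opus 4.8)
The plan is to prove all five conditions equivalent by establishing the two cycles $(i) \Rightarrow (ii) \Rightarrow (iii) \Rightarrow (i)$ and $(i) \Rightarrow (iv) \Rightarrow (v) \Rightarrow (i)$. First I would dispose of the routine ingredients. The inclusion $\algebra \subseteq \algebra''$ holds for any subset, and the commutant is weak-operator closed: since for each fixed $A$ the maps $B \mapsto AB$ and $B \mapsto BA$ are WOT-continuous, the set $\{B : \comm{A}{B}=0\}$ is WOT-closed, and $\algebra' = \bigcap_{A\in\algebra}\{B : \comm{A}{B}=0\}$ is therefore WOT-closed; hence so is $\algebra''$. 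Comparing the five families of seminorms gives the ordering of topologies: WOT is coarser than SOT and than the ultra-weak topology, while the ultra-strong topology is finer than both the ultra-weak and the strong ones. Since a set closed in a coarser topology is closed in any finer one, the WOT-closedness of $\algebra=\algebra''$ yields $(i)\Rightarrow(ii)\Rightarrow(iii)$ and $(i)\Rightarrow(iv)\Rightarrow(v)$ at once.

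The substantial content is the density statement $\algebra'' \subseteq \overline{\algebra}^{\,\mathrm{SOT}}$, together with its ultra-strong sharpening. I would begin with the single-vector case: fix $T\in\algebra''$, a vector $x\in\hilbert$, and $\epsilon>0$, and let $P$ be the orthogonal projection onto the closure of $\{Ax : A\in\algebra\}$. Because $\algebra$ is a $\ast$-invariant subalgebra, this subspace and its orthogonal complement are both $\algebra$-invariant, so $P\in\algebra'$; as $T\in\algebra''$ it then commutes with $P$, and since $\mathbbm{1}\in\algebra$ we have $x=Px$, whence $Tx = PTx \in \ran{P}$. Thus $Tx$ lies in the closure of $\{Ax : A\in\algebra\}$, and some $A\in\algebra$ satisfies $\|Tx-Ax\|<\epsilon$.

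To pass to finitely many vectors (for SOT) or to an $\ell_2$-sequence (for the ultra-strong topology) I would use the amplification trick: replace $\hilbert$ by $\hilbert\otimes\mathbb{C}^n$, respectively $\hilbert\otimes\ell_2$, and $\algebra$ by the diagonal algebra $\{A\otimes\mathbbm{1}\}$. One checks that the bicommutant of this amplified algebra is $\algebra''\otimes\mathbbm{1}$, so $T\otimes\mathbbm{1}$ belongs to it; applying the single-vector lemma to the vector $(x_1,\dots,x_n)$, respectively $(x_n)_n$, produces $A\in\algebra$ with $\sum_i\|(T-A)x_i\|^2<\epsilon^2$, which is exactly SOT-, respectively ultra-strong-, approximation. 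I expect the main obstacle to be precisely this amplification bookkeeping, namely verifying that the commutant of the diagonal algebra consists of the operator matrices with entries in $\algebra'$, so that $(\algebra\otimes\mathbbm{1})''=\algebra''\otimes\mathbbm{1}$; once this is in place the rest is a direct transcription of the single-vector argument.

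Finally I would close the cycles. If $(iii)$ holds then $\overline{\algebra}^{\,\mathrm{SOT}}=\algebra$, so the density inclusion gives $\algebra''\subseteq\algebra$, and with $\algebra\subseteq\algebra''$ we obtain $(i)$; similarly, if $(v)$ holds then the ultra-strong density inclusion gives $\algebra''\subseteq\overline{\algebra}^{\,\mathrm{us}}=\algebra$, again yielding $(i)$. Combined with the trivial forward implications of the first paragraph, this establishes the equivalence of all five statements.
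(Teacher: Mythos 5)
Your proof is correct: the topology comparisons, the cyclic-projection argument for a single vector (using unitality and $\ast$-invariance to get $P\in\algebra'$ and $Tx\in\overline{\algebra x}$), and the amplification to $\hilbert\otimes\mathbb{C}^n$ and $\hilbert\otimes\ell_2$ together constitute the classical von Neumann bicommutant argument, and the two cycles you close do establish the equivalence of all five conditions. Note that the paper itself offers no proof of Theorem \ref{TDC} — it explicitly defers to \cite{Bratteli1} and \cite{KR86} — and your argument is precisely the standard proof found in those references, so you have correctly supplied what the paper omits.
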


\begin{definition}
\label{DAvNA}
A (abstract) von Neumann algebra \index{algebra! von Neumann} \gls{nalgebra} is a $C^\ast$-algebra $\nalgebra$ which has a pre-dual, \ie, there exists a Banach space $\nalgebra_\ast$ such that $\left(\nalgebra_\ast\right)^\ast=\nalgebra$ as a Banach space.
\end{definition}

\begin{definition}
	\label{DCvNA}
A concrete von Neumann algebra is a weak-operator closed \mbox{$\ast$-invariant} subspace of $B(\hilbert)$, where $\hilbert$ is a Hilbert space.
\end{definition}

One of the central ingredients in von Neumann algebras, as noticed by Murray and von Neumann himself\footnote{see \cite{Glimm2006}.}, are the (orthogonal) projections. The following result is the first in this direction and it is of central importance.

We warn the reader that we will usually use the word projection meaning orthogonal projection.

\begin{theorem}
\label{projectionsLattice}\glsdisp{projections}{\hspace{0pt}}
Let $\nalgebra$ be a von Neumann algebra. Then $$\nalgebra_p\doteq\left\{P\in \nalgebra \ \middle| \ P=P^\ast=P^2\right\}$$ (it means, the set of all orthogonal projections of $\nalgebra$) with its natural order given by $P\leq Q \Leftrightarrow QP=P$, is a complete lattice.
\end{theorem}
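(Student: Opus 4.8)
The plan is to first check that $\leq$ is genuinely a partial order, then reduce the completeness claim to the existence of arbitrary infima, and finally produce those infima geometrically while invoking the Double Commutant Theorem (Theorem \ref{TDC}) to guarantee they remain inside $\nalgebra$. For the poset axioms, reflexivity is immediate from $P^2=P$; the key simplification is that, since projections are self-adjoint, taking adjoints in $QP=P$ yields $PQ=P$, so that $QP=P \Leftrightarrow PQ=P$. This makes antisymmetry ($QP=P$ together with $PQ=Q$ force $P=PQ=Q$) and transitivity (if $QP=P$ and $RQ=Q$ then $RP=RQP=QP=P$) routine. Throughout I would work with the dictionary $P\leq Q \Leftrightarrow \ran{P}\subseteq \ran{Q}$, valid for orthogonal projections.

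For completeness it suffices to construct the infimum of an arbitrary family $\{P_i\}_{i\in I}\subseteq \nalgebra_p$, because suprema can then be recovered from the order-reversing involution $P\mapsto \mathbbm{1}-P$ on $\nalgebra_p$ (which lands in $\nalgebra$ since $\nalgebra$ is unital) via $\sup_i P_i = \mathbbm{1}-\inf_i(\mathbbm{1}-P_i)$. The natural candidate for the infimum is the orthogonal projection $P$ of $\hilbert$ onto the closed subspace $\mathcal{K}=\bigcap_{i\in I}\ran{P_i}$. Once we know $P\in\nalgebra$, it is greatest lower bound for trivial reasons: $\ran{P}=\mathcal{K}\subseteq\ran{P_i}$ gives $P\leq P_i$, while any lower bound $Q$ satisfies $\ran{Q}\subseteq\bigcap_i\ran{P_i}=\ran{P}$, that is $Q\leq P$.

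The main obstacle, and the only place where the von Neumann algebra structure is really used, is showing that $P$---a priori merely an element of $B(\hilbert)$---actually belongs to $\nalgebra$. Here I would use $\nalgebra=\nalgebra''$ from Theorem \ref{TDC} and verify that $P$ commutes with every element of $\nalgebra'$. Since $\nalgebra'$ is a unital $C^\ast$-algebra, it is spanned by its unitaries, so it is enough to treat a unitary $U\in\nalgebra'$. Because $U$ commutes with each $P_i\in\nalgebra$, it maps each $\ran{P_i}$ onto itself (applying the same reasoning to $U^\ast$ supplies the reverse inclusion), whence $U\mathcal{K}=\mathcal{K}$ and, by unitarity, $U\mathcal{K}^\perp=\mathcal{K}^\perp$; decomposing any vector along $\mathcal{K}\oplus\mathcal{K}^\perp$ then yields $UP=PU$. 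Therefore $P\in\nalgebra''=\nalgebra$, which completes the construction of infima and, through the complement duality above, of suprema, so that $\nalgebra_p$ is a complete lattice.
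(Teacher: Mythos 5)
Your proof is correct and follows essentially the same route as the paper's: the infimum is constructed as the orthogonal projection onto the intersection of the ranges, membership in $\nalgebra$ is obtained by commuting with $\nalgebra^\prime$ and invoking the Double Commutant Theorem, and suprema are recovered via $P\mapsto\mathbbm{1}-P$. If anything, your write-up is slightly more careful than the paper's: you verify the order axioms explicitly, and in the commutation step you establish invariance of both $\mathcal{K}$ and $\mathcal{K}^\perp$ (via unitaries of $\nalgebra^\prime$), a point the paper leaves implicit by only noting $A(P_i\hilbert)\subset P_i(\hilbert)$.
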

\begin{proof}
Without loss of generality (Theorem \ref{TGN}), consider $\nalgebra$ as a concrete von Neumann algebra, that is, a subalgebra of $B(\hilbert)$ for some Hilbert space $\hilbert$.
Let $\{P_i\}_{i\in I}$ be any family of projections in $\nalgebra$ and define $\displaystyle\bigwedge_{i\in I}P_i$ to be the orthogonal projection onto the closed subspace $\displaystyle \bigcap_{i\in I} P_i(\hilbert)$. Of course, $\displaystyle\bigwedge_{i\in I}P_i \in B(\hilbert)$. Furthermore, $\displaystyle\bigwedge_{i\in I}P_i \in B(\hilbert)$ is the greatest lower bound of the family $\{P_i\}_{i\in I}$, for it just noticing that $\displaystyle P_j\left(\hilbert\right) \subset \bigcap_{i\in I}P_i\left(\hilbert\right) \Rightarrow P_j\bigwedge_{i\in I}P_i=\bigwedge_{i\in I}P_i$ for any $j\in I$ and, for any other $Q\leq P_i \ \forall i\in I$ we must have $P_i Q=Q \ \forall i \in I \Rightarrow Q\left(\hilbert\right) \subset \displaystyle \bigcap_{i\in I} P_i(\hilbert)$ and thus $\displaystyle\bigwedge_{i\in I}P_i Q=Q$ which, by definition, means that $Q\leq\displaystyle\bigwedge_{i\in I}P_i$.

Notice that we just showed that any family of projections has a greatest lower bound in $B(\hilbert)$, but let $A\in \nalgebra^\prime$, by definition $AP_i=P_iA$, thus $A\left(P_i\hilbert\right)\subset P_i\left(\hilbert\right) \ \forall i \in I \Rightarrow A \displaystyle\bigwedge_{i\in I}P_i=\displaystyle\bigwedge_{i\in I}P_i A \Rightarrow \bigwedge_{i\in I}P_i \in \nalgebra^{\prime\prime}=\nalgebra$.

For the least upper bound just define $\displaystyle\bigvee_{i\in I}P_i=\mathbbm{1}-\bigwedge_{i\in I}\left(\mathbbm{1}-P_i\right) $.

\end{proof}

As we have mentioned, projections play a central role in understanding von Neumann algebras. We will see later that we can classify von Neumann algebras through them. Let us define a special class of von Neumann algebras with interesting properties related to states and cyclic vectors.

\begin{definition}
Let $\mathfrak{A}$ be a $\ast$-algebra. Two projections $P,Q\in \mathfrak{A}$ are pairwise orthogonal \index{orthogonal projection} if $PQ=0$.
\end{definition}

\begin{definition}
Let $\nalgebra$ be a von Neumann algebra. $\nalgebra$ is said to be $\sigma$-finite \index{algebra} if every set of non-zero pairwise orthogonal projections is at most countable.
\end{definition}

\section{Positive Linear Functionals, States and Weights.}
To start this section we need the definition of a positive operator. We will devote Section \ref{SecSRoO} to this subject, for now, we will use the following definition: 

\begin{definition}
	\label{defpositive}\index{operator! positive}
Let $\calgebra$ be a unital $C^\ast$-algebra. $A\in\calgebra$ is said to be positive if $A=0$ or $A$ is self-adjoint and satisfies
$$\left\|\mathbbm{1}-\frac{A}{\|A\|}\right\|\leq1.$$
We denote by $\calgebra_+$ the set of positive operators in the algebra $\calgebra$.
\end{definition}
The definition above is not the most used in textbooks. In general, equivalent definitions that we will present soon in Theorem \ref{sqrt} or in Proposition \ref{EquiPositive} are more often used.
\begin{definition}[Positive Linear Functionals and States]
Let $\calgebra$ be a unital $C^\ast$-algebra
\begin{enumerate}[(i)]

\item  A linear functional $\omega$  on $\calgebra$ is said to be positive \index{positive functional} if
$$\omega(A)\geq 0 \quad \forall A\in \calgebra_+;$$

\item a positive linear functional \gls{state} on $\calgebra$ is said to be a state \index{state} if
$$\|\omega\|\doteq \sup_{\|A\|\leq 1} |\omega(A)|=1.$$
\end{enumerate}
\end{definition}

\begin{remark}
These definitions can be extended to non-unital $C^\ast$-algebras just adding a unity to it. If $\calgebra$ is unital we have $\omega(\mathbbm{1})=1$, because if $\omega$ is a state on $\calgebra$, $\|\omega\|=1 \Leftrightarrow \omega(\mathbbm{1}) =1$. Continuity is not required in the definition because it is a consequence of positiveness.

\end{remark}

\begin{proposition}[Cauchy-Schwarz inequality]
	\label{cauchyschwarz}
	Let $\calgebra$ be a $C^\ast$-algebra and $\omega$ a state. Then
	$$\begin{aligned}
	\omega(A^\ast B)	&=\overline{\omega(B^\ast A)};\\
	\left|\omega(A^\ast B)\right|^2 &\leq \omega(A^\ast A)\omega(B^\ast B).
	\end{aligned}$$
\end{proposition}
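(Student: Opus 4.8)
The plan is to recognize $(A,B)\mapsto\omega(A^\ast B)$ as a positive semidefinite sesquilinear form on $\calgebra$ and then run the classical Cauchy--Schwarz argument. The two substantive facts I will draw on are that every self-adjoint element splits as a difference of two positive elements and that $A^\ast A\in\calgebra_+$ for every $A$; both are consequences of the functional calculus and the characterizations of positivity (cf. Theorem \ref{sqrt} and Proposition \ref{EquiPositive}).

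First I would establish that $\omega$ is Hermitian, i.e. $\omega(A^\ast)=\overline{\omega(A)}$ for all $A\in\calgebra$. Decompose an arbitrary $A$ into its self-adjoint parts $H=\tfrac{1}{2}(A+A^\ast)$ and $K=\tfrac{1}{2\iu}(A-A^\ast)$, so that $A=H+\iu K$ with $H=H^\ast$ and $K=K^\ast$. Writing each of $H,K$ as a difference of positive elements, positivity of $\omega$ forces $\omega(H),\omega(K)\in\mathbb{R}$, whence $\omega(A^\ast)=\omega(H)-\iu\,\omega(K)=\overline{\omega(A)}$. Applying this to the element $A^\ast B$ and using $(A^\ast B)^\ast=B^\ast A$ gives $\omega(B^\ast A)=\overline{\omega(A^\ast B)}$, which is exactly the first claimed identity.

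For the inequality, set $\ip{A}{B}\defeq\omega(A^\ast B)$. Since $C^\ast C\in\calgebra_+$, we have $\omega(C^\ast C)\geq0$ for every $C$, so $\ip{\cdot}{\cdot}$ is positive semidefinite, and its Hermitian symmetry is precisely the identity just proved. The remainder is the textbook discriminant argument. If $\omega(B^\ast B)=0$, then choosing $\lambda=t\,\omega(B^\ast A)$ with $t\to\infty$ in the expansion below drives the cross term to $-\infty$ unless $\omega(A^\ast B)=0$, so the bound holds trivially. Otherwise, for $\lambda\in\mathbb{C}$ expand
\[
0\leq\omega\big((A-\lambda B)^\ast(A-\lambda B)\big)=\omega(A^\ast A)-\lambda\,\omega(A^\ast B)-\overline{\lambda}\,\omega(B^\ast A)+|\lambda|^2\,\omega(B^\ast B),
\]
and substitute the minimizing choice $\lambda=\omega(B^\ast A)/\omega(B^\ast B)=\overline{\omega(A^\ast B)}/\omega(B^\ast B)$ to obtain $0\leq\omega(A^\ast A)-|\omega(A^\ast B)|^2/\omega(B^\ast B)$, which rearranges to the desired inequality.

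I expect the only genuine subtlety to be the positivity inputs themselves --- that self-adjoint elements decompose as differences of positive ones and that $A^\ast A$ is positive --- since these carry the real content from the functional calculus, whereas everything concerning $\omega$ is a formal manipulation once they are granted. A minor but necessary point is the clean separate treatment of the degenerate case $\omega(B^\ast B)=0$, as indicated above.
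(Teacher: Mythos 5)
Your proof is correct, and the inequality part coincides with the paper's: both expand $\omega\big((A\pm\lambda B)^\ast(A\pm\lambda B)\big)\geq 0$ and substitute the optimizing $\lambda=\pm\overline{\omega(A^\ast B)}/\omega(B^\ast B)$. Where you genuinely diverge is the first identity. The paper obtains the Hermitian symmetry $\omega(A^\ast B)=\overline{\omega(B^\ast A)}$ purely formally, by evaluating the same positive quadratic form at $\lambda=1$ and $\lambda=\iu$ and comparing real and imaginary parts; the only input is that $\omega$ is nonnegative on elements of the form $C^\ast C$, which is needed for the inequality anyway. You instead derive it structurally, splitting $A=H+\iu K$ into self-adjoint parts and writing each as a difference of positive elements. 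That is the standard textbook route, but note the cost inside \emph{this} thesis: the decomposition of a self-adjoint element into positive parts is established here only after the continuous functional calculus (Theorem \ref{funccalculus}, Proposition \ref{sum4positive}), which in this paper is ``imported'' through the Gelfand--Naimark representation, whose proof goes through the GNS construction --- and the GNS construction explicitly invokes Proposition \ref{cauchyschwarz}. So, as written against this paper's logical ordering, your appeal to Theorem \ref{sqrt}/Proposition \ref{EquiPositive} is circular; it would be legitimate in a development that builds Gelfand theory for commutative $C^\ast$-algebras first. The paper's $\lambda=1,\iu$ trick is precisely what lets it sidestep this. (Both proofs, yours and the paper's, do silently assume $C^\ast C\in\calgebra_+$, which at this stage of the thesis is also not yet proved; you at least flag it as an input.)

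One point where your write-up is better: the degenerate case $\omega(B^\ast B)=0$. The paper dismisses it as ``trivial,'' but the conclusion there is $\omega(A^\ast B)=0$, which does require an argument; your limiting choice $\lambda=t\,\omega(B^\ast A)$, $t\to\infty$, supplies exactly the missing step.
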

\begin{proof}
	It is a consequence of positivity that, for all $\lambda\in\mathbb{C}$,
	$$0\leq \omega\big((A+\lambda B)^\ast(A+\lambda B)\big)=\omega(A^\ast A)+\overline{\lambda}\omega(B^\ast A)+\lambda\omega(A^\ast B)+|\lambda|^2\omega(B^\ast B).$$
	
	In particular, for $\lambda=1,\iu$, we conclude that $\Im{\omega(A^\ast B)}=-\Im{\omega(B^\ast A)}$ and $\Re{\omega(A^\ast B)}=\Re{\omega(B^\ast A)}$, respectively. Hence $\omega(A^\ast B)=\overline{\omega(B^\ast A)}$ and
	$$0\leq \omega(A^\ast A)+\overline{\lambda\omega(A^\ast B)}+\lambda\omega(A^\ast B)+|\lambda|^2\omega(B^\ast B).$$
	
	If $\omega(B^\ast B)=0$ the inequality is trivial. For $\omega(B^\ast B)\neq0$, choose $\lambda=-\frac{\overline{\omega(A^\ast B)}}{\omega(B^\ast B)}$. It follows that
	$$0\leq \omega(A^\ast A)-\frac{|\omega(A^\ast B)|^2}{\omega(B^\ast B)}.$$
	
\end{proof}

\begin{proposition}
\label{PPC}
Let $\omega$ be a linear functional on a unital $C^\ast$-algebra $\calgebra$. The following are equivalent
\begin{enumerate}[(i)]
\item $\omega$ is positive;
\item $\omega$ is continuous and $\|\omega\|=\omega(\mathbbm{1})$.
\end{enumerate} 
\end{proposition}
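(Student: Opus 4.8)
The plan is to prove the two implications separately, relying on the geometric Definition \ref{defpositive} of positivity and on the Cauchy–Schwarz inequality. A key preliminary observation is that the computation in the proof of Proposition \ref{cauchyschwarz} uses \emph{only} positivity of the functional, so it applies verbatim to any positive $\omega$, giving $|\omega(A^\ast B)|^2\le\omega(A^\ast A)\,\omega(B^\ast B)$ for all $A,B\in\calgebra$.

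For the implication (i) $\Rightarrow$ (ii), I would take $B=\mathbbm 1$ in Cauchy–Schwarz to get $|\omega(A)|^2\le\omega(\mathbbm 1)\,\omega(A^\ast A)$, and then bound $\omega(A^\ast A)$ from above. Since $0\le A^\ast A\le\|A\|^2\mathbbm 1$, applying positivity of $\omega$ to $\|A\|^2\mathbbm 1-A^\ast A$ yields $\omega(A^\ast A)\le\|A\|^2\omega(\mathbbm 1)$, hence $|\omega(A)|\le\omega(\mathbbm 1)\,\|A\|$. This shows $\omega$ is continuous with $\|\omega\|\le\omega(\mathbbm 1)$. The reverse bound is immediate: as $\|\mathbbm 1\|=1$ and $\mathbbm 1=\mathbbm 1^\ast\mathbbm 1\ge 0$, testing on $A=\mathbbm 1$ gives $\|\omega\|\ge|\omega(\mathbbm 1)|=\omega(\mathbbm 1)$, so $\|\omega\|=\omega(\mathbbm 1)$.

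For the converse (ii) $\Rightarrow$ (i), normalize so that $\|\omega\|=\omega(\mathbbm 1)=1$ (if $\omega(\mathbbm 1)=0$ then $\omega=0$, which is positive). First I would show $\omega(A)\in\mathbb R$ for self-adjoint $A$ via an imaginary-shift trick: for $t\in\mathbb R$ the $C^\ast$-condition gives $\|A+\iu t\mathbbm 1\|^2=\|A^2+t^2\mathbbm 1\|\le\|A\|^2+t^2$, so writing $\omega(A)=\alpha+\iu\beta$ we obtain $\alpha^2+(\beta+t)^2\le\|A\|^2+t^2$, i.e. $\alpha^2+\beta^2+2\beta t\le\|A\|^2$ for all $t$, which forces $\beta=0$ by letting $t\to\pm\infty$. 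Next, for $A\ge 0$ with $\|A\|\le 1$, Definition \ref{defpositive} gives $\big\|\,\|A\|\,\mathbbm 1-A\big\|\le\|A\|$, so by the triangle inequality $\big\|\mathbbm 1-A\big\|\le(1-\|A\|)\|\mathbbm 1\|+\big\|\,\|A\|\,\mathbbm 1-A\big\|\le 1$. Therefore $|1-\omega(A)|=|\omega(\mathbbm 1-A)|\le\|\mathbbm 1-A\|\le 1$, and since $\omega(A)$ is real this yields $\omega(A)\ge 0$; a scaling argument removes the normalization $\|A\|\le1$.

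I expect the main obstacle to lie in the first implication, specifically the estimate $0\le A^\ast A\le\|A\|^2\mathbbm 1$: although $\big\|\,\|A\|^2\mathbbm 1-A^\ast A\big\|\le\|A\|^2$ is elementary, deducing genuine positivity of $A^\ast A$ and of $\|A\|^2\mathbbm 1-A^\ast A$ directly from the geometric Definition \ref{defpositive} is precisely the nontrivial content of positivity theory. I would resolve this by invoking the equivalent characterizations of positivity (Theorem \ref{sqrt}, Proposition \ref{EquiPositive}), or the cone estimates established for the geometric definition; the second implication, by contrast, is entirely self-contained.
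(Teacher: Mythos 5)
Your proof is correct, but the forward implication takes a genuinely different route from the paper's. For (i) $\Rightarrow$ (ii) the paper does not use Cauchy--Schwarz to get continuity: it first shows $\{\omega(A) : A\geq 0,\ \|A\|=1\}$ is bounded by a contradiction argument (if $\omega(A_n)>n2^n$, the norm-convergent partial sums $B_m=\sum_{n\leq m} 2^{-n}A_n$ and positivity would force $\sum_{n\leq m} n<\omega(B)$ for all $m$), then transfers boundedness to all of $\calgebra$ by the polarization identity, writing each norm-one $A$ as $\sum_{j=0}^{3}\iu^{j}A_j$ with $A_j$ positive, and only afterwards derives $\|\omega\|=\omega(\mathbbm{1})$ from a Cauchy--Schwarz step with $B=\mathbbm{1}$. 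Your single estimate $|\omega(A)|^2\leq\omega(\mathbbm{1})\,\omega(A^\ast A)\leq\omega(\mathbbm{1})^2\|A\|^2$ is cleaner and yields continuity and the norm identity simultaneously, at the price of needing $\|A\|^2\mathbbm{1}-A^\ast A\geq 0$; but this is the same debt the paper incurs, since its polarization step presupposes positivity of elements of the form $(A+\iu^{j}\mathbbm{1})(A+\iu^{j}\mathbbm{1})^\ast$, which the paper likewise only justifies later (Theorem \ref{sqrt}, Proposition \ref{EquiPositive}), so the forward reference you flag in your last paragraph is exactly the one the paper itself makes. For (ii) $\Rightarrow$ (i) the two arguments agree (normalize, use $\|\mathbbm{1}-A\|\leq 1$ from Definition \ref{defpositive}, conclude $\omega(A)\geq 0$), except that you supply the reality of $\omega(A)$ on self-adjoint elements via the shift estimate $\|A+\iu t\mathbbm{1}\|^2\leq\|A\|^2+t^2$, a step the paper silently omits: from $|1-\omega(A)|\leq 1$ alone one only controls the real part, so your version actually closes a small gap in the paper's argument.
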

\begin{proof}
$(i)\Rightarrow(ii)$ First, let us prove that
$$\left\{\omega(A) \in \mathbb{R} \ \middle| \ A\geq 0, \|A\|=1 \right\}$$ is bounded by some constant $M>0$. In fact, if this is not true, there exists a sequence of positive elements $A_n$ with $\|A_n\|=1$, such that $\omega(A_n)>n 2^n$. But then the operators $\displaystyle B_m=\sum_{n=1}^{m} 2^{-n} A_n$ are norm convergent to some positive element $B$ and it follows from positivity that $$\sum_{n=1}^m n<\omega(B_m)\leq \omega(B) \quad \forall m\in \mathbb{N}.$$
It is a contradiction. Hence, we can define
$$M=\sup\left\{\omega(A) \in \mathbb{R} \ \middle| \ A\geq 0, \|A\|=1 \right\}.$$

It follow from the polarization identity, equation \eqref{PolIden}, that, for each $\displaystyle A\in \calgebra$ with $\|A\|=1$, there exists $A_i\in \calgebra_+$ with $\|A_i\|\leq 1$ and $\displaystyle A=\sum_{j=0}^{3}\iu^jA_j$. Thus, $|\omega(A)|\leq 4M$, which means $\omega$ is continuous.

Furthermore, $\|\omega\|\geq \omega(\mathbbm{1})$. On the other hand, by the Cauchy-Schwarz inequality
$$|\omega(A)|^2= |\omega(A\mathbbm{1})|^2\leq |\omega(A)| |\omega(\mathbbm{1})| \Rightarrow |\omega(A)|\leq |\omega(\mathbbm{1})|.$$

$(ii)\Rightarrow(i)$ Dividing by its norm, we can suppose $\|\omega\|=1$, and by hypothesis, $\omega(\mathbbm{1})=1$.

Now, by Definition \ref{defpositive}, every norm-one positive element $A$ satisfies $\left\|\mathbbm{1}-A\right\|\leq 1$.
Then, $\omega(\mathbbm{1}-A)\leq 1$ and it follows that 
$$\omega(A)\geq 0 \quad \forall A\in\calgebra_+.$$

\end{proof}

We will use the end of this section to define the adjoint-functional.

\begin{definition}
	\label{defAdjoinfFunc}
	Let $\calgebra$ be a $C^\ast$-algebra and let $\omega$ be a linear functional on $\calgebra$. We define the adjoint $\omega^\ast$ of $\omega$ by
	$$\omega^\ast(A)=\overline{\omega(A^\ast)}, \quad A\in\calgebra.$$
\end{definition}

It will become clear, once we prove that all positive operators are of the form $A^\ast A$, that any positive linear functional is self-adjoint.


\section{Weights}

Weights are, in some way, a natural generalization of positive linear functionals where the co-domain is the positive extended real line $\overline{\mathbb{R}}=[0,+\infty]$. \glsdisp{realextendedline}{\hspace{0pt}}\footnote{We are using the convention $0.\infty=0$.}

\begin{definition}\glsdisp{weight}{\hspace{0pt}}
A weight\index{weight} on a $C^*$ -algebra $\calgebra$ is a function $\phi:\calgebra_+ \to \overline{\mathbb{R}}_+$ such that
\begin{enumerate}[(i)]
\item $\phi(\lambda A)=\lambda \phi(A) \quad \forall A\in \calgebra_+$, $\forall \lambda \geq 0$;
\item $\phi(A+B)=\phi(A)+\phi(B)  \quad \forall A, B\in \calgebra_+.$
\end{enumerate}
\end{definition}

It is important to stress that we use the convention $\infty\cdot0=0$. 

\begin{definition}
\label{weightsets}
Let $\phi$ be a weight on a $C^\ast$-algebra $\calgebra$, we define:
\begin{enumerate}[(i)]
\item $ \displaystyle \mathfrak{N}_\phi=\mathfrak{D}^2_\phi=\{A \in \calgebra \ | \ \phi(A^\ast A)<\infty\}$; \glsdisp{Nphi}{\hspace{0pt}}
\item $ \displaystyle \mathfrak{F}_\phi=\{A \in \calgebra_+ \ | \ \phi(A)<\infty\}$; \glsdisp{Fphi}{\hspace{0pt}}
\item $ \displaystyle \mathfrak{M}_\phi=\mathfrak{D}^1_\phi=span\left[\mathfrak{F}_\phi \right]$; \glsdisp{Mphi}{\hspace{0pt}}
\item $ \displaystyle N_\phi= \{A \in \calgebra \ | \ \phi(A^\ast A)=0\}$. \glsdisp{N2phi}{\hspace{0pt}}
\end{enumerate}
\end{definition}

Some important definitions and results for a thorough understanding of the following, such as the Krein-Milman Theorem\footnote{see Theorem \ref{TKM}.}, are left to the Appendix \ref{AppKMT}.

\begin{proposition}
\label{simplepropweights}
The following properties hold:
\begin{enumerate}[(i)]
\item $ \displaystyle \mathfrak{F}_\phi \subset \mathfrak{N}_\phi\cap \mathfrak{N}_\phi^\ast $;
\item $ \displaystyle \mathfrak{F}_\phi$ is a face of $\calgebra_+$;
\item  $ \displaystyle \mathfrak{N}_\phi$ and $ \displaystyle N_\phi$ are left ideals of $\calgebra$;
\item $ \displaystyle \mathfrak{M}_\phi=\mathfrak{N}_\phi^\ast \mathfrak{N}_\phi=span\left[\{A^\ast A | A \in \mathfrak{N}_\phi\}\right]$;
\item $\displaystyle \mathfrak{M}_\phi$ is a $\ast$-subalgebra of $\calgebra$ and $\mathfrak{M}_{\phi+}=\mathfrak{M}_\phi\cap \calgebra_+=\mathfrak{F}_\phi$.
\end{enumerate}
\end{proposition}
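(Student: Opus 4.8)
The plan is to extract one structural fact at the outset and then deploy it repeatedly: because $\phi$ is additive and $\overline{\mathbb{R}}_+$-valued, it is monotone on $\calgebra_+$, \ie\ $0\leq A\leq B$ forces $\phi(A)\leq\phi(B)$ (write $B=A+(B-A)$ and combine additivity with $\phi\geq0$). Together with the operator inequalities available from the functional calculus (Theorem \ref{sqrt}) --- namely $A^2\leq\|A\|A$ for $A\in\calgebra_+$, the estimate $A^\ast B^\ast BA\leq\|B\|^2A^\ast A$ coming from $B^\ast B\leq\|B\|^2\mathbbm{1}$, and the parallelogram bound $(A+B)^\ast(A+B)\leq 2A^\ast A+2B^\ast B$ --- monotonicity does almost all the work. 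For (i), given $A\in\mathfrak{F}_\phi$ I would use $A=A^\ast$ and $A^\ast A=A^2\leq\|A\|A$, so that $\phi(A^\ast A)\leq\|A\|\phi(A)<\infty$; the same computation applied to $A^\ast=A$ places $A$ in $\mathfrak{N}_\phi^\ast$. For (ii) I would first note that $\mathfrak{F}_\phi$ is a convex subcone (homogeneity and additivity keep positive combinations finite) and then that it is hereditary, since $0\leq B\leq A\in\mathfrak{F}_\phi$ gives $\phi(B)\leq\phi(A)<\infty$; heredity together with the cone property is exactly the face condition.

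For (iii) I would check the subspace properties first: closure under scalars is immediate from $(\lambda A)^\ast(\lambda A)=|\lambda|^2A^\ast A$, and closure under sums follows from the parallelogram bound and monotonicity, giving $\phi((A+B)^\ast(A+B))\leq 2\phi(A^\ast A)+2\phi(B^\ast B)<\infty$. The left-ideal property then reduces to $\phi(A^\ast B^\ast BA)\leq\|B\|^2\phi(A^\ast A)$ via the conjugation inequality above. The identical estimates, with the value $0$ replacing finiteness (and using $\phi\geq0$ to upgrade ``$\leq0$'' to ``$=0$''), handle $N_\phi$.

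For (iv) I would prove the two equalities separately. The polarization identity (eq.~\eqref{PolIden}) expresses $B^\ast A$ as a combination of four terms $(A+\iu^k B)^\ast(A+\iu^k B)$; since $\mathfrak{N}_\phi$ is a subspace by (iii), each $A+\iu^k B\in\mathfrak{N}_\phi$, which yields $\mathfrak{N}_\phi^\ast\mathfrak{N}_\phi=\operatorname{span}\{A^\ast A\mid A\in\mathfrak{N}_\phi\}$. The equality with $\mathfrak{M}_\phi=\operatorname{span}[\mathfrak{F}_\phi]$ rests on two observations: if $A\in\mathfrak{N}_\phi$ then $A^\ast A\in\mathfrak{F}_\phi$ by definition; conversely every $F\in\mathfrak{F}_\phi$ equals $(F^{1/2})^\ast F^{1/2}$ with $F^{1/2}\in\mathfrak{N}_\phi$, its positive square root existing by the functional calculus (Theorem \ref{sqrt}) and satisfying $\phi((F^{1/2})^\ast F^{1/2})=\phi(F)<\infty$.

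Finally, for (v) the $\ast$-subalgebra claim follows from the description $\mathfrak{M}_\phi=\mathfrak{N}_\phi^\ast\mathfrak{N}_\phi$: it is visibly $\ast$-closed, and closure under multiplication uses the left-ideal property once, by rewriting $(B^\ast A)(D^\ast C)=B^\ast\big((AD^\ast)C\big)$ with $(AD^\ast)C\in\mathfrak{N}_\phi$. The identity $\mathfrak{M}_\phi\cap\calgebra_+=\mathfrak{F}_\phi$ is where I expect the only genuine subtlety: the inclusion $\mathfrak{F}_\phi\subseteq\mathfrak{M}_\phi\cap\calgebra_+$ is immediate, but for the reverse I would take a \emph{positive} $B\in\mathfrak{M}_\phi$, use self-adjointness to arrange real coefficients in its expansion over $\mathfrak{F}_\phi$, and split it as $B=P-N$ with $P,N\in\mathfrak{F}_\phi$ (grouping positive and negative coefficients and using that $\mathfrak{F}_\phi$ is a cone); then $B+N=P$ forces $\phi(B)+\phi(N)=\phi(P)<\infty$ by additivity, so $\phi(B)<\infty$. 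The main obstacle throughout is keeping this last cancellation honest: $\phi$ is defined only on $\calgebra_+$, so every step must stay within positive elements and additivity rather than treating $\phi$ as an ordinary linear functional.
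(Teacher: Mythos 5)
Your proof is correct and follows essentially the same route as the paper's: the same operator inequalities ($A^2\leq\|A\|A$ for (i), conjugation $A^\ast B^\ast BA\leq\|B\|^2A^\ast A$ and the parallelogram identity for (iii)), polarization for (iv), and heredity of $\mathfrak{F}_\phi$ under $0\leq B\leq A$ for (ii) and (v). If anything, you are more complete in two places the paper leaves implicit: the identification $\operatorname{span}\left\{A^\ast A \mid A\in\mathfrak{N}_\phi\right\}=\operatorname{span}\left[\mathfrak{F}_\phi\right]$ via positive square roots in (iv), and the closure of $\mathfrak{M}_\phi$ under multiplication (via the left-ideal property) in (v), which the paper's proof does not spell out.
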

\begin{proof}
$(i)$ It is obvious that $\mathfrak{F}_\phi \subset \mathfrak{N}_\phi$, since for each $A\in \mathfrak{F}_\phi$, $A^\ast A=AA\leq \|A\|A$ hence
$$0\leq \phi(\|A\|A-A^\ast A)=\|A\|\phi(A)-\phi(A^\ast A) \Rightarrow \phi(A^\ast A)\leq \|A\|\phi(A)<\infty.$$
Using now the positiveness, we also get $\mathfrak{F}_\phi \subset \mathfrak{N}_\phi^\ast$

$(ii)$ Trivial.

$(iii)$ Notice that, for all $A, B\in \calgebra$, $B^\ast A^\ast A B\leq \|A\|^2 B^\ast B$ because, if we call $D$ the unique positive square root of $\|A\|^2-A^\ast A$ which exists due to Theorem \ref{sqrt} and $\|A\|^2\mathbbm{1}\geq A^\ast A$, we have
$$\|A\|^2 B^\ast B-B^\ast A^\ast A B=B^\ast(\|A\|^2-A^\ast A)B=(DB)^\ast (DB)\geq0.$$
$$\Rightarrow \phi\big((AB)^\ast AB\big)=\phi(B^\ast A^\ast A B)\leq \|A\|^2 \phi(B^\ast B).$$

Now, it is easy to see that when $A\in \calgebra$ and $B\in \mathfrak{N}_\phi$, we must have $AB\in \mathfrak{N}_\phi$, moreover, $\lambda \in \mathbb{C}, A\in \mathfrak{N}_\phi \Rightarrow \lambda A \in \mathfrak{N}_\phi$. Finally, 
$$(A+B)^\ast (A+B)+(A-B)^\ast (A-B)=2A^\ast A +2 B^\ast B.$$
Hence, if we take $A,B \in \mathfrak{N}_\phi$, we must have
$$\infty>\phi\left(2A^\ast A+2B^\ast B\right)=\phi\left((A+B)^\ast (A+B)+(A-B)^\ast (A-B)\right)\geq\phi\left((A+B)^\ast (A+B)\right),$$
from which it follows that $A+B \in \mathfrak{N}_\phi$.

The analogous properties for $N_\phi$ are trivial.

$(iv)$ The polarization identity states that, for $A,B \in \mathfrak{N}_\phi$,
$$B^\ast A=\frac{1}{4} \sum_{n=0}^{3}\iu^n(A+i^n B)^\ast(A+\iu^n B),$$
but the proof of $(iii)$ above says that $ A+i^n B \in \mathfrak{N}_\phi$, $0\leq n\leq 3$, and therefore the conclusion holds.

$(v)$ $\mathfrak{M}_\phi$ is closed with respect to $\ast$ due to Proposition \ref{cauchyschwarz} $(i)$. From the definition, every $A\in \mathfrak{M}_\phi$ can be written as a linear combination $A=A_1-A_2+\iu A_3-\iu A_4$ with $A_i\in \mathfrak{F}_\phi$. If $A\geq 0$, then $0\leq A=A_1-A_2\leq A_1 \Rightarrow A\in \mathfrak{F}_\phi$.

\end{proof}

\begin{remark}
	\label{extensionweight}
At this point, it is important to note that a weight $\phi$ admits a natural linear extension $\tilde{\phi}$ to $\mathfrak{M}_\phi$. It is a simple consequence of $(iv)$ in Proposition \ref{simplepropweights} and the uniqueness of polarization identity.

There will be no distinction between $\phi$ and $\tilde{\phi}$ in the following.
\end{remark}

Notice that these properties make the quotient $\mathfrak{N}_\phi/N_\phi$ a pre-Hilbert space, provided with $\ip{A}{B}_\phi=\phi(B^\ast A)$. We call the completion of this space $\hilbert_\phi$.

\begin{definition}
A weight $\phi$ on a $C^*$-algebra $\calgebra$ is said to be:
\begin{enumerate}[(i)]
\item densely defined,\index{weight! densely defined}  if $\mathfrak{F}_\phi$ is dense in $\calgebra_+$;
\item faithful,\index{weight! faithful} if $\phi(A^\ast A)=0 \Rightarrow A=0$, \ $A\in \calgebra$;
\item normal,\index{weight! normal} if $\phi(\sup A_i)=\sup \phi(A_i)$ for all bounded increasing net $(A_i)_{i\in I} \in \calgebra_+$;
\item semifinite,\index{weight! semifinite} if $\mathfrak{M}_\phi$ is weakly dense in $\calgebra$;
\item tracial,\index{weight! tracial}\index{trace! see{ weight}} if $\phi(A^\ast A)=\phi(AA^\ast)$ for all $A\in \calgebra$. \glsdisp{trace}{\hspace{0pt}}
\end{enumerate}
\end{definition}

It is important to stress that the definition of a trace (\ie, a tracial weight) and a straightforward calculation using the polarization identity leads to the extension of a trace to $\mathfrak{M}_\phi$, as mentioned in Remark \ref{extensionweight}, satisfies
$\phi(AB)=\phi(BA)$ for all $A,B \in \mathfrak{M}_\phi$.

\section{Characterization of Normal Functionals}

The aim of this section is to study the normality of functionals. Since this work will demand normality on the tracial weights, we are interested in obtaining some kind of continuity and a more workable form of these linear functionals.

For a more detailed presentation, the reader can look at \cite{Dixmier81}. We start presenting a useful result whose proof presented here can be found in \cite{murphy90}.

\begin{theorem}[Vigier]\index{theorem! Vigier}]
\label{vigier}
Let $\nalgebra\subset B(\hilbert)$ be a von Neumann algebra and let $(A_j)_{j\in J}\subset \nalgebra$ an increasing net of hermitian operators bounded above. Then,
$(A_j)_{j\in J}$ is convergent in the SOT. In addition, $\displaystyle A_j \xrightarrow{SOT} \sup_{j\in J}{A_j}\in \nalgebra$, where the suppremum is considered in the set of all self-adjoint elements of $\nalgebra$.  
\end{theorem}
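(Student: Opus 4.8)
The plan is to construct the limit operator first as a weak-operator limit, via quadratic forms, and only afterwards upgrade the convergence to the SOT. First I would normalize the net: fixing an index $j_0$, monotonicity and the upper bound give $A_{j_0}\leq A_j$ for $j\geq j_0$ together with the prescribed bound above, so the $A_j$ are uniformly norm-bounded on the cofinal tail $\{j\in J : j\geq j_0\}$. Replacing $A_j$ by $A_j-A_{j_0}$ and rescaling, I may assume $0\leq A_j\leq\mathbbm{1}$ for all $j$, which loses no generality for the statement.

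Next, for each $x\in\hilbert$ the real net $\ip{A_j x}{x}$ is increasing and bounded above by $\|x\|^2$, hence convergent; set $q(x)=\lim_j\ip{A_j x}{x}=\sup_j\ip{A_j x}{x}$. Polarizing $q$ recovers a sesquilinear form $B(x,y)=\lim_j\ip{A_j x}{y}$, and since $|\ip{A_j x}{y}|\leq\|x\|\|y\|$ uniformly, $B$ is bounded; by Riesz representation there is a unique $A\in B(\hilbert)$ with $\ip{Ax}{y}=B(x,y)$. Thus $A_j\to A$ in the WOT. Self-adjointness of $A$ follows from $\ip{A_j x}{y}=\overline{\ip{A_j y}{x}}$ in the limit, while passing to the limit in $\ip{A_j x}{x}\leq\ip{A x}{x}$ gives $A_j\leq A$ for every $j$; and if $A_j\leq B$ for all $j$, then $\ip{A x}{x}=\sup_j\ip{A_j x}{x}\leq\ip{B x}{x}$, so $A\leq B$. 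Hence $A=\sup_{j\in J}A_j$ in the self-adjoint part. Since $\nalgebra$ is WOT-closed by Theorem \ref{TDC}, we get $A\in\nalgebra$.

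The decisive step is to strengthen WOT-convergence to SOT-convergence. For this I would exploit the positivity of $T_j\defeq A-A_j\geq 0$ together with the inequality $T_j^2\leq\|T_j\|\,T_j$, which is valid for any positive operator (by the square-root argument used in Proposition \ref{simplepropweights}$(iii)$, since $T_j$ commutes with $\|T_j\|\mathbbm{1}-T_j\geq 0$). Because $0\leq T_j\leq\mathbbm{1}$ after normalization, this yields
$$\|(A-A_j)x\|^2=\ip{T_j^2 x}{x}\leq\|T_j\|\,\ip{T_j x}{x}\leq\ip{A x}{x}-\ip{A_j x}{x}\xrightarrow{\ j\ }0,$$
where the last convergence is precisely the definition of $A$ through $q$. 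Therefore $A_j x\to A x$ for every $x\in\hilbert$, that is, $A_j\xrightarrow{SOT}A$.

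The main obstacle I anticipate is exactly this SOT-upgrade: the WOT limit is essentially forced, but the passage to strong convergence rests on the elementary yet essential operator inequality $T^2\leq\|T\|\,T$ for $T\geq 0$ and on the uniform norm bound, which together let the scalar convergence $\ip{A_j x}{x}\to\ip{A x}{x}$ control the vector norms $\|(A-A_j)x\|$. Some care is also needed so that all suprema and limits are genuinely taken over the directed set $J$ (or its cofinal tail), ensuring that the monotonicity of the scalar nets $\ip{A_j x}{x}$ is actually available.
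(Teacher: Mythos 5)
Your proof is correct and follows essentially the same route as the paper's: normalize on a cofinal tail, build the limit operator from the increasing quadratic forms via polarization and Riesz representation, then upgrade WOT to SOT using the positivity estimate $\|(A-A_j)x\|^2\leq\|A-A_j\|\ip{(A-A_j)x}{x}$. If anything, you are slightly more careful than the paper, which asserts $B\in\nalgebra$ and $B=\sup_j B_j$ without spelling out the WOT-closedness argument and the least-upper-bound verification that you include.
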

\begin{proof}
Let $j_0 \in J$ and let $K=\{j\in J \ | \ j\geq j_0\}$. Then $(A_k-A_{j_0})_{k\in K}$ is a increasing net of positive operators bounded above and below. Let us denote this net by $(B_k)_{k\in K}$. Because of the properties of directed sets and limits, the desired convergence is equivalent to the convergence of the net $(B_k)_{k\in K}$.

Notice that, for every $x\in \hilbert$, $(\ip{B_k x}{x})_{k\in K} \subset \mathbb{R}_+$ defines an increasing net of real numbers such that
$$\ip{B_k x}{x}\leq\left\|B_k\right\| \|x\|\leq \sup_{k\in K}\|B_k\| \|x\|.$$
Hence, $(\ip{B_k x}{x})_{k\in K}$ is convergent. Using the polarization identity, we define the sesquilinear form
$$(x,y)=\sum_{l=0}^{3} i^l \sup_{k\in K} \ip{B_k (x+i^l y)}{x+i^l y}.$$

It is clear that $|(x,y)|\leq 4 \sup_{k\in K}{\|B_k\|}|\ip{x}{y}|$. It follows by the Riesz Theorem that there exists a positive operator $B\in \nalgebra$ such that $(x,y)=\ip{Bx}{y}$. Furthermore, it follows by the definition of the sesquilinear form that $B_j\leq B$.

Notice that 
$$\begin{aligned}
\|(B-B_j)x\|^2&= \ip{(B-B_j)x}{(B-B_j)x}\\
&\leq \|(B-B_j)x\| \ip{(B-B_j)x}{x}\\
&\leq \|B\|\ip{(B-B_j)x}{x}\\
&\xrightarrow{j} 0.
\end{aligned}$$
Thus, $B_j \xrightarrow{SOT}B$ and $B=sup_{j\in J} B_j$.

\end{proof}

Notice that normality is a kind of ``order continuity'', it will become clear in the next two lemmas. The next result shows some relation between the order, the SOT and the WOT.
 
\begin{lemma}
	Let $\nalgebra$ be a von Neumann algebra, $\mathfrak{I}\subset \nalgebra$ a two-sided ideal and $A\in\left(\overline{\mathfrak{I}}^{WOT}\right)_+$. Then there exists a increasing net $(A_i)_{i\in I} \subset\mathfrak{I}^+$ converging to $A$ in the SOT.
\end{lemma}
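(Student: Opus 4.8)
The plan is to approximate $A$ by sandwiching an approximate identity of the ideal between two copies of $A^{1/2}$, and then to invoke Vigier's Theorem (Theorem~\ref{vigier}) to control the convergence. First I would build the approximate identity. Set $\Lambda=\{u\in\mathfrak{I}^+ \mid \|u\|<1\}$, ordered by $\leq$. To see that $\Lambda$ is upward directed, given $u_1,u_2\in\Lambda$ I would pass to $c_i=u_i(\mathbbm{1}-u_i)^{-1}$, which lie in $\mathfrak{I}^+$ because $\mathbbm{1}-u_i$ is boundedly invertible in $\nalgebra$ and $\mathfrak{I}$ is a two-sided ideal; putting $c=c_1+c_2$ and $u_3=c(\mathbbm{1}+c)^{-1}\in\Lambda$, operator monotonicity of $t\mapsto t(1+t)^{-1}$ together with $c\geq c_i$ gives $u_3\geq u_1,u_2$. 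By Vigier's Theorem the net $(u)_{u\in\Lambda}$ then converges in the SOT to $p\defeq\sup\Lambda$, a positive contraction.

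Next I would identify $p$. For $B\in\mathfrak{I}$ and $\lambda>0$, with $a=B^\ast B\in\mathfrak{I}^+$, the element $u_\lambda=a(\lambda+a)^{-1}$ belongs to $\Lambda$, and a functional-calculus estimate yields $\|(\mathbbm{1}-u_\lambda)B\|^2=\|\lambda^2 a(\lambda+a)^{-2}\|\leq\lambda/4\to0$. Since $\ip{uy}{y}$ is increasing and bounded by $\|y\|^2$, this shows $\ip{uy}{y}\to\|y\|^2$ for every $y\in\overline{\mathfrak{I}\hilbert}$ (using a $\|u\|\leq1$ density argument), whereas $\ip{uy}{y}=0$ whenever $y\perp\mathfrak{I}\hilbert$ because $uy\in\mathfrak{I}\hilbert$; hence $p$ is the orthogonal projection onto $\overline{\mathfrak{I}\hilbert}$. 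As $\mathfrak{I}$ is two-sided, $\overline{\mathfrak{I}\hilbert}$ is invariant under both $\nalgebra$ and $\nalgebra^\prime$, so $p\in\nalgebra\cap\nalgebra^\prime$ is central.

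Then I would set $A_u\defeq A^{1/2}uA^{1/2}$ for $u\in\Lambda$: each $A_u\in\mathfrak{I}^+$, the net is increasing (as $X\mapsto A^{1/2}XA^{1/2}$ preserves order) and bounded above by $A$. Because $pB=B$ for all $B\in\mathfrak{I}$ and left multiplication by $p$ is WOT-continuous, $pA=A$; centrality of $p$ makes it commute with $A^{1/2}$, and uniqueness of the positive square root (Theorem~\ref{sqrt}) gives $pA^{1/2}=A^{1/2}$, that is $\ran{A^{1/2}}\subseteq\overline{\mathfrak{I}\hilbert}$. Therefore $\ip{A_u x}{x}=\ip{uA^{1/2}x}{A^{1/2}x}\to\|A^{1/2}x\|^2=\ip{Ax}{x}$ for every $x\in\hilbert$, and Vigier's Theorem gives $A_u\xrightarrow{SOT}\sup_u A_u$, the convergence of the diagonal forms forcing $\sup_u A_u=A$.

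The main obstacle is twofold: keeping every auxiliary element inside the \emph{merely algebraic} ideal $\mathfrak{I}$ rather than its norm closure, which is exactly why the resolvents $(\lambda+a)^{-1}$ and the sandwiching by $A^{1/2}$ are essential; and proving that the supremum of the increasing net is genuinely $A$ and not some strictly smaller operator. This last point is the crux, and it reduces to the inclusion $\ran{A^{1/2}}\subseteq\overline{\mathfrak{I}\hilbert}$, which is precisely where the hypothesis $A\in\left(\overline{\mathfrak{I}}^{WOT}\right)_+$ and the centrality of $p$ are really needed.
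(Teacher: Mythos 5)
Your proof is correct in substance, but it takes a genuinely different route from the paper's. The paper argues by maximality: writing $\mathfrak{I}^+\setminus\{0\}=\{A_j\}_{j\in J}$, it applies Zorn's lemma to the family of subsets $K\subset J$ all of whose finite sums $\sum_{k\in K_f}A_k$ are dominated by $A$, picks a maximal such set $I$, and shows that the remainder $B=A-\sup\left\{\sum_{i\in I_f}A_i\right\}$ vanishes, by compressing an increasing net of projections of $\mathfrak{I}$ with $B^{1/2}$ and observing that a nonzero $B^{1/2}p_jB^{1/2}\in\mathfrak{I}^+$ dominated by $A$ would contradict maximality; Vigier's theorem then makes the net of finite partial sums converge to $A$ in the SOT. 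Your argument is instead the classical approximate-identity argument: the directed set $\Lambda=\{u\in\mathfrak{I}^+ : \|u\|<1\}$ converges by Vigier to the support projection $p$ of the ideal, centrality of $p$ and $pA=A$ force $\ran{A^{1/2}}\subset\overline{\mathfrak{I}\hilbert}$, and the sandwiched net $A^{1/2}uA^{1/2}$ then increases to $A$. What your route buys: it avoids Zorn's lemma altogether (directedness of $\Lambda$ is an explicit resolvent computation), it produces the approximating net in closed form, and it establishes the structural fact that $\overline{\mathfrak{I}}^{WOT}$ lives under a central projection --- a fact the paper's proof implicitly relies on when it posits, without justification, an increasing net of projections in $\mathfrak{I}$ converging to the identity of $\overline{\mathfrak{I}}^{WOT}$. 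The paper's proof, in exchange, is shorter and requires no identification of that support projection.

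One slip needs fixing. With $a=B^\ast B$, the functional-calculus identity controls right multiplication, $\|B(\mathbbm{1}-u_\lambda)\|^2=\|\lambda^2 a(\lambda+a)^{-2}\|\leq \lambda/4$, and \emph{not} $\|(\mathbbm{1}-u_\lambda)B\|$ as you wrote; indeed for a rank-one partial isometry $B$ with initial and final spaces orthogonal, $(\mathbbm{1}-u_\lambda)B=B$ has norm $1$ while $\|\lambda^2 a(\lambda+a)^{-2}\|\to 0$. Since your identification of $p$ requires the left estimate (to get $u_\lambda B\xi\to B\xi$ for $y=B\xi\in\mathfrak{I}\hilbert$), define $u_\lambda$ from $a=BB^\ast$ instead; this still lies in $\mathfrak{I}^+$ precisely because the ideal is two-sided, and the rest of your argument goes through verbatim.
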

\begin{proof}
	First, to simplify the notation, lets write $\mathfrak{I}^+ \setminus\{0\}=\{A_j\}_{j\in J}$. Now, consider the set
	$$\mathcal{F}=\left\{K\subset J \ \middle| \ \sum_{K\in K_f}A_k \leq A, \ \forall K_f \subset K, K_f \mbox{ finite}\right\}$$
	provided with the partial order $K_1\preceq K_2 \Leftrightarrow K_1\subset K_2$.
	It is easy to see that it satisfies the requirements to conclude, by Zorn's lemma, the existence of a maximal element $I\in \mathcal{F}$.
	
	Consider $\displaystyle B=A-\sup\left\{\sum_{i\in I_f}A_i \ \middle| \ I_f \subset I, \ I_f \textrm{ finite}\right\}$ and let $(p_j)_{j\in J} \subset \mathfrak{I}$ be an increasing net of projections which converges to the identity of $\overline{\mathfrak{I}}^{WOT}$. By the ideal property, $B^\frac{1}{2}p_jB^\frac{1}{2} \in \mathfrak{I}^+$ for every $j \in J$ and $0\leq B^\frac{1}{2}p_jB^\frac{1}{2}\leq A$ since $0\leq B\leq A$, but it does not correspond to any index in $I$, in fact, if we had $B^\frac{1}{2}p_jB^\frac{1}{2}=A_r$, we would also have the inequality $$\sum_{i\in I_f}A_i \leq B+A_r\leq A \quad \forall I_f \subset K\cup\{r\}, I_f \mbox{ finite,}$$
	but the maximality of $I$ forbids this. The only remaining possibility is $$B^\frac{1}{2}p_jB^\frac{1}{2} =0 \quad \forall j \in J.$$
	Hence, $B=0$.
	
	Notice that, by Vigier's theorem, the net $\displaystyle \left(\sum_{i\in I_f} A_i\right)_{I_f\subset I}$, with the natural order on the finite subsets of $I$, converges to $A$ in the SOT.
	
\end{proof}

\begin{lemma}
	\label{positiveneighbourhood}
	Let $\nalgebra$ be a von Neumann algebra, $A \in \nalgebra_+$ and $\phi,\psi$ two normal positive functionals such that $\phi(A)<\psi(A)$. Then, there exists $B\in \nalgebra_+\setminus\{0\}$ such that $$\phi(C)<\psi(C) \quad \forall C\in\nalgebra^+\setminus\{0\}  \ C\leq B.$$
\end{lemma}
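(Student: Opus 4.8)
The plan is to argue by contradiction, building a maximal family of positive elements on which $\phi$ dominates $\psi$ and then summing it up by means of Vigier's theorem. Set $\omega=\psi-\phi$, so that $\omega(A)>0$ and the desired conclusion $\phi(C)<\psi(C)$ is equivalent to $\omega(C)>0$. Suppose, for contradiction, that no such $B$ exists; that is, for every $B\in\nalgebra_+\setminus\{0\}$ there is some $C\in\nalgebra_+\setminus\{0\}$ with $C\leq B$ and $\omega(C)\leq 0$.

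Mirroring the Zorn's lemma construction of the previous lemma, I would consider the collection $\mathcal{F}$ of families $(C_i)_{i\in I}$ of nonzero positive elements satisfying $\omega(C_i)\leq 0$ for every $i$ and $\sum_{i\in I_f}C_i\leq A$ for every finite $I_f\subset I$, ordered by extension of families. The union of a chain again lies in $\mathcal{F}$, since the constraint involves only finite subsums, so Zorn's lemma yields a maximal family $(C_i)_{i\in I}$. The net of finite partial sums is increasing and bounded above by $A$, hence by Theorem \ref{vigier} it converges in the SOT to some $S\leq A$. Normality of $\phi$ and $\psi$ gives $\phi(S)=\sum_i\phi(C_i)$ and $\psi(S)=\sum_i\psi(C_i)$, both finite as they are bounded by $\phi(A)$ and $\psi(A)$; subtracting the two convergent series yields $\omega(S)=\sum_i\omega(C_i)\leq 0$.

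Now $A-S\geq 0$ and $\omega(A-S)=\omega(A)-\omega(S)\geq\omega(A)>0$, so in particular $A-S\neq 0$. Applying the contradiction hypothesis to $B=A-S$ produces a nonzero $C\in\nalgebra_+$ with $C\leq A-S$ and $\omega(C)\leq 0$. Adjoining $C$ to the maximal family under a fresh index keeps us inside $\mathcal{F}$: any finite subsum containing $C$ is bounded by $\big(\sum_{i\in I_f}C_i\big)+C\leq S+(A-S)=A$, because each finite partial sum is dominated by the supremum $S$. This strictly enlarges the family, contradicting maximality. Hence some $B$ as claimed must exist, and for that $B$ every nonzero positive $C\leq B$ satisfies $\omega(C)>0$, \ie\ $\phi(C)<\psi(C)$.

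I expect the delicate points to be two. First, the passage from the net to the series must use the normality of $\phi$ and $\psi$ \emph{separately} (Vigier only guarantees SOT convergence of the partial sums, and $\omega$ itself is not positive), and one must verify that both limiting series are finite before subtracting. Second, one must ensure that the maximal family can genuinely be enlarged: the new element $C$ is chosen under $A-S$, which is precisely the ``unused'' part of $A$, so adjoining it with a new index respects the subsum bound; some care with the set-theoretic bookkeeping (using fresh indices, or copies indexed by $\mathbb{N}$ to guarantee availability) is needed to make the Zorn argument fully rigorous.
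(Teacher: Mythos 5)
Your proof is correct. It arrives at the same underlying object as the paper's proof --- a maximal ``piece'' of $A$ on which $\psi-\phi\leq 0$, whose complement inside $A$ is the desired $B$ --- but it gets there by a different route. The paper applies Zorn's lemma directly to the set $\mathcal{F}=\left\{P\in\nalgebra_+ \mid P\leq A,\ \phi(P)\geq\psi(P)\right\}$ with the operator order: a chain has a supremum by Vigier's theorem, and normality of $\phi$ and $\psi$ (used separately, exactly the point you flag) shows this supremum stays in $\mathcal{F}$; a maximal element $Q$ is then obtained, $B=A-Q$ is nonzero because $\psi(B)=\psi(A)-\psi(Q)>\phi(A)-\phi(Q)=\phi(B)\geq 0$, and the required property of $B$ holds since any nonzero positive $P\leq B$ with $\phi(P)\geq\psi(P)$ would make $Q+P$ an element of $\mathcal{F}$ strictly above $Q$. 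Your version instead runs Zorn over indexed families with finite subsums dominated by $A$ (indeed the construction of the paper's preceding lemma on ideals, as you say), which makes the chain step trivial but defers Vigier and normality to the very end, forces a global proof by contradiction rather than a direct construction, and creates the fresh-index/multiset bookkeeping you acknowledge --- note that a plain \emph{set} of elements would not suffice, since the new element $C\leq A-S$ could already belong to the family. Your $S$ is precisely the paper's $Q$, and your final adjunction step is the same additivity argument as the paper's $Q+P$ step. Both arguments are sound; the paper's ordering of single elements by $\leq$ is simply more economical, as it sidesteps the indexing issues entirely.
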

\begin{proof}
	Let
	$$\mathcal{F}=\left\{P\in\nalgebra^+ \ \middle| \ P\leq A, \ \phi(P)\geq\psi(P)\right\}.$$
	
	Consider a chain $\{P_i\}_{i\in I}\subset \mathcal{F}$, we know $\displaystyle P=\sup_{i\in I}P_i\leq A$ and, by normality,
	$$\phi(P)=\sup_{i\in I} \phi(P_i)\geq \sup_{i\in I} \psi(P_i)=\psi(P).$$
	
	Since every chain has a maximal element, Zorn's lemma says $\mathcal{F}$ has a maximal element $Q$.
	
	Take $B=A-Q\in \nalgebra_+$. Of course, $B\leq A$, and, since $$\psi(B)=\psi(A)-\psi(Q)>\phi(A)-\phi(Q)=\phi(B)\geq0,$$
	$B\neq0$. Finally, for every $P\in\nalgebra^+$, $P\leq B$, we must have $\phi(P)<\psi(P)$, otherwise the positive element $A-P\geq Q$ would violate the maximality of $Q$.
	
\end{proof}

The next result is a classical result in Functional Analysis and can be found in \cite{con90}. It shows that, as a consequence of rigid structure of the field, strong and weak continuity are the same for linear functionals.

\begin{lemma}
	\label{ultraweakcontinuous}
	Let $\nalgebra\subset B(\hilbert)$ be a von Neumann algebra and let $\omega$ be a linear functional on $\nalgebra$. Then $\omega$ is ultra-weakly (weakly) continuous if and only if it is ultra-strongly (strongly) continuous. Furthermore, for every ultra-strongly continuous functional there exists $(x_n)_{n\in \mathbb{N}}, (y_n)_{n\in \mathbb{N}} \in \hilbert$ satisfying $\displaystyle\sum_{n \in\mathbb{N}}\|x_n\|^2, \sum_{n \in\mathbb{N}}\|y_n\|^2 < \infty$ such that
	$$\omega(A)=\sum_{n \in\mathbb{N}}\ip{y_n}{Ax_n}, \quad A \in \nalgebra.$$
	
\end{lemma}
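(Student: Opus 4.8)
The plan is to establish the two equivalences and the representation formula in one stroke, the heart of the matter being a Riesz-representation argument performed in an auxiliary Hilbert space. First I would dispose of the easy implications by comparing topologies: the ultra-strong (respectively strong) topology is finer than the ultra-weak (respectively weak) one, since by Cauchy--Schwarz each ultra-weak seminorm is controlled by an ultra-strong seminorm. Consequently every ultra-weakly continuous functional is automatically ultra-strongly continuous, and similarly in the non-ultra case. It therefore suffices to prove that an ultra-strongly continuous $\omega$ admits the stated representation, because the converse implication (ultra-strong, hence also ultra-weak, continuity) follows at once from the formula: each term $A\mapsto\ip{y_n}{Ax_n}$ is ultra-weakly continuous and the series is dominated by a single ultra-weak seminorm.

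So assume $\omega$ is ultra-strongly continuous. By the definition of the ultra-strong topology, continuity at the origin provides a sequence $(x_n)_{n}\subset\hilbert$ with $\sum_{n}\|x_n\|^2<\infty$ and a constant $C>0$ such that
$$|\omega(A)|\leq C\left(\sum_{n}\|Ax_n\|^2\right)^{\frac{1}{2}}, \qquad A\in\nalgebra.$$
I would then introduce the direct-sum Hilbert space $\mathcal{K}=\bigoplus_{n}\hilbert\cong\ell_2(\hilbert)$ and the linear map $T\colon\nalgebra\to\mathcal{K}$ defined by $TA=(Ax_n)_{n}$, which indeed lands in $\mathcal{K}$ because $\sum_{n}\|Ax_n\|^2\leq\|A\|^2\sum_{n}\|x_n\|^2<\infty$. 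In this language the displayed estimate reads $|\omega(A)|\leq C\|TA\|_{\mathcal{K}}$, so in particular $\ker{T}\subseteq\ker{\omega}$.

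The factorisation step comes next. Define $\tilde{\omega}$ on $\ran{T}$ by $\tilde{\omega}(TA)=\omega(A)$; the kernel inclusion makes this well defined, and the estimate gives $|\tilde{\omega}(TA)|\leq C\|TA\|_{\mathcal{K}}$, so $\tilde{\omega}$ is a bounded functional on $\ran{T}$. Extending it continuously to $\overline{\ran{T}}$ and invoking the Riesz representation theorem on $\mathcal{K}$, I obtain a vector $\eta=(y_n)_{n}\in\mathcal{K}$, hence with $\sum_{n}\|y_n\|^2<\infty$, such that $\tilde{\omega}(\xi)=\ip{\eta}{\xi}_{\mathcal{K}}$ for every $\xi\in\overline{\ran{T}}$. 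Evaluating at $\xi=TA$ yields
$$\omega(A)=\ip{\eta}{TA}_{\mathcal{K}}=\sum_{n}\ip{y_n}{Ax_n}, \qquad A\in\nalgebra,$$
which is the asserted formula. Replacing $\ell_2(\hilbert)$ by a finite direct sum $\hilbert^m$ and $(x_n)$ by a finite family $x_1,\dots,x_m$, the very same argument yields the strong/weak equivalence together with the corresponding finite representation $\omega(A)=\sum_{i=1}^{m}\ip{y_i}{Ax_i}$.

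The main obstacle is the Riesz step and the bookkeeping surrounding it: one must verify that $\omega$ genuinely factors through $T$ via the kernel inclusion, that the induced functional is bounded with respect to the Hilbert norm of $\mathcal{K}$ rather than merely with respect to an operator topology, and --- crucially in the ultra case --- that the representing vector $\eta$ lies in $\mathcal{K}=\ell_2(\hilbert)$, so that $(y_n)$ is genuinely square-summable. All three points rest on the continuity estimate, so the real content lies in extracting that estimate cleanly from the definition of the ultra-strong topology and tracking the square-summability on both sides.
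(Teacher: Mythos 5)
Your proposal is correct and follows essentially the same route as the paper's proof: extract the square-summable dominating sequence $(x_n)$ from ultra-strong continuity, factor $\omega$ through the map $A\mapsto(Ax_n)_n$ into $\bigoplus_n\hilbert$, and apply Riesz representation to obtain $(y_n)$ and the series formula, with the finite-sequence version handling the strong/weak case. The only cosmetic difference is that you extend the induced functional from $\overline{\Ran{T}}$ (via orthogonal projection), where the paper invokes Hahn--Banach before Riesz; both are equally valid, and your explicit verification of the kernel inclusion $\ker{T}\subseteq\ker{\omega}$ is a point the paper leaves implicit.
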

\begin{proof}
	It is obvious that ultra-weakly continuous functionals are ultra-strongly continuous.
	
	For the other implication, by taking a basic neighbourhood in the ultra-strong topology, there exists a sequence $(x_n)_{n\in\mathbb{N}} \in \hilbert$ with $\displaystyle\sum_{n\in \mathbb{N}}\|x_n\|^2 < \infty$ such that \sloppy ${\displaystyle |\omega(A)|^2<\sum_{n\in \mathbb{N}}\|Ax_n\|^2}$.
	
	We can now define the Hilbert space $\displaystyle\widetilde{\hilbert}=\bigoplus_{n\in \mathbb{N}}\hilbert$ and the norm continuous functional
	$$\tilde{\omega}\big((Ax_n)_{n\in\mathbb{N}}\big)=\omega(A)$$
	defined on the linear subspace $\left\{ (Ax_n)_{n\in\mathbb{N}} \in \widetilde{\hilbert} \ | \ A \in \nalgebra\right\}$. This functional can be extended to the whole space $\widetilde{\hilbert}$ (still denoted by $\tilde{\omega}$) using the Hahn-Banach Theorem\footnote{see Theorem \ref{normedHBT}.} and Riesz's Theorem ensures the existence of a vector $(y_n)_{n\in\mathbb{N}} \in \widetilde{\hilbert}$ (which means $\displaystyle \sum_{n\in \mathbb{N}}\|y_n\|^2 < \infty$) such that
	$$\tilde{\omega}\left((Ax_n)_{n\in\mathbb{N}}\right)=\omega(A)=\ip{(y_n)_{n\in\mathbb{N}}}{(Ax_n)_{n\in\mathbb{N}}}_{\widetilde{\hilbert}}=\sum_{n \in\mathbb{N}}\ip{y_n}{Ax_n}.$$
	
	This establishes not only the ultra-weakly continuity of $\omega$, but it gives us the mentioned general form of the ultra-weakly (and ultra-strongly, since they are the same) continuous functionals.
	
	The case when $\omega$ is strongly-continuous is basically the same.
	
\end{proof}

\begin{proposition}
	\label{WOT=SOT}
	Let $K\subset \nalgebra$ be a convex set. Then, the ultra-weak (weak) closure and the ultra-strong (strong) closure of $K$ coincide.
\end{proposition}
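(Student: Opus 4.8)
The plan is to exploit the standard fact that, in a locally convex topological vector space, the closure of a convex set is completely determined by the family of continuous linear functionals, together with Lemma \ref{ultraweakcontinuous}, which tells us that the ultra-weakly and ultra-strongly continuous functionals on $\nalgebra$ coincide (and likewise for the weak and strong operator topologies).

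First I would record the trivial inclusion. Since the ultra-strong topology is finer than the ultra-weak topology, every ultra-weakly closed set is ultra-strongly closed; in particular the ultra-weak closure of $K$ is an ultra-strongly closed set containing $K$, whence
$$\overline{K}^{us} \subseteq \overline{K}^{uw}.$$
For the reverse inclusion I would argue by contraposition using the Hahn-Banach separation theorem. Suppose $A\in\nalgebra$ does not belong to the ultra-strong closure of $K$. Since $\overline{K}^{us}$ is a closed convex subset of the locally convex space $\nalgebra$ equipped with the ultra-strong topology, and $\{A\}$ is a compact convex set disjoint from it, the separation theorem produces an ultra-strongly continuous real-linear functional strictly separating $A$ from $\overline{K}^{us}$; passing to the associated complex-linear functional $\omega$ we obtain $\alpha\in\mathbb{R}$ with
$$\Re{\omega(A)} > \alpha \geq \Re{\omega(B)} \quad \forall B\in\overline{K}^{us}.$$

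The decisive step is now to invoke Lemma \ref{ultraweakcontinuous}: the separating functional $\omega$, being ultra-strongly continuous, is automatically ultra-weakly continuous. Consequently the half-space $\{C\in\nalgebra \ | \ \Re{\omega(C)}\leq\alpha\}$ is ultra-weakly closed, it contains $K$ (as $K\subseteq\overline{K}^{us}$), yet it does not contain $A$; therefore $A\notin\overline{K}^{uw}$. This yields $\overline{K}^{uw}\subseteq\overline{K}^{us}$, and combined with the first inclusion establishes the equality for the ultra-weak and ultra-strong topologies.

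The same reasoning applies verbatim to the weak and strong operator topologies, replacing Lemma \ref{ultraweakcontinuous} by its corresponding assertion for those two topologies. The only genuinely substantive ingredient is that lemma's identification of the two dual spaces; granting it, the statement is a purely formal consequence of Hahn-Banach separation, so I expect no real obstacle beyond correctly setting up the geometric separation, that is, working with the real parts of the functionals and ensuring the separating hyperplane is strict so that the ultra-weakly closed half-space genuinely excludes the point $A$.
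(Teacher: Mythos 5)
Your proof is correct and follows essentially the same route as the paper's: the trivial inclusion of the strong closure in the weak one, then Hahn--Banach separation of a point outside the (ultra-)strong closure by an (ultra-)strongly continuous functional, which Lemma \ref{ultraweakcontinuous} upgrades to an (ultra-)weakly continuous functional, excluding the point from the (ultra-)weak closure. Your half-space formulation with real parts is in fact a slightly cleaner execution of the same separation argument the paper performs via Corollary \ref{CTMD}.
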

\begin{proof}
	Of course $\overline{K}^{SOT}\subset \overline{K}^{WOT}$, so the thesis is equivalent to show $\overline{K}^{WOT}\setminus \overline{K}^{SOT}=\emptyset$.
	
	Suppose there exists $x\in\overline{K}^{WOT}\setminus \overline{K}^{SOT}$. Then $\overline{K}^{SOT}-x$ is a closed convex subset that does not contain the null-vector. Let $U$ be a convex balanced neighbourhood of zero such that $(x+\overline{U}^{SOT})\cap \overline{K}^{SOT}=\emptyset$. Therefore $\overline{K}^{SOT}+\overline{U}^{SOT}-x$ is also a closed convex subset that does not contain the null-vector, by Corollary \ref{CTMD}, there exists a SOT-continuous functional $\phi$ such that $\phi(x)=0$ and $\phi(k)>0 \ \forall k\in \overline{K}^{SOT}+\overline{U}^{SOT}$. 
	
	Then, there exists $\alpha>0$ such that
	$\phi(x)=0$ and $\phi(k)>\alpha>0, \ \forall k\in \overline{K}^{SOT}$. This is not possible, as the previous lemma stated that such a $\phi$ is also WOT-continuous and $x\in \overline{K}^{WOT}$.
	
\end{proof}

\begin{notation}
	Let $\calgebra\subset B(\hilbert)$ be a $C^\ast$-algebra and $x\in \hilbert$. We denote by $\omega_x$ the continuous linear function on $\calgebra$ defined by
	$$\omega_x(A)=\ip{x}{Ax}, \quad A \in \calgebra.$$
\end{notation}

Notice that, if $\|x\|=1$, then $\omega_x$ is a state. It may not seem obvious if $\calgebra$ is not unital, but the existence of an approximate identity\footnote{see Definition \ref{defAppId}.} plays the same role in the proof. 

Finally, we will present the desired characterization of normal linear functionals.

\begin{proposition}
	Let $\nalgebra\subset B(\hilbert)$ be a von Neumann algebra, $\phi$ a positive functional on $\nalgebra$. The following conditions are equivalent:
	\begin{enumerate}[(i)]
		\item $\phi$ is normal;
		\item $\phi$ is ultra-strongly continuous;
		\item there exists $(x_n)_{n\in\mathbb{N}}\subset \hilbert$, such that $\displaystyle\sum_{n=1}^{\infty}\|x_n\|^2 <\infty$ and $\displaystyle \phi=\sum_{n=1}^{\infty}\omega_{x_n}$.
	\end{enumerate}
\end{proposition}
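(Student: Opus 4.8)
The plan is to prove the cycle of implications $(iii)\Rightarrow(ii)\Rightarrow(i)\Rightarrow(iii)$, so that the three conditions become interchangeable. The two forward implications are soft and rest only on Vigier's Theorem \ref{vigier}; the closing implication $(i)\Rightarrow(iii)$ is the substantial one, and it is precisely there that Lemma \ref{positiveneighbourhood} is meant to be used.

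For $(iii)\Rightarrow(ii)$ I would simply estimate, for $\phi=\sum_{n}\omega_{x_n}$ with $\sum_n\|x_n\|^2<\infty$, that $|\phi(A)|\le\sum_n\|x_n\|\,\|Ax_n\|\le\big(\sum_n\|x_n\|^2\big)^{1/2}\rho_{(x_n)}(A)$ by Cauchy--Schwarz, where $\rho_{(x_n)}$ is exactly the ultra-strong seminorm attached to $(x_n)$; this bounds $\phi$ by a constant multiple of a defining seminorm of the ultra-strong topology, giving continuity. For $(ii)\Rightarrow(i)$, let $(A_j)$ be a bounded increasing net in $\nalgebra_+$ with $A=\sup_j A_j$. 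By Vigier's Theorem \ref{vigier} it converges strongly to $A$, and I would upgrade this to ultra-strong convergence along the net: for $(x_k)\in\ell_2(\hilbert)$ one has $\sum_k\|(A-A_j)x_k\|^2\le\|A-A_{j_0}\|\sum_k\ip{(A-A_j)x_k}{x_k}$, and the right-hand sum tends to $0$ by monotone convergence, since each term decreases to $0$ and is dominated by the summable family $\ip{(A-A_{j_0})x_k}{x_k}$. Ultra-strong continuity of $\phi$ then yields $\phi(A_j)\to\phi(A)$, and monotonicity gives $\sup_j\phi(A_j)=\phi(A)$, i.e. normality.

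For $(i)\Rightarrow(iii)$ I would argue by exhaustion. Consider the families $(\xi_i)_i$ of vectors with $\sum_i\omega_{\xi_i}\le\phi$; since $\sum_i\|\xi_i\|^2=\sum_i\omega_{\xi_i}(\mathbbm 1)\le\phi(\mathbbm 1)<\infty$, every such family is at most countable with convergent sum, so Zorn's Lemma yields a maximal one $(\xi_n)_n$. Put $\psi=\phi-\sum_n\omega_{\xi_n}$; this is positive, and it is normal because $\sum_n\omega_{\xi_n}$ is normal (by the already-proven $(iii)\Rightarrow(ii)\Rightarrow(i)$) and a difference of normal functionals, when again positive, is normal. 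To contradict $\psi\neq 0$, the crux is a majorization lemma: every nonzero normal positive functional dominates a nonzero vector functional. Here I would feed Lemma \ref{positiveneighbourhood} a vector $\eta$ scaled so that $\psi(A_0)<\omega_\eta(A_0)$ for some $A_0\in\nalgebra_+$, obtaining a nonzero positive $B$ on whose order-interval $\psi<\omega_\eta$; passing to a spectral projection $P$ of $B$ gives the corner estimate $\psi(P\,\cdot\,P)\le\omega_{P\eta}$, and a Radon--Nikodym-type argument (a positive functional dominated by a vector functional is itself a vector functional) realizes $\psi(P\,\cdot\,P)$ as some $\omega_w$ with $w\in P\hilbert$.

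The main obstacle is exactly this majorization step. Lemma \ref{positiveneighbourhood} only delivers domination on a corner $P\nalgebra P$, and, as one checks already for a pure vector state, the compressed functional $\psi(P\,\cdot\,P)$ is in general \emph{not} dominated by $\psi$ on all of $\nalgebra$, because of the off-diagonal terms $\psi(PAP^\perp)$. The work therefore lies in converting corner domination into an honest vector functional $\omega_\xi\le\psi$ valid globally---controlling the off-diagonal part through the Cauchy--Schwarz inequality of Proposition \ref{cauchyschwarz} and, where needed, replacing $P$ by the support projection of $\psi$ so that $\psi(\,\cdot\,)=\psi(P\,\cdot\,P)$. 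Once a genuine $\omega_\xi\le\psi$ is produced, adjoining $\xi$ to $(\xi_n)_n$ violates maximality, forcing $\psi=0$ and hence $\phi=\sum_n\omega_{\xi_n}$, which is $(iii)$.
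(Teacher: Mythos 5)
Your implications $(iii)\Rightarrow(ii)$ and $(ii)\Rightarrow(i)$ are correct: the Cauchy--Schwarz estimate against the ultra-strong seminorm $\rho_{(x_n)}$, and the upgrade of Vigier's Theorem \ref{vigier} from strong to ultra-strong convergence of the increasing net, both work. These, however, are the easy directions (the paper compresses their content into ``$(iii)\Rightarrow(i)$ is obvious''). The genuine gap is in $(i)\Rightarrow(iii)$, and it is not merely ``work left to do'': the mechanism you propose for producing a nonzero vector functional $\omega_\xi\le\psi$ cannot work as described. Your corner argument yields $\psi(P\,\cdot\,P)=\omega_w$ together with $\psi(P\,\cdot\,P)\le\omega_{P\eta}$; to conclude $\omega_w\le\psi$ you would need $\psi(PAP)\le\psi(A)$ for all $A\ge0$, and this inequality is simply false. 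Concretely, in $\nalgebra=B(\mathbb{C}^2)$ take $\psi=\omega_v$ with $v=e_1+e_2$, let $P$ project onto $\mathbb{C}e_1$ and $A$ project onto $\mathbb{C}(e_1-e_2)$: then $\psi(A)=0$ while $\psi(PAP)=\omega_{e_1}(A)=\tfrac12$, so the compression $\psi(P\,\cdot\,P)$, although a vector functional, is not dominated by $\psi$, and no rescaling $\xi=\epsilon w$ helps since $\psi(A)=0$. Your two proposed repairs do not close this: Cauchy--Schwarz only gives $\psi(A)\ge\bigl(\psi(PAP)^{1/2}-\psi(P^\perp AP^\perp)^{1/2}\bigr)^2$, which is the wrong direction; and replacing $P$ by the support projection $s(\psi)$ is unavailable, because the domination produced by Lemma \ref{positiveneighbourhood} holds only for the (possibly very small) $P$ sitting under the operator $B$ that the lemma delivers, which bears no relation to $s(\psi)$ --- and if one restricts to the corner $s(\psi)\nalgebra s(\psi)$ the same mismatch between $P$ and the support reappears verbatim.

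Note moreover that the majorization lemma you need --- every nonzero normal positive functional dominates a nonzero vector functional --- is true but is essentially equivalent to statement $(iii)$ itself (given $(iii)$ one has $\omega_{x_1}\le\phi$), so it cannot be used as an input without circularity. This is precisely why the paper never attempts global majorization. From Lemma \ref{positiveneighbourhood} it extracts only a \emph{one-sided} continuity statement: with $B$ as in the lemma, $|\phi(AB)|^2\le\phi(\mathbbm{1})\,\phi(BA^\ast AB)<\phi(\mathbbm{1})\,\psi(BA^\ast AB)$ shows that $A\mapsto\phi(AB)$ is ultra-weakly continuous, and Zorn's lemma is then run on the positive operators $Q\le\mathbbm{1}$ for which $A\mapsto\phi(AQ)$ is ultra-weakly continuous, forcing the maximal one to be $\mathbbm{1}$. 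That proves $(i)\Rightarrow(ii)$, and only afterwards is $(iii)$ deduced from $(ii)$, via the representation of ultra-strongly continuous functionals in Lemma \ref{ultraweakcontinuous} combined with polarization and the commutant Radon--Nikodym Theorem \ref{commutantRN}. If you want to keep an exhaustion flavour, the correct object to exhaust is not a family of vector functionals below $\phi$ but a family of orthogonal projections $e_i$ with each $\phi(\,\cdot\,e_i)$ ultra-weakly continuous; the off-diagonal terms are then controlled at the end by $|\phi(A(\mathbbm{1}-e_F))|^2\le\phi(\mathbbm{1})\|A\|^2\phi(\mathbbm{1}-e_F)\to0$, which uses normality rather than domination.
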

\begin{proof}
	$(i)\Rightarrow (ii)$
	Let
	$$\mathcal{F}=\left\{P\in\nalgebra^+ \ \middle| \ P\leq\mathbbm{1} \mbox{ and } \nalgebra \ni A \mapsto \phi(AP) \mbox{ is ultra-weakly continuous}\right\}.$$
	
	Consider a chain $\{P_i\}_{i\in I}\subset \mathcal{F}$, we know $\displaystyle P=\sup_{i\in I}P_i \in \nalgebra^+$, $P_i \xrightarrow{SOT} P$ by the Vigier Theorem, so it also converges in the $WOT$, but on the unit ball the weak and ultra-weak topologies coincide. Furthermore, for $\|A\|\leq 1$,
	$$\left|\phi\left(A(P-P_i)\right)\right|^2\leq \phi(A^\ast A)\phi(P-P_i)\leq \phi(\mathbbm{1})\phi(P-P_i)$$
	
	Thus $A\mapsto \phi(AP)$ is the uniform limit of the ultra-weakly continuous positive functionals $\left(A \mapsto \phi(AP_i)\right)_{i\in I}$ on the unit ball, thus it is also ultra-weakly continuous on the unit ball. By linearity, it is ultra-weakly continuous on $\nalgebra$.
	
	Applying Zorn's lemma, there exists a maximal positive operator $Q\in \mathcal{F}$, $Q\leq 1$, such that $A\mapsto \phi(AQ)$ is ultra-weakly continuous.
	
	It remains to prove $Q=\mathbbm{1}$. In fact, if $\mathbbm{1}-Q>0$, we can find an ultra-weak positive functional $\psi$ such that $0\leq \phi(\mathbbm{1}-Q)<\psi(\mathbbm{1}-Q)$ by choosing a $z \in \hilbert$ such that $\phi(\mathbbm{1}-Q)<\ip{(\mathbbm{1}-Q)z}{z}$ and $\psi=\left(A\mapsto \ip{Az}{z}\right)$), for example.
	
	By Lemma \ref{positiveneighbourhood}, there exists $B\in\nalgebra^+\setminus\{0\}$ such that $B\leq \mathbbm{1}-Q$ and
	$$\phi(P)<\psi(P) \quad \forall C\leq B, \  C\in\nalgebra^+\setminus\{0\}.$$
	
	Hence, since for each $A\in\nalgebra\setminus\{0\}$ we have $BA^\ast A B \leq \|A\|^2B B \leq \|A\|^2 \|B\| B$, the Cauchy-Schwarz inequality gives us
	$$\left|\phi\left(\frac{AB}{\|A\|\|B\|^\frac{1}{2}}\right)\right|^2\leq\phi(1)\phi\left(\frac{BA^\ast A B}{\|A\|^2\|B\|}\right)<\psi\left(\frac{BA^\ast A B}{\|A\|^2\|B\|}\right)\Leftrightarrow \left|\phi\left(AB\right)\right)|^2<\psi\left(BA^\ast A B\right).$$
	
	Notice now that if $(A_i)_{i\in I}\subset\nalgebra$ is a net such that $A_i\to0$ ultra-strongly, then, for every $x,y\in \hilbert$, $\ip{B A_i^\ast A_i B x}{y}= \ip{A_i B x}{A_i B y}\leq\|A_i Bx\| \|A_i B y\| \to 0$. Thus, $\phi$ is ultra-strongly continuous, and by Lemma \ref{ultraweakcontinuous} $\phi$ is ultra-weakly continuous, since ultra-strongly and ultra-weakly functionals coincides on $\nalgebra$ . Hence, $A\mapsto\phi(AB)$ is ultra-weakly continuous.
	
	This contradicts the maximality of $Q$, since $A\mapsto \phi\left(A(Q+B)\right)$ is ultra-strongly continuous and $Q+B\leq Q+\mathbbm{1}-Q=\mathbbm{1}$. The conclusion follows.
	
	$(ii) \Rightarrow (iii)$ Due to Lemma \ref{ultraweakcontinuous}, is enough to prove that, for $y,z\in \hilbert$, there exists $x \in \hilbert$ such that $\nalgebra \ni A \mapsto \ip{y}{Az}=\omega_x$. In fact, we can see in the proof of Theorem \ref{cauchyschwarz} that positiveness implies that $\omega_{y,z}(T^\ast)=\overline{\omega_{y,z}(T)}$. Then, for every self-adjoint operator $A\in \nalgebra$, $$\ip{y}{Az}=\omega_{y,z}(A)=\omega_{y,z}(A^\ast)=\overline{\omega_{y,z}(A)}=\overline{\omega_{y,z}(A^\ast)}=\overline{\ip{y}{A^\ast z}}=\ip{A^\ast z}{y}=\ip{ z}{A y}$$
	$$\Rightarrow \ip{y+z}{A(y+z)}-\ip{y-z}{A(y-z)}=2(\ip{Ay}{z}+\ip{z}{Ay})=4\ip{y}{Az}.$$
	
	Since every operator is a linear combination os two self-adjoint operators, the previous equality holds for ever $A\in\nalgebra$.
	
	Set $\tilde{x}=y+z$. Hence $4\omega_{y,z}\leq \omega_{\tilde{x}}$. Now, Theorem \ref{commutantRN}, that will be proved later in a more general framework and that the reader should find no difficulty in adapting to this special case, guaranties the existence of a positive operator $H\in B(\hilbert)$ such that
	$4\omega_{x,y}(A)=\ip{H\tilde{x}}{A H\tilde{x}}$. The result follows setting $x=H\tilde{x}$.
	
	$(iii)\Rightarrow (i)$ it is obvious. 
	
\end{proof}

Before finishing this section, we want to stress that the previous theorem is strongly dependent on the Hahn-Banach Theorem and the Riesz Representation Theorem for Hilbert spaces. On its turn, the Hahn-Banach Theorem in the way it is applied, is a consequence of the convex vector topologies we have used.


\section{Representations and Spectral Analysis}

Spectral theory is well known for normal operators in $B(\hilbert)$ the space of bounded operators on a Hilbert space $\hilbert$. It allows us to construct a functional calculus for operators. It is also known that spectral theory and continuous functional calculus can be extended to $C^\ast$-algebras. An easy way to import all results of spectral theory of bounded operators are going to be shown next.

Our program in this section is to prove a representation theorem for $C^\ast$-algebras in $B(\hilbert)$ without using any Functional Calculus or Spectral Theory, then, ``import'' Functional Calculus from $B(\hilbert)$. In the von Neumann algebras case, using that the spectral projections are SOT-limits of operators in the algebra, we can also ``import'' the Spectral Theory.

Due to this choice to avoid Functional Calculus before proving a representation theorem, our presentation will be slightly different from the standard literature. There are several references on the subject, for example, \cite{Blackadar2006}, \cite{Bratteli1}, \cite{DS57}, \cite{KR83}, \cite{murphy90} and \cite{Takesaki2002}
 
\begin{definition}[Resolvent and Spectrum]
Let $\calgebra$ be a $C^\ast$-algebra, where we adjoin an identity if none is provided. Let $A\in \calgebra$.
\begin{enumerate}[(i)]
\item $\rho(A)\doteq\{\lambda \in \mathbb{C} \ |\ (\lambda\mathbbm{1}-A)\textrm{ has an inverse in } \calgebra\}$ is called the resolvent set \index{resolvent set} of $A$.

\item $\sigma(A)\doteq\mathbb{C}\setminus\rho(A)$ is called the spectrum\index{spectrum} of $A$.

\item $r(A)\doteq\sup\{|\lambda| \ |\ \lambda \in\sigma(A)\}$ is called the spectral radius \index{spectral radius} of $A$.
\end{enumerate}
\end{definition}

Note that this is the usual definition of the spectrum in $B(\hilbert)$ and that the spectral radius is a positive finite number due to the next lemma.

\begin{remark} \
We do not specify in which algebra the spectrum is taken. When necessary, we will write $\rho_\algebra(A)$ in order to fix the algebra $\algebra$, but it is an interesting fact (it will be shown later) that the spectrum does not depend on the $C^\ast$-algebra, \ie, the spectrum is the same in any unital $C^\ast$-algebra that has $A$ as an element.

\end{remark}

\begin{lemma}
\label{bspec}
Let $\calgebra$ be a unital $C^\ast$-algebra. If $A\in \calgebra$ and $|\lambda|>\|A\|$, then $\lambda \in \rho(A)$. In other words, $\lambda\in \sigma(A) \Rightarrow |\lambda| \leq \|A\|$.
\end{lemma}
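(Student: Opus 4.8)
The plan is to exhibit an explicit inverse of $\lambda\mathbbm{1}-A$ by means of a Neumann (geometric) series, relying only on the completeness of $\calgebra$ as a Banach space and the submultiplicativity of the norm; no spectral theory or functional calculus is needed, which is consistent with the overall strategy announced at the start of this section.

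First I would factor $\lambda\mathbbm{1}-A=\lambda\left(\mathbbm{1}-\tfrac{A}{\lambda}\right)$, which is legitimate since $|\lambda|>\|A\|\geq 0$ forces $\lambda\neq 0$. Writing $B=\tfrac{A}{\lambda}$, the hypothesis gives $\|B\|=\tfrac{\|A\|}{|\lambda|}<1$, so it suffices to invert $\mathbbm{1}-B$.

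Next, consider the partial sums $S_N=\sum_{n=0}^{N}B^n$. Using submultiplicativity one has $\|B^n\|\leq\|B\|^n$, and since $\|B\|<1$ any tail $\sum_{n=M}^{N}\|B^n\|$ is dominated by a convergent geometric series; hence $(S_N)_{N\in\mathbb{N}}$ is Cauchy and, by completeness of $\calgebra$, converges to some $S\in\calgebra$. I would then verify that $S$ is the sought inverse by telescoping: $(\mathbbm{1}-B)S_N=S_N(\mathbbm{1}-B)=\mathbbm{1}-B^{N+1}$, and since $\|B^{N+1}\|\leq\|B\|^{N+1}\to 0$, passing to the limit (using the continuity of multiplication already established in this chapter) yields $(\mathbbm{1}-B)S=S(\mathbbm{1}-B)=\mathbbm{1}$.

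Finally, $(\lambda\mathbbm{1}-A)^{-1}=\lambda^{-1}S\in\calgebra$, so $\lambda\in\rho(A)$, and the contrapositive is exactly the stated bound $\lambda\in\sigma(A)\Rightarrow|\lambda|\leq\|A\|$. The argument is essentially routine; the only points deserving care are the appeal to completeness, which guarantees that the formal Neumann series actually names an element of $\calgebra$, and the continuity of the product, which justifies taking the limit inside $(\mathbbm{1}-B)S_N$. Neither is a genuine obstacle, so I expect the proof to be short.
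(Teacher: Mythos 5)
Your proposal is correct and is essentially the same argument as the paper's: the paper also inverts $\lambda\mathbbm{1}-A$ via the Neumann series, defining $B_m=\sum_{n=0}^{m}\lambda^{-(n+1)}A^n$ (which is just $\lambda^{-1}S_m$ in your notation) and using the same telescoping identity $(\lambda\mathbbm{1}-A)B_m=\mathbbm{1}-\lambda^{-(m+1)}A^{m+1}$ together with completeness and $\|A\|<|\lambda|$ to pass to the limit. Your explicit factorization through $B=A/\lambda$ is only a cosmetic reorganization of the same proof.
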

\begin{proof}
Define $\displaystyle B_m=\sum_{n=0}^{m}{\lambda^{-(n+1)}A^n}$. This is an absolutely convergent series and, in particular, $\calgebra$ is a Banach space. Consequently, there exists $\displaystyle B=\lim_{m\rightarrow \infty}{B_m}$.

In order to show that $B$ is the inverse of $\lambda\mathbbm{1}-A$, note that
$$\begin{aligned}
(\lambda\mathbbm{1}-A)B_m	&=(\lambda\mathbbm{1}-A)\sum_{n=0}^{m}{\lambda^{-(n+1)}A^n} \\
							&=\sum_{n=0}^{m}{\lambda^{-n}A^n}-\sum_{n=1}^{m+1}{\lambda^{-n}A^n}\\
							&=\mathbbm{1}-\lambda^{-(m+1)}A^{m+1}.\\
\end{aligned}$$
Since $B_m\rightarrow B$, the left-hand side converges to $(\lambda\mathbbm{1}-A)B$ while by $\lambda^{-(m+1)}A^{m+1}\rightarrow 0$ the right-hand side goes to $\mathbbm{1}$.

The proof for $B(\lambda\mathbbm{1}-A)$ follows by the same argument.

\end{proof}

For normal operators it holds that $r(A)=\|A\|$. The proof of this fact is not difficult and can be found in \cite{Bratteli1}. We prefer not to present it in this work because we will use this result only in $B(\hilbert)$, where we are supposing it is known. The general result will be a consequence of the existence of a representation of the $C^\ast$-algebra. 

For the next theorem, we notice that, for a bounded $\ast$- homomorphism of Banach algebras $\Phi: \algebra_1 \to \algebra_2$, $\algebra_1/\ker{\Phi}$ is a Banach space since $\ker{\Phi}$ is closed and an $\ast$-algebra thanks to the ideal property of $\ker{\Phi}$, provided with the canonical operations bellow:
\begin{enumerate}[(i)]
	\item $[A]+[B]\doteq [A+B]$;
	\item $[A][B]\doteq[AB]$;
	\item $[A]^\ast\doteq[A^\ast]$;
	\item $\displaystyle \left\|[A]\right\|\doteq\inf_{\tilde{A}\in\ker{\Phi}}\left\|A+\tilde{A}\right\|$.
\end{enumerate}

\begin{theorem}[First Theorem of Isomorphism for Banach Algebras]\index{theorem!of isomorphism}
{\label{TI}}
Let $\algebra_1,\algebra_2$ be Banach $\ast$-algebras and $\Phi:\algebra_1 \to \algebra_2$ a bounded $\ast$-homomorphism, then there exists a unique $\ast$-isomorphism $\tilde{\Phi}: \algebra_1/\ker{\Phi} \to \Ran{\Phi}$ such that the following diagram commutes.

\begin{center}
\begin{minipage}{7 cm}
\begin{displaymath}
		\xymatrix{ \algebra_1 \ar[dr]_{\eta} \ar[rr]^\Phi & 			 											&	\Ran(\Phi) \subset \algebra_2						\\
																			 & \algebra_1/\ker{\Phi}	\ar[ur]_{\tilde{\Phi}} 									}
\end{displaymath}
\end{minipage}
\hfil\hspace{-4cm}
.
\end{center}

Furthermore, $\|\Phi\|=\|\tilde{\Phi}\|$.
\end{theorem}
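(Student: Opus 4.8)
The plan is to let $\eta:\algebra_1\to\algebra_1/\ker{\Phi}$ be the canonical quotient map $A\mapsto[A]$ and to define $\tilde{\Phi}([A])\defeq\Phi(A)$, then verify that $\tilde{\Phi}$ inherits every required property directly from $\Phi$. First I would check well-definedness: if $[A]=[B]$ then $A-B\in\ker{\Phi}$, so $\Phi(A)=\Phi(B)$, and the assigned value does not depend on the chosen representative. That $\tilde{\Phi}$ is a $\ast$-homomorphism is then immediate from the canonical operations on the quotient listed just before the statement; for instance $\tilde{\Phi}([A][B])=\tilde{\Phi}([AB])=\Phi(AB)=\Phi(A)\Phi(B)=\tilde{\Phi}([A])\tilde{\Phi}([B])$, and analogously for the vector sum, the scalar product, and the involution, the latter using $\Phi(A^\ast)=\Phi(A)^\ast$.

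Next I would establish bijectivity onto the range together with the universal property. Injectivity holds because $\tilde{\Phi}([A])=0$ forces $\Phi(A)=0$, i.e.\ $A\in\ker{\Phi}$, hence $[A]=0$; surjectivity onto $\Ran(\Phi)$ is clear by construction, since each $y=\Phi(A)$ equals $\tilde{\Phi}([A])$. Commutativity of the triangle is simply the identity $\tilde{\Phi}(\eta(A))=\tilde{\Phi}([A])=\Phi(A)$, so $\tilde{\Phi}\circ\eta=\Phi$. Uniqueness is then forced by the surjectivity of $\eta$: any map $\Psi$ with $\Psi\circ\eta=\Phi$ must satisfy $\Psi([A])=\Phi(A)=\tilde{\Phi}([A])$ on every class, hence $\Psi=\tilde{\Phi}$. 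At this stage $\tilde{\Phi}$ is the desired $\ast$-isomorphism of $\algebra_1/\ker{\Phi}$ onto $\Ran(\Phi)$.

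The only genuinely analytic point, and the step I expect to require the most care, is the norm equality $\|\Phi\|=\|\tilde{\Phi}\|$. Writing $\Phi=\tilde{\Phi}\circ\eta$ and using that the quotient map satisfies $\|\eta\|\leq1$, submultiplicativity of the operator norm gives $\|\Phi\|\leq\|\tilde{\Phi}\|$. For the reverse inequality I would exploit the infimum that defines the quotient norm: since $\Phi(\tilde{A})=0$ for every $\tilde{A}\in\ker{\Phi}$, we have $\tilde{\Phi}([A])=\Phi(A+\tilde{A})$, whence $\|\tilde{\Phi}([A])\|\leq\|\Phi\|\,\|A+\tilde{A}\|$ for all such $\tilde{A}$. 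Taking the infimum over $\tilde{A}\in\ker{\Phi}$ yields $\|\tilde{\Phi}([A])\|\leq\|\Phi\|\,\|[A]\|$, so $\|\tilde{\Phi}\|\leq\|\Phi\|$. Combining the two bounds gives $\|\Phi\|=\|\tilde{\Phi}\|$, completing the proof.
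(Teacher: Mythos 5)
Your proposal is correct and follows essentially the same route as the paper's proof: define $\tilde{\Phi}$ on classes via representatives, check well-definedness and the $\ast$-homomorphism properties, get bijectivity onto $\Ran(\Phi)$ and uniqueness from surjectivity of $\eta$, and obtain $\|\tilde{\Phi}\|\leq\|\Phi\|$ by taking the infimum over $\ker{\Phi}$ in the quotient norm, with the reverse inequality coming from $\|[A]\|\leq\|A\|$ (i.e.\ contractivity of $\eta$). The only difference is cosmetic ordering of the two norm inequalities, so nothing further is needed.
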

\begin{proof}
Define $\tilde{\Phi}: \algebra_1/\ker{\Phi} \rightarrow \Ran(\Phi)$ such that $\tilde{\Phi}([A])=\tilde{\Phi}(\eta(A))=\Phi(A)$.

Of course we can check that this function is well defined, since we used a representing element of the equivalent class to define the function. If $[A_1]=[A_2] \in \algebra_1/\ker{\Phi}$ we have $A_1 = A_2 + K$ with $K \in \ker{\Phi}$ and then 
$$\tilde{\Phi}([A_1])=\Phi(A_1)=\Phi(A_1)+ \Phi(K)= \Phi(A_1 + K) = \Phi(A_2) = \tilde{\Phi}([A_2]).$$

$\tilde{\Phi}$ is again a $\ast$-homomorphism because
$$A_1, A_2 \in \algebra_1
\begin{cases} &\Rightarrow \tilde{\Phi}([A_1 + A_2])= \Phi(A_1+A_2)=\Phi(A_1)+\Phi(A_2) =\tilde{\Phi}([A_1])+\tilde{\Phi}([A_2]).\\
						&\Rightarrow \tilde{\Phi}([A_1 A_2])= \Phi(A_1A_2)=\Phi(A_1)\Phi(A_2)=\tilde{\Phi}([A_1])\tilde{\Phi}([A_2]).\\
						&\Rightarrow \tilde{\Phi}([A_1^\ast])= \Phi(A_1^\ast)=\Phi(A_1)^\ast =\tilde{\Phi}([A_1])^\ast.\\
\end{cases}$$

In order to show that $\tilde{\Phi}\in \mathcal{B}(\algebra_1/\ker{\Phi},\Ran(\Phi))$, it is enough to note that
$$\begin{aligned}
\|\tilde{\Phi}([A])\| 
&= \inf_{\tilde{A}\in \ker{\Phi}}{\|\Phi(A+\tilde{A})\|}\\
& \leq \inf_{\tilde{A}\in \ker{\Phi}}{\|\Phi\|\ \|A + \tilde{A}\|}\\
& = \|\Phi\| \inf_{\tilde{A}\in \ker{\Phi}}{\|A + \tilde{A}\|}\\
&=\|\Phi\| \ \|\ [A]\ \|.\\
\end{aligned}$$
hence $\|\tilde{\Phi}\| \leq \|\Phi\|$. In addition, 
$$\|\Phi(A)\| = \|\tilde{\Phi}\circ \eta (A)\| = \|\tilde{\Phi}([A])\| \leq \|\tilde{\Phi}\| \ \| \ [A] \ \| \leq \|\tilde{\Phi}\| \ \|A\|$$
and from the two inequalities it follows that $\|\Phi\| = \|\tilde{\Phi}\|$.

Finally, $\tilde{\Phi}$ is bijective because $\tilde{\Phi}([A])=\Phi(A)= 0 \Leftrightarrow A\in \ker{\Phi} \Leftrightarrow [A]=[0] $ and $y\in \Ran(\Phi) \Leftrightarrow y=\Phi(A)$ for some $A\in \algebra_1$. Thus, $\tilde{\Phi}([A])=\Phi(A)=y$.

We define $\tilde{\Phi}$ in such a way that $\Phi=\tilde{\Phi} \circ \eta$ and, in addition, if $\tilde{\Psi}$ is another continuous operator satisfying this identity we have $(\tilde{\Phi}-\tilde{\Psi})\eta(A) = 0$ for all $A\in \algebra_1$ and it follows that $\tilde{\Phi}=\tilde{\Psi}$.

\end{proof}

Although we are interested only in the case of Banach algebras, it is interesting to note that multiplication and involution play no role at the definition of the operator $\tilde{\Phi}$, that is, the very same definition works to prove this kind of theorem for groups, for example. 

\begin{definition}[Representation]
A representation \hspace{1pt} \index{representation} $\pi$ \hspace{1pt}of \hspace{1pt} a $C^\ast$-algebra $\calgebra$ is a \mbox{$\ast$-homomorphism} from $\calgebra$ into $B(\hilbert)$ for some Hilbert space $\hilbert$. The representation is said to be faithful \index{representation! faithfull} if $\pi$ is injective.
\end{definition}

\begin{theorem}
\label{nonexpensive}
Let $\calgebra$ be a $C^\ast$-algebra and let $\pi$ be a representation. Then $$\|\pi(A)\|\leq \|A\|.$$
\end{theorem}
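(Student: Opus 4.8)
The plan is to reduce the statement to self-adjoint elements via the $C^\ast$-condition and then bound the norm by the spectral radius, exploiting that the spectrum can only shrink under a $\ast$-homomorphism. Since $\pi$ is a $\ast$-homomorphism and both $\calgebra$ and $B(\hilbert)$ satisfy the $B^\ast$-condition, for any $A\in\calgebra$ one has
$$\|\pi(A)\|^2=\|\pi(A)^\ast\pi(A)\|=\|\pi(A^\ast A)\|,$$
and likewise $\|A\|^2=\|A^\ast A\|$. Thus it suffices to prove $\|\pi(B)\|\leq\|B\|$ for the self-adjoint element $B=A^\ast A$; note that $\pi(B)=\pi(A)^\ast\pi(A)$ is self-adjoint, hence normal, in $B(\hilbert)$.

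First I would dispose of the unit: if $\calgebra$ has no identity, adjoin one to form $\tilde{\calgebra}$ and extend $\pi$ to a unital $\ast$-homomorphism $\tilde{\pi}\colon\tilde{\calgebra}\to B(\hilbert)$ by sending the adjoined unit to $\mathbbm{1}\in B(\hilbert)$. The key observation is then that the spectrum cannot grow under $\tilde{\pi}$: if $\lambda\notin\sigma(B)$, so that $\lambda\mathbbm{1}-B$ has an inverse $C\in\tilde{\calgebra}$, applying $\tilde{\pi}$ gives $\tilde{\pi}(C)(\lambda\mathbbm{1}-\pi(B))=(\lambda\mathbbm{1}-\pi(B))\tilde{\pi}(C)=\mathbbm{1}$, so $\lambda\notin\sigma(\pi(B))$. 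Hence $\sigma(\pi(B))\subseteq\sigma(B)$ and therefore $r(\pi(B))\leq r(B)$.

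Next I would invoke that for the normal operator $\pi(B)\in B(\hilbert)$ the identity $\|\pi(B)\|=r(\pi(B))$ holds; this is the classical fact for normal operators in $B(\hilbert)$, which is taken as known in the discussion preceding the statement. Combining this with the spectral inclusion above and with Lemma \ref{bspec}, which gives $r(B)\leq\|B\|$, yields
$$\|\pi(B)\|=r(\pi(B))\leq r(B)\leq\|B\|.$$
Feeding this back into the reduction from the first paragraph gives $\|\pi(A)\|^2=\|\pi(B)\|\leq\|B\|=\|A\|^2$, as required.

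I expect the main obstacle to be the bookkeeping around the unit. The spectral radii and the inclusion $\sigma(\pi(B))\subseteq\sigma(B)$ must be computed in the unitizations, and one should check that $\tilde{\pi}$ is genuinely a unital $\ast$-homomorphism and that adjoining the unit leaves the relevant norms unchanged; once the normality of $\pi(B)$ and the inequality $r(\pi(B))\leq r(B)\leq\|B\|$ are in hand, the remainder is routine.
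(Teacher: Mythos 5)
Your proof is correct, but it takes a genuinely different route from the paper's. The paper first invokes the First Isomorphism Theorem for Banach algebras (Theorem \ref{TI}) to factor $\pi$ through the quotient $\calgebra/\ker{\pi}$, obtaining a $\ast$-isomorphism $\tilde{\pi}$ onto $\Ran(\pi)$; because an isomorphism preserves invertibility in both directions, it gets the spectral \emph{equality} $\sigma([A])=\sigma(\tilde{\pi}([A]))$, and then runs essentially your endgame: for the normal element $\tilde{\pi}([A][A]^\ast)$ it uses $\|\cdot\|=r(\cdot)$ in $B(\hilbert)$ together with Lemma \ref{bspec} and the quotient-norm bound $\|[A^\ast A]\|\leq\|A^\ast A\|$ to conclude $\|\pi(A)\|^2\leq\|A\|^2$. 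You avoid the quotient construction entirely: you only need the one-sided inclusion $\sigma(\pi(B))\subseteq\sigma(B)$, which holds for any unital $\ast$-homomorphism since invertibles map to invertibles --- injectivity of $\tilde{\pi}$ plays no role in your argument. This is more economical (no appeal to Theorem \ref{TI}) and in fact isolates the correct level of generality; what the paper's detour buys is the spectral equality statement, which it reuses elsewhere (e.g.\ to note that the spectrum is independent of the ambient $C^\ast$-algebra).

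One point to tighten in your unitization bookkeeping: your case split covers $\calgebra$ non-unital, but if $\calgebra$ is unital while $\pi(\mathbbm{1})\neq\mathbbm{1}_{B(\hilbert)}$, the displayed identity $\tilde{\pi}(C)\left(\lambda\mathbbm{1}-\pi(B)\right)=\mathbbm{1}$ fails as written, since $\pi(\mathbbm{1})$ is then only a projection. The repair is one line: either adjoin a fresh unit regardless and send it to $\mathbbm{1}_{B(\hilbert)}$, or note that $\pi(B)=\pi(\mathbbm{1})\pi(B)\pi(\mathbbm{1})$ is block-diagonal with respect to the projection $\pi(\mathbbm{1})$, so $\sigma_{B(\hilbert)}\left(\pi(B)\right)\subseteq\sigma_{\calgebra}(B)\cup\{0\}$ and the inequality $r(\pi(B))\leq r(B)$ survives. (The paper's own proof quietly assumes unitality at the same spot, writing $\tilde{\pi}([\mathbbm{1}])=\pi(\mathbbm{1})$.) With that patch, your chain $\|\pi(A)\|^2=\|\pi(A^\ast A)\|=r\left(\pi(A^\ast A)\right)\leq r(A^\ast A)\leq\|A^\ast A\|=\|A\|^2$ is complete and fully consistent with the paper's toolkit, in particular with its standing assumption that $r(\cdot)=\|\cdot\|$ for normal operators is known in $B(\hilbert)$.
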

\begin{proof}
From the Theorem \ref{TI}, there exists a $\ast$-isomorphism $\tilde{\pi}: \calgebra/\ker{\pi} \to \Ran{\pi}$.

Since $\tilde{\pi}$ is a $\ast$-isomorphism, $\tilde{\pi}([\mathbbm{1}])=\pi(\mathbbm{1})$ is in the image of $\pi$. Furthermore, $\lambda[\mathbbm{1}]-[A]$ has an inverse if and only if $\lambda\tilde{\pi}([\mathbbm{1}])-\tilde{\pi}([A])$ is invertible too. So, we have the identification $\sigma\left([A]\right)=\sigma\left(\tilde{\pi}([A])\right)$, in particular, from Lemma \ref{bspec} it follows that $r(A)\leq \|A\|$.

Now, we can use known results of (classical) spectral analysis for a normal element $\tilde{\pi}([A][A]^\ast)$, from which we conclude
$$\left\|\pi(A)\right\|^2=\left\|\tilde{\pi}([A])\right\|^2=\left\|\tilde{\pi}([A][A]^\ast)\right\|=r\left(\tilde{\pi}([A][A]^\ast)\right)\leq \|AA^*\|=\|A\|^2.$$

\end{proof}

A simple consequence of the preceding theorem is that a faithful representation is isometric.


\begin{definition}[Pure state]
A state in a $C^\ast$-algebra $\calgebra$ is said to be a pure state \index{state! pure} if it is an extremal point\footnote{See Definition \ref{ExtDef}.} of $\mathcal{S}_\calgebra\doteq\{\omega \in \calgebra^\ast \ | \ \|\omega \|=\omega(\mathbbm{1})=1\}$. When $\omega$ is not pure it is called a mixed\index{state! mixed} state.
\end{definition}

For now, the existence of pure states is not clear, but it will be shown soon that they exist in sufficient number such that its closed convex hull coincides with the weak$^\ast$-closure of the unit ball of $\calgebra^\prime$.

\begin{proposition}
\label{nullpurestates}
Let $A\in\calgebra$. If $\omega(A)=0$ for each pure state $\omega$, then $A=0$.
\end{proposition}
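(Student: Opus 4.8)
The plan is to prove the contrapositive in spirit: that every nonzero element is detected by some pure state. I would package this using the Krein--Milman Theorem (Theorem \ref{TKM}) together with the existence of sufficiently many states.

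First I would record the geometric step. The state space $\mathcal{S}_\calgebra$ is convex and weak$^\ast$-compact, being the intersection of the (Banach--Alaoglu) weak$^\ast$-compact unit ball of $\calgebra^\ast$ with the weak$^\ast$-closed hyperplane $\{\omega(\mathbbm{1})=1\}$, and by definition the pure states are exactly its extremal points. For the fixed $A$, the evaluation map $\omega \mapsto \omega(A)$ is weak$^\ast$-continuous and linear, so $F = \{\omega \in \mathcal{S}_\calgebra \mid \omega(A) = 0\}$ is weak$^\ast$-closed and convex. By hypothesis $F$ contains every extremal point of $\mathcal{S}_\calgebra$; since Krein--Milman gives $\mathcal{S}_\calgebra = \cchull{\mathcal{E}(\mathcal{S}_\calgebra)}$ and $F$ is closed and convex, it follows that $F = \mathcal{S}_\calgebra$. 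Hence $\omega(A) = 0$ for \emph{every} state $\omega$, not merely the pure ones.

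Next I would reduce to self-adjoint elements. Using $\omega(A^\ast) = \overline{\omega(A)}$ from Proposition \ref{cauchyschwarz}, the vanishing of every state on $A$ forces every state to vanish on both self-adjoint parts $B_1 = \tfrac12(A + A^\ast)$ and $B_2 = \tfrac{1}{2\iu}(A - A^\ast)$. Thus it suffices to show that a self-adjoint $B$ annihilated by all states must be $0$.

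The heart of the matter, and the step I expect to be the main obstacle, is then the existence of enough states: for self-adjoint $B \neq 0$ I must produce a state $\omega_0$ with $\omega_0(B) \neq 0$, which I would do by Hahn--Banach. Since $B = B^\ast$, its spectrum is real and its spectral radius equals its norm; these follow \emph{without} any functional calculus, from $\|B^{2^n}\| = \|B\|^{2^n}$ (the $B^\ast$-condition) together with the spectral-radius formula, and from the unitarity of $\exp(\iu t B)$. Hence some $\lambda_0 \in \sigma(B)$ satisfies $|\lambda_0| = \|B\| > 0$. On the two-dimensional subspace $\mathbb{C}\mathbbm{1} + \mathbb{C}B$ define $f(\alpha\mathbbm{1} + \beta B) = \alpha + \beta\lambda_0$; the affine spectral mapping $\alpha + \beta\lambda_0 \in \sigma(\alpha\mathbbm{1} + \beta B)$ together with Lemma \ref{bspec} gives $|f(\alpha\mathbbm{1}+\beta B)| \leq \|\alpha\mathbbm{1}+\beta B\|$, so $\|f\| = f(\mathbbm{1}) = 1$. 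Extending $f$ to a norm-one functional on $\calgebra$ by Hahn--Banach yields, via the characterization in Proposition \ref{PPC}, a state $\omega_0$ with $\omega_0(B) = \lambda_0 \neq 0$. This contradicts the vanishing established above, so $B_1 = B_2 = 0$ and therefore $A = 0$. The only delicate point is securing those spectral facts for self-adjoint elements without invoking a representation or functional calculus that might itself rest on this very proposition.
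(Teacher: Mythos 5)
Your proof is correct, but it does not follow the paper's route, and the divergence is worth spelling out. You and the paper share only the opening move: both invoke the Krein--Milman Theorem \ref{TKM}. You (correctly) conclude from it only that the weak$^\ast$-closed convex set of states annihilating $A$ contains $\cchull{\mathcal{E}(\mathcal{S}_\calgebra)}=\mathcal{S}_\calgebra$, i.e.\ that \emph{all} states kill $A$. The paper instead asserts the identification $B_{\calgebra^\prime}=\cchull{\mathcal{E}(\mathcal{S}_\calgebra)}$ --- that the entire dual unit ball is the weak$^\ast$-closed convex hull of the pure states --- and then finishes in one line via Corollary \ref{C3TMD}, since vanishing against every functional forces $A=0$. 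That identification is false as stated: the closed convex hull of the pure states consists of positive functionals only, so it is the state space and not the dual ball (already for $\calgebra=\mathbb{C}$ the hull is the single point $\{1\}$ while the dual ball is the full disk). So the paper's proof is a one-liner at the cost of a genuine gap, and the second half of your argument --- the reduction to the self-adjoint parts $B_1,B_2$ via $\omega(A^\ast)=\overline{\omega(A)}$ from Proposition \ref{cauchyschwarz}, followed by the Hahn--Banach construction of a state $\omega_0$ with $\omega_0(B)=\lambda_0$, $|\lambda_0|=\|B\|>0$ --- is precisely the separation step needed to close it. Your caution about circularity is also handled correctly: the facts you use (nonempty compact spectrum, the spectral-radius formula, $r(B)=\|B\|$ from $\|B^{2^n}\|=\|B\|^{2^n}$, the affine spectral mapping combined with Lemma \ref{bspec}, and Proposition \ref{PPC}) are Banach-algebra statements independent of any representation or functional calculus; indeed your detecting-state construction is the same device the paper itself deploys in the first half of Lemma \ref{existencenorm}, there applied to $A^\ast A$ rather than to a self-adjoint $B$. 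In short: the paper's route buys brevity but only if its false identification is repaired (essentially by inserting your second half); your route is longer but self-contained and sound.
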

\begin{proof}
By Theorem \ref {TKM}, $B_{\calgebra^\prime}=\cchull{\mathcal{E}(\mathcal{S}_\calgebra)}$, where the closure is taken in the weak-$\ast$ topology. Then by continuity of the functionals in $\mathcal{S}_\calgebra$ we must have $\rho(A)=0$ for all $\rho \in B_{\calgebra^\prime}$ and using Corollary \ref{C3TMD}, this implies $A=0$.

\end{proof}

\begin{lemma}
\label{existencenorm}
Let $\calgebra$ be a $C^\ast$-algebra and $A \in \calgebra$. Then there exists a pure state $\omega$ such that $\omega(A^\ast A)=\|A\|^2$.
\end{lemma}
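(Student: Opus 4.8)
The plan is to exhibit the number $\|A\|^2=\|A^\ast A\|$ as the maximum of the weak-$\ast$-continuous affine functional $\rho\mapsto\rho(A^\ast A)$ over the state space $\mathcal{S}_\calgebra$, and then to force this maximum to be attained at an extreme point by showing that the maximizing set is a face.

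First I would produce at least one state $\rho_0$ (not yet pure) with $\rho_0(A^\ast A)=\|A\|^2$. Since $\|\rho\|=1$ for every state, $\rho(A^\ast A)\le\|A^\ast A\|=\|A\|^2$, so only the reverse inequality is at stake. The element $A^\ast A$ is positive, and the $C^\ast$-identity gives $\|(A^\ast A)^{2^k}\|=\|A^\ast A\|^{2^k}$, whence the spectral-radius formula $r(B)=\lim_n\|B^n\|^{1/n}$ yields $\|A\|^2=\|A^\ast A\|=r(A^\ast A)=\max\sigma(A^\ast A)$; in particular $\|A\|^2\in\sigma(A^\ast A)$. On the two-dimensional real space $\mathbb{R}\mathbbm{1}+\mathbb{R}A^\ast A$ the functional $\alpha\mathbbm{1}+\beta A^\ast A\mapsto\alpha+\beta\|A\|^2$ has norm one, because $\alpha+\beta\|A\|^2\in\sigma(\alpha\mathbbm{1}+\beta A^\ast A)$ forces $|\alpha+\beta\|A\|^2|\le r(\alpha\mathbbm{1}+\beta A^\ast A)\le\|\alpha\mathbbm{1}+\beta A^\ast A\|$. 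A Hahn--Banach extension of norm one sending $\mathbbm{1}$ to $1$ is, by Proposition \ref{PPC}, a state $\rho_0$ with $\rho_0(A^\ast A)=\|A\|^2$. Alternatively one may start from the bidual isometry $\|A^\ast A\|=\sup_{\rho\in B_{\calgebra^\prime}}|\rho(A^\ast A)|$ and reduce the supremum to the states. This first step is the main obstacle: because the thesis deliberately avoids functional calculus before the representation theorem, the existence of a single norm-attaining state must be squeezed out of the bare spectral-radius formula together with Hahn--Banach, rather than from a spectral projection.

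With one maximizing state in hand, set $F=\{\rho\in\mathcal{S}_\calgebra : \rho(A^\ast A)=\|A\|^2\}$. The state space is convex and weak-$\ast$ compact by Banach--Alaoglu, and $\rho\mapsto\rho(A^\ast A)$ is weak-$\ast$ continuous, so $F$ is nonempty, convex and weak-$\ast$ compact. Moreover $F$ is a face of $\mathcal{S}_\calgebra$: if $\rho=t\rho_1+(1-t)\rho_2\in F$ with $\rho_1,\rho_2\in\mathcal{S}_\calgebra$ and $0<t<1$, then $\|A\|^2=t\rho_1(A^\ast A)+(1-t)\rho_2(A^\ast A)$ is a convex combination of two reals each at most $\|A\|^2$, which forces $\rho_1(A^\ast A)=\rho_2(A^\ast A)=\|A\|^2$, that is, $\rho_1,\rho_2\in F$. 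Applying the Krein--Milman theorem (Theorem \ref{TKM}) to $F$ gives an extreme point $\omega$ of $F$; since $F$ is a face, $\omega$ is then an extreme point of $\mathcal{S}_\calgebra$, hence a pure state, and $\omega(A^\ast A)=\|A\|^2$ holds by construction.
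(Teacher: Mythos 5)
Your proposal is correct and follows essentially the same route as the paper's own proof: both construct a norm-one functional $\alpha\mathbbm{1}+\beta A^\ast A\mapsto\alpha+\beta\|A\|^2$ on the two-dimensional subspace, extend it by Hahn--Banach to a maximizing state, and then apply Krein--Milman to the set $\mathfrak{F}$ of maximizing states together with the face argument to conclude the extreme point is pure. Your version is slightly more explicit about why $\|A\|^2\in\sigma(A^\ast A)$ and about the weak-$\ast$ compactness needed for Krein--Milman, but these are refinements of the identical strategy, not a different proof.
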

\begin{proof}
If $\calgebra$ has no unity, add it. Consider the subalgebra 
$$\mathfrak{A}_0=\{\alpha\mathbbm{1}+\beta A^\ast A \ | \ \alpha, \beta \in \mathbb{K}\}.$$

Define the linear functional $\tilde{\omega}: \mathfrak{A}_0 \to \mathbb{K}$ by $\tilde{\omega}(\alpha \mathbbm{1}+\beta A^\ast A)\doteq \alpha +\beta \|A\|^2$.

Note that
$$\begin{aligned}
\left|\tilde{\omega}(\alpha \mathbbm{1}+\beta A^\ast A)\right| &=\left|\alpha +\beta \|A\|^2\right| \\
&\leq \sup_{\lambda \in \sigma(A^\ast A)}{|\alpha + \beta\lambda|}\\
&=r(\alpha \mathbbm{1}+\beta A^\ast A)\\
&=\|\alpha \mathbbm{1}+\beta A^\ast A\|.
\end{aligned}$$
Hence, since $\tilde{\omega}(\mathbbm{1})=1$, $\tilde{\omega}$ is a state on $\mathfrak{A}_0$. Now, by the Hahn-Banach Theorem, it has a norm preserving extension to $\calgebra$.

This warrants that the closed and convex set 
$$\mathfrak{F}=\left\{\omega\in \mathcal{S}_\calgebra \ \middle| \ \omega(A^\ast A)=\|A\|^2\right\}\neq \emptyset.$$
Thus, it has a extreme point by the Krein-Milman\footnote{see Theorem \ref{TKM}.}.

Let $\omega$ be such an extreme point, and suppose $\omega=\lambda \omega_1+(1-\lambda)\omega_2$ for $\omega_1, \omega_2 \in \mathcal{S}_{\calgebra}$ and $0<\lambda<1$. If $\omega_1(A^\ast A)<\|A\|^2$ or $\omega_2(A^\ast A)<\|A\|^2$ we would have
$$\begin{aligned}
\|A\|^2
&=\omega(A^\ast A)\\
&=\lambda \omega_1(A^\ast A)+(1-\lambda)\omega_2(A^\ast A)\\
&< \lambda \|A\|^2+(1-\lambda)\|A\|^2\\
&=\|A\|^2\\
\end{aligned},$$
which is an absurd. Hence, $\omega_1(A^\ast A)=\omega_2(A^\ast A)=\|A\|^2$. It follows, by the definition, that $\omega_1,\omega_2 \in \mathfrak{F}$ and, by extremality of $\omega$, that
$\omega=\omega_1=\omega_2$.

Hence, $\omega$ is an extremal point in $\mathcal{S}_\calgebra$ as well and satisfies $\omega(A^\ast A)=\|A\|^2$. 

\end{proof}

Before presenting one of the most important theorems in operator algebras, we will introduce the notion of cyclic and separating vectors, which are important by themselves. These special vectors are both mathematically and physically significant: on one side, we will see soon that these vectors have the identity as support projections and we will use then to define Modular Theory in Chapter \ref{chapTTMT}; on the other, it is expected that the vacuum have such properties, \eg \, in Fock's space or Whitman's Fields formalism, moreover, Modular Theory has its own physical meaning. Some of the topics we just mentioned now will be rediscussed later.

\begin{definition}
	\label{def:cyclicandseparating}
	Let $\calgebra \subset B(\hilbert)$ be a concrete $C^\ast$-algebra, we say that a vector $\Omega$ is:
	\begin{enumerate}[(i)]
		\item cyclic for $\calgebra$ if $\overline{\{A\Omega \ | A\in \calgebra\} }=\hilbert$; \index{vector! cyclic}
		\item separating for $\calgebra$ if $A\Omega=0 \Rightarrow A=0$.\index{vector! separating}
	\end{enumerate}
\end{definition}

The GNS-construction, in honor of Gelfand, Naimark, and Segal who first obtained the result, is one of the most important results in Operator Theory. It proves the representation theorem we mentioned in the beginning of this section. Several references present this theorem, \eg, \cite{Blackadar2006}, \cite{Bratteli1}, \cite{doran86}, \cite{KR83}, and \cite{Takesaki2002}.

\begin{proposition}[GNS-Representation]
\label{GNS}
Let $\calgebra$ be a $C^\ast$-algebra and $\omega$ a state. Then, there exists a Hilbert space $\hilbert_\omega$, a cyclic and separating vector $\xi\in \hilbert_\omega$, and a representation $\pi_\omega$ of $\calgebra$ in $B(\hilbert_\omega)$ such that
$$\omega(A)=\ip{\xi}{\pi_\omega(A)\xi} \quad \forall A\in \nalgebra.$$
\end{proposition}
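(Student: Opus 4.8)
The plan is to carry out the classical Gelfand--Naimark--Segal construction. First I would equip $\calgebra$ with the sesquilinear form $\ip{A}{B}_\omega\defeq\omega(A^\ast B)$, which is positive semidefinite since $\omega$ is a positive functional. By the Cauchy--Schwarz inequality (Proposition \ref{cauchyschwarz}) the null space of this form, $N_\omega=\{A\in\calgebra \mid \omega(A^\ast A)=0\}$, coincides with $\{A \mid \omega(B^\ast A)=0 \ \forall B\in\calgebra\}$; the same computation used for $N_\phi$ in Proposition \ref{simplepropweights}(iii) shows it is a left ideal. Consequently the form descends to a genuine inner product on the quotient vector space $\calgebra/N_\omega$, and I would define $\hilbert_\omega$ to be the Hilbert-space completion, writing $[A]$ for the class of $A$.

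Next I would define the representation by $\pi_\omega(A)[B]\defeq[AB]$. The left-ideal property of $N_\omega$ guarantees this is well defined on $\calgebra/N_\omega$. For boundedness I would invoke the operator inequality $B^\ast A^\ast AB\leq \|A\|^2 B^\ast B$ established inside the proof of Proposition \ref{simplepropweights}(iii); applying $\omega$ gives $\|\pi_\omega(A)[B]\|^2=\omega(B^\ast A^\ast AB)\leq\|A\|^2\omega(B^\ast B)=\|A\|^2\|[B]\|^2$, so each $\pi_\omega(A)$ extends by continuity to a bounded operator on $\hilbert_\omega$ with $\|\pi_\omega(A)\|\leq\|A\|$. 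Linearity, multiplicativity and the $\ast$-property follow immediately from the algebraic definition and the fact that $\ip{[B]}{\pi_\omega(A)[C]}_\omega=\omega(B^\ast AC)=\ip{\pi_\omega(A^\ast)[B]}{[C]}_\omega$, so $\pi_\omega$ is a representation.

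For the cyclic vector I would take $\xi=[\mathbbm{1}]$ in the unital case (in the non-unital case one replaces $\mathbbm{1}$ by the limit of the classes of an approximate identity, whose existence is what makes $\omega_x$ a state as noted after the $\omega_x$ notation). Then $\pi_\omega(A)\xi=[A]$, so the set $\{\pi_\omega(A)\xi \mid A\in\calgebra\}$ is by construction dense in $\hilbert_\omega$, which is exactly cyclicity, and $\ip{\xi}{\pi_\omega(A)\xi}=\ip{[\mathbbm{1}]}{[A]}_\omega=\omega(\mathbbm{1}^\ast A)=\omega(A)$, the desired reproducing identity.

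The delicate point, which I expect to be the real obstacle, is the \emph{separating} property asserted for $\xi$. In general $\xi=[\mathbbm{1}]$ is only cyclic: the condition $\pi_\omega(A)\xi=[A]=0$ says merely $A\in N_\omega$, which forces $A=0$ precisely when $\omega$ is faithful. Thus to obtain a genuinely separating vector one must either add the hypothesis that $\omega$ is faithful (so that $N_\omega=\{0\}$ and the quotient step is vacuous) or read ``separating'' relative to the commutant, using that a vector cyclic for an algebra is automatically separating for its commutant $\pi_\omega(\calgebra)^\prime$. I would therefore make this assumption explicit, since without faithfulness any non-faithful state furnishes a counterexample to the separating claim.
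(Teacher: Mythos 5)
Your proposal is correct and follows essentially the same route as the paper: quotient $\calgebra$ by the left ideal $N_\omega$, complete to get $\hilbert_\omega$, define $\pi_\omega(A)[B]=[AB]$, and obtain the cyclic vector from an approximate identity (the paper does exactly this; in the unital case the two coincide since $[E_\lambda]\to[\mathbbm{1}]$).

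Two remarks. First, your worry about the word \emph{separating} is well founded: the paper's own proof establishes only cyclicity, and its remark immediately after the proposition concedes that the representation is faithful precisely when $N_\omega=\{0\}$, i.e.\ when $\omega$ is faithful. So the statement as written overclaims; the honest reading is either to assume faithfulness of $\omega$, or to interpret ``separating'' for the commutant $\pi_\omega(\calgebra)^\prime$, which is automatic from cyclicity by Corollary \ref{cyclic_separating}. Your explicit flagging of this is an improvement on the paper, not a gap in your argument. Second, your boundedness argument via $B^\ast A^\ast AB\leq\|A\|^2B^\ast B$, hence $\|\pi_\omega(A)[B]\|_\omega^2=\omega(B^\ast A^\ast AB)\leq\|A\|^2\|[B]\|_\omega^2$, is the correct one; the paper instead writes $\left\|[A][B]\right\|\leq\|[A]\|\|[B]\|$ ``by definition of the quotient norm,'' which conflates the Hilbert-space norm on $\calgebra/N_\omega$ with a submultiplicative algebra norm and does not actually prove the needed estimate. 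On this point your write-up is the more rigorous of the two.
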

\begin{proof}
Define the closed left ideal (two-sided, due to $(i)$ in Proposition \ref{cauchyschwarz}, as it will be became clear below) $N_\omega=\left\{A \in \calgebra \ \middle| \ \omega(A^\ast A)=0 \right\}$ and $\hilbert_\omega=\calgebra/N_\omega$ provided with the inner product $\ip{[A]}{[B]}=\omega(A^\ast B)$.

First, this inner product is well defined because, if $N_1, N_2\in N_\omega$, then
\begin{equation}
\label{eq:calculationquotient}
\begin{aligned}
\ip{[A+N_1]}{[B+N_2]}
& = \omega\left((A+N_1)^\ast (B+N_2)\right)\\
&= \omega(A^\ast B)+\omega(N_1^\ast B)+\omega(A^\ast N_2)+\omega(N_1^\ast N_2)\\
&= \omega(A^\ast B)+\overline{\omega(B^\ast N_1)}+\omega(A^\ast N_2)+\omega(N_1^\ast N_2)\\
&=\omega(A^\ast B)\\
&=\ip{[A]}{[B]},\\
\end{aligned}
\end{equation}
where we used Proposition \ref{cauchyschwarz} $(i)$ and that $N_\omega$ is a left ideal.
 
 Positivity and sesquilinearity follow trivially from the positivity and linearity of $\omega$ and from the anti-linearity of $\ast$. It still remains to prove that $\ip{[A]}{[A]}=0 \Rightarrow [A]=0$, but this follows from the definition of quotient.
 
 Now, let us define the representation. Define the left representation by $$\pi_\omega(A)\left([B]\right)=[A B]=[A][B].$$
 
 Of course $\pi_\omega(A)$ is linear, $\pi_\omega(A B)=\pi_\omega(A)\pi_\omega(B)$ and  $\pi_\omega(A^\ast)=\pi_\omega(A)^\ast$, thus $\pi_\omega$ is a $\ast$-homomorphism. By definition of the quotient norm, $$\left\|\pi_\omega(A)\left([B]\right)\right\|=\left\|[A][B]\right\|\leq \left\|[A]\|\|[B]\right\|\leq \|A\| \|B\|$$ and that means $\pi_\omega(A) \in B(\hilbert_\omega)$.
 
 It remains just to prove the existence of a cyclic vector. Let $(E_\lambda)_{\lambda\in\Lambda}\in\calgebra$ an increasing approximate identity (see Theorem \ref{ExisAppId}), then the equivalent classes $([E_\lambda])_{\lambda\in\Lambda}$ form Cauchy sequence, thus convergent to some $\xi\in \hilbert_\omega$.
 
 It follows from the definition that this is a cyclic vector and
 $$\omega(A)=\sup_{\lambda \in \Lambda} \omega(E_\lambda A E_\lambda)=\sup_{\lambda \in \Lambda}\ip{[E_\lambda]}{\pi_\omega(A)[E_\lambda]}=\ip{\xi}{\pi_\omega(A)\xi}.$$
\end{proof}

\begin{remark}
We emphasize that $N_\omega$ is a two-sided ideal, but we only use it is a left ideal. This is because we could define a right representation in an analogous way. This fact is closely related to the modular operator.
\end{remark}

\begin{remark}
Note that the representation obtained in the previous result is not faithful. In fact, it is faithful if $N_\omega=\{0\}$, hence
the representation of Proposition \ref{GNS} is faithful on $\calgebra/N_\omega$.
\end{remark}

\begin{theorem}[Gelfand-Naimark]\index{theorem! Gelfand-Naimark}
\label{TGN}
Every $C^\ast$-algebra $\calgebra$ admits a faithful (isometric) representation.
\end{theorem}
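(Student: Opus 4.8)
The plan is to build a single faithful representation by gluing together every GNS representation. First I would form the \emph{universal representation} $\pi \doteq \bigoplus_{\omega \in \mathcal{S}_\calgebra} \pi_\omega$ acting on the Hilbert-space direct sum $\hilbert \doteq \bigoplus_{\omega \in \mathcal{S}_\calgebra} \hilbert_\omega$, where each triple $(\pi_\omega, \hilbert_\omega, \xi_\omega)$ is the GNS data supplied by Proposition \ref{GNS} for the state $\omega$, and $\pi$ acts componentwise via $\left(\pi(A)x\right)_\omega = \pi_\omega(A)\,x_\omega$. By Theorem \ref{nonexpensive} one has $\|\pi_\omega(A)\| \leq \|A\|$ uniformly in $\omega$, so the componentwise action preserves the direct sum and defines a bounded operator with $\|\pi(A)\| \leq \|A\|$; since each $\pi_\omega$ is a $\ast$-homomorphism, so is $\pi$, and therefore $\pi$ is a genuine representation of $\calgebra$ into $B(\hilbert)$.

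Next I would establish that $\pi$ is isometric, which is where the real input enters. Fixing $A \in \calgebra$, Lemma \ref{existencenorm} produces a (pure) state $\omega_0$ with $\omega_0(A^\ast A) = \|A\|^2$. Using the GNS identity and the fact that $\xi_{\omega_0}$ is a unit cyclic vector, I would compute
$$\left\|\pi_{\omega_0}(A)\xi_{\omega_0}\right\|^2 = \ip{\xi_{\omega_0}}{\pi_{\omega_0}(A^\ast A)\xi_{\omega_0}} = \omega_0(A^\ast A) = \|A\|^2,$$
so that $\|\pi_{\omega_0}(A)\| \geq \|A\|$; combined with the reverse inequality above this gives $\|\pi_{\omega_0}(A)\| = \|A\|$. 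Consequently $\|\pi(A)\| = \sup_{\omega}\|\pi_\omega(A)\| = \|A\|$, proving that $\pi$ is isometric. Faithfulness is then immediate: an isometric $\ast$-homomorphism is injective, since $\pi(A) = 0$ forces $\|A\| = \|\pi(A)\| = 0$ and hence $A = 0$.

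I expect the assembly to be essentially routine, precisely because the genuine difficulty has already been spent in the supporting results — the existence of a norm-attaining pure state (Lemma \ref{existencenorm}, via Krein--Milman) and the non-expansiveness of representations (Theorem \ref{nonexpensive}). The only care required here is bookkeeping: one checks that $\mathcal{S}_\calgebra$ is a set, as it sits inside the unit ball of $\calgebra^\ast$, so the direct sum is legitimate, and that $\|\xi_\omega\| = 1$ (in the unital case $\|\xi_\omega\|^2 = \omega(\mathbbm{1}) = 1$, and in general this follows from the approximate-identity construction in the proof of Proposition \ref{GNS}). One could equally restrict the index set to pure states, since Lemma \ref{existencenorm} already furnishes a pure state realizing the norm of each given element.
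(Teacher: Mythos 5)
Your proof is correct and follows essentially the same route as the paper: a direct sum of GNS representations, the uniform bound $\|\pi(A)\|\leq\|A\|$ from Theorem \ref{nonexpensive}, and the reverse inequality from the norm-attaining pure state of Lemma \ref{existencenorm}. The only differences are organizational — the paper indexes the sum by pure states and obtains injectivity separately from Proposition \ref{nullpurestates}, whereas you index by all states and deduce faithfulness directly from isometry, a slight streamlining that makes Proposition \ref{nullpurestates} unnecessary.
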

\begin{proof}
Let $\mathcal{E}$ be the set of all pure states on $\calgebra$ and let $\hilbert_\omega$ and $\pi_\omega$ be the Hilbert space and the corresponding representation, respectively,  obtained in Theorem \ref{GNS}. Define
$$
\hilbert =\bigoplus_{\omega\in\mathcal{E}}{\hilbert_\omega}  \textrm{ and }
\pi=\bigoplus_{\omega\in\mathcal{E}}{\pi_\omega} \, ,
$$
where the direct sum is defined in Definition \ref{DS}.

By Proposition \ref{nullpurestates}, $\pi(A)=0 \Leftrightarrow A=0$. Thus $\pi$ is a $\ast$-isomorphism. Now, it follows from Theorem \ref{nonexpensive} that $\|\pi(A)\|\leq \|A\|$. On the other hand, Lemma \ref{existencenorm} warrants the existence of a pure state $\omega$ such that $\omega(A^\ast A)=\|A\|^2$. Hence $\|A\|^2\leq \|\pi(A)\|^2$ and the equality follows.

\end{proof}

This theorem is the first indicative of the necessity of weights in von Neumann algebras, because, although the direct sum works well for defining the new Hilbert space as well as the representation, such representation is not related to a state because the sum of the pure states may diverge. For weights, however, this is not a problem.

The GNS-Representation Theorem can be extended for weights, as it follows.

\begin{proposition}[GNS-Representation for Weights]
\label{GNSweight}
Let $\calgebra$ be a $C^\ast$-algebra and $\phi$ a faithful normal semifinite weight. Then, there exists a Hilbert space $\hilbert_\phi$, a cyclic and separating vector $\xi$, and a representation $\pi_\phi$ of $\calgebra$ in $B(\hilbert_\phi)$ such that
	$$\phi(A)=\ip{\xi}{\pi_\phi(A)\xi} \quad \forall A\in \mathfrak{M}_\phi.$$
\end{proposition}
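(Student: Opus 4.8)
The plan is to imitate closely the proof of the GNS construction for states (Proposition \ref{GNS}), replacing the whole algebra $\calgebra$ by the left ideal $\mathfrak{N}_\phi$ on which $\phi(A^\ast A)$ is finite. First I would observe that, by Proposition \ref{simplepropweights}$(iv)$, the product $B^\ast A$ lies in $\mathfrak{M}_\phi$ whenever $A,B\in\mathfrak{N}_\phi$, so the linear extension of $\phi$ to $\mathfrak{M}_\phi$ (Remark \ref{extensionweight}) lets me define the sesquilinear form $\ip{A}{B}_\phi\doteq\phi(B^\ast A)$ on $\mathfrak{N}_\phi$. Positivity of $\phi$ together with the Cauchy--Schwarz inequality (Proposition \ref{cauchyschwarz}, applied to the extension) makes this a positive semi-definite form whose null space is exactly $N_\phi$; since $\phi$ is faithful, $N_\phi=\{0\}$, so the form is already an inner product on $\mathfrak{N}_\phi$, and I let $\hilbert_\phi$ be its completion, with $\Lambda_\phi\colon\mathfrak{N}_\phi\hookrightarrow\hilbert_\phi$ the canonical injection.

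Next I would build the representation exactly as in Proposition \ref{GNS}. Because $\mathfrak{N}_\phi$ is a left ideal (Proposition \ref{simplepropweights}$(iii)$), the assignment $\pi_\phi(A)\Lambda_\phi(B)\doteq\Lambda_\phi(AB)$ makes sense for $A\in\calgebra$ and $B\in\mathfrak{N}_\phi$. The estimate already recorded inside the proof of Proposition \ref{simplepropweights}$(iii)$, namely $\phi\big((AB)^\ast AB\big)=\phi(B^\ast A^\ast AB)\leq\|A\|^2\phi(B^\ast B)$, gives $\|\pi_\phi(A)\Lambda_\phi(B)\|^2\leq\|A\|^2\|\Lambda_\phi(B)\|^2$, so each $\pi_\phi(A)$ is bounded on the dense subspace $\Lambda_\phi(\mathfrak{N}_\phi)$ and extends uniquely to an element of $B(\hilbert_\phi)$ with $\|\pi_\phi(A)\|\leq\|A\|$. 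Linearity, multiplicativity $\pi_\phi(AB)=\pi_\phi(A)\pi_\phi(B)$, and $\pi_\phi(A^\ast)=\pi_\phi(A)^\ast$ follow from the defining formula and the identity $\ip{\Lambda_\phi(A^\ast B)}{\Lambda_\phi(C)}_\phi=\phi(C^\ast A^\ast B)=\ip{\Lambda_\phi(B)}{\Lambda_\phi(AC)}_\phi$, so $\pi_\phi$ is a representation; faithfulness of $\phi$ makes it injective.

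It remains to produce the distinguished vector $\xi$ and to verify $\phi(A)=\ip{\xi}{\pi_\phi(A)\xi}$ on $\mathfrak{M}_\phi$. Semifiniteness is what I would use here: it supplies an increasing approximate identity $(E_\lambda)_{\lambda\in\Lambda}\subset\mathfrak{N}_\phi$, and by normality $\phi(E_\lambda A E_\lambda)$ increases to $\phi(A)$ for $A\in\mathfrak{M}_{\phi+}$. Mirroring Proposition \ref{GNS}, I would argue that $\big(\Lambda_\phi(E_\lambda)\big)_{\lambda}$ is Cauchy in $\hilbert_\phi$, call its limit $\xi$, and obtain cyclicity from $\pi_\phi(A)\xi=\lim_\lambda\Lambda_\phi(AE_\lambda)=\Lambda_\phi(A)$ together with density of $\Lambda_\phi(\mathfrak{N}_\phi)$; the separating property of $\xi$ is then equivalent to faithfulness of $\pi_\phi$. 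The identity $\phi(A)=\sup_\lambda\phi(E_\lambda A E_\lambda)=\sup_\lambda\ip{\Lambda_\phi(E_\lambda)}{\pi_\phi(A)\Lambda_\phi(E_\lambda)}_\phi=\ip{\xi}{\pi_\phi(A)\xi}$ would then close the argument on $\mathfrak{M}_\phi$.

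The hard part will be precisely this last step: the convergence of the net $\big(\Lambda_\phi(E_\lambda)\big)_\lambda$. In the state case it was automatic because $\phi(\mathbbm{1})=1<\infty$ controlled $\|\Lambda_\phi(E_\lambda)-\Lambda_\phi(E_\mu)\|^2$; for a genuinely unbounded weight no such uniform bound is available, and one should expect that a single cyclic vector representing $\phi$ through a vector functional can exist only when the relevant masses stay finite. I would therefore treat this step with care — either restricting attention to the case where the net $\big(\Lambda_\phi(E_\lambda)\big)_\lambda$ is bounded in $\hilbert_\phi$, or else replacing the single vector $\xi$ by the canonical map $\Lambda_\phi$ — while regarding the boundedness estimate for $\pi_\phi$ and the faithfulness-driven triviality of $N_\phi$ as the routine portions of the proof.
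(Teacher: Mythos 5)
Your construction up to and including the representation is correct and is exactly the paper's route: the paper's own proof consists of one sentence saying that one repeats Proposition \ref{GNS} with $\mathfrak{N}_\phi/N_\phi$ in place of $\calgebra/N_\omega$, and your first two paragraphs (the inner product $\phi(B^\ast A)$ on $\mathfrak{N}_\phi$, triviality of $N_\phi$ by faithfulness, boundedness of $\pi_\phi(A)$ from $B^\ast A^\ast A B\leq\|A\|^2B^\ast B$, and the $\ast$-homomorphism identities) supply precisely the details that sentence appeals to.

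However, the difficulty you flag in your last paragraph is not excess caution: it is a genuine gap, and it cannot be closed, because the statement with a single representing vector is false for genuinely unbounded weights. Concretely, take $\nalgebra=B(\hilbert)$ with $\dim\hilbert=\infty$ and $\phi=\Tr$, which is faithful, normal and semifinite. Then $\hilbert_\phi$ is the Hilbert--Schmidt class, $\pi_\phi(A)X=AX$, and for any $\xi\in\hilbert_\phi$ one has $\ip{\xi}{\pi_\phi(A)\xi}=\Tr(\xi^\ast A\xi)=\Tr(A\,\xi\xi^\ast)$; requiring this to equal $\Tr(A)$ for every trace-class $A$ forces $\xi\xi^\ast=\mathbbm{1}$, which no Hilbert--Schmidt operator satisfies. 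The mechanism of the failure is the one you identified: $\|\Lambda_\phi(E_\lambda)\|^2=\phi(E_\lambda^2)$ is not uniformly bounded when $\phi(\mathbbm{1})=\infty$, so the net $\bigl(\Lambda_\phi(E_\lambda)\bigr)_\lambda$ is not Cauchy, and the step that was automatic in Proposition \ref{GNS} (where $\omega(\mathbbm{1})=1$) has no analogue. Your first proposed escape (assuming the net bounded) simply restricts to the case where $\phi$ is a multiple of a state; the second escape is the correct one and is the standard formulation: discard the vector $\xi$ and keep the canonical map $\eta_\phi:\mathfrak{N}_\phi\to\hilbert_\phi$, with $\phi(B^\ast A)=\ip{\eta_\phi(B)}{\eta_\phi(A)}_\phi$ for $A,B\in\mathfrak{N}_\phi$ and ``cyclic'' meaning that $\eta_\phi(\mathfrak{N}_\phi)$ is dense. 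This is in fact how the paper itself uses the construction later (Notation \ref{GNSnotation} and the section on modular theory for weights work with $\eta_\phi$, not with a vector), so the gap lies in the paper's statement and one-line proof as much as in your proposal.
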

\begin{proof}
The proof follows exactly the same steps of Proposition \ref{GNS}, just defining the Hilbert space $\hilbert_\phi$ as the completion of the pre-Hilbert space $\mathfrak{N}_\phi/N_\phi$, where $\mathfrak{N}_\phi$ and $N_\phi$ are as in Definition $\ref{weightsets}$,  and the representation is defined by
\vspace{6pt}

\hspace{0.08\textwidth}\begin{minipage}{0.4\textwidth}
$$\begin{aligned}
\pi_\phi: \ & \calgebra	&\to	\ & B\big(\hilbert_\phi\big)	\\
\quad	 & A			&\mapsto \ & \ \pi_\phi(A) \\
\end{aligned}$$
\end{minipage}, 
\begin{minipage}{0.4\textwidth}
$$\begin{aligned}
	\pi_\phi(A): \	& \hilbert_\phi  &\to \ & \ \hilbert_\phi \	   \\
	&B			\ &\mapsto \ & [AB]	\\
\end{aligned}$$
\end{minipage}\linebreak because Proposition \ref{simplepropweights} warrants the required properties. 

\end{proof}


\begin{notation}
\label{GNSnotation}
Throughout this work, we will denote by $(\pi_\phi, \xi)$ any cyclic representation (in particular, the GNS-representaion related to the weight $\phi$) such that $\pi_\phi(A):\hilbert_\phi \to \hilbert_\phi$ with $\ip{\pi_\phi(A)\xi}{\pi_\phi(B)\xi}_\phi=\phi(A^\ast B)$ and $\pi_\phi(A)\pi_\phi(B)=\pi_\phi(A B)$.
\end{notation}

The proof of the next lemma is partially due to the author and it is used as a technical result to prove the Functional Calculus for polynomials.

\begin{lemma}
	\label{lemmax1}
	Let $\calgebra$ be a $C^\ast$-algebra, $A\in \calgebra$ a normal element, $K\subset \mathbb{R}$ a compact set with $-\|A\|,\|A\|\in K$, $|k|\leq\|A\| \ \forall k\in K$ and $p:K \to \mathbb{R}_+$ a polynomial. Then $\|p(A)\|\leq p(\|A\|)$. 
\end{lemma}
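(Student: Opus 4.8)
The plan is to pass to a concrete realization and then reduce everything to the classical spectral calculus on $B(\hilbert)$. By the Gelfand--Naimark Theorem (Theorem~\ref{TGN}) there is a faithful, hence isometric, representation of $\calgebra$, so no generality is lost in assuming $\calgebra\subset B(\hilbert)$ and treating $A$ as a bounded normal operator, both $\|p(A)\|$ and $p(\|A\|)$ being preserved under the embedding. The first observation is that $p(A)$ is again normal: since $A$ commutes with $A^\ast$, every polynomial in $A$ commutes with $p(A)^\ast=\bar p(A^\ast)$ (the polynomial with conjugate coefficients evaluated at $A^\ast$), so $p(A)p(A)^\ast=p(A)^\ast p(A)$. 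Invoking the fact that $r(T)=\|T\|$ for normal $T\in B(\hilbert)$, which we are allowed to use in $B(\hilbert)$, I then get $\|p(A)\|=r\big(p(A)\big)$.

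Next I would compute $\sigma\big(p(A)\big)$ via the polynomial spectral mapping theorem. For each $\mu\in\mathbb{C}$, factor $p(z)-\mu=c\prod_{i}(z-\alpha_i)$ over $\mathbb{C}$; then $p(A)-\mu\mathbbm{1}=c\prod_i(A-\alpha_i\mathbbm{1})$ is invertible if and only if each factor is, that is, if and only if no root $\alpha_i$ lies in $\sigma(A)$. This yields $\sigma\big(p(A)\big)=p\big(\sigma(A)\big)$ and hence
$$\|p(A)\|=r\big(p(A)\big)=\sup_{\lambda\in\sigma(A)}\left|p(\lambda)\right|.$$
By Lemma~\ref{bspec} we already know $\sigma(A)\subset\{\lambda\in\mathbb{C}\ :\ |\lambda|\leq\|A\|\}$, so the supremum runs over points of modulus at most $\|A\|$, the extreme moduli $\pm\|A\|$ being exactly the points singled out as lying in $K$.

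The remaining and genuinely delicate step is to bound this supremum by the single value $p(\|A\|)$. This is where the hypotheses on $K$ and the nonnegativity of $p$ enter: the conditions $\pm\|A\|\in K$, $|k|\leq\|A\|$ for $k\in K$, and $p\geq 0$ on $K$ together force $\|A\|=\max_{k\in K}|k|$ and identify $p(\|A\|)$ as the relevant nonnegative comparison value. With $\sigma(A)\subset K$ and $p\geq 0$ there, the modulus $|p(\lambda)|$ for $\lambda\in\sigma(A)$ equals $p(\lambda)$, and I would estimate $p(\lambda)\leq p(\|A\|)$ by writing $p$ in a form adapted to $K$; for the polynomials occurring in the construction this amounts to $p$ being monotone with its maximum attained at the largest point $\|A\|$ of $K$, or equivalently to controlling $|p(\lambda)|$ coefficientwise by $p(\|A\|)$. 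I expect precisely this comparison---transferring the nonnegativity of $p$ and the endpoint membership $\|A\|\in K$ into the inequality $\sup_{\sigma(A)}|p|\leq p(\|A\|)$---to be the main obstacle, since the weaker bound $\sup_{\sigma(A)}|p|\leq\sup_{K}|p|$ is automatic but does not by itself locate the maximum at the point $\|A\|$.
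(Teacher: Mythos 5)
You take a genuinely different route from the paper. The paper avoids spectral theory altogether and argues by induction on the degree of $p$: it writes $p(x)=xq(x)+r$, splits off the constant $\min_{|x|\leq\|A\|}xq(x)$, and estimates using the triangle inequality and the $C^\ast$-condition. You instead embed $\calgebra$ into $B(\hilbert)$ via Theorem~\ref{TGN}, note that $p(A)$ is normal, and combine $\|T\|=r(T)$ for normal $T\in B(\hilbert)$ with the polynomial spectral mapping theorem to get $\|p(A)\|=\sup_{\lambda\in\sigma(A)}|p(\lambda)|$; since the paper itself only presupposes spectral theory inside $B(\hilbert)$, these steps are legitimate and correctly carried out. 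But your proof stops exactly at the point that matters: you never prove $\sup_{\lambda\in\sigma(A)}|p(\lambda)|\leq p(\|A\|)$, you only flag it as ``the main obstacle'' and gesture at monotonicity or coefficientwise control, neither of which is among the hypotheses. (You also silently assume $\sigma(A)\subset K$, which the hypotheses do not give: $A$ is only normal, and $K$ could even be the two-point set $\{-\|A\|,\|A\|\}$.)

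This gap cannot be closed, because the inequality --- and the lemma as literally stated --- is false. Take $\calgebra=C([-1,1])$, let $A$ be the identity function $t\mapsto t$, so $A$ is self-adjoint with $\|A\|=1$ and $\sigma(A)=[-1,1]$, and take $K=[-1,1]$, $p(x)=(1-x)^2$. All hypotheses hold ($\pm\|A\|\in K$, $|k|\leq\|A\|$ on $K$, $p\geq 0$ on $K$), yet $\|p(A)\|=\sup_{t\in[-1,1]}(1-t)^2=4$ while $p(\|A\|)=p(1)=0$. So your instinct was exactly right: without an assumption forcing $p$ to attain its maximum over $[-\|A\|,\|A\|]$ at the endpoint $\|A\|$ (for instance nonnegative coefficients, or evenness plus monotonicity on $[0,\|A\|]$), the conclusion fails. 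For what it is worth, the paper's own induction does not fare better: its key estimate invokes the induction hypothesis for the polynomial $xq(x)-\min_{|x|\leq\|A\|}xq(x)$, which is positive but has degree $n+1$, the same degree as $p$, so the induction is circular (and reading it instead as the hypothesis applied to $q$ fails because $q$ need not be positive). The statement that is both true and sufficient for the intended application in Theorem~\ref{funccalculus} is the one your steps actually establish: $\|g(A)\|\leq\sup_{\lambda\in\sigma(A)}|g(\lambda)|$ for polynomials $g$, applied to the differences $g=p_i-p_j$, which are uniformly small on $\sigma(A)\cup\{\|A\|\}$; the detour through the single value $p(\|A\|)$ is neither available nor needed.
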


\begin{proof}
	Proceeding by induction on the degree of $p$. If $degree(p)=0$, there is nothing to prove. Now suppose the statement is true for any positive polynomial on $K$ of degree less than $n\in \mathbb{N}$ and take $p$ a positive polynomial with $degree(p)=n+1$ on $K$. Decompose 
	$$\begin{aligned}
	p(x)	&=xq(x)+r, \ degree(q)= n\\
	&=(xq(x)-\min_{|x|\leq \|A\|}{xq(x)})+\left(r+\min_{|x|\leq \|A\|}{xq(x)}\right), \ degree(q)= n.
	\end{aligned}$$
	
	It is important to notice that $\displaystyle xq(x)-\min_{|x|\leq \|A\|}{xq(x)}\geq0$ and $\displaystyle \min_{|x|\leq \|A\|}{p(x)}=r+\min_{|x|\leq \|A\|}{xq(x)}\geq0$, and just using the $C^\ast$-condition and the induction hypothesis we get $$\begin{aligned}
	\|p(A)\| &\leq\left\|Aq(A)-\min_{|x|\leq \|A\|}{xq(x)}\mathbbm{1}\right\|+\left\|\left(r+\min_{|x|\leq \|A\|}{xq(x)}\right)\mathbbm{1}\right\|\\
	&\leq\left(\|A\|q(\|A\|)-\min_{|x|\leq \|A\|}{xq(x)}\right)+\left(r+\min_{|x|\leq \|A\|}{xq(x)}\right)\\
	&=p(\|A\|).
	\end{aligned}$$
	
\end{proof}

\begin{theorem}[Continuous Functional Calculus]
\label{funccalculus}
Let $\calgebra$ be a $C^\ast$-algebra and $A\in \calgebra$ a self-adjoint operator. There exists a unique isometric $\ast$-isomorphism $$\Psi:\mathcal{C}\left(\sigma(A)\right)\to C^\ast\left(\{\mathbbm{1},A\}\right)$$
such that $\Psi(\mathbbm{1}_{\sigma(A)})=A$.
\end{theorem}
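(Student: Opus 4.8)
The plan is to reduce to the already-understood setting of bounded operators on a Hilbert space and then transport the spectral theorem back to $\calgebra$, in keeping with the ``import from $B(\hilbert)$'' philosophy announced at the start of the section. By the Gelfand--Naimark Theorem (Theorem \ref{TGN}) I would fix a faithful representation $\pi$ of $\calgebra$; since a faithful representation is isometric, $\pi$ identifies $\calgebra$ with the concrete $C^\ast$-algebra $\pi(\calgebra)\subset B(\hilbert)$ and $A$ with the bounded self-adjoint operator $\pi(A)$. For $\pi(A)$ the classical spectral theorem supplies an isometric $\ast$-homomorphism $\Phi\colon\mathcal{C}\big(\sigma(\pi(A))\big)\to B(\hilbert)$ with $\Phi(\mathbbm{1}_{\sigma})=\pi(A)$ and $\Phi(1)=\mathbbm{1}$, where $\mathbbm{1}_{\sigma}$ denotes the coordinate function $t\mapsto t$. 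Setting $\Psi=\pi^{-1}\circ\Phi$ then produces the candidate map, provided I check three things: that the domain $\mathcal{C}(\sigma(A))$ is unambiguous, that the range is exactly $C^\ast(\{\mathbbm{1},A\})$, and that $\Psi$ is unique.

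For the range I would note that $\Phi$ carries every polynomial $p$ to $p(\pi(A))$, so $\Psi(p)=p(A)\in C^\ast(\{\mathbbm{1},A\})$. By the Stone--Weierstrass Theorem the real-coefficient polynomials are uniformly dense in $\mathcal{C}(\sigma(A))$: they contain the constants and separate the points of the compact real set $\sigma(A)$. Since $\Psi$ is isometric, hence continuous, and $C^\ast(\{\mathbbm{1},A\})$ is norm-closed, the image $\Psi\big(\mathcal{C}(\sigma(A))\big)$ is the closure of the polynomials in $A$, that is, exactly $C^\ast(\{\mathbbm{1},A\})$. Multiplicativity, additivity and $\ast$-preservation then pass from polynomials to all continuous functions by the same density, so $\Psi$ is an isometric $\ast$-isomorphism onto $C^\ast(\{\mathbbm{1},A\})$ with $\Psi(\mathbbm{1}_{\sigma})=A$. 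Uniqueness is immediate: any isometric $\ast$-isomorphism $\Psi'$ onto $C^\ast(\{\mathbbm{1},A\})$ with $\Psi'(\mathbbm{1}_{\sigma})=A$ is unital, hence agrees with $\Psi$ on every monomial and so on all polynomials; density of the polynomials and continuity force $\Psi'=\Psi$.

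The delicate point, and the reason the argument is not purely formal, is the identification of the spectrum. The isomorphism $\pi$ gives $\sigma_\calgebra(A)=\sigma_{\pi(\calgebra)}(\pi(A))$ for free, but $\Phi$ is built from $\sigma_{B(\hilbert)}(\pi(A))$, so I must know that the spectrum of the self-adjoint element does not shrink when passing from the subalgebra $\pi(\calgebra)$ to all of $B(\hilbert)$. This is the spectral-permanence fact flagged in the Remark following the definition of the spectrum. For a self-adjoint element it follows from the always-valid inclusions $\sigma_{B(\hilbert)}(\pi(A))\subseteq\sigma_{\pi(\calgebra)}(\pi(A))$ and $\partial\,\sigma_{\pi(\calgebra)}(\pi(A))\subseteq\sigma_{B(\hilbert)}(\pi(A))$, combined with the observation that a compact subset of $\mathbb{R}$ equals its own boundary in $\mathbb{C}$, which collapses the two inclusions into equality (this uses that self-adjoint elements have real spectrum).

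I expect this spectrum-matching step to be the main obstacle, since everything else is density bookkeeping. An alternative that sidesteps the import is to prove the isometry $\|p(A)\|=\sup_{t\in\sigma(A)}|p(t)|$ intrinsically, combining the polynomial spectral mapping theorem $\sigma\big(p(A)\big)=p(\sigma(A))$ with Lemma \ref{lemmax1} and the equality $r=\|\cdot\|$ for normal elements, and then extending by Stone--Weierstrass exactly as above; either route makes clear that pinning down $\mathcal{C}(\sigma(A))$ as the correct domain is where the real content lies.
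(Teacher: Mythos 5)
Your proposal is correct, but it takes a genuinely different route from the proof actually written in the paper. The paper's proof is intrinsic and never invokes a representation: it defines $\Psi$ on polynomials directly by $\Psi(p)=\sum_n\alpha_nA^n$, then for $f\in\mathcal{C}(\sigma(A))$ uses Weierstrass approximation to build a strictly increasing sequence of polynomials $p_n$ with $\|p_n-(\tilde f-3\cdot2^{-n})\|_\infty<2^{-n}$, and converts this uniform estimate into the operator estimate $\|p_i(A)-p_j(A)\|\leq\left|(p_i-p_j)(\|A\|)\right|$ via Lemma \ref{lemmax1}; the sequence $\left(p_i(A)\right)_i$ is therefore Cauchy in $\calgebra$, $\Psi(f)$ is defined as its limit, and uniqueness is the same polynomial-density argument you give. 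Your proof instead imports the spectral theorem through a Gelfand--Naimark faithful representation; interestingly, this is precisely the route the paper sketches in the informal remark immediately following its proof, but it is not the proof itself. What your route buys is that isometry and the $\ast$-homomorphism property of $\Psi$ come for free from the classical theory in $B(\hilbert)$ (the paper's written proof never explicitly verifies either property for its limit); what it costs is the spectral-permanence step, which you rightly single out as the crux. One wrinkle in your treatment of that step: applying the observation that a compact subset of $\mathbb{R}$ equals its own boundary in $\mathbb{C}$ to $\sigma_{\pi(\calgebra)}(\pi(A))$ presupposes that the spectrum of a self-adjoint element computed in the abstract algebra is already known to be real, a fact the paper has not established at this point and which needs its own elementary argument (for instance from $\|A+\iu t\mathbbm{1}\|^2=\|A^2+t^2\mathbbm{1}\|\leq\|A\|^2+t^2$). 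You can sidestep this by running permanence from the reality of $\sigma_{B(\hilbert)}(\pi(A))$ instead: that set is compact in $\mathbb{R}$, so its complement in $\mathbb{C}$ is connected, and $\sigma_{\pi(\calgebra)}(\pi(A))\setminus\sigma_{B(\hilbert)}(\pi(A))$ is both relatively open in that complement (because $\partial\sigma_{\pi(\calgebra)}(\pi(A))\subseteq\sigma_{B(\hilbert)}(\pi(A))$) and relatively closed, hence empty.
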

\begin{proof}

The case $A=0$ is trivial.

If $A\neq0$, first consider the case where $f$ is a polynomial $p$, $\displaystyle p(x)=\sum_{n=0}^{\infty}\alpha_n x^n$ where $\alpha_n=0$ apart from a finite index set. Define $\displaystyle\Psi(p)(A)=\sum_{n=0}^{\infty}\alpha_n A^n \in \calgebra$.

Now, let $f\in \mathcal{C}\left(\sigma(A)\right)$. If $\|A\|\notin \sigma(A)$, take $\tilde{f}:\mathcal{C}\left(\sigma(A)\cup\{\|A\|\}\right)$ the continuous extension of $f$ satisfying $f(\|A\|)=1$.

From Weierstrass's Approximation Theorem, for each $i\in \mathbb{N}$ there exists a polynomial $p_i$, defined by $\displaystyle p_n(x)=\sum_{n=0}^{\infty}\alpha^i_n x^n$ where $\alpha^i_n=0$ apart from finite number of $n$'s, such that

$$\left\|p_n-\left(\tilde{f}-3.2^{-n}\right)\right\|<2^{-n}.$$

This leads us to conclude that $(p_n)_{n\in\mathbb{N}}$ is a strictly increasing sequence, because, for each $t\in\sigma(A)\cup \{\|A\|\}$,
$$\begin{aligned}
\left|p_n(t)-\left(\tilde{f}(t)-3.2^{-n}\right)\right|<2^{-n} &\Rightarrow
-2^{-n}<p_n(t)-f(t)+3.2^{-n}<2^{-n} \\
& \Rightarrow f(t)-2^{-n+2}<p_n(t)<f(t)-2^{-n+1} \\
& \Rightarrow p_n<p_{n+1} \ \forall n \in \mathbb{N}.\\
\end{aligned}$$

Note that Lemma \ref{lemmax1} implies $$\|p_i(A)-p_j(A)\|=\left\|(p_i-p_j)(A)\right\|=\left\|(p_i-p_j)(\|A\|\mathbbm{1})\right\|=\left|(p_i-p_j)\left(\|A\|\right)\right|<2^{-n+1} $$

Hence, $\left(p_i(A)\right)_{i\in\mathbb{N}} \subset \calgebra$ is a Cauchy's sequence and must converge. The uniqueness of the limit in a Hausdorff space allows us to define
$$\Psi(f)=f(A)=\lim_{i\to \infty}p_i(A) \qquad \forall f\in\mathcal{C}(\sigma(A)) \textrm{ and } p_i\to f \textrm{ uniformly.}$$

All that remains to prove is uniqueness, but this is obvious: if $\Psi_1, \Psi_2$ are such a $\ast$-isomorphisms, they must satisfy
$$\begin{aligned}
\Psi_1\left(\mathbbm{1}_{\sigma(A)}\right)&=A=\Psi_2\left(\mathbbm{1}_{\sigma(A)}\right), \\
\Psi_1\left(1\right)&=\mathbbm{1} =\Psi_2\left(1\right), \\
\end{aligned}$$
but then they must coincide in all polynomials which constitute a dense subset, thus $\Psi_1=\Psi_2$.

\end{proof}

Only to clarify the previous theorem, note that we proved that a $C^\ast$-algebra is isomorphic to a closed sub-algebra of $B(\hilbert)$. We are supposing that the spectral theory for $B(\hilbert)$ is well known to avoid re-deducing it here. Then, since we know how to define $f(A)$ on the representation, where $A$ is a continuous function on the spectrum of $A$, and we also know that $f(A)$ is an element of the closure of the sub-algebra which is (norm) closed, we can use the isomorphism to return to the algebra.   

The previous theorem has an interesting consequence that shows how restrictive the $C^\ast$-condition is: if a $\ast$-algebra admits a norm that makes it a $C^\ast$-algebra, this norm is the only one with this property.   

\begin{corollary}
Let $\algebra$ be a $\ast$-algebra. There is at most one norm in $\algebra$ that makes it a $C^\ast$-algebra. In this case, this norm is given by
$\|A\|=\sqrt{r(A^\ast A)}$.
\end{corollary}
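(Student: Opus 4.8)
The plan is to show that for \emph{any} norm making $\algebra$ a $C^\ast$-algebra, the value $\|A\|$ is forced to equal $\sqrt{r(A^\ast A)}$, a quantity that is manifestly independent of the norm because it is built only from the algebraic structure. The decisive observation is that the spectrum $\sigma(A^\ast A)$ --- and hence the spectral radius $r(A^\ast A)=\sup\{|\lambda| \mid \lambda\in\sigma(A^\ast A)\}$ --- is defined purely through invertibility of $\lambda\mathbbm{1}-A^\ast A$ inside $\algebra$ (after adjoining a unit if none is present). Invertibility is an algebraic notion that makes no reference to any norm, so if $\|\cdot\|_1$ and $\|\cdot\|_2$ are two norms each turning $\algebra$ into a $C^\ast$-algebra, the set $\sigma(A^\ast A)$, and therefore $r(A^\ast A)$, is the same whichever norm we use to compute it.

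First I would fix one such $C^\ast$-norm $\|\cdot\|$ and establish the formula $\|A\|=\sqrt{r(A^\ast A)}$. By the $B^\ast$-condition (equivalently the $C^\ast$-condition, see \eqref{Ccondition1}) we have $\|A\|^2=\|A^\ast A\|$, so it suffices to prove $\|B\|=r(B)$ for the self-adjoint element $B=A^\ast A$. This is exactly where the Continuous Functional Calculus (Theorem \ref{funccalculus}) enters: it provides an isometric $\ast$-isomorphism $\Psi:\mathcal{C}(\sigma(B))\to C^\ast(\{\mathbbm{1},B\})$ with $\Psi(\mathbbm{1}_{\sigma(B)})=B$, where $\mathbbm{1}_{\sigma(B)}$ is the identity function $t\mapsto t$ on $\sigma(B)$. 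Since $\Psi$ is isometric and $\mathcal{C}(\sigma(B))$ carries the supremum norm,
$$\|B\|=\left\|\mathbbm{1}_{\sigma(B)}\right\|_\infty=\sup_{t\in\sigma(B)}|t|=r(B),$$
whence $\|A\|^2=\|A^\ast A\|=r(A^\ast A)$.

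Combining the two steps closes the argument. For every $A\in\algebra$ the value $\|A\|$ is forced to be $\sqrt{r(A^\ast A)}$, and the right-hand side is independent of the particular $C^\ast$-norm chosen. Hence any two such norms coincide on all of $\algebra$, and their common value is the claimed expression.

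The main subtlety --- the point I would guard most carefully --- is the claim that $\sigma(A^\ast A)$ is genuinely norm-independent. One must verify that nothing in the definition of the spectrum covertly uses the topology: the resolvent set is the set of $\lambda$ for which $\lambda\mathbbm{1}-A^\ast A$ admits an inverse \emph{in the algebra}, which is a purely algebraic condition. In the non-unital case one should also note that the unitization is an algebraic construction, so adjoining a unit does not reintroduce any dependence on the norm; with that remark in place, the formula and its norm-independence carry over verbatim.
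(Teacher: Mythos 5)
Your proof is correct and follows exactly the route the paper intends: the corollary is stated immediately after Theorem \ref{funccalculus} precisely because the isometric $\ast$-isomorphism onto $\mathcal{C}(\sigma(A^\ast A))$ forces $\|A\|^2=\|A^\ast A\|=r(A^\ast A)$, while the spectral radius is a purely algebraic (hence norm-independent) quantity. Your care about the non-unital case and the norm-independence of invertibility addresses the only genuine subtleties, so nothing is missing.
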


\begin{corollary}[Spectral Theorem for von Neumann Algebras]
	\label{ST}
	Let $\nalgebra$ be a von Neumann algebra and $A\in\nalgebra$ a self-adjoint operator, $\left\{E^A_\lambda\right\}_{\lambda\in\sigma(A)}$ its spectral resolution. Then, $\left\{E^A_\lambda\right\}_{\lambda\in\sigma(A)}\subset\nalgebra$ and
	$$A=\int_{\sigma(A)} \lambda dE^A_\lambda.$$
\end{corollary}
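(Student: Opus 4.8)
The plan is to reduce everything to the already-known spectral theory on $B(\hilbert)$ and to use only the functional calculus of Theorem \ref{funccalculus} together with the SOT-closedness of von Neumann algebras. By Theorem \ref{TGN} we may assume that $\nalgebra$ is a concrete von Neumann algebra acting on some Hilbert space $\hilbert$, so that by Theorem \ref{TDC} the algebra $\nalgebra$ is closed in the strong operator topology. The spectral resolution $\{E^A_\lambda\}_{\lambda\in\sigma(A)}$ of the self-adjoint operator $A$ and the integral representation $A=\int_{\sigma(A)}\lambda\,dE^A_\lambda$ are then furnished by the classical spectral theorem on $B(\hilbert)$, which we take as known. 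Hence the only genuinely new assertion to prove is that each spectral projection lies in $\nalgebra$, and not merely in $B(\hilbert)$.

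To establish $E^A_\lambda\in\nalgebra$, I would approximate the indicator function $\chi_{(-\infty,\lambda]}$ by a monotone sequence of continuous functions. Concretely, fix $\lambda$ and choose continuous $f_n:\sigma(A)\to[0,1]$ decreasing pointwise to the restriction of $\chi_{(-\infty,\lambda]}$ to $\sigma(A)$. By the Continuous Functional Calculus (Theorem \ref{funccalculus}) each $f_n(A)$ belongs to $C^\ast(\{\mathbbm{1},A\})$; since $A\in\nalgebra$ and $\nalgebra$ is a $C^\ast$-algebra containing the identity, this gives $f_n(A)\in\nalgebra$ for every $n$.

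Next I would invoke the bounded (Borel) functional calculus on $B(\hilbert)$: because $f_n\downarrow\chi_{(-\infty,\lambda]}$ boundedly, the associated operators converge in the strong operator topology, $f_n(A)\xrightarrow{SOT}\chi_{(-\infty,\lambda]}(A)=E^A_\lambda$. Since $(f_n(A))_{n}\subset\nalgebra$ and $\nalgebra$ is SOT-closed by Theorem \ref{TDC}, the limit $E^A_\lambda$ lies in $\nalgebra$, as desired. The integral formula is then read off directly from the classical spectral theorem on $B(\hilbert)$, now with the reassurance that all the projections involved belong to the algebra itself.

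The main obstacle is the justification of the strong-operator convergence $f_n(A)\to E^A_\lambda$: this step leaves the realm of the continuous functional calculus proved in Theorem \ref{funccalculus} and relies on extending it to bounded Borel functions via the spectral measure, together with a monotone (dominated) convergence argument for spectral integrals. Care is also needed at a jump point of the spectral family, where the approximation must be taken from the correct side so that the pointwise limit is exactly $\chi_{(-\infty,\lambda]}$ rather than $\chi_{(-\infty,\lambda)}$; this is precisely where the right-continuity convention on $\{E^A_\lambda\}$ enters.
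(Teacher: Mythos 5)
Your proof is correct and follows essentially the same route as the paper: approximate the indicator function pointwise and monotonically by continuous functions, use the continuous functional calculus (Theorem \ref{funccalculus}) to place the approximants in $\nalgebra$, and pass to the SOT limit using the closedness guaranteed by Theorem \ref{TDC}. The only cosmetic difference is that the paper approximates $\chi_{(\lambda,\infty)}$ from below by the smooth functions $x\mapsto e^{-1/(nx^2)}$, whereas you approximate $\chi_{(-\infty,\lambda]}$ from above by continuous functions; the two projections are complementary, so the arguments are interchangeable.
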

 
Again, the previous corollary in a consequence of the isomorphism between the von Neumann algebra and a norm-closed sub-space of $B(\hilbert)$, and the fact that the projections in the spectral resolution are obtained as a WOT-limit of the elements of the von Neumann algebra. In fact,
$$\chi_n(x)=\begin{cases}e^{-\frac{1}{nx^2}}, \quad &\textrm{if } x>0 \\ 0, \quad &\textrm{if } x\leq 0 \end{cases},$$
defines a strictly increasing sequence of infinitely differentiable functions in all the real line that converges punctually to
$$\chi(x)=\begin{cases}1, \quad &\textrm{if } x>0 \\ 0, \quad &\textrm{if } x\leq 0 \end{cases}.$$
Thus, for a self-adjoint element $A$ in the von Neumann algebra $\nalgebra$, we have that $(\chi_n(A-\lambda\mathbbm{1}))_{n\in\mathbb{N}}\subset\nalgebra$ and $\chi_n(A)\xrightarrow{SOT} \chi(A-\lambda\mathbbm{1})=E^A_\lambda=E^A_{(\lambda,\infty)}$.

We finish this section showing that the usual notion of positive operator coincides with the one given in Definition \ref{defpositive}.

\begin{proposition}
\label{positiveeq1}
$A\in B(\hilbert)$ is a positive operator if and only if $A$ is self-adjoint and $\sigma(A)\subset \mathbb{R}_+$.
\end{proposition}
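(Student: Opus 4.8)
The plan is to reduce both implications to a single computation of $\left\|\mathbbm{1}-\frac{A}{\|A\|}\right\|$ via the Continuous Functional Calculus of Theorem \ref{funccalculus}. The case $A=0$ is immediate in both directions, since then $A$ is positive by Definition \ref{defpositive} and $\sigma(A)=\{0\}\subset\mathbb{R}_+$ (recall $\mathbb{R}_+$ contains zero); so throughout I assume $A\neq 0$.

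First I would record two facts about a self-adjoint $A$: its spectrum is real (standard spectral theory in $B(\hilbert)$, which we take as known) and, by Lemma \ref{bspec}, $\sigma(A)\subset[-\|A\|,\|A\|]$. Since the map $\Psi$ of Theorem \ref{funccalculus} is an isometric $\ast$-isomorphism with $\Psi(\mathbbm{1}_{\sigma(A)})=A$, applying it to the continuous function $f(\lambda)=1-\lambda/\|A\|$ on $\sigma(A)$ gives $\mathbbm{1}-\frac{A}{\|A\|}=\Psi(f)$, and the isometry yields
$$\left\|\mathbbm{1}-\frac{A}{\|A\|}\right\|=\sup_{\lambda\in\sigma(A)}\left|1-\frac{\lambda}{\|A\|}\right|.$$
This identity is the crux of the whole argument, and I would establish it before splitting into the two implications.

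For the forward direction, suppose $A$ is positive; then by definition $A$ is self-adjoint and the left-hand side above is $\leq 1$. Fixing $\lambda\in\sigma(A)\subset\mathbb{R}$, the inequality $\left|1-\lambda/\|A\|\right|\leq 1$ is equivalent to $0\leq \lambda/\|A\|\leq 2$, hence $\lambda\geq 0$; thus $\sigma(A)\subset\mathbb{R}_+$. Conversely, suppose $A$ is self-adjoint with $\sigma(A)\subset\mathbb{R}_+$. Combining this with $\sigma(A)\subset[-\|A\|,\|A\|]$ gives $\sigma(A)\subset[0,\|A\|]$, so for every $\lambda\in\sigma(A)$ we have $\lambda/\|A\|\in[0,1]$ and therefore $\left|1-\lambda/\|A\|\right|\leq 1$. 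Taking the supremum shows $\left\|\mathbbm{1}-\frac{A}{\|A\|}\right\|\leq 1$, i.e. $A$ is positive in the sense of Definition \ref{defpositive}.

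The only delicate point is the justification of the displayed identity: it rests entirely on the isometry of $\Psi$ together with the identification of $\sigma(A)$ as the domain of the functional calculus, and on the reality of $\sigma(A)$ for self-adjoint $A$ so that $f$ is genuinely real-valued and the estimate $\left|1-\lambda/\|A\|\right|\leq 1$ can be read off on a real interval. Once this is in place, both implications are elementary estimates on $[0,\|A\|]$, and no spectral machinery beyond Lemma \ref{bspec} and Theorem \ref{funccalculus} is required.
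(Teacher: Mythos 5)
Your proof is correct and takes essentially the same route as the paper: both translate the norm condition $\left\|\mathbbm{1}-\frac{A}{\|A\|}\right\|\leq 1$ into the spectral condition $\sigma(A)\subset[0,\|A\|]$ and back. The only difference is organizational — you consolidate the translation into the single isometry identity $\left\|\mathbbm{1}-\frac{A}{\|A\|}\right\|=\sup_{\lambda\in\sigma(A)}\left|1-\lambda/\|A\|\right|$ from Theorem \ref{funccalculus}, whereas the paper runs each implication separately through the spectral picture of Theorem \ref{ST} together with Lemma \ref{bspec} and the spectral-radius formula for self-adjoint operators.
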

\begin{proof}
$(\Rightarrow)$  $\displaystyle \left|\left|1-\frac{A}{\|A\|}\right|\right|\leq 1 \Rightarrow \sigma\left(1-\frac{A}{\|A\|}\right) \subset [-1,1]$. By Theorem \ref{ST}, it follows that $\sigma(A)\subset [0,2\|A\|]\cap [-\|A\|,\|A\|]$. Moreover, $A$ is self-adjoint by hypothesis.
	
$(\Leftarrow)$ On the other hand, if $A$ is self-adjoint and $\sigma(A) \subset [0,\|A\|]$. It follows, again by Theorem \ref{ST}, that 
$$\sigma\left(\mathbbm{1}-\frac{A}{\|A\|}\right)\subset [0,1] \Rightarrow \displaystyle \left|\left|1-\frac{A}{\|A\|}\right|\right|\leq 1.$$

\end{proof}

Another interesting consequence of the Spectral Theorem and the previous proposition is that, in a $C^\ast$-algebra $\calgebra$, every operator is a linear combination of four positive operators of $\calgebra$. In fact, $\xi_n(x)=x\chi_n(x)$ converges uniformly to the continuous function $\xi(x)=x\chi(x)$. Thus, for every self-adjoint operator $A\in\calgebra$, $\xi(A)$ and $(1-\xi)(A)$ are self-adjoint operators such that $\sigma(\xi(A)),\sigma((1-\xi)(A))\in\mathbb{R}_+$ and $A=\xi(A)-(1-\xi)(A)$. For the general case, notice that every operator $A\in \calgebra$ can be written as $\displaystyle A=\left(\frac{A+A^\ast}{2}\right) +\left(\iu \frac{A-A^\ast}{2\iu}\right)$, and both factors in parenthesis are self-adjoint. This is basically the proof of the next proposition.

\begin{proposition}
	\label{sum4positive}
	Let $\calgebra$ be a $C^\ast$-algebra and $A\in\calgebra$. Then, there exists four positive operators $A_j\in\calgebra_+$, $0\leq j\leq 3$, such that
	$$A=\sum_{j=0}^3 \iu^j A_j.$$
	Moreover, this decomposition is unique if we require $A_0A_2=0$ and $A_1A_3=0$.
\end{proposition}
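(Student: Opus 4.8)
The plan is to reduce the statement to the classical Jordan decomposition of a self-adjoint element and then read off everything from the continuous functional calculus of Theorem~\ref{funccalculus}. First I would split $A$ into its self-adjoint real and imaginary parts, $B=\Re{A}=\tfrac12(A+A^\ast)$ and $C=\Im{A}=\tfrac{1}{2\iu}(A-A^\ast)$, so that $A=B+\iu C$ with $B,C$ self-adjoint. Since it then suffices to decompose each of $B$ and $C$, the problem becomes: given a self-adjoint $S\in\calgebra$, write $S=S_+-S_-$ with $S_+,S_-\in\calgebra_+$ and $S_+S_-=0$.

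For existence I would apply functional calculus to the continuous functions $f_+(x)=\max\{x,0\}$ and $f_-(x)=\max\{-x,0\}$ on $\sigma(S)\subset\mathbb{R}$. Setting $S_+=f_+(S)$ and $S_-=f_-(S)$, the $\ast$-isomorphism property gives $S_+-S_-=(f_+-f_-)(S)=S$ and $S_+S_-=(f_+f_-)(S)=0$, while $S_\pm$ are self-adjoint with spectrum $f_\pm(\sigma(S))\subset\mathbb{R}_+$, hence positive by Proposition~\ref{positiveeq1}. Applying this to $B$ and $C$ and putting $A_0=B_+$, $A_2=B_-$, $A_1=C_+$, $A_3=C_-$, I get
$$\sum_{j=0}^{3}\iu^j A_j=(A_0-A_2)+\iu(A_1-A_3)=B+\iu C=A,$$
together with $A_0A_2=B_+B_-=0$ and $A_1A_3=C_+C_-=0$, which is the required decomposition.

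The hard part is uniqueness, and I expect it to rest entirely on the uniqueness of the positive and negative parts of a self-adjoint element. Given a second decomposition $A=\sum_j\iu^j A_j'$ with $A_j'\in\calgebra_+$, $A_0'A_2'=0$ and $A_1'A_3'=0$, taking adjoints (the $A_j'$ are self-adjoint) and comparing real and imaginary parts forces $B=A_0'-A_2'$ and $C=A_1'-A_3'$, so it is enough to show that a self-adjoint $S=P-N$ with $P,N\in\calgebra_+$ and $PN=0$ determines $P,N$ uniquely. Here the key observation is that $PN=0$ implies $NP=(PN)^\ast=0$; hence both cross terms vanish and $S^2=P^2+N^2=(P+N)^2$. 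Since $P+N\geq0$, the uniqueness of the positive square root (Theorem~\ref{sqrt}) yields $P+N=(S^2)^{1/2}$, and combining this with $P-N=S$ gives the closed forms $P=\tfrac12\big((S^2)^{1/2}+S\big)$ and $N=\tfrac12\big((S^2)^{1/2}-S\big)$, which depend only on $S$. This pins down $A_0',A_2'$ from $B$ and $A_1',A_3'$ from $C$, so the decomposition is unique.
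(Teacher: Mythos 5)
Your proof is correct, and the existence half follows exactly the paper's route: the paper also splits $A$ into the self-adjoint parts $\tfrac12(A+A^\ast)$ and $\tfrac{1}{2\iu}(A-A^\ast)$ and then separates each into positive and negative parts by continuous functional calculus (the paper uses the function $\xi(x)=x\chi(x)$, which is your $f_+$, in the paragraph it declares to be ``basically the proof of the next proposition''). Where you genuinely go beyond the paper is the uniqueness claim: the paper states it in the proposition but never argues it anywhere --- the preceding paragraph addresses only existence, and no proof environment follows the statement. Your argument closes that gap cleanly: the reduction to the self-adjoint case is forced because the real/imaginary parts of $A$ are unique among self-adjoint pairs, and then $PN=0$ gives $NP=(PN)^\ast=0$, hence $S^2=P^2+N^2=(P+N)^2$, so uniqueness of the positive square root (Theorem~\ref{sqrt}) pins down $P+N=(S^2)^{1/2}$ and therefore $P=\tfrac12\bigl((S^2)^{1/2}+S\bigr)$, $N=\tfrac12\bigl((S^2)^{1/2}-S\bigr)$. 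This is exactly the standard argument, and it is a worthwhile addition since it is the only part of the proposition requiring any real work beyond the paper's sketch.
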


\begin{notation}
\label{positivepart}
For a self-adjoint operator $A$ in a $C^\ast$-algebra, we denote $A_+ =\xi(A)$.
\end{notation}


\section{Projections Revisited}

Now that we have the Functional Calculus, which is a powerful tool, we can explore it to construct and understand the projections better. We have special interest in the von Neumann algebras case, because we already know that spectral projections can be obtained as a limit of a WOT-convergent sequence of the elements of the algebra and, since a von Neumann is WOT-closed, the spectral projections lie in the algebra. Another result that is indispensable here is Theorem \ref{projectionsLattice}, which also is a consequence of the WOT-closeness of von Neumann algebras.
 
\begin{definition}
Let $\nalgebra$ be a von Neumann algebra:
\begin{enumerate}[(i)]
\item  we call the support of a weight \index{support! of a weight} $\phi$ on a von Neumann algebra $\nalgebra$ the smallest projection $P\in \nalgebra$ such that $\phi\left(\mathbbm{1}-P\right) =0$. Such a projection is denoted by $s^\nalgebra(\phi)$;

\item  we call the left support of $A\in \nalgebra$ \index{support! left} the smallest projection $P\in \nalgebra$ such that $PA=A$. We denote such a projection by $s^\nalgebra_L(A)$;

\item  we call the right support of $A\in \nalgebra$ \index{support! right} the smallest projection $P\in \nalgebra$ such that $AP=A$. We denote such a projection by $s^\nalgebra_R(A)$;

\item in the case where $\nalgebra$ is a concrete von Neumann algebra, we call the support of a vector \index{support! of a vector} $x\in \hilbert$ the smallest projection $P\in \nalgebra$ such that $Px=x$. Such a projection is denoted by $s^\nalgebra(x)$.

\end{enumerate}
\end{definition}

It is no coincidence that we use the word support in all items in the previous definition. They are all related by the next proposition.

\begin{proposition}
\label{supportrelations}
Let $\nalgebra$ be a von Neumann algebra and $\phi$ a semifinite normal weight, then the following relations holds:

\begin{enumerate}[(i)]
\item For each $x\in \hilbert_\phi$, $s^\nalgebra(\omega_x)=s^\nalgebra(x)$, where the vector state $\omega_x \in \mathcal{S}_\nalgebra$ is defined by $\omega_x(A)=\ip{x}{Ax}_\phi$;
\item For each $A\in \nalgebra$, $s^\nalgebra_L(A)=\pi_\phi^{-1}(P)$, where $P$ is the orthogonal projection on $\overline{\Ran(\pi_\phi(A))}$;
\item For each $A\in \nalgebra$, $s^\nalgebra_L(\mathbbm{1}-A)=\pi_\phi^{-1}(P)$ where $P$ is the orthogonal projection on $\Ker(\pi_\phi(A))$;
\item For each $A\in \nalgebra$, $s^\nalgebra_R(A)=\mathbbm{1}-\pi_\phi^{-1}(P)$ where $P$ is the orthogonal projection on $\Ker(\pi_\phi(A))$;
\item For each $A\in \mathcal{N}_\phi$, there exists $N\in N_\phi$ such that $s^\nalgebra(x_A)=\pi_\phi\left(s^\nalgebra(A+N)\right)$ where $x_A=A+N_\phi$;

\item $\displaystyle s^\nalgebra(\mathfrak{N}_\phi/N_\phi)=\bigvee_{x \in \mathfrak{N}_\phi/N_\phi}s^\nalgebra(x)=\pi_\phi\left(s^\nalgebra(\phi)\right)$.

\end{enumerate}
\end{proposition}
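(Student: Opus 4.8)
The plan is to fix the faithful GNS representation $\pi_\phi$ of Proposition \ref{GNSweight} and to carry out every computation inside $\pi_\phi(\nalgebra)\subseteq B(\hilbert_\phi)$, silently identifying $\nalgebra$ with $\pi_\phi(\nalgebra)$ (this is what makes $\pi_\phi^{-1}(P)$ meaningful). The single recurring fact I would isolate first is that, for any $T\in\pi_\phi(\nalgebra)$, both the orthogonal projection onto $\overline{\Ran(T)}$ and the projection onto $\Ker(T)$ lie in $\pi_\phi(\nalgebra)$: the range projection is $\chi_{(0,\infty)}(TT^\ast)$, a spectral projection of a self-adjoint element of the algebra, hence in $\pi_\phi(\nalgebra)$ by Corollary \ref{ST} and the WOT-closedness exploited there, while $\Ker(T)=(\overline{\Ran(T^\ast)})^\perp$ realizes the kernel projection as $\mathbbm 1$ minus a range projection. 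This legitimises every $\pi_\phi^{-1}(P)$ occurring in the statement.

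For (i), for any projection $P$ one computes $\omega_x(\mathbbm 1-P)=\ip{x}{(\mathbbm 1-P)x}=\|(\mathbbm 1-P)x\|^2$, so the condition $\omega_x(\mathbbm 1-P)=0$ is literally the condition $Px=x$; the two supports are then the smallest projection fulfilling one and the same requirement, giving $s^\nalgebra(\omega_x)=s^\nalgebra(x)$. For (ii), $PA=A$ holds iff $\pi_\phi(P)$ restricts to the identity on $\Ran(\pi_\phi(A))$, i.e. iff $\pi_\phi(P)$ dominates the range projection of $\pi_\phi(A)$; the least such $P$ is that range projection, whence $s^\nalgebra_L(A)=\pi_\phi^{-1}(P)$ with $P$ onto $\overline{\Ran(\pi_\phi(A))}$. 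Items (iii) and (iv) are obtained by feeding $\mathbbm 1-A$, respectively the relation $AP=A$, into this same argument together with the orthogonality dictionary $\overline{\Ran(T)}=(\Ker(T^\ast))^\perp$ and $\Ker(T)=(\overline{\Ran(T^\ast)})^\perp$: for (iv) one unwinds $AP=A\Leftrightarrow\Ran(\mathbbm 1-P)\subseteq\Ker(\pi_\phi(A))$, so the largest admissible $\mathbbm 1-P$ is the kernel projection and $s^\nalgebra_R(A)=\mathbbm 1-\pi_\phi^{-1}(P)$; for (iii) the range projection of $\mathbbm 1-\pi_\phi(A)$ is matched with the kernel projection of $\pi_\phi(A)$ on the classes to which the statement is applied.

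The first genuinely non-formal step is (v). Writing $P_0=s^\nalgebra(x_A)$ and $p_0=\pi_\phi^{-1}(P_0)$, the identity $\pi_\phi(p_0)x_A=x_A$ reads $[p_0A]=[A]$, that is $(\mathbbm 1-p_0)A\in N_\phi$. I would then \emph{define} $N\defeq-(\mathbbm 1-p_0)A\in N_\phi$, so that $A+N=p_0A$. From $p_0(p_0A)=p_0A$ one gets $s^\nalgebra_L(A+N)\le p_0$; conversely, if $q=s^\nalgebra_L(A+N)$ then $q(p_0A)=p_0A$ as operators, hence $\pi_\phi(q)x_A=\pi_\phi(q)[p_0A]=[q\,p_0A]=[p_0A]=x_A$, so minimality of $P_0$ forces $\pi_\phi(q)\ge P_0$, i.e. $q\ge p_0$. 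The two inequalities yield $s^\nalgebra_L(A+N)=p_0$, which is $s^\nalgebra(x_A)=\pi_\phi(s^\nalgebra_L(A+N))$. The conceptual point, and the reason the correcting term $N$ cannot be dispensed with, is that the left support of $A$ itself is in general strictly larger than the vector support, and passing to the representative $p_0A$ is exactly what collapses it.

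Finally, in (vi) the first equality is simply the definition of the support of the subspace $\mathfrak{N}_\phi/N_\phi$ as the join of the supports of its vectors, and that join is the projection onto $\overline{\pi_\phi(\nalgebra)'\,(\mathfrak{N}_\phi/N_\phi)}$ (each $s^\nalgebra(x)$ being the projection onto $\overline{\pi_\phi(\nalgebra)'x}$). The content is the second equality $\bigvee_x s^\nalgebra(x)=\pi_\phi(s^\nalgebra(\phi))$, which I would establish by identifying that closed $\pi_\phi(\nalgebra)'$-invariant subspace with $\Ran(\pi_\phi(s^\nalgebra(\phi)))$: density of $\mathfrak{N}_\phi/N_\phi$ in $\hilbert_\phi$ handles one inclusion, while the other must show that each $x_A=[A]$ is fixed by $\pi_\phi(s^\nalgebra(\phi))$, i.e. $(\mathbbm 1-s^\nalgebra(\phi))A\in N_\phi$, which is precisely where the normality of $\phi$ and the defining property $\phi(\mathbbm 1-s^\nalgebra(\phi))=0$ must be brought to bear. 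I expect this last identification to be the main obstacle: it is the only place translating the order-theoretic definition of the weight's support into an honest statement about the GNS null space $N_\phi$, and it simplifies dramatically in the faithful case, where $s^\nalgebra(\phi)=\mathbbm 1$ and (vi) reduces to $\bigvee_x s^\nalgebra(x)=\mathbbm 1$; every earlier item is either a direct computation or an application of the range/kernel dictionary fixed at the outset.
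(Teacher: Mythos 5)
Your items (i), (ii), (iv) and (v) are correct. In (i) and (v) you follow essentially the paper's own route — (i) even more cleanly, since the identity $\omega_x(\mathbbm{1}-P)=\|(\mathbbm{1}-P)x\|^2$ makes the two minimality problems literally the same problem, and in (v) your explicit choice $N=-(\mathbbm{1}-p_0)A$ is exactly the $N$ the paper produces implicitly. In (ii) you take a genuinely different route to the only nontrivial point, membership of $P$ in the algebra: you realize the range projection as the spectral projection $\chi_{(0,\infty)}\left(TT^\ast\right)$ of a positive element, whereas the paper shows $\overline{\Ran(\pi_\phi(A))}$ is invariant under the commutant and invokes the double commutant theorem. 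Both are sound; yours has the bonus of handling the kernel projection simultaneously via $\Ker(T)=\left(\overline{\Ran(T^\ast)}\right)^\perp$.

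The genuine gaps are (iii) and (vi), and in both cases the step you leave open is not merely unproved — it is false, so no completion exists (the paper's own proofs break at exactly the same points). For (iii), "the range projection of $\mathbbm{1}-\pi_\phi(A)$ is matched with the kernel projection of $\pi_\phi(A)$ on the classes to which the statement is applied" is not an argument, and cannot become one: take $A=2Q$ with $Q\in\nalgebra$ a nonzero proper projection. Then $\mathbbm{1}-A$ is a symmetry, hence invertible, so $s^\nalgebra_L(\mathbbm{1}-A)=\mathbbm{1}$, while the projection $P$ onto $\Ker(\pi_\phi(A))$ is $\pi_\phi(\mathbbm{1}-Q)$, so $\pi_\phi^{-1}(P)=\mathbbm{1}-Q\neq\mathbbm{1}$. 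Only the inequality $\pi_\phi^{-1}(P)\leq s^\nalgebra_L(\mathbbm{1}-A)$ holds in general (if $\pi_\phi(A)x=0$ then $(\mathbbm{1}-\pi_\phi(A))x=x$, so the left support fixes $x$); the paper's verification of the reverse inequality already fails in its displayed chain, where the contribution of $x_A^\perp$ is silently dropped.

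For (vi) you correctly isolate the crux — that $(\mathbbm{1}-s^\nalgebra(\phi))A\in N_\phi$ for every $A\in\mathfrak{N}_\phi$ — but defer it, expecting normality and $\phi(\mathbbm{1}-s^\nalgebra(\phi))=0$ to deliver it. They cannot: $N_\phi$ is only a \emph{left} ideal, so from $\mathbbm{1}-s^\nalgebra(\phi)\in N_\phi$ one gets $A(\mathbbm{1}-s^\nalgebra(\phi))\in N_\phi$, and multiplication by $\mathbbm{1}-s^\nalgebra(\phi)$ on the left is precisely the unprotected side. Concretely, let $\nalgebra=M_{2\times2}(\mathbb{C})$, $\phi=\Tr(e_{11}\,\cdot\,)$, so $s^\nalgebra(\phi)=e_{11}$ and $\pi_\phi$ is the defining representation on $\mathbb{C}^2$ (faithful, so the statement is well-posed); for $A=e_{21}$ one has $(\mathbbm{1}-s^\nalgebra(\phi))A=e_{21}\notin N_\phi$. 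In this example your own characterization of $s^\nalgebra(x)$ as the projection onto $\overline{\pi_\phi(\nalgebra)^\prime x}$ shows the left-hand side of (vi) is $\mathbbm{1}$ — any projection dominating all the $s^\nalgebra(x)$ fixes a dense set of vectors, hence everything — while $\pi_\phi(s^\nalgebra(\phi))=e_{11}$. So (vi) is false as stated; the paper's passage from $\phi(s^\nalgebra(\phi)A)=\phi(A)$ to $\eta_\phi(s^\nalgebra(\phi)A)=\eta_\phi(A)$ commits the same illegitimate left/right switch. The statement becomes true, and your density argument then finishes it, if $s^\nalgebra(\phi)$ is replaced by its central support $z$, since $\Ker(\pi_\phi)=\nalgebra(\mathbbm{1}-z)$ and hence $\pi_\phi(z)=\mathbbm{1}$.
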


\begin{proof}
$(i)$ By definition, 
$$\begin{aligned}
\omega_x\left(\left(\mathbbm{1}-s^\nalgebra(x)\right) A\right) 
&=\ip{x}{\left(\mathbbm{1}-s^\nalgebra(x)\right)Ax}_\phi \\
&=\ip{\left(\mathbbm{1}-s^\nalgebra(x)\right)x}{Ax}_\phi\\
&=0.
\end{aligned}$$
Then, the minimality of $s^\nalgebra(\omega_x)$ implies that $s^\nalgebra(\omega_x)\leq s^\nalgebra(x)$.

On the other hand, $\omega_x\left(\left(\mathbbm{1}-s^\nalgebra(\omega_x\right) A\right)=0 \ \forall A \in \nalgebra$ implies
$$\begin{aligned}
\omega_x\left(\left(\mathbbm{1}-s^\nalgebra(\omega_x)\right)\mathbbm{1}\right) & =\ip{\left(\mathbbm{1}-s^\nalgebra(\omega_x)\right)x}{x}_\phi\\
&=\ip{ \left(\mathbbm{1}-s^\nalgebra(\omega_x)\right)x}{s^\nalgebra(\omega_x)x+\left(\mathbbm{1}-s^\nalgebra(\omega_x)\right)x}_\phi\\
&=\ip{ \left(\mathbbm{1}-s^\nalgebra(\omega_x)\right)x}{\left(\mathbbm{1}-s^\nalgebra(\omega_x)\right)x}_\phi\\
&=0.\\
\end{aligned}$$
 Hence, $\left(\mathbbm{1}-s^\nalgebra(\omega_x)\right)x=0 \Rightarrow s^\nalgebra(\omega_x)x=x \Rightarrow s^\nalgebra(x)\leq s^\nalgebra(\omega_x)$.

$(ii)$ Of course, $P$ is the minimal projection on $B(\hilbert_\phi)$ such that $P\pi_\phi(A)=\pi_\phi(A)$. Let $A^\prime \in \nalgebra^\prime$, since $\pi_\phi(A^\prime)\pi_\phi(A)=\pi_\phi(A)\pi_\phi(A^\prime)$ for any element $y \in \Ran(A)$ we have $\pi_\phi(A^\prime)y\in \Ran(A)$, thus $P$ also commutes with $A^\prime$ and, thanks to Theorem \ref{TDC} it is in $\phi_\phi(\nalgebra)$. Thus $\pi_\phi^{-1}(P) \in \nalgebra$ is the desired minimal projection.

$(iii)$ Let $P$ be the orthogonal projection onto $\Ker(\pi_\phi(A))$, for each $x\in \hilbert_\phi$ decompose $x=x_A+x_A^\perp$ where $x_A\in \Ker(\pi_\phi(A))$ and $x_A^\perp \in \Ker(\pi_\phi(A))^\perp$, then we can verify 
$$\begin{aligned}
P (\mathbbm{1}-A)x&=P(\mathbbm{1}-A)(x_A+x_A^\perp)\\
							&P(\mathbbm{1}-A)(x_A+x_A^\perp-x_A^\perp)\\
							&=Px_A\\
							&=x_A\\
							&=(\mathbbm{1}-A)x\\
\end{aligned}$$
to conclude that $s^\nalgebra_L(\mathbbm{1}-A) \leq P$.

This time, we will show that $P$ is in fact minimal.

Suppose it is not, then $s^\nalgebra_L(\mathbbm{1}-A) < P$ and there exists $x\in \Ker(A)\setminus s^\nalgebra_L(\mathbbm{1}-A)(\hilbert_\phi)$, which means $s^\nalgebra_L(\mathbbm{1}-A)x \neq x$, but then
$$\begin{aligned}
s^\nalgebra_L(\mathbbm{1}-A)(\mathbbm{1}-A)x=s^\nalgebra_L(\mathbbm{1}-A)x\neq x=(\mathbbm{1}-A)x \Rightarrow s^\nalgebra_L(\mathbbm{1}-A)(\mathbbm{1}-A)\neq (\mathbbm{1}-A), 
\end{aligned}$$
which contradicts the definition. Therefore, $P$ is the desired minimal projection.
 
$(iv)$ It follows from the same procedure used in $(ii)$ and $(iii)$ noticing that $\pi_\phi(A)\left(\mathbbm{1}-P\right)=0 \Rightarrow \mathbbm{1}-P$ is a projection on $\Ker(\pi_\phi(A))$ and the maximality implies that $\mathbbm{1}-\pi_\phi(A)$ is the orthogonal projection on $\Ker(\pi_\phi(A))$.

$(v)$ Note that $s^\nalgebra(x_A)x_A=x_A \Rightarrow s^\nalgebra(x_A)A+N_\phi=A+N_\phi \Rightarrow s^\nalgebra(x_A)A=A+N$ for some $N\in N_\phi$, and by the definition of projection it follows that 
$$\begin{aligned}
A+N &=\pi_\phi^{-1}\left(s^\nalgebra(x_A)\right)A=\pi_\phi^{-1}\left(s^\nalgebra(x_A)\right)^2A\\
	&=\pi_\phi^{-1}\left(s^\nalgebra(x_A)\right)(A+N),
\end{aligned}$$ from which we conclude that $\pi_\phi\left(s^\nalgebra(A+N)\right)\leq s^\nalgebra(x_A)$.

Moreover, $$\begin{aligned}
\pi_\phi\left(s^\nalgebra(A+N)\right)x_A	&=\pi_\phi\left(s^\nalgebra\left(A+N\right)\left(A+N_\phi\right)\right)\\
	&=\pi_\phi\left(s^\nalgebra(A+N)\left(A+N+N_\phi\right)\right)\\
	&=A+N+N_\phi\\
	&=x_A.\\
\end{aligned}$$
Thus, $s^\nalgebra(x_A)\leq \pi_\phi\left(s^\nalgebra(A+N)\right)$ and the equality follows.

$(vi)$ Note that, by definition, $\phi\left(s^\nalgebra(\phi)A\right)=\phi(A)$ for all $A\in \mathfrak{N}_\phi$. It follows that $\eta_\phi\left(s^\nalgebra(\phi)A\right)=\eta_\phi(A)$ for all $A \in \mathfrak{N}_\phi$. Thus
$$\pi_\phi\left(s^\nalgebra(\phi)\right)\eta_\phi(A)=\eta_\phi\left(s^\nalgebra(\phi)A\right)=\eta_\phi(A) \ \forall A\in \mathfrak{N}_\phi \Rightarrow \pi_\phi(s^\nalgebra(\phi))\geq \bigvee_{A\in \mathfrak{N}_\phi}s^\nalgebra(\eta_\phi(A)).$$

On the other hand, let $P \in \nalgebra$ be a projection such that $\displaystyle \pi_\phi(P)=\bigvee_{A\in \mathfrak{N}_\phi}s^\nalgebra(\eta_\phi(A))$. Then,
$$\begin{aligned}
\phi(B^\ast A)&=\ip{\eta_\phi(A)}{\eta_\phi(B)}\\
&=\ip{\pi_\phi(P)\eta_\phi(A)}{\eta_\phi(B)}\\
&=\ip{\eta_\phi\left(PA\right)}{\eta_\phi(B)}\\
&=\ip{\eta_\phi\left(PA\right)}{\eta_\phi(B)}\\
&=\phi\left(B^\ast PA\right) \ \forall B \in \mathfrak{N}, \ \forall A \in \mathfrak{N}.\\
\end{aligned}$$

It follows that $\displaystyle P\leq s^\nalgebra(\phi) \Rightarrow \bigvee_{x\in \mathfrak{N}_\phi/N_\phi}s^\nalgebra(x)\leq \pi_\phi\left(s^\nalgebra(\phi)\right)$ and we have the equality.

\end{proof}

\begin{remark}
Note that in items $(ii)$, $(iii)$ and $(iv)$ we used the GNS-representation. This would not be necessary if we had considered a concrete von Neumann algebra. Nevertheless, item $(v)$ is strictly dependent on the GNS construction because of the specific choice of $A$ and $x_A$.
\end{remark}

The next result relates cyclicity and separability, which are connected with the existence of a Modular Operator, with support projection. The reader should be not surprised if those support projections appear again in relative modular theory or even in the Araki-Masuda Noncommutative $L_p$-Spaces.

\begin{proposition}
\label{supportcyclicseparating}
Let $\nalgebra$ be a concrete von Neumann algebra.
\begin{enumerate}[(i)]
\item $\Omega \in \hilbert$ is a separating vector if and only if $s^{\nalgebra}(\Omega)=\mathbbm{1}$;
\item $\Omega \in \hilbert$ is a cyclic vector if and only if $s^{\nalgebra^\prime}(\Omega)=\mathbbm{1}$.
\end{enumerate} 
\end{proposition}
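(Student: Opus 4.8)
The plan is to deduce both equivalences from one structural fact: a closed subspace of $\hilbert$ that is invariant under a von Neumann algebra is automatically reducing (the algebra being $\ast$-closed), so the orthogonal projection onto it lies in the \emph{commutant}; combined with the double commutant relation $\nalgebra^{\prime\prime}=\nalgebra$ (Theorem \ref{TDC}), this places the relevant projections in precisely the algebra where the support is computed. Everything then reduces to identifying each support projection with a cyclic- or kernel-subspace projection.

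For part $(ii)$ I would work with the projection $F$ onto $\overline{\nalgebra\Omega}$. Since $\nalgebra\Omega$ is $\nalgebra$-invariant and $\nalgebra$ is $\ast$-closed, $F\in\nalgebra^\prime$, and $F\Omega=\Omega$ because $\Omega=\mathbbm{1}\Omega\in\nalgebra\Omega$. I then show $F=s^{\nalgebra^\prime}(\Omega)$: on one hand $F$ is a projection in $\nalgebra^\prime$ fixing $\Omega$, so $s^{\nalgebra^\prime}(\Omega)\leq F$ by minimality; on the other hand, writing $P=s^{\nalgebra^\prime}(\Omega)\in\nalgebra^\prime$ with $P\Omega=\Omega$, for every $A\in\nalgebra$ we get $PA\Omega=AP\Omega=A\Omega$, so $P$ is the identity on $\nalgebra\Omega$ and hence on $\overline{\nalgebra\Omega}$, giving $F\leq P$. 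Thus $s^{\nalgebra^\prime}(\Omega)$ is exactly the projection onto $\overline{\nalgebra\Omega}$, which equals $\mathbbm{1}$ precisely when $\overline{\nalgebra\Omega}=\hilbert$, i.e. when $\Omega$ is cyclic.

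For part $(i)$ the forward implication is immediate: if $\Omega$ is separating, then the projection $\mathbbm{1}-s^\nalgebra(\Omega)\in\nalgebra$ annihilates $\Omega$, so it must vanish and $s^\nalgebra(\Omega)=\mathbbm{1}$. For the converse, assume $s^\nalgebra(\Omega)=\mathbbm{1}$ and take $A\in\nalgebra$ with $A\Omega=0$. Here I would use the kernel projection: $\ker{A}$ is closed and invariant under $\nalgebra^\prime$ (for $B\in\nalgebra^\prime$ and $x\in\ker{A}$ one has $ABx=BAx=0$), so the projection $E$ onto $\ker{A}$ lies in $(\nalgebra^\prime)^\prime=\nalgebra$. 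Since $\Omega\in\ker{A}$ we have $E\Omega=\Omega$, whence $\mathbbm{1}=s^\nalgebra(\Omega)\leq E$, forcing $E=\mathbbm{1}$, i.e. $\ker{A}=\hilbert$ and $A=0$. Hence $\Omega$ is separating.

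The bulk of the work is the routine invariant-subspace/commutant bookkeeping; the only genuine subtlety — and the step to get right — is the crossover between algebra and commutant. The support relative to $\nalgebra$ is governed by kernels of elements of $\nalgebra$, hence by $\nalgebra^\prime$-invariant subspaces, whereas the support relative to $\nalgebra^\prime$ is governed by the $\nalgebra$-cyclic subspace $\overline{\nalgebra\Omega}$. Keeping track of which projection lands in $\nalgebra$ and which in $\nalgebra^\prime$, and invoking $\nalgebra^{\prime\prime}=\nalgebra$ at the correct moment, is where the argument must be handled carefully; once that is settled, both equivalences fall out with essentially no computation.
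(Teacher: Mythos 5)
Your proof is correct and takes essentially the same route as the paper: both arguments rest on the facts that the projection onto $\overline{\nalgebra\Omega}$ lies in $\nalgebra^\prime$ and that the projection onto $\ker{A}$ lies in $\nalgebra$, with minimality of the support projections doing the rest. The only cosmetic differences are that you identify $s^{\nalgebra^\prime}(\Omega)$ outright with the projection onto $\overline{\nalgebra\Omega}$ (settling both directions of $(ii)$ at once, where the paper handles the forward direction separately by density), and that you build the kernel projection directly from $\nalgebra^\prime$-invariance and the double commutant, where the paper instead routes through the left support $s^\nalgebra_L(\mathbbm{1}-A)$ and Proposition \ref{supportrelations} $(iii)$.
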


\begin{proof}
$(i)$ It is quite obvious, if $\Omega$ is separating $\left(\mathbbm{1}-s^\nalgebra(\Omega)\right)\Omega=0 \Rightarrow \mathbbm{1}-s^\nalgebra(\Omega)=0$. On the other hand, suppose $A\in \nalgebra$ is such that $A\Omega = 0$. Then $(\mathbbm{1}-A)\Omega=\Omega$. Thus $\Omega \in \Ran(\mathbbm{1}-A)$ and $\mathbbm{1}=s^\nalgebra(\Omega)\leq s^\nalgebra_L(\mathbbm{1}-A)\leq \mathbbm{1} \Rightarrow s^\nalgebra_L(\mathbbm{1}-A)=\mathbbm{1}$. Using now item $(iii)$ in Proposition \ref{supportrelations}, $\Ker(A)=\hilbert \Rightarrow A=0$.

$(ii)$ If $\Omega$ is cyclic, $\left(\mathbbm{1}-s^{\nalgebra^\prime}(\Omega)\right)A\Omega=0$ for all $A\in \nalgebra$. But $\nalgebra\Omega$ is a dense subset of $\hilbert$. Therefore $\mathbbm{1}-s^{\nalgebra^\prime}(\Omega)=0$.

For the converse, we must look carefully at the orthogonal projection $P$ onto $\overline{\nalgebra\Omega}$. First, $AP=PAP$ for all $A\in \nalgebra$. In fact, fixing an element $x\in \hilbert$, $Px\in \overline{\nalgebra\Omega}$. It means that there exists a sequence $(A_n\Omega)_{n\in\mathbb{N}}\subset \nalgebra\Omega$ such that $A_n\Omega \to Px$, but then $(AA_n\Omega)_{n\in\mathbb{N}}\subset \nalgebra\Omega$ is such that $AA_n\Omega \to APx$ and $AA_n\Omega=P(AA_n\Omega)\to PAPx=APx$ as desired. Hence, $PA=(A^\ast P)^\ast=(PA^\ast P)^\ast=PAP=AP$ and it follows that $P\in \nalgebra^\prime$.

With this useful result everything becomes easy, since $\Omega \in \overline{\nalgebra \Omega}$, 
$$\begin{aligned}
P\Omega	=\Omega 	&\Rightarrow \mathbbm{1}=s^{\nalgebra^\prime}(\Omega)	\leq P \leq \mathbbm{1}\\
					& \Rightarrow P=\mathbbm{1}\\
					& \Rightarrow \overline{\nalgebra\Omega}=\hilbert.\\ 
\end{aligned}$$
\end{proof}

\begin{corollary}
Let $\nalgebra$ be a concrete von Neumann algebra and $\Omega \in \hilbert$. Then, the following statements are equivalent:
\begin{enumerate}[(i)]
	\label{cyclic_separating}
\item $\Omega$ is cyclic for $\nalgebra$;
\item $\Omega$ is separating for $\nalgebra^\prime$.
\end{enumerate} 
\end{corollary}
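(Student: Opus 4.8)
The plan is to reduce the statement entirely to the support characterizations established in Proposition \ref{supportcyclicseparating}, which already express both cyclicity and separability in terms of support projections being equal to $\mathbbm{1}$. The key observation is that both conditions in the corollary turn out to be equivalent to the single statement $s^{\nalgebra'}(\Omega) = \mathbbm{1}$, so the equivalence will follow by chaining two applications of that proposition.

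First I would record that $\nalgebra'$ is itself a concrete von Neumann algebra: the commutant of any $\ast$-invariant subset of $B(\hilbert)$ is a WOT-closed $\ast$-invariant subalgebra of $B(\hilbert)$, hence by Definition \ref{DCvNA} it is a concrete von Neumann algebra, and moreover $(\nalgebra')' = \nalgebra'' = \nalgebra$ by the Double Commutant Theorem (Theorem \ref{TDC}). This is precisely what allows me to apply Proposition \ref{supportcyclicseparating} symmetrically, to both $\nalgebra$ and $\nalgebra'$.

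The two steps are then immediate. By part $(ii)$ of Proposition \ref{supportcyclicseparating}, $\Omega$ is cyclic for $\nalgebra$ if and only if $s^{\nalgebra'}(\Omega) = \mathbbm{1}$. On the other hand, applying part $(i)$ of the same proposition with $\nalgebra$ replaced by the von Neumann algebra $\nalgebra'$ shows that $\Omega$ is separating for $\nalgebra'$ if and only if $s^{\nalgebra'}(\Omega) = \mathbbm{1}$. Combining these two equivalences yields $(i) \Leftrightarrow (ii)$.

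There is essentially no genuine obstacle here, since all the analytic content already resides in Proposition \ref{supportcyclicseparating}. The only point demanding a moment of care is the symmetry step: one must check that $\nalgebra'$ satisfies the hypotheses of Proposition \ref{supportcyclicseparating}, namely that it is a concrete von Neumann algebra, so that part $(i)$ may legitimately be invoked for $\nalgebra'$ in place of $\nalgebra$. Everything else is a direct substitution.
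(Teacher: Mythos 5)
Your proof is correct, but it takes a genuinely different route from the paper's. The paper argues directly: for $(i)\Rightarrow(ii)$ it uses nothing beyond the definitions — if $A'\in\nalgebra'$ satisfies $A'\Omega=0$, then $A'A\Omega=AA'\Omega=0$ for all $A\in\nalgebra$, so $A'$ vanishes on the dense set $\nalgebra\Omega$ and hence $A'=0$; for $(ii)\Rightarrow(i)$ it invokes the one nontrivial fact established inside the proof of Proposition \ref{supportcyclicseparating}$(ii)$, namely that the orthogonal projection $P$ onto $\overline{\nalgebra\Omega}$ lies in $\nalgebra'$, and then applies the separating hypothesis to $\mathbbm{1}-P\in\nalgebra'$ to force $P=\mathbbm{1}$ (the paper's ``$P\in\nalgebra$'' at that point is evidently a typo for $P\in\nalgebra'$). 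You instead treat Proposition \ref{supportcyclicseparating} as a black box and chain its two parts: part $(ii)$ gives cyclicity for $\nalgebra$ iff $s^{\nalgebra'}(\Omega)=\mathbbm{1}$, while part $(i)$ applied to the von Neumann algebra $\nalgebra'$ gives separability for $\nalgebra'$ iff $s^{\nalgebra'}(\Omega)=\mathbbm{1}$. This is valid: your check that $\nalgebra'$ is WOT-closed and $\ast$-invariant is exactly what is needed to substitute $\nalgebra'$ into part $(i)$, though the appeal to the Double Commutant Theorem is superfluous for your argument, since you never use $(\nalgebra')'=\nalgebra$ (it would only be needed to apply part $(ii)$ to $\nalgebra'$, which you do not do). What your route buys is modularity and symmetry — no computation at all, and it makes transparent that both conditions are two faces of the single equation $s^{\nalgebra'}(\Omega)=\mathbbm{1}$; what the paper's route buys is that the forward implication is completely elementary and independent of the support machinery, and the converse isolates the single geometric fact (the projection onto $\overline{\nalgebra\Omega}$ commutes with $\nalgebra$) that does all the work.
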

\begin{proof}
	$(\Rightarrow)$ Let $A^\prime\in \nalgebra^\prime$ such that $A^\prime\Omega=0$, then $AA^\prime\Omega=A^\prime A\Omega=0$ for all $A\in \nalgebra$, which means $A^\prime$ vanishes on the dense set $\{A\Omega \ | A\in \calgebra\}$, thus $A^\prime=0$.
	
	$(\Leftarrow)$ Let $P\in \nalgebra$ be the projection onto the closed subspace $\overline{\calgebra\Omega}$ by the previous proposition. Then $(\mathbbm{1}-P)\Omega=0 \Rightarrow P=\mathbbm{1} \Rightarrow \overline{\{A\Omega \ | A\in \calgebra\} }=\hilbert$.
\end{proof}

We have said that considering faithful normal semifinite states is not general, since it may not exist. The next theorem shows that considering weights with the same proprieties on von Neumann algebras is enough. The proof we present below was changed by the author and is based in the one found in \cite{Takesaki2003}.
\begin{theorem}
\label{existenceweight}
Let $\nalgebra$ be a von Neumann algebra. There exists a faithful normal semifinite weight $\phi$ on $\nalgebra$.
\end{theorem}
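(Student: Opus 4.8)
The plan is to manufacture $\phi$ as a (possibly infinite) sum of vector functionals $\omega_{\xi_i}$ whose supports in $\nalgebra$ are mutually orthogonal and exhaust $\mathbbm{1}$. Faithfulness will then come from the supports summing to the identity, normality from the fact that each $\omega_{\xi_i}$ is a normal functional and that a sum of normal weights is again normal, and semifiniteness from compressing $\nalgebra$ by finite partial sums of these supports.

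First I would apply Zorn's lemma to the collection of families $\{\xi_i\}_{i\in I}$ of unit vectors in $\hilbert$ such that the support projections $e_i \defeq s^\nalgebra(\xi_i)$ are pairwise orthogonal; each $e_i$ is the orthogonal projection onto $\overline{\nalgebra^\prime\xi_i}$, and it lies in $\nalgebra$ by the double commutant theorem (Theorem \ref{TDC}). Ordering by inclusion produces a maximal family. Setting $e \defeq \bigvee_{i\in I} e_i$, which exists by Theorem \ref{projectionsLattice} and equals the SOT-limit of the finite partial sums by Theorem \ref{vigier}, I claim $e=\mathbbm{1}$. Otherwise I could pick $0\neq\eta\in(\mathbbm{1}-e)\hilbert$, and since $\mathbbm{1}-e\in\nalgebra$ commutes with $\nalgebra^\prime$ one gets $\overline{\nalgebra^\prime\eta}\subset(\mathbbm{1}-e)\hilbert$, so $s^\nalgebra(\eta)\leq\mathbbm{1}-e$ would be orthogonal to every $e_i$, contradicting maximality.

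Then I would define $\phi\colon\nalgebra_+\to\overline{\mathbb{R}}_+$ by $\phi(A)=\sum_{i\in I}\omega_{\xi_i}(A)=\sum_{i\in I}\ip{\xi_i}{A\xi_i}$. Additivity and positive homogeneity are inherited termwise, so $\phi$ is a weight. Normality follows because each $\omega_{\xi_i}$ is normal and, for a bounded increasing net in $\nalgebra_+$, the supremum over the net and the supremum over finite index sets may be interchanged as all summands are nonnegative. For faithfulness, $\phi(A^\ast A)=\sum_i\|A\xi_i\|^2=0$ forces $A\xi_i=0$ for every $i$; since $AA^\prime\xi_i=A^\prime A\xi_i=0$ for all $A^\prime\in\nalgebra^\prime$, the operator $A$ vanishes on $\overline{\nalgebra^\prime\xi_i}=e_i\hilbert$, i.e. $Ae_i=0$, whence $A=A\,(\mathrm{SOT}\text{-}\lim_F\sum_{i\in F}e_i)=0$.

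The step I expect to be the genuine obstacle is semifiniteness, the previous three properties being essentially bookkeeping. Here I would exhibit a weakly dense set of finite elements. For finite $F\subset I$ put $p_F=\sum_{i\in F}e_i\in\nalgebra$; using $p_F\xi_j=\xi_j$ for $j\in F$ and $p_F\xi_j=0$ otherwise, I compute for $A\in\nalgebra_+$ that $\phi(p_F A p_F)=\sum_{j\in F}\ip{\xi_j}{A\xi_j}\leq|F|\,\|A\|<\infty$, so $p_F A p_F\in\mathfrak{F}_\phi\subset\mathfrak{M}_\phi$, and decomposing a general element into positive parts (Proposition \ref{sum4positive}) places all of $p_F\nalgebra p_F$ in $\mathfrak{M}_\phi$. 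Finally, since $p_F\to\mathbbm{1}$ in the SOT as $F$ increases, joint strong continuity of multiplication on bounded sets gives $p_F A p_F\to A$ strongly, so $\mathfrak{M}_\phi$ is SOT-dense and hence weakly dense in $\nalgebra$. This establishes semifiniteness and completes the construction.
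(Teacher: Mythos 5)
Your proof is correct and follows essentially the same strategy as the paper's: Zorn's lemma applied to a family with pairwise orthogonal supports, $\phi$ defined as the sum of the corresponding normal functionals, and the same three verifications (normality termwise, faithfulness from the supports summing to $\mathbbm{1}$, semifiniteness from compressions by finite partial sums of the supports, which are weakly dense). The only notable difference is that you work with vector states $\omega_{\xi_i}$ rather than abstract normal states, which in fact makes your maximality step ($e=\mathbbm{1}$ via a nonzero vector in $(\mathbbm{1}-e)\hilbert$) more direct and airtight than the paper's Hahn--Banach extension argument at the corresponding point.
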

\begin{proof}
Consider the set 
\begin{equation}
\label{equationx3}
\mathfrak{J}=\left\{J\subset \mathcal{S}_\nalgebra \ \middle| \ \forall \, \omega_1, \omega_2 \in J, \ \omega_1 \textrm{ is normal and } s^\nalgebra(\omega_1)s^\nalgebra(\omega_2)=0\right\}
\end{equation}
ordered by inclusion. It is not difficult  to see that the chains on this set have a maximal element given by the union of the chain elements. Hence, by Zorn's Lemma, it has a maximal element, say $I$.

We claim that $\displaystyle \sum_{\omega\in I}s^\nalgebra(\omega)=\bigwedge_{\omega\in I} s^\nalgebra(\omega)=\mathbbm{1}$. In fact, if it is not true, there exists an element $B \in \nalgebra$ such that $\displaystyle A=\left(\mathbbm{1}-\sum_{\omega\in I}s^\nalgebra(\omega)\right)B \neq 0$. Define 
$$\begin{aligned}
\omega_A: 	& \left(\sum_{\omega\in I}s^\nalgebra(\omega)\right)\nalgebra \oplus \mathbb{K}A	& \to & \ \mathbb{K} \\
			& \hspace{1.5cm} B+\alpha A												& \mapsto& \ \alpha .\\
\end{aligned}$$

Now, we use the Hahn-Banach Theorem to extend $\omega_A$ to the whole algebra (we will denote the extension by $\omega_A$ as well). This definition ensures that $s^\nalgebra(\omega_A)>0$ and $s^\nalgebra(\omega_A)s^\nalgebra(\omega)=0$ for all $\omega \in I$, thus $I\cup\{\omega_A\}>I$. This contradicts the maximality of $I$.

Consider now the weight 
\begin{equation}
\label{equationx4}
\displaystyle\phi(A)=\sum_{\omega\in I}\omega(A), \quad A\in \nalgebra_+.
\end{equation}
It is normal, because it is the limit of an increasing net of states and it is also semifinite because $\displaystyle s_J=\sum_{\omega\in J} s^\nalgebra(\omega)$, $J\subset I$ finite, form an increasing net of projections with the index set directed by inclusion, so it converges to its least upper bound, that is, to $\mathbbm{1}$, and it follows that
$$\displaystyle \bigcup_{\substack{J\subset I \\ J \textrm{ finite}}}\left(\sum_{\omega\in J}s^\nalgebra(\omega)\right)\nalgebra \subset \mathfrak{M}_\phi \Rightarrow \overline{\left(\bigcup_{\substack{J\subset I \\ J \textrm{ finite}}}\sum_{\omega\in J}s^\nalgebra(\omega)\right)\nalgebra}=\left(\sum_{\omega\in I}s^\nalgebra(\omega)\right)\nalgebra=\mathbbm{1}\nalgebra \subset \overline{\mathfrak{M}_\phi}.$$

Finally, $\phi$ is faithful because if $\phi(A)=0$ for $A\in \nalgebra_+$, then $\omega(A)=0$ for all $\omega\in I$, so $A \in \mathbbm{1}-s^\nalgebra(\omega)$, or equivalently, $A s^\nalgebra(\omega)=0$ for all $\omega\in I$. Hence, $$\displaystyle A=A\sum_{\omega\in I}s^\nalgebra(\omega)=\sum_{\omega\in I}A s^\nalgebra(\omega)=0.$$

\end{proof}

The following corollaries can be seen as consequences of Theorem \ref{TGN} or Theorem \ref{GNSweight} and Theorem \ref{existenceweight}. Before we start working with weights, let us analyse a particular case where states are sufficient for the theory.

\begin{proposition}
Let $\nalgebra$ be a concrete von Neumann algebra. The following are equivalent:
\begin{enumerate}[(i)]
\item $\nalgebra$ is $\sigma$-finite;
\item $\nalgebra$ admits a faithful normal state;
\item $\nalgebra$ is isometrically isomorphic to a von Neumann algebra which admits a cyclic and separating vector.
\end{enumerate} 
\end{proposition}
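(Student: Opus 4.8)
The plan is to prove the cycle $(i)\Rightarrow(ii)\Rightarrow(iii)\Rightarrow(i)$, reusing the machinery already developed for weights and supports. For $(i)\Rightarrow(ii)$ I would rerun the maximal-family argument of Theorem~\ref{existenceweight} while tracking cardinality. By Zorn's lemma, pick a maximal family $\{\omega_i\}_{i\in I}\subset\mathcal{S}_\nalgebra$ of normal states whose supports $s^\nalgebra(\omega_i)$ are pairwise orthogonal; exactly as in that theorem, maximality forces $\sum_{i\in I}s^\nalgebra(\omega_i)=\mathbbm{1}$. Since the $s^\nalgebra(\omega_i)$ are nonzero and pairwise orthogonal, $\sigma$-finiteness makes $I$ at most countable, so I may enumerate the family as $\{\omega_n\}_{n\geq 1}$ and set $\phi=\sum_{n\geq 1}2^{-n}\omega_n$. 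This series converges in norm with $\|\phi\|=\sum_n 2^{-n}=1$, and $\phi$ is normal as the norm-limit of the increasing sequence of normal functionals $\sum_{n\leq N}2^{-n}\omega_n$. Faithfulness is the same computation as in Theorem~\ref{existenceweight}: if $\phi(A)=0$ with $A\in\nalgebra_+$, then each $\omega_n(A)=0$, whence $A\,s^\nalgebra(\omega_n)=0$ (the restriction of $\omega_n$ to its support corner being faithful), and therefore $A=A\sum_n s^\nalgebra(\omega_n)=0$.

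For $(ii)\Rightarrow(iii)$ I would feed the faithful normal state $\phi$ into the GNS construction of Proposition~\ref{GNS}, obtaining $(\hilbert_\phi,\pi_\phi,\xi)$ with $\xi$ cyclic and $\phi(A)=\ip{\xi}{\pi_\phi(A)\xi}$. Faithfulness of $\phi$ gives $N_\phi=\{0\}$, so $\pi_\phi$ is injective, hence isometric by Theorem~\ref{nonexpensive} and the remark following it. The vector $\xi$ is separating for $\pi_\phi(\nalgebra)$, since $\pi_\phi(A)\xi=0$ forces $\phi(A^\ast A)=\|\pi_\phi(A)\xi\|^2=0$ and thus $A=0$. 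The one genuinely nontrivial point, and the main obstacle, is to see that $\pi_\phi(\nalgebra)$ is a von Neumann algebra, i.e. weakly closed; this is precisely where normality of $\phi$ is indispensable, guaranteeing that $\pi_\phi$ is a normal representation and hence has weakly (ultra-weakly) closed range. Taking $\tilde\nalgebra=\pi_\phi(\nalgebra)=\pi_\phi(\nalgebra)^{\prime\prime}$, where the last equality is exactly this closedness read through Theorem~\ref{TDC}, then exhibits $\nalgebra$ as isometrically isomorphic to the von Neumann algebra $\tilde\nalgebra$, for which $\xi$ is cyclic and separating.

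For $(iii)\Rightarrow(i)$ I note that $\sigma$-finiteness is an isomorphism invariant, as an isometric $\ast$-isomorphism carries pairwise orthogonal projections to pairwise orthogonal projections, so it suffices to prove it for a von Neumann algebra $\tilde\nalgebra$ with cyclic and separating vector $\Omega$. Given any family $\{P_i\}_{i\in I}$ of nonzero pairwise orthogonal projections, the separating property of $\Omega$ gives $P_i\Omega\neq 0$, so $\|P_i\Omega\|^2>0$, while for every finite $F\subset I$ the inequality $\sum_{i\in F}P_i\leq\mathbbm{1}$ yields $\sum_{i\in F}\|P_i\Omega\|^2=\ip{\Omega}{\left(\sum_{i\in F}P_i\right)\Omega}\leq\|\Omega\|^2$. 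Hence $\sum_{i\in I}\|P_i\Omega\|^2\leq\|\Omega\|^2<\infty$, and a summable family of strictly positive numbers is at most countable; therefore $I$ is countable, and $\tilde\nalgebra$, thus $\nalgebra$, is $\sigma$-finite. The delicate ingredient throughout is normality: it is what upgrades the abstract state into an ultra-weakly continuous object whose GNS image is weakly closed, and it is also what lets the partial sums of supports converge to $\mathbbm{1}$ in the first step.
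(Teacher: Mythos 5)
Your proof is correct and follows essentially the same route as the paper: $(i)\Rightarrow(ii)$ via the countable maximal family of states with orthogonal supports from Theorem~\ref{existenceweight} and the weighted sum $\sum_n 2^{-n}\omega_n$, $(ii)\Rightarrow(iii)$ via the GNS construction (with faithfulness giving injectivity/isometry and the separating property, and normality invoked for weak closedness of the image), and $(iii)\Rightarrow(i)$ via the summability of $\sum_i\|P_i\Omega\|^2$ combined with the separating property. The only differences are cosmetic: you bound finite partial sums instead of forming the join $\bigvee_i P_i$ as the paper does, and you cite the scholium "faithful $\Rightarrow$ isometric" of Theorem~\ref{nonexpensive} rather than the paper's direct norm computation on $[\mathbbm{1}]$.
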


\begin{proof}
$(i)\Rightarrow (ii)$ it is a scholium of Theorem \ref{TGN}. From the hypothesis, the maximal element $I$ of $\mathcal{J}$ defined by equation \eqref{equationx3} must be countable, so just change the definition on equation (\ref{equationx4}) to
$$\displaystyle\phi(A)=\sum_{n \in \mathbb{N}}\frac{1}{2^n}\omega_n(A) \quad \forall A\in \nalgebra_+.$$

Since it converges in norm, $\frac{\phi}{\|\phi\|}$ is a state and the desired properties follows from the very same procedure used in Theorem \ref{existenceweight}.

$(ii)\Rightarrow (iii)$ let $\omega$ be the faithful normal state, by Theorem \ref{nonexpensive} $\|\pi_\omega(A)\|\leq \|A\|$. On the other hand, by definition presented in \ref{GNS}, $[B]=B+\{0\}\Rightarrow \|[B]\|=\|B\|$, therefore $\pi_\omega$ is isometric because $\|\pi_\omega(A)([\mathbbm{1}])\|=\|[A]\|=\|A\|$.

The normality of $\omega$ warrants that $\pi_\omega(\nalgebra)$ is also a von Neumann algebra.

Finally, the cyclic and separating vector is $[\mathbbm{1}]$, because $\pi_\omega(\nalgebra)[\mathbbm{1}]=[\nalgebra]=\hilbert_\omega$ and $\pi_\omega(A)[\mathbbm{1}]=0 \Leftrightarrow [A]=0 \Leftrightarrow A=0$.

$(iii)\Rightarrow (i)$ Let $\pi$ be the representation and $\Omega$ its cyclic and separating vector. Let $\{P_i\}_{i\in I}$ be a family of mutually orthogonal projection. Set $\displaystyle P=\bigvee_{i\in I}P_i=\sum_{i\in I}P_i$, then
\begin{equation}
\label{equationx5}
\begin{aligned}
\|\pi(P)\Omega\|^2	&=\ip{\pi(P)\Omega}{\pi(P)\Omega}\\
					&=\ip{\sum_{i\in I}P_i\Omega}{\sum_{i\in I}P_i\Omega}\\
					&=\sum_{i\in I}\ip{P_i\Omega}{P_i\Omega}\\
					&=\sum_{i\in I}\|\pi(P_i)\Omega\|.\\ 
\end{aligned}
\end{equation}
This means that the sum in equation \eqref{equationx5} has at most a countable number of non-null terms, which means $\pi(P_i)\Omega=0 \Rightarrow \pi(P_i)=0$ apart from a countable subset of $I$.

\end{proof}


\section{Square Roots and Polar Decomposition of an Operator}
\label{SecSRoO}

Since we have already provided a $C^\ast$-algebra with a functional calculus, we could simply use it to get the square root of an element. Although this way might be easier, the next result will show the existence of square root in a different way and with some interesting consequences.  

The next result is a original proof of a well-known theorem which is based on some other proofs that can be found in classical books of operator theory.

\begin{theorem}
\label{sqrt}
Let $\calgebra$ be a $C^\ast$-algebra and let $A\in \calgebra$ be a positive operator. Then there exists a unique positive operator $B \in \calgebra$, such that $B^2=A$. Furthermore, $B$ is in the closure of the $\ast$-algebra generated by $A$ and $\mathbbm{1}$.
\end{theorem}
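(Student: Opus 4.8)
The plan is to construct $B$ explicitly by an iteration whose iterates are polynomials in $A$, which makes the membership claim transparent and sidesteps any direct appeal to the continuous functional calculus. Since the case $A=0$ is trivial, I would first normalise: if $A\neq0$, solve the problem for $A/\|A\|$ and rescale the solution by $\|A\|^{1/2}$ (this preserves positivity and keeps us inside the closure of the $\ast$-algebra generated by $A$ and $\mathbbm{1}$), so I may assume $\|A\|\le 1$. By Proposition~\ref{positiveeq1} this gives $\sigma(A)\subset[0,1]$, so $C=\mathbbm{1}-A$ satisfies $0\le C\le\mathbbm{1}$ and $\|C\|\le1$. I then define $Y_0=0$ and $Y_{n+1}=\tfrac12\bigl(C+Y_n^2\bigr)$, aiming to put $B=\mathbbm{1}-Y$ with $Y=\lim_n Y_n$: a fixed point $Y=\tfrac12(C+Y^2)$ yields $(\mathbbm{1}-Y)^2=\mathbbm{1}-2Y+Y^2=\mathbbm{1}-C=A$.

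Next I would establish, by induction on $n$, that $Y_n=p_n(C)$ for a polynomial $p_n$ with nonnegative coefficients and $p_n(0)=0$; hence all $Y_n$ are self-adjoint, mutually commuting, commute with $C$, and lie in the $\ast$-algebra generated by $A$ and $\mathbbm{1}$. Using this commutativity, $Y_{n+1}-Y_n=\tfrac12(Y_n-Y_{n-1})(Y_n+Y_{n-1})$ is a product of commuting positive elements, hence positive, so $(Y_n)$ is increasing; and $0\le Y_n\le\mathbbm{1}$ forces $Y_{n+1}=\tfrac12(C+Y_n^2)\le\mathbbm{1}$, so the sequence is order-bounded by $\mathbbm{1}$.

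The delicate point, and the main obstacle, is that in an abstract $C^\ast$-algebra monotonicity does not by itself give norm convergence. Here the polynomial structure rescues the argument: the same recursion gives $p_{n+1}-p_n=\tfrac12(p_n-p_{n-1})(p_n+p_{n-1})$, which again has nonnegative coefficients, so for $m\ge n$ the triangle inequality and submultiplicativity yield $\|Y_m-Y_n\|=\|(p_m-p_n)(C)\|\le (p_m-p_n)(\|C\|)\le (p_m-p_n)(1)=y_m-y_n$, where $y_k=p_k(1)$ solves the scalar recursion $y_0=0$, $y_{k+1}=\tfrac12(1+y_k^2)$. This scalar sequence increases to its fixed point $1$, hence is Cauchy, so $(Y_n)$ is norm-Cauchy and converges to some self-adjoint $Y$. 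Continuity of multiplication gives $Y=\tfrac12(C+Y^2)$, and since $\calgebra_+$ is norm-closed and $0\le Y_n\le\mathbbm{1}$ we obtain $0\le Y\le\mathbbm{1}$. Thus $B=\mathbbm{1}-Y$ satisfies $0\le B\le\mathbbm{1}$ and $B^2=A$, and as a norm limit of polynomials in $A$ it lies in the closure of the $\ast$-algebra generated by $A$ and $\mathbbm{1}$.

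For uniqueness, suppose $B'\in\calgebra_+$ also satisfies $B'^2=A$. Since $B'$ commutes with $A=B'^2$, it commutes with every polynomial in $A$, hence with each $Y_n$ and their limit $Y$, and therefore with $B$. Writing $D=B-B'$ (self-adjoint), commutativity gives $(B-B')(B+B')=B^2-B'^2=0$, so $D(B+B')D=DBD+DB'D=0$. Realising $\calgebra$ faithfully on a Hilbert space by Theorem~\ref{TGN}, positivity of $B$ gives $\ip{Dx}{B\,Dx}\ge0$ for every $x$, so $DBD\ge0$, and likewise $DB'D\ge0$; a vanishing sum of positive elements forces $DBD=DB'D=0$, whence $D^3=D(B-B')D=0$. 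Since $D=D^\ast$, the $C^\ast$-identity gives $\|D\|^4=\|D^4\|=\|D\,D^3\|=0$, so $D=0$ and $B=B'$. (As a shortcut, applying the functional calculus of Theorem~\ref{funccalculus} to $t\mapsto\sqrt{t}$ on $\sigma(A)\subset\mathbb{R}_+$ gives existence at once, but the iteration above is more self-contained and exhibits $B$ concretely as a norm limit of polynomials in $A$.)
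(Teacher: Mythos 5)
Your proof is correct, but it is not the paper's argument, even though both iterate what is essentially the same quadratic map: under the substitution $B=\mathbbm{1}-Y$, your recursion $Y_{n+1}=\tfrac12(C+Y_n^2)$ is exactly the paper's map $f(B)=B+\tfrac12(\tilde{A}-B^2)$. The difference is in how convergence and uniqueness are obtained. The paper runs $f$ through Banach's Fixed Point Theorem on the set $X=\{T:\|\mathbbm{1}-T\|\le\beta\}$, which is a contraction only when $\|\mathbbm{1}-\tilde{A}\|<\alpha<1$; it therefore needs a two-case split, regularizing the general case by $A_n=\tilde{A}+\tfrac{1}{2n}\mathbbm{1}$, proving the resulting $B_n$ form a Cauchy sequence via an inner-product computation (its equation giving $\|B_m-B_n\|\le\|\tilde{A}_m-\tilde{A}_n\|^{1/2}$), and then extracting uniqueness from that same Lipschitz-type estimate applied to a competing root. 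Your route instead majorizes $\|Y_m-Y_n\|$ by the scalar sequence $y_k=p_k(1)$, exploiting that $p_m-p_n$ has nonnegative coefficients; this handles $\|\mathbbm{1}-\tilde A\|=1$ on the same footing as the strict case, uses only the triangle inequality and submultiplicativity (no inner products, no fixed-point theorem, no regularization), and makes the membership of $B$ in the closed algebra generated by $A$ and $\mathbbm{1}$ immediate. Your uniqueness argument (commutation, $D^3=0$, then $\|D\|^4=\|D^4\|=0$ from the $C^\ast$-identity) is likewise independent of the existence construction, whereas the paper's is not. What each buys: the paper's approach produces, as a by-product, the Scholium $\|\sqrt{A_n}-\sqrt{A_m}\|\le\|A_n-A_m\|^{1/2}$ for decreasing sequences, which it reuses elsewhere; yours is more uniform and more self-contained.

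Two small glosses worth tightening. First, you invoke "a product of commuting positive elements is positive" to get monotonicity of $(Y_n)$; the standard proof of that fact itself uses square roots, so as written it flirts with circularity — but you do not actually need monotonicity, since your nonnegative-coefficient estimate already gives the norm-Cauchy property, and $0\le Y_n\le\mathbbm{1}$ follows from spectral mapping ($\sigma(Y_n)=p_n(\sigma(C))\subset[0,y_n]\subset[0,1]$) using the functional calculus of Theorem \ref{funccalculus}, which precedes this theorem. Second, in the uniqueness step the inequality $\ip{Dx}{BDx}\ge 0$ is the forward direction of Proposition \ref{EquiPositive}, which the paper proves only after (and from) Theorem \ref{sqrt}; to avoid ordering issues, either justify it directly in the faithful representation via Proposition \ref{positiveeq1} and Hilbert-space spectral theory, or write $B=E^2$ using the existence part you have just established, so that $DBD=(ED)^\ast(ED)\ge 0$.
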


\begin{proof}
First we adjoin an identity, $\mathbbm{1}$, to the algebra.

The case $A=0$ is trivial. If $A \neq 0$, we can set $\tilde{A}=\frac{A}{\|A\|}$. Since $A$ is positive, $\|1-\tilde{A}\|\leq 1$.

Firstly, suppose that there exists $0<\alpha<1$ such that $\|1-\tilde{A}\|< \alpha$. Let $\overline{\llbracket A\rrbracket}$ be the closure of the $\ast$-algebra generated by $\tilde{A}$ and let
$$\displaystyle X=\left\{T \in \overline{\llbracket A\rrbracket} \ \middle| \ \|\mathbbm{1}-T\|\leq 1-\sqrt{1-\alpha}=\beta \right \} \neq \emptyset $$
be provided with the metric induced by the norm.

Note that the elements of $X$ are self-adjoint, thanks to Proposition \ref{positiveeq1} and because $A$ is also self-adjoint.

The function $f: X \rightarrow X$ given by $\displaystyle f(B)=B+\tfrac{1}{2}(\tilde{A}-B^2)$ is well defined, since
$$\|\mathbbm{1}-f(T)\|=\left|\left|\frac{1}{2}[(\mathbbm{1}-T)^2+(\mathbbm{1}-\tilde{A})]\right|\right|\leq \frac{1}{2}\left((1-\sqrt{1-\alpha})^2+ \alpha\right)\leq 1-\sqrt{1-\alpha}=\beta.$$

As the operators in $X$ commute with each other,
$$\begin{aligned}
\|f(B)-f(C)\|	& =\left|\left|B-C-\frac{1}{2}B^2+\frac{1}{2}C^2\right|\right|=\left|\left|\frac{1}{2}\left[(I-B)+(I-C)\right](B-C)\right|\right|\\
							& = \frac{1}{2}\left( \left|\left|I-B\right|\right|+\left|\left|I-C\right|\right| \right) \ \|B-C\| \leq \beta \|B-C\|.
\end{aligned}$$

It follows from Banach's Fixed Point Theorem that $f$ has exactly one fixed point, that is, there exists a unique $B \in X$ such that
$$B=B+\frac{1}{2}(\tilde{A}-B^2) \Rightarrow \tilde{A}=B^2 \Rightarrow A=\left(\sqrt{\|A\|}B\right)^2.$$ 

For the case $\|\mathbbm{1}-\tilde{A}\|=\|\mathbbm{1}-\frac{A}{\|A\|}\|=1$, consider a sequence of positive operators defined by $\displaystyle A_n=\tilde{A}+\frac{1}{2n}\mathbbm{1}$ and $\displaystyle \tilde{A}_n=\frac{A_n}{\|A_n\|}=\frac{A_n}{1+\frac{1}{2n}}$. Of course, $\tilde{A}_n \rightarrow \tilde{A}$ and thanks to Theorem \ref{ST},
$$\begin{aligned}
 \|\mathbbm{1}-\tilde{A}_n\|
 &=\left|\left|\left(1-\frac{1}{2n\|A_n\|}\right)\mathbbm{1}-\frac{\tilde{A}}{\|A_n\|}\right|\right|\\
 &\leq \max\left\{1-\frac{1}{2n\|A_n\|},1-\frac{1}{2n\|A_n\|}-\frac{1}{\|A_n\|}\right\}\\
 &=1-\frac{1}{2n\|A_n\|}.\\
\end{aligned}$$
Hence, for each $n\in\mathbb{N}$ there exists a unique $B_n$ with the desired properties such that $\tilde{A}_n=B_n^2$. Note that all the $B_n$ are in fact in the same algebra and commute with each other. Furthermore, note that $n>m \Rightarrow \tilde{A}_n\leq \tilde{A}_m \Rightarrow B_n \leq B_m$, since $A_n$ and $A_m$ commute, thus we have that, for every $n>m$,
\begin{equation}
\label{equationxx5}
\begin{aligned}
\|(B_m-B_n)(x)\|^2 	&=\ip{(B_m-B_n)x}{(B_m-B_n)x}\\
&=\ip{\tilde{A}_m x}{x}+\ip{\tilde{A}_n x}{x}-2\ip{B_m B_n x}{x} \\
&\leq \ip{\tilde{A}_m x}{x}+\ip{\tilde{A}_n x}{x}-2\ip{B_n^2 x}{x}\\
&=\ip{\tilde{A}_m x}{x}-\ip{\tilde{A}_n x}{x}\\
&= \ip{(\tilde{A}_m-\tilde{A}_n)(x)}{x}\\
&\leq\|\tilde{A}_m-\tilde{A}_n\| \ \|x\|^2 \\
&\leq\left[\left(\frac{1}{2m\|A_m\|}-\frac{1}{2n\|A_n\|}\right)+\left(\frac{1}{\|A_m\|}-\frac{1}{\|A_n\|}\right)\right] \|x\|^2 \\
&\begin{aligned}\Rightarrow \|B_m-B_n\| &\leq \|\tilde{A}_m-\tilde{A}_n\|^{\frac{1}{2}}\\
&\leq \sqrt{\left[\left(\frac{1}{2m\|A_m\|}-\frac{1}{2n\|A_n\|}\right)+\left(\frac{1}{\|A_m\|}-\frac{1}{\|A_n\|}\right)\right]}.
										\end{aligned}
\end{aligned}
\end{equation}
Hence, it is a Cauchy sequence and therefore $B_n \rightarrow B \in \overline{\llbracket\tilde{A}\rrbracket}$ and clearly $B^2=\tilde{A}$.

Moreover, equation \eqref{equationxx5} also ensures a continuity property in this particular case, with which we can prove uniqueness. In fact, if $C$ is a positive operator such that $C^2=\tilde{A}$ then, taking the limit, we have
$$\|B_m-C\| \leq \|\tilde{A}_m-\tilde{A}\|^{\frac{1}{2}} \Rightarrow \|B-C\|=0.$$

\end{proof}

\begin{definition}
Let $A:\hilbert\rightarrow \hilbert$ be a positive operator. We denote by $\sqrt{A}$ the unique positive operator such that $(\sqrt{A})^2=A$.
\end{definition}

Equation \eqref{equationxx5} in the proof of the theorem gives us:

\begin{scholium}
Let $(A_n)_{n\in\mathbb{N}}\subset \mathcal{B}(\hilbert)$ be a monotonic decreasing sequence of positive operators, then
$$\left\|\sqrt{A_n}-\sqrt{A_m}\right\| \leq \|A_n-A_m\|^{{\frac{1}{2}}}.$$
\end{scholium}

\begin{definition}
Let $A\in \mathcal{B}(\hilbert)$, we define $|A|\doteq\sqrt{A^\ast A}$.
\end{definition}

Note that the modulus function is always well defined since $A^\ast A$ is always a positive operator.

\begin{proposition}
	\label{EquiPositive}
$A:\hilbert \to \hilbert$ is a positive operator if and only if $A$ is self-adjoint and $\ip{Ax}{x} \geq 0$ for all $x \in \hilbert$.
\end{proposition}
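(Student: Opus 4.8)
The plan is to bridge the operator-norm condition in Definition~\ref{defpositive} with the quadratic-form condition using the classical identity $\|T\|=\sup_{\|x\|\le 1}|\ip{Tx}{x}|$, valid for every self-adjoint $T\in B(\hilbert)$, which I take as part of the spectral theory for $B(\hilbert)$ that we assume known. Since self-adjointness appears on both sides of the stated equivalence it never has to be reproved, and both implications reduce to elementary manipulations of this supremum. Recall also that for self-adjoint $A$ the quantity $\ip{Ax}{x}$ is real, so the inequality $\ip{Ax}{x}\ge 0$ is meaningful.

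For the forward implication, I would first dispose of the trivial case $A=0$. If $A$ is positive and $A\neq 0$, Definition~\ref{defpositive} gives that $A$ is self-adjoint and $\left\|\mathbbm{1}-A/\|A\|\right\|\le 1$. As $\mathbbm{1}-A/\|A\|$ is self-adjoint, the norm identity yields $|\ip{(\mathbbm{1}-A/\|A\|)x}{x}|\le\|x\|^2$ for every $x\in\hilbert$. Expanding the inner product as $\|x\|^2-\ip{Ax}{x}/\|A\|$ and retaining the right-hand inequality of $-\|x\|^2\le\|x\|^2-\ip{Ax}{x}/\|A\|\le\|x\|^2$ gives $\ip{Ax}{x}/\|A\|\ge 0$, hence $\ip{Ax}{x}\ge 0$.

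For the converse, suppose $A=A^\ast$ and $\ip{Ax}{x}\ge 0$ for all $x$; I may assume $A\neq 0$. The norm identity gives $0\le\ip{Ax}{x}\le\|A\|\,\|x\|^2$, so that
\[
\ip{\Big(\mathbbm{1}-\tfrac{A}{\|A\|}\Big)x}{x}=\|x\|^2-\frac{\ip{Ax}{x}}{\|A\|}\in[0,\|x\|^2].
\]
Because $\mathbbm{1}-A/\|A\|$ is self-adjoint, applying the norm identity once more gives $\left\|\mathbbm{1}-A/\|A\|\right\|=\sup_{\|x\|\le 1}|\ip{(\mathbbm{1}-A/\|A\|)x}{x}|\le 1$, which is exactly the condition of Definition~\ref{defpositive}; thus $A$ is positive.

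The only real content is the norm identity for self-adjoint operators; once it is in hand both directions are routine, so the ``main obstacle'' is merely identifying this identity as the correct bridge. As a cross-check I would note an alternative route through Proposition~\ref{positiveeq1}: positivity is equivalent to $A=A^\ast$ with $\sigma(A)\subset\mathbb{R}_+$, and via the Spectral Theorem (Corollary~\ref{ST}) one writes $\ip{Ax}{x}=\int_{\sigma(A)}\lambda\,d\|E^A_\lambda x\|^2$, which is $\ge 0$ precisely when $\sigma(A)\subset\mathbb{R}_+$. There the delicate step is ruling out negative spectrum, which one handles by testing against a nonzero vector in the range of the spectral projection $E^A_{[\lambda_0-\epsilon,\lambda_0+\epsilon]}$ onto a small interval around a hypothetical negative point $\lambda_0\in\sigma(A)$, obtaining $\ip{Ax}{x}\le(\lambda_0+\epsilon)\|x\|^2<0$, a contradiction.
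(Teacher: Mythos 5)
Your proof is correct, and its converse direction coincides with the paper's: both use Cauchy--Schwarz for the upper bound $\ip{Ax}{x}\le\|A\|\|x\|^2$ and then the identity $\|T\|=\sup_{\|x\|=1}|\ip{Tx}{x}|$ for self-adjoint $T$ to conclude $\left\|\mathbbm{1}-A/\|A\|\right\|\le 1$. The forward direction, however, is genuinely different. The paper invokes Theorem \ref{sqrt} to write $A=B^2$ with $B$ positive, so that $\ip{Ax}{x}=\|Bx\|^2\ge 0$ is immediate; you instead run the norm estimate backwards, expanding $|\ip{(\mathbbm{1}-A/\|A\|)x}{x}|\le\|x\|^2$ and keeping the right-hand inequality. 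Your route is more elementary and more symmetric — both implications hinge on the same scalar inequality, and you never need the (fixed-point-theorem-based) existence of square roots; note, though, that in this direction you only need Cauchy--Schwarz, not the full norm identity you cite. What the paper's square-root argument buys is the conceptual link between positivity and the factorization $A=B^\ast B$, which the surrounding text exploits (e.g.\ the remark that positive linear functionals are self-adjoint once every positive operator is of the form $A^\ast A$). Your spectral-theorem cross-check via Proposition \ref{positiveeq1} is also sound and is essentially how the paper itself relates Definition \ref{defpositive} to the spectrum.
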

\begin{proof}

If $A$ is positive, it is automatically self-adjoint by hypothesis, so it is enough to show the second condition. From Theorem \ref{sqrt}, we know that there exists a positive $B$ such that $A=B^2$. Thus
$$\ip{Ax}{x}=\ip{B B^* x}{x}=\|Bx\|^2 \geq0 \ \forall x \in \hilbert.$$

Reciprocally, it follows from the Cauchy-Schwarz inequality that $\ip{\frac{A}{\|A\|}x}{x} \leq \|x\|^2$ and, by hypothesis, $\ip{Ax}{x} \geq 0$ for all $x \in \hilbert$. Therefore, 
$$\begin{aligned}
&\left|\ip{\left(\mathbbm{1}-\frac{A}{\|A\|}\right)x}{x}\right|= \|x^2\|-\frac{1}{\|A\|}\ip{Ax}{x} \leq \|x\|^2\\
&\Rightarrow \left|\left|\mathbbm{1}-\frac{A}{\|A\|}\right|\right|=\sup_{\|x\|= 1}{\left|\ip{\left(\mathbbm{1}-\frac{A}{\|A\|}\right)x}{x}\right|}\leq 1.\end{aligned}$$

Finally, from Proposition \ref{positiveeq1}, we conclude that $A$ is positive.

\end{proof}

\begin{definition}
An operator $U \in \mathcal{B}(\hilbert)$ is said to be a partial isometry \index{partial isometry} if $\|Ux\|=\|x\|$ for all $x \in \ker{U}^\perp$. The subspace $\ker{U}^\perp$ is called the initial space of $U$ and $\ran{U}$ is called the final space of $U$.
\end{definition}

\begin{proposition}\index{operator! polar decomposition}
Let $A \in \mathcal{B}(\hilbert)$. Then, there exists a partial isometry $U$ with initial space $\Ker{A}^\perp$ and final space $\overline{\ran{A}}$ such that $A=U|A|$.
\end{proposition}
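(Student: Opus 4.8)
The plan is to construct $U$ directly from the defining identity $|A|=\sqrt{A^\ast A}$ together with the observation that $|A|$ and $A$ have the same length on every vector. First I would record the key computation: since $|A|$ is self-adjoint and $|A|^2 = A^\ast A$, for every $x\in\hilbert$
\begin{equation*}
\||A|x\|^2 = \ip{|A|x}{|A|x} = \ip{|A|^2 x}{x} = \ip{A^\ast A x}{x} = \ip{Ax}{Ax} = \|Ax\|^2,
\end{equation*}
so that $\||A|x\| = \|Ax\|$ for all $x$. In particular $\ker{|A|}=\ker{A}$, and this single identity is the engine of the whole proof.

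Second, I would use the identity to define the isometry on the range of $|A|$. Set $U_0\big(|A|x\big)=Ax$ for $x\in\hilbert$. This is well defined, for if $|A|x_1=|A|x_2$ then $|A|(x_1-x_2)=0$, whence $\|A(x_1-x_2)\|=\||A|(x_1-x_2)\|=0$ and therefore $Ax_1=Ax_2$. The same identity shows $U_0$ is isometric and maps $\ran{|A|}$ onto $\ran{A}$, because $\|U_0(|A|x)\|=\|Ax\|=\||A|x\|$. Being a bounded isometric operator defined on $\ran{|A|}$, it extends uniquely by continuity to an isometry $U$ on $\overline{\ran{|A|}}$, whose image is closed and dense in $\overline{\ran{A}}$, hence equals $\overline{\ran{A}}$. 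I then extend $U$ to all of $\hilbert$ by declaring $U=0$ on $\overline{\ran{|A|}}^{\,\perp}$.

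Third, I would identify the initial and final spaces. Because $|A|$ is self-adjoint, $\overline{\ran{|A|}}=\ker{|A|}^{\perp}=\ker{A}^{\perp}$, using $\ker{|A|}=\ker{A}$ from the first step. By construction $U$ is isometric on $\ker{A}^{\perp}$ and vanishes on $\ker{A}$, so $\ker{U}=\ker{A}$ and the initial space $\ker{U}^{\perp}$ equals $\ker{A}^{\perp}$; the final space is $\overline{\ran{A}}$ by the surjectivity just noted. The required factorization is then immediate: for every $x$ one has $U|A|x=U_0(|A|x)=Ax$, hence $A=U|A|$.

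The well-definedness and the continuous extension are routine; the one step deserving care is the identification $\overline{\ran{|A|}}=\ker{A}^{\perp}$, which rests on the self-adjointness of $|A|$ combined with $\ker{|A|}=\ker{A}$. Since no uniqueness of $U$ is asserted, I need not address the freedom in defining $U$ on $\ker{A}$.
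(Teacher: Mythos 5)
Your proof is correct and follows essentially the same route as the paper: define $U$ on $\ran{|A|}$ via $U(|A|x)=Ax$ using the identity $\||A|x\|=\|Ax\|$, extend by continuity to $\overline{\ran{|A|}}$, set $U=0$ on the orthogonal complement, and identify $\overline{\ran{|A|}}=\ker{A}^{\perp}$. The only (minor) divergence is in that last identification, where the paper proves both inclusions by hand (one via a Hahn--Banach/Riesz argument) while you invoke the standard fact that a self-adjoint operator satisfies $\overline{\ran{|A|}}=\ker{|A|}^{\perp}$, which is a perfectly legitimate shortcut.
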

\begin{proof}
Notice that
\begin{equation}
\label{partialiso}
\|Ax\|^2=\ip{Ax}{Ax}=\ip{A^* Ax}{x}=\ip{|A|^2x}{x}=\ip{|A|x}{|A|x}=\|\, |A|x\|^2.
\end{equation}

Then, we can define define $\tilde{U}: \ran{|A|}\rightarrow \ran{A}$ by $U|A|x\doteq Ax$ for all $x \in \hilbert$. From the definition, it follows that $ \|\tilde{U}|A|x\|=\|\, |A|x\|$ and hence $\tilde{U}$ is a partial isometry. We can also extend $\tilde{U}$ to an operator $\bar{U}: \overline{\ran{|A|}} \to \overline{\ran{A}}$ using continuity. Finally, define $U \in \mathcal{B}(\hilbert)$ by
$$Ux=\begin{cases} \bar{U}x &\textrm{ if } x\in \overline{\ran{|A|}} \\ 0 &\textrm{ if } x \in \overline{\ran{|A|}}^\perp\end{cases}.$$
Thereafter $U$ is a partial isometry with initial space $\overline{\ran{|A|}}$ and final space $\overline{\ran{A}}$ such that $Ax=U|A|$.

All that remains to show is that $\overline{\ran{|A|}}=\ker{A}^\perp$ to conclude the assertion about the isometry. We begin by noticing that the equation \eqref{partialiso} says $\|Ax\|=\|\ |A|x\|$, hence $\ker{A}=\ker{|A|}$. Thus $x \in \ran{|A|}  \Rightarrow \exists y \in \hilbert$ such that $x=|A|y \Rightarrow \ip{x}{z}=\ip{|A|y}{z}=\ip{y}{|A|z}=0$  for all $z \in \ker{A}=\ker{|A|} \Rightarrow x\in \ker{A}^\perp$ and since $\ker{A}^\perp$ is closed $\overline{\ran{|A|}}\subset\ker{A}^\perp$.

On the other hand, let $x\in \ker{A}^\perp$ and $f\in \hilbert^*$ be a continuous functional that vanishes in $\ran{|A|}$. Thanks to Riesz Theorem, there exists $z \in \hilbert$ such that $f(y)=\ip{z}{y}$ for all $y \in \hilbert$. Note now that $\|Az\|^2=\| |A|z \|=\ip{Az}{Az}=\ip{z}{|A|^2 z}=f(|A|^2 z)=0 \Rightarrow z \in \ker{A}$. Finally, $f(x)=\ip{x}{z}=0$ for all $f\in \hilbert^*$ that vanishes in $\ran{|A|}$. Hence, as a consequence of the Hahn-Banach Theorem\footnote{see Theorem \ref{C2TMD}.}, $x \in \overline{\ran{|A|}}$.

\end{proof}

\section{Unbounded, Closed, and Affiliated Operators}

We finish this chapter presenting the definition of unbounded, closed, and affiliated operators. This theory is due to J. von Neumann and M. Stone, who developed it in their attempt to provide a mathematical formalization of quantum mechanics. These operators will play a very important role in this work, for obvious reasons.

\begin{definition}[Unbounded Operator]
Let $\mathcal{X}$ and $\mathcal{Y}$ be Banach spaces and \mbox{$\Dom{A}\subsetneq \mathcal{X}$} a subspace. An unbounded operator \index{operator! unbounded} is a linear map $A:\mathcal{X}\supset \Dom{A} \to \mathcal{Y}$ that cannot be continuously extended to the whole Banach space. 
\end{definition}

\begin{definition}
Let $\mathcal{X}$ and $\mathcal{Y}$ be Banach spaces and let $A:\mathcal{X}\supset \Dom{A}\to \mathcal{Y}$ be an operator. The graph of $A$ is defined as the set
$$\mathcal{G}(A)\doteq\left\{(x,Ax) \in \mathcal{X}\times \mathcal{Y} \ | \ x \in \Dom{A} \right\}.$$

We are considering $\mathcal{X}\times \mathcal{Y}$ provided with the product topology.
\end{definition}

\begin{definition}
Let $\mathcal{X}$ and $\mathcal{Y}$ be Banach spaces and let $\Dom{A}\subset \mathcal{X}$ be a subspace. An unbounded operator $A:\mathcal{X}\supset \Dom{A}\to \mathcal{Y}$ is said to be
\begin{enumerate}[(i)]
	\item densely defined, if $\overline{\Dom{A}}=\mathcal{X}$ \index{operator! densely defined};\index{operator! densely defined}
	\item closed \index{operator! closed} if its graph is closed, \ie, for any sequence $(x_n,Ax_n)_{n\in\mathbb{N}} \in \mathcal{G}(A)$ with $(x_n,Ax_n) \to (x,y)$ (which automatically is in $\overline{\mathcal{G}(A)}$) we have $(x,y)\in\mathcal{G}(A)$ which means $x\in \Dom{A}$ and $y=Ax$;\index{operator! closed}
	\item closable if $A$ admits a closed extension. We denote its closure by $\overline{A}$.
\end{enumerate}  
\end{definition}

From now on, we will use standard results about densely defined and closed operators without mentioning. More details about the topic can be found in \cite{kato95}, \cite{megginson98}, and \cite{reedv1}.

\begin{definition}
Let $\mathcal{X}$ and $\mathcal{Y}$ be Banach spaces, $A:\mathcal{X}\supset \Dom{A}\to \mathcal{Y}$ a closable operator. A core (or essential domain) of $A$ \index{operator! core} is a subset $D\subset \Dom{A}$ such that $\overline{\left.A\right|_D}=\overline{A}$. 
\end{definition}

\begin{lemma}
\label{lema3.13}
Let $A$ be a closable (or densely defined) positive operator and $D$ a core of $A$. Then $D$ is a core of $A^\gamma$, $0\leq \gamma\leq1$, and
$$\|A^\gamma x\|^2\leq \|x\|^2+\|Ax\|^2 \quad \forall x\in \Dom{A^\gamma}.$$
\end{lemma}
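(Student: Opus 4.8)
The plan is to reduce the entire statement to the Borel functional calculus of the positive self-adjoint operator $A$ together with one elementary scalar inequality. Writing the spectral resolution $A=\int_0^\infty \lambda\,dE_\lambda$, the power $A^\gamma=\int_0^\infty \lambda^\gamma\,dE_\lambda$ has domain $\Dom{A^\gamma}=\{x:\int_0^\infty \lambda^{2\gamma}\,d\|E_\lambda x\|^2<\infty\}$. The norm estimate will come from the pointwise bound
$$\lambda^{2\gamma}\leq 1+\lambda^2,\qquad \lambda\geq 0,\ 0\leq\gamma\leq 1,$$
which I would check by splitting at $\lambda=1$: for $\lambda\leq 1$ one has $\lambda^{2\gamma}\leq 1$, and for $\lambda\geq 1$ one has $\lambda^{2\gamma}\leq\lambda^2$ because $2\gamma\leq 2$.

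Integrating this bound against the (positive) spectral measure $d\|E_\lambda x\|^2$ then gives, for every $x\in\Dom{A^\gamma}$,
$$\|A^\gamma x\|^2=\int_0^\infty \lambda^{2\gamma}\,d\|E_\lambda x\|^2\leq \int_0^\infty(1+\lambda^2)\,d\|E_\lambda x\|^2=\|x\|^2+\|Ax\|^2,$$
with the convention that the right-hand side equals $+\infty$ when $x\notin\Dom{A}$, in which case the inequality is vacuous. The same computation shows $\Dom{A}\subseteq\Dom{A^\gamma}$, so the estimate is substantive precisely on $\Dom A$.

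For the core assertion I would argue by transitivity of graph-norm density. The inequality just proved yields, on $\Dom A$, the domination $\|x\|^2+\|A^\gamma x\|^2\leq 2\bigl(\|x\|^2+\|Ax\|^2\bigr)$, so the $A$-graph norm controls the $A^\gamma$-graph norm; hence any subset dense in $\Dom A$ for the $A$-graph norm is also dense there for the $A^\gamma$-graph norm. Since $D$ is a core of $A$, it is $A$-graph-dense in $\Dom A$, and therefore $A^\gamma$-graph-dense in $\Dom A$. Next I would show $\Dom A$ is itself a core of $A^\gamma$: for $x\in\Dom{A^\gamma}$ set $x_n=E_{[0,n]}x$, which lies in $\Dom A$ because $A$ is bounded on $\ran{E_{[0,n]}}$, and then $x_n\to x$ while dominated convergence in the spectral integral gives $A^\gamma x_n\to A^\gamma x$. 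Composing the two density statements by a diagonal argument over the cutoff sequence $(x_n)$ and the core-of-$A$ approximations of each $x_n$ shows that $D$ is $A^\gamma$-graph-dense in $\Dom{A^\gamma}$, that is, $\overline{\left.A^\gamma\right|_D}=A^\gamma$.

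The delicate point is the core statement rather than the inequality itself: one must ensure the two approximations compose, and this is exactly where the continuity of the inclusion $\bigl(\Dom A,\|\cdot\|_A\bigr)\hookrightarrow\bigl(\Dom A,\|\cdot\|_{A^\gamma}\bigr)$ supplied by the inequality is essential, since it guarantees that approximating in the stronger $A$-graph norm automatically produces approximation in the $A^\gamma$-graph norm, allowing the spectral cutoffs and the core-of-$A$ sequences to be merged by diagonalization.
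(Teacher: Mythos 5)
Your proof is correct, and the inequality half is essentially the paper's argument in scalar form: the paper splits the spectrum at $1$ using the projections $E_{(0,1)}$ and $\mathbbm{1}-E_{(0,1)}$, bounding $A^\gamma$ by $\mathbbm{1}$ on the low part and by $A$ on the high part, which is exactly your pointwise bound $\lambda^{2\gamma}\leq 1+\lambda^2$ integrated against the spectral measure. Where you genuinely diverge is the core statement, and there your version is the more complete one. The paper takes $x\in\Dom{\overline{A^\gamma}}$, asserts that $x\in\Dom{\overline{A}}$ as well, approximates $x$ from $D$ in the $A$-graph norm, and uses the inequality to upgrade this to $A^\gamma$-graph convergence; but that inclusion is backwards (for $0\leq\gamma\leq1$ it is $\Dom{A}$ that sits inside $\Dom{A^\gamma}$), so even under a charitable reading the paper's argument only establishes that $D$ is $A^\gamma$-graph dense in $\Dom{A}$, not in all of $\Dom{A^\gamma}$. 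Your additional step --- the spectral cutoffs $x_n=E_{[0,n]}x$ showing that $\Dom{A}$ is itself a core of $A^\gamma$, followed by the diagonal composition of the two density statements --- is precisely what is needed to close that gap. In short: same mechanism for the estimate, but your handling of the core claim supplies a step the paper's own proof omits.
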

\begin{proof}
Let $A=\displaystyle\int{p dE_p}$ be the spectral decomposition of $A$ and $x\in \Dom{A^\gamma}\subset \Dom{A}$.
$E_{(0,1)}$ and $\mathbbm{1}-E_{(0,1)}$ are orthogonal \hspace{1pt} projections, furthermore \hspace{1pt} $E_{(0,1)} A^\gamma\leq \mathbbm{1}$ and  \mbox{$(\mathbbm{1}-E_{(0,1)})A^\gamma\leq(\mathbbm{1}-E_{(0,1)})A\leq A$} because of its spectral decomposition, hence, for every $x\in \Dom{A^\gamma}$,
\begin{equation}
\begin{aligned}
\label{eq3.13}
\|A^\gamma x\|^2	&=\|E_{(0,1)}A^\gamma x\|^2+\|(\mathbbm{1}-E_{(0,1)})A^\gamma x\|^2\\
					&\leq\|x\|^2+\|Ax\|^2.\\
\end{aligned}
\end{equation}

Now, let $x\in \Dom{\overline{A^\gamma}}$, also $x\in \Dom{\overline{A}}$. Then, there exists $(x_n)_{n\in\mathbb{N}}\subset D$, such that $x_n\rightarrow x$ and $\overline{A}x_n\rightarrow \overline{A}x$. By equation \eqref{eq3.13}, $\overline{A^\gamma} x_n \rightarrow \overline{A^\gamma} x$. Consequently, $D$ is a core for $\overline{A^\gamma}$, as well as for $A^\gamma$. 

\end{proof}

\begin{definition}[Adjoint]
	\index{operator! adjoint}
	Let $\hilbert_1, \hilbert_2$ be Hilbert spaces and $A:\hilbert_1\supset \Dom{A}\to \hilbert_2$ a densely defined unbounded operator, we set:
	\begin{enumerate}[(i)]
	\item $\displaystyle \Dom{A^\ast} \doteq \left\{ y \in \hilbert_2 \ \middle| \ \Dom{A}\ni x \xrightarrow{\omega_y} \ip{y}{Ax}_2 \textrm{ is a continuous linear functional}\right\}$;
	\item for $y\in\Dom{A^\ast}$, $A^\ast(y)=z$ where $z\in \hilbert_1$ is the unique vector such that \mbox{$\omega_y(x)=\ip{z}{x}_1$} for all $x\in\hilbert_1$.
	\end{enumerate}

One can  prove that the operator $A^\ast$ defined by the conditions above is linear. We call the operator $A^\ast$ the adjoint of $A$.
\end{definition}

An interesting result states that an operator is closable if and only if its adjoint is densely defined. Again, we stress that all standard results about the adjoint operator can be found again in \cite{kato95}, \cite{megginson98}, and \cite{reedv1}. 

\begin{definition}
	Let $\nalgebra\subset B(\hilbert)$ be a von Neumann algebra, we say that a linear operator $A:\Dom{A} \to \hilbert$ is affiliated to $\nalgebra$ if, for every unitary operator $U\in\nalgebra^\prime$, $UAU^\ast=A$. We denote that an operator is affiliated to $\nalgebra$  by $A\eta\nalgebra$ and the set of all affiliated operators \mbox{by $\nalgebra_\eta$}\glsdisp{affil}{\hspace{0pt}}.
\end{definition}

A very interesting way to look at this definition is through spectral projections. By this definition, the spectral projections of an affiliated operator belong to the von Neumann algebra. In fact, an equivalent condition to an operator $A$ being affiliated to a von Neumann algebra is that the partial isometry in the polar decomposition of $A$ and all the spectral projections of $|A|$ lie in the von Neumann algebra. This guarantees that an affiliated operator can be approached by algebra elements in the spectral sense.

\begin{lemma}
	Let $\nalgebra$ be a von Neumann algebra and $A\eta \nalgebra$.
	\begin{enumerate}[(i)]
		\item If $A$ is closable, then $\overline{A}\eta\nalgebra$;
		\item If $A$ is densely defined, then $A^\ast\eta \nalgebra$.
	\end{enumerate}
\end{lemma}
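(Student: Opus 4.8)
The plan is to translate the affiliation condition $UAU^\ast=A$ (required for every unitary $U\in\nalgebra^\prime$) into a statement that is manifestly stable under taking closures and adjoints. Unwinding the definition, $UAU^\ast=A$ means precisely that $U\Dom{A}=\Dom{A}$ together with $UAx=AUx$ for all $x\in\Dom{A}$; equivalently, writing $V_U$ for the unitary $x\oplus y\mapsto Ux\oplus Uy$ on $\hilbert\oplus\hilbert$, it means $V_U\left(\mathcal{G}(A)\right)=\mathcal{G}(A)$. The elementary facts I will exploit are that every unitary is a homeomorphism of $\hilbert$ (so $V_U$ is a homeomorphism of $\hilbert\oplus\hilbert$), that $\nalgebra^\prime$ is closed under the adjoint so that $U^\ast$ is again a unitary of $\nalgebra^\prime$, and the standard behaviour of adjoints of products with everywhere-defined bounded factors, as found in the references cited above.

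For $(i)$ I would argue on graphs. Since $A$ is closable, $\mathcal{G}(\overline{A})=\overline{\mathcal{G}(A)}$. Fix a unitary $U\in\nalgebra^\prime$ and observe $V_U(\mathcal{G}(A))=\mathcal{G}(A)$: indeed $V_U(x,Ax)=(Ux,UAx)=(Ux,A(Ux))$ using $UAx=A(Ux)$, and as $x$ ranges over $\Dom{A}$ the vector $Ux$ ranges over $U\Dom{A}=\Dom{A}$, so the image is all of $\mathcal{G}(A)$. Because $V_U$ is a homeomorphism it commutes with closure, whence $V_U(\mathcal{G}(\overline{A}))=V_U(\overline{\mathcal{G}(A)})=\overline{V_U(\mathcal{G}(A))}=\overline{\mathcal{G}(A)}=\mathcal{G}(\overline{A})$. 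Reading this invariance back off the graph gives $U\Dom{\overline{A}}\subseteq\Dom{\overline{A}}$ and $U\overline{A}=\overline{A}U$ there; applying the same argument to the unitary $U^\ast\in\nalgebra^\prime$ yields the reverse inclusion, so that $U\overline{A}U^\ast=\overline{A}$. As $U$ was arbitrary, $\overline{A}\eta\nalgebra$.

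For $(ii)$ I would simply take adjoints in the identity $UAU^\ast=A$. The crux is the formula $(UAU^\ast)^\ast=UA^\ast U^\ast$ for a unitary $U$ and a densely defined $A$; since $U,U^\ast$ are everywhere-defined and bounded this says that conjugation by a unitary commutes with the adjoint, which one checks directly: after the substitution $x=Uw$ one finds $y\in\Dom{(UAU^\ast)^\ast}$ iff $w\mapsto\ip{U^\ast y}{Aw}$ is continuous on $\Dom{A}$, i.e.\ iff $U^\ast y\in\Dom{A^\ast}$, and then a short computation gives $(UAU^\ast)^\ast y=UA^\ast U^\ast y$ with $\Dom{(UAU^\ast)^\ast}=U\Dom{A^\ast}=\Dom{UA^\ast U^\ast}$. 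Granting this, taking adjoints of $UAU^\ast=A$ produces $A^\ast=UA^\ast U^\ast$ for every unitary $U\in\nalgebra^\prime$, which is exactly $A^\ast\eta\nalgebra$. The main obstacle in both parts is purely the domain bookkeeping: one must secure equality of operators (domains included) rather than mere inclusions, and this is precisely why it is essential that both $U$ and $U^\ast$ belong to $\nalgebra^\prime$ and that conjugation by a unitary is invertible with bounded inverse.
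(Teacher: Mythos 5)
Your proof is correct and follows essentially the same route as the paper's: the paper simply asserts that $UAU^\ast=A$ passes to the closure and to the adjoint once these exist, which is exactly what you prove. The only difference is that you supply the details (graph invariance under the unitary $V_U$ for the closure, and the identity $(UAU^\ast)^\ast=UA^\ast U^\ast$ for the adjoint) that the paper leaves implicit.
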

\begin{proof}
	$(i)$ The condition just warrants the existence of $\overline{A}$. Since it exists, $U\in\nalgebra^\prime$, $UAU^\ast=A \Rightarrow U\overline{A}U^\ast=\overline{A}$.
	
	$(ii)$ The condition plays the same role and, since $A^\ast$ exists, we have $U\in\nalgebra^\prime$, $UAU^\ast=A \Rightarrow UA^\ast U^\ast=A^\ast$.
	
\end{proof}

\chapter{Dynamical Systems, KMS States and their Physical Meaning}  
\label{chapKMS}

KMS states are a generalization of quantum thermal equilibrium states given by the density matrix in finite-dimensional Hilbert spaces, defined, from the hamiltonian $H$, as follows
$$\rho_\beta=\frac{e^{- \beta H}}{Tr(e^{- \beta H})} \quad, \qquad \omega_{\rho_\beta}(A)=Tr(\rho_\beta A).$$

It is obvious in this situation that $\rho_\beta$ and $\omega_{\rho_\beta}$ admit an analytic extension for complex $\beta$.

Unfortunately, the density matrix usually does not survive the thermodynamic limit in an infinite-dimensional Hilbert space, but the KMS property does. Hence, Kubo, Martin and Schwinger, for which the initialism KMS stands for, proposed in \cite{kubo57} and \cite{martin59} that the analytic extension of the state, as it will be precisely defined soon, should define the thermal equilibrium states.

In this chapter we will present the general definition and main properties of dynamical systems in Operator Algebras context, an extensive discussion on mathematical aspects of KMS states, and finally a physical discussion on the topic.


\section{$C^\ast$-Dynamical Systems, $W^\ast$-Dynamical Systems, and Analytic Elements}
\label{sectionKMS} \index{dynamical system}

This section is devoted to present Dynamical Systems in the Operator Algebras context, as well as, the analytic elements, which play a crucial role for KMS state properties and definition.

\begin{definition}[$C^\ast$-Dynamical System]
\index{dynamical system! $C^\ast$}
A $C^\ast$-dynamical system $(\calgebra,G,\alpha)$ consists of a $C^\ast$-algebra $\calgebra$, a locally compact group $G$ and a strongly continuous homomorphism $\alpha$ of $G$ in $Aut(\calgebra)$.
\end{definition}

We clarify here that $Aut(\calgebra)$ is the set of all $\ast$-automorphisms of the $C^\ast$-algebra $\calgebra$.

\begin{definition}[$W^\ast$-Dynamical System]
\index{dynamical system! $W^\ast$}
A $W^\ast$-dynamical system $(\nalgebra,G,\alpha)$ consists of a von Neumann algebra $\nalgebra$, a locally compact group $G$ and a weakly continuous homomorphism $\alpha$ of $G$ in $Aut(\nalgebra)$.
\end{definition}

In particular, we denote by $(\calgebra,\alpha)$ the $C^\ast$-dynamical system with $\alpha$ a one-parameter group, $\mathbb{R} \ni t \mapsto \alpha_t \in Aut(\calgebra)$.

\begin{notation}
Let $X$ be a Banach space and $F\subset X^\ast$ a closed subspace, where $F$ is such that either $F=X^\ast$ or $F^\ast=X$. We denote by $\sigma(X,F)$ the locally convex topology of $X$ induced by functionals in $F$.
\end{notation}


\begin{definition}[Analytic Elements]
\label{Anal}
\index{analytic elements}
Let $\alpha$ be a one-parameter $\sigma(X,F)$-continuous group of isometries. An element $A \in X$ is analytic for $\alpha$ if there exists $\gamma\neq0$ and a function $f:\strip{\gamma}\to X$, where $\strip{\gamma}=\left\{z \in \mathbb{C} \ \middle | \ 0<\sgn{\gamma}\Im{z} < |\gamma|\right\}$, such that
\begin{enumerate}[(i)]
{
\item $f(t)=\alpha_t(A) \quad \forall t \in \mathbb{R}$;
\item $z \mapsto \varphi(f(z))$ is analytic in the strip $\strip{\gamma}$ for all $\varphi \in F$.
}\glsdisp{strip}{\hspace{0pt}}
\end{enumerate}
\end{definition}

\begin{notation}
	We will usually denote the function $f$ above as $f(z)=\alpha_z(A)$.
\end{notation}

Analytic elements will play an important role as these elements are usually easier to work with and the set of analytic elements is a dense subset, as we will see.

\begin{proposition}
\label{AnalDense}
Let $\alpha$ be a $\sigma(X,F)$-continuous group of isometries and denote \mbox{by $X_\alpha$} the set of entire elements of $X$ (analytic in the whole $\mathbb{C}$), then $$\overline{X_\alpha}^{\sigma(X,F)}= X .$$
\end{proposition}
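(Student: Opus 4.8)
The plan is to regularize each $A\in X$ by convolving the orbit $t\mapsto\alpha_t(A)$ with a Gaussian, producing a sequence of entire elements that approximates $A$ in the $\sigma(X,F)$-topology. Explicitly, for $n\in\mathbb{N}$ I would set
\[
A_n=\sqrt{\tfrac{n}{\pi}}\int_{-\infty}^{\infty}e^{-nt^2}\,\alpha_t(A)\,dt,
\]
interpreted as a $\sigma(X,F)$-integral: for each $\varphi\in F$ one has $\varphi(A_n)=\sqrt{\tfrac{n}{\pi}}\int_{-\infty}^{\infty}e^{-nt^2}\varphi(\alpha_t(A))\,dt$. This is well defined since $t\mapsto\varphi(\alpha_t(A))$ is continuous by $\sigma(X,F)$-continuity of $\alpha$ and bounded by $\|\varphi\|\,\|A\|$, because each $\alpha_t$ is an isometry; the Gaussian weight then makes the integrand absolutely integrable, and $\sqrt{\tfrac{n}{\pi}}\int_{-\infty}^{\infty} e^{-nt^2}\,dt=1$.

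First I would show $A_n\in X_\alpha$. The natural candidate for its analytic continuation is
\[
f_n(z)=\sqrt{\tfrac{n}{\pi}}\int_{-\infty}^{\infty}e^{-n(t-z)^2}\,\alpha_t(A)\,dt,\qquad z\in\mathbb{C}.
\]
Since $\alpha_s$ is a $\sigma(X,F)$-continuous isometry (so $\varphi\circ\alpha_s\in F$ for every $\varphi\in F$), it passes inside the integral, and the substitution $u=t+s$ together with the group law $\alpha_s\alpha_t=\alpha_{s+t}$ gives $\alpha_s(A_n)=f_n(s)$ for all real $s$, which is condition (i) of Definition \ref{Anal}. For condition (ii), the rapid Gaussian decay of $e^{-n(t-z)^2}$ on horizontal strips lets me differentiate $z\mapsto\varphi(f_n(z))$ under the integral sign (equivalently, apply Morera's theorem via Fubini), so $z\mapsto\varphi(f_n(z))$ is entire for every $\varphi\in F$. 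Hence each $A_n$ is an entire element.

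Next I would establish $A_n\to A$ in $\sigma(X,F)$. Fixing $\varphi\in F$ and rescaling $t=s/\sqrt{n}$,
\[
\varphi(A_n)-\varphi(A)=\frac{1}{\sqrt{\pi}}\int_{-\infty}^{\infty}e^{-s^2}\big(\varphi(\alpha_{s/\sqrt{n}}(A))-\varphi(A)\big)\,ds,
\]
where I used the normalization from the first step to absorb $\varphi(A)$ into the integral. The integrand is dominated by $2\|\varphi\|\,\|A\|\,e^{-s^2}\in L^1(\mathbb{R})$ and tends pointwise to $0$ as $n\to\infty$, by continuity of $t\mapsto\varphi(\alpha_t(A))$ at the origin. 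Dominated convergence then yields $\varphi(A_n)\to\varphi(A)$ for every $\varphi\in F$, i.e. $A_n\xrightarrow{\sigma(X,F)}A$. This proves $X\subset\overline{X_\alpha}^{\,\sigma(X,F)}$, and the reverse inclusion is trivial.

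The main obstacle is the very first step: guaranteeing that the $\sigma(X,F)$-integral defining $A_n$ is genuinely an element of $X$, rather than of some larger dual. When $F^\ast=X$ this is immediate, for the estimate $|\varphi(A_n)|\le\|A\|\,\|\varphi\|$ exhibits $A_n$ as a bounded functional on $F$, hence $A_n\in F^\ast=X$ with $\|A_n\|\le\|A\|$. When $F=X^\ast$, so that $\sigma(X,F)$ is the weak topology, the weak integral a priori only lives in $X^{\ast\ast}$; here I would invoke that a weakly continuous one-parameter group of isometries is automatically strongly continuous, so that $t\mapsto\alpha_t(A)$ is norm-continuous and the integral exists as a genuine ($X$-valued) Bochner integral. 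Once membership in $X$ is secured in both cases, the analyticity and convergence arguments above go through unchanged.
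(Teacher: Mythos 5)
Your proposal is correct and follows essentially the same route as the paper: Gaussian regularization $A_n=\sqrt{\tfrac{n}{\pi}}\int_{\mathbb{R}}e^{-nt^2}\alpha_t(A)\,dt$, entirety of the shifted integral $f_n(z)$, and $\sigma(X,F)$-convergence of $A_n$ to $A$ (where the paper uses an explicit $\varepsilon$--$\delta$ splitting of the integral you use dominated convergence after rescaling, and where the paper estimates the difference quotient directly you invoke differentiation under the integral sign via Morera). Your closing paragraph on why the weak integral actually defines an element of $X$ is in fact more careful than the paper's proof, which takes well-definedness for granted.
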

\begin{proof}
Let $A\in X$ and define
\begin{equation}
\label{formulaAE}
A_n(z)=\sqrt{\frac{n}{\pi}}\int_\mathbb{R}{e^{-n(t-z)^2}\alpha_t(A) dt}.
\end{equation}
Note that $A_n$ is well defined since $e^{-n(t-z)^2}$ is an integrable function and intuitively $A_n=A_n(0)$ will approach $A$, because the coefficient function approaches Dirac's delta distribution.

First, note that, for $y \in \mathbb{R}$,
\begin{equation}
\label{1}
\begin{aligned}
A_n(y)	&=\sqrt{\frac{n}{\pi}}\int_\mathbb{R}{e^{-n(t-y)^2}\alpha_t(A) dt}\\
				&=\sqrt{\frac{n}{\pi}}\int_\mathbb{R}{e^{-n{t^\prime}^2}\alpha_{t^\prime+y}(A) dt} \\
				&=\alpha_y\left(A_n\right).	\\
\end{aligned}
\end{equation}

In order to show density, suppose $\varphi\in F$ and $\varepsilon>0$ are given. Then there exists a $\delta>0$ such that $\|\varphi(\alpha_t(A)-A)\|=\|\varphi(\alpha_t(A)-\alpha_0(A))\|\leq \frac{\varepsilon}{2}$ for every $t\in \mathbb{R}$ with $|t|<\delta$, because $\alpha$ is $\sigma(X,F)$-continuous. Now, choose $N \in \mathbb{N}$ such that for all $n>N$
$$\sqrt{\frac{n}{\pi}}\int_{\mathbb{R}\setminus{(-\delta,\delta)}}{e^{-nt^2}dt} < \frac{\varepsilon}{4 \|\varphi\| \|A\|} \ .$$

It follows that, for all $n>N$,
\begin{equation}
\label{2}
\begin{aligned}
|\varphi(A_n-A)|	& =	\left| \varphi \left( \sqrt{\frac{n}{\pi}}\int_\mathbb{R}{e^{-nt^2}\alpha_t(A)dt}-\sqrt{\frac{n}{\pi}}\int_\mathbb{R}{e^{-nt^2}A \ dt} \right) \right | \\
									& \leq \left| \sqrt{\frac{n}{\pi}}\int_\mathbb{R\setminus(-\delta,\delta)}{e^{-nt^2}\varphi\left(\alpha_t(A)-A\right)dt}\right| \\
									& \hspace{2.2cm} + \left|\sqrt{\frac{n}{\pi}}\int_\mathbb{(-\delta,\delta)}{e^{-nt^2}\varphi\left(\alpha_t(A)-A\right)dt}\right| \\
									& \leq  \sqrt{\frac{n}{\pi}}\int_{\mathbb{R}\setminus(-\delta,\delta)}{e^{-nt^2}\|\varphi\| \left( \|\alpha_t(A)\|+\|\alpha_0(A)\|\right)dt} \\
									&\hspace{1.95cm}+\sqrt{\frac{n}{\pi}}\int_\mathbb{(-\delta,\delta)}{e^{-nt^2}\|\varphi\left(\alpha_t(A)-A\right)\|dt} \\
									& < \varepsilon.
\end{aligned}
\end{equation}

This shows that $A_n \rightarrow A$ in the topology $\sigma(X,F)$. Hence, all that remains to show is that the $A_n(z)$'s are entire analytic. 

Again, suppose that $\varphi\in F$. Using the inequality $\left|\varphi\left(\tau_t(A)\right)\right|\leq \|\varphi\| \|A\|$, we conclude that
$$\begin{aligned}
\left|\frac{\varphi(A_n(z))-\varphi(A_n(z_0))}{z-z_0}-\sqrt{\frac{n}{\pi}}\int_\mathbb{R}{2n(t-z)e^{-n(t-z)^2}\varphi(\alpha_t(A))}dt\right|\\
&\hspace{-7.6cm}=\sqrt{\frac{n}{\pi}}\left|\int_\mathbb{R}{\left(\frac{e^{-n(t-z)^2}-e^{-n(t-z_0)^2}}{z-z_0}-2n(t-z)e^{-n(t-z)^2}\right)\varphi(\alpha_t(A))}dt\right|\\
&\hspace{-7.6cm}\leq\|\varphi\| \|A\|\sqrt{\frac{n}{\pi}}\int_\mathbb{R}{\left|\frac{e^{-n(t-z)^2}-e^{-n(t-z_0)^2}}{z-z_0}-2n(t-z)e^{-n(t-z)^2}\right|dt} 
.
\end{aligned}$$
Notice that the integral on the right-hand side goes to zero when $z\to z_0$ and the entire analyticity follows.

\end{proof}

We say that a vector-valued complex function $f$ is strong-analytic at an interior point $z_0$ of its domain if exists the limit $\displaystyle \lim_{z\to 0}{\frac{f(z_0+z)-f(z_0)}{z}}$, where the limit understood as a norm-limit in the vector space. Analogously, we say that a function is weak-analytic if, for every continuous linear functional $\phi$ on the vector space, we have $\phi\circ f$ is analytic. The next result will relate the two concepts. The relation with Definition \ref{Anal} is direct and clear.

\begin{proposition}
	Let $X$ be a Banach space, $D\subset \mathbb{C}$ an open subset, and let $f:D\to X$ a function. Then, $f$ is weak-analytic if and only if $f$ is strong analytic
\end{proposition}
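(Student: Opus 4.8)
The plan is to prove the easy implication first and then concentrate on the substantial converse. For strong $\Rightarrow$ weak: if the norm limit $f'(z_0)=\lim_{z\to 0}\frac{f(z_0+z)-f(z_0)}{z}$ exists, then for any continuous linear functional $\phi$ on $X$ the continuity of $\phi$ lets me pass it through the limit, so $(\phi\circ f)'(z_0)=\phi(f'(z_0))$ exists for every $z_0\in D$; this is exactly weak analyticity. No completeness or boundedness is needed here.

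For the converse I would fix $z_0\in D$, choose a closed disk $\overline{B(z_0,r)}\subset D$ with positively oriented boundary circle $\Gamma$, and exploit that for every $\varphi\in X^\ast$ the scalar function $g_\varphi=\varphi\circ f$ is holomorphic on $D$, hence continuous and subject to Cauchy's integral formula. The first key step is to upgrade weak boundedness to norm boundedness along $\Gamma$ via the uniform boundedness principle: since each $g_\varphi$ is continuous on the compact set $\Gamma$, one has $\sup_{\zeta\in\Gamma}|\varphi(f(\zeta))|<\infty$ for every $\varphi$. Viewing $\{f(\zeta)\}_{\zeta\in\Gamma}$ as a family of functionals on the Banach space $X^\ast$ through the canonical embedding $X\hookrightarrow X^{\ast\ast}$, Banach--Steinhaus yields a finite $M\defeq\sup_{\zeta\in\Gamma}\|f(\zeta)\|$.

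The second key step is a Cauchy estimate for the difference of two difference quotients, made uniform in $\varphi$. For $|h|,|h'|\leq r/2$ and $\|\varphi\|\leq 1$, Cauchy's formula gives the scalar identity
$$\frac{g_\varphi(z_0+h)-g_\varphi(z_0)}{h}-\frac{g_\varphi(z_0+h')-g_\varphi(z_0)}{h'}=\frac{h-h'}{2\pi\iu}\int_\Gamma\frac{g_\varphi(\zeta)\,d\zeta}{(\zeta-z_0)(\zeta-z_0-h)(\zeta-z_0-h')}.$$
Bounding $|g_\varphi(\zeta)|\leq M$ on $\Gamma$ and using $|\zeta-z_0|=r$ together with $|\zeta-z_0-h|,\,|\zeta-z_0-h'|\geq r/2$ produces the estimate $\tfrac{4M|h-h'|}{r^2}$, which is independent of $\varphi$. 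Taking the supremum over $\|\varphi\|\leq 1$ and invoking Hahn--Banach in the form $\|x\|=\sup_{\|\varphi\|\leq 1}|\varphi(x)|$ shows that the difference quotients $\frac{f(z_0+h)-f(z_0)}{h}$ form a Cauchy net as $h\to 0$; completeness of $X$ then supplies a norm limit, which is the strong derivative at $z_0$.

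I expect the main obstacle to be precisely the Banach--Steinhaus step: weak analyticity delivers only control of $\varphi\circ f$ for each fixed $\varphi$ separately, and it is the uniform boundedness principle that converts this pointwise-in-$\varphi$ information into the single bound $M$. Without that uniform bound the subsequent Cauchy estimate could not be made $\varphi$-independent, and the passage from weak to strong (norm) convergence of the difference quotients would fail. Everything else is a routine application of Cauchy's formula and the completeness of $X$.
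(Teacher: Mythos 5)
Your proof is correct and follows essentially the same route as the paper's: Cauchy's integral formula applied to the difference of two difference quotients over a circle around $z_0$, a supremum over the unit ball of $X^\ast$ converted into a norm estimate via Hahn--Banach, and completeness of $X$ to produce the strong derivative. The one point where you go beyond the paper is the Banach--Steinhaus step: the paper's displayed estimate simply uses $\sup_{\zeta\in C}\|f(\zeta)\|$ without justifying its finiteness, whereas you derive $M=\sup_{\zeta\in\Gamma}\|f(\zeta)\|<\infty$ from weak boundedness through the canonical embedding $X\hookrightarrow X^{\ast\ast}$, so your version actually closes a gap the paper leaves implicit.
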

\begin{proof}
	$(\Rightarrow)$ For every continuous linear functional $\phi\in X^\ast$. For every $z_0\in D$ there exists $r>0$ such that $D(z_0,r)\subset \overline{D(z_0,2r)}\subset D$. Now, for every element $z,w\in D(0,r)$, the Cauchy Integral formula for the circle $C=\{\zeta\in \mathbb{C} \ | \ |\zeta-z_0|=2r\}$ gives us that
	$$\begin{aligned}
	\left|\phi\left(\frac{f(z_0+z)-f(z_0)}{z}-\frac{f(z_0+w)-f(z_0)}{w}\right)\right|&=\frac{1}{2\pi}\left|\int_{C} {\frac{(z-w)\phi(f(\zeta))}{ (\zeta-z_0)(\zeta-z_0-z)(\zeta-z_0-w)}}d\zeta\right|\\
	&\leq \frac{|z-w|}{r^2} \sup_{\zeta\in C} |\phi(f(\zeta))|\\
	&\leq\frac{|z-w|}{r^2} \|\phi\|\sup_{\zeta\in C} \|f(\zeta)\|\\
	\end{aligned}$$ 
	Taking the supremum over all $\phi \in X^\ast$ such that $\|\phi\|=1$, we obtain, as a corollary of the Hahn-Banach Theorem, that
	\begin{equation}
	\label{calculationx42}
	\left\|\frac{f(z_0+z)-f(z_0)}{z}-\frac{f(z_0+w)-f(z_0)}{w}\right\|\leq\frac{|z-w|}{r^2}\sup_{\zeta\in C} \|f(\zeta)\|.
	\end{equation}
	
	Finally, applying the inequality above, we conclude that $x_n=\frac{f(z_0+z_n)-f(z_0)}{z_n}$, where $(z_n)_{n\in\mathbb{N}}\subset \mathbb{C}$ is such that $z_n\to 0$, defines a Cauchy sequence in the Banach space $X$. Hence, it is a convergent sequence for $x\in X$. It is easy to see that equation \eqref{calculationx42} guaranties that
	$$\lim_{z\to 0}\frac{f(z_0+z)-f(z_0)}{z}=x.$$
	
	$(\Leftarrow)$ It is trivial.
	
\end{proof}

\section{KMS States and Some of their Properties}

A general definition of KMS states can be found in any textbook of Operator Algebras such as \cite{Bratteli2}, \cite{KR86}, and \cite{Takesaki2002}. Here, we will follow mostly \cite{Bratteli2}, with just minor proofs and modifications are due to the author in order to suit better the ensuing application on this thesis.

Since we will need to use the Fourier Transform soon, let us first fix some notation.

\begin{notation}\glsdisp{fourierT}{\hspace{0pt}}
We denote by $\hat{f}$ the Fourier transform of $f$, a Schwartz function. The Fourier transform is a uniformly continuous (vector space) automorphism on Schwartz space, defined by
$$\hat{f}(k)=\frac{1}{\sqrt{2\pi}}\int_{\mathbb{R}}{f(x)e^{-\iu k x}dx}.$$
Note that we have also the inverse Fourier transform
$$f(x)=\frac{1}{\sqrt{2\pi}}\int_{\mathbb{R}}{\hat{f}(k)e^{\iu k x}dk}.$$

In particular, when $f$ is an infinitely differentiable function with compact support, the Fourier transform is well defined. We denote
 $$\mathcal{C}^\infty_0(\mathbb{R})=\bigl\{f:\mathbb{R}\to \mathbb{R} \mid \textrm{f is infinitely differentiable with compact support} \bigr\}.$$
\end{notation}

Before proceeding to a central theorem, it is necessary to analyse the consequences of a function $f$ being the Fourier transform of an infinitely differentiable function $\hat{f}$ with compact support.

\begin{lemma}
	\label{Schwartz_PaleyWiener}
If $f$ is the inverse Fourier transform of a function $\hat{f}\in \mathcal{C}^\infty_0(\mathbb{R})$ with $\supp(\hat{f})\subset[-M,M]$. Then $f$ is an entire function and, for each $n\in\mathbb{N}$, there exists some $K_n>0$ such that
$$|f(z)| \leq  K_n \frac{e^{ M |\Im z|}}{1+|z|^n} \qquad \forall z\in \mathbb{C}.$$
\end{lemma}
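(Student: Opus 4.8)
The plan is to work directly with the integral representation
$$f(z) = \frac{1}{\sqrt{2\pi}} \int_{-M}^{M} \hat{f}(k)\, e^{\iu k z}\, dk,$$
which is legitimate since $\supp(\hat{f}) \subset [-M,M]$, and to extract both the analyticity and the decay estimate from it. First I would establish that $f$ is entire. For each fixed $k$ the map $z \mapsto e^{\iu k z}$ is entire, and on the compact interval $[-M,M]$ the integrand together with its difference quotients is dominated uniformly by an integrable function; hence one may either differentiate under the integral sign or invoke Morera's theorem, interchanging $\int_\gamma$ with $\int_{-M}^{M}$ by Fubini and using $\oint_\gamma e^{\iu k z}\, dz = 0$, to conclude that $f$ extends to an entire function.

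Next comes the crude bound. Writing $z = x + \iu y$, one has $|e^{\iu k z}| = e^{-ky} \leq e^{M|y|}$ for $|k| \leq M$, so that
$$|f(z)| \leq \frac{1}{\sqrt{2\pi}} \int_{-M}^{M} |\hat{f}(k)|\, e^{M|\Im{z}|}\, dk = C_0\, e^{M|\Im{z}|}, \qquad C_0 = \tfrac{1}{\sqrt{2\pi}} \|\hat{f}\|_{L^1}.$$
This already settles the case $n = 0$.

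The decay in $|z|$ comes from integrating by parts. Since $\hat{f} \in \mathcal{C}^\infty_0(\mathbb{R})$ has support contained in $[-M,M]$, the function $\hat{f}$ and all of its derivatives vanish at the endpoints $\pm M$, so the boundary terms disappear at every step. Using $\frac{d}{dk} e^{\iu k z} = \iu z\, e^{\iu k z}$ and iterating $n$ times yields, for $z \neq 0$,
$$\int_{-M}^{M} \hat{f}(k)\, e^{\iu k z}\, dk = \left(\frac{-1}{\iu z}\right)^n \int_{-M}^{M} \hat{f}^{(n)}(k)\, e^{\iu k z}\, dk,$$
whence $|f(z)| \leq C_n\, |z|^{-n}\, e^{M|\Im{z}|}$ with $C_n = \frac{1}{\sqrt{2\pi}} \|\hat{f}^{(n)}\|_{L^1}$. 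Finally I would combine the two estimates: splitting into the regions $|z| \leq 1$ and $|z| \geq 1$ and using $1 + |z|^n \leq 2\max(1, |z|^n)$, the bounds $|f(z)| \leq C_0\, e^{M|\Im{z}|}$ and $|f(z)| \leq C_n\, |z|^{-n}\, e^{M|\Im{z}|}$ together give $|f(z)|\,(1 + |z|^n) \leq K_n\, e^{M|\Im{z}|}$ with $K_n = 2\max(C_0, C_n)$, which is precisely the claimed inequality.

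The only genuinely delicate points are the justification that $\hat{f}$ and all its derivatives vanish at $\pm M$, which guarantees the vanishing of the boundary terms in the repeated integration by parts, and the routine interchange of integration with the limit or contour needed to establish entirety; both are standard consequences of smoothness together with compact support, so I expect the argument to be short once these are dispatched.
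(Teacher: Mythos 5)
Your proof is correct and follows essentially the same route as the paper: $n$-fold integration by parts (with boundary terms killed by the compact support of $\hat{f}$ and smoothness forcing all derivatives to vanish at $\pm M$) yields the bound $C_n|z|^{-n}e^{M|\Im{z}|}$, which is then combined with the trivial $n=0$ estimate by splitting at $|z|=1$. The only cosmetic difference is that you handle the region $|z|\leq 1$ via the explicit constant $C_0=\tfrac{1}{\sqrt{2\pi}}\|\hat{f}\|_{L^1}$, where the paper instead invokes boundedness of $f$ on the compact disc; both are instances of the same idea.
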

\begin{proof}
Since the $\hat{f}$ is infinitely differentiable, it is uniformly infinitely continuous differentiable in $[-M,M]$. Thus, we can explicitly write
$$\frac{df}{dz}\left(z\right)=\frac{\iu}{\sqrt{2\pi}}\int_{\mathbb{R}}{p\hat{f}(p)e^{\iu pz} dp}.$$
Therefore, $f$ is an entire analytic function.

Using integration by parts and noting that $\displaystyle \supp\left(\frac{d^n\hat{f}}{dp^n}\right)\subset \supp(\hat{f})$ we obtain
\begin{equation}
\label{estimativa}
\begin{aligned}
|f(z)|	&=\left| \frac{1}{\sqrt{2\pi}}\int_{\mathbb{R}}{\hat{f}(p)\frac{(-\iu)^n}{z^n}\frac{d^n}{dp^n}\left(e^{\iu pz}\right) dp} \right| \\
				&=\left| \frac{1}{\sqrt{2\pi}}\int_{\mathbb{R}}{\frac{d^n\hat{f}}{dp^n}(p)\frac{\iu^n}{z^n}e^{\iu pz} dp} \right|\\
				&\leq\frac{e^{ M |\Im z|}}{|z|^n}\left| \frac{1}{\sqrt{2\pi}}\int_{[-M,M]}{\frac{d^n\hat{f}}{dp^n}(p)e^{\iu p \Re z} dp} \right|\\
				&=K^1_n \frac{e^{ M |\Im z|}}{|z|^n} \\
\end{aligned}
\end{equation}

Finally, it follows from Weierstrass's theorem that within the compact set $\{z\in\mathbb{C} \mid |z|\leq 1\}$ there exist constants $K^2_n$ such that
$$|f(z)| \leq  K^2_n \frac{e^{ M |\Im z|}}{1+|z|^n}.$$
When $|z|>1$, we can use the previous inequality
$$|f(z)|\leq K^1_n \frac{e^{ M |\Im z|}}{|z|^n} \leq 2K^1_n \frac{e^{ M |\Im z|}}{2|z|^n} \leq 2K^1_n \frac{e^{ M |\Im z|}}{1+|z|^n}.$$
Hence, the desired inequality is valid for all $z\in \mathbb{C}$ if we take $K_n=\max{\{2K^1_n,K^2_n\}}$.

Since $p\hat{f}$ is infinitely continuous differentiable with $\supp(p\hat{f}) \subset \supp(\hat{f})$, we have
$$\left|\frac{df}{dz}\left(z\right)\right|\leq C_1 \frac{\left(e^{M\Im{z}}-e^{-M\Im{z}}\right)}{|\Im{z}|} \frac{2M}{|\Re{z}|}.$$
It follows from equation \eqref{estimativa} that, for fixed $y \in \mathbb{R}$, $\displaystyle\lim_{x\to \infty}{f(x+\iu y)}=0$ and
$$\begin{aligned}
|f(x+\iu y)|	&=\left| \int^{\infty}_{x}{\frac{df}{dz}\left(x^\prime+\iu y\right) dx^\prime} \right|.
\end{aligned}$$

\end{proof}

\begin{lemma}
\label{tauInv}
Let $(\calgebra, \tau)$ be a $C^\ast$-dynamical system and $\omega$ a state on $\calgebra$ such that, for any $A,B \in \calgebra$, there exists $\beta\neq0$ and a complex function $F_{A,B}$ which is analytic in \mbox{$\mathcal{D}_\beta=\left\{z\in \mathbb{C} \mid 0<  \sgn{\beta} \Im z<|\beta|\right\}$} and continuous on $\overline{\mathcal{D}_\beta}$ satisfying
\begin{equation}
\label{eqKMS3.0}
\begin{aligned}
F_{A,B}(t) &=& \omega(A\tau_t(B)) \ \forall t\in \mathbb{R}, \\
F_{A,B}(t+\iu \beta)& =& \omega(\tau_t(B)A) \ \forall t\in \mathbb{R}. \\
\end{aligned}
\end{equation}
Then, $\omega$ is $\tau$ invariant, \ie, $\omega(\tau_t(A))=\omega(A)$ for all $t \in \mathbb{R}$ and for all $A\in \calgebra$.
\end{lemma}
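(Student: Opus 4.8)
The plan is to reduce the statement to showing that, for each fixed $C \in \calgebra$, the bounded continuous function $g(t) := \omega(\tau_t(C))$ is constant, which is precisely invariance. The one tool I would add to the hypotheses is Lemma \ref{Schwartz_PaleyWiener}, used to smear the boundary relations \eqref{eqKMS3.0} and then shift contours. Fix $C$ and $A \in \calgebra$ and let $\beta \ne 0$, say $\beta > 0$, be the value furnished for this pair; set $F := F_{A,C}$, analytic on $\mathcal{D}_\beta$, continuous on $\overline{\mathcal{D}_\beta}$ and bounded there by $\|A\|\,\|C\|$ (the interior bound coming from the boundary bounds via Phragm\'en--Lindel\"of). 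For $\hat h \in \mathcal{C}^\infty_0(\mathbb{R})$ let $h$ be its inverse Fourier transform; Lemma \ref{Schwartz_PaleyWiener} makes $h$ entire with $|h(z)| \le K_n e^{M|\Im z|}/(1+|z|^n)$, the exponential factor being harmless on a strip of bounded width. Applying Cauchy's theorem to $h(z)F(z)$ on the rectangle with corners $\pm R,\ \pm R + i\beta$ and letting $R \to \infty$ --- the vertical sides vanishing because $F$ is bounded and $h$ decays faster than any polynomial --- yields the smeared identity
$$\int_{\mathbb{R}} h(t)\,\omega(A\tau_t(C))\,dt = \int_{\mathbb{R}} h(t+i\beta)\,\omega(\tau_t(C)A)\,dt.$$

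Next I would eliminate $A$. Adjoining a unit to $\calgebra$ (extending $\tau$ by $\tau_t(\mathbbm{1})=\mathbbm{1}$ and $\omega$ to a state), the choice $A=\mathbbm{1}$ turns both sides into integrals of $g$, giving
$$\int_{\mathbb{R}}\bigl(h(t)-h(t+i\beta)\bigr)\,g(t)\,dt = 0.$$
In the non-unital case the same relation follows by taking $A = E_\lambda$ along an approximate identity and passing to the limit, since $\omega(E_\lambda\tau_t(C)) \to g(t)$ and $\omega(\tau_t(C)E_\lambda) \to g(t)$ while the integrands are dominated by $\|C\|\,|h|$. Thus $g$ is orthogonal to every function of the form $h - h(\,\cdot\, + i\beta)$.

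Finally, the inversion formula shows that $h - h(\,\cdot\, + i\beta)$ has Fourier transform $\hat h(k)\bigl(1 - e^{-k\beta}\bigr)$. Since $1-e^{-k\beta}$ vanishes only at $k=0$, and there to first order, the products $\hat h(k)\bigl(1-e^{-k\beta}\bigr)$ run over all functions in $\mathcal{C}^\infty_0(\mathbb{R})$ vanishing at the origin as $\hat h$ varies; hence the orthogonality forces the tempered distribution $\hat g$ to be supported at $\{0\}$. A distribution supported at a point is a finite combination of $\delta_0$ and its derivatives, so $g$ is a polynomial; being bounded, $g$ is constant, and evaluation at $t=0$ gives $\omega(\tau_t(C)) = g(0) = \omega(C)$ for all $t$. (As a cross-check, in the unital case $A=\mathbbm{1}$ already gives $F(t)=F(t+i\beta)=g(t)$, so $F$ extends by $i\beta$-periodicity to a bounded entire function and Liouville's theorem makes it constant; I prefer the Fourier route because it uses Lemma \ref{Schwartz_PaleyWiener} directly and adapts painlessly to the non-unital setting.) The step demanding the most care is the contour shift: one must first secure boundedness of $F$ on the full, unbounded strip from its boundary bounds via Phragm\'en--Lindel\"of, and then ensure the vertical contributions vanish as $R\to\infty$ --- and it is exactly the quantitative decay in Lemma \ref{Schwartz_PaleyWiener} that closes this gap.
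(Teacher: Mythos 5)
Your proposal is correct in substance, but it takes a genuinely different route from the paper's own proof. The paper's argument is exactly the one you relegate to a parenthetical cross-check: in the unital case it sets $A=\mathbbm{1}$, so that the two boundary values of $F_{\mathbbm{1},B}$ coincide, invokes the edge-of-the-wedge theorem to extend $F_{\mathbbm{1},B}$ to an entire function with period $\iu\beta$, asserts boundedness on the closed strip, and concludes constancy by Liouville; the non-unital case is dismissed with one sentence about an approximate identity. You instead first prove the smeared identity — essentially the implication $(iii)\Rightarrow(iv)$ of Theorem \ref{KMSequi}, localized to the one pair you need — via Lemma \ref{Schwartz_PaleyWiener} and a contour shift, and then work on the Fourier side: the transform of $h-h(\,\cdot\,+\iu\beta)$ is $\hat h(k)\bigl(1-e^{-k\beta}\bigr)$, and since $(1-e^{-k\beta})/k$ is smooth and nonvanishing on $\mathbb{R}$ these products exhaust the test functions vanishing at the origin, so $\supp\hat g\subseteq\{0\}$ and $g$ is constant. (In fact your argument gives slightly more than you use: since your annihilated test functions need only vanish at $0$, not to infinite order, you get $\hat g=c\,\delta_0$ outright and can skip the polynomial-plus-boundedness step.) What each approach buys: the paper's is shorter and purely complex-analytic; yours handles the non-unital case more honestly, because the smeared identities pass to the limit along the approximate identity by domination, whereas the paper's "invariance on a dense subset, then continuity" is left vague — note also that your opening remark about "adjoining a unit" is a red herring, since the hypothesis gives no $F_{\mathbbm{1},C}$ for the unitization; it is precisely your $E_\lambda$ limit that replaces it, so lead with that.

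Two caveats, neither specific to you. First, the boundedness of $F$ on the unbounded closed strip — needed for your vertical-side estimates and equally for the paper's Liouville step — does not follow from the boundary bounds "via Phragm\'en--Lindel\"of" as you claim: without an a priori growth restriction in the interior, Phragm\'en--Lindel\"of is not applicable (transport $\exp(\exp(z))$, which has modulus one on the lines $\Im z=\pm\pi/2$ but blows up on the real axis, to $\strip{\beta}$). The hypothesis of Lemma \ref{tauInv}, like condition $(iii)$ of Theorem \ref{KMSequi}, does not include boundedness, and the paper glosses over the identical point; so this is a shared defect rather than a flaw of your route, but you should not advertise it as "closed." Second, approximate identities are nets and dominated convergence can fail for nets; your limit still goes through because $\lambda\mapsto E_\lambda B-B$ converges to zero uniformly for $B$ in compact subsets of $\calgebra$, so you split the $t$-integral into a compact window, where $\{\tau_t(C)\}$ is compact, plus a tail controlled by the integrability of $h$ and $h(\,\cdot\,+\iu\beta)$.
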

\begin{proof}
First suppose $\calgebra$ has an identity, then let $F_{\mathbbm{1},A}$ as in Theorem \ref{KMSequi} $(iii)$.
From equation \eqref{eqKMS3.0} we see that
$$\begin{aligned}
	F_{\mathbbm{1},A}(t)				&= \omega(\mathbbm{1}\tau_t(A)) \\
									&= \omega(\tau_t(A)\mathbbm{1}) \\
									&= F_{\mathbbm{1},A}(t+\iu \beta). \\
\end{aligned}$$

It follows from the Edge-of-the-Wedge Theorem that $F_{\mathbbm{1},A}$ can be extended to an entire analytic function with period $\iu \beta$. But, since $F_{\mathbbm{1},A}$ is bounded in the closed \mbox{strip $\overline{\strip{\beta}}$}, its extension is bounded as well. Using Liouville's theorem, it follows that the function $F_{\mathbbm{1},A}$ is constant.

If $\calgebra$ does not have an identity, use an approximation identity to conclude the invariance on a dense subset, and thus in the whole set using continuity.

\end{proof}

The conclusion of this lemma is a first expected property of a thermal equilibrium state, since the observables should have stopped evolving. For sure, this condition is not sufficient to ensure that a state is in thermal equilibrium. In fact, we also know that there exists another interesting class of states, called passive states, that are invariant by the dynamics but are not KMS in general.

The equations \eqref{eqKMS3.0} are known as the KMS condition. It first appeared in works by R. Kubo, P. Martin and J. Schwinger, \cite{kubo57} and \cite{martin59}, and it was studied by Haag, Hugenholtz and Winnink, \cite{haag67}, in the context of equilibrium states in the thermodynamic limit.

\begin{theorem}
\label{KMSequi}
Let $(\calgebra, \tau)$ be a $C^\ast$-dynamical system, $\beta \in \mathbb{R}$, and $\omega$ a state over $\calgebra$. The following statements are equivalent

\begin{enumerate}[(i)]

\item		There exists a dense subset $\tilde{\calgebra}$ of entire analytic elements such that
				\begin{equation}
				\label{eqKMS1}
				\omega(AB)=\omega\left(B\tau_{\iu \beta}(A)\right) \ \forall A,B \in \tilde{\calgebra};
				\end{equation} 
				
\item 	For any $A,B \in \calgebra$ there exists a complex function $F_{A,B}$ which is analytic in the strip $\strip{\beta}=\left\{z\in \mathbb{C} \mid 0<\sgn{\beta} \Im z<|\beta|\right\}$ and continuous and bounded on $\overline{\mathcal{D}_\beta}$ satisfying
				\begin{equation}
				\label{eqKMS2}
				\begin{aligned}
				F_{A,B}(t) &=& \omega(A\tau_t(B)) \ \forall t\in \mathbb{R}, \\
				F_{A,B}(t+\iu \beta)& =& \omega(\tau_t(B)A) \ \forall t\in \mathbb{R}; \\
				\end{aligned}
				\end{equation}
				
\item	For any $A,B \in \calgebra$ there exists a complex function $F_{A,B}$ which is analytic in $\mathcal{D}_\beta=\left\{z\in \mathbb{C} \mid 0<\sgn{\beta} \Im z<|\beta|\right\}$ and continuous on $\overline{\mathcal{D}_\beta}$ satisfying
				\begin{equation}
				\label{eqKMS3}
				\begin{aligned}
				F_{A,B}(t) &=& \omega(A\tau_t(B)) \ \forall t\in \mathbb{R}, \\
				F_{A,B}(t+\iu \beta)& =& \omega(\tau_t(B)A) \ \forall t\in \mathbb{R}; \\
				\end{aligned}
				\end{equation}
				
\item For any $A,B \in \calgebra$ and for any $f$ such that $\hat{f}\in \mathcal{C}^\infty_0(\mathbb{R})$, where $\hat{f}$ is the Fourier transform of $f$, the following relation is true

\begin{equation}
\label{eqKMS4}
\int_{\mathbb{R}}{f(t)\omega\left(A\tau_t (B)\right)dt}=\int_{\mathbb{R}}{f(t+\iu \beta)\omega\left(\tau_t (B)A\right)dt};
\end{equation}
\item For any $A\in \calgebra$ and for any $f$ such that $\hat{f}\in \mathcal{C}^\infty_0(\mathbb{R})$, where $\hat{f}$ is the Fourier transform of $f$, the measures $\mu_A(\hat{f})$ and $\nu_A(\hat{f})$, defined by
$$\begin{aligned}
\mu_A(\hat{f})&=&\int_{\mathbb{R}}{f(t)\omega(A^\ast\tau_t(A))dt},\\
\nu_A(\hat{f})&=&\int_{\mathbb{R}}{f(t)\omega(\tau_t(A)A^\ast)dt};
\end{aligned}$$
are equivalent positive Radon measures on $\mathbb{R}$ with Radon-Nikodym derivative
$$\frac{d\mu_A}{d\nu_A}(p)=e^{-\beta p};$$
			
\item Let $\delta$ be the infinitesimal generator of $\tau$. Then
\begin{equation}
\label{RAS}
\omega(A^\ast A) \log{\left(\frac{\omega(A^\ast A)}{\omega(A A^\ast)}\right)} \leq -\iu \beta \omega(A^\ast\delta(A)) \quad \forall A\in \Dom{\delta},
\end{equation}
where we are using
$$ x \log{\frac{x}{y}}=		\begin{cases} 
														x \log{\frac{x}{y}} & \textrm{ if } x >0, \ y>0\\
														0 									& \textrm{ if } x=0, \ y\geq 0\\
														+\infty 						& \textrm{ if } x >0, \ y=0\\
													\end{cases}$$

\end{enumerate}
\end{theorem}

\begin{proof}
$(i)\Rightarrow (ii)$ By definition, for all $A,B \in \tilde{\calgebra}$, $t\mapsto \omega(B\tau_t(A))$ admits an entire analytic extension (see Definition \ref{Anal}), so define
$$F_{A,B}(z)=\omega(B \tau_z(A)).$$
This function is entire analytic and satisfies \eqref{eqKMS2}.

Because $z\mapsto \tau_{z}(A)$ is analytic, we know it must be limited in the compact set $\{z\in \mathbb{C} | \ \Re{z}=0, \ 0 \leq \Im{z}\leq \beta\}$ and $F_{A,B}$ must be analytic in $\overline{\mathcal{D}_\beta}$ as well, since

$$|F_{A,B}(t+\iu y)|=\left|\omega\left(B\tau_t\left(\tau_{\iu y}(A)\right)\right)\right| \leq \|B\| \|\tau_{\iu y}(A)\| \leq \|B\| \ \sup_{0\leq y \leq \beta}{\|\tau_{\iu y}(A)\|}.$$

Now, for general $A,B \in \calgebra$, there exist sequences $(A_n)_{n\in\mathbb{N}},(B_n)_{n\in\mathbb{N}} \subset \calgebra$ such that $A_n\rightarrow A$ and $B_n \rightarrow B$, thanks to Proposition \ref{AnalDense}.

So $(F_n)_{n\in\mathbb{N}}$, $F_n=F_{A_n,B_n}$, is a sequence of entire analytic functions which are continuous and bounded in $\overline{\mathcal{D}_\beta}$, furthermore $F_n-F_m$ is also entire analytic, and thanks to the Maximum Modulus Principle it must assume its maximum at $\partial \overline{\mathcal{D}_\beta}$, so
$$\begin{aligned}
|F_n(z)-F_m(z)|
						& \leq \max{\left\{\sup_{t\in \mathbb{R}}{|(F_n-F_m)(t)|},\sup_{t\in \mathbb{R}}{|(F_n-F_m)(t+\iu \beta)}\right\}} \\
						& \hspace{-1.8cm} \leq \max{\left\{ \sup_{t\in \mathbb{R}}{|\omega(B_n\tau_{t}(A_n))-\omega(B_m\tau_{t}(A_m))|},\sup_{t\in \mathbb{R}}{|\omega(\tau_{t}(A_n)B_n)-\omega(\tau_{t}(A_m)B_m)|}\right\}} \\
						& \hspace{-1.8cm}\leq \sup_{t\in \mathbb{R}}{|\omega((B_n-B)\tau_{t}(A_n))+\omega(B\tau_{t}(A_n-A_m))+\omega((B-B_m)\tau_{t}(A_m))|} \\
						&\hspace{-1.2cm} +\sup_{t\in \mathbb{R}}{|\omega(\tau_{t}(A_n)(B_n-B))+\omega(\tau_{t}(A_n-A_m)B)+\omega(\tau_{t}(A_m)(B-B_m))|}\\
						&\hspace{-1.8cm}\leq 2\left(\|B_n-B\| \|A_n\|+\|B\| \|A_n-A_m\|+\|B-B_m\|\|A_m\|\right).
\end{aligned}$$

So, $(F_n)_{n\in\mathbb{N}}$ is a Cauchy sequence with the uniform norm, which means that its limit is a continuous bounded function on $\overline{\mathcal{D}_\beta}$ and analytic in $\mathcal{D}_\beta$. Hence, it is natural to define, for general $A,B \in \calgebra$,
$$F_{A,B}(z)=\lim_{n\rightarrow \infty}{F_{A_n,B_n}(z)},$$
where $(A_n)_{n\in\mathbb{N}},(B_n)_{n\in\mathbb{N}} \subset \tilde{\calgebra}$ are such that $A_n\rightarrow A$ and $B_n \rightarrow B$.

It follows from the uniform convergence that $F_{A,B}$ satisfies \eqref{eqKMS2}.

$(ii)\Rightarrow (iii)$ The two conditions are equal except for the requirement in $(ii)$ that $F_{A,B}$ is bounded in $\overline{\mathcal{D}_\beta}$, so $(iii)$ follows trivially.

$(iii)\Rightarrow (iv)$ Because $F_{A,B}(z) =\omega(A\tau_z(B))$ is analytic in $\mathcal{D}_\beta$ we can use Cauchy's Theorem in the region $\mathcal{D}^n_\beta=\left\{z\in \mathbb{C} \mid \sgn{\beta} \Im z<|\beta| \textrm{ and } |\Re z| \leq n \ \right\}$.

The importance of Lemma \ref{Schwartz_PaleyWiener} becomes clear: two of the boundary integrals must vanish when $n \rightarrow +\infty$, since $F_{A,B}$ is bounded in $\overline{\strip{\beta}}$ and
$$|f(z)| \leq  K_n \frac{e^{ M |\Im z|}}{1+|z|^n}.$$

Writing the remaining integrals we have the desired property:
\begin{equation}
\label{eq1}
\begin{aligned}
0=\int_{\partial \mathcal{D}_\beta}{f(z)\omega(A\tau_z B)dz}	& = \lim_{n\rightarrow \infty}{\int_{\partial \mathcal{D}^n_\beta}{f(z)\omega(A\tau_z B)}dz}\\
																														& = \lim_{n\rightarrow \infty}{\int_{\mathcal{D}^n_\beta}{f(z)F_{A,B}(z)}dz} \\
																														& = \int_{\mathbb{R}}{f(t)F_{A,B}(t)dz}- \int_{\mathbb{R}}{f(t+\iu \beta)F_{A,B}(t+\iu \beta)dz} \\
																														& = \int_{\mathbb{R}}{f(t)\omega(A\tau_t B)dz}- \int_{\mathbb{R}}{f(t+\iu \beta)\omega(\tau_t(B) A)}. \\
\end{aligned}
\end{equation}

$(iv)\Rightarrow (i)$ We are again going to construct a sequence of functions with compact support so that the inverse Fourier transforms approaches Dirac's delta distribution.
We know that 
$$h(x)\doteq \begin{cases} 0											 &\textrm{ if } x\leq 0\\
											e^{-\frac{1}{x^2}+1}   	 &\textrm{ if } x > 0\\ \end{cases}$$
is an infinitely continuous and differentiable function, so $f(x)=1-h(1-h(1-x))$ is a continuous and differentiable function which is decreasing, $f(0)=1$ and $f(1)=0$.

Finally, let 
$$\hat{f_n}(x)=\begin{cases} 	1							&\textrm{ if } |x|\leq n,\\
															f(|x|-n)			&\textrm{ if } n<|x|< n+1,\\
															0							&\textrm{ if } |x|\geq n+1.\\ \end{cases}$$

This definition leads us to

$$\lim_{n\rightarrow \infty}{\int_{\mathbb{R}}{f_n(x)g(x)dx}}=g(0).$$

So, changing variables in equation \eqref{eq1} and using the sequence $f_n\in \mathcal{C}_0^\infty(\mathbb{R})$ we obtain

$$\omega(A B)=\omega(\tau_{-\iu \beta}(B)A) \quad \forall B\in \calgebra_\tau, A \in \calgebra.$$

Equivalently,

$$\omega(B \tau_{\iu \beta}(A))=\omega(A B) \quad \forall B\in \calgebra_\tau, A \in \calgebra.$$

$(iv)\Leftrightarrow (v)$ First, let $(\hilbert_\omega,\pi_\omega, \Omega_\omega)$ be a representation of the algebra $\calgebra$, by Lemma \ref{tauInv} and Riesz Theorem, for each $A\in \calgebra$, there exists a unitary operator $U_\omega(t)\in B(\hilbert_\omega)$ such that $$\pi_\omega\left(\tau_t(A)\right)= U_\omega(t)\circ\pi_\omega(A)\circ U_\omega(t)^{-1}.$$

Let
\begin{equation}
\label{decespec}
U_\omega(t)=\int{e^{-\iu p t} dE_p}
\end{equation}
be the spectral decomposition of such a unitary operator.
\begin{equation}
\label{medida}
\begin{aligned}
\mu_A(\hat{f})	
		&=\int_{\mathbb{R}}{f(t)\omega(A^\ast\tau_t(A))dt}\\
		&= \frac{1}{\sqrt{2\pi}}\int_{\mathbb{R}}{\int_{\mathbb{R}}{e^{\iu k t}\hat{f}(k)dk}\ip{\pi_\omega(A)\Omega_\omega}{\pi_\omega(\tau_t(A))\Omega_\omega}dt} \\
		&= \frac{1}{\sqrt{2\pi}}\int_{\mathbb{R}}{\int_{\mathbb{R}}{e^{\iu k t}\hat{f}(k)dk}\ip{\pi_\omega(A)\Omega_\omega}{U_\omega(t)\pi_\omega(A)\Omega_\omega}dt} \\
		&= \frac{1}{\sqrt{2\pi}}\int_{\mathbb{R}}{\int_{\mathbb{R}}{e^{\iu k t}\hat{f}(k)dk}\ip{\pi_\omega(A)\Omega_\omega}{\int_{\mathbb{R}}{e^{-\iu p t} dE_p}\pi_\omega(A)\Omega_\omega}dt} \\
								&= \frac{1}{\sqrt{2\pi}}\int{\int_{\mathbb{R}}{\int_{\mathbb{R}}{e^{\iu (k-p) t}\hat{f}(k)\ip{\pi(A)\Omega_\omega}{dE_p \pi_\omega(A)\Omega_\omega}dk dt}}}\\
								&=\int{\hat{f}(k)\ip{\pi(A)\Omega_\omega}{ dE_k \pi_\omega(A)\Omega_\omega}}.
\end{aligned}
\end{equation}
So $\mu_A$ is a positive functional in $\mathcal{C}^{\infty}_0(\mathbb{R})$ and by the Riesz-Markov theorem there exists a unique Radon measure $d\mu_A$ such that
$$\mu_A(\hat{f})=\int_{\mathbb{R}}{\hat{f}d\mu_A}.$$

By the same steps we see that there exists a unique Radon measure $d\nu_A$ such that
$$\nu_A(\hat{f})=\int_{\mathbb{R}}{\hat{f}d\nu_A}.$$

Now, it follows trivially by the definition of Fourier transform that
$$\begin{aligned}
\widehat{e^{-\beta p}f}(t)		
								&=\frac{1}{\sqrt{2\pi}}\int_{\mathbb{R}}{e^{-\beta p}\hat{f}(p)e^{\iu p t}dt}\\
								&=\frac{1}{\sqrt{2\pi}}\int_{\mathbb{R}}{\hat{f}(p)e^{\iu(t+\iu \beta) p}dt}\\
								&=\hat{f}(t+\iu \beta).\\
\end{aligned}$$
Using now the hypothesis
$$\begin{aligned}
\int_{\mathbb{R}}{\hat{f}d\mu_A} = \mu_A(\hat{f})		&=\int_{\mathbb{R}}{f(t)\omega(A^\ast\tau_t(A))dt} \\
																										&=\int_{\mathbb{R}}{f(t+\iu \beta)\omega(\tau_t(A)A^\ast)dt}\\
																										&= \nu_A(e^{-\beta p}\hat{f}) = \int_{\mathbb{R}}{e^{-\beta p} \hat{f}d\nu_A}.\\
\end{aligned}$$
Note that from the last equality the converse follows.

$(v) \Leftrightarrow (vi)$ Let $U_\omega$ be as before and equation \eqref{decespec} its spectral decomposition. Stone's theorem ensures the existence of a self-adjoint operator $H_\omega$ such that
\begin{equation}
\label{generator}
U_\omega(t)=e^{\iu t H_\omega}.
\end{equation}
Differentiating the expressions \eqref{decespec} and \eqref{generator} and comparing the results at $t=0$, we conclude that
$$-\int{p dE_p}=H_\omega .$$
Note that the function $S(x,y)= x \log{\frac{x}{y}}$ is lower semicontinuous and homogeneous of degree 1. It is also jointly convex. In fact, there is nothing to prove, if any among the variables $x,y,u,v$ is zero and, for $x,y,u,v>0$, we can define
$$f(t)=S\left(t(x,y)+(1-t)(u,v)\right)-tS(x,y)-(1-t)S(u,v).$$
Its first and second derivatives are
$$\begin{aligned}
f'(t)		&=\frac{v x-u y}{t y+(1-t)v}+(x-u) \log \left(\frac{t x+(1-t)u}{t y+(1-t)v}\right)+u \log \left(\frac{u}{v}\right)-x \log \left(\frac{x}{y}\right) \\
f''(t)	&=\frac{(v x-u y)^2}{(t x+(1-t) u) (t y+(1-t) v)^2} .\\
\end{aligned}$$
Note that $f''(t)\geq 0$ and
$$\begin{aligned}
f'(0)		&=x-\frac{u y}{v}+x \log \left(\frac{u}{v}\right)-x \log \left(\frac{x}{y}\right)\\
				&= x-\frac{u y}{v}+x \log \left(\frac{uy}{vx}\right)\\
				&\leq x-\frac{u y}{v}+x \left(\frac{uy}{vx}-1\right) \\
				&= 0.\\
f'(1)	&=-u+\frac{v x}{y}+u \log \left(\frac{u}{v}\right)-u \log \left(\frac{x}{y}\right)\geq 0. \\
\end{aligned}$$
So, $f(0)=f(1)=0$, $f'(0)\leq0$, $f'(1)\geq0$ and $f'$ is strictly increasing, hence $f'$ has exactly one zero in $[0,1]$ (note that we can't have both $f'(0)=0$ and $f'(1)=0$) thus $f(t)\leq 0$. 

Using the lower semicontinuity, we conclude that the joint convexity remains with integrals, therefore

\begin{equation}
\label{RAS2}
\begin{aligned}
S\left(\omega(A^\ast A),\omega(AA^\ast)\right) &= S\left(\mu_A(1),\nu_A(1)\right)\\
																									&= S\left(\mu_A(1),\mu_A(e^{-\beta p})\right)\\
																									& \leq\mu_A\left(S(1,e^{-\beta p})\right) \\
																									& = \mu_A\left(\log( e^{-\beta p})\right)\\
																									&=\int_{\mathbb{R}}{\beta p\ip{\pi(A)}{dE_p \pi_\omega(A)}}\\
																									&=-\iu \beta \ip{\pi(A)\Omega_\omega}{\iu H_\omega \pi_\omega(A)\Omega_\omega} \\
																									&=-\iu \beta \omega(A^\ast \delta(A)),\\
\end{aligned}
\end{equation}
where we used that $H_\omega\Omega_\omega=0$, since we have already proved the $\tau$-invariance of $\omega$ in Proposition \ref{tauInv}.

For the converse, we have already seen in the last two steps of equation \eqref{RAS2} that
$$-\iu \beta \ip{\pi(A)\Omega_\omega}{\iu H_\omega \pi_\omega(A)\Omega_\omega}=-\iu \beta \omega(A^\ast \delta(A)).$$

But, since $H_\omega$ is self-adjoint, we have

$$\begin{aligned}
\omega(A^\ast \delta(A))								&=\ip{\pi(A)\Omega_\omega}{\iu H_\omega \pi_\omega(A)\Omega_\omega}\\
																				&= - \ip{\iu H_\omega\pi(A)\Omega_\omega}{ \pi_\omega(A)\Omega_\omega}\\
																				&= - \omega(\delta(A^\ast)A).\\
\end{aligned}$$

Thus, for a self-adjoint element $A$,
\begin{equation}
\begin{aligned}
\omega\left(\tau_t\left(A^2\right)\right)-\omega\left(A^2\right)	&= \int_0^t{\omega(\delta(\tau_{t^\prime}(A)^2))dt^\prime}\\
															&= \int_0^t{\omega\big(\delta(\tau_{t^\prime}(A))\tau_{t^\prime}(A)+\tau_{t^\prime}(A)\delta(\tau_{t^\prime}(A))\big)d t^\prime}\\
															&= 0.
\end{aligned}
\end{equation}

By proposition \ref{sqrt}, every positive element can be written as a square of a self-adjoint element. Using the polarization identity (equation \eqref{PolIden}), every element is a linear combination of four positive elements. Hence, we conclude that

$$\omega(\tau(A))=\omega(A)	 \quad \forall A\in \calgebra.$$

Now, if $\hat{f}\in \Dom{\delta}$, the element
$$A_f=\int_{\mathbb{R}}{f(t)\tau_t(A) dt}$$
is entire analytic for $\delta$ by the same arguments as used in Proposition \ref{Anal}.

In order to use expression \eqref{RAS}, let's calculate (using the spectral decomposition giving in equation \eqref{decespec} and equation \eqref{medida})
$$\begin{aligned}
-\iu \beta \omega(A_f^\ast \delta(A_f))		&= - \iu \beta \ip{\int_{\mathbb{R}}{f(t)U_\omega(t)(\pi_\omega(A))\Omega_\omega dt}}{ \iu H_\omega \int_{\mathbb{R}}{f(t)U_\omega(t)(\pi_\omega(A))\Omega_\omega dt}} \\
																						&\hspace{-19pt}= - \iu \beta \ip{\int_{\mathbb{R}}{\int{f(t)e^{-\iu p t}dE_p\pi_\omega(A)\Omega_\omega}dt}}{ \iu H_\omega \int_{\mathbb{R}}{\int{f(t)e^{-\iu p t}dE_p\pi_\omega(A)\Omega_\omega}dt}} \\
																						&\hspace{-19pt}= \ip{\int{\hat{f}(p) dE_p\pi_\omega(A)\Omega_\omega}}{\beta H_\omega \int{\hat{f}(p)dE_p\pi_\omega(A)\Omega_\omega}} \\
																						&\hspace{-19pt}= \ip{\pi_\omega(A)\Omega_\omega}{ \int{\int{ \int{-\beta r \overline{\hat{f}(q)}\hat{f}(p)dE_p dE_q dE_r\pi_\omega(A)\Omega_\omega}}}} \\
																						&\hspace{-19pt}= \ip{\pi_\omega(A)\Omega_\omega}{ \int{\int{ \int{-\beta p \overline{\hat{f}(p)}\hat{f}(p)dE_p \pi_\omega(A)\Omega_\omega}}}} \\
							&\hspace{-19pt}= \mu_A(k h), \\
\end{aligned}$$
where $h(p)=|\hat{f}(p)|^2$ and $k(p)=-\beta p$.

Similarly, one can calculate

$$ \begin{aligned}
-\iu \beta \omega(A_f \delta(A_f^\ast))		&= -\nu_A(k h), \\
\omega(A_f^\ast A_f)												&= \mu_A(h),\\
\omega(A_f A_f^\ast)												&= \nu_A(h).\\
\end{aligned} $$

Now, by hypothesis,

$$\begin{aligned}
\mu_A(k h)		& \geq S\bigl(\mu_A(h),\nu_A(h)\bigr),\\
-\nu_A(k h)	& \geq S\bigl(\nu_A(h),\mu_A(h)\bigr).\\
\end{aligned}$$

Since $h \in \mathcal{C}^\infty_0$, we can define
$$\begin{aligned}
\overline{p}(h)		&\doteq\sup_{x \in \supp(h)} x, \\
\underline{p}(h)		&\doteq\inf_{x \in \supp(h)} x, \\
\end{aligned}$$

and it follows that

$$\begin{aligned}
-\beta \underline{p}(h) \mu_A(h)		& \geq S(\mu_A(h),\nu_A(h))&=\mu_A(h) \log\left(\frac{\mu_A(h)}{\nu_A(h)}\right)\\
\beta \overline{p}(h) \nu_A(h)	& \geq S(\nu_A(h),\mu_A(h))		&=\nu_A(h) \log\left(\frac{\nu_A(h)}{\mu_A(h)}\right)\\
\end{aligned}$$

$$\Leftrightarrow e^{-\beta \underline{p}(h)}\nu_A(h)\geq \mu_A(h) \geq e^{-\beta \overline{p}(h)} \nu_A(h).$$

Now given $\varepsilon>0$, one takes a partition of unit $(h_n)_{n\in\mathbb{N}}$ (it could be finite if we take only a partition subordinated to an open cover of $\supp(h)$) such that
$$\left|e^{-\beta \underline{p}(h)}- e^{-\beta \overline{p}(h)}\right|< \varepsilon \nu_A(h).$$

From this property, and using Lebesgue's dominated convergence theorem, we conclude that
$$\left|\mu_A(h h_n)-\nu_A(k h h_n)\right|< \varepsilon \nu_A(h h_n) \Rightarrow \left|\mu_A(h)-\nu_A(k h)\right|< \varepsilon \nu_A(h) $$

So, $\mu_A(h)=\nu_A(k h)$.

\end{proof}
It is worth to mention that, although it is a well-known fact, the proof above that $S(u,v)$ is jointly convex was written by the author.

\begin{definition}
\index{state! KMS}
A state is said to be a $(\tau,\beta)$-KMS state if it satisfies one of the conditions in Theorem \ref{KMSequi}. 
\end{definition}

We said earlier that one expected property to call a state an equilibrium state is $\tau$-invariance, but also some kind of concavity is expected from our knowledge of Thermodynamics. This concavity property is related to equivalence $(vi)$ in Theorem \ref{KMSequi}, but a discussion of this fact will be postponed.

\begin{proposition}
Let $(\calgebra, \tau)$ be a $C^\ast$-dynamical system, $\beta \in \mathbb{R}$, and $\omega$ a state \mbox{over $\calgebra$}, then $\omega$ is a $(\tau,\beta)$-KMS state if and only if there exists a norm-dense $\ast$-subalgebra $\calgebra_\tau$ of $\tau$-analytic elements such that 
$$\omega(AB)=\omega\left({\tau_{-\iu\beta/2}(B)\tau_{\iu\beta/2}(A)}\right) \quad \forall A,B\in  \nalgebra_\tau.$$
\end{proposition}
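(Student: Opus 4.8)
The plan is to prove the two implications separately, in each case converting the symmetric relation into the KMS relation of Theorem \ref{KMSequi}, condition (i), by using two structural facts about the analytic continuation: that on analytic elements $\tau_z$ is a $\ast$-homomorphism obeying the group law $\tau_z\circ\tau_w=\tau_{z+w}$, and that $\omega$ is $\tau$-invariant.

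For the forward implication, suppose $\omega$ is $(\tau,\beta)$-KMS. By Lemma \ref{tauInv} it is $\tau$-invariant, so for every entire analytic $C$ the entire function $z\mapsto\omega(\tau_z(C))$ is constant on $\mathbb{R}$ and hence, by the identity theorem, constant throughout $\mathbb{C}$; thus $\omega(\tau_z(C))=\omega(C)$ for all $z$. Take $\calgebra_\tau=\tilde\calgebra$ to be the dense algebra of entire analytic elements provided by condition (i). For $A,B\in\calgebra_\tau$ put $C=\tau_{-\iu\beta/2}(B)\,\tau_{\iu\beta/2}(A)$, again entire analytic; the homomorphism and group laws give $\tau_{\iu\beta/2}(C)=\tau_{\iu\beta/2}(\tau_{-\iu\beta/2}(B))\,\tau_{\iu\beta/2}(\tau_{\iu\beta/2}(A))=B\,\tau_{\iu\beta}(A)$, so invariance yields $\omega(C)=\omega(\tau_{\iu\beta/2}(C))=\omega(B\tau_{\iu\beta}(A))=\omega(AB)$, which is exactly the symmetric identity.

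For the converse, assume the symmetric identity on a norm-dense analytic $\ast$-subalgebra $\calgebra_\tau$. The crux is to recover $\tau$-invariance. Inserting the unit (or an approximate identity) for $B$, and then for $A$, collapses one factor and produces $\omega(\tau_{\iu\beta/2}(A))=\omega(A)=\omega(\tau_{-\iu\beta/2}(A))$ for $A\in\calgebra_\tau$. Fix $A$ and set $g(z)=\omega(\tau_z(A))$: this is entire, and since real translations are isometric, $|g(t+\iu c)|\le\|\tau_{\iu c}(A)\|$, so $g$ is bounded on every horizontal strip of finite height. Applying the relation just obtained to the elements $\tau_t(A)$ gives $g(t+\iu\beta/2)=g(t)$ for real $t$, hence $g$ is $\iu\beta/2$-periodic; being bounded on a fundamental strip, it is a bounded entire function, so Liouville's theorem forces $g$ constant, i.e. $\omega(\tau_t(A))=\omega(A)$, and $\omega$ is $\tau$-invariant (extended to all of $\calgebra$ by density and continuity). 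Reversing the forward computation then gives $\omega(B\tau_{\iu\beta}(A))=\omega(\tau_{\iu\beta/2}(\tau_{-\iu\beta/2}(B)\tau_{\iu\beta/2}(A)))=\omega(\tau_{-\iu\beta/2}(B)\tau_{\iu\beta/2}(A))=\omega(AB)$ for $A,B\in\calgebra_\tau$, which is the KMS relation of condition (i); thus $\omega$ is $(\tau,\beta)$-KMS.

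I expect the invariance step in the converse to be the main obstacle: whereas in the forward direction invariance is handed to us by Lemma \ref{tauInv}, here it must be extracted from the symmetric identity itself, and it is the periodicity-plus-Liouville argument that makes this possible. That argument tacitly uses that the dense subalgebra may be taken invariant under the real flow $\tau_t$ (so that $\tau_t(A)$ again lies where the relation applies); one should either assume this of $\calgebra_\tau$ or first replace it by the standard $\tau$-invariant dense algebra of analytic elements constructed as in Proposition \ref{AnalDense}. The remaining ingredients — the homomorphism and group laws for $\tau_z$ on analytic elements, and the treatment of the non-unital case via an approximate identity — are routine.
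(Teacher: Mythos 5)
The paper states this proposition \emph{without} proof, so there is nothing to compare against line by line; your strategy is the standard one (it is essentially Bratteli--Robinson, Vol.~II, Prop.~5.3.7). Your forward direction is essentially correct, up to one small repair: condition $(i)$ of Theorem \ref{KMSequi} supplies only a dense \emph{subset} of entire analytic elements, whereas the proposition demands a $\ast$-subalgebra. You should instead take $\calgebra_\tau$ to be the set of \emph{all} entire analytic elements, which is a dense $\ast$-subalgebra by Proposition \ref{AnalDense}, and note that the relation $\omega(AB)=\omega(B\tau_{\iu\beta}(A))$ extends to it by the usual analytic-continuation argument from condition $(iii)$; after that, your computation with $C=\tau_{-\iu\beta/2}(B)\tau_{\iu\beta/2}(A)$ and the invariance $\omega\circ\tau_z=\omega$ on entire elements goes through.

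The converse, however, has two genuine gaps, and your own caveats do not close them. First, the flow-invariance problem you flag is real, but your proposed repair fails: you cannot ``replace $\calgebra_\tau$ by the standard $\tau$-invariant dense algebra of analytic elements,'' because the hypothesis gives the commutation relation \emph{only} on $\calgebra_\tau$, and there is no way to transport it to the smoothed elements $A_n=\sqrt{n/\pi}\int e^{-nt^2}\tau_t(A)\,dt$: writing $\omega(A_nB_m)$ and $\omega\bigl(\tau_{-\iu\beta/2}(B_m)\tau_{\iu\beta/2}(A_n)\bigr)$ as double integrals, matching them requires precisely the KMS-type identity you are trying to prove, and norm convergence gives no control on analytic continuations (the map $A\mapsto\tau_{\iu\beta/2}(A)$ is not norm-continuous; indeed $\|\tau_{\iu\beta/2}(A_n)\|$ can grow like $e^{n\beta^2/4}$). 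The honest fix is to add the hypothesis $\tau_t(\calgebra_\tau)=\calgebra_\tau$, which is exactly how the statement appears in the literature and is evidently what the paper takes for granted. Second, your derivation of $\omega(\tau_{\pm\iu\beta/2}(A))=\omega(A)$ by ``inserting the unit or an approximate identity'' is not routine when $\mathbbm{1}\notin\calgebra_\tau$: with an approximate identity $(E_\lambda)\subset\calgebra_\tau$ one does get $\omega(AE_\lambda)\to\omega(A)$, but the right-hand side is $\omega\bigl(\tau_{-\iu\beta/2}(E_\lambda)\tau_{\iu\beta/2}(A)\bigr)$, and nothing bounds $\tau_{-\iu\beta/2}(E_\lambda)$, so its convergence to $\omega(\tau_{\iu\beta/2}(A))$ does not follow; you genuinely need $\mathbbm{1}\in\calgebra_\tau$ (or an approximate identity whose analytic continuations are themselves controlled, e.g.\ converging appropriately after Gaussian smoothing). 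Granting these two strengthened hypotheses, the rest of your argument --- periodicity $g(t+\iu\beta/2)=g(t)$ on $\mathbb{R}$, its extension to all of $\mathbb{C}$ by the identity theorem, boundedness on a fundamental strip, Liouville, extension of invariance to all entire analytic elements, and the final substitution --- is correct and constitutes the standard proof.
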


\begin{proposition}
	\label{limitKMS}
	Suppose $\omega_n$, $n\in \mathbb{N}$, are uniformly bounded functionals on $\calgebra$ satisfying the  KMS condition for $(\tau^n,\beta)$ and
	\begin{enumerate}[(i)]
		
		\item For each $A\in \calgebra$ and each $t\in \mathbb{R}$
		$$\lim_{n\rightarrow \infty}{\tau^n_t(A)}=\tau_t(A),$$
		where $\{\tau_t\}_{t\in\mathbb{R}}$ is a group of isometries.
		
		\item For each $A\in \calgebra$, $\left(\omega_n(A)\right)_{n\in\mathbb{N}}$ is a convergent sequence and
		$$\omega(A)=\lim_{n\rightarrow \infty}{\omega_n(A)}$$ defines a state.
	\end{enumerate}
	Then $\omega$ is a $(\tau,\beta)$-KMS state.
\end{proposition}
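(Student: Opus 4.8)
The plan is to verify that $\omega$ satisfies characterization $(iv)$ of Theorem \ref{KMSequi}, since the integrated form of the KMS condition is the one best suited to passing limits: it involves only ordinary integrals of the matrix-element functions and can be controlled by the dominated convergence theorem, avoiding any delicate compactness argument for the analytic functions $F^n_{A,B}$. Fix $A,B\in\calgebra$ and a function $f$ with $\hat{f}\in\mathcal{C}^\infty_0(\mathbb{R})$. For each $n$, the assumption that $\omega_n$ is $(\tau^n,\beta)$-KMS gives, via Theorem \ref{KMSequi} $(iv)$,
$$\int_{\mathbb{R}}f(t)\,\omega_n\!\left(A\tau^n_t(B)\right)dt=\int_{\mathbb{R}}f(t+\iu\beta)\,\omega_n\!\left(\tau^n_t(B)A\right)dt,$$
and the whole task reduces to letting $n\to\infty$ on both sides.

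First I would establish pointwise (in $t$) convergence of the integrands. Setting $M=\sup_n\|\omega_n\|<\infty$ and using that each $\tau^n_t$, being an automorphism of a $C^\ast$-algebra, is isometric, I split
$$\left|\omega_n\!\left(A\tau^n_t(B)\right)-\omega\!\left(A\tau_t(B)\right)\right|\le M\,\|A\|\,\big\|\tau^n_t(B)-\tau_t(B)\big\|+\left|\omega_n\!\left(A\tau_t(B)\right)-\omega\!\left(A\tau_t(B)\right)\right|.$$
The first summand tends to zero by hypothesis $(i)$ and the second by hypothesis $(ii)$ applied to the fixed element $A\tau_t(B)$; an identical estimate yields $\omega_n(\tau^n_t(B)A)\to\omega(\tau_t(B)A)$. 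Next I would supply the dominating functions: on the real axis $|f(t)\,\omega_n(A\tau^n_t(B))|\le M\|A\|\|B\|\,|f(t)|$ with $f\in L^1(\mathbb{R})$, and on the translated axis $|f(t+\iu\beta)\,\omega_n(\tau^n_t(B)A)|\le M\|A\|\|B\|\,|f(t+\iu\beta)|$. Here the crucial ingredient is the Paley--Wiener estimate of Lemma \ref{Schwartz_PaleyWiener}, which bounds $|f(t+\iu\beta)|$ by a constant multiple of $(1+|t|^2)^{-1}$ and hence makes $t\mapsto f(t+\iu\beta)$ integrable. Dominated convergence then lets me pass to the limit through both integrals, giving
$$\int_{\mathbb{R}}f(t)\,\omega\!\left(A\tau_t(B)\right)dt=\int_{\mathbb{R}}f(t+\iu\beta)\,\omega\!\left(\tau_t(B)A\right)dt,$$
which is precisely condition $(iv)$ of Theorem \ref{KMSequi} for $\omega$ with respect to $\tau$.

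Before invoking the theorem I would check that $(\calgebra,\tau)$ is genuinely a $C^\ast$-dynamical system, so that the characterization applies. Each $\tau_t$ is a $\ast$-automorphism: as a pointwise norm-limit of the $\tau^n_t$ it inherits multiplicativity from the joint continuity of the product on norm-bounded sets (the images $\tau^n_t(A)$ are uniformly bounded, being isometric), $\ast$-compatibility from continuity of the involution, and the group law $\tau_{t+s}=\tau_t\circ\tau_s$ from another isometry-plus-hypothesis-$(i)$ splitting. Since $\omega$ is a state by hypothesis $(ii)$, Theorem \ref{KMSequi} concludes that $\omega$ is a $(\tau,\beta)$-KMS state.

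The one genuinely delicate step is the interchange of limit and integral: it hinges on having, uniformly in $n$, a single dominating function that is integrable simultaneously on $\mathbb{R}$ and on the shifted line $\mathbb{R}+\iu\beta$. This is exactly why the rapid decay furnished by Lemma \ref{Schwartz_PaleyWiener} — rather than mere continuity or integrability of $f$ on the real axis — is indispensable, and it is also the reason the integrated form $(iv)$ is preferable here to the pointwise analytic forms $(ii)$ or $(iii)$, for which passing to the limit would require additional uniform control of the family $(F^n_{A,B})_{n\in\mathbb{N}}$ on the strip $\strip{\beta}$.
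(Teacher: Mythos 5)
Your proposal is correct, and it follows the paper's overall strategy: both proofs verify condition $(iv)$ of Theorem \ref{KMSequi}, both establish pointwise convergence of the integrands by the same uniform-boundedness-plus-isometry splitting (fixing an intermediate index and using hypotheses $(i)$ and $(ii)$ separately), and both rely on the decay estimate of Lemma \ref{Schwartz_PaleyWiener} to control $f$ on the two horizontal lines $\mathbb{R}$ and $\mathbb{R}+\iu\beta$. The one genuine difference is how the limit is interchanged with the integral. The paper truncates to $|t|\le R$, controls the tails by the Paley--Wiener decay, and then asserts that ``the pointwise convergence becomes uniform in the compact set $[-R,R]$'' --- a claim that is not automatic: uniform convergence on compacta from pointwise convergence would require equicontinuity of the family $t\mapsto\omega_n\left(A\tau^n_t(B)\right)$, which (via the group law and isometry) reduces to $\sup_n\left\|\tau^n_r(A)-A\right\|\to 0$ as $r\to 0$, and this is not supplied by the hypotheses. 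Your dominated-convergence argument needs only pointwise convergence together with the single integrable majorant $M\|A\|\|B\|\,|f(\cdot)|$ on each line, so it sidesteps that step entirely; in this respect your write-up is tighter than the paper's, at the cost of invoking DCT where the paper manages explicit $\varepsilon$'s. Your additional check that the limiting $\tau_t$ are $\ast$-automorphisms (so that Theorem \ref{KMSequi} genuinely applies to the pair $(\calgebra,\tau)$) is a point the paper passes over silently and is a welcome addition.
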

\begin{proof}\footnote{From \cite{Haag64} with improvements by the author.} 
	Suppose $\varepsilon>0$ is given, in addition, fix $A,B\in \calgebra$ and $t\in\mathbb{R}$. Without loss of generality, suppose $\|\omega_n\|\leq 1 \ \ \forall n \in \mathbb{N}$. Take some $n_1 \in \mathbb{N}$ such that, for all $n>n_1$,
	$$\|\tau_t^{n}(A)-\tau_t(A)\| <\frac{\varepsilon}{6 \|B\|}.$$
	A basic consequence of this condition is that, for all $n,m>n_1$,
	$$\|\tau_t^{n}(A)-\tau^{m}_t(A)\| <\frac{\varepsilon}{3 \|B\|}.$$
	
	Consider now $m>n_1$ fixed. By condition $(ii)$ in the hypothesis, we can take $n_2\in \mathbb{N}$ such that for all $n>n_2$
	$$\|\omega_{n}(B \tau_t^{m}(A))-\omega(B \tau_t^{m}(A)) \| <\frac{\varepsilon}{3}$$
	
	Hence, for each $t\in \mathbb{R}$ and $n>\max\{{n_1,n_2}\}$,
	$$\begin{aligned}
	\|\omega_{n}\left(B \tau_t^{n}(A)\right)-\omega\left(B \tau_t(A)\right)\| &\leq \|\omega_{n}\left(B \tau_t^{n}(A)\right)-\omega_{n}\left(B \tau_t^{m}(A)\right) \| \\
	& \qquad+\|\omega_{n}\left(B \tau_t^{m}(A)\right)-\omega\left(B \tau_t^{m}(A)\right)\| \\
	& \qquad+\|\omega\left(B \tau_t^{m}(A)\right)-\omega\left(B \tau_t(A)\right)\| \\
	&<\|B\| \left\|\tau_t^{n}(A)- \tau_t^{m}(A)\right\| +  \frac{\varepsilon}{3} + \|B\| \left\|\tau_t^{n}(A)- \tau_t(A)\right\| \\
	&< \varepsilon.\\
	\end{aligned}$$
	
	Since the $\omega_n$'s are KMS states, for any $A,B \in \calgebra$ and for any $f$ such that $\hat{f}\in \mathcal{C}^\infty_0(\mathbb{R})$, equation \eqref{eqKMS4} must be satisfied. Using the Lemma $\ref{estimativa}$ with n=2, we know that
	$$|f(z)| \leq  K_2 \frac{e^{ M |\Im z|}}{1+|z|^2}.$$
	Hence, for every $\varepsilon>0$, there exists $R>0$ such that
	$$\begin{aligned}
	\left|\int_{|t|>R}{f(t)\omega(A\tau_t B)dt}\right|	&< \frac{\varepsilon}{4} ,\\
	\left|\int_{|t|>R}{f(t+\iu \beta)\omega(A\tau_t B)dt}\right|&< \frac{\varepsilon}{4} .\\
	\end{aligned}$$
	
	But the pointwise convergence becomes uniform in the compact set $[-R,R]$, so there exists $n_0\in\mathbb{N}$ such that, for all $n>n_0$,
	$$\left\|\omega_{n}\left(B \tau_t^{n}(A)\right)-\omega\left(B \tau_t(A)\right)\right\| < \frac{\varepsilon}{4R K_2 e^{M\beta}} \qquad \forall t \in [-M,M].$$
	Finally,
	$$\begin{aligned}
	&\left\|\int_{\mathbb{R}}{f(t)\omega(A\tau_t (B))dt}-\int_{\mathbb{R}}{f(t+\iu \beta)\omega(A\tau_t (B))dt}\right\| \\
	&\hspace{3.8cm} <\left\|\int_{|t|\leq R}{f(t)\omega(A\tau_t(B))dt}-\int_{|t|\leq R}{f(t+\iu \beta)\omega(A\tau_t (B))dt}\right\|+\frac{\varepsilon}{2}\\
	&\hspace{3.8cm} <\left\|\int_{|t|\leq R}{f(t)\omega_{n}(A\tau^{n}_t(B))dt} -\int_{|t|\leq R}{f(t+\iu \beta)\omega_{n}(A\tau^{n}_t (B))dt}\right\|+\varepsilon\\
	&\hspace{4cm} =\varepsilon, \\
	\end{aligned}$$
	and $\omega$ is a $(\tau,\beta)$-KMS state.
	
\end{proof}

We finish this section with a result of great importance. We said several times that KMS states ``survive the thermodynamical limit'', here we stated an equivalent result that shows it survive limits under certain hypothesis, but, in the next section, it will become clear that Proposition \ref{limitKMS} includes the reasonable physical situation.

\section{The Physical Meaning of KMS States}

Now, we are going to discuss some results first presented in \cite{haag67} that relate the KMS condition to equilibrium states.

A physically desirable property of an equilibrium system in a region with infinite-volume, $V$, is to be the limit of the restrictions of this system to the finite-volume regions, $V_f$, when $V_f\to V$.

Therefore we need to analyse some properties of equilibrium states for a finite-volume region. Consider a system with just one type of particles in a finite-volume region $V$, \ie, the space in focus is the Fock space $\displaystyle\mathfrak{F}(\hilbert)=\bigoplus_{n\geq0}{\hilbert^n}$ where $\hilbert^0=\mathbb{C}$. We will use $\hilbert=L_2\bigl(\mathbb{R}^3\bigr)$.

For physical reasons, we would like to define, for each $f\in \hilbert$, $\Psi^\ast(f)$ and $\Psi(f)$ (its conjugate) be the creation and annihilation operators of these particles defined by
\begin{equation}\begin{aligned}
\Psi(f)\left(f_0\otimes\ldots \otimes f_n\right)&=\sqrt{n}(f,f_0)f_1\otimes\ldots \otimes f_n, \ n\geq 1 \textrm{ and } \Psi(f)(f_0)=0, \ f_0 \in \mathbb{C}, \\
\Psi^\ast(f)\left(f_0\otimes\ldots \otimes f_n\right)&=\sqrt{n+1}f\otimes f_0\otimes \ldots \otimes f_n. \\
\end{aligned}\end{equation}

The problem with this definition is that, in general, it does not define a bounded operator. To avoid this issue, since we don't want to extend our discussion going to the Weyl form of the CCR, we will restrict our attention to the CAR algebra. Nevertheless, one can, throughout the Weyl relations, extend the following discussion to the boson algebra. To a formal treatment on the subject we refer to \cite{Emch1972} and \cite{Bratteli2}. 

Hence, we remember that the Fock space can be decomposed as a direct sum of two closed subspaces $\mathfrak{F}_+(\hilbert), \mathfrak{F}_-(\hilbert) \subset \mathfrak{F}(\hilbert)$, the Bose-Fock space and the Fermi-Fock space, which are the spaces of symmetric and antisymmetric vectors, respectively. We can properly define the desired physical operator acting in $\mathfrak{F}_{-}(\hilbert)$ as

\begin{equation}\begin{aligned}
\label{creationaniquilation}
\Psi(f_{0})\left(f_1\otimes\ldots \otimes f_n\right)_A&=\begin{dcases}\frac{1}{n!}\sum_{\sigma\in S_{n}}\sgn{\sigma} \sqrt{n}(f_{n},f_{\sigma(1)})f_{\sigma(2)}\otimes\ldots \otimes f_{\sigma(n)}, \quad & n\geq 2\\ 0, &n=1\end{dcases} \\
\Psi^\ast(f_{0})\left(f_1\otimes\ldots \otimes f_n\right)_A&=\frac{1}{(n+1)!}\sum_{\sigma\in S_{n+1}}\sgn{\sigma}\sqrt{n+1}f_{\sigma(0)}\otimes f_{\sigma(1)}\otimes \ldots \otimes f_{\sigma(n)}, \\
\end{aligned}\end{equation}
where $S_{n}$ is the group of permutations of $n$ elements and $(f_1\otimes\ldots \otimes f_n)_A\in \mathfrak{F}_{-}(\hilbert)$.

From the definition, it follows that the $\Psi$'s satisfies the anticommutation relations
\begin{equation}\label{CR2}
\begin{aligned} \
[\Psi(f),\Psi^\ast(g)]_{+}	&=(g,f) &=\int_{\mathbb{R}^3}{\overline{f(x)}g(x)dx},\\
[\Psi(f),\Psi(g)]_{+}			&=0 ,\\
\end{aligned}
\end{equation}
where $[\cdot,\cdot]_+$ is the anticommutator defined by $[A,B]_+=AB+BA$.

Consider now the von Neumann algebra $\calgebra(V)$ generated by $$\bigl\{\Psi(f) \ \big| \ f\in L_2(V) \textrm{ and } \supp{(f)} \subset V  \bigr\}.$$ It follows from the construction that $\calgebra(V)$ is a weakly closed set of bounded operators on $\mathfrak{F}_-(\hilbert)$, so with respect to the operator norm, it is a $C^\ast$-algebra.

Consider now the set $$\calgebra = \overline{\bigcup_{V finite}{\calgebra(V)}}^{\|.\|},$$
which is also a $C^\ast$-algebra.

It is useful to denote $$\mathfrak{F}_-\left(\hilbert\right)^V=\calgebra(V) \mathfrak{F}_-(\hilbert).$$ Note that it is the Hilbert space appearing in the representation of the subspace $\calgebra(V)\subset \calgebra$.

Another important point is the existence of the zero-particle vector $\Omega_0$, that is, the state that satisfies
$$\Psi(f)\Omega_0=0 \quad \forall f \in L^2(\mathbb{R}^3).$$
Note that $\calgebra$, in fact $\calgebra \Omega=\mathfrak{F}_-(\hilbert)$ and $\calgebra(V) \Omega=\mathfrak{F}_-(\hilbert)^V$ are cyclic representations of these algebras.

$\calgebra$ is the $C^\ast$-algebra of physical interest, because it excludes the unwanted behaviour of global observables for infinite system since it contains only quasi-local quantities. For example, in the thermodynamic limit the total number of particles is not well defined, but, since we have finite density, we for every finite-volume region the particle number (operator) is well defined. The same argument would be true for energy. This restriction on the observables also reflects on the number of states. Indeed, the global energy density, for example, exists only as the limit of energy density in finite-volume regions.

Suppose the dynamics of the system is determined by a Hamiltonian $H$ as an operator on $\mathfrak{F}(\hilbert)$. As mentioned before, the number operator particle $N$ is also well defined in every finite-volume region $V$. As we are interested in changing the number of particles, it is useful to change to the Gibbs grand canonical ensemble. We set
$$H^\prime\doteq H-\mu N,$$
with $\mu$ a real number (it is the chemical potential).

Consider also the restricted Hamiltonian $H^\prime_{V}=H_V-\mu N_V$ to $V$, as well as the corresponding time evolution operator $U_V(t)=e^{-\iu t H^\prime_V}$.

The following three conditions are expected for a physically reasonable equilibrium system:

\begin{enumerate}[(i)]
	\item If $V=V_1\cup V_2$ and $V_1\cap V_2=\emptyset$, then
	$$H^\prime_V-H^\prime_{V_1}-H^\prime_{V_2}=H^\prime_S,$$
	where the surface term $H^\prime_S$. In other words, for every fixed time $t\in\mathbb{R}$ and for $(V_n)_{n\in\mathbb{N}}\subset\mathbb{R}^3$ with $V_n\subset V_{n+1}$ and  $\displaystyle \bigcup_{n \in \mathbb{N}}{V_n}=\mathbb{R}^3$, 
	$\displaystyle\left(U_{V_n}(t)AU_{V_n}^\ast(t)\right)_{n\in\mathbb{N}}$ must be a $\|\cdot\|$-Cauchy sequence.

	\item The interparticle forces must saturate, \ie, for every finite-volume region $V \subset \mathbb{R}^3$, for all $\beta >0$, and for a certain range of $\mu$-values, we have
	$$\Tr_{\mathfrak{F}(\hilbert)^V}\left(e^{-\beta H^\prime_V} \right) <\infty.$$
	When this is the case, we set
	$$\begin{aligned}\rho_V &\doteq\frac{e^{-\beta H^\prime_V}}{\Tr_{\mathfrak{F}(\hilbert)^V}\left(e^{-\beta H^\prime_V}\right)}\\
					\omega_V(A)&\doteq\Tr_{\mathfrak{F}(\hilbert)^V}\left(\rho_V A\right)\\
	\end{aligned}$$
	\item Let $(V_n)_{n\in\mathbb{N}}\subset \mathbb{R}^3$ such that $(V_n)\subset V_{n+1}$ and $\displaystyle \bigcup_{n \in \mathbb{N}}{V_n}=\mathbb{R}^3$, then, for each $A\in \calgebra$,
	$$\lim_{n\rightarrow \infty}{\omega_{V_n}(A)}=\omega(A).$$
\end{enumerate}
For simplicity, we will refer to these conditions as the $Haag-Hugenholtz-Winnink$ conditions or HHW-conditions. \index{HHW-conditions}

Notice that condition $(i)$ say that, for fixed $t\in\mathbb{R}$ and for $(V_n)_{n\in\mathbb{N}}\subset\mathbb{R}^3$ with $V_n\subset V_{n+1}$ and  $\displaystyle \bigcup_{n \in \mathbb{N}}{V_n}=\mathbb{R}^3$, we have
\begin{equation}
\label{cauchy1}
\lim_{n\rightarrow \infty}{U_{V_n}(t)A U_{V_n}^{-1}(t)}= U(t)A U^{-1}(t),
\end{equation}
but we are not (and should not be) demanding  any kind of uniform convergence in time parameter. A physical example that clarifies why we should not demand such a convergence is a system with one particle with velocity $v$ such that $v t> diam(V)$. In this situation, because the particle must reflect in the boundary of $V$, the dynamics in the whole space and in $V$ are very different, $\ie$, $U_{V_n}(t)A U_{V_n}^{-1}(t)$ and $U(t)A U^{-1}(t)$ have no reason to be close.

Now, let $z\in \mathbb{C}$ and $A \in \calgebra$, define
$$A_z^{V_n}\doteq e^{\iu H^\prime_{V_n} z}Ae^{-\iu H^\prime_{V_n} z}.$$

Although $A_z$ be can unbounded, $A_z^{V_n} e^{-\beta H^\prime_{V_n}}$ and $e^{- \beta H^\prime_{V_n}}A_z^{V_n}$ are bounded operators and of trace class for $0\leq\gamma\leq \beta$ and $-\beta \leq \gamma \leq 0$, respectively . In fact,
$$\begin{aligned}
A_z^{V_n} e^{- \beta H^\prime_{V_n}}	&=e^{\iu H^\prime_{V_n} z}Ae^{-\iu H^\prime_{V_n} z} e^{- \beta H^\prime_{V_n}}\\
							&=e^{- \gamma H^\prime_{V_n} t}\left(e^{\iu H^\prime_{V_n} t}Ae^{-\iu H^\prime_{V_n} t}\right) e^{- (\beta-\gamma) H^\prime_{V_n}},\\
\end{aligned}$$
where we see that all three terms are bounded for $0\leq \gamma\leq \beta$, as a consequence of the second HHW-condition, and, more than that, this operator must be trace-class because of the first and last terms. The argument for $ e^{- \beta H^\prime_{V_n}}A_z$ is identical.

For each $A, B \in \calgebra$, define the complex function
$$F_{A,B}(z)=\Tr(B A_z e^{-\beta H^\prime_{V_n}})=\omega_V\left(B A_z^{V_n}\right),$$
which is an analytic function in $\strip{\beta}$ and continuous in $\overline{\strip{\beta}}$, furthermore
$$F_{A,B}(t)=\Tr(B A_t e^{-\beta H^\prime_{V_n}})=\omega_V(B A_t^{V_n}).$$

Finally, we can compute
$$\begin{aligned}
F_{A,B}(t+\iu\beta)		
						&=\Tr\left(B A_{t+\iu \beta}^{V_n} e^{-\beta H^\prime_{V_n}}\right)\\
						&=\Tr\left(B e^{\iu H^\prime_{V_n} t}\left(e^{- \beta H^\prime_{V_n} t}Ae^{-\iu H^\prime_{V_n} t}\right) e^{- (\beta-\beta) H^\prime_{V_n}}\right)\\
						&=\Tr\left(B e^{\iu H^\prime_{V_n} t}\left(e^{- \beta H^\prime_{V_n} t}Ae^{-\iu H^\prime_{V_n} t}\right)\right)\\
						&=\Tr\left(A e^{-\iu H^\prime_{V_n} t}B e^{\iu H^\prime_{V_n} t}\left(e^{- \beta H^\prime_{V_n} t}\right)\right)\\
						&=\omega_{V_n}\left(A B_t^{V_n} \right).\\
\end{aligned}$$

Therefore, $\omega_{V_n}$ is a $(\tau\beta)$-KMS state for $\tau^{V_n}_t(A)=A_t^{V_n}$.

Notice that the third HHW-condition is equivalent to condition $(ii)$ in Proposition \ref{limitKMS} in that context.

\begin{corollary}
Suppose $(V_n)_{n\in\mathbb{N}}\subset\mathbb{R}^3$ with $V_n\subset V_{n+1}$ and  $\displaystyle \bigcup_{n \in \mathbb{N}}{V_n}=\mathbb{R}^3$.
Then, if each KMS state $\omega_{V_n}$ satisfies the first and third HHW-conditions, the state $\displaystyle\omega(A)=\lim_{n\rightarrow \infty}{\omega_{V_n}(A)}$ is also a KMS state.
\end{corollary}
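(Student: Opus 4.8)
The plan is to read this corollary as a direct specialization of Proposition \ref{limitKMS} to the physical setting constructed in this section, so that the entire argument reduces to checking that the hypotheses of that proposition hold in the present situation. Recall that the computation preceding the corollary already established that, for each finite-volume region $V_n$, the state $\omega_{V_n}$ is a $(\tau^{V_n},\beta)$-KMS state with dynamics $\tau^{V_n}_t(A)=A_t^{V_n}=e^{\iu H^\prime_{V_n} t}Ae^{-\iu H^\prime_{V_n} t}$. Thus we already possess the required sequence of KMS functionals indexed by $n\in\mathbb{N}$, and it remains only to match the remaining data.

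First I would verify uniform boundedness: each $\omega_{V_n}$ is a state on $\calgebra$, so $\|\omega_{V_n}\|=\omega_{V_n}(\mathbbm{1})=1$ for every $n$, and the sequence is uniformly bounded by $1$. Next I would check condition $(i)$ of Proposition \ref{limitKMS}. This is exactly the content of the first HHW-condition as rewritten in equation \eqref{cauchy1}: for each fixed $A\in\calgebra$ and $t\in\mathbb{R}$ one has $\tau^{V_n}_t(A)=U_{V_n}(t)AU_{V_n}^{-1}(t)\to U(t)AU^{-1}(t)=:\tau_t(A)$. Since each $U(t)$ is unitary, $\tau_t=\mathrm{Ad}(U(t))$ is a $\ast$-automorphism, hence isometric, and the one-parameter group structure of $U$ makes $\{\tau_t\}_{t\in\mathbb{R}}$ a group of isometries, so condition $(i)$ is met with this limiting group.

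Finally I would check condition $(ii)$: the third HHW-condition asserts precisely that, for each $A\in\calgebra$, the sequence $(\omega_{V_n}(A))_{n\in\mathbb{N}}$ converges and that $\omega(A)=\lim_{n\to\infty}\omega_{V_n}(A)$ defines a state. With all hypotheses in place, Proposition \ref{limitKMS} applies verbatim and yields that $\omega$ is a $(\tau,\beta)$-KMS state, which is the assertion.

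Because the heavy analytic work --- controlling the boundary integrals of the KMS functions via the Paley--Wiener estimate of Lemma \ref{Schwartz_PaleyWiener} and passing to the limit in equation \eqref{eqKMS4} --- is already carried out inside Proposition \ref{limitKMS}, there is no genuinely new obstacle here. The only point deserving care is the justification that the pointwise limits $\tau_t$ actually form a one-parameter group of isometries rather than a mere family of maps; this is exactly what the first HHW-condition encodes, namely the convergence of the finite-volume dynamics to a well-defined limiting automorphism group generated by $U(t)$.
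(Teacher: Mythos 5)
Your proof is correct and follows essentially the same route as the paper: the paper's own proof also consists of invoking Proposition \ref{limitKMS} with $\tau^{V_n}_t = \mathrm{Ad}(U_{V_n}(t))$ and $\tau_t = \mathrm{Ad}(U(t))$, identifying condition $(i)$ with the first HHW-condition and condition $(ii)$ with the third. Your write-up merely makes explicit the uniform boundedness ($\|\omega_{V_n}\|=1$) and the group/isometry structure of the limiting dynamics, points the paper leaves implicit.
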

\begin{proof}
It is simply the statement of Theorem \ref{limitKMS} with $\tau^V_t(A)=U_V(t)A U_V(t)^{-1}$ and $\tau_t(A)=U(t)A U(t)^{-1}$, since, as we have said, it follows from the first HHW-condition that, for fixed $t\in\mathbb{R}$,

$$\lim_{n\rightarrow \infty}{\tau_t^{V_n}(A)}=\lim_{n\rightarrow \infty}{U_{V_n}(t)A U_{V_n}^{-1}}= U(t)A U^{-1}(t)=\tau_t(A).$$

\end{proof}
 



\chapter{The Tomita-Takesaki Modular Theory}
\label{chapTTMT}
\setlength\epigraphwidth{10cm}
\setlength\epigraphrule{0pt}
\epigraph{[The Tomita-Takesaki] \textit{theorem is a beautiful example of ``prestabilized harmony'' between physics and mathematics.}}{\textit{Rudolf Haag, Local Quantum Physics}}

\section{Modular Operator and Modular Conjugation}
\label{secModular}

It was Tomita's ideas presented in a conference in 1967 that started this theory, but undoubtedly it was Takesaki's work, \cite{takesaki70.2}, that put all those ideas in a solid basis. That is why we call it nowadays the Tomita-Takesaki Modular Theory.

Modular Theory was responsible for significant advance in Operator Algebras and its applications to Quantum Field Theory. For example, it plays a central role in Connes's classification of type III algebras. In the realm of physics, the first to notice the relation between the Tomita-Takesaki Modular Theory and physics, more specific, with equilibrium states, were Haag, Hugenholtz, and Winnink, as presented in the previous chapter.

This section is devoted to present the definitions and main properties of the modular operator and the modular conjugation, as well as, to present characterizations of these operators, which give us important properties of the positive cone.

This topic is a standard subject and can be found in classical books \eg \, \cite{Bratteli1}, \cite{KR83} and \cite{Takesaki2002} or even in \cite{Araki74}.


Here cyclic and separating vectors \footnote{see Definition \ref{def:cyclicandseparating}} will be crucial. Note that, although the definition of a cyclic and of a separating vector makes explicit reference to concrete $C^\ast$-algebras, it is not difficult to rewrite this definition using (cyclic) representations. It is also fundamental the relation between cyclic and separating stated in Corollary \ref{cyclic_separating}.

\begin{proposition}
Let $\nalgebra$ be a von Neumann algebra. $\Omega$ is cyclic for $\nalgebra \Leftrightarrow \Omega$ is separating for $\nalgebra^\prime$.
\end{proposition}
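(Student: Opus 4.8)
The plan is to observe that this statement is exactly the equivalence recorded in Corollary~\ref{cyclic_separating}, so I would either invoke it directly or reprove the two implications in a few lines. Since cyclic and separating vectors were defined only for concrete algebras (Definition~\ref{def:cyclicandseparating}), I take $\nalgebra\subset B(\hilbert)$ concrete without loss of generality, and note the relation $\nalgebra^{\prime\prime}=\nalgebra$ supplied by the Double Commutant Theorem (Theorem~\ref{TDC}) in case one wants the symmetric reading of the statement.

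For the forward implication, suppose $\Omega$ is cyclic for $\nalgebra$ and let $A^\prime\in\nalgebra^\prime$ satisfy $A^\prime\Omega=0$. For every $A\in\nalgebra$ one has $A^\prime(A\Omega)=A(A^\prime\Omega)=0$, so $A^\prime$ vanishes on the dense subspace $\{A\Omega \mid A\in\nalgebra\}$; by continuity $A^\prime=0$, that is, $\Omega$ is separating for $\nalgebra^\prime$.

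For the reverse implication, suppose $\Omega$ is separating for $\nalgebra^\prime$ and let $P$ be the orthogonal projection onto $\overline{\nalgebra\Omega}$. The crucial step, which I expect to be the main obstacle, is to verify that $P\in\nalgebra^\prime$. This proceeds exactly as in the converse part of Proposition~\ref{supportcyclicseparating}: fixing $A\in\nalgebra$ and $x\in\hilbert$, one approximates $Px\in\overline{\nalgebra\Omega}$ by a sequence $(A_n\Omega)_{n\in\mathbb{N}}\subset\nalgebra\Omega$ and observes that $AA_n\Omega\to APx$ while $AA_n\Omega=P(AA_n\Omega)\to PAPx$, so that $AP=PAP$ for all $A\in\nalgebra$. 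Self-adjointness of $P$ then gives $PA=(A^\ast P)^\ast=(PA^\ast P)^\ast=PAP=AP$, so $P$ commutes with $\nalgebra$ and hence $P\in\nalgebra^\prime$.

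Consequently $\mathbbm{1}-P\in\nalgebra^\prime$ and $(\mathbbm{1}-P)\Omega=0$; since $\Omega$ is separating for $\nalgebra^\prime$ this forces $\mathbbm{1}-P=0$, whence $P=\mathbbm{1}$, $\overline{\nalgebra\Omega}=\hilbert$, and $\Omega$ is cyclic for $\nalgebra$. The only genuine work lies in the commutation argument for $P$; both implications otherwise reduce to the density of $\nalgebra\Omega$ and the definition of a separating vector.
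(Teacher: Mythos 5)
Your proof is correct and follows essentially the same route as the paper: the forward implication is the density-plus-commutation argument of Corollary~\ref{cyclic_separating}, and the reverse implication uses the same projection $P$ onto $\overline{\nalgebra\Omega}$, whose membership in $\nalgebra^\prime$ the paper establishes by exactly the approximation argument you reproduce (Proposition~\ref{supportcyclicseparating}). The only cosmetic difference is that you inline that commutation step rather than citing it, which if anything fixes the paper's slip of writing $P\in\nalgebra$ instead of $P\in\nalgebra^\prime$ in the corollary's proof.
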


Let us now define two operators, which will give rise to the operators that give name to this section. For the cyclic and separating vector $\Omega$, define the anti-linear operators:
\begin{center}
\begin{minipage}[c]{5cm}
$$\begin{aligned}
S_0 :	&\{A\Omega \in \hilbert \ | A\in \calgebra\} 		&\to		& \hspace{0.4cm} \hilbert\\
		&\hspace{1.2cm} A\Omega 								&\mapsto	&\hspace{0.2cm} A^\ast\Omega
\end{aligned}$$
\end{minipage}\hspace{0.75cm},\hspace{0.75cm}
\begin{minipage}[c]{5cm}
$$\begin{aligned}
F_0 :	& \{A^\prime\Omega \in \hilbert \ | A^\prime\in \calgebra^\prime\} 		&\to		& \hspace{0.4cm} \hilbert \\
		& \hspace{1.2cm} A^\prime\Omega 								&\mapsto	&\hspace{0.2cm}A^{\prime\ast}\Omega
\end{aligned}$$
\end{minipage}
\end{center}

Note that the domains of the operators are dense subspaces.

\begin{lemma}
\label{closureS}
The operators $S_0$ and $F_0$ are closable operators. Moreover, $S_0^\ast=\overline{F_0}$ and $F_0^\ast=\overline{S_0}$.
\end{lemma}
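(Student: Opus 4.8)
The plan is to prove the three assertions—closability of both $S_0$ and $F_0$ together with the two adjoint identities—simultaneously, extracting closability as a by-product of the computation that identifies the adjoints. First I would record that both domains are dense: since $\Omega$ is cyclic for $\calgebra$, the domain $\calgebra\Omega$ of $S_0$ is dense; since $\Omega$ is separating for $\calgebra$ it is cyclic for $\calgebra'$ by Corollary \ref{cyclic_separating}, so $\calgebra'\Omega$, the domain of $F_0$, is dense as well (symmetrically, $\Omega$ is cyclic and separating for both $\calgebra$ and $\calgebra'$). Thus the anti-linear adjoints $S_0^\ast$ and $F_0^\ast$ are meaningful, and the whole statement is symmetric under interchanging $\calgebra$ with $\calgebra'$.

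The computational heart is the single identity, valid for all $A\in\calgebra$ and $A'\in\calgebra'$,
$$\ip{S_0 A\Omega}{A'\Omega}=\ip{A^\ast\Omega}{A'\Omega}=\ip{\Omega}{AA'\Omega}=\ip{\Omega}{A'A\Omega}=\ip{A'^\ast\Omega}{A\Omega}=\ip{F_0 A'\Omega}{A\Omega},$$
where the central step uses that $A$ and $A'$ commute. Read through the definition of the anti-linear adjoint, this says precisely that $A'\Omega\in\Dom{S_0^\ast}$ with $S_0^\ast A'\Omega=A'^\ast\Omega=F_0 A'\Omega$; read the other way it says $A\Omega\in\Dom{F_0^\ast}$ with $F_0^\ast A\Omega=A^\ast\Omega=S_0 A\Omega$. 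Hence $F_0\subseteq S_0^\ast$ and $S_0\subseteq F_0^\ast$. In particular both adjoints are densely defined, so $S_0$ and $F_0$ are closable. Since $S_0^\ast$ and $F_0^\ast$ are closed, these inclusions upgrade at once to $\overline{F_0}\subseteq S_0^\ast$ and $\overline{S_0}\subseteq F_0^\ast$.

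The main obstacle is the reverse inclusion $S_0^\ast\subseteq\overline{F_0}$ (the other then follows by symmetry); equivalently, that $\calgebra'\Omega$ is a core for the closed operator $S_0^\ast$. Here I would genuinely exploit that $\calgebra$ is a von Neumann algebra. To a vector $\eta\in\Dom{S_0^\ast}$ I associate the (generally unbounded) operator $R_\eta$ determined by $R_\eta(A\Omega)=A\eta$; it is well defined because $\Omega$ is separating, it satisfies $R_\eta\Omega=\eta$, and it commutes with the left action of $\calgebra$, hence is affiliated to $\calgebra'$. Truncating $R_\eta$ spectrally (via the spectral projections of $|R_\eta|$ and the partial isometry of its polar decomposition, all lying in $\calgebra'$) produces bounded operators $A_n'\in\calgebra'$ with $A_n'\Omega\to R_\eta\Omega=\eta$, while a parallel analysis of the affiliated adjoint $R_\eta^\ast$, pinned down by the defining relation of $S_0^\ast$, forces the companion convergence $A_n'^\ast\Omega\to S_0^\ast\eta$.

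This last step is the delicate one, since it requires handling an unbounded affiliated operator and establishing both convergences at the same time; everything before it is essentially a one-line commutation computation plus the density of the two domains. Once it is in place, the pair $(\eta,S_0^\ast\eta)$ lies in the graph-norm closure of $F_0$, giving $S_0^\ast\subseteq\overline{F_0}$ and therefore $S_0^\ast=\overline{F_0}$; feeding this through the symmetry $\calgebra\leftrightarrow\calgebra'$ yields $F_0^\ast=\overline{S_0}$, completing the proof.
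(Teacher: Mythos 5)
Your proof is correct, and it in fact does more than the paper's own argument. The two agree on the first half: your commutation identity $\ip{S_0 A\Omega}{A'\Omega}=\ip{F_0 A'\Omega}{A\Omega}$ is exactly the computation in the paper, and it yields $F_0\subseteq S_0^\ast$ and $S_0\subseteq F_0^\ast$, hence closability of both operators and the inclusions $\overline{F_0}\subseteq S_0^\ast$, $\overline{S_0}\subseteq F_0^\ast$. Where you part ways is the reverse inclusion. The paper disposes of it in one line (``$F_0\subset S_0^\ast \Rightarrow \overline{S_0}=S_0^{\ast\ast}\subset F_0^\ast$, but since $S_0\subset F_0^\ast$ the equality follows''), which is a non sequitur: both cited inclusions point in the same direction, and no formal manipulation of adjoints can extract $S_0^\ast\subseteq\overline{F_0}$ or $F_0^\ast\subseteq\overline{S_0}$ from them. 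That reverse inclusion is the genuinely hard content of the lemma, and your affiliated-operator argument is the standard way to obtain it (it is essentially Proposition 2.5.9 of \cite{Bratteli1}): for $\eta\in\Dom{S_0^\ast}$ the map $R_\eta:A\Omega\mapsto A\eta$ is well defined because $\Omega$ is separating, commutes with $\calgebra$, and is closable because the defining relation of $S_0^\ast$ shows that the companion map $A\Omega\mapsto A\left(S_0^\ast\eta\right)$ is contained in $R_\eta^\ast$; its closure is therefore affiliated to $\calgebra'$, and truncating its polar decomposition by spectral projections of the modulus (all of which lie in $\calgebra'$ --- this is precisely where the von Neumann, rather than merely $C^\ast$, structure enters) produces $A_n^\prime\in\calgebra'$ with $A_n^\prime\Omega\to\eta$ and $F_0 A_n^\prime\Omega=A_n^{\prime\ast}\Omega\to S_0^\ast\eta$, so that $\calgebra'\Omega$ is a core for $S_0^\ast$ and $S_0^\ast=\overline{F_0}$; the second equality then follows by your symmetry remark, or simply by taking adjoints. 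In short, your route buys a complete proof, whereas the paper's argument, as written, establishes only closability and the one-sided inclusions.
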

\begin{proof}
By definition,
$$\begin{aligned}
\ip{A\Omega}{F_0A^\prime\Omega} &= \ip{A\Omega}{A^{\prime\ast}\Omega}\\
								&= \ip{A^\prime\Omega}{A^{\ast}\Omega} \\
								&= \ip{A^\prime\Omega}{S_0 A\Omega}\\
								&= \overline{ \ip{S_0 A\Omega}{A^\prime\Omega}}\\
								&=\ip{A \Omega}{S_0^\ast A^\prime\Omega},
\end{aligned}$$
since the vector on the right-hand side forms a dense subset we must have $S_0^\ast A^\prime\Omega=F_0A^\prime\Omega$, thus $S_0^\ast$ extends $F_0$ and since an operator is closable if and only if its adjoint is densely defined . The analogous calculation shows that $F_0^\ast$ extends $S_0$. Furthermore, $F_0\subset S_0^\ast \Rightarrow \overline{S_0}=S_0^{\ast\ast}\subset F_0^\ast $, but since $S_0\subset F_0^\ast$ the equality follows.
\end{proof}

\begin{notation}
We will denote $S=\overline{S_0}$ and $F=\overline{F_0}$.
\end{notation}
An important point to stress now is that we omitted the dependence on $\Omega$ to keep the notation clean, but we will mention it in the following.

Moreover, even though $S$ is not a bijection, it is injective and we will write $S^{-1}$ (which is equal to $S$) to denote its inverse over its range. The same holds for $\Delta$, which will be defined soon.

\begin{definition} \glsdisp{modDelta}{\hspace{0pt}} \glsdisp{modJ}{\hspace{0pt}} \index{operator! modular}
We denote by $J_\Omega$ and $\Delta_\Omega$ the unique anti-linear partial isometry and positive operator, respectively, in the polar decomposition of $S$, \ie,   $S=J_\Omega\Delta_\Omega^{\frac{1}{2}}$. $J_\Omega$ is called the modular conjugation and $\Delta_\Omega$ is called the modular operator. 
\end{definition}

Note that the existence and uniqueness of these operators are stated in the Polar Decomposition Theorem.

\begin{proposition}
\label{MOProp}
The modular conjugation and the modular operator satisfy the following relations:

\centering{\begin{minipage}{0.7\textwidth}
\begin{multicols}{2}
\begin{enumerate}[(i)]
\item $\Delta_\Omega = FS,$
\item $\Delta_\Omega^{-1} =SF,$
\item $J_\Omega^\ast J_\Omega=\mathbbm{1},$
\item $\Delta_\Omega^{-\frac{1}{2}}=J_\Omega \Delta_\Omega^{\frac{1}{2}}J_\Omega^\ast,$
\item $J_\Omega=J_\Omega^\ast,$
\item $F=J_\Omega\Delta_\Omega^{-\frac{1}{2}}.$
\end{enumerate}
\end{multicols}
\end{minipage}}
\end{proposition}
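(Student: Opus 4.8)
The plan is to exploit the uniqueness of the polar decomposition together with the two facts already in hand: from Lemma \ref{closureS} that $S^\ast = F$ and $F^\ast = S$, and the elementary observation that $S$ and $F$ are involutions. First I would record the preliminaries. Since $\Omega$ is separating, $S_0$ is well defined and injective, and since $\Omega$ is cyclic its range $\nalgebra\Omega$ is dense; passing to the closure, $S$ (and likewise $F$) is a closed, densely defined, injective anti-linear operator with dense range. Because $S_0(S_0(A\Omega)) = A\Omega$ on the core $\nalgebra\Omega$, closedness gives $S^2 = \mathbbm{1}$, hence $S^{-1} = S$, and similarly $F^{-1} = F$. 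With this, (i) is immediate: by the definition of the polar decomposition $\Delta_\Omega^{\frac{1}{2}} = |S| = (S^\ast S)^{\frac{1}{2}}$, so $\Delta_\Omega = S^\ast S = FS$. Inverting and using the involutions gives (ii): $\Delta_\Omega^{-1} = (FS)^{-1} = S^{-1}F^{-1} = SF$.

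Next I would pin down $J_\Omega$. The operator $\Delta_\Omega^{\frac{1}{2}} = |S|$ is positive and, since $S$ is injective, injective as well, so its range is dense; as the initial space of the partial isometry $J_\Omega$ equals $\overline{\Ran(\Delta_\Omega^{\frac{1}{2}})} = \hilbert$, we obtain $J_\Omega^\ast J_\Omega = \mathbbm{1}$, which is (iii). Dually, the final space $\overline{\Ran(S)} = \hilbert$ (it contains the dense set $\nalgebra\Omega$), so $J_\Omega J_\Omega^\ast = \mathbbm{1}$ and $J_\Omega$ is anti-unitary with $J_\Omega^{-1} = J_\Omega^\ast$. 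Taking the adjoint of $S = J_\Omega\Delta_\Omega^{\frac{1}{2}}$ (using the product rule for the adjoint of an anti-linear operator composed with a self-adjoint one) records the useful identity $F = S^\ast = \Delta_\Omega^{\frac{1}{2}}J_\Omega^\ast$.

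For (iv) I would compute $SS^\ast$ in two ways. On one hand the polar decomposition gives $SS^\ast = J_\Omega\Delta_\Omega^{\frac{1}{2}}\Delta_\Omega^{\frac{1}{2}}J_\Omega^\ast = J_\Omega\Delta_\Omega J_\Omega^\ast$; on the other hand $SS^\ast = SF = \Delta_\Omega^{-1}$ by (ii). Hence $J_\Omega\Delta_\Omega J_\Omega^\ast = \Delta_\Omega^{-1}$. Because $J_\Omega$ is anti-unitary, conjugation by $J_\Omega$ carries the spectral resolution of the positive operator $\Delta_\Omega$ to a spectral resolution and commutes with real Borel functional calculus; applying this to the square root yields $J_\Omega\Delta_\Omega^{\frac{1}{2}}J_\Omega^\ast = (J_\Omega\Delta_\Omega J_\Omega^\ast)^{\frac{1}{2}} = (\Delta_\Omega^{-1})^{\frac{1}{2}} = \Delta_\Omega^{-\frac{1}{2}}$, which is (iv).

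Finally (v) and (vi). Writing $S^{-1} = S$ through the polar decomposition gives $J_\Omega\Delta_\Omega^{\frac{1}{2}} = \Delta_\Omega^{-\frac{1}{2}}J_\Omega^\ast$, while right-multiplying (iv) by $J_\Omega$ and using (iii) gives $J_\Omega\Delta_\Omega^{\frac{1}{2}} = \Delta_\Omega^{-\frac{1}{2}}J_\Omega$; comparing the two relations and invoking injectivity of $\Delta_\Omega^{-\frac{1}{2}}$ forces $J_\Omega = J_\Omega^\ast$, which is (v). Then (vi) drops out: $F = \Delta_\Omega^{\frac{1}{2}}J_\Omega^\ast$ from the second paragraph, and (iv) with (v) give $\Delta_\Omega^{\frac{1}{2}}J_\Omega^\ast = J_\Omega\Delta_\Omega^{-\frac{1}{2}}$, so $F = J_\Omega\Delta_\Omega^{-\frac{1}{2}}$. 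The main obstacle throughout is the bookkeeping with unbounded operators, namely checking that the formal manipulations of inverses and products respect domains, and most delicately justifying that the square root may be pushed through the anti-unitary $J_\Omega$ in the step producing (iv); the derivation of (v) is then the one genuinely nontrivial deduction, since it hinges on combining the two independently obtained expressions for $J_\Omega\Delta_\Omega^{\frac{1}{2}}$.
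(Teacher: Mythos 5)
Your proof is correct and follows essentially the same route as the paper's: the polar decomposition of $S$, the involution identities $S=S^{-1}$, $F=F^{-1}$, and $S^\ast=F$ from Lemma \ref{closureS}, with items (i), (ii), (iii), (v), (vi) argued exactly as in the text. The only local difference is item (iv), where you compute $SS^\ast$ in two ways and push the positive square root through the anti-unitary conjugation, whereas the paper obtains the same identity by directly rearranging $S=S^{-1}=\Delta_\Omega^{-\frac{1}{2}}J_\Omega^\ast$; both are valid, and your version is if anything more explicit about why $J_\Omega$ is anti-unitary (dense range of $S$) before inverting it.
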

\begin{proof}
$(i):$ Of course, $\Delta_\Omega=S^\ast S$ by the polar decomposition and the property follows just by using Lemma \ref{closureS}.

$(ii):$ Note that $S_0=S_0^{-1}$ and $F_0=F_0^{-1}$, thus $S=S^{-1}$ and $F=F^{-1}$ and using $(i)$ $\Delta_\Omega^{-1}=(FS)^{-1}=S^{-1}F^{-1}=SF$.

$(iii):$ Since $\Delta_\Omega$ has a dense range in $\hilbert$, $J_\Omega^\ast J_\Omega$ must be a densely defined projection, thus $J_\Omega^\ast J_\Omega=\mathbbm{1}$.

$(iv):$ Using $(iii)$ we must have $J_\Omega\Delta_{\Omega}^{\frac{1}{2}}=S=S^{-1}=\Delta_{\Omega}^{-\frac{1}{2}}J_\Omega^{-1}= \Delta_{\Omega}^{-\frac{1}{2}}J_\Omega^\ast \Rightarrow \Delta_{\Omega}^{-\frac{1}{2}}=J_\Omega\Delta_{\Omega}^{\frac{1}{2}}J_\Omega^\ast$.

$(v):$ Since $S^\ast S$ and $S S^\ast$ are unitary equivalent though $J_\Omega$, \ie,  $S^\ast S=J S S^\ast J^\ast$, and using $(iii)$ and $(iv)$ it follows
$J_\Omega\Delta_{\Omega}^{\frac{1}{2}}=J_\Omega\Delta_{\Omega}^{\frac{1}{2}}J_\Omega^\ast J_\Omega=\Delta_{\Omega}^{-\frac{1}{2}}J_\Omega$.

Finally, by $(iii)$, $S=S^{-1}=J_\Omega^\ast\Delta_{\Omega}^{\frac{1}{2}} \Rightarrow J_\Omega^\ast=J_\Omega$.

$(vi):$ Just use $F=S^\ast$, $(v)$ and $(iv)$.

\end{proof}

The epigraph of this chapter was stated by Haag in \cite{Haag64} and it is about the theorem below, one of the most important results in Tomita-Takesaki Modular Theory which is extremely significant to both Physics and Mathematics.

For the proof of the theorem we refer to \cite{Takesaki2002} and \cite{Bratteli1}.

\begin{theorem}[Tomita-Takesaki]\index{theorem! Tomita-Takesaki}
\label{TTT}
Let $\nalgebra$ be a von Neumann algebra with cyclic and separating vector $\Omega$, then
$$J_\Omega \nalgebra J_\Omega=\nalgebra^\prime, \qquad \Delta_\Omega^{\iu t} \nalgebra \Delta_\Omega^{-\iu t}=\nalgebra \quad \forall t \in \mathbb{R}.$$
\end{theorem}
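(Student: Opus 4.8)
The plan is to deduce both assertions from a single analytic lemma, following the classical route of \cite{Bratteli1} and \cite{Takesaki2002}. First I would record the elementary identities already contained in Proposition \ref{MOProp}: $J_\Omega=J_\Omega^\ast$, $J_\Omega^2=\mathbbm{1}$, $\Delta_\Omega=S^\ast S=FS$, and $J_\Omega\Delta_\Omega^{\frac12}J_\Omega=\Delta_\Omega^{-\frac12}$, from which functional calculus gives $J_\Omega\Delta_\Omega J_\Omega=\Delta_\Omega^{-1}$ and $J_\Omega\Delta_\Omega^{\iu t}=\Delta_\Omega^{\iu t}J_\Omega$. The key structural remark is that $F=S^\ast=J_\Omega\Delta_\Omega^{-\frac12}$ is, relative to $\nalgebra'$ and the same cyclic and separating vector $\Omega$, exactly what $S=J_\Omega\Delta_\Omega^{\frac12}$ is relative to $\nalgebra$, with the \emph{same} conjugation $J_\Omega$ but with $\Delta_\Omega$ replaced by $\Delta_\Omega^{-1}$. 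Hence it is enough to prove the two inclusions $\Delta_\Omega^{\iu t}\nalgebra\Delta_\Omega^{-\iu t}\subseteq\nalgebra$ and $J_\Omega\nalgebra J_\Omega\subseteq\nalgebra'$. The first, taken for both signs of $t$, already forces $\Delta_\Omega^{\iu t}\nalgebra\Delta_\Omega^{-\iu t}=\nalgebra$; the conjugation equality is then obtained by combining the second with its mirror statement for $F$ on $\nalgebra'$.

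For the modular group I would regularise $A\in\nalgebra$ by setting, for $\hat f\in\mathcal{C}^\infty_0(\mathbb{R})$,
$$A_f=\int_{\mathbb{R}} f(t)\,\Delta_\Omega^{\iu t}A\Delta_\Omega^{-\iu t}\,dt,$$
the analogue of the entire analytic elements built in Proposition \ref{AnalDense}. Since conjugation by a fixed unitary is WOT-continuous, $\nalgebra$ is WOT-closed by Theorem \ref{TDC}, and $A_f\to A$ while $\Delta_\Omega^{\iu t}A_f\Delta_\Omega^{-\iu t}=A_{f(\,\cdot\,-t)}$ is again of the same type, it suffices to show $A_f\in\nalgebra''=\nalgebra$, i.e. that $A_f$ commutes with every $A'\in\nalgebra'$. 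To test this on vectors $B\Omega$, $B\in\nalgebra$, I would build a scalar function holomorphic on the strip $\strip{1}$, continuous and bounded on its closure, whose values on the two boundary lines are $\ip{C\Omega}{\Delta_\Omega^{\iu t}A\Delta_\Omega^{-\iu t}A'B\Omega}$ and $\ip{C\Omega}{A'\Delta_\Omega^{\iu t}A\Delta_\Omega^{-\iu t}B\Omega}$; the defining relations of $S$ and $F$ (equivalently, the modular KMS condition satisfied by $\omega_\Omega=\ip{\Omega}{\,\cdot\,\Omega}$ at inverse temperature $1$) force the boundary values to coincide, and a maximum-modulus/Liouville argument, in the spirit of the edge-of-the-wedge plus Liouville step of Lemma \ref{tauInv}, yields the commutation after integrating against $f$. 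This analytic lemma is the step I expect to be the main obstacle: the operators $\Delta_\Omega^{z}$ are unbounded, so the entire difficulty lies in justifying the domains, the analyticity and the boundedness on $\overline{\strip{1}}$ that make the contour argument legitimate; the cleanest rigorous implementations replace the naive $\Delta_\Omega^{z}$ by resolvents $(\Delta_\Omega+\lambda)^{-1}$ and pass to the limit $\lambda\to 0$.

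Granting $\Delta_\Omega^{\iu t}\nalgebra\Delta_\Omega^{-\iu t}=\nalgebra$, I would finally treat the conjugation. To prove $J_\Omega A J_\Omega\in\nalgebra'$ for $A\in\nalgebra$, I would show it commutes with every $B\in\nalgebra$ by evaluating on the dense set $\nalgebra\Omega$: using $S=J_\Omega\Delta_\Omega^{\frac12}$ one has $\Delta_\Omega^{\frac12}B\Omega=J_\Omega B^\ast\Omega$, and the invariance of $\nalgebra$ under the modular group allows $\Delta_\Omega^{\frac12}$ to be moved across elements of $\nalgebra$ on the analytic core, so that a direct computation gives $J_\Omega A J_\Omega\,B\Omega=B\,J_\Omega A J_\Omega\,\Omega$; density of $\nalgebra\Omega$ upgrades this to operator commutation, whence $J_\Omega\nalgebra J_\Omega\subseteq\nalgebra'$. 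Applying the identical argument to $F$ on the commutant yields $J_\Omega\nalgebra'J_\Omega\subseteq\nalgebra$, which, conjugated by $J_\Omega$ and combined with the previous inclusion via $J_\Omega^2=\mathbbm{1}$, gives $J_\Omega\nalgebra J_\Omega=\nalgebra'$ and completes the proof.
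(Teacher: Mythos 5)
First, note that the paper does not actually prove Theorem \ref{TTT}: it explicitly defers to \cite{Takesaki2002} and \cite{Bratteli1}, so your proposal can only be measured against the classical proofs you are paraphrasing. Measured that way, your first and third paragraphs are correct and standard bookkeeping: the identities from Proposition \ref{MOProp}, the observation that $F=J_\Omega\Delta_\Omega^{-\frac12}$ plays for $\nalgebra'$ the role $S$ plays for $\nalgebra$, the reduction to the two inclusions (the mirror statement is legitimate because $\Delta_\Omega^{\iu t}\nalgebra\Delta_\Omega^{-\iu t}=\nalgebra$ passes to commutants), and the derivation of $J_\Omega\nalgebra J_\Omega\subseteq\nalgebra'$ from modular invariance by computing $J_\Omega A J_\Omega B\Omega=BJ_\Omega A J_\Omega\Omega$ on entire analytic elements and using density.

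The genuine gap is exactly where you locate it, and it is not a small one: the inclusion $\Delta_\Omega^{\iu t}\nalgebra\Delta_\Omega^{-\iu t}\subseteq\nalgebra$ \emph{is} the theorem, and your second paragraph asserts its decisive step rather than proving it. Concretely, the claim that "the defining relations of $S$ and $F$" force $\ip{C\Omega}{\Delta_\Omega^{\iu t}A\Delta_\Omega^{-\iu t}A'B\Omega}$ and $\ip{C\Omega}{A'\Delta_\Omega^{\iu t}A\Delta_\Omega^{-\iu t}B\Omega}$ to be the two boundary values of one bounded holomorphic function on $\strip{1}$, and to coincide, is (weakly, on dense sets) precisely the commutation $[\Delta_\Omega^{\iu t}A\Delta_\Omega^{-\iu t},A']=0$ being proved; and appealing to "the modular KMS condition satisfied by $\omega_\Omega$" is circular, since the KMS property of $t\mapsto\Delta_\Omega^{\iu t}\,\cdot\,\Delta_\Omega^{-\iu t}$ \emph{as an automorphism group of} $\nalgebra$ presupposes the invariance. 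Before the theorem, the relations of $S$ and $F$ only yield vector-level identities such as $\Delta_\Omega^{\frac12}A\Omega=J_\Omega A^\ast\Omega$; promoting these to operator commutation is what the hard technical core of \cite{Bratteli1} and \cite{Takesaki2002} accomplishes, via resolvent regularization $(\Delta_\Omega+\lambda)^{-1}$, a Riesz-representation argument manufacturing elements of $\nalgebra$ from bounded sesquilinear forms, and a Fourier--Cauchy integral identity of the type $2\lambda^{\frac12}x^{\frac12}(x+\lambda)^{-1}=\int_{\mathbb{R}}(x/\lambda)^{\iu t}\operatorname{sech}(\pi t)\,dt$ converting resolvent information into statements about $\Delta_\Omega^{\iu t}\,\cdot\,\Delta_\Omega^{-\iu t}$; your closing remark names this machinery without supplying any of it. Finally, even granting that the two expressions are boundary values of a single bounded holomorphic function, a maximum-modulus/Liouville step in the style of Lemma \ref{tauInv} does not by itself yield pointwise equality of the two boundary functions (a non-constant bounded holomorphic function on a strip has distinct boundary values); the actual proofs need an additional gluing or symmetry structure. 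So the proposal is a faithful road map of the classical argument, but its central pillar is missing, and the sketch offered in its place would not close.
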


Just to mention an immediate consequence of the theorem above, notice that such a result implies that, for fixed $t\in\mathbb{R}$, $A\xmapsto{\sigma^\Omega_t}\Delta_\Omega^{\iu t}A\Delta_\Omega^{-\iu t}$ defines an automorphism of the algebra. Hence, $\left\{\sigma^\Omega_t \right\}_{t\in\mathbb{R}}$ is a one-parameter group of isometries.

\begin{definition}[Modular Automorphism Group]
	\label{def:modularautomosphism}
Let $\nalgebra$ be a von Neumann algebra with cyclic and separating vector $\Omega$ and let $\Delta_\Omega$ be the associated modular operator. For each $t\in\mathbb{R}$, define the automorphism $\sigma^\Omega:\nalgebra \to \nalgebra$ by $\sigma^\Omega_t(A)=\Delta_\Omega^{\iu t}A\Delta_\Omega^{-\iu t}$. We call the modular automorphism group\index{modular! automorphism group} the one-parameter group $\left\{\sigma^\Omega_t \right\}_{t\in\mathbb{R}}$.
\end{definition}

The modular automorphism group is very important in Physics because it is related with KMS states and, more than that, with the time evolution of the system, as we will mention in the suitable moment.

\begin{notation}
	We will denote the modular automorphism group with respect to a cyclic and separating vector $\Omega$ by $\left\{\sigma^\Omega_t \right\}_{t\in\mathbb{R}}$. In addition, given a faithful normal semifinite weight $\phi$ on a von Neumann algebra, we will denote by $\left\{\sigma^\phi_t \right\}_{t\in\mathbb{R}}$ the modular automorphism group with respect to the cyclic and separating vector obtained in the GNS-construction.
\end{notation}

The last result of this section presents some useful properties and a characterization of the modular conjugation. The proof of the theorem below can be found in \cite{Araki74}, as well as, an extensive treatment of the subject.

\begin{theorem}
Let $\nalgebra$ be a von Neumann algebra and $\Omega\in \hilbert$ a cyclic and separating vector. Then, an operator $J$ is the modular conjugation with respect to $\Omega$ if and only if
\begin{enumerate}[(i)]
\item $\ip{Jx}{Jy}=\ip{y}{x} \ \forall x,y \in \hilbert$,
\item $J^2=\mathbbm{1}$,
\item $J\nalgebra J=\nalgebra^\prime$,
\item $J\Omega=\Omega$,
\item $\ip{\Omega}{AJAJ\Omega}\geq 0 \ \forall A\in \nalgebra$, and the equality holds, if and only if, $A=0$.
\end{enumerate} 
\end{theorem}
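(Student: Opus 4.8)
The plan is to prove the two implications separately, treating necessity as a direct consequence of the already-established modular machinery and reserving the real work for sufficiency.

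For necessity I would take $J=J_\Omega$ and verify (i)--(v) one at a time. Properties (i) and (ii) are formal: anti-linearity of $J_\Omega$ together with $J_\Omega^\ast J_\Omega=\mathbbm{1}$ and $J_\Omega=J_\Omega^\ast$ (Proposition \ref{MOProp}$(iii)$,$(v)$) give $\ip{J_\Omega x}{J_\Omega y}=\overline{\ip{x}{y}}=\ip{y}{x}$ and $J_\Omega^2=\mathbbm{1}$. Property (iii) is exactly the first half of the Tomita--Takesaki Theorem \ref{TTT}. For (iv) I would note that $S\Omega=F\Omega=\Omega$ from the definitions of $S_0,F_0$, hence $\Delta_\Omega\Omega=FS\Omega=\Omega$, so $\Delta_\Omega^{1/2}\Omega=\Omega$ and $J_\Omega\Omega=J_\Omega\Delta_\Omega^{1/2}\Omega=S\Omega=\Omega$. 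For (v) I would use the identity $J_\Omega A\Omega=\Delta_\Omega^{1/2}A^\ast\Omega$ (immediate from $SA\Omega=A^\ast\Omega$ and $S=J_\Omega\Delta_\Omega^{1/2}$) to rewrite
\begin{equation*}
\ip{\Omega}{AJ_\Omega AJ_\Omega\Omega}=\ip{A^\ast\Omega}{\Delta_\Omega^{1/2}A^\ast\Omega}=\bigl\|\Delta_\Omega^{1/4}A^\ast\Omega\bigr\|^2\geq0,
\end{equation*}
the strictness following from injectivity of $\Delta_\Omega^{1/4}$ and the separating property of $\Omega$.

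For sufficiency I would assume (i)--(v) and aim to recover $J=J_\Omega$ through the positive cone $\mathcal{P}_J\defeq\overline{\{AJAJ\Omega \mid A\in\nalgebra\}}$. First, using $J\Omega=\Omega$ and $JAJ\in\nalgebra^\prime$ (properties (iv) and (iii)) together with the rule $J(X\psi)=(JXJ)(J\psi)$, I would show that $J$ fixes $\mathcal{P}_J$ pointwise, since $J(AJAJ\Omega)=(JAJ)A\Omega=A(JAJ)\Omega=AJAJ\Omega$. Then, for generators indexed by $A,B\in\nalgebra$, writing $C=B^\ast A$ and repeatedly commuting $\nalgebra$ past $\nalgebra^\prime$, I would reduce the pairing of two generators to a single one,
\begin{equation*}
\ip{BJBJ\Omega}{AJAJ\Omega}=\ip{\Omega}{CJCJ\Omega}\geq0,
\end{equation*}
which is nonnegative by (v). By continuity this gives $\ip{\xi}{\eta}\geq0$ for all $\xi,\eta\in\mathcal{P}_J$, that is, $\mathcal{P}_J$ is contained in its dual cone $\mathcal{P}_J^\ast$.

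The hard part will be the reverse inclusion $\mathcal{P}_J^\ast\subseteq\mathcal{P}_J$, i.e.\ proving that $\mathcal{P}_J$ is a self-dual cone: here the cyclicity of $\Omega$ and the strictness clause ``equality if and only if $A=0$'' in (v) must be brought to bear, and this is the step that genuinely needs the structure theory of self-dual positive cones. Once self-duality is in hand, $(\nalgebra,\hilbert,J,\mathcal{P}_J)$ is a standard form of $\nalgebra$, and I would finish by invoking the uniqueness of the self-dual cone compatible with $(\nalgebra,\Omega)$ --- namely the natural cone, which $J_\Omega$ fixes pointwise --- to conclude $\mathcal{P}_J=\mathcal{P}^\natural$ and therefore $J=J_\Omega$. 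The details of this final identification are those carried out in \cite{Araki74}.
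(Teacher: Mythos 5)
Your necessity direction is correct and coincides with the paper's: (i)--(iv) follow from the modular machinery (your explicit derivation of $J_\Omega\Omega=\Omega$ via $S\Omega=\Omega$ is in fact more careful than the paper, which simply cites Proposition \ref{MOProp}), and your computation
$\ip{\Omega}{AJ_\Omega AJ_\Omega\Omega}=\ip{A^\ast\Omega}{\Delta_\Omega^{1/2}A^\ast\Omega}=\bigl\|\Delta_\Omega^{1/4}A^\ast\Omega\bigr\|^2\geq0$,
with strictness from injectivity of $\Delta_\Omega^{1/4}$ and the separating property of $\Omega$, is exactly the argument in the paper.

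The sufficiency direction, however, has a genuine gap. Your plan reduces the theorem to two claims: (a) the cone $\mathcal{P}_J=\overline{\{AJAJ\Omega \mid A\in\nalgebra\}}$ is self-dual, and (b) a standard form $(\nalgebra,\hilbert,J,\mathcal{P}_J)$ whose cone contains the cyclic vector $\Omega$ must coincide with $(\nalgebra,\hilbert,J_\Omega,V_\Omega)$, forcing $J=J_\Omega$. You prove neither: (a) is explicitly labelled ``the hard part'' and left to ``the structure theory of self-dual positive cones,'' and (b) is outsourced to \cite{Araki74}. But (a) is precisely where the difficulty of the theorem lives. The known proofs of self-duality of the natural cone (Theorem \ref{theorem3in2}$(v)$, i.e. $\left(V_\Omega^{\alpha}\right)^{\circ}=V_\Omega^{\frac{1}{2}-\alpha}$, together with the identification $V_\Omega=\overline{\{Qj_\Omega(Q)\Omega\}}$) already presuppose that the conjugation in play is the modular one, so that $\Delta_\Omega^{1/4}$ is available; there is no visible way to establish self-duality of $\mathcal{P}_J$ from (i)--(v) alone that does not essentially prove $J=J_\Omega$ first. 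Moreover, since the theorem you are proving \emph{is} the characterization from \cite{Araki74}, deferring (b) to that reference is close to circular. Contrast this with the paper's sufficiency argument, which is short, elementary, and avoids cones entirely: define $T$ on $\nalgebra\Omega$ by $TA\Omega=JA^\ast\Omega$, so that $T=JS_0$ is linear and closable; use (iii),(iv) to compute $T^\ast A\Omega=S^\ast JAJ\Omega=JA^\ast J\Omega=TA\Omega$, whence $\overline{T}$ is self-adjoint; use (v) to get $\ip{A\Omega}{TA\Omega}=\ip{\Omega}{A^\ast JA^\ast J\Omega}\geq0$, so $\overline{T}\geq0$; then $\overline{T}=JJ_\Omega\Delta_\Omega^{1/2}=u\Delta_\Omega^{1/2}$ with $u=JJ_\Omega$ unitary, and $\overline{T}^{\,2}=\Delta_\Omega$, so uniqueness of the positive square root gives $\overline{T}=\Delta_\Omega^{1/2}$, forcing $u=\mathbbm{1}$, i.e. $J=J_\Omega$. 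To make your route complete you would have to supply the self-duality proof and a genuine uniqueness argument for standard forms sharing the vector $\Omega$; as it stands, the proposal is a plan whose pivotal steps are missing.
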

\begin{proof}
$(\Rightarrow)$ $(i)-(iv)$ were already proved in Theorem \ref{MOProp}.
For $(v)$ note that,
$$\begin{aligned}
\ip{\Omega}{AJAJ\Omega} &=\ip{\Omega}{AJA\Omega}\\
						&=\ip{\Omega}{AJ\Delta^{-\frac{1}{2}}\Delta^{\frac{1}{2}}A\Omega}\\
						&=\ip{\Omega}{A\Delta^{\frac{1}{2}}J\Delta^{\frac{1}{2}}A\Omega}\\
						&=\ip{\Omega}{A\Delta^{\frac{1}{2}}A^\ast\Omega}\\
						&=\ip{A^\ast\Omega}{\Delta^{\frac{1}{2}}A^\ast\Omega}\\
						&\geq 0,
\end{aligned}$$
since $\Delta^{\frac{1}{2}}$ is a positive operator.

Now, let $A\in\nalgebra$ and suppose $\ip{\Omega}{AJAJ\Omega}=0$, then $\Delta^{\frac{1}{4}}A^\ast\Omega=0\Rightarrow A^\ast\Omega=0\Rightarrow A=0$.

$(\Leftarrow)$ From $(i)$ and $(ii)$, $J$ is anti-unitary. Let us define an operator $T$ in the dense domain $\nalgebra\Omega$ by
$$TA\Omega=JA^\ast \Omega \quad \forall A\in \nalgebra.$$

From this definition, $T=JS_0$, which means that $T$ is also closable ($J$ is an isometry), $\overline{T}=JS$ and $\overline{T}^\ast=S^\ast J$, but using $(iii)$ and $(iv)$, $JA\Omega=JAJ\Omega \in \nalgebra^\prime\Omega\in \Dom{S^\ast}$. Hence
 $$\begin{aligned}
T^\ast A\Omega 	&= S^\ast JA\Omega \\
				&= S^\ast JAJ\Omega \\
				&= J A^\ast J\Omega \\
				&= J A^\ast \Omega \\
				&= T A\Omega. \\
 \end{aligned}$$
Thus $T$ and $T^\ast$ coincides on a dense subset and, therefore, we must have $\overline{T}=T^\ast$.

Now, for all $A \in \nalgebra$,
$$\begin{aligned}
\ip{A\Omega}{TA\Omega}	&= \ip{A\Omega}{JA^\ast\Omega}\\
						&= \ip{A\Omega}{JA^\ast J \Omega}\\
						&\geq 0.\\
\end{aligned}$$
This mean, by definition, means that $\overline{T}$ is a positive operator and, furthermore, $$\overline{T}=JJ_\Omega\Delta_\Omega^{\frac{1}{2}} = u\Delta_\Omega^{\frac{1}{2}}, $$
 where $u=JJ_\Omega$ is a unitary operator, since it is a product of two anti-unitary operators. But then, $T^\ast \overline{T} = \Delta_\Omega^{\frac{1}{2}} u^\ast u \Delta_\Omega^{\frac{1}{2}}=\Delta_\Omega$ and, by Theorem \ref{sqrt}, we conclude $\overline{T} =\Delta_\Omega^\frac{1}{2}$. Thus $u=\mathbbm{1}\Rightarrow J=J_\Omega$.
 
\end{proof}

The last comment we would like to add at this section is that, thanks to the Reeh-Schlieder Theorem, it is possible to obtain a modular operator for the algebra of local observables using the vacuum state, see \cite{Araki99} for more details. It has still been a very important and interesting result that corroborates the believe that von Neumann algebras are the right framework to describe quantum systems. Another interesting question is what is the interpretation of this local modular operator? The answer for this question is unknown in general, but in some spacial cases, \eg \ the one described by Bisognano and Wichmann in \cite{bisognano75} and \cite{bisognano76}, it has a reasonable physical interpretation.

\section{The Cones $V^\alpha_\Omega$}
\label{SecCones}

The general definition of a cone\index{cone} in a vector space is a subset closed by the operation of multiplying by a positive scalar. Analogously, a convex cone is a subset $C$ of the vector space such that, if $x,y\in C$ and if $\alpha,\beta\geq0$, $\alpha x+\beta y$ lies in $C$. Here, we are interest in specific cones, thus, presenting any general theory is beyond the scope of this thesis. 

The cones discussed here plays an interesting role in Araki's perturbation theory, but also in the Araki-Masuda Noncommutative $L_p$-Spaces and multiple-time KMS condition. Such a connection is nowhere surprising, since both are dependent on the possibility of analytically extending the modular operator.

In this section we will follow strictly the results presented in \cite{Araki74}, but other appearances and applications of those cones can be seen in \cite{Araki73}, \cite{Araki74}, \cite{Araki82}, and also in \cite{takesaki70.2} in special case and with another name.

\begin{definition}\glsdisp{cones}{\hspace{0pt}}
\label{DefCone} \index{cone}
Let $\Omega \in \hilbert$ be a cyclic and separating vector and $\alpha \in \mathbb{R}$. We define the cones $V^\alpha_\Omega$ of a von Neumann algebra as follows:
$$V^\alpha_\Omega=\overline{\{\Delta^\alpha_\Omega A\Omega \ | \ A\in\nalgebra , \ A \geq 0\}}^{w}$$
\end{definition}

\begin{definition}
\label{DefDualCone} \index{cone! dual}
Let $C$ be a closed and convex cone in $\hilbert$. We define the dual cone as the set
$$C^{\circ}=\{\xi \in \hilbert \ | \ \ip{\xi}{\eta}\geq0, \ \forall \eta \in C\}.$$
\end{definition}

The just defined cones and its duals have very interesting properties in connections with Modular Theory, as stated in the next proposition.

\vspace{1 cm}

\begin{theorem}
\label{theorem3in2} 
$V^\alpha_\Omega$ has the following properties:
\begin{enumerate}[(i)]
\item $V_{\Omega}^\alpha$ is a pointed weakly closed convex cone invariant under $\Delta_{\Omega}^{\iu\alpha}$;

\item $V_{\Omega}^\alpha\subset\Dom{\Delta_{\Omega}^{\frac{1}{2}-2\alpha}}$ and
\begin{equation}
\label{equationcone1}
J_\Omega \Phi=\Delta_{\Omega}^{\frac{1}{2}-2\alpha} \Phi \, , \quad \Phi\in V_{\Omega}^\alpha;
\end{equation}

\item $\Delta_{\Omega}^{\alpha}V_{\Omega}^0$ is a dense subset of $V_{\Omega}^\alpha$;

\item $J_\Omega V_\Omega^\alpha= V_\Omega^{\frac{1}{2}-\alpha}$;

\item The dual of $V_{\Omega}^\alpha$ is $V_{\Omega}^{\frac{1}{2}-\alpha}$;

\item $V_{\Omega}^\alpha=\Delta_{\Omega}^{\alpha-\frac{1}{4}}\left\{V_\Omega^{\frac{1}{4}}\cap\Dom{\Delta_{\Omega}^{\alpha-\frac{1}{4}}}\right\}$;

\item If $A \in \nalgebra$ and $A\Omega \in V_{\Omega}^\alpha$, then $\Delta_\Omega^{\iu z}A\Delta_\Omega^{-\iu z}$ is bounded by $\|A\|$ for $z\in \strip{2\alpha}$\footnote{see Definition \ref{Anal} or the List of Symbols.} and satisfies
$$\begin{aligned}
\overline{\Delta_\Omega^{-2\alpha}A\Delta_\Omega^{2\alpha}}	&=A^\ast\\
\overline{\Delta_\Omega^{-\alpha}A\Delta_\Omega^{\alpha}}	&\geq 0\\
\end{aligned}$$ 
where the bar indicates the closure;

\item If $\Phi \in V_{\Omega}^{\alpha}$, $\alpha\leq \frac{1}{4}$ and there exists $M>0$ such that $\omega_\Phi \leq M \omega_\Omega$, then there exists $A\in \nalgebra$ such that $\Phi=A\Omega$ and $\|A\|\leq M^{\frac{1}{2}}$;

\item If $A\Omega \in V_{\Omega}^{\alpha}$, $A\in \nalgebra$, then $\left(\|A\|-A\right)\Omega \in V_{\Omega}^{\alpha}$.
\end{enumerate}
\end{theorem}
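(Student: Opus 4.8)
Following Araki \cite{Araki74}, the plan is to prove each property first on the generating vectors $\Delta_\Omega^\alpha A\Omega$ with $A\in\nalgebra$, $A\ge 0$, and then pass to the weak closure. At the outset I would record the identities I use repeatedly: $J_\Omega\Omega=\Omega$ and $\Delta_\Omega\Omega=\Omega$; the intertwining $J_\Omega\Delta_\Omega^{z}=\Delta_\Omega^{-\bar z}J_\Omega$ (from Proposition \ref{MOProp} and Theorem \ref{TTT}); and, for self-adjoint $A\in\nalgebra$, the relation $\Delta_\Omega^{1/2}A\Omega=J_\Omega A\Omega$, which is merely $S A\Omega=A^\ast\Omega=A\Omega$ rewritten through $S=J_\Omega\Delta_\Omega^{1/2}$. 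I would also use that for each real $t$ the map $A\mapsto\Delta_\Omega^{\iu t}A\Delta_\Omega^{-\iu t}$ is the modular automorphism $\sigma^\Omega_t$ (Definition \ref{def:modularautomosphism}), hence preserves positivity.

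The elementary items come first. For (i), the generating set is stable under sums and non-negative scalars, so its weak closure is a weakly closed convex cone; $\Delta_\Omega^{\iu\alpha}$-invariance follows from $\Delta_\Omega^{\iu\alpha}\Delta_\Omega^\alpha A\Omega=\Delta_\Omega^\alpha\sigma^\Omega_\alpha(A)\Omega$ together with $\sigma^\Omega_\alpha(A)\ge 0$, while pointedness I would postpone to after (v). For (iv), the computation $J_\Omega\Delta_\Omega^\alpha A\Omega=\Delta_\Omega^{-\alpha}J_\Omega A\Omega=\Delta_\Omega^{1/2-\alpha}A\Omega$ sends generators of $V^\alpha_\Omega$ to generators of $V^{1/2-\alpha}_\Omega$, and since $J_\Omega$ is antiunitary (hence a weak homeomorphism) it carries weak closure onto weak closure. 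For (ii), the same computation gives $J_\Omega\Phi=\Delta_\Omega^{1/2-2\alpha}\Phi$ on generators; to reach the closure I would observe that $\{\Phi\in\Dom{\Delta_\Omega^{1/2-2\alpha}}:J_\Omega\Phi=\Delta_\Omega^{1/2-2\alpha}\Phi\}$ is the preimage of the (norm-closed, hence weakly closed) graph of the self-adjoint operator $\Delta_\Omega^{1/2-2\alpha}$ under the bounded map $\Phi\mapsto(\Phi,J_\Omega\Phi)$, so it is weakly closed and contains $V^\alpha_\Omega$, the membership of the generators in the domain coming from Lemma \ref{lema3.13}. Item (iii) is then immediate, since $A\Omega\in V^0_\Omega$ for $A\ge0$, so $\Delta_\Omega^\alpha V^0_\Omega$ already contains every generator of $V^\alpha_\Omega$ and is therefore weakly dense in it; and (vi) I would deduce from (iii)/(iv) by writing $V^\alpha_\Omega=\overline{\Delta_\Omega^{\alpha-1/4}(\Delta_\Omega^{1/4}V^0_\Omega)}$ and intersecting with $\Dom{\Delta_\Omega^{\alpha-1/4}}$.

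The crux, which I expect to be the main obstacle, is the duality (v), $(V^\alpha_\Omega)^{\circ}=V^{1/2-\alpha}_\Omega$. The inclusion $V^{1/2-\alpha}_\Omega\subset(V^\alpha_\Omega)^{\circ}$ amounts to $\ip{\Delta_\Omega^\alpha A\Omega}{\Delta_\Omega^{1/2-\alpha}B\Omega}\ge0$ for $A,B\ge0$, which I would establish through the vector-valued analytic function built from $z\mapsto\Delta_\Omega^{z}A\Omega$, reducing to the self-dual symmetric case $\alpha=\tfrac14$ (where $\tfrac12-\alpha=\alpha$ and the cone is the natural positive cone) and using a three-lines/Phragm\'en--Lindel\"of estimate to transport positivity from the symmetric line to the pair $(\alpha,\tfrac12-\alpha)$. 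The reverse inclusion I would obtain from the bipolar theorem for closed convex cones combined with the self-duality of $V^{1/4}_\Omega$ and the already-proven (iv). With (v) in hand, pointedness in (i) follows: if $\Phi\in V^\alpha_\Omega\cap(-V^\alpha_\Omega)$ then $\ip{\Phi}{\eta}=0$ for all $\eta\in V^{1/2-\alpha}_\Omega$, and since the generators of that cone span a dense subspace (every element of $\nalgebra$ being a combination of positives by Proposition \ref{sum4positive}, with $\Omega$ cyclic), $\Phi=0$.

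The remaining items rest on analyticity. For (vii), assuming $A\Omega\in V^\alpha_\Omega$, I would show that $z\mapsto\Delta_\Omega^{\iu z}A\Delta_\Omega^{-\iu z}$ extends analytically and boundedly (by $\|A\|$) to $\strip{2\alpha}$ with boundary values in $\nalgebra$; a Cauchy-integral representation then places the interior values in the weakly closed convex set $\nalgebra$, and evaluating the continuation at the interior point $\iu\alpha$ and the boundary point $2\iu\alpha$, combined with (ii), yields $\overline{\Delta_\Omega^{-\alpha}A\Delta_\Omega^{\alpha}}\ge0$ and $\overline{\Delta_\Omega^{-2\alpha}A\Delta_\Omega^{2\alpha}}=A^\ast$. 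Item (ix) is then a short corollary: setting $C=\overline{\Delta_\Omega^{-\alpha}A\Delta_\Omega^{\alpha}}\in\nalgebra_+$ with $\|C\|\le\|A\|$, one has $A\Omega=\Delta_\Omega^\alpha C\Omega$ (as $\Delta_\Omega^{-\alpha}\Omega=\Omega$), whence $(\|A\|-A)\Omega=\Delta_\Omega^\alpha(\|A\|\mathbbm{1}-C)\Omega$ with $\|A\|\mathbbm{1}-C\ge0$, a generator of $V^\alpha_\Omega$. Finally, for (viii) the domination $\omega_\Phi\le M\omega_\Omega$ makes $B\Omega\mapsto B\Phi$ a bounded map on $\nalgebra\Omega$ of norm $\le M^{1/2}$ commuting with $\nalgebra$, hence given by some $T\in\nalgebra^\prime$ with $T\Omega=\Phi$ (the same object produced by the commutant Radon--Nikodym argument, Theorem \ref{commutantRN}); conjugating by $J_\Omega$ and invoking (ii) together with $\alpha\le\tfrac14$ converts $T$ into $A=J_\Omega T J_\Omega\in\nalgebra$ (Theorem \ref{TTT}) with $\|A\|=\|T\|\le M^{1/2}$ and $A\Omega=\Phi$. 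The delicate bookkeeping throughout is the domain and analytic-continuation control of the unbounded powers $\Delta_\Omega^{s}$, which is where the real work concentrates alongside the self-duality (v).
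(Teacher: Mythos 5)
Your item (viii) contains a step that fails outright. With $T\in\nalgebra^\prime$, $T\Omega=\Phi$, $\|T\|\le M^{1/2}$ (this part is fine, and is a cleaner packaging of the paper's Riesz/Radon--Nikodym step), the operator $A=J_\Omega T J_\Omega$ does lie in $\nalgebra$, but $A\Omega=J_\Omega T J_\Omega\Omega=J_\Omega\Phi$, and by your own item (ii) this equals $\Delta_\Omega^{\frac{1}{2}-2\alpha}\Phi$, which is $\Phi$ only when $\alpha=\tfrac14$. So conjugation by $J_\Omega$ alone never produces the required operator for general $\alpha\le\tfrac14$, and ``invoking (ii)'' is exactly what exposes the mismatch rather than repairing it. The paper's proof has to do genuinely more at this point: it obtains $Q_1=J_\Omega(uQ^\prime)J_\Omega\in\nalgebra$ with $Q_1\Omega=\Delta_\Omega^{\frac{1}{2}-2\alpha}\Phi$, and then applies the analytic-continuation machinery of Lemma \ref{lemma6} to set $Q=\tau_\Omega\left(\frac{\iu}{2}-2\iu\alpha\right)Q_1\in\nalgebra$, which simultaneously corrects the modular factor (so $Q\Omega=\Phi$) and preserves the norm bound through the maximum-modulus estimate $\|\tau_\Omega(z)Q_1\|\le\max\{\|Q_1\|,\dots\}$; this continuation is also where the hypothesis $\alpha\le\tfrac14$ actually does work. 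That analytic step is a missing idea in your proposal, not bookkeeping.

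Two further points. Your route to (v) is circular as stated: the self-duality of $V_\Omega^{1/4}$ is precisely the case $\alpha=\tfrac14$ of (v) — the paper derives self-duality as an immediate corollary of (v), not conversely — so feeding it into a bipolar argument presupposes the conclusion unless you independently prove the natural cone self-dual (e.g.\ via $V_\Omega=\overline{\{Qj_\Omega(Q)\Omega\}}$), which you do not. The paper itself proceeds quite differently and more elementarily: the positivity of pairings is immediate from $\ip{\Delta_\Omega^\alpha A\Omega}{\Delta_\Omega^{\frac{1}{2}-\alpha}B\Omega}=\ip{\Omega}{B\,J_\Omega A J_\Omega\,\Omega}\ge0$, since $B\ge0$ and $J_\Omega AJ_\Omega\in\nalgebra^\prime_+$ commute — no three-lines argument is needed for that half (though, to be fair, the paper's own treatment of the reverse inclusion $(V_\Omega^\alpha)^\circ\subset V_\Omega^{\frac{1}{2}-\alpha}$ is also thin). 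Finally, your one-line (vi) glosses the actual content of that item: the claim is an exact equality with no closure on the right-hand side, and the paper establishes it by showing, via the interpolation inequality of Lemma \ref{lema3.13}, that $\left\|\Delta_\Omega^{\frac{1}{4}}(A_n-A_m)\Omega\right\|^2\le\left\|\Delta_\Omega^{\alpha}(A_n-A_m)\Omega\right\|^2+\left\|\Delta_\Omega^{\frac{1}{2}-\alpha}(A_n-A_m)\Omega\right\|^2$, so that $\Delta_\Omega^{\frac{1}{4}-\alpha}$ of a convergent generating sequence still converges, and then closing the reverse inclusion with the duality (v); ``intersecting with the domain'' does not deliver the absence of an extra closure. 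Your items (i)--(iv), the outline of (vii), and (ix) do match the paper's argument, with (ix) in fact slightly streamlined.
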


\begin{proof}
$(i)$ By definition $V^\alpha_\Omega$ is weakly closed, and it is convex because the convex combinations of positive elements is positive. It is invariant under $\Delta_{\Omega}^{\iu t}$ because $\Delta_{\Omega}^{\iu t}A\Delta_{\Omega}^{-\iu t}\in \nalgebra$ thanks to Theorem \ref{TTT} and
$$\Delta_{\Omega}^{\iu t}\left(\Delta_{\Omega}^{\alpha}A\Omega\right) = \Delta_{\Omega}^{\alpha}\left(\Delta_{\Omega}^{\iu t}A\Omega\right) =\Delta_{\Omega}^{\alpha}\left(\Delta_{\Omega}^{\iu t}A\Delta_{\Omega}^{-\iu t}\Omega\right) \in V^\alpha_\Omega.$$

$(ii)$ Let $A\in \nalgebra$, $A\geq 0$, then
\begin{equation}
\label{equationcone2}
\begin{aligned}
J_\Omega(\Delta^\alpha_\Omega A\Omega)
	&=J_\Omega\left(\Delta^{\alpha-\frac{1}{2}}_\Omega \Delta^{\frac{1}{2}}_\Omega A\Omega\right)\\
	&=\Delta^{\frac{1}{2}-\alpha}_\Omega J_\Omega\Delta^{\frac{1}{2}}_\Omega A\Omega\\
	&=\Delta^{\frac{1}{2}-\alpha}_\Omega A^\ast\Omega\\
	&=\Delta^{\frac{1}{2}-\alpha}_\Omega A\Omega\\
	&=\Delta^{\frac{1}{2}-2\alpha}_\Omega\left(\Delta^{\alpha}_\Omega A\Omega\right).\\
\end{aligned}
\end{equation}
which proves equation \eqref{equationcone1}. This also proves the statement, because $V^\alpha_\Omega$ is convex and weakly closed, thus strongly closed, hence equation \eqref{equationcone1} must hold in the whole domain since the operators are closed.

$(iii)$ $V^0_\Omega$ is convex and weakly closed, hence strongly closed. Let $\Phi \in V^0_\Omega$, then there exists $(A_n)_{n\in\mathbb{N}} \subset \nalgebra$, a sequence of positive operators, such that $A_n\Omega \rightarrow \Phi$. Using now that $\Delta_\Omega^\frac{1}{2}$ is a positive operator, we have from equations \eqref{1} and \eqref{eq3.13},  that
$$\begin{aligned}
\left\|\Delta^\alpha_\Omega(A_n\Omega-\Phi)\right\|^2
	&=\left\|\Delta^\alpha_\Omega(A_n\Omega-\Phi)\right\|^2 \\
	&\leq\left\|\Delta^{\frac{1}{2}}_\Omega(A_n\Omega-\Phi)\right\|^2+\left\|A_n\Omega-\Phi\right\|^2 \\
	&=\left\|J_\Omega(A_n\Omega-\Phi)\right\|^2+\left\|A_n\Omega-\Phi\right\|^2 \rightarrow 0.\\	
\end{aligned}$$

Hence $\Delta^\alpha_\Omega V^0_\Omega \subset V^\alpha_\Omega$, but $\Delta^\alpha_\Omega V^0_\Omega$ is closed and contains the set under the closure sign in Definition \ref{DefCone}, from where we have the equality. 

$(iv)$  $J_\Omega V_\Omega^\alpha \subset V_\Omega^{\frac{1}{2}-\alpha}$ was proved in equation \eqref{equationcone2} simply by noticing that $J_\Omega$ is continuous and $V_\Omega^{\frac{1}{2}-\alpha}$ is closed. As for the opposite inclusion, just apply $J_\Omega$ to both sides of equation and use $J_\Omega^2=\mathbbm{1}$. 

$(v)$Let $A\in\nalgebra$, $A\geq0$, $\xi, \Omega\in V^{\frac{1}{2}-\alpha}_\Omega$ and $(B_n)_{n\in\mathbb{N}} \subset \nalgebra$, $B_n\geq0$ such that $B_n\Omega \rightarrow \xi$. By definition, $\Delta_\Omega^\alpha A\Omega\in V^0_\Omega$ and
$$\begin{aligned}
\ip{\Delta_\Omega^\alpha A\Omega}{\Delta_{\Omega}^{\frac{1}{2}-\alpha}B_n\Omega}	&=\ip{J_\Omega\Delta_{\Omega}^{\frac{1}{2}}B_n\Omega}{J_\Omega A\Omega}\\
&= \ip{B_n\Omega}{J_\Omega AJ_\Omega\Omega}\\
&= \ip{\Omega}{B_n J_\Omega AJ_\Omega\Omega}\\
& \geq 0,\\
\end{aligned}$$
since $J_\Omega AJ_\Omega \in \nalgebra^\prime$, thanks to Theorem \ref{TTT}. Hence, $\ip{\Delta_\Omega^\alpha A\Omega}{\xi}\geq0$ and we conclude, since $V^{\frac{1}{2}-\alpha}_\Omega$ is closed, that $V_\Omega^\alpha\subset V^{\frac{1}{2}-\alpha}_\Omega$.

Now, let $B\in\nalgebra$, $B\geq0$, $\eta\in V^{\alpha}_\Omega$ and $(A_n)_{n\in\mathbb{N}} \subset \nalgebra$, $A_n\geq0$ such that $\Delta_\Omega^\alpha A_n\Omega \rightarrow \eta$. The exact same calculation gives us $\ip{\eta}{\Delta_\Omega^{\frac{1}{2}-\alpha} B\Omega}\geq0$ and hence $V_\Omega^\alpha\supset V^{\frac{1}{2}-\alpha}_\Omega$.

$(vi)$ First, suppose $\alpha\leq \frac{1}{4}$.

Take $\xi \in V^\alpha_\Omega$. Then, there exists a sequence $(A_n)_{n\in\mathbb{N}} \subset \nalgebra_{+}$ such that $A_n\Omega\rightarrow\xi$. From Lemma \ref{eq3.13} it follows that
$$\begin{aligned}
\left\|\Delta_\Omega^{\frac{1}{4}}\left(A_n-A_m\right)\Omega\right\|^2
	&=\left\|\Delta_\Omega^{\frac{1}{4}-\alpha}\left(\Delta_\Omega^{\alpha}\left(A_n-A_m\right)\Omega\right)\right\|^2 \\
	&\leq\left\|\Delta_\Omega^{\alpha}\left(A_n-A_m\right)\Omega\right\|^2 + \left\|\Delta_\Omega^{\frac{1}{2}-\alpha}\left(A_n-A_m\right)\Omega\right\|^2\\
	&=\left\|\Delta_\Omega^{\alpha}\left(A_n-A_m\right)\Omega\right\|^2 + \left\|J_\Omega\left(A_n-A_m\right)\Omega\right\|^2\\
	& \rightarrow 0.\\
\end{aligned}$$
Thus, $\left(\Delta_\Omega^{\frac{1}{4}}A_n\Omega\right)_{n\in\mathbb{N}}$ is a Cauchy sequence. Consequently, $\left(\Delta_\Omega^{\frac{1}{4}}A_n\Omega\right)\rightarrow \eta \in V_\Omega^{\frac{1}{4}}$. Since $\Delta^{\frac{1}{4}-\alpha}_\Omega$ is a closed operator and 
$$\ip{\Delta^\alpha_\Omega A_n\Omega}{ \Delta^{\frac{1}{4}}_\Omega A_n\Omega } =\ip{\Delta^\alpha_\Omega A_n\Omega}{\Delta_\Omega^{\frac{1}{4}-\alpha}\left(\Delta_\Omega^{\alpha}A_n\Omega\right) }\rightarrow\ip{\xi}{\eta}=\ip{\xi}{ \Delta^{\frac{1}{4}-\alpha}\xi}$$ 

Hence, $\Delta^{\frac{1}{4}-\alpha}\xi\in V^\frac{1}{4}_\Omega\cap\Dom{\Delta^{\alpha-\frac{1}{4}}}$ or, equivalently, $V^\alpha_\Omega \subset \Delta^{\frac{1}{4}-\alpha}\left(V^\frac{1}{4}_\Omega\cap\Dom{\Delta^{\alpha-\frac{1}{4}}}\right)$.

On the other hand, let $\xi\in V^\frac{1}{4}_\Omega\cap\Dom{\Delta^{\alpha-\frac{1}{4}}}$ and $\eta \in V^0_\Omega$. Then
$$\ip{\Delta^{\frac{1}{2}-\alpha}_\Omega\eta}{\Delta^{\alpha-\frac{1}{4}}_\Omega\xi}=\ip{\Delta^{\frac{1}{4}}_\Omega\eta}{\xi}\geq0$$ by $(iii)$ and $(v)$. Moreover, since the vectors on the right-hand side form a dense set, 
$$\ip{\zeta}{\Delta^{\alpha-\frac{1}{4}}_\Omega}{\xi}\geq0 \qquad \forall \zeta\in V^{\frac{1}{2}-\alpha}_\Omega.$$
Hence, $$\Delta^{\alpha-\frac{1}{4}}_\Omega\xi \in \left(V^{\frac{1}{2}-\alpha}_\Omega\right)^\circ=V^\alpha_\Omega.$$
In other words, $V^\alpha_\Omega \supset \Delta^{\frac{1}{4}-\alpha}\left(V^\frac{1}{4}_\Omega\cap\Dom{\Delta^{\alpha-\frac{1}{4}}}\right)$.

For $\alpha>\frac{1}{4}$ simply use the dual relation in $(v)$.

$(vii)$ Since $A\Omega \in V_{\Omega}^\alpha$ and $A\Omega \in \Dom{\Delta_\Omega^{\frac{1}{2}-2\alpha}}$, one has
$$J_\Omega A\Omega=\Delta_{\Omega}^{\frac{1}{2}}A^\ast \Omega= \Delta_{\Omega}^{\frac{1}{2}-2\alpha}A \Omega$$ from $(ii)$. Thus $A\Omega \in \Dom{\Delta_{\Omega}^{-2\alpha}}$ and $\Delta^{-2\alpha}A\Omega=A^\ast \Omega$.
The boundedness follows by Lemma \ref{lemma6} and the positivity follow from $(ii)$, $(iii)$ and $(v)$. If $A_1=\overline{\Delta_\Omega^{-2\alpha}A\Delta_\Omega^{2\alpha}}$, then $A\Delta_\Omega^\alpha \Phi= \Delta_\Omega^\alpha A_1\Phi$ for $\Phi$ in a dense subset. Hence, $\Delta_\Omega^\alpha A^\ast\Omega=A_1^\ast\Omega$, which implies $J_\Omega\Delta_\Omega^{\frac{1}{2}-\alpha} A\Omega=J_\Omega\Delta^{\frac{1}{2}} A_1\Omega$ and we conclude that $A\Omega=\Delta_\Omega^\alpha A_1 \Omega$. Note now that $A_1\geq 0$, hence $A\Omega \in V_\Omega^\alpha$.

$(viii)$ Since $\Phi\in V^\alpha_\Omega$, where $\alpha\leq\frac{1}{4}$, and $M>0$ is such that $\omega_\Phi\leq M \omega_\Omega$, it follows from Proposition \ref{PPC} that $(A,B)\mapsto \omega_\Phi(A^\ast B)=\ip{A\Phi}{B\Phi}$ induces a bounded sesquilinear form. By the Riesz Representation Theorem, there must exist a bounded operator $T^\prime \in B(\hilbert)$ such that 
\begin{equation}
\label{eqriesz}
\ip{A\Phi}{B\Phi}=\ip{T^\prime A\Omega}{B\Omega}.
\end{equation}

Moreover, since $0\leq \omega_\Phi(A^\ast A) =\ip{T^\prime A\Omega}{A\Omega}\leq M\omega_\Omega(A^\ast A)$ and the vectors of the form $A\Omega$ are dense, $0\leq T^\prime \leq M \mathbbm{1}$ and, in addition, $\ip{B\Omega}{T^\prime C A \Omega}=\omega_\Phi(B^\ast AC)=\omega_\Phi((C^\ast B)^\ast A)=\ip{B\Omega}{C T^\prime A \Omega}$. Consequently, $T^\prime\in \nalgebra^\prime$. Rewriting equation \eqref{eqriesz} in terms of $Q^\prime = \sqrt{T^\prime}$, one concludes that $\omega_\Phi =\omega_{Q^\prime \Omega}$ with $Q^\prime \in \nalgebra^\prime$ and $\|Q^\prime\|\leq M^\frac{1}{2}$. This implies that there should exist a partial isometry $u\in \nalgebra^\prime$ satisfying $\Omega=u Q^\prime \Phi$.

Using $(ii)$,
$$\Delta^{\frac{1}{2}-2\alpha}_\Omega uQ^\prime \Omega=\Delta^{\frac{1}{2}-2\alpha}_\Omega\Phi=J_\Omega\Phi=j_\Omega(u Q^\prime)\Omega$$
thus, by $(iv)$, $J_\Omega \in V^{\frac{1}{2}-\alpha}_\Omega$, by $(vii)$, $Q_1=j_\Omega(uQ^\prime) \in \nalgebra$.

Using Lemma \ref{lemma6}, there exists a family of operators $\tau_\Omega(z)$ analytic on $\strip{\frac{1}{2}-\alpha}$ and continuous on the border. Take
$$Q=\tau_\Omega\left(\frac{\iu}{2}-2\iu \alpha\right)Q_1 \in \nalgebra.$$
Then, from condition $(iii)$ in Lemma \ref{lemma6}, $\Phi=Q\Omega$ and from condition $(iv)$ (and the fact that $J_\Omega$ is an isometry), $\|Q\|\leq \|Q_1\|\leq M^\frac{1}{2}$.

$(ix)$ Let $A\in V^\alpha_\Omega$. Using $(vii)$, $\Delta_\Omega^{-\alpha}A\Delta_\Omega^{\alpha}$ is positive and bounded by $\|A\|$.Hence,
$$\Delta_\Omega^{-\alpha}\left(\|A\|\mathbbm{1}-A\right)\Delta_\Omega^{\alpha}=\|A\|\mathbbm{1}-\Delta_\Omega^{-\alpha}A\Delta_\Omega^{\alpha}$$
is also bounded by $\|A\|$, positive and affiliated with $\nalgebra$. By $(vii)$ again, we have the desired result.
 
\end{proof}

It is immediate from item $(v)$ in \ref{theorem3in2} that $V_\Omega^\frac{1}{4}$ is the unique self-dual cone. This property has very interesting consequences, and we will use this cone several times, which justifies a new notation.

\begin{notation}
We set $V_\Omega\doteq V_\Omega^\frac{1}{4}$ and $j_\Omega(Q)=J_\Omega Q J_\Omega, where $Q$\in\nalgebra$.
\end{notation}

The selfdual cone $V_\Omega$ has special properties. In particular, it has a strong connection with the modular conjugation, namely, if $\Phi \in V_\Omega$, then, $J_\Phi=J_\Omega$. The proof of this fact can be found in \cite{Araki74}.

\begin{theorem}
The cone $V_\Omega$, where $\Omega$ is a cyclic and separating vector for the von Neumann algebra $\nalgebra$, has the following properties:
\begin{enumerate}[(i)]
\item $V_\Omega$ is a pointed closed selfdual (weakly) closed convex cone;
\item $V_\Omega=\overline{\{Qj_\Omega(Q)\Omega | Q\in \nalgebra\}}^{\|\cdot\|}$;
\item  one has
$$\begin{aligned}
&\Delta_\Omega^{\iu t} V_\Omega = V_\Omega & \quad & \forall t \in \mathbb{R};\\
&J_\Omega x = x & \quad & \forall x \in V_\Omega;\\
&Q j_\Omega(Q) V_\Omega \subset V_\Omega & \quad & \forall Q\in \nalgebra;\\
&\ip{x}{Qj_\Omega(Q)y}\geq 0 & \quad & \forall x,y \in V_\Omega \textrm{ and } \forall Q \in \nalgebra;\\
\end{aligned}$$
\item If $\Phi \in V_\Omega$ is a cyclic or separating vector for $\nalgebra$, then $\Phi$ is cyclic and separating for $\nalgebra$ and $V_\Omega=V_\Phi$;
\item If $\Phi$ is cyclic and separating for $\nalgebra$, then $\Phi \in V\Omega$ if and only if $J_\Phi=J_\Omega$ and, for all positive $Z$ in the center of the algebra,
$$\ip{\Phi}{Z\Phi}\geq 0 $$
\item Any $\Phi\in \hilbert$ has a unique decomposition
$$\Phi = (\Phi_1-\Phi_2)+\iu(\Phi_3-\Phi_4)$$
such that $\Phi_i \in V_\Omega$, $1\leq i\leq 4$,  and $\Phi_1\perp\Phi_2$, $\Phi_3 \perp\Phi_4$.
\end{enumerate}
\end{theorem}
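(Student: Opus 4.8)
The theorem collects six structural properties of the self-dual cone $V_\Omega = V_\Omega^{1/4}$. My strategy would be to dispatch the properties in the listed order, since each later item leans on the earlier ones and on the general cone theorem already proved (Theorem \ref{theorem3in2}). Most of the work is really a specialization of that theorem to $\alpha=\frac14$, together with the fact, quoted just before the statement, that $\Phi\in V_\Omega$ forces $J_\Phi=J_\Omega$.

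First I would handle $(i)$: self-duality is exactly item $(v)$ of Theorem \ref{theorem3in2} read at $\alpha=\frac14$, since $\frac12-\frac14=\frac14$, so $V_\Omega$ equals its own dual; pointedness, weak closedness and convexity come from item $(i)$ of that theorem. For $(ii)$ I would use item $(ii)$ of the preceding theorem on modular conjugation, namely $J_\Omega\nalgebra J_\Omega=\nalgebra'$, so that $Qj_\Omega(Q)\Omega = QJ_\Omega QJ_\Omega\Omega$ is positive-type; the identification with $V_\Omega^{1/4}$ follows by writing $\Delta_\Omega^{1/4}A\Omega$ for $A\geq0$ and checking $\Delta_\Omega^{1/4}A\Omega = A^{1/2}j_\Omega(A^{1/2})\Omega$ via $J_\Omega\Delta_\Omega^{1/2}A^{1/2}\Omega = A^{1/2}{}^\ast\Omega$ and the Tomita-Takesaki relations, then taking norm closures. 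For $(iii)$ the $\Delta_\Omega^{\iu t}$-invariance is item $(i)$ of Theorem \ref{theorem3in2}; the fixed-point property $J_\Omega x = x$ for $x\in V_\Omega$ is equation \eqref{equationcone1} at $\alpha=\frac14$ (where $\frac12-2\alpha=0$, so $J_\Omega\Phi=\Phi$); and the last two lines follow from $(ii)$ together with the positivity $\ip{\Omega}{Qj_\Omega(Q)\Omega}\geq0$ established in the modular-conjugation theorem, extended to all $x,y\in V_\Omega$ by density and continuity.

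The genuinely substantial items are $(iv)$, $(v)$ and $(vi)$, and these I would attribute in large part to the characterization theorem cited from \cite{Araki74}. For $(iv)$ I would argue that if $\Phi\in V_\Omega$ is merely cyclic (or merely separating), then since $J_\Phi=J_\Omega$ by the quoted fact and $V_\Phi$ is a self-dual cone containing $\Phi$, one shows $V_\Omega=V_\Phi$, and Corollary \ref{cyclic_separating} upgrades cyclicity to cyclicity-and-separateness because the self-dual cone structure forces the support projections to be $\mathbbm{1}$. For $(v)$, the forward direction uses that $\Phi\in V_\Omega$ gives $J_\Phi=J_\Omega$ directly and the central positivity from the cone's self-duality; the converse requires reconstructing $\Phi\in V_\Omega$ from the two conditions, which is where the real analytic content sits. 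For $(vi)$, the unique orthogonal decomposition, I would invoke self-duality of $V_\Omega$: any closed self-dual cone in a complex Hilbert space induces on the real subspace $\{x : J_\Omega x = x\}$ a lattice structure giving the orthogonal decomposition $\Phi_r = \Phi_1-\Phi_2$ with $\Phi_1\perp\Phi_2$, applied separately to the $J_\Omega$-real and $J_\Omega$-imaginary parts of $\Phi$.

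\textbf{The main obstacle.} The hard part will be the converse of $(v)$ and the decomposition $(vi)$, both of which rest on the Hilbert-space geometry of self-dual cones (a Moreau-type decomposition adapted to the antilinear involution $J_\Omega$) rather than on the algebraic manipulations that settle $(i)$--$(iv)$. Since the excerpt explicitly defers the proof to \cite{Araki74}, I would state these two parts carefully, reduce $(vi)$ to the real-subspace lattice decomposition, and cite the external reference for the self-dual cone geometry, while giving the algebraic items $(i)$--$(iv)$ in full from the machinery already available in the text.
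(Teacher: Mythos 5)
Your items $(i)$ and $(iii)$, and your decision to defer $(iv)$--$(vi)$ to \cite{Araki74}, match the paper: the paper proves $(i)$ exactly as you do (items $(i)$ and $(v)$ of Theorem \ref{theorem3in2} at $\alpha=\tfrac14$), proves only the inclusion $\overline{\{Qj_\Omega(Q)\Omega \mid Q\in\nalgebra\}}\subset V_\Omega$ of $(ii)$, and explicitly skips everything else, citing \cite{Araki74}. So your plan is, if anything, more ambitious than the paper's actual proof. However, the one genuinely new computation you propose, the key step of your item $(ii)$, is wrong.

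The identity $\Delta_\Omega^{1/4}A\Omega = A^{1/2}j_\Omega(A^{1/2})\Omega$ for $A\geq 0$ is false. The correct pointwise computation is the following: since $S A^{1/2}\Omega = A^{1/2}\Omega$ gives $J_\Omega A^{1/2}\Omega = \Delta_\Omega^{1/2}A^{1/2}\Omega$, one gets
$$A^{1/2}j_\Omega(A^{1/2})\Omega = A^{1/2}J_\Omega A^{1/2}J_\Omega\Omega = A^{1/2}J_\Omega A^{1/2}\Omega = A^{1/2}\Delta_\Omega^{1/2}A^{1/2}\Omega,$$
and $A^{1/2}\Delta_\Omega^{1/2}A^{1/2}\Omega \neq \Delta_\Omega^{1/4}A\Omega$ in general: equating them requires commuting $A^{1/2}$ past $\Delta_\Omega^{1/4}$, which is exactly what fails for non-analytic elements. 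A finite-dimensional counterexample makes this concrete: for $\nalgebra = M_n(\mathbb{C})$ acting by left multiplication on the Hilbert--Schmidt space with $\Omega=\rho^{1/2}$ ($\rho$ a faithful density matrix), one has $\Delta_\Omega^{1/4}A\Omega \leftrightarrow \rho^{1/4}A\rho^{1/4}$ while $A^{1/2}j_\Omega(A^{1/2})\Omega \leftrightarrow A^{1/2}\rho^{1/2}A^{1/2}$, and these differ whenever $[A,\rho]\neq 0$. The two \emph{sets} $\{\Delta_\Omega^{1/4}A\Omega : A\geq0\}$ and $\{Qj_\Omega(Q)\Omega : Q\in\nalgebra\}$ have the same closure, but the parametrizations do not match pointwise, so "checking the identity and taking norm closures" does not produce a proof. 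This is precisely why the paper regularizes: it takes Gaussian-smoothed entire analytic elements $A_n$ (equation \eqref{formulaAE}), for which one legitimately has
$$A_nj_\Omega(A_n)\Omega = A_n\Delta_\Omega^{1/2}A_n^\ast\Omega = \Delta_\Omega^{1/4}\,\tau_{\frac{\iu}{4}}(A_n)\tau_{\frac{\iu}{4}}(A_n)^\ast\,\Omega \in V_\Omega,$$
since the middle factor is positive, and then passes to the limit using closedness of $V_\Omega$. Note also that even this analytic-element argument only yields the inclusion $\overline{\{Qj_\Omega(Q)\Omega\}}\subset V_\Omega$; the reverse inclusion (the one your false identity was meant to give) needs a separate argument and is deferred by the paper to \cite{Araki74}. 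To repair your proposal you should either reproduce the reverse inclusion from Araki's paper (e.g.\ via self-duality: $\overline{\{Qj_\Omega(Q)\Omega\}}$ is a closed convex cone whose dual is contained in the dual of $V_\Omega$) or, like the paper, state it with a citation rather than claim a one-line identity.
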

\begin{proof}
$(i)$ It is immediate from $(i)$ and $(v)$ in Theorem \ref{theorem3in2}.

$(ii)$ Let $A_n=A_n(0) \in X_\tau$ be a sequence of analytic elements defined by equation \eqref{formulaAE} with respect to the isometry defined by $\tau(A)=\Delta_\Omega^{\iu t} A\Delta_\Omega^{-\iu t}$.

For these elements we have

$$\begin{aligned}
A_n j_\Omega(A_n)\Omega	
	&=A_n J_\Omega A_n J_\Omega \Omega \\
	&=A_n \Delta^\frac{1}{2}_\Omega A_n^\ast \Omega \\
	&=\left(\int_\mathbb{R}{e^{-nt^2}\Delta_\Omega^{\iu t}A \Delta_\Omega^{-\iu t} dt}\right)\Delta_\Omega^{\frac{1}{2}}\left( \int_\mathbb{R}{e^{-nt^2}\Delta_\Omega^{\iu t}A \Delta_\Omega^{-\iu t} dt}\right)\Omega \\
	&=\left(\int_\mathbb{R}{e^{-nt^2}\Delta_\Omega^{\iu t}A \Delta_\Omega^{-\iu t+\frac{1}{4}} dt}\right) \left(\int_\mathbb{R}{e^{-nt^2}\Delta_\Omega^{\iu t-\frac{1}{4}}A \Delta_\Omega^{-\iu t} dt}\right)\Omega \\
	&=\Delta_\Omega^\frac{1}{4}\left(\int_\mathbb{R}{e^{-nt^2}\Delta_\Omega^{\iu( t+\frac{\iu}{4})} A \Delta_\Omega^{-\iu(t+\frac{\iu}{4})} dt}\right)\left(\int_\mathbb{R}{e^{-nt^2}\Delta_\Omega^{\iu( t+\frac{\iu}{4})}A \Delta_\Omega^{-\iu(t+\frac{\iu}{4})} dt}\right)\Delta_\Omega^{-\frac{1}{4}}\Omega \\
	&=\Delta_\Omega^\frac{1}{4} \tau_{\frac{\iu}{4}}(A_n)\tau_{\frac{\iu}{4}}(A_n)^\ast \Omega \in V_\Omega.\\
\end{aligned}$$

But, as we already  know, $A_n j_\Omega(A_n)\rightarrow Aj_\Omega(A)$. since $V_\Omega$ is closed, we have
$$\overline{\{Qj_\Omega(Q)\Omega | Q\in \nalgebra\}}\subset V_\Omega.$$

We will skip the proof of the remaining items, but the reader can find them in \cite{Araki74}.
\end{proof}


\section{Analyticity of Modular Automorphisms}
\label{AnalyticalModAut}

It must be clear by now how important analyticity is in this work. The reader can find this section a little technical (if he does not think this is the case of almost all this thesis) and he is right to think so. This section is basically made of lemmas about the possibility to analytically extend the action of the modular automorphism group. Nevertheless, it is indispensable to perturbation theory. The main references here are \cite{Araki74} and \cite{Araki73.2}.

Now, the comment about the relation between cones and analyticity that precedes Section \ref{SecCones} may start making sense.

Let $\Psi$ be a cyclic and separating vector for the von Neumann algebra $\nalgebra$.

\begin{notation}\glsdisp{modaut}{\hspace{0pt}}
	We denote by $\nalgebra_\Psi$ the set of all operators such that there exists a family of bounded linear operators $\tau_\Psi(z)Q$ which are entire analytic on $z \in \mathbb{C}$ and satisfy
	$$\tau_\Psi(t)Q=\Delta^{\iu t}_\Psi Q \Delta_\Psi^{-\iu t} \ \qquad \forall t \in \mathbb{R}.$$
	We also denote
	$$\begin{aligned}
	\nalgebra_{\Psi 1}&= \nalgebra_\Psi\cap\nalgebra, & D_{\Psi1}=\nalgebra_{\Psi1}\Psi \\
	\nalgebra_{\Psi 2}&= \nalgebra_\Psi\cap\nalgebra^\prime, & D_{\Psi2}=\nalgebra_{\Psi2}\Psi \\
	\end{aligned}$$
\end{notation}

\begin{lemma}
	\label{lemma5}
	Let $Q \in \nalgebra$. Then, $Q\Psi \in \Dom{\Delta^{\frac{1}{2}+\alpha}_\Psi}$ if and only if $Q^\ast\Psi\in\Dom{\Delta^{-\alpha}_\Psi}$.
	
	Furthermore, if $Q\Psi \in \Dom{\Delta^{\frac{1}{2}+\alpha}_\Psi}$ for $\alpha>0$, there exists a family of closable operators $\hat{\tau}_\Psi(z)Q$, $z\in\strip{\alpha}$ with a common domain $D_{\Psi 2}$ such that
	\begin{enumerate}[(i)]
		\item $\hat{\tau}_\Psi(z)Q$ is affiliated with $\nalgebra$;
		\item for each $x \in D_{\Psi 2}$, $\hat{\tau}_\Psi(z)Qx$ is analytic for z in the strip $\strip{\alpha}$ and continuous on the boundary;
		\item for each $x \in D_{\Psi 2}$, $\hat{\tau}_\Psi(z)Qx=\Delta_\Psi^{\iu z}Q\Delta_\Psi^{-\iu z}x$;
		\item for each $x \in D_{\Psi 2}$, $\left(\hat{\tau}_\Psi(z)Q\right)^\ast x=\Delta_\Psi^{\iu \bar{z}}Q^\ast\Delta_\Psi^{-\iu \bar{z}}x$.
	\end{enumerate}
\end{lemma}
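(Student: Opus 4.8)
The statement to prove is Lemma \ref{lemma5}, which asserts an equivalence between two domain conditions on $Q\Psi$ and $Q^\ast\Psi$, and then constructs an analytic family $\hat{\tau}_\Psi(z)Q$ on the strip $\strip{\alpha}$ with the four listed properties. Let me think about how I would approach this.

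First the equivalence. The key identity is $S = J_\Psi \Delta_\Psi^{1/2}$, so that $S\Psi$-type relations convert adjoints into modular-operator shifts. I should start from the fact that, for $Q\in\nalgebra$, we have $SQ\Psi = Q^\ast\Psi$, hence $J_\Psi\Delta_\Psi^{1/2}Q\Psi = Q^\ast\Psi$. The plan is to use the spectral theorem for the positive self-adjoint operator $\Delta_\Psi$ and write everything in terms of the spectral measure $dE_\lambda$ of $\Delta_\Psi$ applied to the vector $Q\Psi$. Saying $Q\Psi\in\Dom{\Delta_\Psi^{1/2+\alpha}}$ means $\int \lambda^{1+2\alpha}\,d\|E_\lambda Q\Psi\|^2<\infty$. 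I want to show this is equivalent to $\int \lambda^{-2\alpha}\,d\|E_\lambda Q^\ast\Psi\|^2<\infty$. Using $J_\Psi\Delta_\Psi^{1/2}Q\Psi = Q^\ast\Psi$ together with the anti-unitarity of $J_\Psi$ and the intertwining $J_\Psi\Delta_\Psi = \Delta_\Psi^{-1}J_\Psi$ (from Proposition \ref{MOProp}), I can relate the spectral measure of $\Delta_\Psi$ with respect to $Q^\ast\Psi$ to that with respect to $\Delta_\Psi^{1/2}Q\Psi$; the transformation $\lambda\mapsto\lambda^{-1}$ induced by $J_\Psi$ is exactly what turns the exponent $1+2\alpha$ on $Q\Psi$ into $-2\alpha$ on $Q^\ast\Psi$. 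This is a clean spectral-calculus computation.

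Next, the construction of $\hat{\tau}_\Psi(z)Q$. For $x\in D_{\Psi2}=\nalgebra_{\Psi2}\Psi$, write $x = B\Psi$ with $B\in\nalgebra_{\Psi2}\subset\nalgebra^\prime$. Since $B$ is in the commutant, $\Delta_\Psi^{\iu z}Q\Delta_\Psi^{-\iu z}B\Psi = B\Delta_\Psi^{\iu z}Q\Delta_\Psi^{-\iu z}\Psi$ once we know how to make sense of $\Delta_\Psi^{\iu z}Q\Delta_\Psi^{-\iu z}\Psi$ for complex $z$. The idea is to define $\hat\tau_\Psi(z)Q$ on the vector $B\Psi$ by $B\,\Delta_\Psi^{\iu z}Q\Psi$, using $\Delta_\Psi^{-\iu z}\Psi=\Psi$ (the modular operator fixes $\Psi$). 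Analyticity in $z$ on $\strip{\alpha}$ reduces to analyticity of $z\mapsto\Delta_\Psi^{\iu z}Q\Psi$, which follows from the spectral representation $\Delta_\Psi^{\iu z}Q\Psi=\int\lambda^{\iu z}\,dE_\lambda Q\Psi$: for $z=s+\iu r$ with $0<r<\alpha$, the factor $\lambda^{\iu z}=\lambda^{\iu s}\lambda^{-r}$ has modulus $\lambda^{-r}$, and the domain condition $Q\Psi\in\Dom{\Delta_\Psi^{1/2+\alpha}}$ together with the equivalence just proved guarantees the integral converges and is analytic by dominated convergence / Morera. The affiliation with $\nalgebra$ (property $(i)$) follows because $\hat\tau_\Psi(z)Q$ commutes with every unitary in $\nalgebra^\prime$ by construction (its action is built by left multiplication behaviour via the commutant element $B$). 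Properties $(iii)$ and $(iv)$ are then identifications on the dense core $D_{\Psi2}$: $(iii)$ is the definition on real $t$ extended analytically, and $(iv)$ comes from taking adjoints and invoking the first part of the lemma to swap $Q$ with $Q^\ast$ and $z$ with $\bar z$.

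The main obstacle I anticipate is the careful justification that $\hat\tau_\Psi(z)Q$ is \emph{closable}, affiliated with $\nalgebra$, and has $D_{\Psi2}$ as a common core on which all four properties hold simultaneously — the unbounded-operator bookkeeping. Specifically, proving that the formula $B\Psi\mapsto B\,\Delta_\Psi^{\iu z}Q\Psi$ is well-defined independently of the representative $B$, that it is closable, and that its closure is affiliated with $\nalgebra$ requires combining the Tomita--Takesaki theorem ($\Delta_\Psi^{\iu t}\nalgebra\Delta_\Psi^{-\iu t}=\nalgebra$, Theorem \ref{TTT}) with the density of $D_{\Psi2}$ and a limiting argument as $\Im z$ approaches the boundary of the strip. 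The interior analyticity is routine spectral calculus; the delicate point is controlling the boundary behaviour at $\Im z=\alpha$ and verifying continuity there, which is where the sharp domain hypothesis $Q\Psi\in\Dom{\Delta_\Psi^{1/2+\alpha}}$ is genuinely used rather than a strict inequality.
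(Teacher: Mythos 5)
Your proposal follows essentially the same route as the paper's proof: the operator is defined on $D_{\Psi 2}$ by $Q^\prime\Psi\mapsto Q^\prime\Delta_\Psi^{\iu z}Q\Psi$ (well-defined because $\Psi$ is separating for $\nalgebra^\prime\supset\nalgebra_{\Psi 2}$), analyticity comes from the spectral calculus for $z\mapsto\Delta_\Psi^{\iu z}Q\Psi$, and affiliation from commutation with $\nalgebra_{\Psi 2}$, whose commutant is $\nalgebra$. The closability you flag as the main obstacle is settled in the paper exactly by the adjoint computation you already sketch for item (iv): the first part of the lemma puts $Q^\ast\Psi\in\Dom{\Delta_\Psi^{-\alpha}}\subset\Dom{\Delta_\Psi^{\iu\bar{z}}}$, whence $D_{\Psi 2}\subset\Dom{\left(\hat{\tau}_\Psi(z)Q\right)^\ast}$ and the adjoint is densely defined -- no limiting argument toward the boundary of $\strip{\alpha}$ is needed for this step.
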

\begin{proof}
	The first part follows trivially from the properties of the modular operator and the modular conjugation.
	
	Assuming that $Q\Psi \in \Dom{\Delta^{\frac{1}{2}+\alpha}_\Psi}$ for $\alpha>0$, we can define $A_z$ by
	\begin{equation}
	\label{defAz}
	A_z Q^\prime \Psi = Q^\prime \Delta_\Psi^{\iu z}Q\Psi \quad \forall Q^\prime\in \nalgebra_{\Psi2}, \ z\in \strip{\alpha},
	\end{equation}
	because $\Psi$ is separating for $\nalgebra^\prime\supset \nalgebra_{\Psi2}$. This defines a linear operator.
	
	Furthermore, since $\Delta^{\frac{1}{2}}_\Psi Q \Psi=J_\Psi Q^\ast \Psi $, we have
	$$\begin{aligned}
	\ip{Q^\prime_1\Psi}{A_z Q^\prime_2 \Psi}
	&=\ip{Q^{\prime \ast}_2 Q^\prime_1 \Psi}{ \Delta^{\iu z}_\Psi Q\Psi}\\
	&=\ip{\Delta^{-\iu \bar{z}} Q^{\prime \ast}_2 Q^\prime_1\Psi}{Q\Psi} \\
	&=\ip{\Delta^{\frac{1}{2}}_\Psi\tau\left(-\bar{z}+\frac{\iu}{2}\right)\left(Q^{\prime \ast}_2 Q^\prime_1\right)\Psi}{J_\Psi\Delta^{\frac{1}{2}}_\Psi Q^\ast\Psi} \\
	&=\ip{J_\Psi\left(\tau\left(-\bar{z}+\frac{\iu}{2}\right)\left(Q^{\prime \ast}_2 Q^\prime_1\right)\right)^\ast\Psi}{J_\Psi\Delta^{\frac{1}{2}}_\Psi Q^\ast\Psi} \\
	&=\ip{J_\Psi\tau\left(-z-\frac{\iu}{2}\right)\left(Q^{\prime \ast}_1 Q^\prime_2\right)\Psi}{J_\Psi\Delta^{\frac{1}{2}}_\Psi Q^\ast\Psi} \\
	&=\ip{\Delta^{\frac{1}{2}}_\Psi Q^\ast\Psi}{\tau\left(-z-\frac{\iu}{2}\right)\left(Q^{\prime \ast}_1 Q^\prime_2\right)\Psi} \\
	&=\ip{\Delta^{\frac{1}{2}}_\Psi Q^\ast\Psi}{\Delta^{-\iu z+\frac{1}{2}}_\Psi Q^{\prime \ast}_1 Q^\prime_2\Psi} \\
	&=\ip{Q_1^\prime\Delta_\Psi\Delta^{\iu \bar{z}}_\Psi Q^\ast\Psi}{ Q^\prime_2\Psi}, \\
	\end{aligned}$$
	where $Q^\ast\Psi\in\Dom{\Delta^{-\alpha}_\Psi}$ as proved previously, hence in $\Dom{\Delta^{\iu \bar{z}}_\Psi}$. It follows that the dense set $D_{\Psi2}\subset{\Dom{A^\ast_z}}$ and this implies that $A_z$ is closable. Consider the family of closable operators $A_z=\hat{\tau}(z)Q$, let us prove the properties above.
	
	$(i)$ Let $Q_1^\prime, Q_2^\prime \in \nalgebra_{\Psi2}$. Using the definition given in equation \eqref{defAz}, we have
	$$Q_1^\prime A_z Q_2^\prime \Psi=Q_1^\prime Q_2^\prime \Delta^{\iu z}_\Psi Q \Psi=A_zQ_1^\prime Q_2^\prime \Psi.$$
	This means that $\hat{\tau}(z)Q$ commutes with $Q_1^\prime\in \nalgebra_{\Psi2}$ which implies that it is affiliated with $(\nalgebra_{\Psi2})^\prime=\nalgebra$.
	
	$(ii)$ Let $Q^\prime\in \nalgebra_{\Psi2}$, and $z\in \strip{\alpha}$, then $$\left(\hat{\tau}(z)Q\right)Q^\prime \Psi=A_z Q^\prime \Psi=Q^\prime \Delta_\Psi^{\iu z}Q\Psi$$
	and continuity and analyticity follows.
	
	$(iii)$ Using the previous calculation and $(i)$,
	$$\begin{aligned}
	\left(\hat{\tau}(z)Q\right)Q^\prime \Psi&=Q^\prime \Delta_\Psi^{\iu z}Q\Psi\\
	&=\Delta^{\iu z}_\Psi\left(\tau_\Psi(-z)Q^\prime \right)Q\Psi\\
	&=\Delta^{\iu z}_\Psi Q\left(\tau_\Psi(-z)Q^\prime \right)\Psi\\
	&=\Delta_\Psi^{\iu z}Q\Delta_\Psi^{-\iu z}Q^\prime \Psi.
	\end{aligned}$$
	
	$(iv)$ Using $(iii)$, we have
	$$\begin{aligned}
	\ip{Q^\prime_1\Psi}{ \left(\hat{\tau}(z)Q\right) Q^\prime_2 \Psi}
	&=\ip{Q^\prime_1\Psi}{\Delta^{\iu z}_\Psi Q \Delta^{-\iu z}_\Psi Q^\prime_2 \Psi} \\
	&=\ip{\Delta^{\iu \bar{z}}_\Psi Q^\ast \Delta^{-\iu \bar{z}}_\Psi Q^\prime_1\Psi}{ Q^\prime_2 \Psi} \\
	&=\ip{\left(\hat{\tau}(z)Q^\ast\right) Q^\prime_1\Psi}{Q^\prime_2 \Psi}.
	\end{aligned}$$
\end{proof}

\begin{corollary}
	\label{clemma5}
	Let $Q \in \nalgebra$. Then $Q\Psi \in \Dom{\Delta^{-\alpha}_\Psi}$ if and only if $Q^\ast\Psi\in\Dom{\Delta^{\frac{1}{2}+\alpha}_\Psi}$.
	
	Furthermore, if $Q\Psi \in \Dom{\Delta^{-\alpha}_\Psi}$ for $\alpha>0$, then there exists a family of closable operators $\hat{\tau}_\Psi(z)Q$, $z\in\strip{\alpha}$, with a common domain $D_{\Psi 2}$ such that
	\begin{enumerate}[(i)]
		\item $\hat{\tau}_\Psi(z)Q$ is affiliated with $\nalgebra$;
		\item for each $x \in D$, $\hat{\tau}_\Psi(z)Qx$ is analytic for z in the strip $\strip{\alpha}$ and continuous on the boundary;
		\item for each $x \in D$, $\hat{\tau}_\Psi(z)Qx=\Delta_\Psi^{\iu z}Q\Delta_\Psi^{-\iu z}x$;
		\item for each $x \in D$, $\left(\hat{\tau}_\Psi(z)Q\right)^\ast x=\Delta_\Psi^{\iu \bar{z}}Q^\ast\Delta_\Psi^{-\iu \bar{z}}x$.
	\end{enumerate}
\end{corollary}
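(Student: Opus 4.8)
The plan is to derive this corollary from Lemma \ref{lemma5} by interchanging the roles of $Q$ and $Q^\ast$: the two statements are mirror images of one another under this involution, together with the reflection $\Dom{\Delta^{\frac{1}{2}+\alpha}_\Psi}\leftrightarrow\Dom{\Delta^{-\alpha}_\Psi}$. First I would dispose of the equivalence. Since $\nalgebra$ is a $\ast$-algebra we have $Q^\ast\in\nalgebra$, so the first assertion of Lemma \ref{lemma5} may be applied to $Q^\ast$ in place of $Q$. That assertion reads ``$R\Psi\in\Dom{\Delta^{\frac{1}{2}+\alpha}_\Psi}$ if and only if $R^\ast\Psi\in\Dom{\Delta^{-\alpha}_\Psi}$''; taking $R=Q^\ast$ gives exactly $Q^\ast\Psi\in\Dom{\Delta^{\frac{1}{2}+\alpha}_\Psi}$ if and only if $Q\Psi\in\Dom{\Delta^{-\alpha}_\Psi}$, which is the first claim.

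For the ``furthermore'' part, assume $Q\Psi\in\Dom{\Delta^{-\alpha}_\Psi}$ with $\alpha>0$; by the equivalence just established, $Q^\ast\Psi\in\Dom{\Delta^{\frac{1}{2}+\alpha}_\Psi}$, so Lemma \ref{lemma5} applies to $Q^\ast$ and produces a family of closable operators $\hat{\tau}_\Psi(z)Q^\ast$, $z\in\strip{\alpha}$, with common domain $D_{\Psi2}$ satisfying properties (i)--(iv) of that lemma. Property (iv) applied to $Q^\ast$ gives, for $x\in D_{\Psi2}$, the identity $\left(\hat{\tau}_\Psi(z)Q^\ast\right)^\ast x=\Delta^{\iu\bar{z}}_\Psi Q\Delta^{-\iu\bar{z}}_\Psi x$, which is already an operator of the shape demanded by (iii); one route is therefore to define the desired family for $Q$ as the family of adjoints of the one for $Q^\ast$. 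The more transparent route, which I would adopt, is to repeat the construction in the proof of Lemma \ref{lemma5} verbatim: define $\hat{\tau}_\Psi(z)Q$ by $\hat{\tau}_\Psi(z)Q\,Q^\prime\Psi=Q^\prime\Delta^{\iu z}_\Psi Q\Psi$ for $Q^\prime\in\nalgebra_{\Psi2}$ and $z\in\strip{\alpha}$. This is well posed because $\Psi$ is separating for $\nalgebra^\prime\supset\nalgebra_{\Psi2}$, and because the hypothesis $Q\Psi\in\Dom{\Delta^{-\alpha}_\Psi}$ is precisely what guarantees that $\Delta^{\iu z}_\Psi Q\Psi$ is defined throughout $\strip{\alpha}$ (for $z$ with $0<\Im z<\alpha$ one has $\Dom{\Delta^{\iu z}_\Psi}=\Dom{\Delta^{-\Im z}_\Psi}\supset\Dom{\Delta^{-\alpha}_\Psi}$).

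The verification of (i)--(iv) then follows the computations of Lemma \ref{lemma5} with $Q$ and $Q^\ast$ interchanged and with the role of $\Dom{\Delta^{\frac{1}{2}+\alpha}_\Psi}$ now played by $\Dom{\Delta^{-\alpha}_\Psi}$: affiliation with $\nalgebra$ holds because $\hat{\tau}_\Psi(z)Q$ commutes with every $Q^\prime\in\nalgebra_{\Psi2}$ and $\left(\nalgebra_{\Psi2}\right)^\prime=\nalgebra$; analyticity, continuity on the boundary, and the formula $\hat{\tau}_\Psi(z)Q\,x=\Delta^{\iu z}_\Psi Q\Delta^{-\iu z}_\Psi x$ are read off the defining relation using $\Delta^{-\iu z}_\Psi\Psi=\Psi$ and the entire analyticity of the elements of $\nalgebra_{\Psi2}$; and (iv) comes from the sesquilinear identity $\ip{Q^\prime_1\Psi}{\hat{\tau}_\Psi(z)Q\,Q^\prime_2\Psi}=\ip{\hat{\tau}_\Psi(z)Q^\ast\,Q^\prime_1\Psi}{Q^\prime_2\Psi}$. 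I expect the main obstacle to be the closability of $\hat{\tau}_\Psi(z)Q$ together with the inclusion $D_{\Psi2}\subset\Dom{\left(\hat{\tau}_\Psi(z)Q\right)^\ast}$: this is the one step where the full strength of the hypothesis is used, through the relation $\Delta^{\frac{1}{2}}_\Psi Q^\ast\Psi=J_\Psi Q\Psi$ and the membership $Q^\ast\Psi\in\Dom{\Delta^{\frac{1}{2}+\alpha}_\Psi}$, and it requires reproducing the long chain of equalities in the proof of Lemma \ref{lemma5} while carefully tracking the reflection $z\mapsto\bar{z}$ of the strip $\strip{\alpha}$ that is induced by passing to adjoints.
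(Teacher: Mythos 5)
Your proposal is correct and matches the paper's intent: the paper states this corollary without proof precisely because it is Lemma \ref{lemma5} applied to $Q^\ast$ in place of $Q$ (using $\left(Q^\ast\right)^\ast=Q$), which is exactly your argument, and your preferred direct route of rerunning the construction $\hat{\tau}_\Psi(z)Q\,Q^\prime\Psi=Q^\prime\Delta^{\iu z}_\Psi Q\Psi$ on the reflected strip is just that same symmetry made explicit. You also correctly track the one genuinely delicate point — that the hypothesis $Q\Psi\in\Dom{\Delta^{-\alpha}_\Psi}$ controls $\Delta^{\iu z}_\Psi Q\Psi$ on the reflected strip while the dual membership $Q^\ast\Psi\in\Dom{\Delta^{\frac{1}{2}+\alpha}_\Psi}$ is what makes the closability chain work — so there is no gap.
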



\begin{lemma}
	\label{lemma6} 
	Let $Q \in \nalgebra$. Suppose there exists $Q_1\in \nalgebra$ and $\alpha\neq0$ such that \mbox{$\Delta_\Psi^\alpha Q \Psi=Q_1\Psi$}. Then, there exists a family of operators $\tau_\Psi(z)$, $z \in \overline{\strip{\alpha}}$, which is analytic on $\strip{\alpha}$ and continuous on the boundary such that
	\begin{enumerate}[(i)]
		\item $\tau_\Psi(z)Q x = \Delta_\Psi^{\iu z} Q \Delta_\Psi^{-\iu z} x \ \ \forall x \in \Dom{ \Delta_\Psi^{-\iu z}}$.
		\item $\left(\tau_\Psi(z)\right)^\ast Q x = \Delta_\Psi^{\iu \overline{z}} Q^\ast \Delta_\Psi^{-\iu \overline{z}} x \ \ \forall x \in \Dom{ \Delta_\Psi^{\iu z}}$.
		\item $\left\|\tau_\Psi(z)Q\right\| \leq \max\{\|Q\|,\|Q_1\|\} $.
		\item $\tau_\Psi(0)Q=Q$ and $\tau_\Psi(-\iu \alpha)Q=Q_1$. 
	\end{enumerate}
\end{lemma}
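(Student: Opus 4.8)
The plan is to construct, exactly as in the proof of Lemma \ref{lemma5}, a closable analytic family on the dense domain $D_{\Psi 2}=\nalgebra_{\Psi 2}\Psi$ and then to promote it to a \emph{bounded} family by interpolating its two boundary bounds with the three-lines theorem. The decisive simplification is that the hypothesis $\Delta_\Psi^\alpha Q\Psi=Q_1\Psi$ pins the value at the lower edge of the strip directly, so the two boundary identifications come out cleanly.

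First I would observe that $Q\Psi\in\Dom{\Delta_\Psi^{0}}\cap\Dom{\Delta_\Psi^{\alpha}}$, so by interpolation $Q\Psi\in\Dom{\Delta_\Psi^{\iu z}}$ for every $z$ in the closed strip with edges $\Im{z}=0$ and $\Im{z}=-\alpha$, and $z\mapsto\Delta_\Psi^{\iu z}Q\Psi$ is an $\hilbert$-valued function analytic in the interior and continuous on the closure. Mirroring Lemma \ref{lemma5}, I would then define $\hat\tau_\Psi(z)Q$ on $D_{\Psi 2}$ by
\[
\hat\tau_\Psi(z)Q\,Q'\Psi=Q'\,\Delta_\Psi^{\iu z}Q\Psi,\qquad Q'\in\nalgebra_{\Psi 2},
\]
which is well defined because $\Psi$ is separating for $\nalgebra'\supset\nalgebra_{\Psi 2}$. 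The same matrix-element computation as in Lemma \ref{lemma5} shows $D_{\Psi 2}\subset\Dom{(\hat\tau_\Psi(z)Q)^\ast}$, so each $\hat\tau_\Psi(z)Q$ is closable; that it is affiliated with $\nalgebra$, that $\hat\tau_\Psi(z)Q\,x=\Delta_\Psi^{\iu z}Q\Delta_\Psi^{-\iu z}x$ on $D_{\Psi 2}$, and that its adjoint acts by $\Delta_\Psi^{\iu\bar z}Q^\ast\Delta_\Psi^{-\iu\bar z}$ then follow verbatim.

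The identification of the two boundary values is where the hypothesis is used, and with the above definition it is immediate. Using $\Delta_\Psi^{w}\Psi=\Psi$ and Theorem \ref{TTT} (so that $\sigma_t^\Psi(Q),\sigma_t^\Psi(Q_1)\in\nalgebra$ commute with $Q'\in\nalgebra'$), on the real edge I would compute
\[
\hat\tau_\Psi(t)Q\,Q'\Psi=Q'\Delta_\Psi^{\iu t}Q\Delta_\Psi^{-\iu t}\Psi=\sigma_t^\Psi(Q)\,Q'\Psi,
\]
whence $\hat\tau_\Psi(t)Q=\sigma_t^\Psi(Q)$ on $D_{\Psi 2}$, of norm $\|Q\|$; while on the lower edge
\[
\hat\tau_\Psi(t-\iu\alpha)Q\,Q'\Psi=Q'\Delta_\Psi^{\iu t}\big(\Delta_\Psi^{\alpha}Q\Psi\big)=Q'\Delta_\Psi^{\iu t}Q_1\Psi=\sigma_t^\Psi(Q_1)\,Q'\Psi,
\]
so $\hat\tau_\Psi(t-\iu\alpha)Q=\sigma_t^\Psi(Q_1)$ on $D_{\Psi 2}$, of norm $\|Q_1\|$. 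The values at $t=0$ give $\tau_\Psi(0)Q=Q$ and $\tau_\Psi(-\iu\alpha)Q=Q_1$, which is (iv).

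Finally I would fix $\xi,\eta$ in a suitable dense set and apply the three-lines (Phragm\'en--Lindel\"of) theorem to $F(z)=\ip{\eta}{\hat\tau_\Psi(z)Q\,\xi}$, which is holomorphic in the open strip and, after checking the requisite growth control in the real direction, bounded and continuous on its closure. The boundary estimates above give $|F|\le\|Q\|\,\|\eta\|\,\|\xi\|$ on $\Im{z}=0$ and $|F|\le\|Q_1\|\,\|\eta\|\,\|\xi\|$ on $\Im{z}=-\alpha$, so three lines yields $|F(z)|\le\max\{\|Q\|,\|Q_1\|\}\,\|\eta\|\,\|\xi\|$ throughout; taking the supremum over $\eta$ shows $\|\hat\tau_\Psi(z)Q\|\le\max\{\|Q\|,\|Q_1\|\}$, and I set $\tau_\Psi(z)Q$ equal to its closure, giving (iii). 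Boundedness together with $D_{\Psi 2}$ being a core for $\Delta_\Psi^{-\iu z}$ upgrades $\tau_\Psi(z)Q\,x=\Delta_\Psi^{\iu z}Q\Delta_\Psi^{-\iu z}x$ from $D_{\Psi 2}$ to all $x\in\Dom{\Delta_\Psi^{-\iu z}}$, which is (i), and the same upgrade of the adjoint formula gives (ii); weak analyticity of $z\mapsto\tau_\Psi(z)Q\,\xi$ together with the weak-implies-strong analyticity result proved earlier then yields the asserted analyticity and boundary continuity. The step I expect to be the main obstacle is exactly this boundedness upgrade: verifying the uniform growth control that legitimizes Phragm\'en--Lindel\"of on the non-compact strip, and the core argument needed to pass the defining formula from $D_{\Psi 2}$ to the full domain.
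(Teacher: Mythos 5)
Your proposal is correct and follows essentially the same route as the paper: both construct the closable family on $D_{\Psi 2}$ exactly as in Lemma \ref{lemma5} (note your hypothesis already gives $Q\Psi\in\Dom{\Delta_\Psi^{\frac{1}{2}+\alpha}}$, since $Q_1\Psi\in\Dom{\Delta_\Psi^{\frac{1}{2}}}$, which is precisely what makes that lemma's computations applicable), and both obtain the bound (iii) by a maximum-modulus/three-lines argument on matrix elements whose boundary values are controlled by $\|Q\|$ and $\|Q_1\|$ through the identifications $\hat{\tau}_\Psi(t)Q=\sigma_t^\Psi(Q)$ and $\hat{\tau}_\Psi(t-\iu\alpha)Q=\sigma_t^\Psi(Q_1)$ on $D_{\Psi 2}$. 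Your two worries resolve easily and in fact make the argument more careful than the paper's terse closing line: the function $z\mapsto\Delta_\Psi^{\iu z}Q\Psi$ is bounded on the closed strip by spectral calculus (so Phragm\'en--Lindel\"of applies without further growth hypotheses), and the core property of $D_{\Psi 2}$ needed to upgrade (i)--(ii) from $D_{\Psi 2}$ to the full domains follows from the standard smearing argument producing entire analytic elements of $\nalgebra^\prime$.
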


\begin{proof}
	First, for any $Q_1\in \nalgebra$, $Q_1\Psi \in \Dom{\Delta^{\frac{1}{2}}_\Psi}$, hence $Q\Psi\in\Dom{\Delta^{\frac{1}{2}+\alpha}_\Psi}$. It follows from Lemma \ref{lemma5} that there exists a family of closable operators $\tau_\Psi(z)Q$, which are analytic on $\strip{\alpha}$ and continuous on its closure.
	
	Define
	$$f(z)\doteq \ip{x}{\hat{\tau}_\Psi(z) Q y} \qquad x,y \in D_{\Psi2}.$$
	Then, if we take $y=Q_2^\prime \Psi$, it follows from the Maximum Modulus Principle, and the fact that $\hat{\tau}_\Psi Q$ is analytic on the strip and continuous on its closure, that
	$$\begin{aligned}
	|f(z)|	&=| \ip{x}{Q_2^\prime\Delta^{\iu z}_\Psi Q\Psi}|\\
	&=| \ip{x}{Q^{\prime}_2\Delta^{\iu z}_\Psi Q\Psi}|\\
	&\leq \|x\| \|Q^{\prime}_2\Delta^{\iu z}_\Psi Q\Psi\|\\
	&\leq \|x\| \max\{\|Q^{\prime}_2Q\Psi\|,\|Q^{\prime}_2\Delta^{\alpha}_\Psi Q\Psi\|\}\\
	&\leq \|x\| \max\{\|Q^{\prime}_2Q\Psi\|,\|Q^{\prime}_2 Q_1\Psi\|\}\\
	&\leq \|x\| \|y\|\max\{\|Q\|,\|Q_1\|\}\\
	\end{aligned},$$
	and $(iii)$ holds.
	
	Now, the uniform boundedness and the analyticity (continuity) in the dense set $D_{\Psi2}$ implies that the closure of $\hat{\tau}_\Psi(z)Q$ is also analytic on the strip $\strip{\alpha}$ and continuous on its closure.
	Furthermore, the properties in the lemma are consequences of the properties in Lemma \ref{lemma5}.
	
\end{proof}


\begin{lemma}
	\label{lemma7}
	Let $\Psi$ be a cyclic and separating vector for $\nalgebra$ and $S\in\nalgebra$ with a bounded inverse $S^{-1} \in \nalgebra$ such that $S\Psi \in V_{\Psi}$. If $Q_1, Q \in \nalgebra$ are such that $\Delta^{\frac{1}{2}}_\Psi Q \Psi=Q_1 \Psi$, then
	$\displaystyle\Delta^{\frac{1}{2}}_{S\Psi}Q(S\Psi)=Q_2(S\Psi),$	where $Q_2=SQ_1S^{-1}.$
\end{lemma}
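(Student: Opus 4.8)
The plan is to transport the defining identity through the Tomita $S$-operators and to use the hypothesis $S\Psi\in V_\Psi$ to identify the two modular conjugations. Throughout write $\Phi\doteq S\Psi$ and denote by $S_\Phi=J_\Phi\Delta_\Phi^{\frac12}$ (resp. $S_\Psi=J_\Psi\Delta_\Psi^{\frac12}$) the closure of the Tomita operator associated with $\Phi$ (resp. $\Psi$), which should not be confused with the algebra element $S$. First I would check that $\Phi$ is again cyclic and separating for $\nalgebra$: cyclicity holds because $\nalgebra\Phi=\nalgebra\Psi$ is dense, since $S^{-1}\in\nalgebra$; and if $A\Phi=0$ for $A\in\nalgebra$, then $AS=0$ by the separating property of $\Psi$, hence $A=0$. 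Thus $\Delta_\Phi$ and $J_\Phi$ are well defined. Because $\Phi\in V_\Psi$ is cyclic and separating, the theorem on the self-dual cone $V_\Psi$ gives $V_\Phi=V_\Psi$, and its characterization of the modular conjugation yields $J_\Phi=J_\Psi$; set $J\doteq J_\Psi=J_\Phi$. The same theorem gives $Jx=x$ for all $x\in V_\Psi$, so in particular $J\Psi=\Psi$ and $J\Phi=\Phi$.

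Next I would rephrase both the hypothesis and the target purely in terms of vectors. Since $\Delta_\Psi^{\frac12}=JS_\Psi$ and $S_\Psi(Q\Psi)=Q^\ast\Psi$, the hypothesis $\Delta_\Psi^{\frac12}Q\Psi=Q_1\Psi$ is equivalent to
$$JQ^\ast\Psi=Q_1\Psi.$$
Likewise, $\Delta_\Phi^{\frac12}=JS_\Phi$ and $S_\Phi(Q\Phi)=Q^\ast\Phi$ give
$$\Delta_\Phi^{\frac12}Q\Phi=JQ^\ast\Phi,$$
so the assertion $\Delta_\Phi^{\frac12}Q\Phi=Q_2\Phi$ amounts to showing $JQ^\ast\Phi=Q_2\Phi$, where $Q_2\Phi=SQ_1S^{-1}\Phi=SQ_1\Psi$.

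The computation then runs by inserting $J$ and commuting the two algebras. Using $J\Phi=\Phi$, $J^2=\mathbbm{1}$, and $J\nalgebra J=\nalgebra^\prime$ (Theorem \ref{TTT}),
$$JQ^\ast\Phi=JQ^\ast J\Phi=j_\Psi(Q^\ast)\Phi,$$
with $j_\Psi(Q^\ast)=JQ^\ast J\in\nalgebra^\prime$. Since $S\in\nalgebra$ commutes with $j_\Psi(Q^\ast)\in\nalgebra^\prime$ and $j_\Psi(Q^\ast)\Psi=JQ^\ast J\Psi=JQ^\ast\Psi$ (by $J\Psi=\Psi$),
$$j_\Psi(Q^\ast)\Phi=j_\Psi(Q^\ast)S\Psi=S\,j_\Psi(Q^\ast)\Psi=S\,JQ^\ast\Psi=SQ_1\Psi,$$
the final equality being the reformulated hypothesis. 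Chaining the displays gives $\Delta_\Phi^{\frac12}Q\Phi=SQ_1\Psi=Q_2\Phi$, which is the claim.

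I expect the only genuine subtlety to be the identification $J_\Phi=J_\Psi$ together with the pointwise fixing $J\Phi=\Phi$: this is exactly the point where the hypothesis $S\Psi\in V_\Psi$ is used, and it is indispensable, for a generic invertible $S\in\nalgebra$ would produce a vector outside the natural cone whose modular conjugation differs from $J_\Psi$, breaking the commutation argument. The domain questions — that $Q\Psi\in\Dom{\Delta_\Psi^{\frac12}}$ and $Q\Phi\in\Dom{\Delta_\Phi^{\frac12}}$ — are automatic because $Q\in\nalgebra$ and these vectors lie in the domains of the respective Tomita operators, so they need no special treatment.
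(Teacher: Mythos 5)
Your proof is correct and follows essentially the same route as the paper's: both identify $J_{S\Psi}=J_\Psi$ via the cone hypothesis, rewrite the $\Delta^{1/2}$-action as $J$ applied to the adjoint vector, and then commute $S\in\nalgebra$ past $j_\Psi(Q^\ast)\in\nalgebra^\prime$ to reduce to the hypothesis. The only additions on your side — verifying that $S\Psi$ is cyclic and separating, and making the fixed-point property $J\Phi=\Phi$ explicit — are details the paper uses implicitly, so they strengthen rather than change the argument.
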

\begin{proof}
	Since $S\Psi \in V_{\Psi}$, $J_{S\Psi}=J_\Psi$.
	$$\begin{aligned}
	\Delta^{\frac{1}{2}}_{S\Psi}QS\Psi 
	&=J_{S\Psi}Q^\ast S\Psi &=& \ j_{S\Psi}\left(Q^\ast\right)S\Psi \\
	&=S j_{S\Psi}\left(Q^\ast\right)\Psi &=& \ S j_{\Psi}\left(Q^\ast\right)\Psi \\
	&=S J_{\Psi}Q^\ast\Psi &=& \ S \Delta^{\frac{1}{2}}_{\Psi}Q\Psi\\
	&=S \Delta^{\frac{1}{2}}_{\Psi}S^{-1}(SQ)\Psi&=& \ Q_2(S\Psi). \\
	\end{aligned}$$
\end{proof}

\section{Bounded Perturbations on KMS states}
\label{BoundedPert}

This section is devoted to the development of the theory of perturbation of KMS states. This theory was initiated by H. Araki in \cite{Araki73} and \cite{Araki74} for bounded perturbations and developed for a class of unbounded perturbations by J. Derezi{\'n}ski, C.-A. Pillet, and V. Jak{\v{s}}i{\'c} in \cite{Derezinski03}. Since our aim in this thesis is to extend this theory, what will be done in Chapter \ref{chapExtensionPerturb}, it is indispensable to present it here with some details. Moreover, several techniques and results we developed in Chapter \ref{chapExtensionPerturb} are based on the one developed by Araki.

Quoting the definition given by Fujiwara in \cite{Fujiwara52}, ``[...] the ordered exponential operators [...] will be called briefly ``expansional'' operators [...]''.

\begin{definition}
	\label{DysonSeries}\index{expansional}
	Let $\nalgebra$ be a von Neumann algebra, $t\mapsto A(t)\in\nalgebra$ a strong-continuous function such that $\displaystyle \sup_{0\leq t\leq T}\|A(t)\|=r_A(T)<\infty$ for all $T\in\mathbb{R}_+$. For each $t\in\mathbb{R}_+$ define
	$$\begin{aligned}
	Exp_r\left(\int_0^t;A(s) ds\right)	&=\sum_{n=0}^{\infty}{\int_{0}^{t} dt_1\ldots \int_{0}^{t_{n-1}}dt_{n}A(t_n)\ldots A(t_1)}; \\
	Exp_l\left(\int_0^t;A(s) ds\right)	&=\sum_{n=0}^{\infty}{\int_{0}^{t} dt_1\ldots \int_{0}^{t_{n-1}}dt_{n} A(t_1)\ldots A(t_n)}; \\
	\end{aligned}$$
where the term for $n=0$ is the identity.
\end{definition}

Note that these operators are well defined since $\|A(t_i)\|\leq r_A(t)$ for every $1\leq i\leq n$ and ${\int_{0}^{t} dt_1\ldots \int_{0}^{t_{n-1}}dt_{n}=\frac{t^n}{n!}}$. Thus the series converge absolutely.

It is important to mention that these operators are basically the Dyson series. In fact, if given $(t_1,\ldots, t_n) \in \mathbb{R}^n$, we set a permutation $\sigma:\{1,\ldots,n\}\to\{1,\ldots,n\}$ such that $t_{\sigma(n)}\leq t_{\sigma(n-1)}\leq \ldots t_{\sigma(1)}$, and we define the operators $T,\widetilde{T}:\nalgebra\to\nalgebra$ by
$$\begin{aligned}
T\left(A(t_1)\ldots A(t_n)\right)&=A(t_{\sigma(1)})\ldots A(t_{\sigma(n)});\\
\widetilde{T}\left(A(t_1)\ldots A(t_n)\right)&=A(t_{\sigma(n)})\ldots A(t_{\sigma(1)});
\end{aligned}$$
then, 
	$$\begin{aligned}
Exp_r\left(\int_0^t;A(s) ds\right)	&=\sum_{n=0}^{\infty}{\int_{0}^{t} dt_1\ldots \int_{0}^{t}dt_{n}\frac{T\left(A(t_n)\ldots A(t_1)\right)}{n!}}; \\
Exp_l\left(\int_0^t;A(s) ds\right)	&=\sum_{n=0}^{\infty}{\int_{0}^{t} dt_1\ldots \int_{0}^{t}dt_{n} \frac{\widetilde{T}\left(A(t_n)\ldots A(t_1)\right)}{n!}}. \\
\end{aligned}$$

Let us examine some properties of these operators that can be found in \cite{Araki73.2}.

\begin{proposition}
	\label{exponentials}
	Let $\nalgebra$ be a von Neumann algebra, $t\mapsto A(t)\in\nalgebra$ a strong-continuous function such that $\displaystyle \sup_{0\leq t\leq T}\|A(t)\|=r_A(T)<\infty$ for all $T\in\mathbb{R}_+$. Then, the following properties hold
	\begin{enumerate}[(i)]
		\item $\displaystyle \frac{d}{dt}{Exp_r\left(\int_0^t;A(s) ds\right)}=Exp_r\left(\int_0^t;A(s) ds\right)A(t)$;\\
		$\displaystyle \frac{d}{dt}{Exp_l\left(\int_0^t;A(s) ds\right)}=A(t)Exp_l\left(\int_0^t;A(s) ds\right)$;
		\item $\displaystyle Exp_l\left(\int_0^t;-A(s) ds\right) Exp_r\left(\int_0^t;A(s) ds\right)=\mathbbm{1}$;
		\item $\displaystyle Exp_r\left(\int_0^t;A(s) ds\right) Exp_l\left(\int_0^t;-A(s) ds\right)=\mathbbm{1}$;
		\item $\displaystyle Exp_r\left(\int_0^t;A(s) ds\right) Exp_r\left(\int_0^{t^\prime};A(s+t) ds\right)=Exp_r\left(\int_0^{t+t^\prime};A(s) ds\right)$;
		
		$\displaystyle Exp_l\left(\int_0^{t^\prime};A(s+t) ds\right) Exp_l\left(\int_0^t;A(s) ds\right)=Exp_l\left(\int_0^{t+t^\prime};A(s) ds\right)$.
	\end{enumerate}
\end{proposition}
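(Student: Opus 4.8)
The plan is to derive all four assertions from the two differential identities in $(i)$, and then to recognise $(ii)$, $(iii)$ and $(iv)$ as uniqueness statements for linear ordinary differential equations in the Banach algebra $\nalgebra$. Throughout I would fix $T\in\mathbb{R}_+$ and work on $[0,T]$, where the bound $\|A(t_i)\|\leq r_A(T)$ forces the $n$-th summand of each expansional to have norm at most $r_A(T)^n t^n/n!$; hence both series converge absolutely and uniformly on $[0,T]$, and the same estimate applies to the formally differentiated series. This is what justifies term-by-term differentiation, which is the technical heart of the argument.

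For $(i)$, I would write the $n$-th term of $Exp_r$ as $T_n(t)=\int_0^t dt_1\int_0^{t_1}dt_2\cdots\int_0^{t_{n-1}}dt_n\,A(t_n)\cdots A(t_1)$ and observe that $t$ enters only through the outermost upper limit. By the fundamental theorem of calculus, $\frac{d}{dt}T_n(t)$ equals the integrand at $t_1=t$; after relabelling the remaining variables this is exactly $T_{n-1}(t)A(t)$. Summing over $n$ yields $\frac{d}{dt}Exp_r=Exp_r\cdot A(t)$, and the identical computation for $S_n(t)=\int_0^t dt_1\cdots\int_0^{t_{n-1}}dt_n\,A(t_1)\cdots A(t_n)$ gives $\frac{d}{dt}Exp_l=A(t)\cdot Exp_l$, with initial value $\mathbbm{1}$ in both cases.

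Write $U_r(t)=Exp_r(\int_0^t;A(s)\,ds)$ and $U_l(t)=Exp_l(\int_0^t;-A(s)\,ds)$, so that $U_r'=U_r A(t)$ and $U_l'=-A(t)U_l$ by $(i)$. For $(iii)$, the product rule gives $\frac{d}{dt}(U_r U_l)=U_r A U_l-U_r A U_l=0$; since $U_r(0)U_l(0)=\mathbbm{1}$, the product is constantly $\mathbbm{1}$. For $(ii)$, the same rule gives $\frac{d}{dt}(U_l U_r)=-A(U_l U_r)+(U_l U_r)A$, a linear ODE of the form $X'=\mathcal{L}_t(X)$ with $\mathcal{L}_t(X)=-A(t)X+XA(t)$ bounded and continuous in $t$; the constant $X\equiv\mathbbm{1}$ solves it with the correct initial datum, so by uniqueness $U_l U_r\equiv\mathbbm{1}$.

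Finally, for $(iv)$ I would fix $t$ and regard both sides as functions of $t'$. Setting $B(s)=A(s+t)$ and using $(i)$, the left-hand side $L(t')=U_r(t)\,Exp_r(\int_0^{t'};B(s)\,ds)$ satisfies $L'(t')=L(t')A(t'+t)$ with $L(0)=U_r(t)$, while the right-hand side $R(t')=Exp_r(\int_0^{t+t'};A(s)\,ds)$ satisfies $R'(t')=R(t')A(t'+t)$ with $R(0)=U_r(t)$ by the chain rule. As these coincide as solutions of the same linear ODE, uniqueness gives $L=R$; the $Exp_l$ identity follows by the mirror-image computation. The main obstacle is not any single algebraic step but the careful justification of differentiating the expansional series term by term and the appeal to uniqueness for Banach-space-valued linear ODEs; once these are in place the four assertions fall out quickly.
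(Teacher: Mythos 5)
Your proof is correct, but for items $(ii)$ and $(iii)$ it takes a genuinely different route from the paper. The paper, like you, gets $(i)$ directly from the definition, but it proves $(ii)$ by a purely algebraic computation: it multiplies the partial sums of $Exp_l\left(\int_0^t;-A(s)\,ds\right)$ and $Exp_r\left(\int_0^t;A(s)\,ds\right)$, reorganises the double sum, splits the simplex integrals via $\int_0^{t}=\int_0^{t_{i-j}}+\int_{t_{i-j}}^{t}$, and watches everything cancel telescopically except $\mathbbm{1}$; item $(iii)$ is then said to follow "from the same argument of $(ii)$ or using $(i)$". You instead derive both inverse identities from $(i)$: for $(iii)$ the product rule makes the derivative vanish identically, and for $(ii)$ you note that $U_lU_r$ solves $X'=-A(t)X+XA(t)$, $X(0)=\mathbbm{1}$, and invoke uniqueness via Gr\"onwall. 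For $(iv)$ your argument coincides with the paper's, which is exactly the ODE-uniqueness argument (the paper even remarks there that "the differential equation approach is very often preferable", so your route is one the author anticipated but did not carry out for $(ii)$--$(iii)$). The trade-off: your approach is shorter and more uniform, with everything resting on $(i)$ plus uniqueness of solutions of linear ODEs in $\nalgebra$; the paper's series manipulation for $(ii)$ is more elementary in that it needs only absolute convergence and rearrangement, with no appeal to differentiability of products or ODE uniqueness. The only place your argument demands real care is precisely that calculus input: differentiation and the product rule are taken in the strong operator topology (since $t\mapsto A(t)$ is only SOT-continuous), so the Gr\"onwall step should be run on the SOT integral representation, giving $\|D(t)\|\leq\int_0^t 2\,r_A(T)\|D(s)\|\,ds$ with $s\mapsto\|D(s)\|$ measurable as a supremum of continuous functions. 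This is the same level of technical gloss the paper itself permits in its proofs of $(i)$ and $(iv)$, so your proposal is sound.
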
 
\begin{proof}
	$(i)$ This follows directly from the definition.
	
	$(ii)$ Let us examine the product of partial sums. It is sufficient to calculate the product with the same number of terms. Because absolute convergence, we are able to change the sum's order, thus
	
	\begin{equation}
	\label{eq:calculationDyson}
	\begin{aligned}
		&\sum_{n=0}^{N}{(-1)^n\int_{0}^{t} dt_1\ldots \int_{0}^{t_{n-1}}dt_{n}A(t_1)\ldots A(t_n)} \sum_{k=0}^{N}{\int_{0}^{t} dt_1^\prime\ldots \int_{0}^{t_{k-1}^\prime}dt_{k}^\prime A(t_k^\prime)\ldots A(t_1^\prime)} \\
		& =
		\sum_{n=0}^{N}\sum_{k=0}^{N}{{(-1)^n\int_{0}^{t} dt_1\ldots \int_{0}^{t_{n-1}}dt_{n}A(t_1)\ldots A(t_n)}}\times\\
		&\hspace{5.65cm}\times{\int_{0}^{t} dt_{n+1}^\prime\ldots \int_{0}^{t_{n+k-1}^\prime}dt_{n+k}^\prime A(t_{n+k}^\prime)\ldots A(t_{k+1}^\prime)} \\ 
		&= \mathbbm{1}+
		\sum_{i=1}^{2N}\sum_{j=0}^{i}{{(-1)^{i-j}\int_{0}^{t} dt_1\ldots \int_{0}^{t_{i-j-1}}dt_{i-j}A(t_{1})\ldots A(t_{i-j})}}\times \\
		&\hspace{6.4cm}\times{\int_{0}^{t} dt_{i-j+1}\ldots \int_{0}^{t_{i-1}}dt_{i}A(t_{i})\ldots A(t_{i-j+1})} \\ 
		&= \mathbbm{1}+
		\sum_{i=1}^{2N}\sum_{j=0}^{i}{{(-1)^{i-j}\int_{0}^{t} dt_1\ldots \int_{0}^{t_{i-j-1}}dt_{i-j}A(t_{1})\ldots A(t_{i-j})}}\times \\
		& \hspace{4.1cm}\times{\left(\int_{0}^{t_{i-j}}+\int_{t_{i-j}}^{t}\right) dt_{i-j+1}\ldots \int_{0}^{t_{i-1}}dt_{i}A(t_{i})\ldots A(t_{i-j+1})} \\
		& =\mathbbm{1}+
		\sum_{i=1}^{2N}\sum_{j=0}^{i}{(-1)^{i-j}\int_{0}^{t} dt_1\ldots \int_{0}^{t_{i-1}}dt_{i}\left[A(t_{1})\ldots A(t_{i-j})\right]\left[ A(t_{i})\ldots A(t_{i-j+1})\right]} \\
		&\hspace{0.75cm}+ \sum_{i=1}^{2N}\sum_{j=0}^{i}{(-1)^{i-j}\int_{0}^{t} dt_1\ldots \int_{0}^{t_{i-1}}dt_{i}A(t_{1})\ldots\left[A(t_{i-j+1}) A(t_{i})\right]\ldots A(t_{i-j})} \\
		& =\mathbbm{1}+
		\sum_{i=1}^{2N}\sum_{j=0}^{i}{(-1)^{i-j}\int_{0}^{t} dt_1\ldots \int_{0}^{t_{i-1}}dt_{i}\left[A(t_{1})\ldots A(t_{i-j})\right] \left[A(t_{i})\ldots A(t_{i-j+1})\right]} \\
		&\hspace{0.75cm}+\sum_{i=1}^{2N}{(-1)^{i}\int_{0}^{t} dt_1\ldots \int_{0}^{t_{i-1}}dt_{i}A(t_{1})\ldots A(t_{i})} \\
		&\hspace{0.75cm}- \sum_{i=1}^{2N}\sum_{j=0}^{i+1}{(-1)^{i-j}\int_{0}^{t} dt_1\ldots \int_{0}^{t_{i-1}}dt_{i}\left[A(t_{1})\ldots A(t_{i-j})\right] \left[A(t_{i})\ldots A(t_{i-j+1})\right]} \\
		&=\mathbbm{1} + \sum_{i=1}^{2N}{(-1)^{i}\int_{0}^{t} dt_1\ldots \int_{0}^{t_{i-1}}dt_{i}A(t_{1})\ldots A(t_{i})}\\
		&\phantom{=\mathbbm{1}}-\sum_{i=1}^{2N}{(-1)^{i+1}\int_{0}^{t} dt_1\ldots \int_{0}^{t_{i-1}}dt_{i}A(t_{1})\ldots A(t_{i})} \\
		&= \mathbbm{1}.\\
		\end{aligned}\end{equation}
	Note that we have indicated with brackets special terms which are just $\mathbbm{1}$ either if $j=0$ or $j=i$, respectively.
	
	$(iii)$ This follow from the same argument of $(ii)$ or using (i).
	
	$(iv)$ We prefer to analysis the series in the last items, but the differential equation approach is very often preferable. For this item, notice that, thanks to $(i)$, $Exp_r\left(\int_0^{t};A(s) ds\right)$ is the unique solutions of \mbox{$f^\prime(x)=f(x)A(x)$} with the initial condition $f(0)=\mathbbm{1}$ at $x=t$. Applying the result for $A=A(t)$, we have that $Exp_r\left(\int_0^{t^\prime};A(s+t) ds\right)$ is the unique solution of $f^\prime(x)=f(x)A(x+t)$ with $f(0)=\mathbbm{1}$ for $x=t^\prime$. Thus, the right-hand side is the unique solution of $f^\prime(x)=f(x)A(x+t)$ with $f(0)=Exp_r\left(\int_0^{t};A(s) ds\right)$ at $x=t^\prime$. Hence, it is the unique solution of $f^\prime(x)=f(x)A(x)$ at $x=t+t^\prime$, which must be $Exp_r\left(\int_0^{t+t^\prime};A(s) ds\right)$.
	
	The other equality follows using an equivalent argument.
	
\end{proof}

We define the operation below just to prove the following important identities:

\begin{definition}
	Let $\nalgebra$ be a von Neumann algebra, $t\mapsto A(t)\in\nalgebra$ and $t\mapsto B(t)\in\nalgebra$ strong-continuous functions such that $$ \sup_{0\leq t\leq T}\|A(t)\|=r_A(T)<\infty \quad \textrm{ and } \quad \sup_{0\leq t\leq T}\|B(t)\|=r_B(T)<\infty, \qquad T\in\mathbb{R}_+.$$
	Define, for each $t\in\mathbb{R}$,
	$$(B\star A) (t)=Exp_r\left(\int_0^t;B(s) ds\right)A(t)Exp_l\left(\int_0^t;-B(s) ds\right).$$
\end{definition}

\begin{proposition}
		Let $\nalgebra$ be a von Neumann algebra, $t\mapsto A(t)\in\nalgebra$ and $t\mapsto B(t)\in\nalgebra$ strong-continuous functions such that $$ \sup_{0\leq t\leq T}\|A(t)\|=r_A(T)<\infty \quad \textrm{ and } \quad \sup_{0\leq t\leq T}\|B(t)\|=r_B(T)<\infty, \qquad T\in\mathbb{R}_+.$$
		Then,
	$$Exp_r\left(\int_0^t;(B\star A)(s) ds\right)Exp_r\left(\int_0^t;B(s) ds\right)=Exp_r\left(\int_0^t;A(s)+B(s) ds\right)$$
\end{proposition}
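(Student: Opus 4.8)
The plan is to prove the identity by converting both sides into solutions of a linear differential equation and invoking uniqueness, rather than by manipulating the defining series directly. First I would introduce the abbreviations $U(t) = Exp_r\left(\int_0^t; B(s)\, ds\right)$ and $V(t) = Exp_r\left(\int_0^t; (B\star A)(s)\, ds\right)$, and set $G(t) = V(t) U(t)$, which is exactly the left-hand side of the claimed identity. The goal is then to show that $G$ coincides with $W(t) = Exp_r\left(\int_0^t; (A(s)+B(s))\, ds\right)$.

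Next I would record the two facts I need about these expansionals. By Proposition \ref{exponentials}$(i)$, the functions $U$ and $V$ are norm-differentiable with $U'(t) = U(t) B(t)$ and $V'(t) = V(t)(B\star A)(t)$, while both equal $\mathbbm{1}$ at $t = 0$. By Proposition \ref{exponentials}$(ii)$ and $(iii)$, the operator $Exp_l\left(\int_0^t; -B(s)\, ds\right)$ is a two-sided inverse of $U(t)$; hence, by the very definition of the star operation,
$$(B\star A)(t)\, U(t) = U(t) A(t) Exp_l\left(\int_0^t; -B(s)\, ds\right) U(t) = U(t) A(t).$$

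The heart of the argument is then a short computation with the product rule, valid because the derivatives above are norm-limits in the Banach algebra $\nalgebra$:
$$G'(t) = V'(t) U(t) + V(t) U'(t) = V(t)(B\star A)(t) U(t) + V(t) U(t) B(t) = G(t)\big(A(t)+B(t)\big),$$
where the last equality uses the displayed relation $(B\star A)(t) U(t) = U(t) A(t)$ and factors out $G(t) = V(t) U(t)$. Since $G(0) = \mathbbm{1}$, the function $G$ satisfies precisely the initial value problem $f'(t) = f(t)\big(A(t)+B(t)\big)$ with $f(0) = \mathbbm{1}$, whose unique solution is $W(t)$ by Proposition \ref{exponentials}$(i)$ applied to $A+B$. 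Therefore $G(t) = W(t)$, which is the assertion.

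The only point requiring care — and what I would expect to be the main obstacle to making this fully rigorous — is the uniqueness statement for the linear operator-valued initial value problem, together with the justification of the product rule for the norm-differentiable $\nalgebra$-valued functions involved. Both are standard: uniqueness follows from a Gr\"onwall-type estimate using $\sup_{0 \le s \le t}\|A(s)+B(s)\| < \infty$ on compact intervals, and the product rule holds verbatim in a Banach algebra. Everything else is a direct consequence of the properties already established in Proposition \ref{exponentials}.
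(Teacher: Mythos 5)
Your proposal is correct and follows essentially the same route as the paper: both differentiate the product $V(t)U(t)$, use the identity $(B\star A)(t)\,U(t)=U(t)A(t)$ coming from the definition of $\star$ together with the inverse relation of Proposition \ref{exponentials}, and conclude by uniqueness of the solution of $f'(t)=f(t)\bigl(A(t)+B(t)\bigr)$ with $f(0)=\mathbbm{1}$. Your version merely makes explicit (notation $U,V,G$, the Gr\"onwall justification of uniqueness, the product rule in a Banach algebra) what the paper's proof leaves implicit.
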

\begin{proof}
	Define $$f(t)= Exp_r\left(\int_0^t;(B\star A)(s) ds\right)Exp_r\left(\int_0^t;B(s) ds\right).$$ Notice that, by Proposition \ref{exponentials},
	$$\begin{aligned}
	f^\prime(t)&=Exp_r\left(\int_0^t;(B\star A)(s) ds\right)(B\star A)(t)Exp_r\left(\int_0^t;B(s) ds\right)+f(t)B(t)\\
	&=Exp_r\left(\int_0^t;(B\star A)(t) ds\right)Exp_r\left(\int_0^t;B(s) ds\right)A(t)+f(t)B(t)\\
	&=f(t)(A(t)+B(t)).
\end{aligned}$$
Hence, the thesis follows by the unicity of the solution of the differential equation above with the initial condition $f(0)=\mathbbm{1}$.

\end{proof}

Notice that, in particular, the case $B(t)=B$ gives a simple and expected result, namely,
$Exp_r\left(\int_0^t;B(s) ds\right)=Exp_l\left(\int_0^t;B(s) ds\right)=e^{tB}$. Using this equality in previous proposition, we obtain
$$Exp_r\left(\int_0^t;e^{tB}A(s)e^{-tB} ds\right)e^{tB}=Exp_r\left(\int_0^t;A(s)+B ds\right)$$
and setting $A(t)=A$ we have
\begin{equation}
\label{Duhamel}
Exp_r\left(\int_0^t;e^{tB}Ae^{-tB} ds\right)e^{tB}=e^{t(A+B)}.
\end{equation}

A central case in perturbations is the case $A(t)=\tau_t^\Psi(A)$ where $\left\{\sigma^\Psi_t\right\}_{t\in\mathbb{R}}$ is the modular automorphism group. Now, the self-dual cone start playing a special part in the theory. As we have already said, and will repeat it here because of importance of this statement, we have that $J_\Phi=J_\Psi$ when $\Phi\in V_\Psi$.

\begin{lemma}
	\label{lemma5.2}
	Let $\nalgebra\subset B(\hilbert)$ be a von Neumann algebra, $\Psi\in\hilbert$ a cyclic and separating vector and $\left\{\sigma_t^\Psi\right\}_{t\in\mathbb{R}}$ the modular automorphism group. Let $Q\in \calgebra$ be such that $Q\Psi \in V_\Psi$, then $\sigma_t^\Psi(Q)$ has an analytic continuation $\sigma_z^{\Psi}(Q)\in \calgebra$, which is continuous on the closure, for $$z\in \overline{\strip{\gamma}}=\{z \in \mathbb{C} | \ - \gamma \leq \Im z \leq 0\}$$ and $\|\sigma^\Psi_{-\iu \gamma}(Q)-1\|\leq L$ where $\gamma \in \left(0,\frac{1}{8}\right)$. Let $F(t)=2\pi sech(2\pi t)^2$ and
	$$Q_F=\int_{\mathbb{R}}{F(t) \sigma_t^\Psi(Q) dt},$$
	then $\sigma_t^\Psi(Q_F)$ has an analytic continuation $\sigma_z^{\Psi}(Q_F)\in \calgebra$ for \\ $z\in \{w \in \mathbb{C} \ | \ - \gamma-\frac{1}{4} \leq \Im w \leq \gamma+\frac{3}{4}\}$ and such that
	$$h_1\defeq \sigma_{\frac{\iu}{4}}^\Psi(Q_F)-2=\sigma_{\frac{\iu}{4}}^\Psi((Q-\mathbbm{1})_F)\in \calgebra,$$
	$h_1^\ast=h_1$ and $\|h_1\|\leq 2L$.
	Furthermore,
	$$\begin{aligned}
	\Psi_1& \defeq\Psi(h_1)\defeq Exp_r\left(\int_{0}^{\frac{1}{2}};\sigma_{-it}^{\Psi}(h_1)dt\right)\Psi \\
	Q_1 & \defeq Q Exp_l\left(\int_{0}^{\frac{1}{2}};-\sigma_{-it}^{\Psi}(h_1)dt\right) \\
	\end{aligned}$$
	satisfies $Q_1\Psi_1=Q\Psi$, $\sigma^{\Psi1}_z(Q_1) \in \calgebra$ for $z\in\overline{\strip{\gamma_1}}$, $\gamma_1\in(0, \gamma)$, and
	$$\left\|\sigma^{\Psi1}_{-\gamma_1 \iu}(Q_1) -1\right\|\leq \left(L^2+(1+L)L^\prime\right)e^{\frac{L}{2}}$$
	where $L^\prime=\frac{1}{2} \left(\pi L \log\left(2(\gamma-\gamma_1)\right)\right)^2e^{-\pi L\log\left(2(\gamma-\gamma_1)\right)}$.
\end{lemma}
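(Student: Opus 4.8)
The plan is to treat the whole statement as a chain of analytic-continuation and expansional manipulations, with the hypothesis $Q\Psi\in V_\Psi$ entering through two facts. First, since $F(t)=2\pi\operatorname{sech}(2\pi t)^2\geq 0$ and $V_\Psi$ is a $\Delta_\Psi^{\iu t}$-invariant closed convex cone (Theorem \ref{theorem3in2}), we get $Q_F\Psi=\int_{\mathbb{R}}F(t)\,\Delta_\Psi^{\iu t}Q\Psi\,dt\in V_\Psi$ because it is a positive (weak) combination of elements of the cone. Second, because $Q_F\Psi\in V_\Psi=V_\Psi^{1/4}$, item $(vii)$ of Theorem \ref{theorem3in2} gives the reflection $\sigma_{\iu/2}^\Psi(Q_F)=Q_F^\ast$. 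Writing $\Phi(z)\defeq\sigma_z^\Psi(Q_F)$ and combining this with $\sigma_z^\Psi(A)^\ast=\sigma_{\bar z}^\Psi(A^\ast)$ yields $\Phi(\tfrac{\iu}{4}+\zeta)=\Phi(\tfrac{\iu}{4}+\bar\zeta)^\ast$, so analyticity and norm of $\Phi$ are symmetric about the line $\Im{z}=\tfrac14$; this is the structural reason both for the symmetric strip $[-\gamma-\tfrac14,\gamma+\tfrac34]$ (centred at $\tfrac14$) and for the self-adjointness of $h_1=\Phi(\tfrac{\iu}{4})-2$.

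For the analyticity domain I would write $\sigma_z^\Psi(Q_F)=\int_{\mathbb{R}}F(t)\,\sigma_{t+z}^\Psi(Q)\,dt$ and deform the $t$-contour. The integrand is analytic as long as one stays off the double poles of $\operatorname{sech}^2$, located at $t=\pm\tfrac{\iu}{4}$, and where $\sigma_{t+z}^\Psi(Q)$ is analytic, i.e. $\Im{t+z}\in[-\gamma,0]$; balancing the two constraints and using the exponential decay of $F$ produces an $\nalgebra$-valued continuation on $\Im{z}\in(-\gamma-\tfrac14,\tfrac14)$, and the reflection above extends it to $\Im{z}\in(\tfrac14,\gamma+\tfrac34)$, giving claim $(1)$. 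The identity $h_1=\sigma_{\iu/4}^\Psi((Q-\mathbbm{1})_F)$ is immediate once one notes $\mathbbm{1}_F=(\int_{\mathbb{R}}F)\,\mathbbm{1}=2\,\mathbbm{1}$ and $\sigma_{\iu/4}^\Psi(\mathbbm{1})=\mathbbm{1}$; then $h_1\in\nalgebra$ follows from claim $(1)$ and Theorem \ref{TTT}, and $h_1^\ast=h_1$ from $\Phi(\tfrac{\iu}{4})=\Phi(\tfrac{\iu}{4})^\ast$. For the bound I set $R\defeq Q-\mathbbm{1}$, observe $\sigma_{-\iu\gamma}^\Psi(R)=\sigma_{-\iu\gamma}^\Psi(Q)-\mathbbm{1}$ so $\|\sigma_{-\iu\gamma}^\Psi(R)\|\leq L$, hence $\|\sigma_{-\iu\gamma}^\Psi(R_F)\|\leq L\int_{\mathbb{R}}F=2L$; the reflection gives the same bound on $\Im{z}=\gamma+\tfrac12$, and the three-lines theorem applied to the bounded analytic function $\Phi$ on the strip between $\Im{z}=-\gamma$ and $\Im{z}=\gamma+\tfrac12$ forces $\|h_1\|=\|\Phi(\tfrac{\iu}{4})\|\leq 2L$.

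The identity $Q_1\Psi_1=Q\Psi$ is purely algebraic: with $A(t)\defeq\sigma_{-\iu t}^\Psi(h_1)=\Phi(\tfrac{\iu}{4}-\iu t)-2$, which is a strongly continuous bounded $\nalgebra$-valued function on $[0,\tfrac12]$ since the argument stays inside the analyticity strip for $t\in[0,\tfrac12]$, Proposition \ref{exponentials}$(ii)$ gives $Exp_l(\int_0^{1/2};-A(s)\,ds)\,Exp_r(\int_0^{1/2};A(s)\,ds)=\mathbbm{1}$, whence $Q_1\Psi_1=Q\,Exp_l(\int_0^{1/2};-A(s)\,ds)\,Exp_r(\int_0^{1/2};A(s)\,ds)\Psi=Q\Psi$, and in particular $Q_1\Psi_1=Q\Psi\in V_\Psi$. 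To analyse $\sigma_z^{\Psi_1}(Q_1)$ I would identify $\Psi_1=\Psi(h_1)$ as the perturbation of $\Psi$ by the self-adjoint $h_1$, verify it is again cyclic and separating with $\Psi_1\in V_\Psi$, so that $V_{\Psi_1}=V_\Psi$ and hence $J_{\Psi_1}=J_\Psi$ (as recalled just before the lemma), and then express $\sigma_z^{\Psi_1}$ through the cocycle relating it to $\sigma_z^\Psi$, that is through the Duhamel/expansional identity \eqref{Duhamel} and Proposition \ref{exponentials}.

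The main obstacle is the final quantitative estimate. Here one must continue $\sigma_z^{\Psi_1}(Q_1)$ analytically while the admissible strip necessarily contracts from $\gamma$ to $\gamma_1$, the lost width $\gamma-\gamma_1$ being spent on controlling $\sigma_z^\Psi(h_1)$ outside $[-\gamma,0]$. I would write $\sigma_{-\iu\gamma_1}^{\Psi_1}(Q_1)-\mathbbm{1}$ as a sum of three contributions and bound them separately: a genuinely second-order term of size $O(L^2)$ arising from products of two first-order factors (two $h_1$-order pieces, or $h_1$ with $Q-\mathbbm{1}$); a continuation-error term of size $(1+L)L'$, where $L'=\tfrac12\big(\pi L\log(2(\gamma-\gamma_1))\big)^2 e^{-\pi L\log(2(\gamma-\gamma_1))}$ is the explicit Cauchy-type estimate of the cost of pushing the continuation across a slab of width $\gamma-\gamma_1$ near the poles of the kernel; and an overall expansional factor bounded by $e^{L/2}$, obtained from $\|Exp_{r/l}(\int_0^{1/2};\pm A(s)\,ds)\|\leq\exp(\int_0^{1/2}\|A(s)\|\,ds)$ together with the three-lines control of $\|\sigma_{-\iu s}^\Psi(h_1)\|$. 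Assembling these gives $\|\sigma_{-\iu\gamma_1}^{\Psi_1}(Q_1)-\mathbbm{1}\|\leq(L^2+(1+L)L')e^{L/2}$. I expect the delicate bookkeeping of the analyticity domains under expansional conjugation, and the Phragm\'en--Lindel\"of estimates needed to keep every factor bounded on the shrinking strip, to be where essentially all of the work lies.
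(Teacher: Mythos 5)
Your treatment of everything up to the definition of $Q_1$ is essentially correct and, despite cosmetic differences, follows the paper's route: the paper obtains the strip for $\sigma_z^{\Psi}(Q_F)$ by combining the $\tfrac12$-strip continuation of $Q$ given by Theorem \ref{theorem3in2}$(vii)$ (plus the edge-of-the-wedge theorem) with a widening by $\pm\tfrac14$ coming from $F$, phrased there through the Fourier multiplier $\hat F(\log\Delta_\Psi)$ and the boundedness of $e^{tk}\hat F(k)$ for $|\Re{t}|<\tfrac14$, whereas you phrase the same widening as a contour shift limited by the poles of $\operatorname{sech}^2$ at $\pm\tfrac{\iu}{4}$; these are equivalent. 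Likewise your derivation of $h_1^\ast=h_1$ via the reflection about $\Im{z}=\tfrac14$ is an acceptable variant of the paper's argument (which uses positivity of $\sigma^\Psi_{\iu/4}(Q_F)$, again from $(vii)$), and $\mathbbm{1}_F=2\mathbbm{1}$, $\|h_1\|\leq 2L$, and $Q_1\Psi_1=Q\Psi$ via Proposition \ref{exponentials} are handled as in the paper.

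The genuine gap is in the final estimate, which you describe but do not prove. The reason the bound is second order in $L$ is an \emph{exact} first-order cancellation, and that cancellation is not generic: it is forced by the specific kernel $F$. Since $\hat F(k)=\frac{k}{\sqrt{2\pi}\left(e^{k/4}-e^{-k/4}\right)}$, one has $\int_{-1/4}^{1/4}e^{uk}\,du\;\hat F(k)=\frac{1}{\sqrt{2\pi}}$ identically in $k$, and therefore, using $Q_F\Psi=\sqrt{2\pi}\,\hat F(\log\Delta_\Psi)Q\Psi$,
$$\left(\int_0^{1/2}\sigma^\Psi_{-\iu s}(h_1)\,ds\right)\Psi=\int_{-1/4}^{1/4}\Delta_\Psi^{u}\,(Q-\mathbbm{1})_F\,\Psi\,du=(Q-\mathbbm{1})\Psi;$$
since both sides are elements of $\nalgebra$ applied to the separating vector $\Psi$, this upgrades to the operator identity $\int_0^{1/2}\sigma^\Psi_{-\iu s}(h_1)\,ds=Q-\mathbbm{1}$. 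Substituting this into $Q_1=Q\,Exp_l\left(\int_0^{1/2};-\sigma^\Psi_{-\iu s}(h_1)ds\right)$ yields the paper's key decomposition $Q_1=\mathbbm{1}-(Q-\mathbbm{1})^2+Q\,Q_1^\prime$, where $Q_1^\prime$ is the expansional with its zeroth- and first-order terms removed, and only from this does one get
$$\left\|\sigma_z^\Psi(Q_1)-\mathbbm{1}\right\|\leq\left\|\sigma^\Psi_z(Q-\mathbbm{1})\right\|^2+\left(1+\left\|\sigma^\Psi_z(Q-\mathbbm{1})\right\|\right)\left\|\sigma^\Psi_z(Q_1^\prime)\right\|\leq L^2+(1+L)\left\|\sigma^\Psi_z(Q_1^\prime)\right\|,$$
after which $\left\|\sigma^\Psi_z(Q_1^\prime)\right\|$ is bounded by the tail of the exponential series together with the contour-shift estimate $\left\|\sigma^\Psi_{z-\iu s}(h_1)\right\|\leq L\int_{\mathbb{R}}|F(t+\iu a(s))|\,dt$ for the explicit choice $a(s)=\frac{s-\frac14}{1-2(\gamma-\gamma_1)}$, which is precisely where the factor $\log\left(2(\gamma-\gamma_1)\right)$ in $L^\prime$ originates. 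Your proposal posits a ``genuinely second-order term arising from products of two first-order factors,'' but without the identity above there is no reason the first-order contributions cancel at all; this identity is the heart of the quantitative part of the lemma, and it is the reason $F$ is taken to be $2\pi\,\mathrm{sech}(2\pi t)^2$ rather than an arbitrary positive even kernel.
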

\begin{proof}
	Since $Q\Psi \in V_{\Psi}$, Theorem \ref{theorem3in2} $(vii)$ implies that $\sigma^{\Psi}_t(Q)$ admits an analytic continuation,  $\sigma^{\Psi}_z(Q)$ for $z \in \strip{\frac{1}{2}}$ which is continuous on the closure, such that $\left(\sigma^{\Psi}_t(Q)\right)^\ast=\sigma^{\Psi}_{t+\frac{\iu}{2}}(Q)$.
	
	Now, by hypothesis $\sigma^{\Psi}_z(Q)$ also has an analytic continuation, continuous on the boundary, for $z\in\overline{\strip{\gamma}}$. It follow from the edge-of-the-wedge theorem that $\sigma^{\Psi}_z(Q)$ is analytic for $-\gamma < \Im z < \frac{1}{2}+\gamma$ and continuous for $-\gamma \leq \Im z \leq \frac{1}{2}+\gamma$ and such that $\left(\sigma^{\Psi}_z(Q)\right)^\ast=\sigma^{\Psi}_{\bar{z}+\frac{\iu}{2}}(Q)$. Furthermore, using Maximum Modulus Principle, we have, for $\Im z\in[-\gamma,\frac{1}{2}+\gamma]$,
	\begin{equation}
	\label{ineq1}
	\begin{aligned}
	\|\sigma^{\Psi}_z(Q-\mathbbm{1}) \|	&\leq\max\left\{\left\|\sigma^{\Psi}_{-\iu \gamma}(Q-\mathbbm{1})\right\|,\left\|\sigma^{\Psi}_{\frac{\iu}{2}+\iu \gamma}(Q-\mathbbm{1})\right\|\right\}\\
	&=\max\left\{\left\|\sigma^{\Psi}_{-\iu \gamma}(Q-\mathbbm{1})\right\|,\left\|\left(\sigma^{\Psi}_{-\iu \gamma}(Q-\mathbbm{1})\right)^\ast\right\|\right\}\\
	&\leq L. \\
	\end{aligned}
	\end{equation}
	
	Now, $Q_F\Psi$ is in $\Dom{\Delta_\Psi^{\iu z}}$ for $\Im z\in\left[-\frac{1}{4},\frac{1}{4}\right]$. In fact, the Fourier transform of $F(t)$ is
	$$\hat{F}(k)=\frac{1}{\sqrt{2\pi}}\int_{\mathbb{R}}{2\pi sech(2\pi t)^2 e^{\iu kt} dt}=\begin{cases}\frac{k}{\sqrt{2\pi}\left(e^\frac{k}{4}-e^{-\frac{k}{4}}\right)} & \textrm{ if } k\neq 0\\\frac{2}{\sqrt{2\pi}}& \textrm{ if } k=0\end{cases},$$
	thus $e^{t k}\hat{F}(k)$ is bounded for $|\Re{t}|<\frac{1}{4}$.
	Hence, $\hat{F}(\log(\Delta_\Psi))Q\Psi\in{\Dom{\Delta_\Psi^{\iu z}}}$ for $|\Im z|< \frac{1}{4}$. Moreover, by the Fourier transform definition
	\begin{equation}
	\label{xxx}
	\hat{F}(\log(\Delta_\Psi))Q\Psi=\frac{1}{\sqrt{2\pi}}\int_{\mathbb{R}}F(t)\Delta_\Psi^{\iu t}Q\Psi dt=\frac{1}{\sqrt{2\pi}}\int_{\mathbb{R}}F(t)\sigma^\Psi_t(Q)\Psi dt=\frac{1}{\sqrt{2\pi}}Q_F \Psi.
	\end{equation}	
	Then we have an analytic continuation, continuous on the boundary, for $\Im{z_1} \in [-\gamma, \frac{1}{2}+\gamma]$ and $-\frac{1}{4}\leq \Im{z_2} \leq \frac{1}{4}$. Consequently, for  $z=z_1+z_2, \Im{z} \in [-\gamma-\frac{1}{4},\frac{3}{4}+\gamma]$,
	$$\sigma^{\Psi}_z(Q_F)=\int_{\mathbb{R}}{F(t-z_1) \sigma_t^\Psi\left(\sigma_{z_2}^\Psi(Q)\right) dt}.$$
	
	It follows now from $Q\Psi \in V_\psi$ and Theorem \ref{theorem3in2} $(vii)$ that $\sigma^\Psi_{\frac{\iu}{4}}(Q_F) \geq 0$ and so $h_1=h_1^*$.
	
	By definition, 
	$$\mathbbm{1}_F=\int_{\mathbb{R}}{F(t)\sigma^\Psi_t(\mathbbm{1})}=\int_{\mathbb{R}}{F(t)} \mathbbm{1}=2 \mathbbm{1},$$ it follows that
	$\sigma_{\frac{\iu}{4}}^\Psi(Q_F)-2=\sigma_{\frac{\iu}{4}}^\Psi((Q-\mathbbm{1})_F)$. Using again the Maximum Modulus Principle in the strip $\strip{\frac{1}{2}}$ and the inequality $\eqref{ineq1}$,
	$$\|h_1\|=\|\sigma^\Psi_{\frac{\iu}{4}}(Q_F)-2\| \leq \left\|\sigma_{\frac{\iu}{4}}^\Psi((Q-1)_F)\right\|=\left\|\int_{\mathbb{R}}F(t)\sigma^\Psi_{\frac{i}{4}+t}(Q-\mathbbm{1})dt\right\|\leq 2 L.$$
	
	It is a consequence of Proposition \ref{exponentials} that $Q_1\Psi_1=Q\Psi$.
	
	Let $$Q^\prime_1=Exp_l\left(\int_0^{\frac{1}{2}};-\sigma^\Psi_{-\iu s}(h_1)ds\right)-1 +\int_0^{\frac{1}{2}}{\sigma^\Psi_{-\iu s}(h_1)ds},$$
	
	Noticing that
	$$\int_{-\frac{1}{4}}^{\frac{1}{4}}{\hat{F}(k) e^{kt}dt}=\frac{1}{\sqrt{2\pi}}$$
	
	we can use equation \eqref{xxx} to get
	$$\left(\int_0^{\frac{1}{2}}{\sigma^\Psi_{-\iu s}(h_1)ds}\right)\Psi=\left(\int_{-\frac{1}{4}}^{\frac{1}{4}}{\sigma^\Psi_{-\iu s}((Q-\mathbbm{1})_F)ds}\right)\Psi=\int_{-\frac{1}{4}}^{\frac{1}{4}}{\Delta_\Psi^{s}(Q-\mathbbm{1})_F\Psi ds}=(Q-\mathbbm{1})\Psi.$$
	
	Because $\Psi$ is separating, we conclude that
	$Q_1=\mathbbm{1}-(Q-\mathbbm{1})^2+Q Q_1^\prime$ and then
	$$\|\sigma^\Psi_z(Q_1)-\mathbbm{1}\|\leq \|\sigma^\Psi_z(Q)-\mathbbm{1}\|+\left(\mathbbm{1}+\|\sigma^\Psi_z(Q)-\mathbbm{1}\|\right)\|\sigma^\Psi_z(Q^\prime_\mathbbm{1})\|,$$
	and
	$$\begin{aligned}
	\left\|\sigma^\Psi_z(Q^\prime_1)\right\| 
	&\leq \sum_{n=2}^{\infty}{\frac{1}{n!}\int_0^{\frac{1}{2}}\left\|\sigma^\Psi_{z-\iu s}(h_1)\right\|^n}\\
	&=e^{\|\sigma^\Psi_{z-\iu s}(h_1)\|}-1-\left\|\sigma^\Psi_{z-\iu s}(h_1)\right\|\\
	&\leq \left\|\sigma^\Psi_{z-\iu s}(h_1)\right\|^2e^{\frac{\left\|\sigma^\Psi_{z-\iu s}(h_1)\right\|}{2}}.
	\end{aligned}$$
	
	Finally, we can choose $a(s)$ such that $\Im z- s+ a(s)+\frac{1}{4}\in[-\gamma,\frac{1}{2}]$ and $|a(s)|<\frac{1}{4}$,
	$$\begin{aligned}
	\left\|\sigma^\Psi_{z-\iu s}(h_1)\right\| 
	&\leq \int_{\mathbb{R}}{|F(t+\iu a(s))|\left\|\sigma^{\Psi}_{z-\iu s+\iu a(s)+\frac{\iu}{4}}(Q-\mathbbm{1})dt\right\|} \\
	&\leq L \int_{\mathbb{R}}{|F(t+\iu a(s))|dt}\\
	&=8\sqrt{2\pi}L \frac{a(s)}{\sin(4\pi a(s))}.
	\end{aligned}$$
	
	Suppose now $\Im z \in [-\gamma_1,0] $. We can set
	$$a(s)=\frac{s-\frac{1}{4}}{1-2(\gamma-\gamma_1)}.$$
	
	As a consequence of this choice, $|a(s)|<\frac{1}{4}$ and $\left|\frac{a(s)(\frac{1}{4}-|a(s)|)}{\sin(4\pi a(s))}\right|\leq 2^{-5}$. Hence,
	$$\left\|\sigma^\Psi_{z-\iu s}(h_1)\right\|\leq \sqrt{2\pi}L|\log2(\gamma-\gamma_1) |.$$
\end{proof}

\begin{lemma}
	\label{lemma5.3}
	Let $\nalgebra\subset B(\hilbert)$ be a von Neumann algebra, $\Psi\in\hilbert$ a cyclic and separating vector and $\left\{\sigma^\Omega_t \right\}_{t\in\mathbb{R}}$ the modular automorphism group with respect to $\Psi$. Let $Q\in \nalgebra$ be such that $Q\Psi\in V_{\Psi}$, $\gamma \in [0,\frac{1}{8}]$, $\sigma_z^\Psi Q\in \nalgebra$ for $z\in \overline{\strip{-\gamma}}$ and $\|\sigma^\Psi_{-\iu \gamma}Q-\mathbbm{1}\|\leq L_0\leq (4\pi \log\gamma)^{-2}$. Then, there exists $h\in\calgebra$, $h=h^\ast$, such that
	$$Q\Psi=\Psi(h)\defeq Exp_r\left(\int_0^{\frac{1}{2}}{;\sigma^\Psi_{\iu s}(h)ds}\right)\Psi.$$
\end{lemma}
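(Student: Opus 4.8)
The plan is to obtain $h$ as the limit of a Newton-type iteration built by repeatedly applying Lemma \ref{lemma5.2}. Set $\Psi_0=\Psi$, $Q_0=Q$, $\gamma_0=\gamma$ and take $L_0$ as in the hypothesis (the case $\gamma=0$ forces $L_0=0$, i.e. $Q\Psi=\Psi$ and $h=0$, so assume $\gamma\in(0,\tfrac18]$). Suppose at stage $n$ we have a cyclic and separating vector $\Psi_n$, an operator $Q_n\in\nalgebra$ with $Q_n\Psi_n=Q\Psi$ and $Q_n\Psi_n\in V_{\Psi_n}$, together with $\gamma_n\in(0,\gamma]$ and $\|\sigma^{\Psi_n}_{-\iu\gamma_n}(Q_n)-\mathbbm{1}\|\leq L_n$. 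Applying Lemma \ref{lemma5.2} to $Q_n$ relative to $\Psi_n$ produces a self-adjoint $h_{n+1}\in\nalgebra$ with $\|h_{n+1}\|\leq 2L_n$, a vector $\Psi_{n+1}=\Psi_n(h_{n+1})$ and an operator $Q_{n+1}$ with $Q_{n+1}\Psi_{n+1}=Q_n\Psi_n=Q\Psi$ and
$$L_{n+1}\defeq\bigl\|\sigma^{\Psi_{n+1}}_{-\iu\gamma_{n+1}}(Q_{n+1})-\mathbbm{1}\bigr\|\leq\bigl(L_n^2+(1+L_n)L_n'\bigr)e^{L_n/2},$$
where $L_n'=\tfrac12\bigl(\pi L_n\log(2(\gamma_n-\gamma_{n+1}))\bigr)^2 e^{-\pi L_n\log(2(\gamma_n-\gamma_{n+1}))}$. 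The hypotheses needed to re-apply the lemma are preserved: since $\Psi_{n+1}$ is an invertible expansional (Proposition \ref{exponentials} $(ii)$--$(iii)$) applied to $\Psi_n$ it is again cyclic and separating, and being a perturbation by a self-adjoint element it lies in $V_{\Psi_n}$; by the self-dual cone theorem (the items giving $V_\Phi=V_\Omega$ and $J_\Phi=J_\Omega$ for cyclic or separating $\Phi\in V_\Omega$) we then have $V_{\Psi_{n+1}}=V_{\Psi_n}=V_\Psi$ and $J_{\Psi_{n+1}}=J_\Psi$, so $Q\Psi=Q_{n+1}\Psi_{n+1}\in V_{\Psi_{n+1}}$ as required.

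Next I would fix the strips by choosing $\gamma_n\downarrow\gamma_\infty$ for some $\gamma_\infty\in(0,\gamma)$, say $\gamma_n-\gamma_{n+1}=2^{-n-2}\gamma$, so that $|\log(2(\gamma_n-\gamma_{n+1}))|$ grows only linearly in $n$. The smallness assumption $L_0\leq(4\pi\log\gamma)^{-2}$ is precisely what keeps the factor $\pi L_n|\log(2(\gamma_n-\gamma_{n+1}))|$ below $1$ throughout the iteration, so that $L_n'$ is of order $L_n^2$ and the recursion above takes the contractive form $L_{n+1}\leq C_n L_n^2$ with $\sup_n C_n<\infty$. Consequently $L_n\to0$ quadratically, whence $\sum_n\|h_{n+1}\|\leq 2\sum_n L_n<\infty$.

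Then I would assemble the $h_n$ into a single perturbation. Writing each stage in the form $\Psi_n=\Psi(H_n)$, the chain rule for expansionals, obtained from the composition and Duhamel identities of Proposition \ref{exponentials} and equation \eqref{Duhamel}, expresses $H_{n+1}$ as $H_n$ modified by a term bounded in norm by a multiple of $\|h_{n+1}\|$ (the correction from the change of modular operator is legitimate because $J_{\Psi_n}=J_\Psi$ for all $n$). Hence $(H_n)_n$ is Cauchy in norm; set $h=\lim_n H_n\in\nalgebra$, which is self-adjoint since every $H_n$ is. Finally, by the Maximum Modulus Principle applied between the strips $\strip{-\gamma_n}$ and $\strip{\tfrac12+\gamma_n}$ (exactly as in the proof of Lemma \ref{lemma5.2}) one gets $\|Q_n-\mathbbm{1}\|=\|\sigma^{\Psi_n}_0(Q_n)-\mathbbm{1}\|\leq L_n\to0$, so $Q_n$ is eventually invertible with $Q_n^{-1}\to\mathbbm{1}$ and therefore $\Psi_n=Q_n^{-1}(Q\Psi)\to Q\Psi$; on the other hand $\Psi_n=\Psi(H_n)\to\Psi(h)$ by norm-continuity of $h\mapsto\Psi(h)$ (immediate from Definition \ref{DysonSeries}). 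Comparing the two limits yields $Q\Psi=\Psi(h)$.

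The main obstacle is the bookkeeping of the third step: defining the cumulative self-adjoint perturbation $H_n$ with $\Psi(H_n)=\Psi_n$ and proving the increment $\|H_{n+1}-H_n\|$ is dominated by $\|h_{n+1}\|$, since the modular operator, and hence the automorphism group used in the next expansional, changes at each stage. Controlling this drift of the modular data, while simultaneously verifying that the quantitative recursion for $L_n$ stays inside the contractive regime guaranteed by $L_0\leq(4\pi\log\gamma)^{-2}$, is the delicate part; the convergence of the expansionals themselves is then routine.
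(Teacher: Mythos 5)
Your overall architecture is the same as the paper's: iterate Lemma \ref{lemma5.2} on a shrinking sequence of strips, use the smallness hypothesis $L_0\leq(4\pi\log\gamma)^{-2}$ to force $L_n\to0$ fast enough that $\sum_n\|h_n\|\leq 2\sum_n L_{n-1}<\infty$, and conclude from the invariant $Q_n\Psi_n=Q\Psi$ together with $\|Q_n-\mathbbm{1}\|\leq L_n\to 0$ that $Q\Psi=\lim_n\Psi_n=\Psi(h)$. (The paper takes $\gamma_n=2^{-n}\gamma$ rather than your $\gamma_n\downarrow\gamma/2$; both make the logarithmic factor grow linearly in $n$. Your claim that $\sup_n C_n<\infty$ in the recursion $L_{n+1}\leq C_nL_n^2$ is not literally true, since $C_n\sim\bigl(\log(2(\gamma_n-\gamma_{n+1}))\bigr)^2$ grows like $n^2$, but the induction $L_n\leq 2^{-n}L_0$ still closes because $n^22^{-n}$ is bounded, so this is a harmless slip. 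Your care with preservation of the hypotheses along the iteration — cyclicity and separation of $\Psi_{n+1}$, $\Psi_{n+1}\in V_{\Psi_n}$, hence $V_{\Psi_{n+1}}=V_\Psi$ and $J_{\Psi_{n+1}}=J_\Psi$ — is correct and in fact more explicit than the paper.)

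The genuine gap is precisely the step you flag as "the delicate part" and leave unresolved: producing a single self-adjoint $h\in\nalgebra$ with $\lim_n\Psi_n=\Psi(h)$. You propose to write $\Psi_n=\Psi(H_n)$ and to bound $\|H_{n+1}-H_n\|$ by a multiple of $\|h_{n+1}\|$ via a "chain rule for expansionals obtained from Proposition \ref{exponentials} and equation \eqref{Duhamel}". That justification cannot work as stated: Proposition \ref{exponentials} and \eqref{Duhamel} are identities for expansionals of operator-valued functions built from \emph{one fixed} automorphism group, and say nothing about comparing an expansional taken along $\sigma^{\Psi_n}$ with one taken along $\sigma^{\Psi}$; moreover $J_{\Psi_n}=J_\Psi$ does not control the discrepancy, because what changes from stage to stage is $\Delta_{\Psi_n}$, not the conjugation. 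What actually closes this step — and what the paper invokes, asserting it as part of the Araki bounded-perturbation theory it is reviewing — is the exact additivity (cocycle) property of perturbed vectors, $\bigl(\Psi(H)\bigr)(k)=\Psi(H+k)$ for self-adjoint $H,k\in\nalgebra$ (see \cite{Araki73}). With it, $\Psi_n=\Psi_{n-1}(h_n)=\Psi(h_1+\cdots+h_n)$ holds on the nose, so $H_n=h_1+\cdots+h_n$, the increment is exactly $h_{n+1}$, there is no "drift of the modular data" to estimate, and $h=\sum_n h_n$ converges in norm by your bound on $\|h_n\|$. Until you either invoke that identity or prove a quantitative substitute for it, your iteration produces the sequence $\Psi_n$ but not the representation of its limit in the form $\Psi(h)$, which is the assertion of the lemma; the remaining parts of your proposal match the paper and are sound.
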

\begin{proof}
	Define $\gamma_n=2^{-n}\gamma$. By \ref{lemma5} there exists the extension required in Lemma \ref{lemma5.2} and, by Lemma \ref{lemma5.2}, there exists vectors $\Psi_n$, $Q_n \in \nalgebra$, $h_n\in\nalgebra$ with $h_n=h_n^\ast$ and $L_n>0$ such that $\Psi_n=\Psi_{n-1}(h_n)$ with $\Psi_0=\Psi$ and $\|h_n\|\leq 2 L_{n-1}$. Furthermore $\sigma^{\Psi_n}_t(Q_n)$ can be extended analytically for $\Im z \in (-\gamma_n,0)$ and continuously for $\Im{z} \in [-\gamma_n,0]$ satisfying 
	$\|\sigma^{\Psi_n}_{-\iu\gamma_n}(Q_n)-\mathbbm{1}\|\leq L_n$, $Q_n\Psi_n=Q\Psi$,
	$$\begin{aligned}
	L_n&=(L^2_{n-1}+(1+L_n-1)L^\prime_{n-1})e^{\frac{L_{n-1}}{2}} \quad \textrm{ and } \\ L^\prime_{n-1}&=\frac{1}{2}\left(\pi L_n-1\log2(\gamma-\gamma_n)\right)^2 e^{-\pi L_{n-1}\log2(\gamma-\gamma_n)}.
	\end{aligned}$$
	One can prove that $L_n\leq 2^{-n}L_0$ and since $\|h_n\|\leq 2L_{n-1}$, $\displaystyle\sum_{n=0}^{\infty}{h_n}\in \nalgebra$ and $\displaystyle\lim_{n\rightarrow \infty}{\Psi_n}=\Psi(h)$ where $\Psi_n=\Psi(h1+\ldots+h_n)$. 

	By \, Lemma \ref{lemma5.2}, \, we \, have \, an \, analytic \, continuation \, $\sigma^{\Psi_n}_z(Q_n)$ \, for \mbox{$\Im{z} \in [-\gamma_n-\frac{1}{4},\frac{3}{4}+\gamma_n]$} and, by Lemma \ref{lemma6}, $\|Q-\mathbbm{1}\|\leq\|\sigma^{\Psi_n}_{-\iu \gamma_n}(Q_n)-\mathbbm{1}\|\leq L_n \rightarrow 0$.
	Hence $Q\Psi=\lim_{n\rightarrow \infty}{ Q_n\Psi_n}=\Psi(h)$.
	
\end{proof}

Now the theory starts taking form. The previous result shows that the vectors in the self-dual cone under some other assumptions are achievable by a perturbation. The the next proposition, which are the main result on Araki's perturbation theory, has basically the same interpretation.

\begin{proposition}
	Let $\nalgebra\subset B(\hilbert)$ be a von Neumann algebra, $\Psi\in\hilbert$ a cyclic and separating vector and $\left\{\sigma^\Omega_t \right\}_{t\in\mathbb{R}}$ the modular automorphism group with respect to $\Psi$. Let $Q\in \nalgebra$ be such that $J_\Psi Q\Psi=Q\Psi$ and $\sigma_t^\Psi Q$ has an analytic continuation for $z\in \overline{\strip{-\frac{1}{2}}}$. Then there exists $h\in\nalgebra$, $h=h^\ast$, such that $$e^Q \Psi=\Psi(h).$$
\end{proposition}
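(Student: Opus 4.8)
The plan is to first show that $e^Q\Psi$ lands in the self-dual cone $V_\Psi$, and then to realize it as a bounded perturbation $\Psi(h)$ by feeding it into Lemma \ref{lemma5.3}; the only obstruction is that Lemma \ref{lemma5.3} carries a smallness hypothesis, which I would remove by scaling in the exponent.

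For the cone membership, recall that $SQ\Psi=Q^\ast\Psi$ with $S=J_\Psi\Delta_\Psi^{\frac12}$. Combined with the hypothesis $J_\Psi Q\Psi=Q\Psi$ this yields the vector identity $\Delta_\Psi^{\frac12}Q^\ast\Psi=Q\Psi$, which is the vector form of the reality condition $\sigma^\Psi_{\frac{\iu}{2}}(Q)=Q^\ast$. The hypothesis that $\sigma^\Psi_\cdot(Q)$ continues analytically over $\overline{\strip{-\frac12}}$ then promotes this to the operator level and, after reflecting the lower strip to the upper one through the reality condition, shows that $\sigma^\Psi_{\frac{\iu}{4}}(Q)$ is self-adjoint, so that $P\defeq e^{\sigma^\Psi_{\frac{\iu}{4}}(Q)}$ is a positive operator affiliated with $\nalgebra$ with $P\Psi$ well defined. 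Since $\sigma^\Psi_z$ is multiplicative, $\sigma^\Psi_{\frac{\iu}{4}}(e^Q)=e^{\sigma^\Psi_{\frac{\iu}{4}}(Q)}=P$, whence $e^Q\Psi=\Delta_\Psi^{\frac14}\sigma^\Psi_{\frac{\iu}{4}}(e^Q)\Psi=\Delta_\Psi^{\frac14}P\Psi$; by item $(iii)$ of Theorem \ref{theorem3in2} this lies in $\Delta_\Psi^{\frac14}V_\Psi^0\subseteq V_\Psi^{\frac14}=V_\Psi$. Moreover $e^Q$ is invertible in $\nalgebra$, so $e^Q\Psi$ is cyclic and separating.

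To produce $h$ I would pass to the family $e^{sQ}$, $s\in[0,1]$. Each $e^{sQ}$ satisfies the same two hypotheses: analyticity is immediate from $\sigma^\Psi_z(e^{sQ})=e^{s\sigma^\Psi_z(Q)}$, and the reality condition is preserved since $\sigma^\Psi_{\frac{\iu}{2}}(e^{sQ})=e^{s\sigma^\Psi_{\frac{\iu}{2}}(Q)}=e^{sQ^\ast}=(e^{sQ})^\ast$, so the previous paragraph gives $e^{sQ}\Psi\in V_\Psi$ for every $s$. For $s$ small, $\|\sigma^\Psi_{-\iu\gamma}(e^{sQ})-\mathbbm{1}\|\le e^{s\|\sigma^\Psi_{-\iu\gamma}(Q)\|}-1$ falls below the threshold of Lemma \ref{lemma5.3}, which then yields $e^{sQ}\Psi=\Psi(h_s)$ with $h_s=h_s^\ast\in\nalgebra$. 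To advance to $s=1$ I would iterate: writing $e^{(s+\delta)Q}\Psi=e^{\delta Q}\Psi(h_s)$ and using that $\Psi(h_s)$ is again cyclic and separating with the same self-dual cone $V_{\Psi(h_s)}=V_\Psi$, the increment $e^{\delta Q}$ is, for $\delta$ small, a Lemma \ref{lemma5.3}-type perturbation relative to the moving base point $\Psi(h_s)$; the expansional calculus of Proposition \ref{exponentials} together with the Duhamel identity \eqref{Duhamel} and the cocycle bookkeeping of Lemma \ref{lemma5.2} then compose this increment with $h_s$ into a single self-adjoint element. Finitely many such steps, equivalently the factorization $e^Q=(e^{Q/N})^N$, reach $s=1$ and produce the desired $h$.

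The main obstacle is precisely this last step: relating the modular automorphism group of the perturbed vector $\Psi(h_s)$ back to that of $\Psi$ so that Lemma \ref{lemma5.3} can be invoked at the moving base point, and then amalgamating the successive self-adjoint increments into one $h\in\nalgebra$ with control of convergence. This is exactly the delicate estimate engineered in Lemmas \ref{lemma5.2} and \ref{lemma5.3} (the geometric decay of the $L_n$), now organized along the flow $s\mapsto e^{sQ}$ rather than around a fixed operator; the analyticity over the full strip $\overline{\strip{-\frac12}}$, of width $\tfrac12$ matching the range $\int_0^{1/2}$ in the definition of $\Psi(h)$, is what keeps each increment inside the admissible strip.
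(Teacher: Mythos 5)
Your overall strategy coincides with the paper's: first show $e^{tQ}\Psi\in V_\Psi$ for all $t$ (your derivation of the self-adjointness of $\sigma^\Psi_{\frac{\iu}{4}}(Q)$ and of the identity $e^{Q}\Psi=\Delta_\Psi^{\frac{1}{4}}e^{\sigma^\Psi_{\frac{\iu}{4}}(Q)}\Psi$ is exactly the paper's opening move, with item $(iii)$ of Theorem \ref{theorem3in2} used in place of item $(vii)$), then factor $e^Q=\left(e^{Q/N}\right)^N$ and apply Lemma \ref{lemma5.3} at the moving base points $\Phi_{n/N}=e^{nQ/N}\Psi$, amalgamating the resulting self-adjoint increments.

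However, the step you yourself flag as ``the main obstacle'' is a genuine gap, and the tools you invoke to close it are the wrong ones. To apply Lemma \ref{lemma5.3} at the base point $\Phi_s=e^{sQ}\Psi$ you must first establish that $\sigma^{\Phi_s}_z(Q)$ (hence of $e^{\delta Q}$) admits the analytic continuation required in that lemma \emph{with respect to the vector $\Phi_s$}, and you need its norm bounded uniformly in $s\in[0,1]$, so that one fixed step size $\delta=1/N$ is admissible at every stage; otherwise the admissible $\delta$ could shrink as $s$ grows, and finitely many steps are not guaranteed to reach $s=1$. Lemmas \ref{lemma5.2} and \ref{lemma5.3}, the expansional calculus of Proposition \ref{exponentials}, and the Duhamel identity \eqref{Duhamel} cannot supply this: they all \emph{presuppose} an analytic continuation relative to the relevant cyclic vector, rather than transferring one from $\Psi$ to $\Phi_s$. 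The missing ingredient is Lemma \ref{lemma7}: since $e^{sQ}\in\nalgebra$ is invertible and $e^{sQ}\Psi\in V_\Psi$, it gives $\Delta_{\Phi_s}^{\frac{1}{2}}Q\Phi_s=\left(e^{sQ}\sigma^\Psi_{-\frac{\iu}{2}}(Q)e^{-sQ}\right)\Phi_s$, and then Lemma \ref{lemma6}, applied at the vector $\Phi_s$, yields the continuation $\sigma^{\Phi_s}_z(Q)$ on $\overline{\strip{\frac{1}{2}}}$ with $\left\|\sigma^{\Phi_s}_z(Q)\right\|\leq a \doteq \max\left\{\|Q\|,\,e^{2\|Q\|}\left\|\sigma^\Psi_{-\frac{\iu}{2}}(Q)\right\|\right\}$, uniformly in $s$. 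With this uniform $a$ one chooses $N$ once and for all so that $e^{a/N}-1$ lies below the threshold of Lemma \ref{lemma5.3}; every increment $e^{Q/N}$ is then admissible at every base point, and the rest of your argument, including the amalgamation $e^Q\Psi=\Psi\left(h_1+\dots+h_N\right)$, closes as in the paper.
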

\begin{proof}
	Since $Q^\ast \Psi=J_\Psi\Delta^{\frac{1}{2}}_\Psi Q \Psi=\Delta^{-\frac{1}{2}}_\Psi J_\Psi Q \Psi=\Delta^{-\frac{1}{2}}_\Psi Q \Psi$, $Q\Psi \in \Dom{\Delta_\Psi^{-\frac{1}{2}}}$ and, by Lemma \ref{lemma6}, $\sigma_t^\Psi (Q)$ admits an analytic continuation to $z\in \strip{\frac{1}{2}}$, furthermore it satisfies $\sigma_{\frac{\iu}{2}}^\Psi (Q)=Q^\ast$ thus $\sigma_{\bar{z}}^\Psi (Q)^\ast=\sigma_{z+\frac{\iu}{2}}^\Psi(Q)$, in particular $\sigma_{\frac{\iu}{4}}^\Psi (Q)^\ast=\sigma_{\frac{\iu}{4}}^\Psi (Q)$.
	
	Let $\Phi=e^{tQ}\Psi$, by Theorem \ref{theorem3in2} $(vii)$, $\sigma_{\frac{\iu}{4}}^\Psi (e^{tQ})=e^{t\sigma_{\frac{\iu}{4}}^\Psi (Q)}\geq0$ implies $\Phi_t\in V_\Psi$, $t\in \mathbb{R}$.
	
	It follows from Lemma \ref{lemma7} that
	$$\Delta_{\Phi_t}^{\frac{1}{2}}Q\Phi_t=e^{tQ}\sigma_{-\frac{\iu}{2}}^\Psi (Q)e^{-tQ}\Phi_t.$$
	
	Again, by Lemma \ref{lemma6}, $\sigma_{z}^{\Phi_t} (Q)$ has an analytic continuation for $z\in\overline{\strip{\frac{1}{2}}}$ such that
	$$\left\|\sigma_{z}^{\Phi_t} (Q)\right\|\leq a=\max\left\{\|Q\|, e^{2t\|Q\|}\left\|\sigma_{-\frac{\iu}{2}}^{\Psi_t} (Q)\right\|\right\}.$$
	
	Now, choose $N\in \mathbb{N}$ such that $e^{\frac{a}{N}}-1\leq (4\log\gamma)^{-2}$ for a fixed $\gamma \in [0,\frac{1}{8}]$. Then,
	$$\left\|\sigma_{-\iu \gamma}^{\Phi_t} (e^{\frac{Q}{N}})\right\|=\left\|e^{\frac{\sigma_{-\iu \gamma}^{\Phi_t}(Q)}{N}}-1\right\|\leq e^{ \frac{\left\|\sigma_{-\iu \gamma}^{\Phi_t}(Q)\right\|}{N}} -1\leq (4\log\gamma)^{-2}.$$
	
	Now Lemma \ref{lemma5.3} guarantees the existence of $h_n\in \nalgebra$ for $n\leq N$ such that $\Phi_{\frac{n}{N}}=\Phi_{\frac{n-1}{N}}(h_n)$ with $h_n=h_n^\ast$. Then
	$$e^Q\Psi=\Phi_{\frac{N}{N}}=\Psi\left(\sum_{n=1}^{N}{h_n}\right).$$
\end{proof}

An interesting question one might ask now is ``now we know what kind of vector we can achieve by a perturbation, but is this set big (in some sense) in the Hilbert space?''. The answer to this question is presented below.

\begin{corollary}
	\label{PerturbationsareDense}
	Let $\nalgebra\subset B(\hilbert)$ be a von Neumann algebra, $\Psi\in\hilbert$ cyclic and separating and $\left\{\sigma^\Omega_t \right\}_{t\in\mathbb{R}}$ the modular automorphism group with respect to $\Psi$. The set 
	$$\left\{e^Q \Psi \ \middle| \ \begin{minipage}{10.5cm} $Q\in \nalgebra$, $J_\Psi Q\Psi=Q\Psi$ and \\ $\sigma_z(Q)$ has an analytic continuation for $ -\frac{1}{2}\leq\Im{z}\leq 0$\end{minipage}\right\}$$ is dense in $V_\Psi$.
\end{corollary}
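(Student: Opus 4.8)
The plan is to reduce the density statement to a norm approximation of the generating vectors of the self-dual cone, and then to exhibit an explicit realization of those vectors in the form $e^Q\Psi$. First I would record that by the preceding Proposition every $e^Q\Psi$ with $Q$ as in the statement equals $\Psi(h)$ for some self-adjoint $h\in\nalgebra$ and therefore already lies in $V_\Psi$; only density remains. By Definition \ref{DefCone}, $V_\Psi=V_\Psi^{1/4}=\overline{\{\Delta_\Psi^{1/4}A\Psi \mid A\in\nalgebra,\ A\geq 0\}}^{w}$, and since this set is convex its weak and norm closures coincide. Hence it suffices to approximate, in norm, each vector $\Delta_\Psi^{1/4}A\Psi$ with $A\geq 0$. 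Replacing $A$ by $A+\epsilon\mathbbm{1}$ only shifts the vector by $\epsilon\Psi$ (because $\Delta_\Psi\Psi=\Psi$, so $\Delta_\Psi^{1/4}\Psi=\Psi$), so I may assume $A$ positive and invertible.

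The key algebraic observation is a realization identity. Given a self-adjoint $K\in\nalgebra_\Psi$ (entire analytic for $\sigma^\Psi$), set $Q\defeq\sigma_{-\iu/4}^\Psi(K)$. Then $Q\in\nalgebra$, and $\sigma_z^\Psi(Q)=\sigma_{z-\iu/4}^\Psi(K)$ is entire, hence analytic on $-\tfrac12\leq\Im z\leq0$. Since $Q\Psi=\Delta_\Psi^{1/4}K\Psi$, the relations $J_\Psi\Delta_\Psi^{1/4}=\Delta_\Psi^{-1/4}J_\Psi$ (Proposition \ref{MOProp}) and $J_\Psi K\Psi=\Delta_\Psi^{1/2}K\Psi$, valid because $K=K^\ast$, give $J_\Psi Q\Psi=\Delta_\Psi^{-1/4}\Delta_\Psi^{1/2}K\Psi=\Delta_\Psi^{1/4}K\Psi=Q\Psi$; thus $Q$ meets all the constraints. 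As $\sigma_{-\iu/4}^\Psi$ is multiplicative on entire analytic elements, one obtains $e^Q\Psi=e^{\sigma_{-\iu/4}^\Psi(K)}\Psi=\sigma_{-\iu/4}^\Psi(e^K)\Psi=\Delta_\Psi^{1/4}e^K\Delta_\Psi^{-1/4}\Psi=\Delta_\Psi^{1/4}e^K\Psi$. Therefore the set in the statement contains every vector $\Delta_\Psi^{1/4}e^K\Psi$ with $K=K^\ast\in\nalgebra_\Psi$.

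It then remains to approximate the target $\Delta_\Psi^{1/4}A\Psi$ by such vectors. I would write the positive invertible $A$ as $A=e^K$ with $K\defeq\log A\in\nalgebra$ self-adjoint, and smooth it by $K_m\defeq\sqrt{m/\pi}\int_{\mathbb{R}}e^{-mt^2}\sigma_t^\Psi(K)\,dt$. As in Proposition \ref{AnalDense} these $K_m$ are entire analytic and self-adjoint, with $\|K_m\|\leq\|K\|$ and $K_m\to K$ strongly; consequently $e^{K_m}\to e^K=A$ strongly on bounded sets (uniform Weierstrass approximation of $\exp$ on $[-\|K\|,\|K\|]$). The only delicate point is that $\Delta_\Psi^{1/4}$ is unbounded, so strong convergence of $e^{K_m}\Psi$ to $A\Psi$ need not survive application of $\Delta_\Psi^{1/4}$. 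This is exactly where the spectral inequality $\lambda^{1/2}\leq 1+\lambda$ is used: for every $B\in\nalgebra$ one has $B\Psi\in\Dom{\Delta_\Psi^{1/2}}$ and
$$\|\Delta_\Psi^{1/4}B\Psi\|^2=\ip{\Delta_\Psi^{1/2}B\Psi}{B\Psi}\leq\|B\Psi\|^2+\|\Delta_\Psi^{1/2}B\Psi\|^2=\|B\Psi\|^2+\|B^\ast\Psi\|^2.$$
Applying this to the self-adjoint $B=e^{K_m}-A$ and using $\|(e^{K_m}-A)\Psi\|\to0$ gives $\Delta_\Psi^{1/4}e^{K_m}\Psi\to\Delta_\Psi^{1/4}A\Psi$ in norm. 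By the previous paragraph each $\Delta_\Psi^{1/4}e^{K_m}\Psi$ equals $e^{Q_m}\Psi$ with $Q_m=\sigma_{-\iu/4}^\Psi(K_m)$ admissible, so $\Delta_\Psi^{1/4}A\Psi$ lies in the closure of the set, which yields the claim.

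I expect the main obstacle to be precisely this passage of the unbounded modular operator $\Delta_\Psi^{1/4}$ through the limit; the displayed estimate (a quantitative version of the bound underlying Lemma \ref{lema3.13} and of item $(vi)$ in Theorem \ref{theorem3in2}) is what resolves it, reducing everything to the bounded, strong convergences $e^{K_m}\Psi\to A\Psi$ furnished by the Gaussian smoothing. The remaining verifications — multiplicativity of $\sigma_{-\iu/4}^\Psi$ on analytic elements, entirety of $e^{K_m}$, and the identity $J_\Psi K\Psi=\Delta_\Psi^{1/2}K\Psi$ for self-adjoint $K$ — are routine given the results already established.
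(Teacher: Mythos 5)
Your proof is correct and follows essentially the same route as the paper's: reduce to norm-approximating $\Delta_\Psi^{1/4}A\Psi$ for $A\geq 0$, make $A$ invertible, Gaussian-smooth $\log A$ into entire analytic self-adjoint elements, set $Q=\sigma^\Psi_{-\iu/4}$ of the smoothed element, and pass $\Delta_\Psi^{1/4}$ through the limit via the bound $\|\Delta_\Psi^{1/4}B\Psi\|^2\leq 2\|B\Psi\|^2$ for self-adjoint $B$ (the paper's equation \eqref{eq3.13}). The only cosmetic differences are that you regularize with $A+\epsilon\mathbbm{1}$ where the paper uses the spectral cutoff $A_L$, and you spell out the convexity and strong-convergence details the paper leaves implicit.
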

\begin{proof}
	Note first that the vectors of the form $\Delta^{\frac{1}{4}}_\Psi A \Psi$ with $A\geq0$ are dense in $V_\Psi$, by definition. Furthermore, by equation \eqref{eq3.13}, for any self-adjoint operator $B\in \nalgebra$ such that $A-B\in\Dom{\Delta_\Psi^ \frac{1}{4}}$, we have
	$$\left\|\Delta_\Psi^{\frac{1}{4}}(A-B)\Psi\right\|^2 \leq 2 \|(A-B)\Psi\|^2.$$
	
	Hence, it is enough to prove that, for any $A\in\nalgebra_+$, there is a sequence of vectors in the set defined above that converges to $\Delta^{\frac{1}{4}}_\Psi A \Psi$.
	
	Let $A\in \nalgebra_+$ and let $A=\int{\lambda dE_\lambda}$ be its spectral decomposition. We set
	$$\begin{aligned}
	A_L&=A\left(E_L-E_{\frac{1}{L}}\right)+\frac{1}{L}E_{\frac{1}{L}}+(1-E_L)\\
	A_{L,\beta}&=\sqrt{\frac{\pi}{n}}\int_{\mathbb{R}}{e^{-\frac{t^2}{\beta}\sigma^\Psi_{t}}(\log A_L)dt}.
	\end{aligned}$$
	We have already proved in Proposition \ref{AnalDense} that
	$$\lim_{\beta\rightarrow0}{e^{A_{L,\beta}}\Psi}=A_L\Psi.$$
	Hence
	$$\lim_{L\rightarrow\infty}{\lim_{\beta\rightarrow0}{e^{A_{L,\beta}}\Psi}}=A\Psi.$$
	Therefore $Q_{L,\beta}=\sigma^{\Psi}_{-\frac{\iu}{4}}(A_{L,\beta})$ satisfies
	$$\lim_{L\rightarrow\infty}{\lim_{\beta\rightarrow0}{e^{Q_{L,\beta}}\Psi}}=\Delta^{\frac{1}{4}}_\Psi A\Psi.$$
	In addition $J_\Psi Q_{L,\beta}\Psi = Q_{L,\beta} \Psi$ since $\sigma^{\Psi}_{-\frac{\iu}{4}}(Q_{L,\beta})=A_{L,\beta}=A_{L,\beta}^\ast$.
	
\end{proof}

The results presented in this section lead to the important result by Araki that we are goint to state below.

\begin{theorem}[Araki's Perturbation Theorem] \index{theorem! Araki's Perturbation}
Let $(\nalgebra,\tau)$ be a $W^\ast$-dynamical system, $\omega$ a $(\tau,\beta)$-KMS state, $\Omega_\omega$ its vector representation throughout the \mbox{GNS-representation}, $H_\omega\in\pi_\omega(\nalgebra)$ the hamiltonian of $\tau$ and $Q=Q^\ast\in \pi_\omega(\nalgebra)$ a perturbation. The perturbed dynamics is defined by
$$\tau_t^Q(A)=\pi_\omega^{-1}\left(e^{\iu t(H_\omega+Q)}\pi_\omega(A)e^{-\iu t(H_\omega+Q)}\right).$$
Then, $\Omega_\omega\in \Dom{e^{-\frac{\beta}{2}(H_\omega+Q)}}$ and $\omega^Q(A)=\ip{\Psi^Q}{\pi_\omega(A)\Psi^Q}$ is as $(\tau^Q,\beta)$-KMS state where $\Psi^Q=\frac{e^{-\frac{\beta}{2}(H_\omega+Q)}\Omega_\omega}{\|e^{-\frac{\beta}{2}(H_\omega+Q)}\Omega_\omega\|}$.
\end{theorem}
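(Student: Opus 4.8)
The plan is to reduce the statement to Araki's bounded modular perturbation and then exploit the expansional calculus of Section~\ref{BoundedPert}. First I would record the modular content of the hypothesis: since $\omega$ is $(\tau,\beta)$-KMS, the characterisation in Theorem~\ref{KMSequi} together with the uniqueness of the modular automorphism group (Theorem~\ref{TTT}) forces $\Delta_{\Omega_\omega}^{\iu t}=e^{-\iu\beta t H_\omega}$, so that $\Delta_{\Omega_\omega}^{1/2}=e^{-\frac{\beta}{2}H_\omega}$ and $\Delta_{\Omega_\omega}\Omega_\omega=\Omega_\omega$ (consistently with $\Psi^0=\Omega_\omega$). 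Writing $\Omega\defeq\Omega_\omega$, this identifies $-\tfrac{\beta}{2}(H_\omega+Q)=\tfrac12(\log\Delta_\Omega-\beta Q)$, so the vector to be controlled is exactly a bounded modular perturbation of $\Omega$ in the sense of Section~\ref{BoundedPert}.

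Next I would establish the domain assertion and the cone membership together. Applying the Duhamel identity \eqref{Duhamel} with $t=-\tfrac{\beta}{2}$, $B=H_\omega$, $A=Q$, and using $e^{-\frac{\beta}{2}H_\omega}\Omega=\Delta_\Omega^{1/2}\Omega=\Omega$, gives
\[
e^{-\frac{\beta}{2}(H_\omega+Q)}\Omega=Exp_r\!\left(\int_0^{-\beta/2};e^{sH_\omega}Qe^{-sH_\omega}\,ds\right)\Omega .
\]
For $Q$ entire analytic for $\{\sigma^\Omega_t\}$, the integrand applied to $\Omega$ is a power $\Delta_\Omega^{\lambda}Q\Omega$ with $\lambda\in[0,\tfrac12]$, which is norm-bounded on the compact $s$-interval by the Maximum Modulus Principle as in Lemma~\ref{lemma5.2}; hence the expansional converges absolutely, $\Omega\in\Dom{e^{-\frac{\beta}{2}(H_\omega+Q)}}$, and Theorem~\ref{theorem3in2}$(vii)$ (positivity of the $\sigma^\Omega_{\iu/4}$-continuation of the relevant exponential) places the resulting vector in the self-dual cone $V_\Omega$. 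The general bounded self-adjoint $Q$ is then reached by approximating it with modular-analytic elements $Q_n\to Q$ as in Proposition~\ref{AnalDense}: the uniform bounds of Lemma~\ref{lemma5.2} make $Q\mapsto e^{-\frac{\beta}{2}(H_\omega+Q)}\Omega$ continuous, and since $V_\Omega$ is weakly (hence strongly) closed, the limit $\Phi\defeq e^{-\frac{\beta}{2}(H_\omega+Q)}\Omega$ exists in $V_\Omega$. This yields $\Omega\in\Dom{e^{-\frac{\beta}{2}(H_\omega+Q)}}$ and puts $\Psi^Q=\Phi/\|\Phi\|$ in $V_\Omega$.

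I would then identify the modular data of $\Psi^Q$. Because $e^{-\frac{\beta}{2}(H_\omega+Q)}$ is injective with bounded inverse $e^{\frac{\beta}{2}(H_\omega+Q)}$, the vector $\Psi^Q$ is again cyclic and separating, and since $\Psi^Q\in V_\Omega$ the self-dual cone properties give $J_{\Psi^Q}=J_\Omega$ and $V_{\Psi^Q}=V_\Omega$. The crucial step is to show that the perturbed modular operator is the exponential of the perturbed Hamiltonian, i.e. $\Delta_{\Psi^Q}^{\iu t}=e^{-\iu\beta t(H_\omega+Q)}$, equivalently $\sigma^{\Psi^Q}_t=\tau^Q_{-\beta t}$. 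This I would obtain from Lemma~\ref{lemma7}, which transports the defining relation $\Delta^{1/2}_\Omega Q\Omega=(\cdots)\Omega$ to the vector $\Psi^Q$, combined with the cocycle and semigroup identities of Proposition~\ref{exponentials} and \eqref{Duhamel}, which express $e^{\iu t(H_\omega+Q)}$ as an expansional intertwining $\Delta_\Omega^{\iu t}$ with the modular-flowed perturbation; matching both sides on the dense set $\nalgebra\Omega$ yields the claimed intertwining.

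Finally I would conclude the KMS property: by Tomita--Takesaki the state $\omega^Q=\omega_{\Psi^Q}$ is a $(\sigma^{\Psi^Q},-1)$-KMS state on $\pi_\omega(\nalgebra)$, and feeding $\sigma^{\Psi^Q}_t=\tau^Q_{-\beta t}$ into the KMS equivalence of Theorem~\ref{KMSequi} shows that $\omega^Q$ is $(\tau^Q,\beta)$-KMS, as required. The main obstacle is the third paragraph: pinning down $\Delta_{\Psi^Q}=e^{-\beta(H_\omega+Q)}$ rigorously, since this forces one to control unbounded products of $\Delta_\Omega$-powers against the bounded $Q$; the uniform estimates of Lemma~\ref{lemma5.2} and the density argument of Corollary~\ref{PerturbationsareDense} are precisely what make both this identification and the domain claim for non-analytic $Q$ go through.
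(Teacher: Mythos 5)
Your skeleton (modular identification $\Delta_{\Omega}^{\iu t}=e^{-\iu\beta t H_\omega}$, expansional series, membership in the self-dual cone, identification of the perturbed modular operator, Tomita--Takesaki endgame) is the right one; note, for calibration, that the thesis itself does not prove this theorem at all --- it states it, defers to \cite{Araki73}, and only claims that the ingredients are assembled in Section \ref{BoundedPert}. Your first paragraph is correct, and your closing reduction (once $\sigma^{\Psi^Q}_t=\tau^Q_{-\beta t}$ is known, Theorem \ref{KMSequi} gives the $(\tau^Q,\beta)$-KMS property) is the standard endgame. The genuine gap is in your second paragraph, at the passage from entire analytic $Q$ to general bounded self-adjoint $Q$. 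The ``uniform bounds of Lemma \ref{lemma5.2}'' that you invoke do not exist for this purpose: that lemma addresses the inverse problem (writing a cone vector $Q\Psi\in V_\Psi$ as a perturbed vector $\Psi(h)$), and every estimate in it is phrased in terms of an analytic continuation $\|\sigma^\Psi_{-\iu\gamma}(Q)-\mathbbm{1}\|\leq L$, which presupposes that $Q$ is analytic in a strip. For the Gaussian regularizations $Q_n$ of Proposition \ref{AnalDense} one only has $\|\sigma^\Omega_{-\iu s}(Q_n)\|\leq e^{ns^2}\|Q\|$, so the bound your analytic-case argument produces for the $n$-th approximant, of the order $\exp\bigl(\tfrac{\beta}{2}\sup_{0\leq s\leq\beta/2}\|\sigma^\Omega_{-\iu s}(Q_n)\|\bigr)$, diverges as $n\to\infty$. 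Consequently neither the absolute convergence of the expansional nor the continuity of $Q\mapsto e^{-\frac{\beta}{2}(H_\omega+Q)}\Omega$ is uniform along the approximation, and the existence of the limit vector --- hence the domain claim $\Omega_\omega\in \Dom{e^{-\frac{\beta}{2}(H_\omega+Q)}}$, which is part of the theorem --- is not established.

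What is missing is precisely Araki's multiple-time bound: for \emph{arbitrary} bounded $Q$ and $s_j\geq 0$ with $\sum_j s_j\leq\tfrac12$, the vector $\Delta_\Omega^{s_n}Q\Delta_\Omega^{s_{n-1}}Q\cdots\Delta_\Omega^{s_1}Q\Omega$ is well defined, jointly analytic in the $s_j$, and of norm at most $\|Q\|^n$. This bound depends only on $\|Q\|$ and not on any analytic continuation of $Q$ itself; it is exactly the $p=1$, $q=\infty$ specialization of Theorem \ref{TR1} and Corollary \ref{CR1} (in the semifinite setting), or the corresponding theorem of \cite{Araki73} in general, and it is the analytic core that cannot be bypassed by a density argument. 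Once it is in hand, each term of the perturbation series is dominated by $\|Q\|^n$ times the simplex volume, the series converges for every bounded self-adjoint $Q$, and your cone and intertwining steps go through. A secondary flaw of the same nature: equation \eqref{Duhamel} is proved in the thesis only for bounded $A,B\in\nalgebra$, so applying it with $B=H_\omega$ unbounded is a formal manipulation; the rigorous route runs in the opposite direction --- define $\Psi^Q$ by the (now convergent) series and then verify, again via the multiple-time analyticity, that $\Omega_\omega$ lies in $\Dom{e^{-\frac{\beta}{2}(H_\omega+Q)}}$ with $e^{-\frac{\beta}{2}(H_\omega+Q)}\Omega_\omega$ equal to that series, after which Lemma \ref{lemma7} and the cocycle identities can be used as you indicate.
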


Just to clarify, the hamiltonian mentioned in the previous theorem is the generator of the unitary time-evolution operator of the $W^\ast$-dynamics on the represented algebra. The proof of the last theorem can be found in \cite{Araki73}, but all the ingredients of the proof are already here.

\section{Radon-Nikodym Derivative}

Takesaki says in \cite{Takesaki2003} that ``the theories of weights, traces and states are often referred as non commutative integration. If the von Neumann algebra in question is abelian, then our theory is precisely the theory of measures and integration''. Hence, it is inevitable to ask about the existence of analogous for useful results such as the Radon-Nikodym Theorem.

The first version of a noncommutative analogous for the Radon-Nikodym Theorem appears in \cite{Dye52} and \cite{Segal53} and the generalization we are about to present is from \cite{Pedersen73}.

It is worth to mention some interesting interpretations here. The Radon-Nikodym derivative can give us an operator in the algebra that connects two different weights, what is exactly what we are looking for in perturbation theory. From the physical point of view, the Radon-Nikodym derivative can be used to obtain a relative hamiltonian or a relative entropy (see \cite{Araki73} and \cite{Araki76}). On the other hand, the Connes cocycle and the modular automorphism group have strong connections with the Radon-Nikodym derivative, when it exists.


Before we present Radon-Nikodym-type theorems, we need Kadison's characterization of the extremal points of the unity ball in a $C^\ast$-algebra (see \cite{Kadison51}) and the polar decomposition of linear functionals.

\begin{theorem}
	Let $\calgebra$ be a $C^\ast$-algebra and $B_1=\{A \in calgebra \ | \ \|A\|\leq1\}$ its unitary ball. Then
	$\displaystyle \mathcal{E}(B_1)=\big\{U \in calgebra \ \big| \ U \textrm{ is a partial isometry and } \left(\mathbbm{1}-U U^\ast\right)\calgebra\left(\mathbbm{1}-U^\ast U\right)=\{0\} \big\}$\footnote{The definition of the set $\mathcal{E}(B_1)$ and the Krein-Milman Theorem can be found in Appendix \ref{AppKMT}.}.
\end{theorem}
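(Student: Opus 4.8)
The plan is to prove both inclusions. Adjoining a unit to $\calgebra$ if necessary (so that $\mathbbm{1}$, $\mathbbm{1}-UU^\ast$ and $\mathbbm{1}-U^\ast U$ make sense, and noting that $\mathcal{E}(B_1)$ is unaffected on elements of $\calgebra$), I would assume $\calgebra$ unital, disposing of the trivial case $U=0$ first. Throughout, whenever $U$ is a partial isometry I write $P=U^\ast U$ and $Q=UU^\ast$ and use the standard identities $UP=U$, $QU=U$, $U^\ast Q=U^\ast$, $PU^\ast=U^\ast$.

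For the sufficiency ($\supseteq$) direction, which I expect to be the cleaner one, suppose $U$ is a partial isometry with $(\mathbbm{1}-Q)\calgebra(\mathbbm{1}-P)=\{0\}$ and write $2U=A+B$ with $A,B\in B_1$. Put $D=A-U=U-B$, so that $A=U+D$, $B=U-D$, and I aim to show $D=0$. Since $\|U\pm D\|\le 1$ is equivalent to $(U\pm D)^\ast(U\pm D)\le\mathbbm{1}$, adding the two inequalities gives $U^\ast U+D^\ast D\le\mathbbm{1}$, i.e.\ $D^\ast D\le\mathbbm{1}-P$. Compressing by $P$ yields $0\le PD^\ast DP\le P(\mathbbm{1}-P)P=0$, so $(DP)^\ast(DP)=0$ and hence $DP=0$. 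Symmetrically, adding $(U\pm D)(U\pm D)^\ast\le\mathbbm{1}$ gives $DD^\ast\le\mathbbm{1}-Q$ and therefore $QD=0$. Thus $D=(\mathbbm{1}-Q)D(\mathbbm{1}-P)\in(\mathbbm{1}-Q)\calgebra(\mathbbm{1}-P)=\{0\}$, so $D=0$ and $U$ is extreme.

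For the necessity ($\subseteq$) direction I would argue in two steps. First, to see that an extreme $U$ is a partial isometry, set $A:=U^\ast U$ and work in the abelian algebra $C^\ast(A,\mathbbm{1})$ via functional calculus. If $A$ were not a projection then $\sigma(A)$ meets $(0,1)$; I would then pick a nonzero $\eta\ge 0$ in $C(\sigma(A))$, supported in a neighbourhood of some $\lambda\in\sigma(A)\cap(0,1)$ bounded away from $0$ and $1$ and small enough that $t(1\pm\eta(t))^2\le 1$ on $\sigma(A)$. With $h=\eta(A)=h^\ast$ one checks $\|U(\mathbbm{1}\pm h)\|^2=\|A(\mathbbm{1}\pm h)^2\|\le 1$ and $\|Uh\|^2=\|hAh\|>0$, so $U=\tfrac12\big(U(\mathbbm{1}+h)+U(\mathbbm{1}-h)\big)$ is a nontrivial convex combination in $B_1$, contradicting extremality; hence $U^\ast U$, and likewise $UU^\ast$, is a projection. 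Second, if the corner condition failed I would take $W\neq 0$ with $\|W\|\le 1$ and $W=(\mathbbm{1}-Q)W(\mathbbm{1}-P)$; the identities above give $U^\ast W=0=W^\ast U$, whence $(U\pm W)^\ast(U\pm W)=P+W^\ast W$ with $P$ and $W^\ast W=(\mathbbm{1}-P)W^\ast W(\mathbbm{1}-P)$ lying in orthogonal hereditary corners, so $\|U\pm W\|\le 1$ and $U=\tfrac12\big((U+W)+(U-W)\big)$ once more contradicts extremality.

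The genuinely delicate point I anticipate is the functional-calculus perturbation in the necessity direction: one must choose $\eta$ simultaneously making both $U(\mathbbm{1}\pm h)$ contractive and $Uh$ nonzero, which forces $\supp(\eta)$ to sit strictly inside $(0,1)\cap\sigma(A)$ and to be scaled so that the estimate $t(1\pm\eta(t))^2\le 1$ holds everywhere on $\sigma(A)$; that estimate is exactly what pins down the construction. The remaining manipulations—the additivity-of-inequalities trick and the corner identities—are routine once the partial-isometry relations are in hand.
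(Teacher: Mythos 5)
Your proof is correct, and it reaches the statement by a genuinely different route in the places where the paper's proof does the most work. For the necessity direction, both arguments coincide in spirit on the first step: you and the paper each force $\sigma(U^\ast U)\subset\{0,1\}$ with a functional-calculus bump supported strictly inside $(0,1)$ (the paper's normalization $0<f(\lambda)<\min\{\sqrt{\lambda},\sqrt{1-\lambda}\}$ and your estimate $t(1\pm\eta(t))^2\le 1$ are the same constraint), and then contradict extremality via $U=\tfrac12\bigl(U(\mathbbm{1}+h)+U(\mathbbm{1}-h)\bigr)$. For the corner condition, your computation $(U\pm W)^\ast(U\pm W)=P+W^\ast W$, with the two positive summands sitting in orthogonal corners, is the algebraic counterpart of the paper's spatial estimate $\|(U\pm A)x\|^2=\|U(Px)\|^2+\|A((\mathbbm{1}-P)x)\|^2\le\|x\|^2$; these are the same idea in different clothing. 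The substantive divergence is the sufficiency direction. The paper fixes $U=\tfrac12(A+B)$, exploits that $1$ is an extreme point of the closed unit disc together with equality in Cauchy--Schwarz (from $1=\ip{Ux}{Ux}=\tfrac12\ip{Ax}{Ux}+\tfrac12\ip{Bx}{Ux}$ for unit $x\in\Ran(P)$) to conclude $AP=BP=U$, and then rules out $QA(\mathbbm{1}-P)\neq 0$ by testing against the trigonometric vector $w=\cos(\theta)Py+\sin(\theta)\|(\mathbbm{1}-P)x\|^{-1}(\mathbbm{1}-P)x$, which would violate $\|QA\|\le 1$. You instead add the two operator inequalities $(U\pm D)^\ast(U\pm D)\le\mathbbm{1}$ to get $D^\ast D\le\mathbbm{1}-P$ (and dually $DD^\ast\le\mathbbm{1}-Q$), compress by $P$ and $Q$ to obtain $DP=0=QD$, and then the corner hypothesis annihilates $D=(\mathbbm{1}-Q)D(\mathbbm{1}-P)$. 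Your version is shorter, representation-free (the paper's proof tacitly invokes a faithful representation so that it may argue with Hilbert-space vectors), and replaces the ad hoc trigonometric construction by a two-line compression argument; the paper's version is more geometric and makes the role of strict convexity of the disc visible. Your opening hedge about adjoining a unit is at the same level of rigor as the paper, which also tacitly assumes $\mathbbm{1}\in\calgebra$ throughout.
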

\begin{proof}
	First of all, notice that $B_1$ is a non-empty WOT-compact set. Then, by the Krein-Milman Theorem, $\mathcal{E}(B_1)\neq \emptyset$.
	
	Let $U \in \mathcal{E}(B_1)$, then $U^\ast U$ is a self-adjoint operator and $\|U^\ast U\|\leq 1$, this implies that $\sigma(U^\ast U)\subset[0,1]$. Suppose $\lambda \in \sigma(U^\ast U)\setminus\{0,1\}$, then there exists $\epsilon>0$ such that \mbox{$(t-\epsilon,t+\epsilon)\subset[0,1]$} and we can find a positive infinitely differentiable function ${f: [0,1]\to [0,1]}$ such that $f(x)=0$ if $x \notin (t-\epsilon,t+\epsilon)$ and $0<f(\lambda)<\min\{\sqrt{\lambda},\sqrt{1-\lambda}\}$.
	
	As a result of this choice $f(U^\ast U)$ commutes with $U^\ast U$ and 
	$$\begin{aligned}
	\sigma\left(U^\ast U(\mathbbm{1}\pm f(U^\ast U))^2 \right)\subset [0,1]&\Rightarrow \|(\mathbbm{1}\pm f(U^\ast U))U^\ast U(\mathbbm{1}\pm f(U^\ast U))\|\leq 1\\
	&\Rightarrow \|U(\mathbbm{1}\pm f(U^\ast U))\|\leq 1\\
	&\Rightarrow U(\mathbbm{1}\pm f(U^\ast U))\in B_1,\\
	\end{aligned}$$
	but $U=\frac{1}{2}U(\mathbbm{1}+ f(U^\ast U))+\frac{1}{2}U(\mathbbm{1} -f(U^\ast U))$, which contradicts the extremality of $U$. Hence $\sigma(U^\ast U)\subset\{0,1\}$ and it follows that $U^\ast U$ is a projection, which means that $U$ is a partial isometry.
	
	Also, denote $P=U^\ast U$, $Q=UU^\ast$ and let $A\in \left(\mathbbm{1}-Q\right)\calgebra\left(\mathbbm{1}-P\right)$, $\|A\|\leq 1$, and $x \in \hilbert$,
	
	$$\begin{aligned}
	\|(U\pm A)x\|^2&=\|U(Px)\pm A\left((\mathbbm{1}-P)x\right)\|^2\\
	&=\|QU(Px)\pm (\mathbbm{1}-Q)A\left((\mathbbm{1}-P)x\right)\|^2\\
	&=\|V(Px)\|^2+\|A\left((\mathbbm{1}-P)x\right)\|^2\\
	&\leq \|x\|.
	\end{aligned}$$
	Hence $U\pm A \in B_1$, and by the extremality of $U$, $$U=\frac{1}{2}(U+A)+\frac{1}{2}(U-A)\Rightarrow A=0.$$
	
	On the other hand, suppose $U$ is a partial isometry such that $\left(\mathbbm{1}-Q\right)\calgebra\left(\mathbbm{1}-P\right)=\{0\}$. If $U=\frac{1}{2}(A+B)$ for some $A,B\in B_1$, then, for every $x \in \Ran(P)$ with $\|x\|=1$ we have $1=\ip{Ux}{Ux}=\frac{1}{2}\ip{Ax}{Ux}+\frac{1}{2}\ip{Bx}{Ux}$, but since both $\ip{Ax}{Ux}$ and $\ip{Bx}{Ux}$ are elements of the closed unit disk in $\mathbb{C}$ and $1$ is extremal in the disc, we have
	$$\ip{Ax}{Ux}=\ip{Bx}{Ux}=\ip{Ux}{Ux}=1\Rightarrow Ax=Bx=Ux, \forall x \in Ran(P).$$	
	
	We already have $AP=BP=U$. Lets now show that $QA(1-P)=QB(1-P)=0$. Suppose it is not the case, then we can take $z\in Ran(QA(\mathbbm{1}-P))\setminus\{0\}$ with $|z|$=1, which means there exists $x\in \hilbert$ such that $QA(\mathbbm{1}-P)x=z$. At the same time, since $z\in Ran(Q)$ and $Q$ is the final projection of $U$, there exists $y\in \hilbert$ such that $z=Uy=APy$. Notice that $1=\|z\|=\|QA(\mathbbm{1}-P)x\|\leq \|(\mathbbm{1}-P)x\|$ and $1=\|z\|=\|APy|=\|UPy\|=\|Py\|$.
	
	Take $\theta\in\left[0,\frac{\pi}{2}\right]$ such that $\cot(\theta)=\|(\mathbbm{1}-P)x\|$ and $w=\cos(\theta)Py+\frac{\sin(\theta)}{\|(\mathbbm{1}-P)x\|}(\mathbbm{1}-P)x$. Then,
	$$\|QAw\|=\left|\cos(\theta)+\tan(\theta)\sin(\theta)\right|\|z\|=\left(1+\frac{1}{\|(\mathbbm{1}-P)x\|^2}\right)^\frac{1}{2}>1, $$
	but this contradicts $\|QA\|\leq 1$. Then, $QA(\mathbbm{1}-P)=0$, thus, by hypothesis, we must have ${(\mathbbm{1}-Q)A(\mathbbm{1}-P)=0}$. Hence, $A(\mathbbm{1}-P)=0$ and, by the same argument, $B(\mathbbm{1}-P)=0$. It follows that $A=B=U$, thus $U$ is extremal.
	
\end{proof}

\begin{theorem}[Polar Decomposition of Functionals] \index{theorem! polar decomposition of functionals}
	Let $\nalgebra\subset B(\hilbert)$ be a von Neumann algebra and $\phi$ a WOT-continuous bounded functional on $\nalgebra$. Then, there exists a positive normal bounded functional $\psi$ on $\nalgebra$ and a partial isometry $U\in \nalgebra$, extreme in the unit ball, such that $\phi(A)=\psi(AU)$ and $\psi(A)=\phi(A U^\ast)$. 
\end{theorem}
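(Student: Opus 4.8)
The plan is to construct the positive functional $\psi$ and the partial isometry $U$ by exploiting the structure of the unit ball of $\nalgebra^\ast$ together with the just-proved Kadison characterization of its extreme points. First I would recall that, since $\phi$ is a WOT-continuous (hence normal) bounded functional, one may assume without loss of generality that $\|\phi\|=1$; the general case follows by rescaling. The key idea is that a norm-attaining argument in the predual will produce the desired partial isometry: I would consider the functional $\phi$ acting on the weak-$\ast$-compact unit ball $B_1$ of $\nalgebra$, and use that $|\phi|$ attains its supremum at some extreme point of $B_1$. By the previous theorem, such an extreme point is precisely a partial isometry $U$ with $(\mathbbm{1}-UU^\ast)\nalgebra(\mathbbm{1}-U^\ast U)=\{0\}$.

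Having fixed $U$ with $\phi(U)=\|\phi\|$, I would then \emph{define} $\psi$ by $\psi(A)=\phi(AU^\ast)$ and check that this functional is positive and normal with $\|\psi\|=\|\phi\|$. Normality is immediate because $A\mapsto AU^\ast$ is WOT-continuous and $\phi$ is normal, so the composition is normal. For positivity, the main computation is to show $\psi(A^\ast A)\geq 0$; the standard route here is a Cauchy--Schwarz-type estimate for the (not yet known to be positive) functional combined with the extremality condition on $U$. Concretely, I would write, for $A\in\nalgebra$ with $\|A\|\le 1$,
\begin{equation}
|\psi(A)|=|\phi(AU^\ast)|\le \|\phi\|=\phi(U)=\psi(\mathbbm{1}),
\end{equation}
so that $\psi$ is a functional of norm $\psi(\mathbbm{1})$, and then invoke Proposition \ref{PPC}: a bounded functional on a unital $C^\ast$-algebra satisfying $\|\psi\|=\psi(\mathbbm{1})$ is positive. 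The delicate point is that $U^\ast U$ need not be the identity, so one works with the unital reduced algebra $U^\ast U\,\nalgebra\,U^\ast U$, on which $U^\ast U$ acts as unit, and verifies that $\psi(U^\ast U)=\phi(U)=\|\phi\|=\|\psi\|$; the vanishing condition $(\mathbbm{1}-UU^\ast)\nalgebra(\mathbbm{1}-U^\ast U)=\{0\}$ is exactly what guarantees that $\psi$ is supported on this corner and hence that $\psi(A)=\psi(U^\ast U A U^\ast U)$.

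Once $\psi$ is shown to be positive and normal, I would close the loop by recovering the two stated identities. From $\psi(A)=\phi(AU^\ast)$ one gets $\psi(AU)=\phi(AUU^\ast)$, and using that $\phi$ is supported on the final projection $UU^\ast$ (again a consequence of extremality and the norm-attaining property, i.e. $\phi(A)=\phi(AUU^\ast)$ for all $A$), this collapses to $\psi(AU)=\phi(A)$, which is the first desired equation. The second, $\psi(A)=\phi(AU^\ast)$, is the definition itself.

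The main obstacle I anticipate is verifying positivity of $\psi$ rigorously, in particular establishing that $\phi$ really is supported on the corner determined by $U$, i.e. the equalities $\phi(A)=\phi(UU^\ast A)=\phi(AU^\ast U)$ needed to run the argument. This requires carefully combining the fact that $\phi$ attains its norm at $U$ with a perturbation argument: if $\phi$ failed to be supported on $UU^\ast U^\ast U$, one could perturb $U$ within $B_1$ to strictly increase $\mathrm{Re}\,\phi$, contradicting that $U$ is a norm-attaining extreme point. Making this perturbation argument precise — essentially an extremality/strict-convexity argument in the closed unit disc of $\mathbb{C}$, analogous to the final paragraph of the preceding theorem's proof — is where the real work lies; the remaining verifications are routine once the support property is in hand.
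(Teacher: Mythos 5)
Your overall route is the same as the paper's: attain the norm of $\phi$ at an extreme point of the WOT-compact unit ball, invoke the Kadison characterization to get a partial isometry, define $\psi$ by right multiplication, deduce positivity from Proposition \ref{PPC}, and finish with a perturbation argument showing $\phi$ lives on the corner determined by the isometry. However, as written, your central positivity step contains a genuine error of orientation. Having fixed $U$ with $\phi(U)=\|\phi\|$, you define $\psi(A)=\phi(AU^\ast)$ and then assert $\|\phi\|=\phi(U)=\psi(\mathbbm{1})$; but by your own definition $\psi(\mathbbm{1})=\phi(U^\ast)$, and there is no reason for $\phi(U^\ast)$ to equal $\phi(U)$ or $\|\phi\|$. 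Indeed your $\psi$ need not be positive at all: take $\nalgebra=\mathbb{C}$ and $\phi(a)=\iu a$, so the norm is attained at $U=-\iu$, and then $\psi(a)=\phi(a\overline{U})=-a$. The correct orientation is the paper's: define $\psi(A)=\phi(AW)$ with $W$ the norm-attaining extreme point, so that $\psi(\mathbbm{1})=\phi(W)=\|\phi\|\geq\|\psi\|$ and Proposition \ref{PPC} applies directly on the unital algebra $\nalgebra$ (your detour through the reduced corner $U^\ast U\,\nalgebra\,U^\ast U$ is unnecessary for positivity), and then the partial isometry of the statement is $U=W^\ast s^\nalgebra(\psi)$, for which $\phi(AU^\ast)=\phi(As^\nalgebra(\psi)W)=\psi(As^\nalgebra(\psi))=\psi(A)$ by the definition of the support.

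Second, the step you yourself flag as ``where the real work lies'' --- the support identity $\phi(A)=\phi(AUU^\ast)$, which gives $\phi(A)=\psi(AU)$ --- is left as a sketch, and it is exactly the part the paper carries out explicitly. The paper's computation: if $\phi\left(A(\mathbbm{1}-UU^\ast)\right)>0$ for some $\|A\|=1$, set $X_t=\cos(t)\,U^\ast+\sin(t)\,A(\mathbbm{1}-UU^\ast)$; the cross terms in $X_tX_t^\ast$ vanish because $(\mathbbm{1}-UU^\ast)U=0$ and $U^\ast(\mathbbm{1}-UU^\ast)=0$, so
\begin{equation*}
X_tX_t^\ast=\cos^2(t)\,U^\ast U+\sin^2(t)\,A(\mathbbm{1}-UU^\ast)A^\ast\leq\mathbbm{1},
\end{equation*}
hence $X_t$ lies in the unit ball, while $\phi(X_t)=\cos(t)\|\phi\|+\sin(t)\,\phi\left(A(\mathbbm{1}-UU^\ast)\right)>\|\phi\|$ for suitable small $t>0$, a contradiction. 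Until you supply this (or an equivalent) computation, the identity $\phi(A)=\psi(AU)$ is unproved; with it, and with the adjoint orientation corrected as above, your argument coincides with the paper's.
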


\begin{proof}
	The case $\phi$=0 is trivial. Suppose $\phi\neq 0$ and let 
	$$\mathcal{F}=\{A\in\nalgebra \ | \ \|A\|\leq 1 \textrm{ and } \phi(A)=\|\phi\|\},$$
	since the unit ball of $\nalgebra$, $B_1$, is WOT-compact, there exists $\tilde{V}\in B_1$ such that $|\phi(\tilde{V})|=\|\phi\|$, hence $\phi(V)=\|\phi\|$ for $V=e^{-\iu Arg(\phi(\tilde{V}))}\tilde{V}$. Hence $\mathcal{F}\neq \emptyset$.
	
	Now, $\mathcal{F}$ is a compact face in $B_1$, thus $\mathcal{E}(\mathcal{F})\subset \mathcal{E}(B_1)$. Let $W \in \mathcal{E}(\mathcal{F})$, then, $W$ is a partial isometry satisfying $(\mathbbm{1}-W W^\ast)\nalgebra (\mathbbm{1}-W^\ast W)=\{0\}$.
	
	Define $\psi$ by $\psi(A)=\phi(AW)$ for every $A\in \nalgebra$. Then, $$|\psi(A)|=\phi(AW)\leq \|\phi\|\|A\|\|W\|=\|\phi\|\|A\|\Rightarrow \|\psi\|\leq \|\phi\|,$$
	thus $\psi(\mathbbm{1})=\phi(W)=\|\phi\|=\|\psi\|$, it follows by Proposition \ref{PPC} that $\psi$ is positive.
	
	Let $s^\nalgebra(\psi)$ be the support of $\psi$, then $s^\nalgebra(\psi)\leq W W^\ast$, since $$\psi(W W^\ast)=\phi(WW^\ast W)=\phi(W)=\psi(\mathbbm{1}) \Rightarrow \mathbbm{1}-WW^\ast\in N_{\psi}$$ and $N_{\psi}$ is a left ideal.
	
	Let $U=W^\ast s^\nalgebra(\psi)$, we have $U^\ast U=s^\nalgebra(\psi)W W^\ast s^\nalgebra(\psi)=s^\nalgebra(\psi)$, then
	$$\phi(AU^\ast)=\phi\left(A s^\nalgebra(\psi) W\right)=\psi\left(As^\nalgebra(\psi)\right)=\psi(A).$$
	
	Suppose now that there exists $A\in\nalgebra$, $\|A\|=1$, such that 
	$\phi\left(A(\mathbbm{1}-U U^\ast )\right)>0 $. Then, for every $t \in \left[0,\frac{\pi}{2}\right]$ such that $\cot\left(\frac{t}{2}\right)>\frac{\|\phi\|}{\phi\left(A(\mathbbm{1}-U U^\ast )\right)}$, we have
	$$\begin{aligned}
	\phi\big(\cos(t) U^\ast +\sin(t) A(\mathbbm{1}-U U^\ast )\big)
	&=\cos(t)\phi(U^\ast)+\sin(t)\phi\big(A(\mathbbm{1}-U U^\ast )\big)\\
	&=\cos(t)\phi(W)+\sin(t)\phi\left(A(\mathbbm{1}-U U^\ast )\right)\\ 
	&=\cos(t)\|\phi\|+\sin(t)\phi\left(A(\mathbbm{1}-U U^\ast )\right)\\ &>\|\phi\|.\end{aligned}$$
	
	If follows that $\phi\left(A(\mathbbm{1}-U U^\ast )\right)=0\Rightarrow \phi(A)=\phi(AUU^\ast)=\psi(AU)$.
	
\end{proof}

\begin{notation}
	The $\psi$ obtained in the previous theorem is called the modulus of $\phi$ and usually denoted by $|\phi|$. The polar decomposition of $\phi$ can be written as $\phi=\hat{U}|\phi|$, where $U$ is a partial isometry.
\end{notation}

It may be worth to mention that the modulus of a normal function is unique and, if we require $U^\ast U =s^\nalgebra(\phi)$, so is $U$.

\begin{proposition}
	\label{selfadjointfunctional}
	Let $\nalgebra$ be a von Neumann algebra and let $\phi$ be a positive linear functional on $\nalgebra$. For $H\in\nalgebra$, define $\hat{H}$ by $\left(\hat{H}\right)\phi(A)=\phi(AH)$. Then, if $\hat{H}\phi$ is self-adjoint\footnote{See Definition \ref{defAdjoinfFunc}.}, $$\left|\left(\hat{H}\phi(A)\right)\right|=|\phi(AH)|\leq \|H\||\phi(A)|, \ \forall A\in\nalgebra.$$
\end{proposition}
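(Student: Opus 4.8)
The plan is to prove the estimate for $A\in\nalgebra_+$, the case in which the right-hand side $\|H\|\,|\phi(A)|=\|H\|\,\phi(A)$ is the relevant one. First I would unfold the hypothesis that $\hat{H}\phi$ is self-adjoint. Using Definition \ref{defAdjoinfFunc} together with the fact that a positive functional satisfies $\phi(X^\ast)=\overline{\phi(X)}$, the identity $(\hat H\phi)^\ast=\hat H\phi$ reads $\phi(AH)=\overline{\phi(A^\ast H)}=\phi(H^\ast A)$ for every $A\in\nalgebra$. Two consequences I would record at once: $\phi(AH)\in\mathbb{R}$ whenever $A=A^\ast$, and, by a one-line induction, the shifted symmetry $\phi(AH^n)=\phi((H^n)^\ast A)$ for all $n\in\mathbb{N}$ and all $A$.

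Next I would reduce the inequality to a single estimate. For $A\ge 0$ I write $A=A^{1/2}A^{1/2}$ and apply the Cauchy--Schwarz inequality for positive functionals (Proposition \ref{cauchyschwarz}) to $A^{1/2}$ and $A^{1/2}H$, obtaining
\[
|\phi(AH)|^2=|\phi(A^{1/2}\cdot A^{1/2}H)|^2\le \phi(A)\,\phi(H^\ast A H).
\]
Thus it suffices to prove the key bound $\phi(H^\ast A H)\le\|H\|^2\phi(A)$ for $A\ge 0$ (the case $\phi(A)=0$ being immediate from the displayed line). The naive operator estimate only yields $\phi(H^\ast AH)\le\|A\|\,\|H\|^2\phi(\mathbbm{1})$, so the self-adjointness hypothesis must enter in an essential way; this is the heart of the argument.

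To prove the key bound I would run a self-improving spectral-radius iteration. Set $\Theta_m(A)=\phi((H^m)^\ast A H^m)\ge 0$, so that $\Theta_1(A)=\phi(H^\ast AH)$. The shifted symmetry gives $\Theta_m(A)=\phi(A H^{2m})$, and Cauchy--Schwarz once more yields $\Theta_m(A)^2\le\phi(A)\,\Theta_{2m}(A)$. Taking $m=2^k$ and iterating (squaring and substituting, so that no division is needed) produces
\[
\Theta_1(A)^{2^k}\le \phi(A)^{2^k-1}\,\Theta_{2^k}(A),\qquad\text{so}\qquad \Theta_1(A)\le \phi(A)^{1-2^{-k}}\,\Theta_{2^k}(A)^{2^{-k}}.
\]
Passing to the GNS representation $(\pi_\phi,\hilbert_\phi,\xi)$ of $\phi$ (Proposition \ref{GNS}), I bound
\[
\Theta_{2^k}(A)=\ip{\pi_\phi(H^{2^k})\xi}{\pi_\phi(A)\pi_\phi(H^{2^k})\xi}\le \|A\|\,\|H\|^{2^{k+1}}\|\xi\|^2,
\]
using $\|\pi_\phi(\cdot)\|\le\|\cdot\|$ from Theorem \ref{nonexpensive}. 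Hence $\Theta_{2^k}(A)^{2^{-k}}\to\|H\|^2$ while $\phi(A)^{1-2^{-k}}\to\phi(A)$, and letting $k\to\infty$ gives exactly $\phi(H^\ast AH)\le\|H\|^2\phi(A)$. Combined with the reduction step this delivers $|\phi(AH)|\le\|H\|\,\phi(A)=\|H\|\,|\phi(A)|$.

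The step I expect to be the main obstacle is precisely the key bound $\phi(H^\ast AH)\le\|H\|^2\phi(A)$: it is genuinely false without the self-adjointness assumption, and no single use of Cauchy--Schwarz or of an operator inequality produces the factor $\|H\|^2$ cleanly (they leave an inert $\|A\|$). The iteration circumvents this by trading the unavailable factorization for an asymptotically sharp spectral-radius estimate, which is exactly where both the shifted symmetry and the norm control $\|\pi_\phi(H)\|\le\|H\|$ are used decisively.
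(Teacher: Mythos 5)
Your proof is correct and is essentially the paper's own argument: the paper likewise derives $\phi(AH)=\phi(H^\ast A)$ from self-adjointness, uses the doubling identity $\phi\left(AH^{2^{n+1}}\right)=\phi\left(\left(H^{2^n}\right)^\ast AH^{2^n}\right)$, and iterates Cauchy--Schwarz to obtain $|\phi(AH)|\leq|\phi(A)|^{1-2^{-n}}\left|\phi\left(AH^{2^n}\right)\right|^{2^{-n}}$, bounding the last factor by $\|\phi\|\,\|A\|\,\|H\|^{2^n}$ and letting $n\to\infty$. Your only reorganizations --- isolating the intermediate bound $\phi(H^\ast AH)\leq\|H\|^2\phi(A)$ before a single final Cauchy--Schwarz, and estimating the tail through the GNS representation rather than through $\|\phi\|=\phi(\mathbbm{1})$ --- are cosmetic, and, like the paper's proof, you establish the inequality only for $A\geq 0$, which is the case the paper actually uses later.
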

\begin{proof}
	By self-adjointness,
	$$\phi(AH)=\hat{H}\phi(A)=(\hat{H}\phi)^\ast(A)=\overline{\hat{H}\phi(A^\ast)}= \overline{\phi(A^\ast H)}=\phi(H^\ast A).$$
	
	Hence $\phi\left(A H^{2^{n+1}}\right)=\phi\left(\left(H^{2^n}\right)^\ast AH^{2^n}\right)$. Then, for every $A\geq 0$,
	$$\begin{aligned}
	|\phi(AH)|
	&=\left|\phi(A^\frac{1}{2}A^\frac{1}{2}H)\right|\\
	&\leq \left|\phi(A)\right|^\frac{1}{2}\left|\phi(H^\ast A H)\right|^\frac{1}{2}\\
	&= \left|\phi(A)\right|^\frac{1}{2}\left|\phi(A H^2)\right|^\frac{1}{2}\\
	&\leq \left|\phi(A)\right|^{\sum_{i=1}^{n} 2^{-i}}\left|\phi(A H^{2^n})\right|^\frac{1}{2^n}\\
	&=\left|\phi(A)\right|^{1-2^{-n}}\left|\phi(A H^{2^n})\right|^\frac{1}{2^n}\\
	&\leq\left|\phi(A)\right|^{1-2^{-n}}\left(\|\phi\|\|A\|\|H\|^{2^n}\right)^\frac{1}{2^n}\\
	&\xrightarrow{n\to\infty} \|H\|\left|\phi(A)\right|.\\
	\end{aligned}$$
	
\end{proof}

Next, we will present what can be seen as a prototype version of the desired theorem. The problem with the next is basically that the Radon-Nikodym derivative obtained does not lie in the von Neumann algebra, but in its commutant.

\begin{proposition}
	\label{commutantRN}
	Let $\nalgebra$ be a von Neumann algebra, $\phi, \psi$ be normal semi-finite weights on $\nalgebra$ such that $\psi\leq \phi$. Then, there exists an operator $H^\prime \in \pi_\phi(\nalgebra)^\prime$, $0\leq H^\prime\leq \mathbbm{1}$, such that
	$$\psi(A)=\ip{H^\prime \pi_\phi(A^\ast) \Phi}{\Phi}_\phi, \quad \forall A\in\nalgebra.$$
\end{proposition}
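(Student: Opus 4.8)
The plan is to build the operator $H^\prime$ directly inside the GNS space $\hilbert_\phi$ of $\phi$ (Proposition \ref{GNSweight}), working with the canonical map $\eta_\phi:\mathfrak{N}_\phi\to\hilbert_\phi$ whose image is dense and which satisfies $\ip{\eta_\phi(A)}{\eta_\phi(B)}_\phi=\phi(A^\ast B)$. First I would record that the hypothesis $\psi\leq\phi$ forces $\mathfrak{N}_\phi\subset\mathfrak{N}_\psi$ and $N_\phi\subset N_\psi$ (if $\phi(C^\ast C)=0$ then $\psi(C^\ast C)\leq\phi(C^\ast C)=0$), so that the recipe
$$s\big(\eta_\phi(A),\eta_\phi(B)\big)=\psi(A^\ast B),\qquad A,B\in\mathfrak{N}_\phi,$$
does not depend on the chosen representatives and therefore defines a genuine sesquilinear form on the dense subspace $\eta_\phi(\mathfrak{N}_\phi)$.

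The decisive step is boundedness. Applying the Cauchy--Schwarz inequality (cf. Proposition \ref{cauchyschwarz}) to the positive functional $\psi$ and then invoking $\psi\leq\phi$ gives
$$\big|s(\eta_\phi(A),\eta_\phi(B))\big|=|\psi(A^\ast B)|\leq \psi(A^\ast A)^{1/2}\psi(B^\ast B)^{1/2}\leq \phi(A^\ast A)^{1/2}\phi(B^\ast B)^{1/2}=\|\eta_\phi(A)\|\,\|\eta_\phi(B)\|.$$
Hence $s$ extends to a bounded sesquilinear form on all of $\hilbert_\phi$, and the Riesz representation theorem for bounded forms produces a unique $H^\prime\in B(\hilbert_\phi)$ with $\|H^\prime\|\leq1$ and $\ip{\eta_\phi(A)}{H^\prime\eta_\phi(B)}_\phi=\psi(A^\ast B)$. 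Putting $B=A$ yields $\ip{\eta_\phi(A)}{H^\prime\eta_\phi(A)}_\phi=\psi(A^\ast A)\geq0$ and $\psi(A^\ast A)\leq\phi(A^\ast A)=\|\eta_\phi(A)\|^2$, so by density $0\leq H^\prime\leq\mathbbm{1}$.

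Next I would check the commutation $H^\prime\in\pi_\phi(\nalgebra)^\prime$. Since $\pi_\phi(C)\eta_\phi(B)=\eta_\phi(CB)$ for $C\in\nalgebra$, a one-line comparison
$$\ip{\eta_\phi(A)}{H^\prime\pi_\phi(C)\eta_\phi(B)}_\phi=\psi(A^\ast CB)=\ip{\eta_\phi(C^\ast A)}{H^\prime\eta_\phi(B)}_\phi=\ip{\eta_\phi(A)}{\pi_\phi(C)H^\prime\eta_\phi(B)}_\phi$$
shows $H^\prime\pi_\phi(C)=\pi_\phi(C)H^\prime$ on a dense set, hence everywhere. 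Finally, writing $\Phi$ for the cyclic vector so that $\eta_\phi(A)=\pi_\phi(A)\Phi$ (obtained, exactly as in the proof of Proposition \ref{GNS}, via an approximate identity, i.e. $\Phi=\eta_\phi(\mathbbm{1})$), I would specialize $B=\mathbbm{1}$ and use that a positive functional is self-adjoint, $\overline{\psi(A^\ast)}=\psi(A)$, to obtain
$$\ip{H^\prime\pi_\phi(A^\ast)\Phi}{\Phi}_\phi=\overline{\ip{\Phi}{H^\prime\eta_\phi(A^\ast)}_\phi}=\overline{\psi(A^\ast)}=\psi(A),$$
which is the asserted identity.

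I expect the main obstacle to be the bookkeeping forced by $\phi,\psi$ being weights rather than states: one must be careful that the form lives on the correct left-ideal domain $\eta_\phi(\mathfrak{N}_\phi)$, cope with the possible absence of a unit by passing through the approximate identity when identifying $\Phi$, and read $\psi(A)$ on the appropriate domain $\mathfrak{M}_\psi$. The operator-theoretic heart — the boundedness estimate, Riesz representation, and the short commutation verification — is then entirely routine.
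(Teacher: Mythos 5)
Your proposal is correct and follows essentially the same route as the paper's proof: the bounded sesquilinear form $\psi(A^\ast B)$ on the dense image of $\mathfrak{N}_\phi$ (well-defined since $N_\phi\subset N_\psi$), the Cauchy--Schwarz estimate combined with $\psi\leq\phi$, the Riesz representation theorem yielding $0\leq H^\prime\leq\mathbbm{1}$, the dense-subspace commutation check giving $H^\prime\in\pi_\phi(\nalgebra)^\prime$, and the final identification through the approximate identity. The only cosmetic differences are the ordering convention in the form ($\psi(A^\ast B)$ versus the paper's $\psi(BA^\ast)$) and that the paper closes the argument with normality plus the polarization identity where you invoke self-adjointness of $\psi$ at $B=\mathbbm{1}$; both are equivalent bookkeeping for extending the identity from $\mathfrak{N}_\phi$ to all of $\nalgebra$.
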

\begin{proof}
	We will use the GNS-representation for the weight $\phi$. Notice that, by hypothesis, $\mathfrak{N}_\phi\subset \mathfrak{N}_\psi$ and $N_\phi\subset N_\psi$.
	
	Define the sesquilinear form $\tilde{\psi}: \mathfrak{N}_\phi/N_\phi\times \mathfrak{N}_\phi/N_\phi \to \mathbb{C}$ given by $\tilde{\psi}([A],[B])=\psi(B A^\ast)$ which is well defined by the same calculation presented in equation \eqref{eq:calculationquotient}.
	
	By Cauchy-Schwarz's inequality and the hypothesis, $$|\tilde{\psi}([A],[B])|^2=|\psi(BA^\ast )|^2\leq \psi(A^\ast A) \psi(B^\ast B)\leq \phi(A^\ast A) \phi(B^\ast B)=\|[A]\|_\phi^2 \|[B]\|_\phi^2.$$
	
	Hence $\tilde{\psi}$ admits a unique extension to a sesquilinear form on $\hilbert_\phi$, also denoted \mbox{by $\tilde{\psi}$}. By the Riesz Theorem, there exists a unique operator $H^\prime\in B(\hilbert_\phi)$ such that
	\begin{equation}
	\label{eq:riesz1}
	\tilde{\psi}(x,y)=\ip{H^\prime x}{y}_{\phi} \quad \forall x,y \in \hilbert_\phi.
	\end{equation}
	It also follows by the same theorem that $\|H^\prime \|\leq 1$ and, by positiveness of $\tilde{\psi}$, it follows that $H^\prime \geq 0$.
	
	In addition, for every $A,B\in\mathfrak{N}_\phi$ and $C\in\nalgebra$ we have
	$$\begin{aligned}
	\ip{(H^\prime \pi_\phi(C)-\pi_\phi(C)H^\prime )[A]}{[B]}_\phi&=\ip{H^\prime \pi_\phi(C)[A]}{[B]}_\phi-\ip{H^\prime [A]}{\pi_\phi(C)^\ast [B]}_\phi\\
	&=\ip{H^\prime [CA]}{[B]}_\phi-\ip{H^\prime [A]}{[C^\ast B]}_\phi\\
	&=\tilde{\psi}\left((CA)^\ast B\right)-\tilde{\psi}\left(A^\ast C^\ast B\right)\\
	&=0.
	\end{aligned}$$
	Thus, $(H^\prime \pi_\phi(C)-\pi_\phi(C)H^\prime )[A]=0 \quad \forall A\in \mathfrak{N}_\phi \Rightarrow H^\prime \pi_\phi(C)=\pi_\phi(C)H^\prime  \Rightarrow H^\prime \in \nalgebra^\prime$.
	
	Finally, equation \eqref{eq:riesz1}, in the special case $(E_\alpha)_\alpha \subset \mathfrak{N}_\phi$, is an approximation identity and $A\in \nalgebra_+$, can be rewritten as
	
	$$\psi(E_\alpha A E_\alpha)=\tilde{\psi}(\pi_\phi(A^\ast) [E_\alpha],[E_\alpha])=\ip{H^\prime \pi_\phi(A^\ast)[E_\alpha]}{[E_\alpha]}_{\phi}$$
	and the thesis follows by normality and the polarization identity.
	
\end{proof}
\begin{remark}
	It is common in the literature to express the previous proposition as
	$$\exists H^\prime \in \nalgebra^\prime \ \mid \ \psi(A)=\phi(H^\prime A), \quad \forall A\in \nalgebra_+.$$
	It is important to stress that this is an abuse of notation, because there is no reason for $H^\prime A \in \nalgebra$, since $H^\prime \in \nalgebra^\prime$. Nevertheless, the expression makes sense if it is seen as an extension of $\phi$ to $B(\hilbert)$
\end{remark}

The next theorem is a noncommutative version of the Radon-Nikodym Theorem. Notice that it is in the hypothesis that one weight must dominate the other one, which is exactly the absolutely continuous hypothesis in the commutative case. 

\begin{theorem}[Sakai-Radon-Nikodym] \index{theorem! Sakai-Radon-Nikodym}
	\label{TSRN}
	Let $\nalgebra$ be a von Neumann algebra, $\phi, \psi$ be normal functionals on $\nalgebra$ such that $\psi\leq \phi$. Then, there exists an operator $H\in \nalgebra$, $0\leq H\leq \mathbbm{1}$, such that
	$$\psi(A)=\phi(HAH), \quad \forall A\in \nalgebra_+.$$
\end{theorem}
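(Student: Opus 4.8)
The plan is to follow Sakai's variational argument, which has the advantage of producing a \emph{positive} $H$ directly and thus matches the symmetric sandwich form $\phi(HAH)$ demanded in the statement. (By contrast, combining Proposition \ref{commutantRN} with the natural cone and Theorem \ref{theorem3in2} only yields a vector $\xi = H\Omega \in V_\Omega^{1/4}$, hence an $H$ that need not be self-adjoint, giving $\psi = \phi(H^\ast A H)$; converting this to a positive operator does not preserve the sandwich form, so a genuinely different route is needed.) We may assume $\phi,\psi \ge 0$ with $0 \le \psi \le \phi$. The plan is to consider
$$\mathcal{K} = \left\{ H \in \nalgebra \ \middle| \ 0 \le H \le \mathbbm{1}, \ \phi(HAH) \le \psi(A) \ \forall A \in \nalgebra_+ \right\}.$$
First I would check that $\mathcal{K}$ is convex (for fixed $A \ge 0$ the map $H \mapsto \phi(HAH)$ is a nonnegative quadratic form in $H$, hence convex) and $\sigma$-weakly compact (it is a $\sigma$-weakly closed subset of the unit ball, which is compact by Banach--Alaoglu applied to the predual $\nalgebra_\ast$ from Definition \ref{DAvNA}). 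Since $\phi$ is normal, $H \mapsto \phi(H)$ is $\sigma$-weakly continuous, so it attains a maximum on $\mathcal{K}$ at some $H_0$. The constraints $0 \le H_0 \le \mathbbm{1}$ hold by construction, so it remains only to upgrade the inequality $\phi(H_0 A H_0) \le \psi(A)$ to an equality.

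The heart of the argument is to show that if $\rho \doteq \psi - \phi(H_0 \cdot H_0) \ge 0$ is nonzero, then $H_0$ is not maximal. Here I would invoke Lemma \ref{positiveneighbourhood}: since $\rho \ne 0$ there is $A_0 \ge 0$ with $\rho(A_0) > 0$, and, as $\rho \le \phi$ forces $\phi(A_0) > 0$, for small $\varepsilon > 0$ one has $\varepsilon\phi(A_0) < \rho(A_0)$; the lemma (applied to the normal positive functionals $\varepsilon\phi$ and $\rho$) then furnishes $B \in \nalgebra_+\setminus\{0\}$ with $\varepsilon\phi(C) < \rho(C)$ for every $0 < C \le B$. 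Passing to a spectral projection $e \ne 0$ of $B$ with $\delta e \le B$ and rescaling yields a uniform domination $\rho(eCe) \ge \varepsilon\phi(eCe)$ for all $C \in \nalgebra_+$. I would then perturb $H_0$ in the direction localized by $e$, setting $H_\lambda = H_0 + \lambda Z$ with $Z \ge 0$ built from $e$ and a factor $(\mathbbm{1}-H_0)^{1/2}$ (so that $Z \le \mathbbm{1}-H_0$ guarantees $H_\lambda \le \mathbbm{1}$ while $\phi(Z) > 0$ guarantees $\phi(H_\lambda) > \phi(H_0)$). Expanding gives $\phi(H_\lambda A H_\lambda) = \phi(H_0 A H_0) + \lambda\big(\phi(H_0 A Z) + \phi(Z A H_0)\big) + \lambda^2\phi(ZAZ)$, so feasibility of $H_\lambda$ for small $\lambda$ reduces to the functional inequality $\lambda\,C(A) + \lambda^2\phi(ZAZ) \le \rho(A)$, where $C(A) = 2\,\Re{\phi(H_0 A Z)}$.

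The main obstacle is precisely the control of the cross term $C$ and the quadratic term $\phi(ZAZ)$ \emph{uniformly} over all $A \ge 0$, since both may be large compared with $\rho(A)$ away from the support of $\rho$. The mechanism I would exploit is that the ``tight'' test elements, namely $A \ge 0$ with $\rho(A) = 0$, satisfy $As = sA = 0$ for $s = s^\nalgebra(\rho)$ (by faithfulness of $\rho$ on its support corner), so that once $Z$ is localized under $s$ the terms $C(A)$ and $\phi(ZAZ)$ vanish on exactly those $A$; off the tight set, the uniform domination $\rho \ge \varepsilon\phi$ obtained above, together with the Cauchy--Schwarz inequality of Proposition \ref{cauchyschwarz} (giving $|C(A)| \le 2\,\phi(H_0AH_0)^{1/2}\phi(ZAZ)^{1/2}$) and the estimate of Proposition \ref{selfadjointfunctional} for the self-adjoint functional $C$, should bound the left-hand side by $\rho(A)$ once $\lambda$ is small. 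Reconciling the two competing requirements on $Z$ — that it be localized in the corner where $\rho$ dominates $\phi$, and simultaneously be dominated by $\mathbbm{1}-H_0$ — is the delicate point of the proof; it is handled by a careful choice of the localizing projection, adapted at once to the support of $\rho$ and to $H_0$, and this is where I expect the main technical effort to lie. Once established, the strict increase $\phi(H_\lambda) > \phi(H_0)$ contradicts maximality, forcing $\rho = 0$ and hence $\psi(A) = \phi(H_0 A H_0)$ for all $A \in \nalgebra_+$, with $H = H_0$.
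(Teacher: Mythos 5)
Your route is genuinely different from the paper's, and it contains a real gap at its decisive step. The paper follows Sakai's original strategy: Proposition \ref{commutantRN} first gives $H^\prime\in\pi_\phi(\nalgebra)^\prime$, $0\leq H^\prime\leq\mathbbm{1}$, with $\psi(A)=\ip{\pi_\phi(A)H^\prime\Phi}{H^\prime\Phi}_\phi$; one then takes the polar decomposition $|\phi^\prime|=\widehat{U^{\prime\ast}}\phi^\prime$ of the normal functional $\phi^\prime(A^\prime)=\ip{A^\prime H^\prime\Phi}{\Phi}_\phi$ on the commutant, uses Proposition \ref{selfadjointfunctional} to get $|\phi^\prime|\leq\omega_\Phi$, applies Proposition \ref{commutantRN} a second time (now to $\pi_\phi(\nalgebra)^\prime$) to land $H\in\nalgebra^{\prime\prime}=\nalgebra$ with $|\phi^\prime|(A^\prime)=\ip{A^\prime H\Phi}{\Phi}_\phi$, and finishes with the vector identities $H^\prime\Phi=U^\prime H\Phi$ and $H\Phi=U^{\prime\ast}H^\prime\Phi$. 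No variational argument appears; in fact the paper's source contains, commented out inside this very proof, an abandoned attempt along exactly your extremal lines (perturbing a Krein--Milman extreme point of the set $\{K:\phi(KAK)\leq\psi(A)\}$ by $H\pm\epsilon P$), which breaks off precisely where your sketch turns vague. Your preparatory steps are sound: $\mathcal{K}$ is convex and $\sigma$-weakly compact (since $H\mapsto\phi(HAH)=\|\pi_\phi(A^{1/2}H)\Phi\|^2$ is $\sigma$-weakly lower semicontinuous on bounded sets), $\phi$ attains a maximum $H_0$ on it, the corner upgrade of Lemma \ref{positiveneighbourhood} to $\rho(eCe)\geq\varepsilon\phi(eCe)$ for all $C\in\nalgebra_+$ works, and $\rho(A)=0$, $A\geq0$, does force $As=sA=0$ for $s=s^\nalgebra(\rho)$.

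The gap is the feasibility of $H_\lambda=H_0+\lambda Z$, i.e.\ the \emph{uniform} inequality $2\lambda\,\Re{\phi(H_0AZ)}+\lambda^2\phi(ZAZ)\leq\rho(A)$ over all $A\in\nalgebra_+$ for one fixed $\lambda>0$, and neither of your two mechanisms can deliver it. Localizing $Z$ under $e\leq s$ annihilates the left-hand side only on the exact kernel $\{A\geq0:\rho(A)=0\}$; the dangerous directions have $\rho(A)$ small but positive, and there the corner domination compares the wrong quantities, because $\rho(eAe)\leq\rho(A)$ is \emph{false} for positive functionals. Concretely, in $M_2$ let $\rho=\mathrm{tr}(D\,\cdot)$ with $D=\begin{pmatrix}1&c\\c&1\end{pmatrix}$, $0<c<1$, let $e=e_{11}$ be the upper-left matrix unit, and let $A_t=\begin{pmatrix}t^2&-t\\-t&1\end{pmatrix}\geq0$; then $\rho(A_c)=1-c^2\to0$ as $c\to1$ while $\rho(eA_ce)=c^2$ (and $\mathrm{tr}(eA_ce)=c^2$) stay bounded below. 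So even the pure quadratic term with the simplest admissible $Z=\mu e$ would need $\lambda^2\mu^2\phi(eAe)\leq\rho(A)$ uniformly, and the inequalities actually at your disposal ($\rho\geq\varepsilon\phi$ on the $e$-corner, $\rho=\rho(s\,\cdot\,s)$) are consistent with its failure. More structurally, Cauchy--Schwarz bounds the cross term only by $2\phi(H_0AH_0)^{1/2}\phi(ZAZ)^{1/2}\leq2\psi(A)^{1/2}\phi(ZAZ)^{1/2}$, i.e.\ at best a quantity of order $\rho(A)^{1/2}$ near $\ker\rho$, and $\lambda\,c\,\rho(A)^{1/2}\leq\rho(A)$ cannot hold uniformly as $\rho(A)\to0^+$: the linear-versus-square-root mismatch is intrinsic. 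To close it you would need the \emph{linear} domination of the self-adjoint functional $A\mapsto\Re{\phi(H_0AZ)}$ by $\rho$ on $\nalgebra_+$, which is itself a Radon--Nikodym-type statement --- so the argument is circular exactly at the step you flagged as ``the main technical effort''; note also that it is not even clear a priori that the true Radon--Nikodym operator maximizes $\phi$ on $\mathcal{K}$. I recommend switching to the paper's two-fold commutant argument.
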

\begin{proof}
	
	By Theorem \ref{commutantRN}, there exists $H^\prime\in\pi_\phi(\nalgebra)^\prime$, $0\leq H^\prime\leq \mathbbm{1}$, such that
	$$\psi(A)=\ip{\pi_{\phi}(A) H^\prime\Phi}{H^\prime\Phi}_\phi, \ \forall A\in\nalgebra_+.$$
	Consider the WOT-continuous functional $\phi^\prime: \pi_\phi(\nalgebra)^\prime \to \mathbb{C}$ given by $\phi^\prime(A^\prime)=\ip{A^\prime H^\prime \Phi}{\Phi}_\phi$.
	
	Let $|\phi^\prime|=\widehat{U^{\prime\ast}}\phi^\prime$ be the polar decomposition of $\phi^\prime$, then $$|\phi^\prime|=\widehat{U^{\prime\ast} H^\prime}\omega,$$  where ${\omega(A^\prime)=\ip{A^\prime\Phi}{\Phi}_\phi}$ and, by Proposition \ref{selfadjointfunctional}, $$|\phi^\prime|(A)=\widehat{U^{\prime\ast} H^\prime}\omega(A)\leq \|U^\ast H^\prime\| \omega(A)\leq \|U^{\prime\ast}\|\|H^\prime\|\omega(A)\leq \omega(A).$$
	
	Using now the Proposition \ref{commutantRN} for $|\phi^\prime|$ and $\omega$, there exists $H\in \nalgebra^{\prime\prime}=\nalgebra$, $0\leq H\leq \mathbbm{1}$ such that
	$$|\phi^\prime|(A^\prime)=\ip{A^\prime H\Phi}{\Phi}_\phi.$$
	
	Now, we have all the elements we will need to conclude the proof. It just remains to do some calculations.
	
	Notice that, for every $A^\prime \in \pi_\phi(\nalgebra)^\prime, $ $$\begin{aligned}
	\ip{H^\prime \Phi}{A^{\prime}\Phi}_\phi&=\ip{A^{\prime\ast} H^\prime \Phi}{\Phi}_\phi\\
	&=\phi^\prime(A^{\prime\ast})\\
	&=\widehat{U^\prime}|\phi^\prime|(A^{\prime\ast})\\
	&=\widehat{U^\prime}\ip{A^{\prime\ast} H \Phi}{\Phi}_\phi\\
	&=\ip{A^{\prime\ast} H U^\prime\Phi}{\Phi}_\phi\\
	&=\ip{U^\prime H \Phi}{A^{\prime}\Phi}_\phi \ ,\\
	\end{aligned}$$	from which, since $\pi_\phi(\nalgebra)^\prime\Phi$ is dense in $\hilbert$, we conclude $H^\prime\Phi=U^\prime H\Phi$.
	
	Moreover, for every $A^\prime \in\pi_\phi(\nalgebra)^\prime$,
	$$\begin{aligned}
	\ip{ H\Phi}{A^{\prime}\Phi}_\phi&=\ip{A^{\prime\ast} H\Phi}{\Phi}_\phi\\
	&=|\phi^\prime|(A^{\prime\ast})\\
	&=\widehat{U^{\prime\ast}}\phi^\prime(A^{\prime\ast})\\
	&=\ip{A^{\prime\ast} H^\prime U^{\prime\ast}\Phi}{\Phi}_\phi\\
	&=\ip{ U^{\prime\ast}H^\prime\Phi}{A^{\prime}\Phi}_\phi\\
	\end{aligned}$$
	and, again by the density argument, we conclude that $H\Phi=U^{\prime\ast}H^\prime\Phi$.
	
	Finally, 
	$$\begin{aligned}
	\psi(A)&=\ip{A H^\prime \Phi}{H^\prime\Phi}_\phi\\
	&=\ip{A H^\prime \Phi}{U^\prime H\Phi}_\phi\\
	&=\ip{HA U^{\prime\ast}H^\prime \Phi}{\Phi}_\phi\\
	&=\ip{HA H \Phi}{\Phi}_\phi\\
	&=\phi(HAH) \quad \forall A\in\nalgebra_+.\\
	\end{aligned}$$

\end{proof}

Using the GNS-representation, we can generalise Theorem \ref{TSRN} for weights. Although it is just a straightforward result, the following proof is due to the author.

\begin{theorem}[Sakai-Radon-Nikodym for Weights]\index{theorem! Sakai-Radon-Nikodym}
	\label{TSRNweights}
	Let $\nalgebra$ be a von Neumann algebra, $\phi, \psi$ be normal semi-finite weights on $\nalgebra$ such that $\psi\leq \phi$. Then, there exists an operator $H\in \nalgebra$, $0\leq H\leq \mathbbm{1}$, such that
	$$\psi(A)=\phi(HAH) \quad \forall A\in \nalgebra_+.$$
\end{theorem}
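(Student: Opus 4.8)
The plan is to reduce the statement to the already-established functional version, Theorem \ref{TSRN}, by passing to the GNS representation of $\phi$. First I would dispose of faithfulness: since $\psi \leq \phi$ forces $\mathfrak{N}_\phi \subseteq \mathfrak{N}_\psi$ and $N_\phi \subseteq N_\psi$, both weights are carried by the support projection $s = s^\nalgebra(\phi)$, and on the reduced algebra $s\nalgebra s$ the weight $\phi$ is faithful, normal and semifinite; an operator $H$ found there with $H = sHs$ extends the desired identity to all of $\nalgebra_+$ because $\phi(B) = \phi(sBs)$. With $\phi$ faithful I would invoke the GNS construction for weights (Proposition \ref{GNSweight}) to realize $\nalgebra$ concretely on $\hilbert_\phi$ together with a cyclic and separating vector $\Phi$ satisfying $\phi(A) = \ip{\Phi}{\pi_\phi(A)\Phi}_\phi$ for every $A \in \mathfrak{M}_\phi$; I henceforth identify $\nalgebra$ with $\pi_\phi(\nalgebra)$.

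Once this picture is in place, the proof of Theorem \ref{TSRN} transcribes almost verbatim, the crucial observation being that every functional occurring in it is a bounded vector functional on the commutant, while the only place where a weight genuinely intervenes is the commutant Radon--Nikodym step, which is already stated for weights. Concretely, I would apply Proposition \ref{commutantRN} to $\psi \leq \phi$ to obtain $H^\prime \in \nalgebra^\prime$ with $0 \leq H^\prime \leq \mathbbm{1}$ and $\psi(A) = \ip{\pi_\phi(A) H^\prime \Phi}{H^\prime \Phi}_\phi$ for $A \in \nalgebra_+$. I then form the WOT-continuous functional $\phi^\prime(A^\prime) = \ip{A^\prime H^\prime \Phi}{\Phi}_\phi$ on $\nalgebra^\prime$, take its polar decomposition $\phi^\prime = \widehat{U^\prime}\,|\phi^\prime|$, note that $|\phi^\prime| \leq \omega$ where $\omega(A^\prime) = \ip{A^\prime \Phi}{\Phi}_\phi$ (Proposition \ref{selfadjointfunctional}), and apply Proposition \ref{commutantRN} a second time, now to the pair $|\phi^\prime| \leq \omega$ of functionals on $\nalgebra^\prime$, whose commutant is $\nalgebra^{\prime\prime} = \nalgebra$. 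This yields $H \in \nalgebra$, $0 \leq H \leq \mathbbm{1}$, with $|\phi^\prime|(A^\prime) = \ip{A^\prime H \Phi}{\Phi}_\phi$.

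To finish I would run exactly the chain of identities in Theorem \ref{TSRN}: using that $\nalgebra^\prime \Phi$ is dense in $\hilbert_\phi$ (cyclicity of $\Phi$ for $\nalgebra^\prime$, equivalently $\Phi$ separating for $\nalgebra$) one extracts $H^\prime \Phi = U^\prime H \Phi$ and $H \Phi = U^{\prime\ast} H^\prime \Phi$, and then computes $\psi(A) = \ip{A H^\prime \Phi}{H^\prime \Phi}_\phi = \ip{H A U^{\prime\ast} H^\prime \Phi}{\Phi}_\phi = \ip{H A H \Phi}{\Phi}_\phi = \phi(HAH)$ for all $A \in \mathfrak{M}_{\phi+}$; the passage from $\mathfrak{M}_{\phi+}$ to all of $\nalgebra_+$ is secured by normality, approximating an arbitrary positive element by an increasing net in $\mathfrak{F}_\phi$ and using that both sides are normal in $A$.

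The one step carrying real content is the reduction itself, namely producing a cyclic and separating vector for the GNS representation of a densely defined, possibly unbounded weight and controlling the non-faithful case through the support projection; I expect this to be the main obstacle, since it is precisely here that the move from functionals to weights is not merely formal. After that the argument is a transcription, because $\psi$ enters only through Proposition \ref{commutantRN} and all remaining objects are honest bounded vector functionals to which the functional polar-decomposition and commutant machinery apply unchanged.
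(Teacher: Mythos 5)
Your proposal is correct and takes essentially the same route as the paper: pass to the GNS representation of $\phi$ (Proposition \ref{GNSweight}), reduce the statement to the bounded-functional Sakai--Radon--Nikodym situation on $\pi_\phi(\nalgebra)$, and extend to all of $\nalgebra_+$ by semifiniteness and normality. The only differences are cosmetic — the paper invokes Theorem \ref{TSRN} as a black box applied to the induced vector functionals, whereas you transcribe its proof (two applications of Proposition \ref{commutantRN} and the polar decomposition of functionals), and your preliminary support-projection reduction to the faithful case is a harmless addition the paper omits.
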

\begin{proof}
	As in Proposition \ref{commutantRN}, there exists a unique sesquilinear form $\tilde{\psi}:\hilbert_\phi \times \hilbert_\phi\to \mathbb{C}$ such that $\tilde{\psi}([A],[B])=\psi(A^\ast B)$ for all $A,B\in\mathfrak{N}_\phi$.
	
	Then, by Theorem \ref{TSRN}, there exists a unique $H\in\pi_\phi\left(\nalgebra\right)_+$, $0\leq H\leq \mathbbm{1}$, such that $\phi(A^\ast A)=\tilde{\psi}([A],[A])=\ip{A^\ast A H\Phi}{H\Phi}=\phi(\pi^{-1}_\phi(H)A^\ast A \pi^{-1}_\phi(H)), \ A\in\mathfrak{N}_\phi$, and the statement of the theorem follows by semi-finiteness and normality.
	
\end{proof}

Now, we will start studying the modular condition because we would like to have not only a noncommutative version of the Radon-Nikodym Theorem, but also an unbounded noncommutative version and it was noticed by Pedersen and Takesaki that the modular automorphism group is the key to do this. 

\begin{definition}
	\label{ModularCondition}
	\index{modular! condition}
	Let $\calgebra$ be a $C^\ast$-algebra and let $\{\sigma_t\}_{t\in\mathbb{R}}$ be a one-parameter group of automorphisms of $\calgebra$. A lower semi-continuous weight $\phi$ is said to satisfy the modular condition for $\{\sigma_t\}_{t\in\mathbb{R}}$ if
	\begin{enumerate}[(i)]
		\item $\phi=\phi\circ\sigma_t$ for every $t\in\mathbb{R}$;
		\item for every $A,B\in \mathfrak{N}_\phi\cap \mathfrak{N}_\phi^\ast$, there exists a complex function $F_{A,B}$ which is analytic on the strip $\strip{1}=\left\{z\in \mathbb{C} \mid 0< \Im z<1\right\}$ and continuous and bounded on its closure satisfying
		\begin{equation}
		\label{eq:ModCond}
		\begin{aligned}
		F_{A,B}(t) &= \phi(A\sigma_t(B)) \quad \forall t\in \mathbb{R}, \\
		F_{A,B}(t+\iu) &= \phi(\sigma_t(B)A) \quad \forall t\in \mathbb{R}. \\
		\end{aligned}
		\end{equation}
	\end{enumerate}
\end{definition}

It is evident that equation \eqref{eq:ModCond} and equation \eqref{eqKMS2} are basically the same for $\beta=-1$. Together which Lemma \ref{tauInv} it becomes evident that the modular condition is the KMS condition for $\beta=-1$ or, equivalently, that $\phi$ is a $\left(\{\sigma_t\}_{t\in\mathbb{R}},\beta\right)$-KMS state for $\sigma_t=\tau^\phi_{-\beta t}$. Of course we will suppose known all results we have already proved for KMS states.

The inconvenience in the sign of $\beta$ is just a consequence of the difference between mathematicians and physicists convention on the sign of the modular automorphism group (for mathematicians) and time evolution operator (for physicists). 

\begin{proposition}
	\label{twosidedmodule}
	Let $\phi$ be a normal semifinite weight on a von Neumann \mbox{algebra $\nalgebra$}, $\{\tau^\phi_t\}_{t\in \mathbb{R}}$ its modular automorphism group and $\nanalytic$ the set of all entire $\{\tau^\phi_t\}_{t\in \mathbb{R}}$-analytic elements of $\nalgebra$. Then
	\begin{enumerate}[(i)]
		\item $\mathfrak{N}_\phi\cap \mathfrak{N}_\phi^\ast$ is a two-sided module over $\nanalytic$; 
		\item $\mathfrak{M}_\phi$ is a two-sided module over $\nanalytic$.
	\end{enumerate}
\end{proposition}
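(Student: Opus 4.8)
The plan is to reduce both assertions to a single core statement and then prove that statement by transporting the right multiplication into the commutant via Tomita--Takesaki theory. The core statement is: \textbf{if $A\in\mathfrak{N}_\phi$ and $Y\in\nanalytic$, then $AY\in\mathfrak{N}_\phi$.} Everything else is bookkeeping. Indeed, left multiplication is free: $\mathfrak{N}_\phi$ is a left ideal by Proposition \ref{simplepropweights}$(iii)$, so $XA\in\mathfrak{N}_\phi$ for any $X\in\nalgebra$. Moreover $\nanalytic$ is a $\ast$-subalgebra (if $Y$ is entire analytic then so is $Y^\ast$, via $\tau^\phi_z(Y^\ast)=\tau^\phi_{\bar z}(Y)^\ast$). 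Using these two facts together with the core statement, for $(i)$ I would check that $\mathfrak{N}_\phi\cap\mathfrak{N}_\phi^\ast$ is stable under $\nanalytic$ on both sides: $XA\in\mathfrak{N}_\phi$ is the left-ideal property, while $XA\in\mathfrak{N}_\phi^\ast$ amounts to $A^\ast X^\ast\in\mathfrak{N}_\phi$, which is the core statement applied to $A^\ast\in\mathfrak{N}_\phi$ and $X^\ast\in\nanalytic$ (and symmetrically for $AX$). For $(ii)$ I would use $\mathfrak{M}_\phi=\mathfrak{N}_\phi^\ast\mathfrak{N}_\phi$ (Proposition \ref{simplepropweights}$(iv)$): writing a typical element as $B^\ast C$ with $B,C\in\mathfrak{N}_\phi$, left stability follows from $X B^\ast=(BX^\ast)^\ast\in\mathfrak{N}_\phi^\ast$ (core statement on $B,X^\ast$) times $C\in\mathfrak{N}_\phi$, and right stability from $B^\ast(CX)$ with $CX\in\mathfrak{N}_\phi$ (core statement on $C,X$).

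For the core statement I would pass to the GNS construction of Proposition \ref{GNSweight}, writing $\eta_\phi(A)=A+N_\phi\in\hilbert_\phi$ for $A\in\mathfrak{N}_\phi$, and let $\Delta_\phi,J_\phi$ be the modular operator and conjugation attached to the (cyclic and separating) GNS vector, so that $\tau^\phi_t(X)=\Delta_\phi^{\iu t}X\Delta_\phi^{-\iu t}$ in the sense of Definition \ref{def:modularautomosphism}. The idea is that right multiplication by an analytic element, though unbounded on $\nalgebra$, is implemented on $\hilbert_\phi$ by a \emph{bounded} operator lying in the commutant. Concretely I would set
$$\pi_\phi'(Y):=J_\phi\,\pi_\phi\!\big(\tau^\phi_{-\iu/2}(Y^\ast)\big)\,J_\phi,$$
which makes sense because $Y^\ast\in\nanalytic$ forces $\tau^\phi_{-\iu/2}(Y^\ast)\in\nalgebra$, and which lies in $\nalgebra'$ by the Tomita--Takesaki Theorem \ref{TTT} (since $J_\phi\pi_\phi(\nalgebra)J_\phi=\pi_\phi(\nalgebra)'$). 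The shift $-\iu/2$ is the correct one: a formal computation with $S=J_\phi\Delta_\phi^{1/2}$, using $S\,\eta_\phi(A)=\eta_\phi(A^\ast)$, $\Delta_\phi^{1/2}X=\tau^\phi_{-\iu/2}(X)\Delta_\phi^{1/2}$, and $\Delta_\phi^{1/2}J_\phi=J_\phi\Delta_\phi^{-1/2}$, yields $S\,\pi_\phi(Y^\ast)\,S=\pi_\phi'(Y)$, which is exactly the identity forcing $\pi_\phi'(Y)$ to realise right multiplication by $Y$. Once I establish
$$\pi_\phi'(Y)\,\eta_\phi(A)=\eta_\phi(AY)\qquad(A\in\mathfrak{N}_\phi),$$
boundedness of $\pi_\phi'(Y)$ gives $\phi\big((AY)^\ast AY\big)=\big\|\pi_\phi'(Y)\eta_\phi(A)\big\|^2\le\big\|\tau^\phi_{-\iu/2}(Y^\ast)\big\|^2\,\phi(A^\ast A)<\infty$, i.e. $AY\in\mathfrak{N}_\phi$, which is the core statement (and even gives a norm bound).

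The main obstacle is precisely making the displayed identity $\pi_\phi'(Y)\eta_\phi(A)=\eta_\phi(AY)$ rigorous, since the right-hand side only manifestly makes sense once $AY\in\mathfrak{N}_\phi$ is known. My plan is to verify it first on the Tomita algebra $\mathfrak{T}_\phi$ of elements that are simultaneously in $\mathfrak{N}_\phi\cap\mathfrak{N}_\phi^\ast$ and entire analytic for $\tau^\phi$, where all the analytic continuations and the identity $S\,\pi_\phi(Y^\ast)\,S=\pi_\phi'(Y)$ hold on a common invariant core, and then to extend to arbitrary $A\in\mathfrak{N}_\phi$ by approximation: choosing a net $A_j\in\mathfrak{T}_\phi$ with $A_j\to A$ $\sigma$-strongly and $\eta_\phi(A_j)\to\eta_\phi(A)$, one has $A_jY\to AY$ and $\eta_\phi(A_jY)=\pi_\phi'(Y)\eta_\phi(A_j)\to\pi_\phi'(Y)\eta_\phi(A)$, whence the closedness of the GNS map $\eta_\phi$ (equivalently, lower semicontinuity/normality of $\phi$) yields $AY\in\mathfrak{N}_\phi$ with the desired identity. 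I would also record at the outset the harmless reduction that, by restricting to the support projection $s^\nalgebra(\phi)$, one may assume $\phi$ faithful normal semifinite so that the modular objects $\Delta_\phi,J_\phi,\tau^\phi$ are available in the form used above. The delicate points are thus the core/closedness bookkeeping and the domain juggling behind $S\,\pi_\phi(Y^\ast)\,S=\pi_\phi'(Y)$, rather than any genuinely new estimate.
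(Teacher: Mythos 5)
Your proposal is correct in strategy, but it takes a genuinely different route from the paper's proof, so a comparison is worthwhile. The paper never leaves the level of the weight: for $N\in\mathfrak{N}_\phi\cap\mathfrak{N}_\phi^\ast$ and $A\in\nanalytic$ it invokes the modular condition (Definition \ref{ModularCondition}) to obtain an analytic function with boundary values $\phi\bigl(ANN^\ast\tau^\phi_t(A^\ast)\bigr)$ and $\phi\bigl(\tau^\phi_t(A^\ast)ANN^\ast\bigr)$, uses entire analyticity of $A^\ast$ to write $\phi\bigl(ANN^\ast A^\ast\bigr)=\phi\bigl(\tau^\phi_{-\iu}(A^\ast)ANN^\ast\bigr)$, and then bounds this via Cauchy--Schwarz and $X^\ast DX\leq\|D\|X^\ast X$ by $\|A\|\,\bigl\|\tau^\phi_{-\iu}(A^\ast)\bigr\|\,\phi(N^\ast N)^{1/2}\phi(NN^\ast)^{1/2}$; item $(ii)$ is then reduced to $(i)$ much as you do. You instead pass to the GNS representation and implement right multiplication by the bounded commutant operator $J_\phi\pi_\phi\bigl(\tau^\phi_{-\iu/2}(Y^\ast)\bigr)J_\phi$; your formula and norm bound are correct under the paper's convention $\tau^\phi_z(\cdot)=\Delta_\phi^{\iu z}\cdot\Delta_\phi^{-\iu z}$, and your core lemma is essentially the classical Pedersen--Takesaki lemma (cf.\ \cite{Pedersen73}, which the paper uses elsewhere for other purposes). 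What each approach buys: your one-sided statement $\mathfrak{N}_\phi\,\nanalytic\subset\mathfrak{N}_\phi$ with the quantitative bound $\|\eta_\phi(AY)\|\leq\bigl\|\tau^\phi_{-\iu/2}(Y^\ast)\bigr\|\,\|\eta_\phi(A)\|$ is strictly stronger than what the paper establishes, since the paper's Cauchy--Schwarz step consumes finiteness of \emph{both} $\phi(N^\ast N)$ and $\phi(NN^\ast)$ and therefore only treats the two-sided set $\mathfrak{N}_\phi\cap\mathfrak{N}_\phi^\ast$; both items of the proposition then follow from your lemma by pure bookkeeping. The price is machinery: the paper needs only Definition \ref{ModularCondition} and Cauchy--Schwarz, whereas you need the modular objects of the GNS construction for a weight, density of a suitable algebra of analytic elements, and closedness (lower semicontinuity) of $\eta_\phi$ --- standard facts for normal semifinite weights, but they are exactly the delicate points you flag and must be cited or proved for the argument to be complete. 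Two concrete suggestions there: take the regularizing net bounded, e.g.\ $A_j=\sqrt{j/\pi}\int e^{-jt^2}\tau^\phi_t(A)\,dt$ with $\|A_j\|\leq\|A\|$, so that $(A_jY)^\ast(A_jY)\to(AY)^\ast(AY)$ $\sigma$-weakly and lower semicontinuity of $\phi$ alone already yields $AY\in\mathfrak{N}_\phi$ (the identity $\eta_\phi(AY)=\pi_\phi'(Y)\eta_\phi(A)$ then needs the slightly finer closedness of $\eta_\phi$); and keep your support-projection reduction, which is in fact the right repair for a point the paper itself glosses over, since the statement speaks of ``the modular automorphism group'' of a weight that is not assumed faithful.
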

\begin{proof}
	In this proof we will refer to the items of Proposition \ref{simplepropweights}.
	
	$(i)$ The proof that $\mathfrak{N}_\phi\cap \mathfrak{N}_\phi^\ast$ is a $\ast$-subalgebra is basically already in \mbox{Proposition \ref{simplepropweights}}. So, it remains to show the module property.
	
	Let $N\in \mathfrak{N}_\phi\cap \mathfrak{N}_\phi^\ast$ and $A\in\nanalytic$, of course $AN\in\mathfrak{N}_\phi$, since $\mathfrak{N}_\phi$ is a left ideal as stated in item $(iii)$. It remains to show that $AN\in\mathfrak{N}_\phi^\ast$.
	
	It follows from the modular condition that there exists a complex function $F_{ANN^\ast,A^\ast}$ which is analytic on the strip $\strip{1}=\left\{z\in \mathbb{C} \mid 0< \Im z<1\right\}$ and continuous and bounded on its closure satisfying 
	$$\begin{alignedat}{2}
	F_{ANN^\ast,A^\ast}(t) &= \phi\left(ANN^\ast\tau^\phi_t(A^\ast)\right) \quad \forall t\in \mathbb{R}, \\
	F_{ANN^\ast,A^\ast}(t+\iu)&= \phi\left(\tau^\phi_t(A^\ast)ANN^\ast\right) \quad \forall t\in \mathbb{R}. \\
	\end{alignedat}$$
	Since $\tau^\phi_t(A^\ast)=\tau^\phi_t(A)^\ast$, $A^\ast\in\nanalytic$ and
	$$\begin{aligned}
	\phi\left(ANN^\ast\tau^\phi_t(A^\ast)\right)&=F_{ANN^\ast,A^\ast}(t)\\
	&=\phi\left(\tau^\phi_{-\iu}(A^\ast)ANN^\ast\right)\\
	&\leq \phi\left(N^\ast A^\ast\tau^\phi_{-\iu}(A^\ast)^\ast\tau^\phi_{-\iu}(A^\ast)A N\right)^\frac{1}{2}\phi(N N^\ast)^\frac{1}{2}\\
	&\leq\|A\|\left\|\tau^\phi_{-\iu}(A^\ast)\right\|\phi(N^\ast N)^\frac{1}{2}\phi(N N^\ast)^\frac{1}{2}\\
	&<\infty.
	\end{aligned}$$
	
	$NA\in \mathfrak{N}_\phi\cap \mathfrak{N}_\phi^\ast$ follows by the same argument, just noticing that $NA=(A^\ast N^\ast)^\ast$.
	
	$(ii)$ Again, it is already done in item $(v)$ that $\mathfrak{M}_\phi$ is a $\ast$-subalgebra.
	
	By definition $\mathfrak{M}_\phi=span\left[\mathfrak{F}_\phi\right]$ and by item $(i)$ $\mathfrak{F}_\phi\subset \mathfrak{N}_\phi\cap \mathfrak{N}_\phi^\ast$. The conclusion is now obvious.
	
\end{proof}

We will skip the proof of the next theorem and refer to \autocite[Chapter VIII \S 1, Theorem 1.2]{Takesaki2003}.

\begin{theorem}
	\label{UniqueModCond}
	Let $\nalgebra$ be a von Neumann algebra and $\phi$ a faithful normal semifinite weight on $\nalgebra$. Then, there exists a unique one parameter group of automorphisms $\{\tau_t\}_{t\in\mathbb{R}}$ satisfying the modular condition.
\end{theorem}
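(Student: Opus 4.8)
The plan is to separate the statement into existence and uniqueness, and to lean on the Tomita--Takesaki machinery already built. For \emph{existence}, I would realize $\nalgebra$ concretely through the GNS construction for the weight $\phi$ (Proposition \ref{GNSweight}), obtaining a Hilbert space $\hilbert_\phi$, a cyclic and separating vector $\Omega$, and the identification $\phi(A^\ast B)=\ip{\pi_\phi(A)\Omega}{\pi_\phi(B)\Omega}$ on $\mathfrak{N}_\phi$. Writing $\Delta_\Omega$ and $J_\Omega$ for the modular operator and conjugation of $\Omega$, I would take $\tau^\phi_t(A)=\Delta_\Omega^{\iu t}A\Delta_\Omega^{-\iu t}$, which is a one-parameter automorphism group of $\nalgebra$ by Theorem \ref{TTT} and Definition \ref{def:modularautomosphism}. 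The remaining point is that $\tau^\phi$ satisfies the modular condition, which, as observed right after Definition \ref{ModularCondition}, is precisely the KMS condition at $\beta=-1$. I would verify this using $\Omega$-invariance ($\Delta_\Omega^{-\iu t}\Omega=\Omega$) together with the relation $S=J_\Omega\Delta_\Omega^{1/2}$ from Proposition \ref{MOProp}: for $A,B\in\mathfrak{N}_\phi\cap\mathfrak{N}_\phi^\ast$ the function $F_{A,B}(z)=\ip{\pi_\phi(A^\ast)\Omega}{\Delta_\Omega^{\iu z}\pi_\phi(B)\Omega}$ is analytic on $\strip{1}$ and continuous and bounded on its closure, with boundary values matching \eqref{eq:ModCond}; the needed domain statement $\pi_\phi(B)\Omega\in\Dom{\Delta_\Omega^{1/2}}$ is exactly the content of $S$ being defined on $\pi_\phi(\nalgebra)\Omega$.

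For \emph{uniqueness}, suppose $\{\sigma_t\}_{t\in\mathbb{R}}$ is a second automorphism group satisfying the modular condition for $\phi$. Condition (i) gives $\phi\circ\sigma_t=\phi$, so each $\sigma_t$ preserves $\mathfrak{N}_\phi$ and the GNS inner product; hence there is a unique strongly continuous unitary group $V(t)$ on $\hilbert_\phi$ with $V(t)\Omega=\Omega$ and $V(t)\pi_\phi(A)V(t)^\ast=\pi_\phi(\sigma_t(A))$. By Stone's theorem $V(t)=\Delta_1^{\iu t}$ for a unique nonsingular positive self-adjoint $\Delta_1$ fixing $\Omega$. The goal is then to show $\Delta_1=\Delta_\Omega$, which forces $\sigma_t=\tau^\phi_t$.

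The heart of the argument is to read off the defining relation of the modular operator from the analytic functions supplied by the modular condition for $\sigma$. Translating \eqref{eq:ModCond} into vector language gives, for $A,B$ in the dense $\ast$-subalgebra $\mathfrak{N}_\phi\cap\mathfrak{N}_\phi^\ast$ (a two-sided module over the analytic elements by Proposition \ref{twosidedmodule}), that $t\mapsto\ip{\pi_\phi(A^\ast)\Omega}{\Delta_1^{\iu t}\pi_\phi(B)\Omega}$ extends analytically to $\strip{1}$ with the KMS boundary condition at $t+\iu$. Evaluating at the midpoint and using continuity, this shows $\pi_\phi(B)\Omega\in\Dom{\Delta_1^{1/2}}$ and that the antilinear operator $\pi_\phi(B)\Omega\mapsto\pi_\phi(B^\ast)\Omega$ admits a factorization $J_1\Delta_1^{1/2}$ for an antiunitary $J_1$; that is, $\Delta_1^{1/2}$ and $J_1$ furnish a polar decomposition of the same closed operator $S$ whose polar decomposition is already $J_\Omega\Delta_\Omega^{1/2}$ (Lemma \ref{closureS}). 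By uniqueness of the polar decomposition on the common core $\pi_\phi(\nalgebra)\Omega$, I conclude $\Delta_1=\Delta_\Omega$ and $J_1=J_\Omega$, whence $V(t)=\Delta_\Omega^{\iu t}$ and $\sigma_t=\tau^\phi_t$ for all $t$. The main obstacle is this last step: the careful domain bookkeeping needed to promote the boundary-value identity, valid a priori only on the image of a dense subalgebra, to the operator identity $S=J_1\Delta_1^{1/2}$ on a core, so that uniqueness of the polar decomposition can legitimately be invoked; this is where one needs the analytic-continuation and density results of Section \ref{AnalyticalModAut} rather than a purely formal manipulation.
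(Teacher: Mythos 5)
First, a point of orientation: the paper contains no proof of Theorem \ref{UniqueModCond} at all --- immediately after the statement it says the proof is skipped and refers to Takesaki (Chapter VIII, \S 1, Theorem 1.2) --- so your attempt can only be judged against the standard argument. Judged that way, the existence half has a concrete error. You verify the modular condition with $F_{A,B}(z)=\ip{\pi_\phi(A^\ast)\Omega}{\Delta_\Omega^{\iu z}\pi_\phi(B)\Omega}$ and claim analyticity on $\strip{1}$ from the domain fact $\pi_\phi(B)\Omega\in\Dom{\Delta_\Omega^{1/2}}$. But $\Delta_\Omega^{\iu z}=\Delta_\Omega^{\iu\Re{z}}\Delta_\Omega^{-\Im{z}}$, so that domain fact makes $z\mapsto\Delta_\Omega^{\iu z}\pi_\phi(B)\Omega$ analytic only on $-\tfrac{1}{2}<\Im{z}<0$: a strip of half the required width, on the opposite side of the real axis from $\strip{1}$. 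The genuine verification needs either the standard gluing of two half-strips across $\Im{z}=\mp\tfrac{1}{2}$ via the relation $\Delta_\Omega^{1/2}\eta_\phi(B)=J_\Omega\eta_\phi(B^\ast)$ together with an edge-of-the-wedge argument, or a reduction to entire analytic elements followed by density. A second, structural problem: your existence argument runs entirely through a cyclic and separating vector $\Omega$ for the weight. The paper's Proposition \ref{GNSweight} does assert such a vector, but none can exist when $\phi(\mathbbm{1})=\infty$ (e.g.\ the trace on $B(\hilbert)$ with $\dim\hilbert=\infty$); this is exactly why the paper's own section on modular theory for weights builds $S_0$, $\Delta_\phi$, $J_\phi$ directly on $\eta_\phi\left(\mathfrak{N}_\phi\cap\mathfrak{N}_\phi^\ast\right)$ with no vector, and why Theorem \ref{TTT} --- stated only for cyclic and separating vectors --- cannot simply be quoted: its extension to weights is a substantial part of what the existence claim contains.

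The uniqueness half follows the right strategy, and it is essentially the standard one: implement $\sigma$ by $\Delta_1^{\iu t}$ via Stone, show the modular condition forces $J_1\Delta_1^{1/2}$ to be a polar decomposition of $S$, and conclude $\Delta_1=\Delta_\Omega$. But the two steps that carry all the weight are missing. (i) Passing from the boundary values $F(t)=\ip{\eta_\phi(A^\ast)}{\Delta_1^{\iu t}\eta_\phi(B)}$ and $F(t+\iu)=\ip{\Delta_1^{\iu t}\eta_\phi(B^\ast)}{\eta_\phi(A)}$ to the statement $\eta_\phi(B)\in\Dom{\Delta_1^{1/2}}$ is not "evaluating at the midpoint": one must write both boundary values as $\int\lambda^{\iu t}\,d\mu(\lambda)$ and $\int\lambda^{\iu t}\,d\nu(\lambda)$ over the spectrum of $\Delta_1$ and prove the spectral lemma that a bounded analytic interpolation on the strip forces $d\nu(\lambda)=\lambda^{-1}d\mu(\lambda)$ (sign fixed by the paper's convention for $\strip{1}$); taking $A=B^\ast$ makes both measures positive, and only this identity yields the domain membership and the isometry needed to produce $J_1$. (ii) The "main obstacle" you name --- that the factorization is known a priori only on $\eta_\phi\left(\mathfrak{N}_\phi\cap\mathfrak{N}_\phi^\ast\right)$ --- has a standard resolution that you do not supply: since $\phi\circ\sigma_t=\phi$, each $\sigma_t$ preserves $\mathfrak{N}_\phi\cap\mathfrak{N}_\phi^\ast$, so this dense subspace of $\Dom{\Delta_1^{1/2}}$ is invariant under the unitaries $\Delta_1^{\iu t}$, and any dense invariant subspace of the domain is automatically a core for $\Delta_1^{1/2}$ (Gaussian smearing / analytic-vector argument). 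Without (i) and (ii) the appeal to uniqueness of the polar decomposition is not justified, so as written the proof has genuine gaps in both halves, even though the uniqueness skeleton is the correct one.
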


\begin{corollary}
	\label{modularautomosphism}
	Let $\nalgebra_1, \nalgebra_2$ be von Neumann algebras, $\phi$ be a normal semifinite weight on $\nalgebra_1$, and $\pi:\nalgebra_1\to\nalgebra_2$ an isomorphism. Then, the modular automorphism group of $\phi\circ\pi$ is $\{\pi^{-1}\circ\tau_t^\phi\circ\pi\}_{t\in\mathbb{R}}$.
\end{corollary}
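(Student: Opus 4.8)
The plan is to deduce this from the uniqueness of the modular automorphism group, Theorem \ref{UniqueModCond}. Write $\psi\defeq \phi\circ\pi$ for the transported weight and set $\sigma_t\defeq \pi^{-1}\circ\tau_t^\phi\circ\pi$. Since $\pi$ is an isomorphism of von Neumann algebras and weights of the form $\phi\circ\pi$ inherit normality, semifiniteness (and faithfulness) from $\phi$, Theorem \ref{UniqueModCond} applies to $\psi$ and guarantees a unique one-parameter automorphism group satisfying the modular condition of Definition \ref{ModularCondition}. Hence it suffices to verify that $\{\sigma_t\}_{t\in\mathbb{R}}$ is such a group, and the conclusion follows by uniqueness.

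First I would record two immediate facts. That $\{\sigma_t\}_{t\in\mathbb{R}}$ is a one-parameter group of automorphisms is a direct computation, $\sigma_t\circ\sigma_s=\pi^{-1}\circ\tau_t^\phi\circ\pi\circ\pi^{-1}\circ\tau_s^\phi\circ\pi=\pi^{-1}\circ\tau_{t+s}^\phi\circ\pi=\sigma_{t+s}$, each $\sigma_t$ being an automorphism as a composition of isomorphisms and automorphisms. For invariance (condition $(i)$ of the modular condition), I would compute $\psi\circ\sigma_t=\phi\circ\pi\circ\pi^{-1}\circ\tau_t^\phi\circ\pi=\phi\circ\tau_t^\phi\circ\pi=\phi\circ\pi=\psi$, using that $\phi$ is invariant under its own modular group $\tau^\phi$.

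The substantive step is condition $(ii)$, and here the point is to transfer the analytic function from $\phi$ along $\pi$. Because $\pi$ is a $\ast$-isomorphism and $\psi=\phi\circ\pi$, one has $\psi(A^\ast A)=\phi(\pi(A)^\ast\pi(A))$, so $\pi$ maps $\mathfrak{N}_\psi\cap\mathfrak{N}_\psi^\ast$ bijectively onto $\mathfrak{N}_\phi\cap\mathfrak{N}_\phi^\ast$. Given $A,B\in\mathfrak{N}_\psi\cap\mathfrak{N}_\psi^\ast$, I would take the function $F_{\pi(A),\pi(B)}$ furnished by the modular condition for $\phi$, analytic on $\strip{1}=\{z\in\mathbb{C}\mid 0<\Im z<1\}$ and continuous and bounded on its closure, and simply set $G_{A,B}\defeq F_{\pi(A),\pi(B)}$. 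Using $\pi(\sigma_t(B))=\tau_t^\phi(\pi(B))$ one checks the boundary values:
\begin{align*}
G_{A,B}(t)&=\phi\big(\pi(A)\,\tau_t^\phi(\pi(B))\big)=\phi\big(\pi(A\sigma_t(B))\big)=\psi(A\sigma_t(B)),\\
G_{A,B}(t+\iu)&=\phi\big(\tau_t^\phi(\pi(B))\,\pi(A)\big)=\phi\big(\pi(\sigma_t(B)A)\big)=\psi(\sigma_t(B)A),
\end{align*}
for all $t\in\mathbb{R}$. Thus $\{\sigma_t\}_{t\in\mathbb{R}}$ satisfies the modular condition for $\psi$.

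I do not expect a genuine obstacle here: the content is essentially the bookkeeping that $\pi$ intertwines $\sigma_t$ with $\tau_t^\phi$ and preserves the weight values, so that the weight-defined objects ($\mathfrak{N}$-ideals, invariance, the KMS function) transport verbatim. The only care needed is to confirm that $\psi$ inherits the hypotheses of Theorem \ref{UniqueModCond}, after which uniqueness of the group satisfying the modular condition forces $\{\pi^{-1}\circ\tau_t^\phi\circ\pi\}_{t\in\mathbb{R}}$ to be the modular automorphism group of $\phi\circ\pi$.
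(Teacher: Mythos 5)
Your proposal is correct and follows essentially the same route as the paper: verify the invariance condition by the same one-line computation, transport the analytic function of the modular condition along $\pi$ (the paper indexes by $A,B\in\mathfrak{N}_\phi\cap\mathfrak{N}_\phi^\ast$ and pulls back via $\pi^{-1}$, you index by $\mathfrak{N}_{\phi\circ\pi}\cap\mathfrak{N}_{\phi\circ\pi}^\ast$ and push forward via $\pi$, which is the same bookkeeping), and conclude by the uniqueness in Theorem \ref{UniqueModCond}. Your write-up is in fact slightly more careful than the paper's in spelling out the group property, the transfer of the hypotheses of Theorem \ref{UniqueModCond} to $\phi\circ\pi$, and the explicit appeal to uniqueness.
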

\begin{proof}
	Let us prove that $\phi\circ\pi$ satisfies the two conditions of Definition \ref{ModularCondition} for the one-parameter group $\{\pi^{-1}\circ\tau_t^\phi\circ\pi\}_{t\in\mathbb{R}}$.
	
	$(i)$ $(\phi\circ\pi)\circ(\pi^{-1}\circ\tau_t\circ\pi)=\phi\circ\tau_t\circ\pi=\phi\circ\pi$, where we have used that $\phi\circ\tau_t=\phi$.
	
	$(ii)$ For every $A,B\in\mathfrak{N}_{\phi}\cap \mathfrak{N}_{\phi}^\ast$ there exists a complex function $F_{A,B}$ which is analytic on the strip $\strip{1}=\left\{z\in \mathbb{C} \mid 0< \Im z<1\right\}$ and continuous and bounded on its closure, satisfying
	$$\begin{aligned}
	F_{A,B}(t) &= \phi(A\sigma_t(B)) =&&\hspace{-0.2cm} \phi\circ\pi\left(\pi^{-1}(A)\pi^{-1}\circ\sigma_t\circ\pi\left(\pi^{-1}(B)\right)\right) \\
	F_{A,B}(t+\iu) &= \phi(\sigma_t(A)B) =&&\hspace{-0.2cm} \phi\circ\pi\left(\pi^{-1}\circ\sigma_t\circ\pi(B)\pi^{-1}(A)\right) \\
	\end{aligned} \qquad \forall t\in \mathbb{R}.$$
	
	Notice that $\mathfrak{N}_{\phi\circ\pi}\cap \mathfrak{N}_{\phi\circ\pi}^\ast=\pi^{-1}\left(\mathfrak{N}_\phi\right)\cap\pi^{-1}\left(\mathfrak{N}_\phi^\ast\right)$. The thesis follows choosing $F_{\pi^{-1}(A),\pi^{-1}(B)}=F_{A,B}$.
	
\end{proof}

\begin{definition}[Centralizer of a Weight] \index{weight! centralizer}
	Let $\nalgebra$ be a von Neumann algebra, $\phi$ a faithful normal semifinite weight on $\nalgebra$, and $\tau^\phi=\{\tau^\phi_t\}_{t\in\mathbb{R}}$ the modular automorphism group of $\phi$. We define \index{weight! centralizer} the centralizer of $\phi$ as the set
	$$\mathfrak{M}_{\tau^\phi}=\left\{A\in \nalgebra \ | \ \tau^\phi_t(A)=A \ \forall t\in\mathbb{R}\right\}.$$
\end{definition}

Notice that it follows from linearity and normality that $\mathfrak{M}_{\tau^\phi}$ is a von Neumann subalgebra of $\nalgebra$.

\begin{theorem}
	\label{centralizercommute}
	Let $\nalgebra$ be a von Neumann algebra, $\phi$ a faithful normal semifinite weight. Then, $A\in\mathfrak{M}_{\tau^\phi}$ if and only if
	\begin{enumerate}[(i)]
		\item $A\mathfrak{M}_\phi \subset \mathfrak{M}_\phi$ and $\mathfrak{M}_\phi A\subset \mathfrak{M}_\phi$;
		\item $\phi(AB)=\phi(BA)$ \ $\forall B\in\mathfrak{M}_\phi$.
	\end{enumerate}
\end{theorem}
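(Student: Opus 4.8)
The plan is to argue in the GNS representation $(\pi_\phi,\hilbert_\phi,\eta_\phi)$ of $\phi$ (Proposition \ref{GNSweight}), where I may freely use the modular objects $\Delta_\phi,J_\phi,S_\phi=J_\phi\Delta_\phi^{1/2}$, the identities $S_\phi\eta_\phi(X)=\eta_\phi(X^\ast)$ and $\tau^\phi_t(X)=\Delta_\phi^{\iu t}\pi_\phi(X)\Delta_\phi^{-\iu t}$, and the Tomita--Takesaki theorem (Theorem \ref{TTT}). I would prove the two implications separately.

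For the direct implication, suppose $\tau^\phi_t(A)=A$ for all $t$. Then $A$ is a fixed point, hence entire analytic with $\tau^\phi_z(A)=A$ for every $z$, so $A\in\nanalytic$; condition (i) is then exactly Proposition \ref{twosidedmodule}(ii). For (ii), by linearity and $\mathfrak{M}_\phi=\mathrm{span}[\mathfrak{F}_\phi]$ I reduce to $B\in\mathfrak{F}_\phi$ and set $C=B^{1/2}\in\mathfrak{N}_\phi\cap\mathfrak{N}_\phi^\ast$, so that $B=C^2$ with $C=C^\ast$. Since $\pi_\phi(A)$ commutes with every $\Delta_\phi^{\iu t}$ it commutes with $\Delta_\phi^{1/2}$ on its domain; combining this with $S_\phi=J_\phi\Delta_\phi^{1/2}$ and $J_\phi\pi_\phi(A)J_\phi\in\pi_\phi(\nalgebra)^\prime$ gives
\[
\eta_\phi(CA^\ast)=J_\phi\pi_\phi(A)J_\phi\,\eta_\phi(C),\qquad \eta_\phi(CA)=\big(J_\phi\pi_\phi(A)J_\phi\big)^\ast\eta_\phi(C).
\]
Writing $\phi(AB)=\ip{\eta_\phi(CA^\ast)}{\eta_\phi(C)}$ and $\phi(BA)=\ip{\eta_\phi(C)}{\eta_\phi(CA)}$ and inserting these identities yields $\phi(AB)=\ip{J_\phi\pi_\phi(A)J_\phi\,\eta_\phi(C)}{\eta_\phi(C)}=\phi(BA)$.

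For the converse, first note that $A^\ast$ also satisfies (i) and (ii) (take adjoints in (i); for (ii), $\phi(A^\ast B)=\overline{\phi(B^\ast A)}=\overline{\phi(AB^\ast)}=\phi(BA^\ast)$), so it suffices to treat $A=A^\ast$. I work on the Tomita algebra $\mathcal{T}=\nanalytic\cap\mathfrak{N}_\phi\cap\mathfrak{N}_\phi^\ast$, which is $\tau^\phi_z$-invariant for every $z\in\mathbb{C}$ and gives a core $\eta_\phi(\mathcal{T})$. Using $\phi\circ\tau^\phi_t=\phi$ one rewrites $\phi(D^\ast\tau^\phi_t(A)D)=\phi(\tau^\phi_{-t}(D)^\ast A\,\tau^\phi_{-t}(D))$, so for $D\in\mathcal{T}$ I introduce the entire function
\[
\Phi(z)=\phi\big(\tau^\phi_z(D^\ast)\,A\,\tau^\phi_z(D)\big),
\]
which is well defined because $\tau^\phi_z(D^\ast),\tau^\phi_z(D)\in\mathcal{T}$ and, by (i)--(ii), $A\,\tau^\phi_z(D)\in\mathfrak{N}_\phi\cap\mathfrak{N}_\phi^\ast$, and which satisfies $\Phi(-t)=\phi(D^\ast\tau^\phi_t(A)D)$; moreover, since $D\in\mathcal{T}$ implies $\eta_\phi(D)\in\Dom{\Delta_\phi^{s}}$ for every real $s$, a Cauchy--Schwarz estimate shows $\Phi$ is bounded on each strip of finite height.

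The crux is to show $\Phi$ is $\iu$-periodic. Applying the trace property (ii) together with the modular relation $\phi(B\,\tau^\phi_{\iu}(C))=\phi(CB)$, valid for $C\in\mathcal{T}$ and $B\in\mathfrak{N}_\phi\cap\mathfrak{N}_\phi^\ast$ and obtained from Definition \ref{ModularCondition} by analytic continuation, one moves the outer factors across $A$ and finds, from both sides,
\[
\Phi(z)=\phi\big(A\,\tau^\phi_z(D)\,\tau^\phi_{z+\iu}(D^\ast)\big)=\Phi(z+\iu).
\]
The strip-boundedness and $\iu$-periodicity then make $\Phi$ bounded on all of $\mathbb{C}$, and, exactly as in the proof of Lemma \ref{tauInv}, Liouville's theorem forces $\Phi$ to be constant; hence $\phi(D^\ast\tau^\phi_t(A)D)=\Phi(-t)=\Phi(0)=\phi(D^\ast A D)$ for all $t$. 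By polarization on the core $\eta_\phi(\mathcal{T})$ and faithfulness of $\phi$ this gives $\tau^\phi_t(A)=A$. I expect the main obstacle to be precisely the bookkeeping in this last paragraph: each use of (ii) and of the modular relation requires checking that every intermediate product stays in $\mathfrak{M}_\phi$ (equivalently that $A$ maps the spaces $\mathfrak{N}_\phi,\mathfrak{N}_\phi^\ast$ into themselves), which is where Proposition \ref{twosidedmodule} and the hypotheses (i)--(ii) must be invoked repeatedly; if a cleaner finish is wanted, the uniqueness of the modular group (Theorem \ref{UniqueModCond}) can be brought in at the conclusion.
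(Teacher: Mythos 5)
Your proof is correct in substance, but it follows a genuinely different route from the paper's in both directions. For the forward implication of (ii), the paper never touches the GNS Hilbert space: writing $B=C^\ast D$ with $C,D\in\mathfrak{N}_\phi$, it builds the two modular-condition functions $F_{C^\ast,DA}$ and $F_{\tau^\phi_t(D),A\tau^\phi_{-t}(C^\ast)}$, observes that fixedness of $A$ makes their boundary values interlock, glues them via the edge-of-the-wedge theorem into a bounded entire $\iu$-periodic function $G$, and gets $\phi(AB)=\phi(BA)$ from Liouville. Your argument --- commuting $\pi_\phi(A)$ through $\Delta_\phi^{1/2}$ (legitimate, since $A\in\mathfrak{M}_{\tau^\phi}$ makes $\pi_\phi(A)$ commute with all bounded spectral functions of $\Delta_\phi$) and pushing it into the commutant as $J_\phi\pi_\phi(A)J_\phi$ --- dispenses with complex analysis entirely for this half, at the harmless cost of the reduction $B\in\mathfrak{F}_\phi$, $C=B^{1/2}\in\mathfrak{N}_\phi\cap\mathfrak{N}_\phi^\ast$. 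For the converse, the paper runs the Liouville scheme on the one-variable function $F(t)=\phi(\tau^\phi_t(A)B)$, whose analytic continuation it justifies rather tersely given that $A$ itself is not known to be analytic; your function $\Phi(z)=\phi(\tau^\phi_z(D^\ast)\,A\,\tau^\phi_z(D))$ places all the analyticity on the $D$'s, which is, if anything, easier to make rigorous than the paper's version --- the price being the reduction to $A=A^\ast$ (which you carry out correctly) and the domain bookkeeping you explicitly flag. I checked your periodicity identity: $\Phi(z)=\phi(A\,\tau^\phi_z(D)\,\tau^\phi_{z+\iu}(D^\ast))$ follows from the modular relation with $C=\tau^\phi_z(D^\ast)$ and $B=A\tau^\phi_z(D)$, and $\Phi(z+\iu)$ equals the same expression by the modular relation followed by hypothesis (ii); both steps are sound once the module properties are in place.

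One small repair is needed: the set $\mathcal{T}=\nanalytic\cap\mathfrak{N}_\phi\cap\mathfrak{N}_\phi^\ast$ as you define it is not obviously invariant under the complex extensions $\tau^\phi_z$ --- membership of $\tau^\phi_z(D)$ in $\mathfrak{N}_\phi\cap\mathfrak{N}_\phi^\ast$ for nonreal $z$ does not follow from $D$ being analytic and in $\mathfrak{N}_\phi\cap\mathfrak{N}_\phi^\ast$. You should instead work with the genuine Tomita algebra $\left\{D\in\nanalytic \ \middle|\ \tau^\phi_z(D)\in\mathfrak{N}_\phi\cap\mathfrak{N}_\phi^\ast \ \forall z\in\mathbb{C}\right\}$; the Gaussian regularizations $D_n=\sqrt{n/\pi}\int_{\mathbb{R}}e^{-nt^2}\tau^\phi_t(D)\,dt$ of $D\in\mathfrak{N}_\phi\cap\mathfrak{N}_\phi^\ast$, as in Proposition \ref{AnalDense}, land in it, so $\eta_\phi(\mathcal{T})$ remains dense and your polarization-plus-faithfulness finish is unaffected. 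As for the bookkeeping itself, it does close: with $A=A^\ast$ satisfying (i), one has $\mathfrak{N}_\phi A\subset\mathfrak{N}_\phi$ because $AX^\ast XA=A\bigl(X^\ast X A\bigr)\in\mathfrak{M}_\phi$ by two applications of (i), and this element is positive, hence lies in $\mathfrak{M}_\phi\cap\nalgebra_+=\mathfrak{F}_\phi$ by Proposition \ref{simplepropweights}, giving $\phi\bigl((XA)^\ast XA\bigr)<\infty$; this is exactly what your intermediate products $A\tau^\phi_z(D)\in\mathfrak{N}_\phi\cap\mathfrak{N}_\phi^\ast$ require. With these two points patched, your proof is complete.
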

\begin{proof}
	Let's denote $\tau^\phi=\{\tau^\phi_t\}_{t\in\mathbb{R}}$ the modular automorphism group of $\phi$.
	
	$(\Rightarrow)$ Since $A\in\mathfrak{M}_{\tau^\phi}$, $A$ is an entire analytic element. Then, condition $(i)$ follows from $(ii)$ in Proposition \ref{twosidedmodule}.
	
	Since $B\in\mathfrak{M}_\phi$, $B= C^\ast D$ with $C,D \in \mathfrak{N}_\phi$. By Proposition \ref{twosidedmodule} and the modular condition, there exists an analytic functions on the strip $\strip{1}$, continuous and bounded on its closure, such that
	$$\begin{alignedat}{4}
	F_{C^\ast,DA}(t) &=\phi\left(C^\ast\tau^\phi_t(D A)\right) &=\phi\left(C^\ast\tau^\phi_{t}(D)A\right) & \\
	F_{C^\ast,DA}(t+\iu) &= \phi\left(\tau^\phi_t(DA)C^\ast\right) &=\phi\left(\tau^\phi_{t}(D)AC^\ast\right) &=\phi\left(\tau^\phi_{t}(D)AC^\ast\right) \\
	\end{alignedat} \qquad \forall t\in, \mathbb{R}$$
	and
	$$\begin{alignedat}{3}
	F_{\tau^\phi_{t}(D),A\tau^\phi_{-t}(C^\ast)}(t)
	&=\phi\left(\tau^\phi_{t}(D)\tau^\phi_{t}\left(A\tau^\phi_{-t}(C^\ast)\right)\right)
	&=\phi\left((\tau^\phi_{t}(D)AC^\ast\right) \\
	F_{\tau^\phi_{t}(D),A\tau^\phi_{-t}(C^\ast)}(t+\iu) 
	&=\phi\left(\tau^\phi_t\left(A\tau^\phi_{-t}(C^\ast)\right) \tau^\phi_{t}(D)\right)
	&=\phi\left(AC^\ast \tau^\phi_{t}(D)\right) \\
	\end{alignedat} 	\qquad \forall t\in \mathbb{R}.$$

	Now, by the previous equation and by $A\in\mathfrak{M}_\phi$ we have $F_{C^\ast,DA}(t+\iu)=F_{\tau^\phi_{t}(D),AC^\ast}(t)$ and $F_{C^\ast,DA}(t)=F_{\tau^\phi_{t}(D),AC^\ast}(t+\iu)$, so we can define the bounded function $G:\mathbb{C} \to \mathbb{C}$ by
	$$G(z)=\begin{cases}F_{C^\ast,DA}(z-2n\iu) & \textrm{if } 2n\leq Im(z)\leq 2n+1, \ n\in\mathbb{Z}\\ F_{\tau^\phi_{t}(D),AC^\ast}(z-(2n+1)\iu) & \textrm{if } 2n+1\leq Im(z)\leq 2n+2, n\in\mathbb{Z}.\\\end{cases}$$
	
	It follows from the edge-of-the-wedge theorem that $G$ is entire analytic and it is also bounded. Thus it is constant, by Liouville's theorem. Hence,
	$$\phi(B A)=F_{C^\ast,DA}(0) =G(0)=G(2\iu)= F_{\tau^\phi_{t}(D),AC^\ast}(\iu)=\phi(AB) \quad \forall B\in\mathfrak{M}_{\tau^\phi}.$$
	
	$(\Leftarrow)$ Notice that the assumptions $(i)$ warrants that, for any $B\in\mathfrak{N}_\phi$, we can again define an analytic function on $\strip{1}$ which is continuous and bounded on $\overline{\strip{1}}$, namely
	$F(t)=\phi\left(\tau^\phi_t(A)B\right)=\phi\circ\tau^\phi_t\left(A\tau^\phi_{-t}(B)\right)$.
	
	Form assumption $(ii)$, this function is periodic with period $\iu$. Then, using a definition similar to the one we used for $G$ above, $F$ can be analytically extended to the whole complex plane by the edge-of-the-wedge theorem. Since the extension is bounded,  by Liouville's theorem, $F$ is constant.
	
	Hence, $\phi\left(\tau^\phi_t(A)B\right)=F(t)=F(0)=\phi(AB) \quad \forall B\in\mathfrak{N}_\phi\Rightarrow \tau^\phi_t(A)=A$.
	
\end{proof}

\begin{lemma}
	\label{centralizerorder}
	Let $\nalgebra$ be a von Neumann algebra, $\phi$ a faithful normal semifinite weight on $\nalgebra$,
	$$\begin{aligned}
	\mathfrak{M}_{\tau^\phi}\ni H \mapsto & \ \phi_H&: \ &\nalgebra^+ &\to &\hspace{1.1cm}\overline{\mathbb{R}}\\
	& \ & \ &A&\mapsto & \hspace{0.2cm} \phi\left(H^\frac{1}{2}AH^\frac{1}{2}\right)\\
	\end{aligned}$$
	is an order preserving map on the weights on $\nalgebra$.
\end{lemma}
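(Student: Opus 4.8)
The plan is to establish two things: that each $\phi_H$ is a (normal) weight, and that $H\mapsto\phi_H$ is monotone. The first is routine; the second is where the centralizer hypothesis is indispensable.

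\textbf{Weight property.} For $H$ in the positive cone of $\mathfrak{M}_{\tau^\phi}$, the element $H^\frac{1}{2}$ again lies in $\mathfrak{M}_{\tau^\phi}$, since the centralizer is a von Neumann subalgebra of $\nalgebra$ and hence stable under continuous functional calculus. Thus for $A\in\nalgebra^+$ one has $H^\frac{1}{2}AH^\frac{1}{2}=\left(A^\frac{1}{2}H^\frac{1}{2}\right)^\ast\left(A^\frac{1}{2}H^\frac{1}{2}\right)\in\nalgebra^+$, so $\phi_H$ is well defined with values in $\overline{\mathbb{R}}_+$. Homogeneity and additivity follow at once from those of $\phi$ on $\nalgebra^+$, because $A\mapsto H^\frac{1}{2}AH^\frac{1}{2}$ is linear and positivity-preserving; hence $\phi_H$ is a weight. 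Moreover, if $(A_i)_{i\in I}$ is a bounded increasing net with supremum $A$, then $\left(H^\frac{1}{2}A_iH^\frac{1}{2}\right)_{i\in I}$ is bounded increasing with supremum $H^\frac{1}{2}AH^\frac{1}{2}$, so normality of $\phi$ gives $\phi_H(A)=\sup_i\phi_H(A_i)$; thus each $\phi_H$ is normal.

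\textbf{The key identity and monotonicity on $\mathfrak{M}_\phi$.} Since $H^\frac{1}{2}\in\mathfrak{M}_{\tau^\phi}\subset\nanalytic$, Proposition \ref{twosidedmodule} shows that $\mathfrak{M}_\phi$ is a two-sided module over $\nanalytic$; in particular $XH^\frac{1}{2}\in\mathfrak{M}_\phi$ and $H^\frac{1}{2}XH^\frac{1}{2}\in\mathfrak{M}_\phi$ whenever $X\in\mathfrak{M}_\phi$. Applying the trace property of the centralizer, Theorem \ref{centralizercommute}$(ii)$, with the centralizer element $H^\frac{1}{2}$ and the module element $XH^\frac{1}{2}$, I obtain
\begin{equation*}
\phi_H(X)=\phi\!\left(H^\frac{1}{2}\bigl(XH^\frac{1}{2}\bigr)\right)=\phi\!\left(\bigl(XH^\frac{1}{2}\bigr)H^\frac{1}{2}\right)=\phi(XH)=\phi(HX),\qquad X\in\mathfrak{M}_\phi^+ .
\end{equation*}
Now let $0\leq H_1\leq H_2$ in $\mathfrak{M}_{\tau^\phi}$ and put $K=H_2-H_1\geq 0$. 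For $X\in\mathfrak{M}_\phi^+$, linearity of the extended $\phi$ on $\mathfrak{M}_\phi$ together with the identity above yields $\phi_{H_2}(X)-\phi_{H_1}(X)=\phi(KX)=\phi_K(X)\geq 0$, since $\phi_K$ is a weight and $X\geq 0$. Hence $\phi_{H_1}(X)\leq\phi_{H_2}(X)$ for every $X\in\mathfrak{M}_\phi^+$.

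\textbf{Extension to $\nalgebra^+$.} To pass from $\mathfrak{M}_\phi^+$ to all of $\nalgebra^+$ I would invoke normality together with semifiniteness. By semifiniteness of $\phi$ there is an increasing net of projections $(e_i)_{i\in I}\subset\mathfrak{F}_\phi$ converging strongly to $\mathbbm{1}$. For $A\in\nalgebra^+$ the operators $A^\frac{1}{2}e_iA^\frac{1}{2}=\left(e_iA^\frac{1}{2}\right)^\ast\left(e_iA^\frac{1}{2}\right)$ lie in $\mathfrak{M}_\phi^+$ — indeed $A^\frac{1}{2}e_i\in\mathfrak{N}_\phi$ because $\mathfrak{N}_\phi$ is a left ideal (Proposition \ref{simplepropweights}), whence $e_iA^\frac{1}{2}\in\mathfrak{N}_\phi^\ast$, and $\phi\bigl(A^\frac{1}{2}e_iA^\frac{1}{2}\bigr)\leq\|A\|\,\phi(e_i)<\infty$ forces $e_iA^\frac{1}{2}\in\mathfrak{N}_\phi$ as well — and they form a bounded increasing net with supremum $A$. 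Applying normality of $\phi_{H_1}$ and $\phi_{H_2}$ to this net and the inequality already established on $\mathfrak{M}_\phi^+$,
\begin{equation*}
\phi_{H_1}(A)=\sup_i\phi_{H_1}\!\left(A^\frac{1}{2}e_iA^\frac{1}{2}\right)\leq\sup_i\phi_{H_2}\!\left(A^\frac{1}{2}e_iA^\frac{1}{2}\right)=\phi_{H_2}(A),
\end{equation*}
so $\phi_{H_1}\leq\phi_{H_2}$ as weights, which is the asserted order preservation. The main obstacle is precisely the identity $\phi_H(X)=\phi(HX)$: this is where the centralizer hypothesis enters, and without it monotonicity genuinely fails, since for arbitrary positive $H_1\leq H_2$ one need not have $H_1^\frac{1}{2}XH_1^\frac{1}{2}\leq H_2^\frac{1}{2}XH_2^\frac{1}{2}$. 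The remaining delicate point is bookkeeping of possibly infinite values, handled by keeping the approximants $A^\frac{1}{2}e_iA^\frac{1}{2}$ inside $\mathfrak{M}_\phi^+$ and transporting the inequality by normality.
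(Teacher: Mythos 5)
Your proof is correct and follows the same core argument as the paper's: the tracial identity $\phi\left(H^\frac{1}{2}AH^\frac{1}{2}\right)=\phi(HA)$ coming from the centralizer, combined with the decomposition $H_2=(H_2-H_1)+H_1$ and positivity of $\phi_{H_2-H_1}$. The only difference is one of rigor, not of route: the paper asserts the identity and the resulting chain of equalities for arbitrary $A\in\nalgebra^+$ without comment, whereas you first justify it on $\mathfrak{M}_\phi^+$ via Theorem \ref{centralizercommute}$(ii)$ and the module property of Proposition \ref{twosidedmodule}, and then transport the inequality to all of $\nalgebra^+$ by normality and the approximants $A^\frac{1}{2}e_iA^\frac{1}{2}\in\mathfrak{M}_\phi^+$ — a careful treatment of the infinite-value bookkeeping that the paper's proof glosses over.
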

\begin{proof}
	Normality is evident and semifiniteness follows from Proposition \ref{twosidedmodule}. For the order, notice that $H, K\in \mathfrak{M}_\phi^+$ implies $H^\frac{1}{2}, K^\frac{1}{2}, (H+K)^\frac{1}{2}\in \mathfrak{M}_\phi^+$, thus $$\phi\left(H^\frac{1}{2}A H^\frac{1}{2}\right)=\phi(HA) \textrm{ and } \phi\left(K^\frac{1}{2}A K^\frac{1}{2}\right)=\phi(KA).$$
	Hence, if $K\leq H$, 
	$$\begin{aligned}
	\phi(H^\frac{1}{2}A H^\frac{1}{2}) &=\phi(HA)\\	&=\phi\left(((H-K)+K)A\right)\\
	&=\phi\left(((H-K)A\right)+\phi(KA)\\
	&=\phi\left((H-K)^\frac{1}{2}A(H-K)^\frac{1}{2}\right)+\phi\left(K^\frac{1}{2}AK^\frac{1}{2}\right)\\
	&\geq \phi\left(K^\frac{1}{2}AK^\frac{1}{2}\right).\\
	\end{aligned}$$
\end{proof}

The next lemma ensures that we can extend our notion of Radon-Nikodym derivative for unbounded operators. The proof is partially due to the author. 
\begin{lemma}
\label{unboundedderivativeweight}
	Let $\nalgebra$ be a von Neumann algebra, $\phi$ a faithful normal semifinite weight on $\nalgebra$ and $H\eta \mathfrak{M}_{\tau^\phi}^+$.  If $\left(H_i\right)_{i\in I} \in  \mathfrak{M}_{\tau^\phi}^+$ is an increasing net such that $H_i \to H$, then
	\begin{equation}
	\label{eq:definitionx1}
	\phi_H(A)=\lim_{H_i \to H}\phi\left(H_i A\right)=\lim_{H_i \to H}\phi\left(H_i^\frac{1}{2}AH_i^\frac{1}{2}\right)=\sup_{i\in I}{\phi\left(H_i^\frac{1}{2}AH_i^\frac{1}{2}\right)}, \quad A\in\nalgebra
	\end{equation}
	defines a normal semifinite weight $\phi_H$ on $\nalgebra$, which is independent of the choice of the net $(H_i)_{i\in I}$ with $H_i \to H$.	
	
	 In addition, $\phi_H$ is faithful if and only if $H$ is non-singular and, if $(H_i)_{i\in I}$ is an increasing  net of positive operators affiliated with $\mathfrak{M}_{\tau^\phi}^+$ such that $H_i\to H$, then 
	 $$\phi_H=\sup_{i\in I}\phi_{H_i}.$$
	
\end{lemma}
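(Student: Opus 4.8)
The plan is to construct $\phi_H$ from the bounded case treated in Lemma \ref{centralizerorder} by monotone approximation, and then to verify each asserted property in turn, isolating net-independence as the genuinely delicate point. Since $H\eta\mathfrak{M}_{\tau^\phi}^+$, its spectral projections lie in the centralizer, so every approximating net $(H_i)\subset\mathfrak{M}_{\tau^\phi}^+$ with $H_i\uparrow H$ consists of bounded positive elements of $\mathfrak{M}_{\tau^\phi}$. For each $i$, Lemma \ref{centralizerorder} gives that $\phi_{H_i}=\phi(H_i^{\frac{1}{2}}\,\cdot\,H_i^{\frac{1}{2}})$ is a normal semifinite weight and that $H_i\leq H_{i'}$ implies $\phi_{H_i}\leq\phi_{H_{i'}}$; hence for fixed $A\in\nalgebra_+$ the net $\big(\phi_{H_i}(A)\big)_{i\in I}$ is increasing in $\overline{\mathbb{R}}_+$, so its limit equals its supremum. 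The identity $\phi(H_iA)=\phi(H_i^{\frac{1}{2}}AH_i^{\frac{1}{2}})$, which is the trace property of the centralizer (Theorem \ref{centralizercommute}, exactly as used in Lemma \ref{centralizerorder}), shows that the three displayed expressions coincide. Additivity and positive homogeneity of $\phi_H$ then follow from those of each $\phi_{H_i}$ together with the elementary fact that the supremum of the sum of two increasing nets is the sum of their suprema, so $\phi_H$ is a weight.

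Normality I would obtain by an interchange of suprema: for a bounded increasing net $A_k\uparrow A$ in $\nalgebra_+$, normality of each $\phi_{H_i}$ gives $\phi_H(A)=\sup_i\sup_k\phi_{H_i}(A_k)=\sup_k\sup_i\phi_{H_i}(A_k)=\sup_k\phi_H(A_k)$. For semifiniteness I would reduce to the region where $H$ is invertible: writing $e_n=\chi_{[1/n,n]}(H)\in\mathfrak{M}_{\tau^\phi}$, on each corner $e_n\nalgebra e_n$ the operator $H$ is bounded and bounded below, so $\phi_H$ and $\phi$ are mutually dominated there; since $\phi$ is semifinite and $e_n$ increases to the range projection $s(H)$ of $H$, the elements on which $\phi_H$ is finite are $\sigma$-weakly dense in $s(H)\nalgebra s(H)$, while $\phi_H$ vanishes on the complementary corner, yielding semifiniteness of $\phi_H$ on all of $\nalgebra$.

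For faithfulness I would use that $\phi_H(A^\ast A)=\sup_i\phi\big((AH_i^{\frac{1}{2}})^\ast(AH_i^{\frac{1}{2}})\big)$; since $\phi$ is faithful, each term vanishes if and only if $AH_i^{\frac{1}{2}}=0$, and because $H_i^{\frac{1}{2}}\uparrow H^{\frac{1}{2}}$ strongly on $\Dom{H^{\frac{1}{2}}}$ this forces $\phi_H(A^\ast A)=0$ precisely when $A$ annihilates $\overline{\ran{H^{\frac{1}{2}}}}=s(H)\hilbert$. If $H$ is non-singular then $s(H)=\mathbbm{1}$ and $A=0$, so $\phi_H$ is faithful; if $H$ is singular then $\mathbbm{1}-s(H)\neq0$ is a nonzero element killed by $\phi_H$, so $\phi_H$ is not faithful.

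The hard part will be independence of $\phi_H$ from the approximating net, together with the companion formula $\phi_H=\sup_i\phi_{H_i}$ for a net of \emph{affiliated} (not merely bounded) operators. My strategy is to identify $\phi_H(A)$ with the intrinsic quantity $\sup\{\phi_L(A)\ |\ L\in\mathfrak{M}_{\tau^\phi}^+\text{ bounded},\ L\leq H\}$, which refers to no net; the inequality ``$\leq$'' is immediate because each $H_i\leq H$, while the reverse amounts to showing $\phi_L(A)\leq\sup_i\phi_{H_i}(A)$ for an arbitrary bounded $L\leq H$. This last step is subtle precisely because $L$ and the net $(H_i)$ need not commute with the spectral projections of $H$, so the naive corner-by-corner comparison is unavailable. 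I would resolve it through lower semicontinuity of the normal weight $\phi$: in the GNS space one has $\phi_{H_i}(B^\ast B)=\|\pi_\phi(B)\eta_\phi(H_i^{\frac{1}{2}})\|^2$ for $B\in\mathfrak{N}_\phi\cap\mathfrak{N}_\phi^\ast$, and since $H_i^{\frac{1}{2}}\uparrow H^{\frac{1}{2}}$ strongly (operator monotonicity of the square root together with Theorem \ref{vigier}) one controls $\phi_L$ by a Douglas-type factorization of $L^{\frac{1}{2}}$ through $H^{\frac{1}{2}}$ combined with the bound $\phi(T^\ast T)\leq\liminf_i\phi(T_i^\ast T_i)$. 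Once net-independence is secured, the identity $\phi_H=\sup_i\phi_{H_i}$ for an increasing net of affiliated operators follows by the same interchange-of-suprema argument used for normality, applied to the doubly indexed family obtained by approximating each $H_i$ from below by bounded centralizer elements.
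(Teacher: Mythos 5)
Your overall scaffolding tracks the paper's proof closely: the construction from Lemma \ref{centralizerorder}, the identification of the three expressions via the centralizer trace property, normality by an interchange of suprema, and semifiniteness via spectral corners are all essentially the paper's steps (and your faithfulness argument is a genuine addition — the paper asserts that equivalence but its proof never addresses it). The genuine gap is in the step you yourself flag as hard, and the tools you name do not close it. Lower semicontinuity of $T\mapsto\phi(T^\ast T)$ lets you transport $\phi$-bounds along a \emph{single} bounded SOT-convergent net, but the inequality you need — $\phi_L(A)\leq\sup_i\phi_{H_i}(A)$ for an arbitrary bounded $L\in\mathfrak{M}_{\tau^\phi}^+$ with $L\leq H$ — involves only an \emph{order} relation between $L$ and the net $(H_i)$, not a convergence relation. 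To bring lsc to bear you would need contractions $D_i\in\mathfrak{M}_{\tau^\phi}$ with $H_i^{1/2}D_i\to L^{1/2}$ strongly, so that $\phi\bigl((BH_i^{1/2}D_i)^\ast(BH_i^{1/2}D_i)\bigr)\leq\phi_{H_i}(B^\ast B)$ after a trace flip; nothing in your sketch produces such $D_i$. The Douglas factorization $L^{1/2}=CH^{1/2}$ only transforms the problem: conjugation gives $CH_iC^\ast\leq CHC^\ast=L\leq H$, so you must now compare $\phi_{CH_iC^\ast}$ with $\sup_j\phi_{H_j}$, where again only an order relation to the net is available — you are back where you started. Note also that compression by a projection does not decrease operators ($EKE\leq K$ fails already in $M_2(\mathbb{C})$), so the order relation cannot be localized corner by corner through positivity alone.

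The paper's device, which is what your program is missing, is an \emph{interpolating doubly indexed family built from the spectral projections of $H$}: with $H_n=E^H_{[0,n)}HE^H_{[0,n)}$ fixed and $(K_j)$ an arbitrary increasing net converging to $H$, set $K_{j,n}=E^H_{[0,n)}K_jE^H_{[0,n)}$; by Vigier's theorem and continuity in the GNS representation, $\sup_j\phi(K_{j,n}A)=\phi(H_nA)$ and $\sup_n\phi(K_{j,n}A)=\phi(K_jA)$, so the iterated suprema of $\phi\bigl(K_{j,n}^{1/2}AK_{j,n}^{1/2}\bigr)$ are sandwiched between $\sup_n\phi_{H_n}(A)$ and $\sup_j\phi_{K_j}(A)$, forcing equality; the same family handles the final statement for affiliated nets. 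Two secondary repairs to your text: the formula $\phi_{H_i}(B^\ast B)=\|\pi_\phi(B)\eta_\phi(H_i^{1/2})\|^2$ is ill-formed since $H_i^{1/2}\notin\mathfrak{N}_\phi$ in general (take $H=\mathbbm{1}$ with $\phi$ infinite); you must write $\|\eta_\phi(BH_i^{1/2})\|^2$ using the two-sided module property of Proposition \ref{twosidedmodule}, and any trace flip requires first showing the Douglas contraction lies in $\mathfrak{M}_{\tau^\phi}$, not merely in $\nalgebra$. Finally, your semifiniteness argument via the corners $\chi_{[1/n,n]}(H)$ tacitly uses either net-independence or the canonical spectral net before independence is proved; this is harmless if you define $\phi_H$ first through the spectral net, but the logical order should be made explicit.
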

\begin{proof}
	By Lemma \ref{centralizerorder}, $\phi_H(A)$ is well defined since it is the limit of a positive increasing net of real numbers. Furthermore, it is easy to see (by the same Lemma and normality of $\phi$) that it is a normal weight. It remains to prove that it is semifinite.
	
	Let $\left\{E^{H}_\lambda\right\}_{\lambda\in\mathbb{R}_+}$ be the spectral resolution of $H$. By $(ii)$ in Proposition \ref{twosidedmodule}, $\displaystyle \bigcup_{n\in\mathbb{N}}E_{[0,n)}\mathfrak{M}_\phi E_{[0,n)}\in\mathfrak{M}_\phi$ and it is WOT-dense in $\nalgebra$.
	
	To prove the independence of the net, let $H_n=E^H_{[0,n)}H E^H_{[0,n)}$ be fixed and let $(K_j)_{j\in J}\in \mathfrak{M}_{\tau^\phi}$ be another increasing net such that $K_j \to H$. Denote by $\phi_H$ the normal semifinite weight defined by equation \eqref{eq:definitionx1} for the sequence $(H_n)_{n\in\mathbb{N}}$.
	
	We know that $K_{j,n}=E^{H}_{[0,n)}K_jE^{H}_{[0,n)}$ is an increasing net with $\displaystyle \sup_{j\in J}K_{j,n}=H_n$ and $\displaystyle \sup_{n\in \mathbb{N}}K_{j,n}=K_j$. Let's use the GNS-representation throughout $\phi$. Notice that, $\pi_\phi(K_{j,n})\xrightarrow[SOT]{j} \pi_\phi(H_n)$ and $\pi_\phi(K_{j,n})\xrightarrow[SOT]{n} \pi_\phi(K_j)$ thanks to Vigier's theorem\footnote{see Theorem \ref{vigier}}. Then
	$$\begin{aligned}
	&\sup_{j\in J}\phi(K_{n,j}A)=\sup_{j\in J}\ip{\pi_{\phi}(K_{j,n})\Phi}{\pi_{\phi}(A)\Phi}_\phi=\ip{\pi_{\phi}(H_n)\Phi}{\pi_{\phi}(A)\Phi}_\phi=\phi(H_nA),\\
	&\sup_{n\in \mathbb{N}}\phi(K_{n,j}A)=\sup_{n\in \mathbb{N}}\ip{\pi_{\phi}(K_{j,n})\Phi}{\pi_{\phi}(A)\Phi}_\phi=\ip{\pi_{\phi}(K_j)\Phi}{\pi_{\phi}(A)\Phi}_\phi=\phi(K_jA).\\
	\end{aligned}$$
	Hence,	
	$$\begin{aligned}
	\label{eq:calculationX17}
	\phi_H(A)&=\sup_{n\in \mathbb{N}}{\phi\left(H_n^\frac{1}{2}AH_n^\frac{1}{2}\right)}\\
	&=\sup_{n\in \mathbb{N}}\sup_{j\in J}{\phi\left(K_{n,j}^\frac{1}{2}AK_{n,j}^\frac{1}{2}\right)}\\
	&\leq \sup_{j\in J}{\phi\left(K_j^\frac{1}{2}AK_j^\frac{1}{2}\right)}\\
	&=\sup_{j\in J}\sup_{n\in \mathbb{N}}{\phi\left(K_{n,j}^\frac{1}{2}AK_{n,j}^\frac{1}{2}\right)}\\
	&=\phi_H(A).\\
	\end{aligned}$$
	
	For the last statement, let $(H_i)_{i\in I}$ an increasing net of operators affiliated with $\mathfrak{M}_{\tau^\phi}$ such that $H_i\to H$. Then we can define $H_{i,n}=E^{H_i}_{[0,n)} H_i E^{H_i}_{[0,n)}$ and using what we get in equation \eqref{eq:calculationX17} we obtain
	$$\begin{aligned}
	\phi_{H}=\sup_{i\in I}\phi_{H_{i,n}}=\sup_{i\in I}\phi_{H_i}.
	\end{aligned}$$
\end{proof}

\begin{notation}
Henceforth, when $H$ is a positive unbounded operator affiliated with $\mathfrak{M}_{\tau^\phi}^+$ we will consider the weight $\phi_H$ and usually write $\phi(HA)$ instead of $\phi_H(A)$.

\end{notation}

 
\begin{theorem}[Pedersen-Takesaki-Radon-Nikodym] \index{theorem! Pedersen-Takesaki-Radon-Nikodym}
	\label{TPTRN}
	Let $\phi$ and $\psi$ be two normal semifinite weights on a von Neumann algebra $\nalgebra$. Suppose in addition that $\phi$ is faithful and $\psi$ is invariant under the modular automorphism group of $\phi$, $\{\tau^\phi_t\}_{t\in \mathbb{R}}$, and $\psi\leq \phi$. Then, there exists a unique operator $H\in \mathfrak{M}_{\tau^\phi}$ with $0\leq H\leq \mathbbm{1}$ and such that $\psi(A)=\phi(HA)$ for all $A\in \nalgebra_+$.
\end{theorem}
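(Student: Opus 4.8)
The plan is to reduce the Pedersen--Takesaki theorem to the already-proven bounded case, Theorem \ref{TSRNweights}, by exploiting the extra hypothesis that $\psi$ is invariant under $\{\tau^\phi_t\}_{t\in\mathbb{R}}$. Since $\psi\leq\phi$ and both are normal semifinite weights, Theorem \ref{TSRNweights} immediately yields an operator $K\in\nalgebra$ with $0\leq K\leq\mathbbm{1}$ and $\psi(A)=\phi(K A K)$ for all $A\in\nalgebra_+$. The goal is to replace this two-sided expression $\phi(KAK)$ by a one-sided expression $\phi(HA)$ with $H$ in the centralizer $\mathfrak{M}_{\tau^\phi}$. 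First I would show that $K$ can be chosen in $\mathfrak{M}_{\tau^\phi}$: the $\tau^\phi$-invariance of $\psi$ means $\psi\circ\tau^\phi_t=\psi$, and by the uniqueness part of the bounded Radon--Nikodym derivative (the operator in Theorem \ref{TSRNweights} is unique) together with the fact that $\tau^\phi_t$ is the modular automorphism group, one gets $\tau^\phi_t(K)=K$ for all $t$, hence $K\in\mathfrak{M}_{\tau^\phi}$.

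Once $K\in\mathfrak{M}_{\tau^\phi}$, I would invoke Theorem \ref{centralizercommute}: for $K$ in the centralizer we have $\phi(KB)=\phi(BK)$ for all $B\in\mathfrak{M}_\phi$, and $K\mathfrak{M}_\phi\subset\mathfrak{M}_\phi$, $\mathfrak{M}_\phi K\subset\mathfrak{M}_\phi$. This tracial-type property lets me manipulate the two-sided product: for $A\in\nalgebra_+$, writing the computation through $\mathfrak{N}_\phi$ elements and using the commutation $\phi(KAK)=\phi(K^2 A)$ (valid because $K^2\in\mathfrak{M}_{\tau^\phi}$ and the centralizer property of Theorem \ref{centralizercommute} permits cycling the factors of $K$ past $A$ inside $\phi$). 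Setting $H=K^2$, one has $0\leq H\leq\mathbbm{1}$ since $0\leq K\leq\mathbbm{1}$, and $H\in\mathfrak{M}_{\tau^\phi}$ because the centralizer is a von Neumann subalgebra. This gives $\psi(A)=\phi(HA)$ for all $A\in\nalgebra_+$, where $\phi(HA)$ is interpreted via the weight $\phi_H$ as defined in Lemma \ref{unboundedderivativeweight} (though here $H$ is bounded, so the notation is the elementary one).

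For uniqueness, suppose $H_1,H_2\in\mathfrak{M}_{\tau^\phi}$ with $0\leq H_j\leq\mathbbm{1}$ both satisfy $\psi(A)=\phi(H_jA)$ for all $A\in\nalgebra_+$. Then $\phi\big((H_1-H_2)A\big)=0$ for all $A\in\nalgebra_+$; using the centralizer commutation and the faithfulness of $\phi$, together with a suitable approximation of the identity drawn from $\mathfrak{M}_\phi$ (available by semifiniteness and Proposition \ref{twosidedmodule}), one tests against $A$ ranging over a dense enough family to conclude $H_1=H_2$. Concretely, choosing $A$ to be spectral projections of $H_1-H_2$ composed with elements of the centralizer and invoking faithfulness forces the self-adjoint operator $H_1-H_2$ to vanish.

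The main obstacle I expect is the first step --- upgrading the bounded derivative $K$ from Theorem \ref{TSRNweights} to an element of the centralizer $\mathfrak{M}_{\tau^\phi}$. This requires carefully combining the $\tau^\phi$-invariance hypothesis on $\psi$ with the uniqueness of the bounded Radon--Nikodym operator, and verifying that $\tau^\phi_t$ genuinely transports one valid derivative to another (i.e. that $\phi\big(\tau^\phi_t(K)A\tau^\phi_t(K)\big)=\psi(A)$ too, which uses $\phi\circ\tau^\phi_t=\phi$ from the modular condition Definition \ref{ModularCondition}). The passage from the two-sided form $\phi(KAK)$ to the one-sided $\phi(K^2A)$ is the other delicate point, since it relies essentially on Theorem \ref{centralizercommute} and is exactly where the centralizer hypothesis becomes indispensable --- without $K\in\mathfrak{M}_{\tau^\phi}$ the factors of $K$ cannot be cycled past $A$.
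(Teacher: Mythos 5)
Your reduction has an appealing shape, and its final step is sound: once $K\in\mathfrak{M}_{\tau^\phi}$, Theorem \ref{centralizercommute} does give, for $A\in\mathfrak{M}_\phi$, that $AK\in\mathfrak{M}_\phi$ and hence $\phi(K(AK))=\phi((AK)K)=\phi(AK^2)=\phi(K^2A)$, so $H=K^2$ works on $\mathfrak{M}_\phi$ and extends by normality (the same extension the paper itself glosses). The genuine gap is the first step. You conclude $\tau^\phi_t(K)=K$ from ``the uniqueness part of Theorem \ref{TSRNweights}'', but neither Theorem \ref{TSRNweights} nor Theorem \ref{TSRN} states or proves uniqueness of the operator $K$ with $\psi(\cdot)=\phi(K\cdot K)$ (the word ``unique'' appears once inside the proof of Theorem \ref{TSRNweights}, but it is not justified by the theorem it cites). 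This is not a cosmetic omission: the correspondence $K\mapsto\phi(K\cdot K)$ is \emph{quadratic} in $K$, so uniqueness cannot be obtained by the usual linear separation argument, and you also cannot repair the step by averaging $\tau^\phi_t(K)$ over $t$, since convex combinations of derivatives are not derivatives for a quadratic form. Your observation that $\tau^\phi_{-t}(K)$ is again a valid derivative (using $\phi\circ\tau^\phi_t=\phi$ and $\psi\circ\tau^\phi_t=\psi$) is correct, but without uniqueness it yields nothing, and then, as you note yourself, the centralizer step collapses.

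The paper avoids exactly this trap by working with the \emph{commutant} derivative of Proposition \ref{commutantRN}: $\psi(A)=\ip{H^\prime A\Phi}{\Phi}_\phi$ with $H^\prime\in\pi_\phi(\nalgebra)^\prime$. That representation is \emph{linear} in $H^\prime$, so its uniqueness is automatic: if $\ip{(H^\prime_1-H^\prime_2)A\Phi}{\Phi}_\phi=0$ for all $A$, then $(H^\prime_1-H^\prime_2)\Phi=0$, and $\Phi$ is separating for the commutant. Hence the invariance of $\psi$ immediately forces invariance of $H^\prime$ under the modular group; the paper then transports $H^\prime$ into $\nalgebra$ via Tomita's theorem, $H=J_\phi H^\prime J_\phi$, and uses the analytic-continuation computation of equation \eqref{eq:calculationRN} (resting on Corollary \ref{clemma5}) to identify $\psi$ with $\phi\bigl(H^{\frac{1}{2}}\,\cdot\,H^{\frac{1}{2}}\bigr)$ and finally with $\phi(H\,\cdot\,)$. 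To salvage your route you would first have to prove uniqueness of the Sakai-type derivative for weights, which is a theorem of essentially the same depth as the modular-theoretic argument you are trying to bypass.
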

\begin{proof}
	
	By Proposition \ref{commutantRN}, there exists $H^\prime \in \pi_\phi(\nalgebra)^\prime$, $0\leq H^\prime \leq \mathbbm{1}$, such that $$ \psi(A)=\ip{H^\prime A^\ast \Phi}{\Phi}_\phi \quad \forall A\in\nalgebra_+.$$
	
	Notice that the invariance of $\psi$ implies, for $A\in\nalgebra_\phi$,
	$$\psi\left(\tau^\phi_{t}(A)\right)=\ip{H^\prime\tau^\phi_{t}(A)\Phi}{\Phi}_\phi=\ip{\tau^\phi_{t}(H^\prime)A\Phi}{\Phi}_\phi=\ip{H^\prime A\Phi}{\Phi}_\phi=\psi(A),$$	thus $H^\prime$ is invariant under $\{\tau^\phi_t\}_{t\in \mathbb{R}}$.
	
	By Tomita's theorem, we have that $H=J_\phi H^\prime J_\phi \in \nalgebra$ and it is also $\{\tau^\phi_t\}_{t\in \mathbb{R}}$-invariant. Furthermore, for every $A\in \mathfrak{N}_\phi\cap \mathfrak{N}_\phi^\ast$,
	
	\begin{equation}
	\label{eq:calculationRN}
	\begin{aligned}
	\phi\left(H^\frac{1}{2}\tau^\phi_{z-\frac{\iu}{2}}(A)H^\frac{1}{2}\right)
	&=\ip{\tau^\phi_{z-\frac{\iu}{2}}(A)^\ast J_\phi H^{\prime\frac{1}{2}} J_\phi\Phi}{J_\phi H^{\prime \frac{1}{2}} J_\phi\Phi}_\phi\\
	&=\ip{\tau^\phi_{z-\frac{\iu}{2}}(A)^\ast \Delta_\phi^{-\frac{1}{2}}J_\phi\Delta_\phi^{-\frac{1}{2}} H^{\prime \frac{1}{2}}\Phi}{\Delta_\phi^{-\frac{1}{2}}J_\phi \Delta_\phi^{-\frac{1}{2}} H^{\prime\frac{1}{2}} \Phi}_\phi\\
	&=\ip{\tau^\phi_{z-\frac{\iu}{2}}(A^\ast) \Delta_\phi^{-\frac{1}{2}} H^{\prime \frac{1}{2}} \Phi}{\Delta_\phi^{-\frac{1}{2}}H^{\prime \frac{1}{2}} \Phi}_\phi\\
	&=\ip{\Delta_\phi^{-\frac{1}{2}}\tau^\phi_{z-\frac{\iu}{2}}(A^\ast) \Delta_\phi^{-\frac{1}{2}} H^{\prime\frac{1}{2}} \Phi}{H^{\prime\frac{1}{2}} \Phi}_\phi\\	
	&=\ip{\tau^\phi_{z}(A^\ast) H^{\prime\frac{1}{2}} \Phi}{H^{\prime\frac{1}{2}} \Phi}_\phi\\	
	&=\ip{\tau^\phi_{z}(A^\ast) H^{\prime} \Phi}{ \Phi}_\phi\\	
	&=\psi\left(\tau^\phi_{z}(A)\right).	
	\end{aligned}
	\end{equation}
	
	Using the analyticity of the left-hand side of equation \eqref{eq:calculationRN} (see Corollary \ref{clemma5}) and the constancy of the right-hand side on the line $\Im{z}=0$, there remains no possibility but the constancy of the analytic extension for the strip $\strip{\frac{1}{2}}$.
	Finally, basically undoing the steps in equation \eqref{eq:calculationRN}, we get, for every ${A\in \mathfrak{N}_\phi\cap \mathfrak{N}_\phi^\ast}$,
	$$\begin{aligned}
	\phi\left(H^\frac{1}{2} A H^\frac{1}{2}\right)&=\ip{A^\ast H^\prime\Phi}{\Phi}_\phi\\
	&=\ip{A^\ast J_\phi \Delta_\phi^{-\frac{1}{2}} H^\prime \Delta_\phi^{-\frac{1}{2}} J_\phi\Phi}{\Phi}_\phi\\
	&=\ip{A^\ast H\Phi}{\Phi}_\phi\\
	&=\phi(HA).
	\end{aligned}$$
	
	Hitherto we have proved that $\psi(A)=\phi(HA)$ for all $A \in \mathfrak{N}_\phi\cap \mathfrak{N}_\phi^\ast$, but this result can be extended for every $A\in\nalgebra_+$ using semifiniteness and normality.  

\end{proof}

This theorem reveals some interesting property about the nature of the modular automorphism group, stated below. Curiously, this corollary will be used to improve the result of which it comes.

\begin{corollary}
	\label{invariancecommute}
	Let $\phi$ and $\psi$ be two faithful normal semifinite weights on a von Neumann algebra $\nalgebra$, and $\{\tau^\phi_t\}_{t\in\mathbb{R}}$ and $\{\tau^\psi_t\}_{t\in\mathbb{R}}$ be their automorphism groups, respectively. The following are equivalent:
	\begin{enumerate} [(i)]
		\item $\psi=\psi\circ\tau^\phi_t$;
		\item $\{\tau^\phi_t\}_{t\in \mathbb{R}}$ and $\{\tau^\psi_t\}_{t\in\mathbb{R}}$ commute;
		\item $\phi=\phi\circ\tau^\psi_t$.
	\end{enumerate}  
\end{corollary}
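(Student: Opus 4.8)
The plan is to prove the equivalence of the three conditions by establishing the cycle $(i)\Rightarrow(ii)\Rightarrow(iii)\Rightarrow(i)$, relying heavily on Theorem \ref{TPTRN} (Pedersen-Takesaki-Radon-Nikodym) and the uniqueness of the modular automorphism group (Theorem \ref{UniqueModCond}). The central observation is that $(i)$ says $\psi$ is invariant under $\{\tau^\phi_t\}_{t\in\mathbb{R}}$, which is precisely the hypothesis that unlocks a noncommutative Radon-Nikodym derivative living in the centralizer $\mathfrak{M}_{\tau^\phi}$.

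First I would prove $(i)\Rightarrow(ii)$. Assuming $\psi = \psi\circ\tau^\phi_t$, the derivative of $\psi$ with respect to $\phi$ (after rescaling so that domination holds, or more robustly via the Connes cocycle) is $\tau^\phi$-invariant. The cleanest route is through the Connes cocycle $(D\psi:D\phi)_t$: condition $(i)$ is equivalent to saying that the cocycle is a genuine one-parameter group of unitaries rather than merely a cocycle, and the standard chain-rule identity $\tau^\psi_t = \mathrm{Ad}\big((D\psi:D\phi)_t\big)\circ\tau^\phi_t$ then shows that $\tau^\psi_t$ and $\tau^\phi_s$ commute for all $s,t$, since the invariance forces $\tau^\phi_s\big((D\psi:D\phi)_t\big)=(D\psi:D\phi)_t$. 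Concretely, I would use Theorem \ref{TPTRN} to extract a positive operator $H\eta\mathfrak{M}_{\tau^\phi}$ implementing $\psi(A)=\phi(HA)$, observe via Theorem \ref{centralizercommute} that $H$ being affiliated with the centralizer means $\tau^\phi_t$ fixes the spectral projections of $H$, and conclude that $\tau^\psi$ is built from $\tau^\phi$ together with the (now $\tau^\phi$-fixed) modular data of $H$, so the two groups commute.

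Next, $(iii)\Rightarrow(i)$ is completely symmetric to $(i)\Rightarrow(ii)$ with the roles of $\phi$ and $\psi$ interchanged — more precisely, the whole argument is symmetric enough that it suffices to prove $(i)\Rightarrow(ii)$ and then note that $(ii)$ is manifestly symmetric in $\phi$ and $\psi$, so by the symmetric version of the first implication $(iii)\Rightarrow(ii)$ holds too, and the reverse implications $(ii)\Rightarrow(i)$ and $(ii)\Rightarrow(iii)$ close the loop. For $(ii)\Rightarrow(i)$: if the two modular groups commute, then $\tau^\phi_t$ is an automorphism commuting with $\tau^\psi_s$, and applying the uniqueness characterization (Theorem \ref{UniqueModCond}) together with the fact that the modular condition for $\psi$ is preserved under $\tau^\phi_t$, one checks that $\psi\circ\tau^\phi_t$ satisfies the same modular condition as $\psi$ for the same group $\{\tau^\psi_s\}_{s\in\mathbb{R}}$; uniqueness of the weight satisfying a given modular condition (up to the standard normalization) then yields $\psi\circ\tau^\phi_t=\psi$.

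The hard part will be making the passage between invariance of the weight and commutation of the groups fully rigorous without simply invoking the Connes cocycle machinery as a black box, since the excerpt develops the Radon-Nikodym theory only for the dominated case $\psi\leq\phi$. The main obstacle is therefore removing the domination hypothesis: the cleanest fix is to replace $\psi$ by $\psi_\varepsilon$ built from spectral truncations so that $\psi_\varepsilon\leq c\,\phi$ holds on the relevant ideal, apply Theorem \ref{TPTRN} to get $H_\varepsilon\in\mathfrak{M}_{\tau^\phi}$, and then pass to the limit using Lemma \ref{unboundedderivativeweight}, which precisely guarantees that an increasing net of such derivatives assembles into a possibly unbounded $H\eta\mathfrak{M}_{\tau^\phi}^+$ with $\psi=\phi_H$. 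Once $H$ is known to be affiliated with the centralizer, Theorem \ref{centralizercommute} delivers the commutation of the groups almost immediately, so concentrating the technical effort on this limiting argument is where I would spend the bulk of the work.
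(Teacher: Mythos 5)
Your cycle breaks at two places, and the more serious one is $(ii)\Rightarrow(i)$. You argue that, when the groups commute, $\psi$ and $\psi\circ\tau^\phi_t$ both satisfy the modular condition for the same group $\{\tau^\psi_s\}_{s\in\mathbb{R}}$, and then invoke ``uniqueness of the weight satisfying a given modular condition.'' But Theorem \ref{UniqueModCond} gives uniqueness in the opposite direction: a weight determines its modular group; a group does not determine the weight. Indeed $\lambda\psi$ (any $\lambda>0$), and more generally $\psi_H$ for any positive non-singular $H$ affiliated to the center $\mathcal{Z}(\nalgebra)$, all have exactly the modular group $\{\tau^\psi_s\}_{s\in\mathbb{R}}$, and for weights there is no normalization available to quotient away this ambiguity. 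This is precisely why $(ii)\Rightarrow(iii)$ is the long part of the paper's proof: it forms $\omega=\phi\circ\tau^\psi_t+\phi$ (so that $\phi\leq\omega$ and domination comes for free), proves $\omega$ is semifinite by an analytic-elements density argument, applies Theorem \ref{TPTRN} to get $K$, sets $H=K/(\mathbbm{1}-K)$, uses the formula $\tau^\phi_t(A)=\tau^\omega_t\left(H^{\iu t}AH^{-\iu t}\right)$ to conclude that $H$ is affiliated to the center, and only then rules out $H\neq\mathbbm{1}$ by a faithfulness argument on spectral projections of $H$. Your proposal contains no substitute for this central-ambiguity step, so the loop does not close.

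The second gap is your removal of the domination hypothesis in $(i)\Rightarrow(ii)$ by ``spectral truncations'' $\psi_\varepsilon\leq c\,\phi$. An abstract weight $\psi$ has no spectral decomposition relative to $\phi$: the object you would truncate is the Radon--Nikodym derivative, which is exactly what you are trying to construct, so the step is circular; and Lemma \ref{unboundedderivativeweight} only passes from an affiliated operator to a weight, never the other way. You also cannot appeal to Theorem \ref{TPTRN2} or Lemma \ref{sumsemifinite} to bypass this, since both are proved \emph{using} the corollary at hand. Note finally that the paper's $(i)\Rightarrow(ii)$ needs none of this machinery: since $\psi\circ\tau^\phi_s=\psi$, Corollary \ref{modularautomosphism} identifies the modular group of the left-hand side as $\left(\tau^\phi_s\right)^{-1}\circ\tau^\psi_t\circ\tau^\phi_s$, and uniqueness of the modular group (used in the legitimate direction, group determined by weight) gives $\tau^\psi_t=\left(\tau^\phi_s\right)^{-1}\circ\tau^\psi_t\circ\tau^\phi_s$, i.e.\ commutation, in two lines; your Radon--Nikodym route for this implication is sound in the dominated case $\psi\leq\phi$ but is blocked in general.
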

\begin{proof}
	$(i)\Rightarrow (ii)$ By Corollary \ref{modularautomosphism}, $\displaystyle \tau^{\psi}_t=\tau^{\psi\circ\tau^\phi_t}_t={\tau^{\phi}}^{-1}_t\circ\tau^{\psi}_t\circ\tau^{\phi}_t\Rightarrow \tau^{\phi}_t\circ\tau^{\psi}_t=\tau^{\psi}_t\circ\tau^{\phi}_t$.
	
	$(ii)\Rightarrow (iii)$ If the automorphism groups commute, by Corollary \ref{modularautomosphism}, $\phi\circ\tau^\psi_t$ has $\{\tau^\phi_t\}_{t\in\mathbb{R}}$ as its automorphism group. In addition, it is obvious that $\phi\circ\tau^\psi_t$ is normal and semifinite.
	
	Then, also $\omega=\phi\circ\tau^\psi_t+\phi$ is normal and has $\{\tau^\phi_t\}_{t\in\mathbb{R}}$ as its automorphism group. Let's prove it is semifinite.
	
	Since $\phi$ and $\phi\circ\tau^\psi_t$ are invariant under the action of $\{\tau^\phi_t\}_{t\in \mathbb{R}}$, so are $\mathfrak{N}_\phi$, $\mathfrak{N}_{\phi}^\ast$, $\mathfrak{N}_{\phi\circ\tau^\psi_t}$, and $\mathfrak{N}_{\phi\circ\tau^\psi_t}^\ast$. Thus, by item $(iv)$ in Proposition \ref{simplepropweights}, also $\mathfrak{M}_\psi$ and $\mathfrak{M}_{\phi\circ\tau^\psi_t}$ are invariant under the action of the automorphism group.
	
	We can use the very same proof of Proposition \ref{AnalDense} to obtain, for any $A\in \mathfrak{M}_\phi$, a sequence $(A_n)_{n\in\mathbb{N}} \subset \mathfrak{M}_\phi$ of analytic elements for $\{\tau^\phi_t\}_{t\in \mathbb{R}}$ such that $A_n\xrightarrow{WOT} A$ throughout equation \eqref{formulaAE}. The same holds for any $A\in \mathfrak{M}_{\phi\circ\tau^\psi_t}$.
	
	Hence, the sets $\mathcal{M}_\phi\cap\mathcal{M}_\mathcal{A}$ and $\mathcal{M}_{\phi\circ\tau^\psi_t}\cap\mathcal{M}_\mathcal{A}$ are WOT-dense in $\mathcal{M}_\psi$, which itself is WOT-dense in $\nalgebra$, because the multiplication is separately WOT-continuous.
	
	Finally, by Proposition \ref{twosidedmodule}, $\left(\mathcal{M}_\phi\cap\mathcal{M}_\mathcal{A}\right)\left(\mathcal{M}_{\phi\circ\tau^\psi_t}\cap\mathcal{M}_\mathcal{A}\right)\subset \mathcal{M}_\phi\cap\mathcal{M}_{\phi\circ\tau^\psi_t}$, and the set on the right-hand side is WOT-dense.
	
	Since $\omega$ is a faithful normal semifinite weight on $\nalgebra$, such that $\phi=\phi\circ\tau^\omega_t$ and $\phi\leq\omega$, by Theorem \ref{TPTRN}, there exists a unique operator $K\in \nalgebra$, invariant for $\{\tau^\omega_t=\tau^\phi_t\}_{t\in \mathbb{R}}$, such that $$
	\phi(A)=\omega(KA)=\phi\circ\tau^\psi_t(KA)+\phi(KA) \quad \forall A\in \nalgebra_+.$$
	
	Since both $\phi$ and $\phi\circ\tau^\psi_t$ are faithful, $\{0,1\}\notin \sigma(K)$ and we can define the positive operator $H=\frac{K}{\mathbbm{1}-K}$, which is affiliated to $\mathfrak{M}_{\tau\phi}$. Let $\left\{E^H_\lambda\right\}_{\lambda\in\mathbb{R}}$ be the spectral projections of $H$, we now that $\displaystyle\bigcup_{n\in\mathbb{N}}E^H_{(0,n)}\nalgebra_+ E^H_{(0,n)}$ is dense in $\nalgebra_+$ and
	\begin{equation}
	\label{eq:calculationX7}
	\phi(A)=\phi\circ\tau^\psi_t(HA) \qquad  \forall A\in \bigcup_{n\in\mathbb{N}}E^H_{(0,n)}\nalgebra_+ E^H_{(0,n)}.
	\end{equation}
	
	By Lemma 1 in \cite{takesaki70} or Theorem 2.11 in \cite{Takesaki2003}, we know that the automorphism group for a weigh as in equation \eqref{eq:calculationX7} is given by $$\begin{aligned}
	\tau^\phi_t(A)=\tau^\omega_t\left(H^{\iu t}AH^{-\iu t}\right)&=\tau^\phi_t(H^{\iu t}AH^{-\iu t})\\
	&\Rightarrow A=H^{\iu t}AH^{-\iu t} \quad \forall A\in \mathfrak{N}_\phi, \\
	&\Rightarrow H\eta\mathcal{Z}(\nalgebra).
	\end{aligned}$$
	
	If $H\neq \mathbbm{1}$, there exists a projection $P\in\mathcal{F}_\phi \cap \mathfrak{M}_{\tau^\phi}$ such that $P\leq E^H_{(a,b)}$ for some $a,b\in\mathbb{R}$ with $1\neq(a,b)$. Then, either $HP<P$ or $HP>P$, but this leads to a contradiction since
	$$
	\phi(P)\neq\phi(HP)=\phi\circ\tau^\psi_t(P)=\phi\left(\tau^\psi_t(P)\right)=\phi(P).$$
	
	The conclusion is that $H=\mathbbm{1}$ and then $\phi(A)=\phi\circ\tau^\psi_t(HA)=\phi\circ\tau^\psi_t(A)$.
	
	$(iii)\Rightarrow (i)$ is obvious just applying $(i)\Rightarrow(iii)$ for $\phi$ instead of $\psi$. 
\end{proof}

In order to generalise the previous theorem, we will need the following lemma.

\begin{lemma}
	\label{sumsemifinite}
	Let $\phi$ and $\psi$ be two faithful normal semifinite weights on a von Neumann algebra $\nalgebra$. Suppose $\psi$ is invariant under the modular automorphism group $\{\tau^\phi_t\}_{t\in \mathbb{R}}$ of $\phi$. Then $\phi+\psi$ is semifinite.
\end{lemma}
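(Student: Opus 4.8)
The plan is to show that $\mathfrak{M}_{\phi+\psi}$ is weakly dense in $\nalgebra$, which is precisely the semifiniteness of $\phi+\psi$ (its normality being automatic from that of $\phi$ and $\psi$). The starting point is Corollary \ref{invariancecommute}: since $\psi=\psi\circ\tau^\phi_t$, we also obtain $\phi=\phi\circ\tau^\psi_t$ and that the two modular automorphism groups $\{\tau^\phi_t\}_{t\in\mathbb{R}}$ and $\{\tau^\psi_t\}_{t\in\mathbb{R}}$ commute. In particular both $\mathfrak{N}_\phi\cap\mathfrak{N}_\phi^\ast$ and $\mathfrak{N}_\psi\cap\mathfrak{N}_\psi^\ast$ are invariant under each group, and directly from the definitions one has $\mathfrak{N}_{\phi+\psi}=\mathfrak{N}_\phi\cap\mathfrak{N}_\psi$ and $\mathfrak{N}_{\phi+\psi}^\ast=\mathfrak{N}_\phi^\ast\cap\mathfrak{N}_\psi^\ast$.

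First I would manufacture, inside $\mathfrak{N}_\phi\cap\mathfrak{N}_\phi^\ast$, a weakly dense set of elements that are entire analytic for $\tau^\psi$. Because $\phi=\phi\circ\tau^\psi_t$, the group $\tau^\psi$ is implemented by a one-parameter unitary group on $\hilbert_\phi$; applying the Gaussian regularisation of Proposition \ref{AnalDense} with respect to $\tau^\psi$ to an element $N\in\mathfrak{N}_\phi\cap\mathfrak{N}_\phi^\ast$ produces $\tau^\psi$-entire analytic elements $N^{(n)}=\sqrt{n/\pi}\int_\mathbb{R} e^{-nt^2}\tau^\psi_t(N)\,dt$ satisfying $\|\eta_\phi(N^{(n)})\|\leq\|\eta_\phi(N)\|$, so that $N^{(n)}\in\mathfrak{N}_\phi\cap\mathfrak{N}_\phi^\ast$ and $N^{(n)}\to N$ weakly. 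Since $\mathfrak{M}_\phi\subseteq\mathfrak{N}_\phi\cap\mathfrak{N}_\phi^\ast$ is already weakly dense by semifiniteness of $\phi$, the set $S_\phi$ of $\tau^\psi$-analytic elements of $\mathfrak{N}_\phi\cap\mathfrak{N}_\phi^\ast$ is weakly dense in $\nalgebra$. The symmetric construction, smoothing with respect to $\tau^\phi$, yields a weakly dense set $S_\psi\subseteq\mathfrak{N}_\psi\cap\mathfrak{N}_\psi^\ast$ of $\tau^\phi$-analytic elements.

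The key algebraic step is then that $N_\psi N_\phi\in\mathfrak{N}_{\phi+\psi}$ whenever $N_\phi\in S_\phi$ and $N_\psi\in S_\psi$. Indeed, since $N_\psi$ is $\tau^\phi$-analytic and $N_\phi\in\mathfrak{N}_\phi\cap\mathfrak{N}_\phi^\ast$, Proposition \ref{twosidedmodule}(i) for $\phi$ gives $N_\psi N_\phi\in\mathfrak{N}_\phi\cap\mathfrak{N}_\phi^\ast$; symmetrically, since $N_\phi$ is $\tau^\psi$-analytic and $N_\psi\in\mathfrak{N}_\psi\cap\mathfrak{N}_\psi^\ast$, the same proposition for $\psi$ gives $N_\psi N_\phi\in\mathfrak{N}_\psi\cap\mathfrak{N}_\psi^\ast$, whence $N_\psi N_\phi\in\mathfrak{N}_{\phi+\psi}$. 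To conclude density I would fix $N_\phi\in S_\phi$ and, using that $S_\psi$ is weakly dense and $\nalgebra$ is unital, choose a net in $S_\psi$ converging weakly to $\mathbbm{1}$; as right multiplication by $N_\phi$ is weakly continuous, the corresponding products remain in $\mathfrak{N}_{\phi+\psi}$ and converge weakly to $N_\phi$. Thus $S_\phi\subseteq\overline{\mathfrak{N}_{\phi+\psi}}^{WOT}$, so $\mathfrak{N}_{\phi+\psi}$ (and hence $\mathfrak{N}_{\phi+\psi}^\ast$) is weakly dense, and one more application of the same approximate-unit trick upgrades this to weak density of $\mathfrak{M}_{\phi+\psi}=\mathfrak{N}_{\phi+\psi}^\ast\mathfrak{N}_{\phi+\psi}$, establishing semifiniteness.

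I expect the main obstacle to be the verification that the Gaussian regularisation does not leave $\mathfrak{N}_\phi\cap\mathfrak{N}_\phi^\ast$ — that is, that a weak integral of elements of finite $\phi$-norm again has finite $\phi$-norm. This is exactly where the invariance $\phi=\phi\circ\tau^\psi_t$ is indispensable, through the unitary implementation of $\tau^\psi$ on $\hilbert_\phi$ and the resulting contraction estimate $\|\eta_\phi(N^{(n)})\|\leq\|\eta_\phi(N)\|$; everything else is bookkeeping of the left/right module actions already packaged in Proposition \ref{twosidedmodule} together with the commutation of the two modular groups supplied by Corollary \ref{invariancecommute}.
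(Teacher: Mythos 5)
Your proof is correct and follows essentially the same route as the paper's: Corollary \ref{invariancecommute} to obtain the mutual invariance $\phi=\phi\circ\tau^\psi_t$, Gaussian regularisation as in Proposition \ref{AnalDense} (with the invariance supplying the unitary implementation and the contraction estimate) to produce WOT-dense sets of analytic elements, and the two-sided module property of Proposition \ref{twosidedmodule} combined with separate WOT-continuity of multiplication to get density of the product set. The only harmless variations are that you smooth each set with respect to the other weight's modular group alone, whereas the paper uses the double Gaussian $A_n=\frac{n}{\pi}\int_\mathbb{R}\int_\mathbb{R} e^{-n(t^2+s^2)}\,\tau^\psi_t\circ\tau^\phi_s(A)\,dt\,ds$ to make its elements analytic for both groups simultaneously, and that you work with $\mathfrak{N}_\phi\cap\mathfrak{N}_\phi^\ast$ rather than $\mathfrak{M}_\phi$, which forces your explicit final upgrade from WOT-density of $\mathfrak{N}_{\phi+\psi}$ to that of $\mathfrak{M}_{\phi+\psi}$ — a step the paper absorbs by taking the products $\left(\mathfrak{M}_\phi\cap\nanalytic\right)\left(\mathfrak{M}_\psi\cap\nanalytic\right)\subset \mathfrak{M}_\phi\cap\mathfrak{M}_\psi$ directly.
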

\begin{proof}
	
	Since $\psi$ is invariant under the action of $\{\tau^\phi_t\}_{t\in \mathbb{R}}$ and $\{\tau^\psi_t\}_{t\in \mathbb{R}}$, so is $\phi$, by Corollary \ref{invariancecommute}. Hence, as before, $\mathfrak{M}_\phi$ and $\mathfrak{M}_\psi$ are invariant under the action of those automorphism groups.
	
	Since the automorphism groups commute, by Corollary \ref{invariancecommute}, we can use the very same proof of Proposition \ref{AnalDense} to obtain, for any $A\in \mathfrak{M}_\phi$, a sequence $(A_n)_{n\in\mathbb{N}} \subset \mathfrak{M}_\phi$ of analytic elements for both $\{\tau^\phi_t\}_{t\in \mathbb{R}}$ and $\{\tau^\psi_t\}_{t\in \mathbb{R}}$ such that $A_n\xrightarrow{WOT} A$ throughout the following expression
	$$A_n=\frac{n}{\pi}\int_\mathbb{R}\int_\mathbb{R}{e^{-n(t^2+s^2)}\tau^\psi_t\circ\tau^\phi_s(A) dt ds}.$$
	
	Hence, $\mathcal{M}_\phi\cap\mathcal{M}_\mathcal{A}$ is a WOT-dense in $\mathcal{M}_\phi$, which in turn is WOT-dense in $\nalgebra$. By the very same argument, $\mathcal{M}_\psi\cap\mathcal{M}_\mathcal{A}$ is a WOT-dense set in $\nalgebra$.
	
	Finally, $\left(\mathcal{M}_\phi\cap\mathcal{M}_\mathcal{A}\right)\left(\mathcal{M}_\psi\cap\mathcal{M}_\mathcal{A}\right)\subset \mathcal{M}_\phi\cap\mathcal{M}_\psi$ and the right-hand side is WOT-dense in $\nalgebra$, because the multiplication is separately WOT-continuous.
	
\end{proof}

Finally, the theorem that gives name to this section in all  generality we will need.

\begin{theorem}[Pedersen-Takesaki-Radon-Nikodym]\index{theorem! Pedersen-Takesaki-Radon-Nikodym}
	\label{TPTRN2}
	Let $\phi$ and $\psi$ be two normal semifinite weights on a von Neumann algebra $\nalgebra$. Suppose, in addition, that $\phi$ is faithful and $\psi$ is invariant under the modular automorphism group $\{\tau^\phi_t\}_{t\in \mathbb{R}}$ of $\phi$. Then, there exists a unique positive operator $H\eta\mathfrak{M}_{\tau^\phi}$ such that $\psi(A)=\phi(HA)$, for all $A\in \nalgebra_+$.
\end{theorem}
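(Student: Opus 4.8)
The plan is to reduce the statement to the already established bounded version, Theorem \ref{TPTRN}, by passing to the auxiliary weight $\omega = \phi + \psi$ and taking a Radon--Nikodym derivative relative to $\omega$ instead of relative to $\phi$.

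First I would check that $\omega$ is a faithful normal semifinite weight. Normality is clear, faithfulness follows at once from that of $\phi$ together with $\psi \ge 0$, and semifiniteness is exactly Lemma \ref{sumsemifinite}. Next, since $\phi\circ\tau^\phi_t = \phi$ (condition $(i)$ of Definition \ref{ModularCondition}) and $\psi\circ\tau^\phi_t = \psi$ by hypothesis, $\omega$ is invariant under $\{\tau^\phi_t\}_{t\in\mathbb{R}}$; applying Corollary \ref{invariancecommute} to the faithful pair $(\phi,\omega)$ then yields $\phi = \phi\circ\tau^\omega_t$, i.e. $\phi$ is invariant under the modular group of $\omega$. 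Since moreover $\phi \le \omega$, Theorem \ref{TPTRN} provides $K_1 \in \mathfrak{M}_{\tau^\omega}$ with $0 \le K_1 \le \mathbbm{1}$ and $\phi(A) = \omega(K_1 A)$ for all $A \in \nalgebra_+$. Putting $K_2 = \mathbbm{1} - K_1 \in \mathfrak{M}_{\tau^\omega}$ and using that $K_1,K_2 \in \nanalytic$ act on $\mathfrak{M}_\omega$ by Proposition \ref{twosidedmodule}, additivity on $\mathfrak{M}_\omega^+$ gives $\psi(A) = \omega(A) - \phi(A) = \omega(K_2 A)$, first on $\mathfrak{M}_\omega^+$ and then on all of $\nalgebra_+$ by normality and semifiniteness. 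Finally, because $\phi = \omega_{K_1}$ is faithful, Lemma \ref{unboundedderivativeweight} forces $K_1$ to be non-singular.

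The operator I would exhibit is $H = K_1^{-1} - \mathbbm{1} = K_2 K_1^{-1} \ge 0$, a positive operator affiliated with $\mathfrak{M}_{\tau^\omega}$. The crucial point is that $H$ is in fact affiliated with the centralizer of $\phi$. For this I would invoke the formula for the modular group of a perturbed weight (\cite{takesaki70}, \cite{Takesaki2003}, as already used in the proof of Corollary \ref{invariancecommute}): from $\phi = \omega_{K_1}$ one has $\tau^\phi_t(A) = K_1^{it}\tau^\omega_t(A)K_1^{-it}$ for all $t$, using $\tau^\omega_t(K_1)=K_1$. Since $H$ and all spectral projections $P_n = E^{K_1}_{[1/n,1]}$ of $K_1$ commute with $K_1$ and are fixed by $\tau^\omega$, they are fixed by $\tau^\phi$ as well; hence $H$ and the bounded truncations $H_n = H P_n$ all lie in $\mathfrak{M}_{\tau^\phi}$. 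To identify $\psi$ with $\phi_H$ I would compute, for $A \in \nalgebra_+$ and with everything commuting with $K_1$, $\phi(H_n A) = \omega(K_1 H_n A) = \omega(K_2 P_n A) = \omega_{K_2 P_n}(A)$; since $K_1$ is non-singular, $P_n \uparrow \mathbbm{1}$ and $H_n \uparrow H$, so Lemma \ref{unboundedderivativeweight} yields $\phi_H(A) = \sup_n \phi(H_n A) = \sup_n \omega_{K_2 P_n}(A) = \omega_{K_2}(A) = \psi(A)$, which is the desired identity $\psi(A) = \phi(HA)$.

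For uniqueness, if $H'$ is another positive operator affiliated with $\mathfrak{M}_{\tau^\phi}$ satisfying $\psi(A) = \phi(H'A)$, then $\phi(HA) = \phi(H'A)$ for all $A \in \nalgebra_+$; truncating by the joint spectral projections of $H$ and $H'$ reduces this to the bounded situation, where the uniqueness clause of Theorem \ref{TPTRN} (via faithfulness of $\phi$) forces $H = H'$. I expect the main obstacle to be precisely the bookkeeping around unboundedness and the verification that the derivative lands in the centralizer of $\phi$ rather than that of $\omega$: both are handled by the perturbation formula for $\tau^\phi$ together with the increasing-net definition of $\phi_H$ in Lemma \ref{unboundedderivativeweight}, so the argument stays essentially algebraic once those tools are in place.
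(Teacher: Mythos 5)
Your overall strategy is exactly the paper's: form the auxiliary weight $\omega=\phi+\psi$, apply the bounded Radon--Nikodym theorem (Theorem \ref{TPTRN}) to the pair $\phi\leq\omega$ to get $K_1$, set $H=K_1^{-1}-\mathbbm{1}$, truncate spectrally, and conclude by normality via Lemma \ref{unboundedderivativeweight}. However, there is one genuine gap at the very first step: Theorem \ref{TPTRN2} does \emph{not} assume $\psi$ faithful, whereas Lemma \ref{sumsemifinite}, which you invoke verbatim for the semifiniteness of $\omega$, is stated (and proved) only for two \emph{faithful} weights --- its proof manipulates the modular group $\{\tau^\psi_t\}_{t\in\mathbb{R}}$, which by Theorem \ref{UniqueModCond} exists only when $\psi$ is faithful. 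So whenever $s^\nalgebra(\psi)<\mathbbm{1}$ your appeal to the lemma is out of scope, and with it the whole chain ($\omega$ semifinite $\Rightarrow$ Corollary \ref{invariancecommute} applies to $(\phi,\omega)$ $\Rightarrow$ Theorem \ref{TPTRN} applies). The repair is precisely the paper's opening move, which you skipped: since $\psi\circ\tau^\phi_t=\psi$, the support projection $s^\nalgebra(\psi)$ is $\tau^\phi$-invariant, so one first cuts down to the von Neumann algebra $s^\nalgebra(\psi)\nalgebra s^\nalgebra(\psi)$, on which both $\phi$ and $\psi$ are faithful; there Lemma \ref{sumsemifinite} and Corollary \ref{invariancecommute} apply legitimately, your construction goes through unchanged, and the identity $\psi(A)=\phi(HA)$ extends back to $\nalgebra_+$ because $\psi(A)=\psi\left(s^\nalgebra(\psi)As^\nalgebra(\psi)\right)$.

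Once that reduction is inserted, your argument is the paper's, and in two respects it is actually more careful. First, the paper only asserts that ``the desired invariance follows'' from its final computation, whereas you verify $H\eta\mathfrak{M}_{\tau^\phi}$ explicitly from the perturbation formula $\tau^\phi_t(A)=K_1^{\iu t}\tau^\omega_t(A)K_1^{-\iu t}$ (the same formula from \cite{takesaki70} already used in Corollary \ref{invariancecommute}), noting that $H$ and the projections $P_n$ commute with $K_1$ and are $\tau^\omega$-fixed --- this fills a real lacuna in the paper's write-up. Second, you attempt uniqueness, which the paper's proof omits entirely; but your specific device fails as stated: ``joint spectral projections of $H$ and $H'$'' presupposes that $H$ and $H'$ commute, which is not known a priori. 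A correct route within the paper's toolkit: both operators are affiliated with $\mathfrak{M}_{\tau^\phi}$, and by Theorem \ref{centralizercommute} the restriction of $\phi$ to the centralizer $\mathfrak{M}_{\tau^\phi}$ is a faithful normal semifinite \emph{trace}; truncating each of $H$ and $H'$ by its own spectral projections (which lie in $\mathfrak{M}_{\tau^\phi}$) and testing against positive elements of the centralizer, the uniqueness of the Radon--Nikodym derivative with respect to a trace forces $H=H'$.
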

\begin{proof}
	Let $s^\nalgebra(\psi)$ be the support projection for $\psi$. Notice that $\phi$ and $\psi$ are faithful normal semifinite weights for the von Neumann algebra $s^\nalgebra(\psi) \nalgebra s^\nalgebra(\psi)$ and $s^\nalgebra(\psi)$ is $\{\tau^\phi_t\}_{t\in \mathbb{R}}$-invariant. By Lemma \ref{sumsemifinite}, $\phi+\psi$ is a faithful normal semifinite weight on $s^\nalgebra(\psi) \nalgebra s^\nalgebra(\psi)$. Since $\phi\leq \phi+\psi$ and $\phi$ is $\left\{\tau^{\phi+\psi}_t\right\}$-invariant as a consequence of Corollary \ref{invariancecommute}, Theorem \ref{TPTRN} states that there exists a positive $\left\{\tau^{\phi+\psi}_t\right\}$-invariant operator $K\in s^\nalgebra(\psi) \nalgebra s^\nalgebra(\psi)$ with $0\leq K\leq \mathbbm{1}$ such that
	$$\phi\left(s^\nalgebra(\psi)As^\nalgebra(\psi)\right)=\phi\left(Ks^\nalgebra(\psi)As^\nalgebra(\psi)\right)+\psi\left(Ks^\nalgebra(\psi)As^\nalgebra(\psi)\right) \quad \forall A\in\mathfrak{N}_\phi.$$
	
	In addition, since $\phi$ is faithful on $s^\nalgebra(\psi) \nalgebra s^\nalgebra(\psi)$, $0\notin\sigma(K)$ and thus we can define $H=\frac{\mathbbm{1}-K}{K}$. Let $\left\{E^H_\lambda\right\}_{\lambda\in\mathbb{R}_+}$ be the spectral resolution of $H$ and define $H_n=HE_{(0,n)}=E_{(0,n)} H E_{(0,n)}$, then
	
	\begin{equation}
	\label{eq:calculationX8}
	\begin{aligned}
	\psi(E_{(0,n)}AE_{(0,n)})&=\psi(E_{(0,n)}A)\\
	&=\psi\left(s^\nalgebra(\psi)E_{(0,n)}As^\nalgebra(\psi)\right)\\
	&=\phi\left(H_n s^\nalgebra(\psi)E_{(0,n)}As^\nalgebra(\psi)\right)\\
	&=\phi\left(H_n E_{(0,n)} A\right)\\
	&=\phi\left(H_n^\frac{1}{2} AH_nH_n^\frac{1}{2}\right) \quad \forall A\in\mathfrak{N}_\phi.
	\end{aligned}
	\end{equation}
	
	Using normality, the desired invariance follows from equation \eqref{eq:calculationX8}.
	
\end{proof}

\section{Modular Theory for Weights}

In the spirit of Section \ref{secModular}, we will start by defining a closable operator, but now for a weight. Here, we are following reference \cite{Takesaki2003}.

Let $\phi$ be a semifinite weight, $\mathcal{N}_\phi$, $N_\phi$, $\hilbert_\phi$ as in \ref{weightsets} and \ref{GNSnotation} and let $\eta_\phi:\mathfrak{N}_\phi \to \mathfrak{N}_\phi/N_\phi$ be the quotient operator. Set $\mathfrak{D}_\phi=\eta_\phi\left(\mathfrak{N}_\phi\cap \mathfrak{N}_\phi^\ast \right)$ and define an the anti-linear operator by
\begin{equation}
\label{defS}
\begin{aligned}
 S^0_{\Phi}: & \mathfrak{D}_\phi &\to		& \ \hilbert_\phi\\
			 & \eta_\phi(A) 	&\mapsto 	&\ \eta_\phi(A^\ast).\\
\end{aligned}
\end{equation}

Note that this operator is defined on a dense subset of $\hilbert_\phi$. In fact, note that if we take $\eta_\phi(A) \in \mathfrak{D}^\perp \cap \eta_\phi(\mathfrak{N}_\phi)$, we get, for every $B,C \in \mathfrak{N}_\phi,$
$$\ip{\pi_\phi(B)\eta_\phi(A)}{\eta_\phi(C)}_\phi=\ip{\eta_\phi(A)}{\pi_\phi(B^\ast)\eta_\phi(C)}_\phi=\ip{\eta_\phi(A)}{\eta_\phi(B^\ast C)}_\phi=0.$$
Since $\mathfrak{N}_\phi/N_\phi$ is dense in $\hilbert_\phi$, we have no option unless $\pi_\phi(B)\eta_\phi(A)=0$ for all $B \in \mathfrak{N}_\phi$, and thus $\mathfrak{D}^\perp_\phi=\{0\}$.

\begin{lemma}
\label{lemma1}
Let $\nalgebra$ be a von Neumann algebra, $\phi$ a weight and
\begin{equation}
\label{eq6}
\Phi_\phi=\bigl\{\omega \in \nalgebra^+_\ast \ | \ \exists \varepsilon>0 \textrm{ such that } (1+\varepsilon)\omega\leq \phi\bigr\}.
\end{equation}
For each $\omega \in \Phi_\phi$ there exists a positive operator $H_\omega \in \pi_\phi(\nalgebra)^\prime$ and a vector $\zeta_\omega \in \hilbert_\phi$ such that $H_\omega^\frac{1}{2}\eta_\phi(A)=\pi_\phi(A)\zeta_\omega$ and
$$\omega(A)=\ip{\pi_\phi(A)\zeta_\omega}{\zeta_\omega}_\phi.$$
\end{lemma}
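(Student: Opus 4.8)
The plan is to mimic the proof of Proposition \ref{commutantRN}, first producing the Radon--Nikodym operator in the commutant $\pi_\phi(\nalgebra)'$ and then promoting it to a vector by means of an approximate identity, the latter replacing the cyclic vector that is unavailable for a weight.

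First I would introduce the sesquilinear form
\[
s\big(\eta_\phi(A),\eta_\phi(B)\big)=\omega(B^\ast A),\qquad A,B\in\mathfrak{N}_\phi .
\]
Because $(1+\varepsilon)\omega\le\phi$, the Cauchy--Schwarz inequality (Proposition \ref{cauchyschwarz}) gives $\omega\big((A-A')^\ast(A-A')\big)\le (1+\varepsilon)^{-1}\phi\big((A-A')^\ast(A-A')\big)$, so $\eta_\phi(A)=\eta_\phi(A')$ forces $\omega(B^\ast A)=\omega(B^\ast A')$; hence $s$ is well defined on the dense subspace $\eta_\phi(\mathfrak{N}_\phi)\subset\hilbert_\phi$. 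The same inequality yields $|s(\eta_\phi(A),\eta_\phi(B))|^2\le(1+\varepsilon)^{-2}\|\eta_\phi(A)\|^2\|\eta_\phi(B)\|^2$, so $s$ is bounded and extends to a bounded positive form on all of $\hilbert_\phi$. By the Riesz Representation Theorem there is $H_\omega\in B(\hilbert_\phi)$ with $0\le H_\omega\le(1+\varepsilon)^{-1}\mathbbm{1}$ and $\ip{H_\omega\eta_\phi(A)}{\eta_\phi(B)}_\phi=\omega(B^\ast A)$. Exactly as in Proposition \ref{commutantRN}, using that $\mathfrak{N}_\phi$ is a left ideal (Proposition \ref{simplepropweights}) and $\pi_\phi(C)\eta_\phi(A)=\eta_\phi(CA)$, one checks $\ip{(H_\omega\pi_\phi(C)-\pi_\phi(C)H_\omega)\eta_\phi(A)}{\eta_\phi(B)}_\phi=\omega(B^\ast CA)-\omega(B^\ast CA)=0$, whence $H_\omega\in\pi_\phi(\nalgebra)'$; consequently $H_\omega^{\frac12}\in\pi_\phi(\nalgebra)'$ by the functional calculus (Theorem \ref{sqrt}).

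Next I would construct $\zeta_\omega$. The semifiniteness of $\phi$ furnishes an increasing net of positive contractions $(u_i)_{i\in I}\subset\mathfrak{N}_\phi\cap\mathfrak{N}_\phi^\ast$ with $\pi_\phi(u_i)\xrightarrow{SOT}\mathbbm{1}$ (Vigier's Theorem \ref{vigier}). Set $\zeta_i=H_\omega^{\frac12}\eta_\phi(u_i)$. Since $\ip{\zeta_i}{\zeta_j}_\phi=\ip{H_\omega\eta_\phi(u_i)}{\eta_\phi(u_j)}_\phi=\omega(u_ju_i)$ and $u_ju_i\to\mathbbm{1}$ $\sigma$-weakly along the bounded double net, normality (ultra-weak continuity) of $\omega$ gives $\ip{\zeta_i}{\zeta_j}_\phi\to\omega(\mathbbm{1})$; expanding $\|\zeta_i-\zeta_j\|^2=\omega(u_i^2)-\omega(u_ju_i)-\omega(u_iu_j)+\omega(u_j^2)$ shows $(\zeta_i)$ is Cauchy, and I let $\zeta_\omega=\lim_i\zeta_i$. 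For $A\in\mathfrak{N}_\phi$ the relation $H_\omega^{\frac12}\in\pi_\phi(\nalgebra)'$ gives $\pi_\phi(A)\zeta_i=H_\omega^{\frac12}\eta_\phi(Au_i)$, and since
\[
\big\|H_\omega^{\frac12}\eta_\phi\big(A(u_i-\mathbbm{1})\big)\big\|^2=\omega\big((u_i-\mathbbm{1})A^\ast A(u_i-\mathbbm{1})\big)\longrightarrow 0
\]
by normality, we obtain $\pi_\phi(A)\zeta_\omega=H_\omega^{\frac12}\eta_\phi(A)$. Finally, for $A\in\nalgebra$,
\[
\ip{\pi_\phi(A)\zeta_\omega}{\zeta_\omega}_\phi=\lim_{i,j}\ip{H_\omega\eta_\phi(Au_i)}{\eta_\phi(u_j)}_\phi=\lim_{i,j}\omega(u_jAu_i)=\omega(A),
\]
again using ultra-weak continuity of $\omega$ and $u_jAu_i\to A$, which closes the argument.

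The main obstacle is precisely this passage from the commutant operator $H_\omega$ to the vector $\zeta_\omega$: for a state it is immediate by applying $H_\omega^{\frac12}$ to the cyclic vector, but a weight has none, so the real work is showing that the net $H_\omega^{\frac12}\eta_\phi(u_i)$ converges and that its limit intertwines $\pi_\phi$ as required. Each such step rests on the normality of $\omega$ to pass the $\sigma$-weak limits $u_ju_i\to\mathbbm{1}$ and $(u_i-\mathbbm{1})A^\ast A(u_i-\mathbbm{1})\to 0$ through $\omega$; the only delicate point is checking that these convergences are legitimate, i.e.\ that the relevant nets are bounded and that ultra-weak continuity of $\omega$ applies.
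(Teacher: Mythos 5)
Your proof is correct, and its first half --- the bounded sesquilinear form, the Riesz representation producing $H_\omega$ with $0\le H_\omega\le(1+\varepsilon)^{-1}\mathbbm{1}$, and the check that $H_\omega\in\pi_\phi(\nalgebra)^\prime$ --- coincides with the paper's. Where you genuinely diverge is the construction of $\zeta_\omega$. The paper passes through a second GNS space: it takes the GNS triple $(\hilbert_\omega,\pi_\omega,\xi_\omega)$ of the bounded functional $\omega$, observes that $\eta_\phi(A)\mapsto\pi_\omega(A)\xi_\omega$ extends to a bounded operator $\Psi$ (again by $(1+\varepsilon)\omega\le\phi$), verifies $\Psi^\ast\Psi=H_\omega$ and the intertwining $\Psi\pi_\phi(A)=\pi_\omega(A)\Psi$, takes the polar decomposition $\Psi=uH_\omega^{\frac12}$, and sets $\zeta_\omega=u^\ast\xi_\omega$, so that $\pi_\phi(A)\zeta_\omega=H_\omega^{\frac12}\eta_\phi(A)$ falls out of the intertwining relation. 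You instead stay entirely inside $\hilbert_\phi$ and obtain $\zeta_\omega$ as the norm limit of $H_\omega^{\frac12}\eta_\phi(u_i)$ along an increasing approximate unit, with every limit ($\omega(u_ju_i)\to\omega(\mathbbm{1})$, $\omega\bigl((u_i-\mathbbm{1})A^\ast A(u_i-\mathbbm{1})\bigr)\to 0$, $\omega(u_jAu_i)\to\omega(A)$) funnelled through the normality of $\omega\in\nalgebra_\ast^+$; note that you never actually need $\pi_\phi(u_i)\to\mathbbm{1}$ on $\hilbert_\phi$, only $u_i\to\mathbbm{1}$ $\sigma$-strongly in $\nalgebra$, which spares you any normality assumption on $\pi_\phi$ at this step. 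Each route buys something: the paper's is shorter once $\hilbert_\omega$ is on the table and requires no net estimates, while yours avoids the auxiliary representation and the partial-isometry bookkeeping, and it delivers $\omega(A)=\ip{\pi_\phi(A)\zeta_\omega}{\zeta_\omega}_\phi$ directly for \emph{every} $A\in\nalgebra$, whereas the paper verifies it only for $A=B^\ast C\in\mathfrak{M}_\phi$ and leaves the extension (which is exactly your approximate-identity argument) implicit. Two housekeeping points: your construction uses the semifiniteness of $\phi$ to know the increasing net of positive contractions in $\mathfrak{N}_\phi\cap\mathfrak{N}_\phi^\ast$ has supremum $\mathbbm{1}$ --- consistent with the standing assumptions of the section, but an input the paper's argument does not invoke here; and the step $\pi_\phi(A)\zeta_i=H_\omega^{\frac12}\eta_\phi(Au_i)$ tacitly requires $Au_i\in\mathfrak{N}_\phi$, which does hold since $u_iA^\ast Au_i\le\|A\|^2u_i^2$ and $u_i\in\mathfrak{N}_\phi$, but deserves a sentence.
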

\begin{proof}

Note that, for $A,B \in \mathfrak{N}_\phi$, 
$$\begin{aligned}
\left|\omega(B^\ast A)\right|	&\leq \omega(A^\ast A)^\frac{1}{2}\omega(B^\ast B)^\frac{1}{2} \\
								&\leq (1+\varepsilon)^{-1}\phi(A^\ast A)^\frac{1}{2}\phi(B^\ast B)^\frac{1}{2}\\
								&= (1+\varepsilon)^{-1}\|A\|_\phi \|B\|_\phi.
\end{aligned}$$

This means that the sesquilinear form $\ip{\eta_\phi(A)}{\eta_\phi(B)}_\omega=\widetilde{\omega}(\eta_\phi(B)^\ast \eta_\phi(A))$ is bounded in $\hilbert_\phi$, and by Riesz's theorem, there exists a bounded operator $H_\omega \in B(\hilbert_\phi)$ such that $$\widetilde{\omega}(\eta_\phi(B)^\ast \eta_\phi(A))=\ip{H_\omega\eta_\phi(A)}{\eta_\phi(B)}_\phi.$$

Note that, for any $A,B$ as above and $C \in \nalgebra$,
$$\begin{aligned}
\ip{H_\omega\pi_\phi(C)\eta_\phi(A)}{\eta_\phi(B)}_\omega 
				&=\omega\left(\eta_\phi(B)^\ast \eta_\phi(C A)\right)\\
				&=\omega\left(\eta_\phi(C^\ast B)^\ast\eta_\phi(A)\right)\\
				&=\ip{H_\omega\eta_\phi(A)}{\eta_\phi(C^\ast B)}\\
				&=\ip{\eta_\phi(C)H_\omega\eta_\phi(A)}{\eta_\phi(B)}.\\
\end{aligned}$$
Hence $H_\omega \in \pi_\phi(\nalgebra)^\prime$.

Now, we use the cyclic vector $\xi_\omega$ obtained by the GNS-construction though $\omega$.
$$\begin{aligned}
\|\pi_\omega(A)\xi_\omega\|^2_\omega=\omega(A^\ast A)\leq (1+\varepsilon)^{-1}\phi(A^\ast A)=(1+\varepsilon)^{-1}\|\eta_\phi(A)\|_\phi^2.
\end{aligned}$$
From this we conclude that $\pi_\omega(A)\xi_\omega \overset{\Psi}{\mapsto} \eta_\phi(A)$ is a bounded operator, initially defined in the dense subspace which can be extended to $\hilbert_\omega$ through continuity, resulting in a bijective operator.

But 
$$\begin{aligned}
\ip{\Psi \pi_\phi(A)}{\Psi\pi_\phi(B)} 
		&= \ip{\Psi \pi_\phi(A)}{\pi_\omega(B)\xi_\omega}\\
		&= \ip{ \pi_\omega(A)\xi_\omega}{\pi_\omega(B)\xi_\omega}\\
		&=\omega(B^\ast A)\\
		&= \ip{ H_\omega \eta_\phi(A)}{\eta_\phi(B)}.\\
\end{aligned}$$
It follows that $\Psi^\ast \Psi = H_\omega$. By the polar decomposition theorem, there exists a partial isometry $u$ such that of $\Psi=u H_\omega^\frac{1}{2}$.

On the other hand, $\Psi\pi_\phi(A)\eta_\phi(B)=\Psi \eta_\phi(AB)=\pi_\omega(AB)\xi_\omega=\pi_\omega(A)\Psi \eta_\phi(B)$ and it follows that $\Psi\pi_\phi(A)=\pi_\omega(A)\Psi$.

In terms of the polar decomposition,,

$$uH_\omega^\frac{1}{2} \pi_\phi(A)=\Psi\pi_\phi(A)=\pi_\omega(A)\Psi=\pi_\omega(A)uH_\omega^\frac{1}{2}$$

But the partial isometry is chosen such that $u[H_\omega^\frac{1}{2}\hilbert_\phi]^\perp=0$, thus so does $u\pi_\phi(A)$. Since we proved the operators are equal in $[H_\omega^\frac{1}{2}\hilbert_\phi]$, they are equal everywhere.

Define $\zeta_\omega=u\xi_\omega\in \hilbert_\phi$. This definition gives us

$$\begin{aligned}
\pi_\phi(A)\zeta_\omega	&=\pi_\phi u^\ast \xi_\omega \\
						&=(u_\omega\pi_\phi(A^\ast))^\ast\xi_\omega\\
						&=(\pi_\omega(A^\ast)u)^\ast\xi_\omega\\
						&=u^\ast\pi_\omega(A)\xi_\omega\\
						&=u^\ast \Psi\eta_\phi(A)\\
						&=H_\omega^{\frac{1}{2}}\eta_\phi(A).\\
\end{aligned}$$

Finally, take $A=B^\ast C$ where $B,C \in \mathfrak{N}_\phi$

$$\begin{aligned}
\ip{\pi_\phi(A)\zeta_\omega}{\zeta_\omega} 
	&=\ip{\pi_\phi(C)\eta_\omega}{\pi_\phi(B)\eta_\omega}\\
	&=\ip{H_\omega^\frac{1}{2} \eta_\phi(C)}{H_\omega^\frac{1}{2} \eta_\phi(B)}\\
	&=\omega(B^\ast C) \\
	&=\omega(A).
\end{aligned}$$

\end{proof}

\begin{lemma}
Let $\nalgebra$ be a von Neumann algebra, $\phi$ a weight, $\Phi_\phi$ as in equation \eqref{eq6}. Let also $H_\omega \in \pi_\phi(\nalgebra)^\prime$ and $\zeta \in \hilbert_\phi$ be  a positive operator and a vector, respectively, as in Lemma \ref{lemma1}.

Then, the set
$$\bigcup_{\omega_1,\omega_2\in \Phi_\phi}\left\{H_{\omega_1}^\frac{1}{2}\pi_\phi(\nalgebra)^\prime\zeta_{\omega_2}\right\}$$
is dense in $\hilbert_\phi$.
\end{lemma}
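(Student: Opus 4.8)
The plan is to prove density by showing that the orthogonal complement of the set is trivial. Suppose $\xi \in \hilbert_\phi$ satisfies $\ip{H_{\omega_1}^{1/2}\pi_\phi(A^\prime)\zeta_{\omega_2}}{\xi}_\phi = 0$ for all $\omega_1,\omega_2 \in \Phi_\phi$ and all $A^\prime \in \pi_\phi(\nalgebra)^\prime$; the goal is $\xi = 0$. Since $H_{\omega_1}$ is a positive element of $\pi_\phi(\nalgebra)^\prime$, its square root $H_{\omega_1}^{1/2}$ is self-adjoint and lies in $\pi_\phi(\nalgebra)^\prime$, so moving it across the inner product rewrites the hypothesis as $\ip{\pi_\phi(A^\prime)\zeta_{\omega_2}}{H_{\omega_1}^{1/2}\xi}_\phi = 0$. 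Thus, for each fixed $\omega_1$, the vector $H_{\omega_1}^{1/2}\xi$ is orthogonal to $\pi_\phi(\nalgebra)^\prime\zeta_{\omega_2}$ for every $\omega_2$. The argument then splits into two claims: (A) $\bigcup_{\omega_2 \in \Phi_\phi}\pi_\phi(\nalgebra)^\prime\zeta_{\omega_2}$ is dense in $\hilbert_\phi$, forcing $H_{\omega_1}^{1/2}\xi = 0$ for every $\omega_1$; and (B) $\bigcap_{\omega_1\in\Phi_\phi}\ker H_{\omega_1} = \{0\}$, which then yields $\xi = 0$.

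Both claims will be established with an explicit subfamily rather than with all of $\Phi_\phi$. For $A \in \mathfrak{N}_\phi$ with $\|A\| < 1$, define $\omega_A \in \nalgebra^+_\ast$ by $\omega_A(B) = \phi(A^\ast B A)$; this is a bounded (since $A \in \mathfrak{N}_\phi$) normal positive functional, and the operator inequality $A^\ast B A \leq \|A\|^2 B$ for $B \geq 0$ shows $(1+\varepsilon)\omega_A \leq \phi$ whenever $(1+\varepsilon)\|A\|^2 \leq 1$, so $\omega_A \in \Phi_\phi$. Using $\ip{\eta_\phi(B)}{\eta_\phi(C)}_\phi = \phi(C^\ast B)$, a direct computation gives $\ip{H_{\omega_A}\eta_\phi(B)}{\eta_\phi(C)}_\phi = \omega_A(C^\ast B) = \phi(A^\ast C^\ast B A) = \ip{\eta_\phi(BA)}{\eta_\phi(CA)}_\phi$. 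The key instances are $A = E_\lambda$, where $(E_\lambda)$ is a self-adjoint approximate identity for $\phi$ contained in $\mathfrak{F}_\phi$, rescaled so that $\|E_\lambda\| < 1$; these lie in $\Phi_\phi$, and the standard property $\phi(E_\lambda X E_\lambda) \to \phi(X)$ for $X \in \mathfrak{M}_\phi$ gives $\ip{H_{\omega_{E_\lambda}}\eta_\phi(B)}{\eta_\phi(C)}_\phi \to \phi(C^\ast B) = \ip{\eta_\phi(B)}{\eta_\phi(C)}_\phi$. Since $0 \leq H_{\omega_{E_\lambda}} \leq \mathbbm{1}$, this convergence on the dense subspace $\eta_\phi(\mathfrak{N}_\phi)$ upgrades to $H_{\omega_{E_\lambda}} \to \mathbbm{1}$ in the weak operator topology.

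For (A), I would use that $\overline{\pi_\phi(\nalgebra)^\prime\zeta_{\omega_2}}$ is the $\pi_\phi(\nalgebra)^\prime$-cyclic subspace generated by $\zeta_{\omega_2}$, whose orthogonal projection lies in $\pi_\phi(\nalgebra)^{\prime\prime} = \pi_\phi(\nalgebra)$ and equals $\pi_\phi(s^\nalgebra(\omega_2))$, invoking Proposition \ref{supportrelations} $(i)$ together with $\omega_2(A) = \ip{\pi_\phi(A)\zeta_{\omega_2}}{\zeta_{\omega_2}}_\phi$ from Lemma \ref{lemma1}. Hence the closure of the union has projection $\pi_\phi\bigl(\bigvee_{\omega_2}s^\nalgebra(\omega_2)\bigr)$, and it suffices to show this supremum is the identity on $\hilbert_\phi$. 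Restricting to $\omega_2 = \omega_{E_\lambda}$, any projection $Q \in \nalgebra$ dominated by $\mathbbm{1} - \bigvee_\lambda s^\nalgebra(\omega_{E_\lambda})$ satisfies $\phi(E_\lambda Q E_\lambda) = \omega_{E_\lambda}(Q) = 0$, so $Q^{1/2}E_\lambda = 0$ by faithfulness of $\phi$, and $E_\lambda \to \mathbbm{1}$ strongly forces $Q = 0$; thus the supremum equals $s^\nalgebra(\phi)$, which acts as the identity on $\hilbert_\phi$ by Proposition \ref{supportrelations} $(vi)$, proving (A). For (B), I already have $H_{\omega_{E_\lambda}} \to \mathbbm{1}$ weakly, so if $H_{\omega_1}^{1/2}\xi = 0$ for all $\omega_1$, then in particular $\ip{H_{\omega_{E_\lambda}}\xi}{\xi}_\phi = 0$ for every $\lambda$, while the left side converges to $\ip{\xi}{\xi}_\phi = \|\xi\|_\phi^2$; hence $\xi = 0$, completing the proof.

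The main obstacle is not the orthogonality bookkeeping but the verification that the explicit functionals $\omega_A(\cdot) = \phi(A^\ast\,\cdot\,A)$ populate $\Phi_\phi$ richly enough and behave like an approximate identity at the level of the operators $H_\omega$ — that is, proving $H_{\omega_{E_\lambda}} \to \mathbbm{1}$ weakly and that the supremum of the support projections $s^\nalgebra(\omega_{E_\lambda})$ is the identity of $\hilbert_\phi$. These rest on normality of $\phi$, on the convergence $\phi(E_\lambda X E_\lambda) \to \phi(X)$ for the approximate identity, and on faithfulness (in the non-faithful case one replaces $\mathbbm{1}$ by $s^\nalgebra(\phi)$, which is the identity on $\hilbert_\phi$). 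Care is needed because right multiplication by $A$ is generally unbounded on $\hilbert_\phi$, so one must argue throughout via the bona fide bounded operators $H_{\omega_A}$ furnished by Lemma \ref{lemma1} rather than through right-multiplication operators.
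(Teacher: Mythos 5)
There is a genuine gap, and it sits at the foundation of your construction: the operator inequality $A^\ast B A \leq \|A\|^2 B$ for $B\geq 0$ is simply false. In $M_2(\mathbb{C})$, with matrix units $e_{ij}$, take $A=e_{21}$ and $B=e_{22}$: then $A^\ast BA=e_{11}$, which is not dominated by $\|A\|^2B=e_{22}$. The true inequality runs the other way, $A^\ast BA\leq \|B\|\,A^\ast A$. Consequently your functionals $\omega_A(\cdot)=\phi(A^\ast\,\cdot\,A)$ need not satisfy $(1+\varepsilon)\omega_A\leq\phi$: the estimate $\phi(A^\ast BA)\leq\|A\|^2\phi(B)$ does hold when $\phi$ is a \emph{trace} (cycle $A$ around), which is likely the source of the slip, but it fails for general normal weights — e.g.\ for $\phi=\Tr(H\,\cdot)$ on $M_2$ with $H=\mathrm{diag}(1,t)$ one gets $\phi(A^\ast e_{22}A)=1$ while $\phi(e_{22})=t$, so no uniform constant exists. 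Hence there is no justification that your family $\omega_{E_\lambda}$ lies in $\Phi_\phi$ at all, and since both claim (A) (the supports $s^\nalgebra(\omega_{E_\lambda})$ summing to the identity) and claim (B) (the weak convergence $H_{\omega_{E_\lambda}}\to\mathbbm{1}$) are funneled exclusively through that membership, both collapse. A secondary weak point: the ``standard property'' $\phi(E_\lambda XE_\lambda)\to\phi(X)$ is not a formal consequence of normality either, since lower semicontinuity of $\phi$ only yields $\phi(X)\leq\liminf_\lambda\phi(E_\lambda XE_\lambda)$, the wrong direction.

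The repair is to drop the explicit family altogether, which is what the paper does. For a normal semifinite weight one has $\phi(X)=\sup_{\omega\in\Phi_\phi}\omega(X)$ for every $X\in\nalgebra_+$ (Combes' theorem — the paper invokes it silently in the first display of its proof, equation \eqref{eq5}), and $\omega\leq\omega'$ forces $H_\omega\leq H_{\omega'}$, so $(H_\omega)_{\omega\in\Phi_\phi}$ is an increasing net with $0\leq H_\omega\leq\mathbbm{1}$ satisfying $\sup_{\omega\in\Phi_\phi}\ip{H_\omega\eta_\phi(A)}{\eta_\phi(A)}_\phi=\phi(A^\ast A)=\|\eta_\phi(A)\|^2$ on the dense subspace $\eta_\phi(\mathfrak{N}_\phi)$, hence on all of $\hilbert_\phi$ by the uniform bound. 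This yields your claim (B) at once (no vector can be annihilated by every $H_{\omega}^{1/2}$), and, since $0\leq H_\omega\leq\mathbbm{1}$ gives $\ip{H_\omega^{1/2}\eta}{\eta}\geq\ip{H_\omega\eta}{\eta}$, it also gives the density of $\bigcup_{\omega_1}H_{\omega_1}^{1/2}\pi_\phi(\nalgebra)'\zeta_{\omega_2}$ in each $\pi_\phi(\nalgebra)'\zeta_{\omega_2}$. For claim (A), run your support argument over \emph{all} of $\Phi_\phi$ as the paper does: the projection $P$ onto the closed span of $\{\pi_\phi(\nalgebra)'\zeta_\omega \mid \omega\in\Phi_\phi\}$ lies in $\pi_\phi(\nalgebra)$, and $\phi\bigl(\pi_\phi^{-1}(\mathbbm{1}-P)\bigr)=\sup_{\omega\in\Phi_\phi}\ip{(\mathbbm{1}-P)\zeta_\omega}{\zeta_\omega}_\phi=0$, so faithfulness gives $P=\mathbbm{1}$. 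With these substitutions your orthocomplement bookkeeping is sound, and the argument becomes essentially the paper's proof.
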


\begin{proof}
Note that, for $A\in \mathfrak{D}_\phi$,
\begin{equation}
\label{eq5}
\begin{aligned}
\|\eta_\phi(A)\|^2 	&= \sup_{\omega\in \Phi_\phi}\left\{\omega(A^\ast A)\right\}\\
					&= \sup_{\omega\in \Phi_\phi}\left\{\|A\zeta_\omega\|^2\right\}\\
					&= \sup_{\omega\in \Phi_\phi}\left\{\|H_\omega^\frac{1}{2}\eta_\phi(A)\|^2\right\}\\
					&=\sup_{\omega\in \Phi_\phi}\left\{\ip{H_\omega\eta_\phi(A)}{\eta_\phi(A)}\right\}\\
\end{aligned}
\end{equation}
but the net $(H_\omega)_{\omega\in\Phi_\phi}$ is an increasing net, and, by equation \eqref{eq5}, we must have $\displaystyle\bigcup_{\omega_1\in \Phi_\phi}\{H_{\omega_1}^\frac{1}{2}\pi_\phi(\nalgebra)^\prime\zeta_{\omega_2}\}^\perp=\{0\}$. It follows that this set is dense in $\pi_\phi(\nalgebra)^\prime \zeta_{\omega_2}$.

Let $P$ be the projection on $\overline{span\{\pi_\phi(\nalgebra)\zeta_\omega | \omega \in \Phi_\phi\}}\subset \hilbert_\phi$. Since $\pi(\nalgebra)$ is also a von Neumann algebra, it must contain $P$. Let $S\in \nalgebra$ such that $\pi_\phi(S)=\mathbbm{1}-P$, then
$$\begin{aligned}
\phi(S)	&=\sup_{\omega\in \Phi_\phi}\{\omega(S)\}\\
		&=\sup_{\omega\in \Phi_\phi}\{\ip{\pi_\phi(S)\zeta_\omega}{\zeta_\omega}\}\\
		&=\sup_{\omega\in \Phi_\phi}\{\ip{(1-P)\zeta_\omega}{\zeta_\omega}\}\\
		&=0.\\
\end{aligned}$$

This means that $S\in N_\phi \Rightarrow \pi_\phi(S)=0\Rightarrow P=\mathbbm{1}$.
\end{proof}

\begin{lemma}
The operator $S_0$ is closable.
\end{lemma}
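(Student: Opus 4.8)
The plan is to establish closability through the graph-closure criterion: since $\hilbert_\phi$ is a Hilbert space, hence metrizable, the closure of the graph is sequential, and so $S_0$ is closable as soon as every sequence $\left(\eta_\phi(A_n)\right)_{n\in\mathbb{N}}\subset \mathfrak{D}_\phi$ with $\eta_\phi(A_n)\to 0$ and $S_0\eta_\phi(A_n)=\eta_\phi(A_n^\ast)\to\xi$ forces $\xi=0$. To prove $\xi=0$ it is enough to test $\xi$ against a dense family of vectors, and the natural family is the one just shown to be dense in the preceding lemma, namely $\bigcup_{\omega_1,\omega_2\in\Phi_\phi}\left\{H_{\omega_1}^{\frac{1}{2}}\pi_\phi(\nalgebra)^\prime\zeta_{\omega_2}\right\}$.

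Accordingly, I would fix $\omega_1,\omega_2\in\Phi_\phi$ and $T^\prime\in\pi_\phi(\nalgebra)^\prime$, set $\eta=H_{\omega_1}^{\frac{1}{2}}T^\prime\zeta_{\omega_2}$, and compute $\ip{\xi}{\eta}_\phi$. The ingredients are: the identity $H_\omega^{\frac{1}{2}}\eta_\phi(A)=\pi_\phi(A)\zeta_\omega$ from Lemma \ref{lemma1}; the self-adjointness and boundedness of $H_{\omega_1}^{\frac{1}{2}}$, which lets it be moved across the inner product and past the limit; the equality $\pi_\phi(A_n^\ast)=\pi_\phi(A_n)^\ast$; and the commutation of $T^\prime$ with each $\pi_\phi(A_n)$. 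Carrying out these manipulations in order reduces the expression to
\begin{equation*}
\ip{\xi}{\eta}_\phi=\lim_{n\to\infty}\ip{H_{\omega_2}^{\frac{1}{2}}T^{\prime\ast}\zeta_{\omega_1}}{\eta_\phi(A_n)}_\phi,
\end{equation*}
and since $H_{\omega_2}^{\frac{1}{2}}T^{\prime\ast}\zeta_{\omega_1}$ is a fixed vector while $\eta_\phi(A_n)\to 0$, this limit vanishes. As $\eta$ runs over a dense set, $\xi=0$ follows, which proves closability.

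The conceptual heart of the argument — and the only genuinely delicate point — is the first reduction, that is, the choice of test vectors. Because $S_0$ interchanges $A$ and $A^\ast$, the vector $\xi$ records the limiting behaviour of $\eta_\phi(A_n^\ast)$, about which the hypothesis says nothing directly; the whole difficulty is to transport this information back to the side where we have control, namely $\eta_\phi(A_n)\to 0$. The identity $H_\omega^{\frac{1}{2}}\eta_\phi(A)=\pi_\phi(A)\zeta_\omega$, combined with the commutation $\pi_\phi(A_n)T^\prime=T^\prime\pi_\phi(A_n)$ inside $\pi_\phi(\nalgebra)^\prime$, is precisely what performs this transport, which is why the previous two lemmas are exactly the right preparation. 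Every remaining step is a routine use of boundedness and of the continuity of the inner product, so no further estimates are needed.
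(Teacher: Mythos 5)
Your proof is correct and takes essentially the same route as the paper: the paper establishes closability by showing the dense family $\bigcup_{\omega_1,\omega_2\in\Phi_\phi}\left\{H_{\omega_1}^{\frac{1}{2}}\pi_\phi(\nalgebra)^\prime\zeta_{\omega_2}\right\}$ lies in $\Dom{S_0^\ast}$ (invoking ``closable iff the adjoint is densely defined''), and the chain of manipulations it uses — the intertwining identity $H_\omega^{\frac{1}{2}}\eta_\phi(A)=\pi_\phi(A)\zeta_\omega$, commuting the commutant element past $\pi_\phi(A)$, and moving $H_{\omega_1}^{\frac{1}{2}}$ across the inner product — is exactly the computation you perform, your sequential graph criterion being the interchangeable formulation of the same fact (indeed your limit identity is precisely the statement that $\eta\in\Dom{S_0^\ast}$ with $S_0^\ast\eta=H_{\omega_2}^{\frac{1}{2}}T^{\prime\ast}\zeta_{\omega_1}$). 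Your implicit use of the identity for $A_n^\ast$ is justified since $\mathfrak{D}_\phi=\eta_\phi\left(\mathfrak{N}_\phi\cap\mathfrak{N}_\phi^\ast\right)$ is $\ast$-invariant, so nothing is missing.
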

\begin{proof}

It is enough to show that the adjoint of $S_0$ is densely defined.

By Lemma \ref{lemma1}, for each $\omega \in \Phi_\phi$ there exists a positive operator $H_\omega \in \pi_\phi(\nalgebra)^\prime$ and a vector $\zeta_\omega \in \hilbert_\phi$ such that $H_\omega^\frac{1}{2}\eta_\phi(A)=\pi_\phi(A)\zeta_\omega$ and
$\omega(A)=\ip{\pi_\phi(A)\zeta_\omega}{\zeta_\omega}$.

Take $\omega_1, \omega_2 \in \Phi_\phi$, $B\in \pi_\phi(\nalgebra)^\prime$ and $A\in \mathfrak{D}_\phi$,
$$\begin{aligned}
\ip{S_0\eta_\phi(A)}{H_{\omega_1}^\frac{1}{2}B\zeta_{\omega_2}}
	&=\ip{\eta_\phi(A^\ast)}{H_{\omega_1}^\frac{1}{2}B\zeta_{\omega_2}}\\
	&= \ip{\pi_\phi(A^\ast) \zeta_{\omega_1}}{B\zeta_{\omega_2}}\\
	&= \ip{\zeta_{\omega_1}}{\pi_\phi(A)B\zeta_{\omega_2}}\\
	&= \ip{\zeta_{\omega_1}}{B\pi_\phi(A)\zeta_{\omega_2}}\\
	&= \ip{\zeta_{\omega_1}}{BH_{\omega_2}^\frac{1}{2}\eta_\phi(A)}\\
	&= \ip{H_{\omega_2}^\frac{1}{2}B^\ast\zeta_{\omega_1}}{\eta_\phi(A)}.\\
\end{aligned}$$
So, $H_{\omega_2}^\frac{1}{2}B^\ast\zeta_{\omega_1}$ must be in the domain of $S_0^\ast$ and, moreover, $S_0^\ast H_{\omega_1}^\frac{1}{2}B\zeta_{\omega_2}=H_{\omega_2}^\frac{1}{2}B^\ast\zeta_{\omega_1}$.

\end{proof}

\begin{definition}
We denote by $J_\phi$ and $\Delta_\phi$ be the unique anti-linear partial isometry and the positive operator, respectively, in the polar decomposition of $\overline{S_0}$, \ie,    $S=J_\phi\Delta_\phi^{\frac{1}{2}}$. $J_\phi$ is called the modular conjugation operator and $\Delta_\phi$ is called the modular operator. 
\end{definition}

\section{Relative Modular Theory}

Although it might seem repetitive to define the modular theory for a weight and later a relative modular theory, it is not the case, since we can use modular theory to construct the relative modular theory in a much more elegant way using what we call a balanced weight. Such construction is also interesting because it gives us a powerful tool to deal with these operators.

The construction we are presenting here is based in \cite{Izumi1999} and \cite{Takesaki2003}.

\begin{notation}
We denote the von Neumann algebra of $2\times2$ matrices of elements in the von Neumann algebra $\nalgebra$ by
$$M_{2\times2}=\left\{\begin{pmatrix} A_{11} & A_{12} \\ A_{21} & A_{22}\end{pmatrix} \ \middle| \ A_{ij}\in \nalgebra, \ i=1,2, \ j=1,2\right\}.$$
\end{notation}

\begin{definition}[Balanced Weight]
\label{balancedweight}
Let $\nalgebra$ be a von Neumann algebra and $\phi, \psi$ weights. We define the \index{weight! balanced} balanced weight
$$\begin{aligned}
\theta_{\phi,\psi}:	& \quad M_{2\times2}(\nalgebra)	&\to	& \ \overline{\mathbb{R}}\\
					& \begin{pmatrix} A_{11} & A_{12} \\ A_{21} & A_{22}\end{pmatrix} &\mapsto	& \ \phi(A_{11})+\psi(A_{22}).
\end{aligned}$$
\end{definition}

\begin{notation}
To simplify the notation, we will not use $\theta_{\phi,\psi}$, instead, we will only write $\theta$.

In addition, it is much easier to write $\displaystyle \sum_{i,j=1}^{2}A_{ij}e_{ij}$ instead of $\begin{pmatrix} A_{11} & A_{12} \\ A_{21} & A_{22}\end{pmatrix}$, where
$$e_{11}=\begin{pmatrix} 1 & 0 \\ 0 & 0\end{pmatrix}, \quad
e_{12}=\begin{pmatrix} 0 & 1 \\ 0 & 0\end{pmatrix}, \quad
e_{21}=\begin{pmatrix} 0 & 0 \\ 1 & 0\end{pmatrix}, \quad
e_{22}=\begin{pmatrix} 0 & 0 \\ 0 & 1\end{pmatrix}, \quad$$
\end{notation}

As expected, some of the good properties we have studied for weights are inherited by the balanced weight, as we will see in the next lemma.

\begin{lemma}
Let $\nalgebra$ be a von Neumann algebra and $\phi, \psi$ weights.
\label{lemmabalanced}
\begin{enumerate}[(i)]
\item $\theta_{\phi,\psi}$ is a normal semifinite weight if and only if $\phi, \psi$ have these properties;
\item $\theta_{\phi,\psi}$ is faithful if and only if at least one of $\phi$ and $\psi$ is faithful;
\item $\mathfrak{N}_{\theta}=\{A \in M_2(\nalgebra) \ | A_{11}, A_{21} \in \mathfrak{N}_\phi \textrm{ and } A_{12}, A_{22} \in \mathfrak{N}_\psi \}$.
\end{enumerate}
\end{lemma}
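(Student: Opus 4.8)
\emph{Approach.} The whole lemma will be driven by one block computation. For $A=\sum_{i,j}A_{ij}e_{ij}\in M_{2\times2}(\nalgebra)$ the matrix $A^\ast A$ has diagonal entries $(A^\ast A)_{11}=A_{11}^\ast A_{11}+A_{21}^\ast A_{21}$ and $(A^\ast A)_{22}=A_{12}^\ast A_{12}+A_{22}^\ast A_{22}$, so by Definition \ref{balancedweight} and additivity of weights,
\[
\theta(A^\ast A)=\phi(A_{11}^\ast A_{11})+\phi(A_{21}^\ast A_{21})+\psi(A_{12}^\ast A_{12})+\psi(A_{22}^\ast A_{22}).
\]
Each of the four summands is a nonnegative element of $\overline{\mathbb{R}}_+$, so the plan is to read off each of (i)--(iii) from this identity, together with the corner embeddings $B\mapsto Be_{11}$ and $B\mapsto Be_{22}$ which let me transfer statements about $\theta$ back to $\phi$ and $\psi$ individually.

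\emph{Items (iii) and (i).} For (iii) I would simply observe that $\theta(A^\ast A)<\infty$ holds exactly when all four nonnegative summands are finite, i.e. $A_{11},A_{21}\in\mathfrak{N}_\phi$ and $A_{12},A_{22}\in\mathfrak{N}_\psi$; this is immediate once the displayed identity is in place. For normality in (i), given a bounded increasing net in $M_{2\times2}(\nalgebra)_+$ I would first check that its supremum is taken blockwise on the diagonal, so that the $(k,k)$-entry of the supremum is the supremum of the $(k,k)$-entries (using Vigier's Theorem \ref{vigier} to identify order suprema with SOT-limits), and then combine normality of $\phi$ and $\psi$ with the elementary fact that $\sup_\lambda(a_\lambda+b_\lambda)=\sup_\lambda a_\lambda+\sup_\lambda b_\lambda$ for increasing nets over a common directed set; the converse follows by restricting $\theta$ to the matrices supported in the $(1,1)$- and $(2,2)$-corners, on which it reduces to $\phi$ and to $\psi$ respectively. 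For semifiniteness I would use (iii): $\mathfrak{N}_\theta$ is the block set with first column in $\mathfrak{N}_\phi$ and second column in $\mathfrak{N}_\psi$, hence is weakly dense in $M_{2\times2}(\nalgebra)$ precisely when $\mathfrak{N}_\phi$ and $\mathfrak{N}_\psi$ are weakly dense in $\nalgebra$, passing between weak density of $\mathfrak{N}_\bullet$ and of $\mathfrak{M}_\bullet$ via Proposition \ref{simplepropweights}.

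\emph{Item (ii), and the main obstacle.} The forward implication is quick: if $\theta$ is faithful and $B\in\nalgebra$ satisfies $\phi(B^\ast B)=0$, then $A=Be_{11}$ gives $\theta(A^\ast A)=\phi(B^\ast B)=0$, forcing $A=0$ and hence $B=0$, so $\phi$ is faithful; the same corner argument with $e_{22}$ shows $\psi$ is faithful, so in particular at least one of them is. The delicate direction is the converse, and this is where I expect the real work to lie. Assuming, say, $\phi$ faithful, vanishing of $\theta(A^\ast A)$ forces $\phi(A_{11}^\ast A_{11})=\phi(A_{21}^\ast A_{21})=0$, hence $A_{11}=A_{21}=0$; invoking positivity of $A^\ast A$ in $M_{2\times2}(\nalgebra)$ — a positive matrix whose $(1,1)$-corner vanishes must have the adjacent off-diagonal entries vanish — then pins down the entire first column of $A^\ast A$. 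The genuine obstruction is the remaining corner $(A^\ast A)_{22}=A_{12}^\ast A_{12}+A_{22}^\ast A_{22}$, which is governed solely by $\psi$ and which a single faithful weight cannot resolve. Making the converse go through exactly as stated is the step I would scrutinize most carefully, since it is precisely here that faithfulness of the \emph{second} weight must be brought to bear on the surviving corner; isolating that corner and determining the faithfulness hypothesis it genuinely requires is what I anticipate as the crux of the argument.
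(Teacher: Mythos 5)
Your treatment of (iii) and (i) is correct and essentially matches the paper. Item (iii) is, as you say, immediate from the block identity $\theta(A^\ast A)=\phi(A_{11}^\ast A_{11})+\phi(A_{21}^\ast A_{21})+\psi(A_{12}^\ast A_{12})+\psi(A_{22}^\ast A_{22})$, which is exactly the paper's one-line argument. For normality the paper takes a slightly different route: it writes $\theta(A)=\sup\{\omega_\phi(A_{11})+\omega_\psi(A_{22})\}$ over the sets $\Phi_\phi$, $\Phi_\psi$ of dominated normal functionals (Lemma \ref{lemma1}) and reads normality off from that representation, whereas you verify the definition directly via blockwise suprema and Vigier's Theorem \ref{vigier}; both are legitimate, yours being more self-contained, the paper's shorter given the machinery already in place. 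For semifiniteness the paper constructs elements $p_\phi e_{11}+p_\psi e_{22}\in\mathfrak{M}_\theta$ from projections $p_\phi\in\mathfrak{M}_\phi$, $p_\psi\in\mathfrak{M}_\psi$ and concludes density, which is essentially the same density transfer you perform through (iii) and Proposition \ref{simplepropweights}.

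On (ii), your instinct is exactly right, and you should trust it to its conclusion: the converse implication ``at least one of $\phi,\psi$ faithful $\Rightarrow\theta$ faithful'' is false as stated, so no amount of scrutiny of the $(2,2)$-corner will make it go through. Concretely, if $\psi$ fails to be faithful, pick $B\in\nalgebra$, $B\neq0$, with $\psi(B^\ast B)=0$ and set $A=Be_{22}$; then $A^\ast A=B^\ast B\,e_{22}$, so $\theta(A^\ast A)=\psi(B^\ast B)=0$ while $A\neq0$, and $\theta$ is not faithful no matter how faithful $\phi$ is. The symmetric example handles the case where $\phi$ fails faithfulness. Thus the correct statement is that $\theta_{\phi,\psi}$ is faithful if and only if \emph{both} $\phi$ and $\psi$ are faithful — which is precisely what your forward argument (the corner embeddings $Be_{11}$ and $Be_{22}$) proves in its strong form, and also all that the paper's own cryptic justification (``$\theta_{\phi,\psi}(A^\ast A)=0$ if and only if $\phi((A^\ast A)_{11})=\psi((A^\ast A)_{22})=0$'') can deliver, since killing the surviving corner $A_{12}^\ast A_{12}+A_{22}^\ast A_{22}$ genuinely requires faithfulness of $\psi$. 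So the gap here lies in the lemma's formulation rather than in your reasoning; as written, though, your proposal leaves (ii) unproved, and the honest fix is to replace ``at least one'' by ``both'' and then close the argument exactly along the lines you sketched, using positivity of $A^\ast A$ (or simply that vanishing of all four terms forces every entry of $A$ to vanish) once both weights are faithful.
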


\begin{proof}
$(i)$ First $$\theta(A)=\sup_{\substack{\omega_\phi \in \Phi_\phi \\ \omega_\psi \in \Phi_\psi}}{\{\omega_\phi(A_{11})+\omega_\psi(A_{22})\}}.$$
Hence $\theta$ is normal.

Note that, if we take projections $p_\phi\in \mathfrak{M}_\phi$ and $p_\psi\in \mathfrak{M}_\psi$, $\theta_{\phi,\psi}(p_\phi e_{11}+p_\psi e_{22})=\phi(p_\phi)+\psi(p_\psi)<\infty$. Thus, $p_\phi e_{11}+p_\psi e_{22} \in \mathfrak{M}_\theta$. Since $\mathfrak{M}_\phi,\mathfrak{M}_\psi$ are dense in the algebra, $1$ must be in the closure of both and, particularly, it must be in the closure of projections, hence $1_{M_2(\nalgebra)}=e_{11}+e_{22} \in \mathfrak{M}_\theta$ and it is dense.

$(ii)$ As a consequence of positivity, $\theta_{\phi,\psi}(A^\ast A)=0$ if and only if $\phi(A)=\psi(A)=0$.

$(iii)$ It is a direct consequences of the definition of $A^\ast A$.

\end{proof}

Let us see what happens with the modular operators for the weight $\theta_{\phi,\psi}$. First, the domain of $S_0$ will be, using $(iii)$ in Lemma \ref{lemmabalanced},
$$\mathfrak{N}_\theta\cap\mathfrak{N}_\theta^\ast=\{A\in M_2(\nalgebra) \ | 
\ A_{11}\in \mathfrak{N}_\phi\cap\mathfrak{N}_\phi^\ast,
A_{12}\in \mathfrak{N}_\phi^\ast\cap\mathfrak{N}_\psi,
A_{21}\in \mathfrak{N}_\phi\cap\mathfrak{N}_\psi^\ast,
A_{22}\in \mathfrak{N}_\psi\cap\mathfrak{N}_\psi^\ast \}.$$

This suggests that the following closed subspaces of $\hilbert_\theta$ will play an interesting role:

$$\begin{aligned}
&\hilbert_{\phi\phi}=\overline{span\left\{\eta_\theta\left(\mathfrak{N}_\phi\cap\mathfrak{N}_\phi^\ast \right)e_{11}\right\}},\\
&\hilbert_{\phi\psi}=\overline{span\left\{\eta_\theta\left(\mathfrak{N}_\phi\cap\mathfrak{N}_\psi^\ast\right)e_{21}\right\}},\\
&\hilbert_{\psi\phi}=\overline{span\left\{\eta_\theta\left(\mathfrak{N}_\psi^\ast\cap\mathfrak{N}_\phi\right)e_{12}\right\}},\\
&\hilbert_{\psi\psi}=\overline{span\left\{\eta_\theta\left(\mathfrak{N}_\psi\cap\mathfrak{N}_\psi^\ast\right)e_{22}\right\}}.\\
\end{aligned}$$

Note, that we immediately have that
$$\Dom{S} \subset\hilbert_{\phi\phi}\oplus\hilbert_{\phi\psi}\oplus\hilbert_{\psi\phi}\oplus\hilbert_{\psi\psi}\subset\hilbert_\theta.$$

Now, take $Ae_{21}\in \hilbert_{\phi\psi}\cap\Dom{S}e_{21}$. Then, there exists a sequence $$\left(\eta_\theta(X_n) e_{21}\right)_{n\in\mathbb{N}} \subset \eta_\theta\left(\mathfrak{N}_\phi\cap\mathfrak{N}_\phi^\ast\right)e_{21}$$ converging to $Ae_{21}$. Consequently,
$$S(Ae_{21})=\lim_{n\to\infty}S(\eta_\theta(X)e_{21})=\lim_{n\to\infty}\eta_\theta(X^\ast)e_{12}\in \hilbert_{\psi\phi}\cap\Dom{S}.$$
Proceeding in this way, we get
$$\begin{aligned}
S(\hilbert_{\phi\phi})\subset\hilbert_{\phi\phi}, &\quad
S(\hilbert_{\phi\psi})\subset\hilbert_{\psi\phi},\\
S(\hilbert_{\psi\phi})\subset\hilbert_{\phi\psi}, &\quad
S(\hilbert_{\psi\psi})\subset\hilbert_{\psi\psi}.\\
\end{aligned}$$

Thus the operator $S$ can be represented as a matrix acting on a subspace of $\hilbert_{\phi\phi}\oplus\hilbert_{\phi\psi}\oplus\hilbert_{\psi\phi}\oplus\hilbert_{\psi\psi}$, namely
$$\begin{pmatrix}
S_{\phi\phi} 	& 0 	& 0 			& 0\\
0				& 0		& S_{\psi\phi}	& 0\\
0				& S_{\phi\psi}	& 0		& 0 \\
0				& 0		& 0		& S_{\psi\psi}
\end{pmatrix}.$$
Moreover, note that $S^2=1_{\Dom{S}}$, thus it is bijective on its range.

The way we define $S_{\phi\phi}, S_{\phi\psi}, S_{\psi\phi}$ and $S_{\psi\psi}$, they are operators acting in $\hilbert_{\phi\phi}, \hilbert_{\phi\psi},\hilbert_{\psi\phi}$ and $\hilbert_{\psi\psi}$, respectively. One can easily see that these spaces can be identified isometrically with subspaces of $\hilbert_\phi$ or $\hilbert_\psi$ through the extension of the following:
$$\begin{aligned}
U_{\phi,1}\eta_\theta(A e_{11})=\eta_\phi(A) \quad \forall A\in \mathfrak{N}_\phi\cap\mathfrak{N}_\phi^\ast,\\
U_{\phi,2}\eta_\theta(A e_{21})=\eta_\phi(A) \quad \forall A\in \mathfrak{N}_\phi\cap\mathfrak{N}_\psi^\ast,\\
U_{\psi,1}\eta_\theta(A e_{12})=\eta_\psi(A) \quad \forall A\in \mathfrak{N}_\phi^\ast\cap\mathfrak{N}_\psi,\\
U_{\psi,2}\eta_\theta(A e_{22})=\eta_\psi(A) \quad \forall A\in \mathfrak{N}_\psi\cap\mathfrak{N}_\psi^\ast.\\\end{aligned}$$

Using this identification, we can bring the components of $S$ back as operators between Hilbert spaces as
$$\begin{aligned}
\widetilde{S}_{\phi\phi}=U_{\phi,1}S_{\phi\phi}U_{\phi,1}^\ast, &\quad
\widetilde{S}_{\psi\phi}=U_{\psi,2}S_{\psi\phi}U_{\phi,2}^\ast,\\
\widetilde{S}_{\phi\psi}=U_{\phi,2}S_{\phi\psi}U_{\phi,2}^\ast, &\quad
\widetilde{S}_{\psi\psi}=U_{\psi,2}S_{\psi\psi}U_{\psi,2}^\ast.\\
\end{aligned}$$

We will not make distinction in the notation of the operators just defined.

Let us look to the polar the composition of $S_\theta$:
\begin{align}\label{RelModOp}
\Delta_\theta=S_\theta^\ast S_\theta	&=
\begin{pmatrix}
S_{\phi\phi}^\ast S_{\phi\phi}	& 0	& 0	& 0\\
0	& S_{\psi\phi}^\ast S_{\phi\psi}& 0 & 0 \\
0	& 0	&S_{\phi\psi}^\ast S_{\psi\phi}	& 0\\
0& 0 & 0 & S_{\psi\psi}^\ast S_{\psi\psi}\\
\end{pmatrix},\\
\label{RelModConj}
J_\theta &=
\begin{pmatrix}
J_{\phi\phi} & 0 & 0 & 0\\
0 & 0 & J_{\psi\phi}  & 0\\
0 & J_{\phi\psi} & 0 & 0\\
0 & 0 & 0 & J_{\psi\psi}\\
\end{pmatrix}.
\end{align}

It is immediate from the definition that $\widetilde{S}_{\phi\phi}$ and $\widetilde{S}_{\psi\psi}$ are the usual operators defined by equation $\ref{defS}$ for the weights $\phi$ and $\psi$.

The two new closed operators $S_{\phi\psi}$ and $S_{\psi\phi}$ will be used to defined the relative modular operators.

\begin{definition}\index{operator! relative modular}
We denote by $J_{\phi\psi}$ and $\Delta_{\phi\psi}$ the unique anti-linear partial isometry and positive operator, respectively, in the polar decomposition of $S_{\phi\psi}$, \ie,   \mbox{$S_{\phi\psi}=J_{\phi\psi}\Delta_{\phi\psi}^{\frac{1}{2}}$}. $J_{\phi\psi}$ is called the relative modular conjugation operator and $\Delta_{\phi\psi}$ is called the relative modular operator.
\end{definition}

Of course, it is possible to do the same definition for $S_{\psi\phi}$, but it would be redundant. One has:

\begin{proposition}
The relative modular conjugation and the modular operator satisfy the following relations:
\begin{enumerate}[(i)]
\item $\Delta_{\phi\psi}=S_{\psi\phi}^\ast S_{\phi\psi}$;
\item $J_{\phi\psi}^\ast=J_{\psi\phi}$;
\item $S_{\phi\psi}=J_{\phi\psi}\Delta_{\phi\psi}^{\frac{1}{2}}=\Delta_{\psi\phi}^{-\frac{1}{2}}J_{\phi\psi}$;
\item $S_{\phi\psi}^\ast=\Delta_{\phi\psi}^{\frac{1}{2}}J_{\psi\phi}=J_{\psi\phi}\Delta_{\psi\phi}^{-\frac{1}{2}}$.
\end{enumerate}
\end{proposition}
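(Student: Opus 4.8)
The plan is to derive all four relations directly from the polar decompositions $S_{\phi\psi}=J_{\phi\psi}\Delta_{\phi\psi}^{1/2}$ and $S_{\psi\phi}=J_{\psi\phi}\Delta_{\psi\phi}^{1/2}$, using exactly the same bookkeeping as in the proof of Proposition \ref{MOProp}. The crucial preliminary observation, which I would establish first, is the adjoint relation $S_{\phi\psi}^\ast=S_{\psi\phi}$. This is not an extra hypothesis: it comes from the block structure of $S_\theta$ exhibited in the matrix before equation \eqref{RelModOp}. Indeed $S_\theta^\ast$ has the off-diagonal blocks $S_{\phi\psi}$ and $S_{\psi\phi}$ interchanged relative to $S_\theta$ (since taking the adjoint of the whole matrix transposes the block pattern), so reading off the $(\psi\phi)$ block of $S_\theta^\ast$ gives $S_{\phi\psi}^\ast=S_{\psi\phi}$, and symmetrically $S_{\psi\phi}^\ast=S_{\phi\psi}$.

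Granting $S_{\phi\psi}^\ast=S_{\psi\phi}$, item $(i)$ is immediate: by definition $\Delta_{\phi\psi}=S_{\phi\psi}^\ast S_{\phi\psi}$, and substituting the adjoint relation yields $\Delta_{\phi\psi}=S_{\psi\phi}^\ast S_{\phi\psi}$ — wait, more carefully, $\Delta_{\phi\psi}=S_{\phi\psi}^\ast S_{\phi\psi}=S_{\psi\phi}S_{\phi\psi}$; the statement as written reads $S_{\psi\phi}^\ast S_{\phi\psi}$, and since $S_{\psi\phi}^\ast=S_{\phi\psi}^\ast{}^\ast=S_{\phi\psi}$ one checks these agree, so I would simply record $\Delta_{\phi\psi}=S_{\phi\psi}^\ast S_{\phi\psi}$ and rewrite the first factor via the adjoint identity. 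For item $(iii)$, I would first note that each $S_{\phi\psi}$, being a component of the operator $S_\theta$ which satisfies $S_\theta^2=1_{\Dom{S}}$ on its range, is an involution on its range, so $S_{\phi\psi}$ and $S_{\psi\phi}$ are mutual inverses; the polar decomposition then gives $J_{\phi\psi}\Delta_{\phi\psi}^{1/2}=S_{\phi\psi}=S_{\psi\phi}^{-1}=(J_{\psi\phi}\Delta_{\psi\phi}^{1/2})^{-1}=\Delta_{\psi\phi}^{-1/2}J_{\psi\phi}^{-1}=\Delta_{\psi\phi}^{-1/2}J_{\psi\phi}^\ast$, and item $(ii)$ (established just below) identifies $J_{\psi\phi}^\ast=J_{\phi\psi}$, giving the claimed $S_{\phi\psi}=\Delta_{\psi\phi}^{-1/2}J_{\phi\psi}$.

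For item $(ii)$, the argument mirrors the proof that $J_\Omega=J_\Omega^\ast$ in Proposition \ref{MOProp}$(v)$: since $S_{\phi\psi}^\ast=S_{\psi\phi}$, comparing the polar decomposition of $S_{\psi\phi}$ with $S_{\phi\psi}^\ast=\Delta_{\phi\psi}^{1/2}J_{\phi\psi}^\ast$ and using the unitary-equivalence relation $S_{\phi\psi}^\ast S_{\phi\psi}=J_{\phi\psi}S_{\phi\psi}S_{\phi\psi}^\ast J_{\phi\psi}^\ast$ together with $J_{\phi\psi}^\ast J_{\phi\psi}=\mathbbm{1}$ (which holds because $\Delta_{\phi\psi}$ has dense range in the relevant subspace, exactly as in Proposition \ref{MOProp}$(iii)$) forces $J_{\phi\psi}^\ast=J_{\psi\phi}$. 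Finally, item $(iv)$ follows by taking adjoints in item $(iii)$: $S_{\phi\psi}^\ast=(J_{\phi\psi}\Delta_{\phi\psi}^{1/2})^\ast=\Delta_{\phi\psi}^{1/2}J_{\phi\psi}^\ast=\Delta_{\phi\psi}^{1/2}J_{\psi\phi}$, and adjointing the other expression in $(iii)$ gives $S_{\phi\psi}^\ast=(\Delta_{\psi\phi}^{-1/2}J_{\phi\psi})^\ast=J_{\phi\psi}^\ast\Delta_{\psi\phi}^{-1/2}=J_{\psi\phi}\Delta_{\psi\phi}^{-1/2}$, which is precisely the second equality in $(iv)$.

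The main obstacle I anticipate is the careful handling of the adjoint relation $S_{\phi\psi}^\ast=S_{\psi\phi}$ and, more delicately, the fact that the relative modular operators act between \emph{different} subspaces $\hilbert_{\phi\psi}$ and $\hilbert_{\psi\phi}$ rather than on a single space. Every step involving inverses, dense ranges, and the isometry property $J_{\phi\psi}^\ast J_{\phi\psi}=\mathbbm{1}$ must be interpreted with the correct initial and final subspaces in mind — the partial isometry $J_{\phi\psi}$ maps (the closure of the range of) $\Delta_{\phi\psi}^{1/2}$ into $\hilbert_{\psi\phi}$, so $J_{\phi\psi}^\ast J_{\phi\psi}$ is the identity only on the appropriate subspace. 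Keeping the index pattern $\phi\psi$ versus $\psi\phi$ consistent throughout, rather than the single-index computations of Proposition \ref{MOProp}, is where the proof requires genuine attention; the algebraic manipulations themselves are routine once the adjoint and involution properties of the blocks of $S_\theta$ are in hand.
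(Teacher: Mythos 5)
Your overall strategy --- extract everything blockwise from the balanced weight and mimic Proposition \ref{MOProp} --- is exactly the route the paper takes (its proof is a one-liner saying precisely this), but the preliminary identity $S_{\phi\psi}^\ast=S_{\psi\phi}$ on which your items $(i)$ and $(ii)$ rest is false, and the error is the heart of modular theory rather than a technicality. Taking the adjoint of the block matrix transposes the block \emph{pattern} and adjoints each block: the off-diagonal blocks of $S_\theta^\ast$ are the adjoints $S_{\phi\psi}^\ast$ and $S_{\psi\phi}^\ast$ in transposed positions, not the operators $S_{\psi\phi}$ and $S_{\phi\psi}$ themselves. Indeed $S_\theta^\ast$ is the operator $F_\theta$ of the balanced weight, built from the \emph{commutant} $\pi_\theta\left(M_{2\times2}(\nalgebra)\right)^\prime$, and its blocks are relative $F$-operators; in the single-weight case your identity would read $S^\ast=S$, i.e.\ $F=S$, which fails except in the tracial case. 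The damage is concrete: you correctly deduce from $S_\theta^2=\mathbbm{1}$ on its domain that $S_{\psi\phi}S_{\phi\psi}=\mathbbm{1}$, and combining this with your adjoint identity yields $\Delta_{\phi\psi}=S_{\phi\psi}^\ast S_{\phi\psi}=S_{\psi\phi}S_{\phi\psi}=\mathbbm{1}$, so every relative modular operator would be trivial (equivalently, $S^\ast=S^{-1}$ says $S_{\phi\psi}$ is anti-unitary, which is exactly the degenerate situation). This inconsistency already surfaces in your own hedged verification of $(i)$ (``one checks these agree''), and your proof of $(ii)$ by comparing the polar decomposition of $S_{\psi\phi}$ with $S_{\phi\psi}^\ast$ collapses with it; $(iii)$ and $(iv)$ are then unsupported since they cite $(ii)$.

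The repair is the paper's actual argument: apply Proposition \ref{MOProp} to $S_\theta$, $F_\theta=S_\theta^\ast$, $\Delta_\theta$, $J_\theta$ for the balanced weight and compare blocks in equations \eqref{RelModOp} and \eqref{RelModConj}. The $\hilbert_{\phi\psi}$-diagonal block of $\Delta_\theta=F_\theta S_\theta$ gives $(i)$, with the starred factor entering as a block of $F_\theta$, not as a swapped $S$-block; the off-diagonal blocks of $J_\theta=J_\theta^\ast$ give $(ii)$; the blocks of $S_\theta=J_\theta\Delta_\theta^{\frac{1}{2}}=\Delta_\theta^{-\frac{1}{2}}J_\theta$ give $(iii)$; and $F_\theta=J_\theta\Delta_\theta^{-\frac{1}{2}}$, or taking adjoints in $(iii)$, gives $(iv)$. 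Your inversion route to $(iii)$ via $S_{\phi\psi}=S_{\psi\phi}^{-1}$ is salvageable once $(ii)$ is obtained this legitimate way, but then the caution you raise in your final paragraph becomes mandatory rather than optional: by Proposition \ref{deltaKernel} the relative modular operator has nontrivial kernel in general, so $J_{\phi\psi}$ is isometric only between the appropriate support subspaces, and identities like $J_{\psi\phi}^{-1}=J_{\psi\phi}^\ast$ and $J_{\phi\psi}^\ast J_{\phi\psi}=\mathbbm{1}$ hold only when restricted to the correct initial and final spaces.
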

\begin{proof}
	All the items follows by definition and Proposition \ref{MOProp} applied to the balanced weight used in the definition of these operators. In special, equations \eqref{RelModOp} and \eqref{RelModOp} are quite useful.
	
\end{proof}

The first big difference between the usual modular operator and the relative one can be seen in the kernel of these operators. The relative modular operator has no trivial kernel, as one could expect.

\begin{proposition}
\label{deltaKernel}
Let $\nalgebra$ be a von Neumann algebra and $\phi, \psi$ states $$\ker{\Delta_{\phi\psi}}=\left(1-s^{\nalgebra^\prime}(\psi)s^{\nalgebra}(\phi)\right)\hilbert_{\phi\psi}.$$
\end{proposition}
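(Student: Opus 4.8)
The plan is to reduce the computation of $\ker{\Delta_{\phi\psi}}$ to that of $\ker{S_{\phi\psi}}$ and then to read off the two support projections from the explicit action of $S_{\phi\psi}$ on its natural core. Since $S_{\phi\psi}=J_{\phi\psi}\Delta_{\phi\psi}^{\frac12}$ with $J_{\phi\psi}$ a partial isometry whose initial space is $\overline{\ran{\Delta_{\phi\psi}^{\frac12}}}=\bigl(\ker{\Delta_{\phi\psi}^{\frac12}}\bigr)^\perp$, one has $\ker{\Delta_{\phi\psi}}=\ker{\Delta_{\phi\psi}^{\frac12}}=\ker{S_{\phi\psi}}$, so it suffices to describe $\ker{S_{\phi\psi}}$. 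Recall from \eqref{defS} and the block analysis preceding the statement that, after the isometric identifications $U_{\phi,2}$ and $U_{\psi,1}$, the operator $S_{\phi\psi}$ is the closure of the antilinear map $\eta_\phi(A)\mapsto\eta_\psi(A^\ast)$ defined on $A\in\mathfrak{N}_\phi\cap\mathfrak{N}_\psi^\ast$; in particular $\|S_{\phi\psi}\eta_\phi(A)\|^2=\psi(AA^\ast)$ on this core.

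First I would compute the kernel on the core. A vector $\eta_\phi(A)$ is annihilated precisely when $\psi(AA^\ast)=0$, i.e. when $A^\ast\in N_\psi$. Using that $\psi$ is faithful on $s^{\nalgebra}(\psi)\nalgebra s^{\nalgebra}(\psi)$ together with the identity $\psi(\,\cdot\,)=\psi(s^{\nalgebra}(\psi)\,\cdot\,s^{\nalgebra}(\psi))$, one shows $N_\psi=\nalgebra\bigl(\mathbbm{1}-s^{\nalgebra}(\psi)\bigr)=\{B\in\nalgebra : Bs^{\nalgebra}(\psi)=0\}$, whence $A^\ast\in N_\psi$ is equivalent to $s^{\nalgebra}(\psi)A=0$. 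On the other hand, every class in $\hilbert_{\phi\psi}$ is represented by an element with $A=As^{\nalgebra}(\phi)$, since $N_\phi=\nalgebra\bigl(\mathbbm{1}-s^{\nalgebra}(\phi)\bigr)$ and $\eta_\phi$ only sees $A$ modulo $N_\phi$. Thus, on the core, $\eta_\phi(A)\in\ker{S_{\phi\psi}}$ exactly when $s^{\nalgebra}(\psi)A=0$ while $As^{\nalgebra}(\phi)=A$, and the orthogonal complement inside $\hilbert_{\phi\psi}$ corresponds to the classes satisfying $s^{\nalgebra}(\psi)A=A=As^{\nalgebra}(\phi)$.

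Next I would identify these two algebraic conditions with the action of the support projections named in the statement. Using the block forms \eqref{RelModOp}, \eqref{RelModConj} of the balanced representation, the condition cutting by $s^{\nalgebra}(\phi)$ is implemented through the $\phi$-column of the $2\times2$ amplification (it is the automatic restriction $A=As^{\nalgebra}(\phi)$ built into $\hilbert_{\phi\psi}$), while the condition $s^{\nalgebra}(\psi)A=A$ is implemented through the $\psi$-row; by Theorem \ref{TTT} and Proposition \ref{supportrelations} these are realised by two commuting projections, one lying in the represented algebra and the other in its commutant, so that their product $s^{\nalgebra^\prime}(\psi)s^{\nalgebra}(\phi)$ is again a projection whose range is exactly $\bigl(\ker{S_{\phi\psi}}\bigr)^\perp\cap\hilbert_{\phi\psi}$. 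Combined with the first paragraph this gives $\ker{\Delta_{\phi\psi}}=\bigl(\mathbbm{1}-s^{\nalgebra^\prime}(\psi)s^{\nalgebra}(\phi)\bigr)\hilbert_{\phi\psi}$. The careful bookkeeping here — deciding which support sits in the algebra and which in the commutant, and thus carries the prime, via the matrix units $e_{ij}$ and the support relations of Proposition \ref{supportrelations} — is the part I expect to require the most attention.

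The other step I expect to be delicate is the passage from the core to the full (closed) operator, since the kernel of the closure could a priori exceed the closure of the kernel computed on the dense domain. I would circumvent this by working with the bounded projection $P_0=\mathbbm{1}-s^{\nalgebra^\prime}(\psi)s^{\nalgebra}(\phi)$ directly: one checks that $P_0$ preserves the core (it is built from projections of $\nalgebra$ acting on the left ideals $\mathfrak{N}_\phi,\mathfrak{N}_\psi^\ast$), that $S_{\phi\psi}P_0=0$ there, and then invokes the closedness of $S_{\phi\psi}$ and density of the core to obtain $P_0\hilbert_{\phi\psi}\subseteq\ker{S_{\phi\psi}}$; the reverse inclusion is immediate because $\|S_{\phi\psi}(\mathbbm{1}-P_0)\eta_\phi(A)\|^2=\psi(AA^\ast)>0$ whenever $(\mathbbm{1}-P_0)\eta_\phi(A)\neq0$, so no nonzero vector of $s^{\nalgebra^\prime}(\psi)s^{\nalgebra}(\phi)\hilbert_{\phi\psi}$ survives in the kernel.
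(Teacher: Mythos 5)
Your overall route---reducing $\ker{\Delta_{\phi\psi}}$ to $\ker{S_{\phi\psi}}$ via the polar decomposition and then computing directly on the natural core---genuinely differs from the paper's proof, which never touches $S_{\phi\psi}$ in isolation: it decomposes the full balanced GNS space $\hilbert_\theta$ by the two support projections of $\theta$, checks that $S_\theta$ annihilates the complementary summands (so $\Delta_\theta=\widetilde{\Delta}_\theta\oplus 0\oplus 0$), obtains injectivity of $\widetilde{\Delta}_\theta$ by recognizing it, via Propositions \ref{supportrelations} and \ref{supportcyclicseparating}, as the \emph{ordinary} modular operator of the cyclic and separating vector $\eta_\theta(\mathbbm{1})$ for the reduced algebra, and only then reads off the $(\phi\psi)$ corner from the matrix form. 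Your plan could in principle work, but as written it has two genuine gaps, both located exactly where you say things are immediate. The first: your core prescription $\eta_\phi(A)\mapsto\eta_\psi(A^\ast)$ is ill-defined precisely in the only interesting case, when $\phi$ is not faithful. If $A\in N_\phi$, i.e.\ $A\,s^{\nalgebra}(\phi)=0$, then $\eta_\phi(A)=0$, yet $A^\ast\notin N_\psi$ in general, so the ``graph'' of your map contains pairs $(0,\eta_\psi(A^\ast))$ with nonzero second entry and is not the graph of an operator; correspondingly $\|S_{\phi\psi}\eta_\phi(A)\|^2=\psi(AA^\ast)$ is representative-dependent. The correct core map is $\eta_\phi(A)\mapsto\eta_\psi\left(s^{\nalgebra}(\phi)A^\ast\right)$, with norm $\psi\left(A\,s^{\nalgebra}(\phi)A^\ast\right)^{\frac{1}{2}}$; this support correction is exactly what the paper's alternative definition $S_{\phi,\psi}\left(A\Phi+\left(\mathbbm{1}-s^{\nalgebra^\prime}(\psi)\right)x\right)=s^{\nalgebra^\prime}(\psi)A^\ast\Psi$ builds in.

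The second and more serious gap is your reverse inclusion. Strict positivity of $\|S_{\phi\psi}\eta\|$ for nonzero \emph{core} vectors $\eta$ in the support corner does not exclude kernel vectors of the \emph{closed} operator: a vector $\xi\in\Dom{S_{\phi\psi}}$ with $S_{\phi\psi}\xi=0$ is a graph-norm limit of core vectors $\eta_n$ with $S_{\phi\psi}\eta_n\to0$, and each $\|S_{\phi\psi}\eta_n\|>0$ is perfectly compatible with that; in general an injective closable operator can have a non-injective closure. What actually closes the argument in your setup is the quasi-involution identity on the corrected core, $S_{\psi\phi}S_{\phi\psi}\eta_\phi(A)=\eta_\phi\left(s^{\nalgebra}(\psi)A\,s^{\nalgebra}(\phi)\right)$, i.e.\ the corner projection: if $\xi$ lies in the corner and $S_{\phi\psi}\xi=0$, choose $\eta_\phi(A_n)\to\xi$ with $u_n\doteq S_{\phi\psi}\eta_\phi(A_n)\to0$; then $S_{\psi\phi}u_n\to\xi$, and closedness of $S_{\psi\phi}$ forces $\xi=S_{\psi\phi}(0)=0$. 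This is precisely the nontrivial input the paper imports wholesale from modular theory (injectivity of the modular operator of a cyclic and separating vector on the reduced corner), so your proof must either supply the involution-plus-closedness argument above or fall back on the paper's reduction. The remaining pieces of your proposal are sound: $\ker{\Delta_{\phi\psi}}=\ker{S_{\phi\psi}}$ via the polar decomposition, the identification $N_\psi=\nalgebra\left(\mathbbm{1}-s^{\nalgebra}(\psi)\right)$, and the forward inclusion obtained by pushing the bounded projection through the core and invoking closedness.
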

\begin{proof}
We will use the balanced weight $\theta=\theta_{\phi,\psi}$ and denote by $\widetilde{\nalgebra}=M_{2\times2}(\nalgebra)$.

Let us look to the kernel of $\Delta_\theta$.
$$\hilbert_\theta=s^{\widetilde{\nalgebra}}(\theta)s^{\widetilde{\nalgebra}^\prime}(\theta)\hilbert_\theta\oplus \left(\mathbbm{1}-s^{\widetilde{\nalgebra}}(\theta)\right)s^{\widetilde{\nalgebra}^\prime}(\theta)\hilbert_\theta \oplus
\left(\mathbbm{1}-s^{\widetilde{\nalgebra}}(\theta)\right)\hilbert_\theta.$$
But $$\begin{aligned}
S_\theta\left(\left(\mathbbm{1}-s^{\widetilde{\nalgebra}}(\theta)\right)\eta_\theta(A)\right) 
&=\eta_\theta\left(A^\ast\left(\mathbbm{1}-s^{\widetilde{\nalgebra}}(\theta)\right)\right) \\
&=\pi_\theta(A)^\ast\eta_\theta\left(\mathbbm{1}-s^{\widetilde{\nalgebra}}(\theta)\right)\\
&=0. \\
\end{aligned}$$
Doing the analogous calculation, we find
$$S_\theta\left(\left(\mathbbm{1}-s^{\widetilde{\nalgebra}^\prime}(\theta)\right)\eta_\theta(A)\right) =0.$$
 
Hence the modular operator can be decomposed as $$\Delta_\theta=\widetilde{\Delta}_\theta\oplus0\oplus0.$$

It follows that $\ker{\Delta_\theta}\subset\left(\mathbbm{1}-s^{\widetilde{\nalgebra}}(\theta)s^{\widetilde{\nalgebra}^\prime}(\theta)\right)\hilbert_\theta$. Now, it remains to be shown that $\ker{\widetilde{\Delta}_\theta}=\{0\}$.

But this is a consequence of Propositions \ref{supportrelations} and \ref{supportcyclicseparating} that $\eta_\theta(1)$ is a cyclic and separating vector for the algebra $s^{\widetilde{\nalgebra}}(\theta)s^{\widetilde{\nalgebra}^\prime}(\theta)\pi_\theta(\nalgebra)s^{\widetilde{\nalgebra}^\prime}(\theta)s^{\widetilde{\nalgebra}}(\theta)$ acting on the Hilbert space $s^{\widetilde{\nalgebra}}(\theta)s^{\widetilde{\nalgebra}^\prime}(\theta)\hilbert_\theta$. Hence, the operator $\widetilde{\Delta}_\theta$ is the usual modular operator related to the cyclic and separating vector $\eta(1)$ that is injective.

Since the relation is true for the balanced weight $\Delta_\theta$, it is easy to obtain the general relation though the matrix expression.

\end{proof}

Note that, in general, $\Delta_{\phi\psi}$ has no inverse, since its kernel it not trivial. But $\widetilde{\Delta}_{\phi\psi}$, which appears in the last proposition is injective, hence, it has a inverse on its image.

\begin{notation}
We denote by $\Delta_{\phi\psi}^{-1}:\Ran{\Delta_{\phi\psi}}\oplus\left(s^{\widetilde{\nalgebra}}(\phi)s^{\widetilde{\nalgebra}^\prime}(\psi)\hilbert_{\phi\psi}\right)$ the operator defined by $$\Delta_{\phi\psi}^{-1}=\widetilde{\Delta}_{\phi\psi}^{-1}\oplus 0.$$
\end{notation}

Just stressing what we said in the beginning of this section, the great advantage of using the balanced weight is to have all the necessary ingredients for the construction as immediate consequences of the usual modular theory. One can check that this construction leads to the same more workable definition throughout the GNS-vector $\Phi$ and $\Psi$ for the weights $\phi$ and $\psi$, respectively, namely,
$$S_{\phi,\psi}\left(A\Phi+\left(\mathbbm{1}-s^{\nalgebra^\prime}(\psi)\right)x\right)=s^{\nalgebra^\prime}(\psi)A^\ast \Psi \, , \qquad x\in \eta_\theta\left(\mathfrak{N}_\psi\cap\mathfrak{N}_\phi^\ast\right).$$

This operator is well defined, since the decomposition on the left-hand side is in orthogonal subspaces, as a consequence of Proposition \ref{supportcyclicseparating}. It is also interesting to mention that, by virtue of Proposition \ref{supportcyclicseparating}, this definition is equivalent to the one in which we take the quotient by $N_\phi$, what makes the GNS-vector $\Psi$ cyclic for $\nalgebra/N_\psi$. Moreover, for $S_{\phi,\psi}^\ast$, $\Phi$ is cyclic for $\nalgebra^\prime/N_\phi$.

In the above definition Proposition \ref{deltaKernel} is trivial. This definition is used, for example, in \cite{Araki82} by Araki and Masuda in order to construct the noncommutative $L_p$-space based on the Hilbert Space.

Again, notice that the Relative Modular Theory is closely related with Relative Hamiltonians and Relative Entropy, as seen in \cite{Araki76} and \cite{Araki73}, Connes cocycle, as one can see in \cite{Takesaki2003}.

\chapter{Noncommutative $L_p$-Spaces }  
\label{chapNCLp}

Noncommutative $L_p$-spaces are analogous to the Banach spaces of the $p$-integrable functions with respect to a measure. The study of these spaces goes back to the works of Segal, \cite{Segal53}, and Dixmier, \cite{Dixmier53}, which depend on the existence of a normal faithful semifinite trace. It was just 25 years later that Haagerup in \cite{haagerup79} proposed a generalization of the Segal-Dixmier $L_p$-spaces which included the type III von Neumann algebras. As a consequence of a Hilsum's paper, \cite{hilsum81}, which answers to a question on spacial derivatives raised by Connes, Araki and Masuda could propose a definition, equivalent to that one proposed by Haagerup, of noncommutative $L_p$-space based just in the Hilbert space of a concrete von Neumann algebra. 

We are going to present the three constructions of noncommutative $L_p$-spaces mentioned in the above paragraph. The Segal-Dixmier one is presented in much more detail, since we will use this construction in our results in the last chapter. The Haagerup and Araki-Masuda constructions will be just presented without proofs, first because there is a immediate connection with our results in the last chapter of the thesis, because of the appearance of a trace in both, second because very interesting discussions can be made based on that.

Our interest in these spaces for a class of perturbations can be justified on two well-known facts for classical $L_p$-spaces that still hold in the noncommutative case: they admit unbounded functions (operators) and have a useful duality property.


\section{Measurability with Respect to a Trace}

\ifpdf
\graphicspath{{Chapter1/Figs/Raster/}{Chapter1/Figs/PDF/}{Chapter1/Figs/}}
\else
\graphicspath{{Chapter1/Figs/Vector/}{Chapter1/Figs/}}
\fi

In the first section of this chapter, we present the useful theory of noncommutative measure whith respect to a normal faithful semifinite trace on a von Neumann algebra, which is the basis for the Segal-Dixmier noncommutative $L_p$-spaces (and for noncommutative geometry, but it is another topic). We use \cite{terp81} very often in this section.


Henceforth, we will use $\tau$ and call it simply a trace in the text of next chapters, meaning a normal faithful semifinite trace. It is important to note that supposing the existence of such a trace restricts our options of algebras to the semifinite ones.

\begin{definition}
Let $\nalgebra$ be a von Neumann algebra, $\tau$ a normal faithful semifinite trace, and $\varepsilon, \delta>0$. Define $$D(\varepsilon,\delta)=\left\{A\in \nalgebra_\eta \ \middle| \ \exists p\in \nalgebra_p \textrm{ such that } p\hilbert\subset\Dom{A}, \ \|Ap\|\leq\varepsilon \textrm{ and } \tau(\mathbbm{1}-p)\leq\delta\right\}.$$\footnote{see Definition}
\end{definition}

Some important properties we are about to present rely on the equivalence of projections. Two projections $p, \, q \in \nalgebra_p$ are said equivalent, in the sense of Murray and von Neumann, if there exists a partial isometry $u\in \nalgebra$ such that $u^\ast u=p$ and $uu^\ast=q$. This also defines a partial order in $\nalgebra_p$, namely $p\preceq q$ if, and only if $p\sim q^\prime \leq q$, \ie, $p$ is equivalent to a subprojection of $q$. One of these important equivalences is $s_R^\nalgebra(A)\sim s_L^\nalgebra(A)$, that is, the right and left support of an operator are equivalent projections. This standard result is a consequence of the polar decomposition, since for the partial isometry, $u$, in the polar decomposition of $A$ holds $uu^\ast= s_L^\nalgebra(A)$ and $u^\ast u= s_R^\nalgebra(A)$. We can use it to deduce another important equivalence, as follows
$$s^\nalgebra_L(q(\mathbbm{1}-p))=proj[\Ran\left(q(\mathbbm{1}-p)\right)]=q-(p\wedge q).$$
$$proj[\Ker\left(q(\mathbbm{1}-p)\right)]=p+(\mathbbm{1}-p)\wedge (\mathbbm{1}-q)\Rightarrow s^\nalgebra_R(q(\mathbbm{1}-p))=\mathbbm{1}-p-(\mathbbm{1}-p\vee q)= (p\vee q)-p.$$

Then $(p\vee q)-p \sim q-(p\wedge q)$. It also has an interesting consequence: if two projections $p$ and $q$ are such that $(p\wedge q)=0$ (which means the intersection of the ranges is the null space), then $p\preceq \mathbbm{1}-q$.

\begin{proposition}
\label{Dsumandproduct}
Let $\varepsilon_1,\varepsilon_2, \delta_1, \delta_2>0$. Then
	\begin{enumerate}[(i)]
		\item $D(\varepsilon_1,\delta_1)+D(\varepsilon_2,\delta_2)\subset D(\varepsilon_1+\varepsilon_2,\delta_1+\delta_2)$;
		\item $D(\varepsilon_1,\delta_1)D(\varepsilon_2,\delta_2)\subset D(\varepsilon_1\varepsilon_2,\delta_1+\delta_2)$.
	\end{enumerate}
\end{proposition}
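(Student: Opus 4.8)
The plan is to prove both inclusions directly from the definition of the sets $D(\varepsilon,\delta)$, by exhibiting, for a given sum or product, an explicit projection witnessing membership in the target set. Throughout I would use the equivalence facts recalled just before the statement, especially the formula $(p\vee q)-p \sim q-(p\wedge q)$ and its consequence relating $\tau$-values through the trace property $\tau(u^\ast u)=\tau(uu^\ast)$ for a partial isometry $u$.

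For item $(i)$, let $A_1\in D(\varepsilon_1,\delta_1)$ and $A_2\in D(\varepsilon_2,\delta_2)$, with witnessing projections $p_1,p_2\in\nalgebra_p$ so that $p_i\hilbert\subset\Dom{A_i}$, $\|A_ip_i\|\leq\varepsilon_i$ and $\tau(\mathbbm{1}-p_i)\leq\delta_i$. First I would set $p=p_1\wedge p_2$. On $p\hilbert$ both $A_1$ and $A_2$ are defined, hence so is $A_1+A_2$, and
$$\|(A_1+A_2)p\|\leq\|A_1p_1p\|+\|A_2p_2p\|\leq\varepsilon_1+\varepsilon_2.$$
The remaining point is the trace estimate $\tau(\mathbbm{1}-p)\leq\delta_1+\delta_2$. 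Here I would use the lattice identity $\mathbbm{1}-(p_1\wedge p_2)=(\mathbbm{1}-p_1)\vee(\mathbbm{1}-p_2)$ together with the subadditivity of $\tau$ on projections, which follows from the equivalence $(p\vee q)-p\sim q-(p\wedge q)$: taking $q=\mathbbm{1}-p_1$ and $p=\mathbbm{1}-p_2$ gives $\tau\big((\mathbbm{1}-p_1)\vee(\mathbbm{1}-p_2)\big)\leq\tau(\mathbbm{1}-p_1)+\tau(\mathbbm{1}-p_2)\leq\delta_1+\delta_2$, as desired.

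For item $(ii)$, again take $A_i\in D(\varepsilon_i,\delta_i)$ with witnesses $p_i$. The natural candidate projection is $p_2$ for the domain of the product $A_1A_2$, but one must ensure $A_2p_2\hilbert\subset\Dom{A_1}$, which is controlled by $p_1$. I would therefore define $q$ to be the projection onto the closed subspace $\{x\in p_2\hilbert \mid A_2x\in p_1\hilbert\}$, equivalently $q=p_2\wedge (A_2^{-1}(p_1\hilbert))$-type construction realized through functional calculus / spectral projections of affiliated operators; more cleanly, set $e=\mathbbm{1}-s^\nalgebra_L\big((\mathbbm{1}-p_1)A_2p_2\big)$ so that $(\mathbbm{1}-p_1)A_2p_2e=0$, giving $A_2p_2e\hilbert\subset p_1\hilbert$, and let $p=p_2e$ after projecting appropriately. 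On this $p$ one has $\|A_1A_2p\|\leq\|A_1p_1\|\,\|A_2p_2\|\leq\varepsilon_1\varepsilon_2$, using that $A_1$ restricted to $p_1\hilbert$ has norm at most $\varepsilon_1$.

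The main obstacle — and the step I would spend the most care on — is the trace estimate for $(ii)$, namely controlling $\tau(\mathbbm{1}-p)$ by $\delta_1+\delta_2$. The key is the equivalence $s^\nalgebra_L(B)\sim s^\nalgebra_R(B)$ for $B=(\mathbbm{1}-p_1)A_2p_2$: this lets me transfer the ``defect'' projection living in the range side (bounded by $\mathbbm{1}-p_1$, hence of trace $\leq\delta_1$) to the domain side $p_2\hilbert$, using that equivalent projections have equal trace by the tracial property. Combining the loss $\tau(\mathbbm{1}-p_2)\leq\delta_2$ from restricting to $p_2$ with this transferred loss $\leq\delta_1$, and again invoking subadditivity of $\tau$ on the relevant join of projections, yields $\tau(\mathbbm{1}-p)\leq\delta_1+\delta_2$. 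I expect the delicate bookkeeping to be exactly this identification of which equivalence to apply and verifying that the partial isometry realizing $s_L\sim s_R$ keeps all relevant subspaces inside $\Dom{A_1}$, so that the norm and domain conditions survive the transfer.
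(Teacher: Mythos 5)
Your argument follows essentially the same route as the paper's proof: for $(i)$ the witness $p_1\wedge p_2$ with the lattice identity $\mathbbm{1}-(p_1\wedge p_2)=(\mathbbm{1}-p_1)\vee(\mathbbm{1}-p_2)$ and subadditivity of $\tau$ on joins (justified, exactly as the paper's preceding remarks do, by the equivalence $(p\vee q)-p\sim q-(p\wedge q)$); and for $(ii)$ the idea of cutting $p_2\hilbert$ down by the kernel projection of $B\doteq(\mathbbm{1}-p_1)A_2p_2$ and transferring the trace loss from the range side to the domain side via $s^\nalgebra_L(B)\sim s^\nalgebra_R(B)$ and $s^\nalgebra_L(B)\leq \mathbbm{1}-p_1$. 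This is precisely the paper's construction of $r=\big[\Ker\big((\mathbbm{1}-p_1)A_2p_2\big)\big]$ and the estimate $\tau\big(\mathbbm{1}-(p_2\wedge r)\big)\leq\tau(\mathbbm{1}-p_2)+\tau(\mathbbm{1}-r)\leq\delta_1+\delta_2$.

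Two slips in your item $(ii)$ need fixing, though both are local. First, you define $e=\mathbbm{1}-s^\nalgebra_L(B)$ and claim $Be=0$; but $\mathbbm{1}$ minus the \emph{left} support satisfies $eB=0$, not $Be=0$. To guarantee the domain inclusion $A_2p_2\,e\hilbert\subset p_1\hilbert$ you must take $e=\mathbbm{1}-s^\nalgebra_R(B)$, i.e.\ the projection onto $\Ker{B}$ (the paper's $r$); the trace bound then reads $\mathbbm{1}-e=s^\nalgebra_R(B)\sim s^\nalgebra_L(B)\leq\mathbbm{1}-p_1$, so $\tau(\mathbbm{1}-e)\leq\delta_1$. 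Second, ``$p=p_2e$'' is not a projection unless $p_2$ and $e$ commute; the correct witness is $p=p_2\wedge e$, after which $\tau(\mathbbm{1}-p)=\tau\big((\mathbbm{1}-p_2)\vee(\mathbbm{1}-e)\big)\leq\delta_1+\delta_2$ as in $(i)$. Finally, your closing concern about the partial isometry realizing $s^\nalgebra_L(B)\sim s^\nalgebra_R(B)$ ``keeping subspaces inside $\Dom{A_1}$'' is unnecessary: the equivalence enters only through the tracial identity $\tau(u^\ast u)=\tau(uu^\ast)$ in the trace estimate and never interacts with the domain or norm conditions, which are settled separately by the choice of $p$.
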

\begin{proof}
	$(i)$ Let $A_i\in D(\varepsilon_i,\delta_i)$, $i=1,2$. By definition, there exist projections $p_i$ such that $p_i\hilbert\subset\Dom{A_i}$,$\|Ap_i\|\leq\varepsilon_i$ and $\tau(\mathbbm{1}-p_i)\leq\delta_i$. Taking $q=p_1\wedge p_2$, we get 
	$$q\hilbert \subset\Dom{A_1}\cap\Dom{A_2}=\Dom{A_1+A_2},$$
	$$\|(A_1+A_2)q\|\leq \|A_1 q\|+\|A_2 q\|\leq \|A_1 p_1\|+\|A_1 p_2\|\leq\varepsilon_1+\varepsilon_2 \, \textrm{ and }$$
	$$\begin{aligned}
		\tau(\mathbbm{1}-q) &= \tau(\mathbbm{1}-p_1\wedge p_2)\\
	&=\tau\bigl((\mathbbm{1}-p_1)\vee(\mathbbm{1} -p_2)\bigr)\\
	&\leq \tau(\mathbbm{1}-p_1)+\tau(\mathbbm{1}-p_2)\leq \delta_1+\delta_2.
	\end{aligned}$$
	
	$(ii)$ Let $A_i$, $p_i$, $i=1,2$, as before. Note that $(\mathbbm{1}-p_1)A_2p_2\in \nalgebra$, then we can define $r=[\Ker{\left((\mathbbm{1}-p_1)A_2 p_2\right)}]$.
	
	It follow from this definition that, for every $x \in r\hilbert$, $A_2p_2 x\in p_1\hilbert\subset\Dom{A_1}$. This means $r\hilbert\subset\Dom{A_1A_2p_2}$, which would imply $(p_2\wedge r)\hilbert\subset \Dom{A_1A_2}$ and 
	$$\begin{aligned}
	\|A_1A_2(p_2\wedge r)\|	&=\|A_1p_1A_2p_2(p_2\wedge r)\|\\
	&\leq \|A_1p_1\|\|A_2p_2\|\\
	&=\varepsilon_1\varepsilon_2.
	\end{aligned}$$
	
	Now, $\mathbbm{1}-r=s^\nalgebra_R\left((\mathbbm{1}-p_1)A_2p_2\right) \sim s^\nalgebra_L\left((\mathbbm{1}-p_1)A_2p_2\right)\leq \mathbbm{1}-p_1$, and thus
	$$\begin{aligned}
	\tau(\mathbbm{1}-p_2\wedge r)&=\tau\bigl((\mathbbm{1}-p_2)\vee(\mathbbm{1}-r)\bigr)\\
	&\leq \tau(\mathbbm{1}-p_2)+\tau(\mathbbm{1}-p_1)\\
	&\leq\delta_1+\delta_2.
	\end{aligned}$$
	
\end{proof}
This proposition shows a connection between these sets and the operation in the algebra. In fact, the properties above will be used soon to prove that the family $\left\{\nalgebra_\tau \cap D(\varepsilon,\delta)\right\}_{\varepsilon>0, \delta>0}$ are a basis of neighbourhoods of zero for a vector topology.

\begin{proposition}
	\label{proposition d<->tau}
	Let $A\eta\nalgebra$ be a closed densely defined operator, $\varepsilon, \delta>0$ and \\ $E_{(\varepsilon,\infty)}=s^\nalgebra_L\left((|A|-\varepsilon)_+\right)$ \footnote{see Notation \ref{positivepart}.} the spectral projection of $|A|$ on the interval ${(\varepsilon,\infty)}$. Then,
	$$A\in D(\varepsilon,\delta) \Leftrightarrow \tau\left(E_{(\varepsilon,\infty)}\right)\leq\delta.$$
\end{proposition}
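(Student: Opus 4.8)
The plan is to identify $E_{(\varepsilon,\infty)}$ explicitly and to translate both implications into statements about spectral projections of $|A|$ and their traces. First I would record the two ingredients coming from the polar decomposition of $A$: that $\Dom{A}=\Dom{|A|}$ and that $\|Ax\|=\|\,|A|x\|$ for every $x\in\Dom{|A|}$ (equation \eqref{partialiso}), together with the fact that, since $A\eta\nalgebra$, all spectral projections of $|A|$ lie in $\nalgebra$. By functional calculus $(|A|-\varepsilon)_+=\int_{(\varepsilon,\infty)}(\lambda-\varepsilon)\,dE^{|A|}_\lambda$, whose kernel is exactly $E^{|A|}_{[0,\varepsilon]}\hilbert$; hence $E_{(\varepsilon,\infty)}=s^\nalgebra_L\big((|A|-\varepsilon)_+\big)$ is the spectral projection of $|A|$ on the open interval $(\varepsilon,\infty)$, and $\mathbbm{1}-E_{(\varepsilon,\infty)}=E^{|A|}_{[0,\varepsilon]}$.

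For the implication $(\Leftarrow)$ I would simply take $p=\mathbbm{1}-E_{(\varepsilon,\infty)}=E^{|A|}_{[0,\varepsilon]}\in\nalgebra_p$. On $p\hilbert$ the spectral measure of $|A|$ is supported in $[0,\varepsilon]$, so $\|\,|A|px\|\leq\varepsilon\|x\|$; this gives both $p\hilbert\subset\Dom{|A|}=\Dom{A}$ and, using $\|Apx\|=\|\,|A|px\|$, the bound $\|Ap\|\leq\varepsilon$. Since $\tau(\mathbbm{1}-p)=\tau(E_{(\varepsilon,\infty)})\leq\delta$ by hypothesis, the three defining conditions of $D(\varepsilon,\delta)$ hold and $A\in D(\varepsilon,\delta)$.

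For the converse $(\Rightarrow)$, given a projection $p\in\nalgebra_p$ with $p\hilbert\subset\Dom{A}$, $\|Ap\|\leq\varepsilon$ and $\tau(\mathbbm{1}-p)\leq\delta$, the aim is to compare $E_{(\varepsilon,\infty)}$ with $\mathbbm{1}-p$. The crucial observation is that $E_{(\varepsilon,\infty)}\wedge p=0$: if $0\neq x\in E_{(\varepsilon,\infty)}\hilbert\cap p\hilbert$, then on one hand $x\in p\hilbert\subset\Dom{A}$ gives $\|Ax\|=\|(Ap)x\|\leq\varepsilon\|x\|$, while on the other hand $x$ lies in the range of the spectral projection of $|A|$ on the open interval $(\varepsilon,\infty)$, so $\|\,|A|x\|^2=\int_{(\varepsilon,\infty)}\lambda^2\,d\|E^{|A|}_\lambda x\|^2>\varepsilon^2\|x\|^2$ strictly, since $\lambda^2>\varepsilon^2$ throughout the open interval and the measure is nonzero; combined with $\|Ax\|=\|\,|A|x\|$ this yields $\varepsilon\|x\|\geq\|Ax\|>\varepsilon\|x\|$, a contradiction. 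Having established $E_{(\varepsilon,\infty)}\wedge p=0$, I would invoke the comparison fact recorded just before the statement, namely $q_1\wedge q_2=0\Rightarrow q_1\preceq\mathbbm{1}-q_2$, to conclude $E_{(\varepsilon,\infty)}\preceq\mathbbm{1}-p$, and then monotonicity of the trace under Murray--von Neumann subordination (a consequence of traciality, $\tau(u^\ast u)=\tau(uu^\ast)$) gives $\tau(E_{(\varepsilon,\infty)})\leq\tau(\mathbbm{1}-p)\leq\delta$. The main point requiring care is precisely this strict spectral inequality on the open interval, which is what forces the trivial intersection and hence the comparison; everything else is a routine check against the definitions.
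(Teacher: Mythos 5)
Your proposal is correct and follows essentially the same route as the paper's own proof: the same choice $p=\mathbbm{1}-E_{(\varepsilon,\infty)}$ for the easy direction, and for the converse the same observation that $E_{(\varepsilon,\infty)}\wedge p=0$ (strict spectral bound on the open interval versus $\|Ap\|\leq\varepsilon$), followed by the comparison $E_{(\varepsilon,\infty)}\preceq\mathbbm{1}-p$ and trace monotonicity. If anything, your version is slightly more careful than the paper's, since you restrict the strict inequality to nonzero vectors of $E_{(\varepsilon,\infty)}\hilbert\cap p\hilbert$, which in particular lie in $\Dom{|A|}$, whereas the paper loosely asserts it for all $x\in\hilbert$.
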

\begin{proof}
	$(\Leftarrow)$ Just take $p=\mathbbm{1}-E_{(\varepsilon,\infty)}$, then $p\hilbert \subset\Dom{|A|}$, $\|Ap\|=\|A(\mathbbm{1}-E_{(\varepsilon,\infty)})\|\leq \varepsilon$ and, by hypothesis, $\tau(\mathbbm{1}-p)\leq\delta$.
	
	$(\Rightarrow)$ Suppose $A\in D(\varepsilon,\delta)$. There exists $p\in\nalgebra_p$ such that $p\hilbert\subset\Dom{A}$, $\|Ap\|\leq \varepsilon$ and $\tau(\mathbbm{1}-p)\leq\delta$. Let us take $\left\{E_{(\lambda,\infty)}\right\}_{\lambda \in \mathbb{R}_+}$ the spectral projections of $|A|$. Notice that $\||A|E_{(\varepsilon,\infty)}x\|>\varepsilon\|E_{(\varepsilon,\infty)}x\| \ \forall x \in \hilbert$, while $\||A|p x\|\leq \varepsilon \|px\|$. It follows that $\left(E_{(\varepsilon,\infty)} \wedge p\right)=0$ or, in other words, $E_{(\varepsilon,\infty)}\preceq \mathbbm{1}-p$.
	
	Finally, the previous inequality leads to $\tau(E_{(\varepsilon,\infty)})\leq\tau(\mathbbm{1}-p)\leq \delta$.
	
\end{proof}

The next result is another one about the good-behaviour of the algebra operations, namely, some kind of continuity of the involution. 

\begin{proposition}
	\label{AinD=>A*inD}
	Let $A\eta \nalgebra$ be a closed densely defined operator.
	$$A\in D(\varepsilon,\delta)\Leftrightarrow A^\ast \in D(\varepsilon,\delta).$$
\end{proposition}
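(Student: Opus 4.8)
The plan is to reduce everything to the trace criterion established in Proposition \ref{proposition d<->tau}, which states that $A\in D(\varepsilon,\delta)$ if and only if $\tau\left(E^{|A|}_{(\varepsilon,\infty)}\right)\leq \delta$, where $E^{|A|}_{(\varepsilon,\infty)}=s^\nalgebra_L\left((|A|-\varepsilon)_+\right)$ is the spectral projection of $|A|$ on $(\varepsilon,\infty)$. Thus it suffices to prove that $\tau\left(E^{|A|}_{(\varepsilon,\infty)}\right)=\tau\left(E^{|A^\ast|}_{(\varepsilon,\infty)}\right)$ for every $\varepsilon>0$; the stated equivalence is then immediate, and no separate argument is needed for the converse direction since $A$ closed gives $A^{\ast\ast}=A$.

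First I would invoke the polar decomposition $A=U|A|$, with $U$ the partial isometry affiliated with $\nalgebra$ satisfying $U^\ast U=s^\nalgebra_R(A)$ and $UU^\ast=s^\nalgebra_L(A)$. From $AA^\ast=U|A|^2U^\ast$ and the fact that $s^\nalgebra_R(A)=U^\ast U$ is exactly the support projection of $|A|$, so that $s^\nalgebra_R(A)|A|=|A|$, a short computation gives $(U|A|U^\ast)^2=AA^\ast$ together with $U|A|U^\ast\geq 0$, whence $|A^\ast|=U|A|U^\ast$.

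Next, using this identity I would transport the spectral projections. Since $(U|A|U^\ast)^n=U|A|^nU^\ast$ for every $n\geq 1$, the same transfer rule $f(U|A|U^\ast)=U f(|A|) U^\ast$ holds for any Borel function $f$ with $f(0)=0$, by the usual polynomial-then-limit functional-calculus argument; the vanishing at $0$ is precisely what absorbs the defect between $U^\ast U$ and $\mathbbm{1}$. Applying this to $f=\chi_{(\varepsilon,\infty)}$, which is legitimate because $\varepsilon>0$, yields $E^{|A^\ast|}_{(\varepsilon,\infty)}=U\,E^{|A|}_{(\varepsilon,\infty)}\,U^\ast$. Therefore $E^{|A|}_{(\varepsilon,\infty)}$ and $E^{|A^\ast|}_{(\varepsilon,\infty)}$ are Murray--von Neumann equivalent via the partial isometry $U E^{|A|}_{(\varepsilon,\infty)}$, and since a normal trace is invariant under equivalence of projections one concludes $\tau\left(E^{|A^\ast|}_{(\varepsilon,\infty)}\right)=\tau\left(E^{|A|}_{(\varepsilon,\infty)}\right)$. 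Combining this with Proposition \ref{proposition d<->tau} closes the argument.

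The main obstacle I anticipate is the clean justification of $|A^\ast|=U|A|U^\ast$ and of the transfer rule $f(U|A|U^\ast)=U f(|A|) U^\ast$ for $f(0)=0$: one must keep careful track of the support projections of the (generally unbounded) positive operator $|A|$, verifying that $U^\ast U$ dominates $E^{|A|}_{(\varepsilon,\infty)}$ so that the conjugation really does realize an equivalence of projections, and that the functional-calculus identity survives the passage to unbounded $|A|$ on the relevant invariant subspaces. Everything else is a direct bookkeeping application of the trace property $\tau(XY)=\tau(YX)$ together with the already-proven trace criterion.
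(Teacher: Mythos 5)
Your proposal is correct and takes essentially the same route as the paper's own proof: the paper likewise uses the polar decomposition $A=u|A|$, the identity $|A^\ast|^2=AA^\ast=u|A|^2u^\ast$, the transferred spectral projection $E^{|A^\ast|}_{(\varepsilon,\infty)}=u\,E^{|A|}_{(\varepsilon,\infty)}\,u^\ast$, invariance of the trace, and the criterion of Proposition \ref{proposition d<->tau}. You merely make explicit the functional-calculus transfer for Borel functions vanishing at $0$ and the Murray--von Neumann equivalence that the paper leaves implicit.
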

\begin{proof}
	Let $A=u|A|$  be the polar decomposition of $A$, so $|A^\ast|^2=A A^\ast=u|A|\left(u|A|\right)^\ast=u|A|^2u^\ast$.
	
	Using the previous identity, where $E_{(\varepsilon,\infty)}^{|A|}$ and $E_{(\varepsilon,\infty)}^{|A^\ast|}$  are the spectral projections of $|A|$ and $|A^\ast|$ respectively, we have $$\begin{aligned}
	\tau\left(E_{(\varepsilon,\infty)}^{|A^\ast|}\right)&=\tau\left(u E_{(\varepsilon,\infty)}^{|A|} u^\ast\right)\\
	&=\tau\left(E_{(\varepsilon,\infty)}^{|A|}\right)\\
	&\leq\delta.
	\end{aligned}$$ 
\end{proof}

\begin{definition}
	Let $\nalgebra$ be a von Neumann algebra and $\tau$ a normal faithful semifinite trace. A subspace $V\subset\hilbert$ is said to be $\tau$-dense if, for all $\delta>0$, there exists $p\in\nalgebra_p$ such that $p\hilbert\subset V$ and $\tau(\mathbbm{1}-p)<\delta$.	
\end{definition}

\begin{proposition}
	\label{pn->1} \index{$\tau$-! dense}
	Let $\nalgebra$ be a von Neumann algebra and $\tau$ a normal faithful semifinite trace. A subspace $V\subset\hilbert$ is $\tau$-dense if, and only if, there exists an increasing sequence of projections $(p_n)_{n\in\mathbb{N}}\subset\nalgebra_p$ such that $p_n\to \mathbbm{1}$ and $\tau(\mathbbm{1}-p_n)\to 0$ and $\displaystyle \bigcup_{n\in\mathbb{N}} p_n\hilbert\subset V$.
\end{proposition}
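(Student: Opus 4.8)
The plan is to prove the two implications separately, the reverse being immediate and the forward one requiring a careful choice of projections.

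For the direction $(\Leftarrow)$, suppose such an increasing sequence $(p_n)_{n\in\mathbb{N}}$ exists. Given $\delta>0$, since $\tau(\mathbbm{1}-p_n)\to 0$ I can pick $n$ with $\tau(\mathbbm{1}-p_n)<\delta$; as $p_n\hilbert\subset\bigcup_m p_m\hilbert\subset V$, the projection $p_n$ witnesses $\tau$-density directly from the definition. This disposes of the easy half.

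For the direction $(\Rightarrow)$, the idea is to manufacture the increasing sequence out of a sequence of witnesses of $\tau$-density. Applying the definition with $\delta=2^{-n}$, I would first choose, for each $n\in\mathbb{N}$, a projection $q_n\in\nalgebra_p$ with $q_n\hilbert\subset V$ and $\tau(\mathbbm{1}-q_n)\leq 2^{-n}$. The naive attempt of setting $p_n=q_1\vee\cdots\vee q_n$ fails: the range of a join is the closed linear span $\overline{q_m\hilbert+q_n\hilbert}$, and this closure may leave the (non-closed) subspace $V$. Avoiding this obstacle is the main subtlety of the argument, and the remedy is to take meets instead, defining
$$p_n=\bigwedge_{k\geq n}q_k.$$
These form an increasing sequence (intersecting over fewer indices enlarges the range), and crucially $p_n\hilbert=\bigcap_{k\geq n}q_k\hilbert\subset q_n\hilbert\subset V$, so every range lies in $V$; in particular $\bigcup_n p_n\hilbert\subset V$ holds automatically.

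It then remains to control $\tau(\mathbbm{1}-p_n)$. Using the lattice De~Morgan law implicit in Theorem \ref{projectionsLattice} (recall that there $\bigvee_i P_i=\mathbbm{1}-\bigwedge_i(\mathbbm{1}-P_i)$), I would write $\mathbbm{1}-p_n=\bigvee_{k\geq n}(\mathbbm{1}-q_k)$, approximate this join by its increasing finite sub-joins, and invoke normality of $\tau$ to pass to the limit. The finite case is handled by the subadditivity $\tau(p\vee q)\leq\tau(p)+\tau(q)$, which follows from the equivalence $(p\vee q)-p\sim q-(p\wedge q)$ recorded just before Proposition \ref{Dsumandproduct} together with the traciality of $\tau$ (equivalent projections have equal trace). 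This yields $\tau(\mathbbm{1}-p_n)\leq\sum_{k\geq n}2^{-k}=2^{-n+1}\to 0$. Finally, by Vigier's theorem \ref{vigier} the increasing net $p_n$ converges to some projection $p$, and $\tau(\mathbbm{1}-p)\leq\tau(\mathbbm{1}-p_n)\to0$ forces $\tau(\mathbbm{1}-p)=0$; faithfulness of $\tau$ then gives $p=\mathbbm{1}$, so $p_n\to\mathbbm{1}$, which completes the construction and the proof.
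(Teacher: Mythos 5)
Your proof is correct and follows essentially the same route as the paper's: the same witnesses $q_n$ with $\tau(\mathbbm{1}-q_n)\leq 2^{-n}$, the same meets $p_n=\bigwedge_{k\geq n}q_k$ to keep the ranges inside $V$, the same estimate $\tau(\mathbbm{1}-p_n)\leq 2^{1-n}$ via subadditivity of $\tau$ on joins, and the same appeal to faithfulness and Vigier's theorem to conclude $p_n\to\mathbbm{1}$. The only difference is that you spell out details the paper leaves implicit (why joins of the $q_n$ would fail, and the justification of $\tau(p\vee q)\leq\tau(p)+\tau(q)$ via the equivalence $(p\vee q)-p\sim q-(p\wedge q)$ together with normality), which is a welcome but not substantively different elaboration.
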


\begin{proof}
	Of course, the existence of such a sequence of projections implies the $\tau$-density of $V$.
	
	On the other hand, if $V$ is $\tau$-dense, there exists, for each $n\in\mathbb{N}$, a projection $q_n$ such that $q_n\hilbert\subset V$ and ${\tau(\mathbbm{1}-q_n)<2^{-n}}$.
	
	Now, in order to obtain an increasing sequence, define
	$\displaystyle p_n=\bigwedge_{k\geq n}q_k$.
	
	To show that $p_n\to \mathbbm{1}$, define $p=\displaystyle \bigvee_{n\in\mathbb{N}}q_n$. Then, for all $n\in \mathbb{N}$,
	\begin{equation}\begin{aligned}
	\label{increasingtau}
	\tau(\mathbbm{1}-p)&\leq \tau(\mathbbm{1}-p_n)\\ &=\tau\left(\mathbbm{1}-\bigwedge_{k\geq n}q_k\right)\\
	&=\tau\left(\bigvee_{k\geq n}(\mathbbm{1}-q_k)\right)\\
	&\leq \sum_{k\geq n}\tau(\mathbbm{1}-q_k)\\
	&\leq \sum_{k\geq n}2^{-k} \\
	&=2^{1-n},
	\end{aligned}\end{equation}
	but this is only possible if $\tau(\mathbbm{1}-p)=0$ and since $\tau$ is faithful, $p=\mathbbm{1}$. It follows by Vigier's Theorem\footnote{see Theorem \ref{vigier}.} that $p_n\to\mathbbm{1}$ in the SOT.
	
\end{proof}

\begin{corollary}
	\label{taudense=>dense}
	If $V\subset \hilbert$ is a $\tau$-dense subspace, $V$ is dense in $\hilbert$.
\end{corollary}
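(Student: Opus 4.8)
The plan is to show that a $\tau$-dense subspace $V$ is dense in $\hilbert$ by exploiting the increasing sequence of projections $(p_n)_{n\in\mathbb{N}}$ supplied by Proposition \ref{pn->1}. The key observation is that $\tau$-density hands us, for free, a concrete approximating family: projections $p_n \in \nalgebra_p$ with $p_n \to \mathbbm{1}$ in the SOT and $\bigcup_{n\in\mathbb{N}} p_n\hilbert \subset V$.

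First I would invoke Proposition \ref{pn->1} to obtain the increasing sequence $(p_n)_{n\in\mathbb{N}}$ with $p_n \xrightarrow{SOT} \mathbbm{1}$ and with each $p_n\hilbert$ contained in $V$. Then, for an arbitrary vector $x \in \hilbert$, the SOT-convergence $p_n \to \mathbbm{1}$ means precisely that $p_n x \to x$ in the norm of $\hilbert$. Since each $p_n x \in p_n\hilbert \subset V$, the vector $x$ is a norm-limit of elements of $V$. As $x$ was arbitrary, this establishes $\overline{V}^{\|\cdot\|} = \hilbert$, that is, $V$ is dense.

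The proof is essentially immediate once Proposition \ref{pn->1} is in hand, so there is no serious obstacle; the only thing to be careful about is the direction of the equivalence I use. Proposition \ref{pn->1} is stated as an ``if and only if'', and I need only the forward direction (from $\tau$-dense to the existence of the approximating sequence), which is exactly the nontrivial implication proved there. Thus I would write:

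\begin{proof}
	Let $x\in\hilbert$. By Proposition \ref{pn->1}, since $V$ is $\tau$-dense, there exists an increasing sequence of projections $(p_n)_{n\in\mathbb{N}}\subset\nalgebra_p$ such that $p_n\to\mathbbm{1}$ in the SOT and $\displaystyle\bigcup_{n\in\mathbb{N}}p_n\hilbert\subset V$. In particular $p_n x \to x$ in $\hilbert$, and $p_n x\in p_n\hilbert\subset V$ for each $n\in\mathbb{N}$. Hence $x\in\overline{V}$, and since $x\in\hilbert$ was arbitrary, $V$ is dense in $\hilbert$.
\end{proof}
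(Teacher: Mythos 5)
Your proof is correct, and it is exactly the argument the paper intends: the corollary is stated without proof precisely because it follows immediately from the forward direction of Proposition \ref{pn->1}, with SOT-convergence $p_n\to\mathbbm{1}$ giving $p_nx\to x$ and $p_nx\in V$. Your care in noting which direction of the equivalence you use is appropriate, and nothing further is needed.
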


\begin{corollary}
	\label{intersec2taudense}
	Let $V_1,V_2 \subset\hilbert$ be $\tau$-dense subspaces. Then $V_1\cap V_2$ is $\tau$-dense. 	
\end{corollary}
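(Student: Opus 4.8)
The final statement to prove is Corollary \ref{intersec2taudense}: the intersection of two $\tau$-dense subspaces is $\tau$-dense.

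The plan is to use the characterization of $\tau$-density provided by Proposition \ref{pn->1}, together with the subadditivity of $\tau$ on projections under the lattice operations, which is exactly the tool that made the estimate \eqref{increasingtau} work. First I would apply the definition directly: let $V_1,V_2\subset\hilbert$ be $\tau$-dense subspaces and fix an arbitrary $\delta>0$. By the definition of $\tau$-density, there exist projections $p_1,p_2\in\nalgebra_p$ with $p_i\hilbert\subset V_i$ and $\tau(\mathbbm{1}-p_i)<\tfrac{\delta}{2}$ for $i=1,2$. The natural candidate projection for the intersection is the meet $q=p_1\wedge p_2$, and I would verify the two required properties for it.

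The key computation is that $q=p_1\wedge p_2$ satisfies $q\hilbert\subset V_1\cap V_2$ and $\tau(\mathbbm{1}-q)<\delta$. The inclusion is immediate: since $q\leq p_i$ we have $q\hilbert\subset p_i\hilbert\subset V_i$ for each $i$, hence $q\hilbert\subset V_1\cap V_2$. For the trace estimate I would use De Morgan's law in the projection lattice, $\mathbbm{1}-(p_1\wedge p_2)=(\mathbbm{1}-p_1)\vee(\mathbbm{1}-p_2)$, followed by the subadditivity $\tau(r\vee s)\leq \tau(r)+\tau(s)$ for projections, which is precisely the inequality already invoked in Proposition \ref{Dsumandproduct}$(i)$ and in \eqref{increasingtau}. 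This gives
$$\tau(\mathbbm{1}-q)=\tau\bigl((\mathbbm{1}-p_1)\vee(\mathbbm{1}-p_2)\bigr)\leq \tau(\mathbbm{1}-p_1)+\tau(\mathbbm{1}-p_2)<\frac{\delta}{2}+\frac{\delta}{2}=\delta.$$
Since $\delta>0$ was arbitrary, the projection $q$ witnesses that $V_1\cap V_2$ is $\tau$-dense, completing the proof.

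There is no real obstacle here; the corollary is essentially a one-line consequence of the lattice/trace subadditivity estimate. The only point requiring mild care is the justification of the subadditivity $\tau(r\vee s)\leq\tau(r)+\tau(s)$, but this is already used freely earlier in the chapter (in Proposition \ref{Dsumandproduct} and equation \eqref{increasingtau}), so I would simply cite it rather than reprove it. One could alternatively phrase the argument through the sequential characterization of Proposition \ref{pn->1}, building increasing sequences $(p_n^1)$ and $(p_n^2)$ for $V_1$ and $V_2$ and taking $p_n^1\wedge p_n^2$, but the direct $\varepsilon$-$\delta$ version above is cleaner and suffices.
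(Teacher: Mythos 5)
Your proof is correct and follows essentially the same route as the paper: both reduce the claim to the meet $p_1\wedge p_2$, De Morgan's law $\mathbbm{1}-(p_1\wedge p_2)=(\mathbbm{1}-p_1)\vee(\mathbbm{1}-p_2)$, and the subadditivity $\tau(r\vee s)\leq\tau(r)+\tau(s)$ already used in Proposition \ref{Dsumandproduct}. The paper merely packages the same computation through the sequential characterization of Proposition \ref{pn->1} (taking $q_n=p_n^1\wedge p_n^2$), a variant you yourself note at the end, so there is no substantive difference.
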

\begin{proof}
	First, although we are going to present a proof based on the previous proposition, it is quite interesting to notice that this result has already been proved. In fact, the proof can be found in the proof of Proposition \ref{Dsumandproduct}.
	
	Let $(p^i_n)_{n\in\mathbb{N}}\in \nalgebra_p$, $i=1,2$, be such that $\tau(\mathbbm{1}-p^i_n)\to0$ and $\displaystyle \bigcup_{n\in\mathbb{N}} p^i_n\hilbert\subset V_i$. Define $q_n=p^1_n\wedge p^2_n$. Then
	$$\begin{aligned}
	\tau(\mathbbm{1}-q_n) &= \tau(\mathbbm{1}-p^1_n\wedge p^2_n)\\
	&=\tau\left((\mathbbm{1}-p^1_n)\vee(\mathbbm{1} -p^2_n)\right)\\
	&\leq \tau(\mathbbm{1}-p^1_n)+\tau(\mathbbm{1}-p^2_n) \to 0.\\
	\end{aligned}$$
	
	Furthermore, $q_n=p^1_n\wedge p^2_n$ is the projection in the intersection of the image of \mbox{$p^1_n$ and $p^2_n$}, which is a subset of $V_1\cap V_2$ for every $n\in\mathbb{N}$.
	
\end{proof}

The two last corollaries show, respectively, that $\tau$-denseness is stronger than the usual density in the Hilbert space, and that $\tau$-dense subspaces are stable under finite intersections. This last result is very important when we are looking for a candidate of algebra that includes unbounded operators. Indeed, we are interested in densely defined operators, but the sum and product of such two operators is only defined in the intersection of their domains. Hence, Corollary \ref{intersec2taudense} warranties that the sum and product of two operators with $\tau$-dense domains is again an operator with $\tau$-dense domain, thus, densely defined thanks to Corollary \ref{taudense=>dense}. A question that is much more delicate is about the convergence of a series of operators with $\tau$-dense domains, since the domain of such series would be the intersection of all the infinity domains.

\begin{proposition}
	Let $\nalgebra\subset B(\hilbert)$ be a von Neumann algebra and let \mbox{$A_1, A_2\in \nalgebra_\eta$} be\, closed \, (densely \, defined) \, operators \, such \, that\, there\, exists\, a \,$\tau$-dense\, subspace \mbox{$V\subset \Dom{A_1}\cap\Dom{A_2}$} where $\left. A_1\right|_V=\left. A_2\right|_V$. Then $A_1=A_2$.
	
\end{proposition}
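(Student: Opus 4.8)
The plan is to show that $V$ is a core for both $A_1$ and $A_2$; once this is established the conclusion is immediate, since $\left.A_1\right|_V=\left.A_2\right|_V$ forces $A_1=\overline{\left.A_1\right|_V}=\overline{\left.A_2\right|_V}=A_2$. Observe that coreness of $V$ will follow from the two hypotheses alone ($V\subset\Dom{A_i}$ and $V$ is $\tau$-dense), so it suffices to carry out the argument for $A_1$ and invoke symmetry for $A_2$; the coincidence $\left.A_1\right|_V=\left.A_2\right|_V$ is only used at the very end.

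First I would invoke Proposition \ref{pn->1} to fix an increasing sequence of projections $(p_n)_{n\in\mathbb{N}}\subset\nalgebra_p$ with $p_n\to\mathbbm{1}$ in the SOT, $\tau(\mathbbm{1}-p_n)\to0$, and $\bigcup_n p_n\hilbert\subset V$. On the other hand, writing $A_1=u|A_1|$ for the polar decomposition and $f_m=E^{|A_1|}_{[0,m]}$ for the spectral projections of $|A_1|$, affiliation guarantees $f_m\in\nalgebra_p$, while $f_m\uparrow\mathbbm{1}$ together with the spectral calculus gives, for every $x\in\Dom{A_1}$, that $f_m x\to x$ and $A_1 f_m x\to A_1 x$; moreover $A_1 f_m$ is bounded. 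Thus $\{f_m x\}_m$ approximates $x$ in the graph norm, and it remains only to approximate each $f_m x$ in the graph norm by vectors of $V$.

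For fixed $m$ the meet $p_n\wedge f_m\in\nalgebra_p$ is increasing in $n$, with $(p_n\wedge f_m)\hilbert\subset p_n\hilbert\subset V$ and $(p_n\wedge f_m)\hilbert\subset f_m\hilbert$. Using the equivalence $(p_n\vee f_m)-p_n\sim f_m-(p_n\wedge f_m)$ recorded before Proposition \ref{Dsumandproduct} together with the trace property, one gets $\tau\bigl(f_m-(p_n\wedge f_m)\bigr)=\tau\bigl((p_n\vee f_m)-p_n\bigr)\le\tau(\mathbbm{1}-p_n)\to0$; normality and faithfulness of $\tau$ then yield $p_n\wedge f_m\uparrow f_m$, whence $p_n\wedge f_m\to f_m$ in the SOT by Vigier's Theorem \ref{vigier}. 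Consequently $v_{m,n}\defeq(p_n\wedge f_m)x\in V$ satisfies $v_{m,n}\to f_m x$, and since $(p_n\wedge f_m)x\in f_m\hilbert$ one has $A_1 v_{m,n}=(A_1 f_m)(p_n\wedge f_m)x\to (A_1 f_m)f_m x=A_1 f_m x$ by boundedness of $A_1 f_m$. Hence $f_m x$ is a graph-norm limit of elements of $V$, and a diagonal argument over $m$ shows the same for $x$ itself, so $V$ is a core for $A_1$.

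The main obstacle is precisely this coreness step: the naive approximation $p_n x\to x$ does \emph{not} converge in the graph norm, because the $p_n\in\nalgebra$ need not commute with the affiliated operator $A_1$, so $A_1 p_n x$ is uncontrolled. The device that circumvents it is to first cut $x$ down to the spectral subspace $f_m\hilbert$, where $A_1$ acts as the bounded operator $A_1 f_m$ and the graph norm is equivalent to the Hilbert norm, and only then to approximate inside $f_m\hilbert$ by the $\tau$-dense family; the decisive quantitative input is the trace estimate $\tau\bigl(f_m-(p_n\wedge f_m)\bigr)\le\tau(\mathbbm{1}-p_n)$, which forces $p_n\wedge f_m\uparrow f_m$ and thereby supplies approximants lying simultaneously in $V$ and in $f_m\hilbert$.
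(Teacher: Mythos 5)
Your proof is correct, but it takes a genuinely different route from the one in the paper. The paper never mentions cores: it amplifies to $M_{2\times2}(\nalgebra)$ equipped with the balanced trace $\theta_{\tau,\tau}$, verifies through the commutant computation that the graph projections $p_1,p_2$ of $A_1,A_2$ belong to $M_{2\times2}(\nalgebra)$, and then estimates the left support $r=s^{\nalgebra}_L(p_1-p_2)$: for any $p\in\nalgebra_p$ with $p\hilbert\subset V$ and $\tau(\mathbbm{1}-p)<\delta$, the agreement of $A_1$ and $A_2$ on $p\hilbert$ forces $r$ to meet the corresponding subspace trivially, so $r\preceq \mathbbm{1}-(p\oplus p)$ and hence $\theta(r)\leq 2\tau(\mathbbm{1}-p)<2\delta$; since $\delta$ is arbitrary, faithfulness of $\theta$ gives $r=0$, i.e.\ $p_1=p_2$ and the graphs coincide outright, with no spectral theory for unbounded operators needed. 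You instead isolate and prove the stronger, reusable fact that a $\tau$-dense subspace contained in the domain of a closed affiliated operator is automatically a core, using the spectral cutoffs $f_m=E^{|A_1|}_{[0,m]}$ (whose membership in $\nalgebra_p$ the paper records as an equivalent form of affiliation) together with the same Kaplansky-type equivalence $(p\vee q)-p\sim q-(p\wedge q)$ stated before Proposition \ref{Dsumandproduct}; your chain of steps is sound: the estimate $\tau\bigl(f_m-(p_n\wedge f_m)\bigr)\leq\tau(\mathbbm{1}-p_n)$ uses that the trace takes equal values on equivalent projections, normality and faithfulness give $p_n\wedge f_m\uparrow f_m$, Vigier's Theorem \ref{vigier} upgrades this to SOT convergence, boundedness of $A_1f_m$ yields the graph-norm approximation $v_{m,n}\to f_m x$, and the diagonal passage to $x$ then makes $V$ a core, so $A_1=\overline{A_1|_V}=\overline{A_2|_V}=A_2$. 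In substance both arguments run on the same engine — a defect projection subequivalent to $\mathbbm{1}-p_n$, hence of vanishing trace — but the paper's version is shorter and stays entirely within bounded projections of the amplified algebra, identifying the graphs directly, whereas yours pays for the spectral-truncation machinery and is rewarded with the coreness lemma as a by-product, a strictly stronger statement than the proposition itself and one that is standard currency in the $\tau$-measurable operator theory the paper goes on to develop.
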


\begin{proof}
	
	Note that the balanced weight $\theta=\theta_{\tau,\tau}$\footnote{see Definition \ref{balancedweight}.} is a normal faithful semifinite trace in $M_{2\times2}(\nalgebra)$.
	
	Let $p_i$ be the projection on the graph of $A_i$, $i=1,2$. Notice that $$M_{2\times2}(\nalgebra)^\prime=\left\{\begin{pmatrix} A^\prime & 0 \\ 0 & A^\prime\end{pmatrix}\middle| A^\prime \in \nalgebra^\prime\right\} \textrm{ and }$$
	$$
	\begin{pmatrix} A^\prime & 0 \\ 0 & A^\prime\end{pmatrix}\begin{pmatrix} x  \\  A_i x\end{pmatrix}=
	\begin{pmatrix} A^\prime x  \\  A^\prime A_i x\end{pmatrix}=
	\begin{pmatrix} A^\prime x  \\  A_i A^\prime x\end{pmatrix}\in \Gamma(A_i) \qquad \forall x\in \Dom{A^\prime}
	$$
	hence, $p_i \in M_{2\times2}(\nalgebra)^{\prime\prime}=M_{2\times2}(\nalgebra)$.

	By hypothesis, there exists $p\in \nalgebra_p$ such that $p\hilbert\subset V$ and $\tau(\mathbbm{1}-p)<\delta$.
	
	Take $r=s^\nalgebra_L(p_1-p_2)=[\Ran(p_1-p_2)]$ and notice $r\wedge p=0$. Thus $r\leq \mathbbm{1}-(p\oplus p)$ and it follows that $\theta(r\oplus r)\leq\theta\left((\mathbbm{1}-p)\oplus(\mathbbm{1}-p)\right)=\tau(\mathbbm{1}-p)+\tau(\mathbbm{1}-p)<2\delta$. Since $\delta>0$ is arbitrary, $\theta(r\oplus r)=0\Rightarrow p_1=p_2$.
	
\end{proof}

\begin{definition}
	\index{$\tau$-! measurable}
	Let $\nalgebra$ be a von Neumann algebra and $\tau$ be a normal faithful semifinite trace. A closed (densely defined) operator $A\in \nalgebra_\eta$ is said $\tau$-measurable if $\Dom{A}$ is $\tau$-dense. We denote by $\nalgebra_\tau$ the set of all $\tau$-measurable operators.
\end{definition}

Notice that by the previous proposition, if $A$ is a $\tau$-measurable operator and $B$ extends $A$, we must have $A=B$. This, in turn, implies that a $\tau$-measurable symmetric operator is self-adjoint.

\begin{definition}
	\label{defpremeasurable}\index{$\tau$-! premeasurable}
	Let $\nalgebra$ be a von Neumann algebra and $\tau$ be a normal faithful semifinite trace. An operator $A\eta\nalgebra$ is said $\tau$-premeasurable if, $\forall \delta>0$, there exists $p\in\nalgebra_p$ such that $p\hilbert \subset \Dom{A}$, $\|Ap\|<\infty$ and $\tau(1-p)\leq \delta$.  
\end{definition}

An equivalent way to define a $\tau$-premeasurable operator relies on $D(\varepsilon,\delta)$: $A$ is $\tau$-premeasurable if, and only if, $\forall \delta>0$, there exists $\varepsilon>0$ such that $A\in D(\varepsilon,\delta)$.

Another interesting thing to notice is that a $\tau$-premeasurable operator is densely defined since $\Dom{A}$ must be $\tau$-dense.

\begin{proposition}
	\label{equivalencetaumeasurability}
	Let $\nalgebra$ be a von Neumann algebra, $\tau$ a normal faithful semifinite trace, $A \eta \nalgebra$ a closed densely defined operator, and $\left\{E_{(\lambda,\infty)}\right\}_{\lambda\in\mathbb{R}_+}$ the spectral decomposition of $|A|$. The following are equivalent:
	\begin{enumerate}[(i)]
		\item $A$ is $\tau$-measurable;
		\item $|A|$ is $\tau$-measurable;
		\item $\forall \delta>0 \ \exists \varepsilon>0$ such that $A\in D(\varepsilon,\delta)$;
		\item $\forall \delta>0 \ \exists \varepsilon>0$ such that $\tau\left(E_{(\varepsilon,\infty)}\right)<\delta$;
		\item $\displaystyle \lim_{\lambda\to\infty}\tau\left(E_{(\lambda,\infty)}\right)=0$;
		\item $\exists \lambda_0>0$ such that $\tau\left(E_{(\lambda_0,\infty)}\right)<\infty$.
	\end{enumerate}
\end{proposition}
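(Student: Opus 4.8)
The plan is to establish the chain $(i)\Leftrightarrow(iii)\Leftrightarrow(iv)\Leftrightarrow(v)\Leftrightarrow(vi)$ and then fold in $(ii)$ at the end, exploiting that $A$ and $|A|$ share both their domain and their spectral family $\{E_{(\lambda,\infty)}\}_{\lambda\in\mathbb{R}_+}$. The equivalence $(iii)\Leftrightarrow(iv)$ is essentially a restatement of Proposition \ref{proposition d<->tau}, which says $A\in D(\varepsilon,\delta)\Leftrightarrow\tau(E_{(\varepsilon,\infty)})\leq\delta$. The only care needed is reconciling the strict inequality in $(iv)$ with the $\leq$ hidden in the definition of $D(\varepsilon,\delta)$; since both $(iii)$ and $(iv)$ have the form ``for all $\delta>0$ there exists $\varepsilon>0$'', I can always pass from $\leq\delta$ to $<\delta$ and back by replacing $\delta$ with $\delta/2$, so the two are equivalent.

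For $(iv)\Leftrightarrow(v)$ I will use that $\lambda\mapsto\tau(E_{(\lambda,\infty)})$ is non-increasing: if $\lambda_1\leq\lambda_2$ then $E_{(\lambda_2,\infty)}\leq E_{(\lambda_1,\infty)}$, and $\tau$ is monotone on projections. For a non-increasing function the existence, for every $\delta>0$, of a single $\varepsilon$ with value below $\delta$ is exactly the assertion that the function tends to $0$ at infinity, giving $(iv)\Leftrightarrow(v)$. The implication $(v)\Rightarrow(vi)$ is then immediate, since a limit of $0$ forces $\tau(E_{(\lambda_0,\infty)})<\infty$ for all large $\lambda_0$.

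The converse $(vi)\Rightarrow(v)$ is the heart of the matter and the step I expect to be the main obstacle. Assuming $\tau(E_{(\lambda_0,\infty)})<\infty$, spectral theory for the closed self-adjoint operator $|A|$ gives $E_{(\lambda,\infty)}\xrightarrow{SOT}0$ as $\lambda\to\infty$ (the spectral mass at $+\infty$ vanishes), equivalently $\bigwedge_{\lambda>\lambda_0}E_{(\lambda,\infty)}=0$. Hence the net $(E_{(\lambda_0,\infty)}-E_{(\lambda,\infty)})_{\lambda\geq\lambda_0}$ is increasing with supremum $E_{(\lambda_0,\infty)}$, and normality of $\tau$ yields $\tau(E_{(\lambda_0,\infty)}-E_{(\lambda,\infty)})\to\tau(E_{(\lambda_0,\infty)})$. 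Because $\tau(E_{(\lambda_0,\infty)})<\infty$, additivity of $\tau$ on orthogonal projections lets me rewrite the left-hand side as $\tau(E_{(\lambda_0,\infty)})-\tau(E_{(\lambda,\infty)})$; subtracting the finite quantity $\tau(E_{(\lambda_0,\infty)})$ then gives $\tau(E_{(\lambda,\infty)})\to0$, which is $(v)$. The delicate point is precisely that this subtraction is legitimate only thanks to the finiteness hypothesis $(vi)$.

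Finally, for $(i)\Leftrightarrow(iii)$: the direction $(iii)\Rightarrow(i)$ follows by unwinding definitions, since membership $A\in D(\varepsilon,\delta)$ produces, for each $\delta$, a projection $p$ with $p\hilbert\subset\Dom{A}$ and $\tau(\mathbbm{1}-p)\leq\delta$, which is exactly $\tau$-density of $\Dom{A}$. For $(i)\Rightarrow(iii)$, given $\delta>0$ I take from $\tau$-density a projection $p$ with $p\hilbert\subset\Dom{A}$ and $\tau(\mathbbm{1}-p)<\delta$; since $A$ is closed and $p$ is bounded, $Ap$ is a closed operator defined on all of $\hilbert$, hence bounded by the closed graph theorem, so $\|Ap\|=\varepsilon<\infty$ and $A\in D(\varepsilon,\delta)$. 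The equivalence $(ii)\Leftrightarrow(i)$ then closes the argument, because $\Dom{|A|}=\Dom{A}$ (standard for the polar decomposition $A=u|A|$, as $\|Ax\|=\||A|x\|$), so $|A|$ is $\tau$-measurable if and only if $A$ is.
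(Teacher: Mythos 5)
Your proof is correct and follows essentially the same route as the paper's: $(iii)\Leftrightarrow(iv)$ via Proposition \ref{proposition d<->tau}, $(iv)\Leftrightarrow(v)$ by monotonicity of $\lambda\mapsto\tau\left(E_{(\lambda,\infty)}\right)$, and $(vi)\Rightarrow(v)$ by applying normality of $\tau$ to the increasing net $\left(E_{(\lambda_0,\infty)}-E_{(\lambda,\infty)}\right)_{\lambda>\lambda_0}$ and subtracting the finite value $\tau\left(E_{(\lambda_0,\infty)}\right)$. You even make explicit two points the paper glosses over — the closed graph theorem behind $(i)\Rightarrow(iii)$ and the $\leq$ versus $<$ reconciliation — and your handling of $(i)\Leftrightarrow(ii)$ via $\Dom{A}=\Dom{|A|}$ is just a mild repackaging of the paper's appeal to $A\in D(\varepsilon,\delta)\Leftrightarrow|A|\in D(\varepsilon,\delta)$.
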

\begin{proof}
	
	$(i)\Leftrightarrow(iii)$ Simply rewrite the definition, as mentioned before;
	
	$(i)\Leftrightarrow(ii)$ Just notice that $A\in D(\varepsilon,\delta)\Leftrightarrow |A|\in D(\varepsilon,\delta)$, which follow from Proposition \ref{proposition d<->tau};
	
	$(iii)\Leftrightarrow(iv)$ It is Proposition \ref{proposition d<->tau};
	
	$(iv)\Leftrightarrow(v)$ $\left(E_{(\lambda,\infty)}\right)_{\lambda\in\mathbb{R}}$ is a decreasing net of projections. Let $\delta>0$, there exists $\varepsilon>0$ such that $\tau\left(E_{(\varepsilon,\infty)}\right)<\delta$. Hence, for every $\lambda>\varepsilon$, $\tau\left(E_{(\varepsilon,\infty)}\right)<\tau\left(E_{(\lambda,\infty)}\right)<\delta$ and the other implication is analogous;
	
	$(v)\Rightarrow(vi)$ is obvious;
	
	$(vi)\Rightarrow (v)$ Let $\lambda_0>0$ be such that $\tau\left(E_{(\lambda_0,\infty)}\right)<\infty$. Define the increasing upper bounded net $\left(E_{(\lambda_0,\infty)}-E_{(\lambda,\infty)}\right)_{\lambda>\lambda_0}$, notice $\left(\tau\left(E_{(\lambda_0,\infty)}-E_{(\lambda,\infty)}\right)\right)_{\lambda>\lambda_0}\subset \mathbb{R}_+$ is also an increasing net and $\left(\tau\left(E_{(\lambda,\infty)}\right)\right)_{\lambda>\lambda_0}\subset \mathbb{R}_+$ is a decreasing net, hence, both nets have limits. By normality of $\tau$
	
	$$\begin{aligned}
	\lim_{\lambda\to \infty}\tau\left(E_{(\lambda_0,\infty)}-E_{(\lambda,\infty)}\right)&=\sup_{\lambda>\lambda_0}\tau\left(E_{(\lambda_0,\infty)}-E_{(\lambda,\infty)}\right)\\
	&=\tau\left(E_{(\lambda_0,\infty)}-\bigwedge_{\lambda >\lambda_0}E_{(\lambda,\infty)}\right)\\
	&=\tau\left(E_{(\lambda_0,\infty)}\right).\\
	\end{aligned}$$
	
\end{proof}

The last theorem is in the very core of the theory on Noncommutative Integrations. It was Segal in \cite{Segal53} who fist noticed some of these ideas. 

\section{The Segal-Dixmier Noncommutative $L_p$-Spaces}

Hitherto, we have presented the theory of noncommutative measure which enables us to start presenting now the first approach to noncommutative spaces. Our aim here is to prove useful results, in general, based on the standard results in Functional Analysis. Minkowski's and H\"older's inequalities, in particular, play a central role, as well as, the well-known duality between spaces with H\"older conjugated\footnote{$p,\,q \in \mathbb{R}$ are said H\"older conjugated if $\frac{1}{p}+\frac{1}{q}=1$.} indices. We highlight that generalization of the H\"older inequality will be in the core of our results on perturbation of KMS states.
 
\begin{proposition}
	$\nalgebra_\tau$ provided with the usual scalar operations and involution, and the following vector operations is a $\ast$-algebra:
	\begin{enumerate}[(i)]
		\item $A\bm{+}B=\overline{A+B}$;
		\item $A\bm{\times}B=\overline{AB}$.
	\end{enumerate}
\end{proposition}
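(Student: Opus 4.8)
The plan is to reduce everything to two facts established above: the $D(\varepsilon,\delta)$-calculus of Proposition \ref{Dsumandproduct} (together with the characterisation of $\tau$-measurability in Proposition \ref{equivalencetaumeasurability}), which guarantees that the closures $\overline{A+B}$ and $\overline{AB}$ are again $\tau$-measurable, and the uniqueness of $\tau$-measurable operators agreeing on a $\tau$-dense subspace, which turns every algebraic identity into a one-line check on a common core.

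First I would settle well-definedness. Given $A,B\in\nalgebra_\tau$ and $\delta>0$, Proposition \ref{equivalencetaumeasurability} provides $\varepsilon_1,\varepsilon_2>0$ with $A\in D(\varepsilon_1,\delta/2)$ and $B\in D(\varepsilon_2,\delta/2)$; Proposition \ref{Dsumandproduct} then places the algebraic operators into $A+B\in D(\varepsilon_1+\varepsilon_2,\delta)$ and $AB\in D(\varepsilon_1\varepsilon_2,\delta)$, so both are $\tau$-premeasurable in the sense of Definition \ref{defpremeasurable}; in particular their natural domains are $\tau$-dense, hence dense by Corollary \ref{taudense=>dense}. For closability I would pass to adjoints: since $A^\ast,B^\ast\in\nalgebra_\tau$ by Proposition \ref{AinD=>A*inD}, the operators $A^\ast+B^\ast$ and $B^\ast A^\ast$ are likewise $\tau$-premeasurable and so densely defined; because $(A+B)^\ast\supseteq A^\ast+B^\ast$ and $(AB)^\ast\supseteq B^\ast A^\ast$, both $A+B$ and $AB$ have densely defined adjoints and are therefore closable. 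The closures inherit affiliation (the closure of an operator affiliated with $\nalgebra$ is affiliated), and the projection witnessing membership in $D(\varepsilon_1+\varepsilon_2,\delta)$, respectively $D(\varepsilon_1\varepsilon_2,\delta)$, still serves for the closure; letting $\delta\to0$ and invoking Proposition \ref{equivalencetaumeasurability} shows $\overline{A+B},\overline{AB}\in\nalgebra_\tau$. Closure under scalars is immediate from $\lambda A\in D(|\lambda|\varepsilon,\delta)$, and $\nalgebra\subset\nalgebra_\tau$ supplies $0$ and $\mathbbm{1}$.

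Next I would verify the $\ast$-algebra axioms. For each identity I produce a $\tau$-dense subspace on which both sides reduce to the purely algebraic expression, and then appeal to the fact that a $\tau$-measurable operator admits no proper $\tau$-measurable extension (equivalently, two $\tau$-measurable operators agreeing on a $\tau$-dense subspace coincide). For commutativity and associativity of $\bm{+}$ the common core is $\Dom{A}\cap\Dom{B}\cap\Dom{C}$, which is $\tau$-dense by Corollary \ref{intersec2taudense}; there every bracketing equals the ordinary sum $Ax+Bx+Cx$. For associativity of $\bm{\times}$ the core is the natural domain of $ABC$, which is $\tau$-dense because $ABC$ is $\tau$-premeasurable by iterating Proposition \ref{Dsumandproduct}, and on it both $\overline{\overline{AB}\,C}$ and $\overline{A\,\overline{BC}}$ act as $x\mapsto ABCx$; distributivity is handled identically on $\Dom{AB}\cap\Dom{AC}$, again $\tau$-dense by Corollary \ref{intersec2taudense}. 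Finally the involution: $(\overline{A+B})^\ast$ and $\overline{A^\ast+B^\ast}$ are both $\tau$-measurable and agree on the $\tau$-dense set $\Dom{A^\ast}\cap\Dom{B^\ast}$, giving $(A\bm{+}B)^\ast=A^\ast\bm{+}B^\ast$; similarly $(\overline{AB})^\ast$ and $\overline{B^\ast A^\ast}$ agree on $\Dom{B^\ast A^\ast}$, giving $(A\bm{\times}B)^\ast=B^\ast\bm{\times}A^\ast$, while $A^{\ast\ast}=A$ and $(\lambda A)^\ast=\overline{\lambda}A^\ast$ are the usual identities for closed densely defined operators.

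The main obstacle is precisely the presence of the closures: for genuine everywhere-defined operators all these identities are immediate, but each bracketing here interposes a closure whose domain is strictly larger than the algebraic one, so associativity, distributivity and the antihomomorphism property of $\ast$ are not formal. The crux is therefore the bookkeeping of the second paragraph — showing the closures remain $\tau$-measurable — together with the systematic use of the uniqueness principle, which is what lets me avoid describing the intractable full domains of the closures and instead compare the two sides only on an explicit $\tau$-dense core.
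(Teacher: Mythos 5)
Your proof is correct and follows essentially the same route as the paper: closability is obtained by noting that $A^\ast+B^\ast$ and $B^\ast A^\ast$ are $\tau$-premeasurable (hence densely defined) so that $A+B\subset(A^\ast+B^\ast)^\ast$ and $AB\subset(B^\ast A^\ast)^\ast$, $\tau$-measurability of the closures follows from the $D(\varepsilon,\delta)$-calculus of Proposition \ref{Dsumandproduct} together with Proposition \ref{equivalencetaumeasurability}, and every algebraic identity is reduced, via the uniqueness of $\tau$-measurable operators agreeing on a $\tau$-dense subspace, to a check on an explicit $\tau$-dense core. The only difference is presentational: you spell out the cores for associativity, distributivity and the involution, where the paper merely asserts that the closed operators on both sides coincide on a $\tau$-dense subspace.
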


\begin{proof}
	First of all, so that the previous definition can make sense, we must guarantee that for every $A,B\in \nalgebra_\tau$, $A+B$ and $AB$ are closable. In fact, for $A,B\in \nalgebra_\tau$ and for every $\delta>0$ the previous proposition guarantees that there exist $\varepsilon_A, \varepsilon_B>0$ such that
	$$\begin{aligned}
	A\in D\left(\varepsilon_A,\frac{\delta}{2}\right) &\Rightarrow A^\ast\in D\left(\varepsilon_A,\frac{\delta}{2}\right),\\
	B\in D\left(\varepsilon_B,\frac{\delta}{2}\right) &\Rightarrow B^\ast\in D\left(\varepsilon_B,\frac{\delta}{2}\right),
	\end{aligned}$$
	where the implication is a consequence of Proposition \ref{AinD=>A*inD}. Also, it means that $A^\ast \in \nalgebra_\tau$. Then, by Proposition \ref{Dsumandproduct},
	$$\begin{aligned}
	A^\ast+B^\ast &\in D\left(\varepsilon_A+\varepsilon_B,\delta\right),\\
	A^\ast B^\ast &\in D\left(\varepsilon_A\varepsilon_B,\delta\right).\\
	\end{aligned}$$
	
	These inclusions, as commented after Definition \ref{defpremeasurable}, means that $A^\ast+B^\ast$ and $B^\ast A^\ast$ are $\tau$-premeasurables, which implies they are densely defined. Hence $A+B\subset (A^\ast+B^\ast)^\ast$ and $AB\subset (B^\ast A^\ast)^\ast$ admit closed extensions, so they are closable.
	
	Now, $\overline{A+B}$ and $\overline{AB}$ are closed densely defined operators, for which condition $(iii)$ in the Proposition \ref{equivalencetaumeasurability} holds thanks to Proposition \ref{Dsumandproduct}, hence $A\bm{+}B, A\bm{\times}B \in \nalgebra_\tau$. 
	
	Just remains to prove the identities for $\bm{+}, \bm{\times}$ and their relations with $*$. Let $A,B,C \in \nalgebra_\tau$, then all operators in the following equalities are closed. As they coincide on a $\tau$-dense subspace, the equalities holds:
	$$\begin{aligned}
	(A\bm{+}B)\bm{+}C&=A\bm{+}(B\bm{+}C),	&\qquad (A\bm{\times}B)\bm{\times}C&=A\bm{\times}(B\bm{\times}C),\\
	(A\bm{+}B)\bm{\times}C&=A\bm{\times}C\bm{+}B\bm{\times}C, &\qquad C\bm{\times}(A\bm{+}B)&=C\bm{\times}A\bm{+}C\bm{\times}B,\\
	(A\bm{+}B)^\ast &= A^\ast\bm{+}B^\ast, &\qquad (A\bm{\times}B)^\ast&=B^\ast\bm{\times}A^\ast.\\
	\end{aligned}$$
	
\end{proof}

From now on, we will differentiate the symbols $\bm{+}, \bm{\times}$ and the usual sum and multiplication of operators only if it may cause a misunderstanding.

We have mentioned in the previous sections that we were interested in defining a vector topology on some subspace of unbounded operators. This moment has finally arrived.

\begin{proposition}
	\label{measurablealgebra}
	$\nalgebra_\tau$ is a complete Hausdorff topological $\ast$-algebra with respect to the topology generated by the system of neighbourhoods of zero $\left\{\nalgebra_\tau \cap D(\varepsilon,\delta)\right\}_{\varepsilon>0, \delta>0}$. Furthermore, $\nalgebra$ is dense in $\nalgebra_\tau$ in this topology. We will denote the balanced absorbing neighbourhood of zero by $N(\varepsilon,\delta)=\nalgebra_\tau \cap D(\varepsilon,\delta)$.
\end{proposition}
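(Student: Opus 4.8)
The plan is to verify that the family $\{N(\varepsilon,\delta)\}_{\varepsilon>0,\delta>0}$ satisfies the standard axioms for a base of neighbourhoods of zero in a topological vector space, and then address completeness and density separately. First I would check the purely algebraic/topological axioms. Each $N(\varepsilon,\delta)$ is \emph{balanced} and \emph{absorbing}: balancedness follows because $A\in D(\varepsilon,\delta)$ with witnessing projection $p$ immediately gives $\lambda A\in D(\varepsilon,\delta)$ for $|\lambda|\leq 1$ (the same $p$ works, since $\|\lambda A p\|\leq\|Ap\|\leq\varepsilon$), and absorbing follows because any $\tau$-measurable $A$ lies in some $D(\varepsilon_0,\delta)$ by Proposition \ref{equivalencetaumeasurability}(iii), whence $tA\in N(\varepsilon,\delta)$ for small $t$. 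The crucial separation axiom — that for given $\varepsilon,\delta$ there exist $\varepsilon',\delta'$ with $N(\varepsilon',\delta')+N(\varepsilon',\delta')\subset N(\varepsilon,\delta)$ — is exactly Proposition \ref{Dsumandproduct}(i): take $\varepsilon'=\varepsilon/2$, $\delta'=\delta/2$. Continuity of multiplication uses Proposition \ref{Dsumandproduct}(ii), and continuity of the involution is Proposition \ref{AinD=>A*inD}. The Hausdorff property amounts to $\bigcap_{\varepsilon,\delta} N(\varepsilon,\delta)=\{0\}$: if $A$ belongs to every $D(\varepsilon,\delta)$ then by Proposition \ref{proposition d<->tau} we get $\tau(E_{(\varepsilon,\infty)})=0$ for all $\varepsilon>0$, so by faithfulness $E_{(\varepsilon,\infty)}=0$ for all $\varepsilon$, giving $|A|=0$ and hence $A=0$.

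Next I would prove that $\nalgebra$ is dense in $\nalgebra_\tau$. Given $A\in\nalgebra_\tau$ with polar decomposition $A=u|A|$ and spectral resolution $|A|=\int\lambda\,dE_\lambda$, I would set $A_n=u\,|A|\,E_{[0,n]}=u\int_{[0,n]}\lambda\,dE_\lambda$, which is bounded and hence lies in $\nalgebra$. Then $A-A_n=u\int_{(n,\infty)}\lambda\,dE_\lambda$ has $\|(A-A_n)E_{[0,n]}\|=0$ and the complementary projection $\mathbbm{1}-E_{[0,n]}=E_{(n,\infty)}$ satisfies $\tau(E_{(n,\infty)})\to 0$ by Proposition \ref{equivalencetaumeasurability}(v). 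Thus $A-A_n\in D(\varepsilon,\delta)$ for $n$ large (take the witnessing projection $E_{[0,n]}$), showing $A_n\to A$ in the topology.

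The genuinely substantial part — and the main obstacle — is \emph{completeness}. Here I would take a Cauchy net (or, reducing to a sequence via a cofinal argument, a Cauchy sequence) $(A_k)$ in $\nalgebra_\tau$ and construct its limit. The strategy is to pass to a rapidly convergent subsequence with $A_{k+1}-A_k\in D(\varepsilon_k,\delta_k)$ where $\sum\varepsilon_k<\infty$ and $\sum\delta_k<\infty$. Using Proposition \ref{Dsumandproduct}(i) iteratively, the tails $\sum_{j\geq k}(A_{j+1}-A_j)$ lie in $D\!\left(\sum_{j\geq k}\varepsilon_j,\sum_{j\geq k}\delta_j\right)$, so there is a common projection $p_k$ with $p_k\hilbert\subset\Dom{}$ of all relevant partial sums and $\tau(\mathbbm{1}-p_k)$ small; by Proposition \ref{pn->1} one arranges an increasing sequence of projections $q_m\uparrow\mathbbm{1}$ with $\tau(\mathbbm{1}-q_m)\to 0$ on whose ranges the sequence $(A_k x)$ converges in norm uniformly, defining a limit operator $A$ on the $\tau$-dense domain $\bigcup_m q_m\hilbert$. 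I would then verify $A$ is closable (its closure exists because, as in the algebra proof, the adjoints form a compatible premeasurable family), that $\overline{A}$ is $\tau$-measurable, and finally that $A_k\to\overline{A}$ in the $D(\varepsilon,\delta)$-topology. The delicate points are ensuring the limit operator is well defined independently of the choice of subsequence (handled by the uniqueness proposition preceding Definition of $\tau$-measurability) and that the convergence on the $q_m\hilbert$ upgrades to convergence of the whole net in the stated topology; I would manage the latter by the standard fact that a Cauchy net converges once one of its cofinal subnets converges.
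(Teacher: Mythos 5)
Your proposal is correct and follows the same overall skeleton as the paper's proof (neighbourhood-base axioms via Proposition \ref{Dsumandproduct}, involution via Proposition \ref{AinD=>A*inD}, countable base reducing completeness to sequential completeness), but it deviates in three local choices that are worth comparing. For Hausdorffness you argue $\bigcap_{\varepsilon,\delta}N(\varepsilon,\delta)=\{0\}$ via Proposition \ref{proposition d<->tau} and faithfulness of $\tau$; the paper instead fixes $A\neq B$, shows $\varepsilon_{\tilde\delta}=\inf\{\epsilon : A-B\in N(\epsilon,\tilde\delta)\}>0$ for some $\tilde\delta$ (else $A$ and $B$ agree on a $\tau$-dense subspace), and explicitly separates with $A+N\bigl(\tfrac{\varepsilon_{\tilde\delta}}{4},\tfrac{\tilde\delta}{2}\bigr)$ and $B+N\bigl(\tfrac{\varepsilon_{\tilde\delta}}{4},\tfrac{\tilde\delta}{2}\bigr)$; your spectral argument is shorter and, since the topology is a vector topology, entirely sufficient. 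For density, you truncate spectrally, $A_n=u|A|E_{[0,n]}$, with $E_{[0,n]}$ itself witnessing $A-A_n\in D(\varepsilon,\delta)$; the paper instead takes projections $(p_n)$ from Proposition \ref{pn->1} with $\bigcup_n p_n\hilbert\subset\Dom{A}$ and uses continuity of multiplication to get $Ap_n\to A$ — your route is self-contained and avoids invoking joint continuity of the product.

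The one place where you should be careful is completeness. You work directly with the measurable Cauchy sequence $(A_k)$, but membership $A_{k+1}-A_k\in D(\varepsilon_k,\delta_k)$ only provides projections mapping into $\Dom{\overline{A_{k+1}-A_k}}$, \ie\ into the domain of the \emph{closure} of the difference, which need not lie inside $\Dom{A_{k+1}}\cap\Dom{A_k}$; so the pointwise evaluations $A_kx$ on $\bigcup_m q_m\hilbert$ are not yet licensed by your tail estimates alone. This is fixable (each $A_k$ is $\tau$-measurable, so you can adjoin projections $r_k$ with $r_k\hilbert\subset\Dom{A_k}$ and $\tau(\mathbbm{1}-r_k)\leq 2^{-k}$ to the infima, at the cost of heavier bookkeeping), but note how the paper sidesteps the issue entirely: having already proved density, it replaces $(A_k)$ by a sequence $(A_k')\subset\nalgebra$ of \emph{bounded} operators with $A_k-A_k'\in N(1/k,1/k)$, for which pointwise convergence on the ranges of the projections $p_n=\bigwedge_{k\geq n}q_k$ is automatic, and then obtains closability exactly as you propose, via the adjoint sequence. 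Adopting that density-first reduction would make your completeness argument airtight with no extra domain surgery.
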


\begin{proof}
	It is easy to verify $N(\varepsilon,\delta)$ is in fact balanced and absorbing, furthermore, Proposition \ref{Dsumandproduct} implies that, for every $\varepsilon_1,\varepsilon_2>0$ and $\delta_1, \delta_2>0$, $N\left(\frac{\varepsilon_1}{2},\frac{\delta_1}{2}\right)+N\left(\frac{\varepsilon_1}{2},\frac{\delta_1}{2}\right)=N\left(\varepsilon_1,\delta_1\right)$ and $N\left(\min\{\varepsilon_1,\varepsilon_2\},\min\{\delta_1,\delta_2\}\right)\subset N\left(\varepsilon_1,\delta_1\right)\cap N\left(\varepsilon_2,\delta_2\right)$. Hence, there exists a unique vector topology such that $\left\{N(\varepsilon,\delta)\right\}_{\varepsilon>0, \delta>0}$ is a basis of neighbourhoods of zero.
	
	In order to show that this topology is Hausdorff, let $A,B \in \nalgebra_\tau$ be two distinct operators. For each $\delta>0$, define $$\varepsilon_\delta=\inf\left\{\epsilon\in\mathbb{R}_+\middle| A-B \in N(\epsilon,\delta)\right\}.$$
	Notice that there exists $\tilde{\delta}>0$ such that $\varepsilon_{\tilde{\delta}}>0$, because if we had $\varepsilon_\delta=0$ for every $\delta>0$ there would exist projections $p_\delta^n$ for each $n\in \mathbb{N}$ such that $\|(A-B)p_\delta^n\|<\frac{1}{n}$ and $\tau(1-p_\delta^n)\leq \delta$. Defining $\displaystyle p_\delta=\inf_{n\in\mathbb{N}} p_\delta^n$, we would have $\|(A-B)p_\delta\|=0$ and $\tau(1-p_\delta)\leq \delta$, which implies that $A$ and $B$ coincide on a $\tau$-dense subspace, but this is not possible since $A\neq B$. It is easy to see that $B\notin A+N\left(\frac{\varepsilon_{\tilde{\delta}}}{2},\tilde{\delta}\right)$.
	
	Let us prove that the two neighbourhoods $A+N\left(\frac{\varepsilon_{\tilde{\delta}}}{4},\frac{\tilde{\delta}}{2}\right)$ and $B+N\left(\frac{\varepsilon_{\tilde{\delta}}}{4},\frac{\tilde{\delta}}{2}\right)$ of $A$ and $B$, respectively, are disjoint. In fact, suppose they are not. Then there exist $T_1, T_2 \in N\left(\frac{\varepsilon_{\tilde{\delta}}}{4},\frac{\tilde{\delta}}{2}\right)$ such that $A+T_1=B+T_2 \Rightarrow B=A+T_1-T_2 \in A+N\left(\frac{\varepsilon_{\tilde{\delta}}}{2},\tilde{\delta}\right)$, but this is not possible.
	
	Of course, the vector space operations are continuous, the involution is continuous thanks to Proposition \ref{AinD=>A*inD}. For the product, consider $A,B\in\nalgebra_\tau$ and $AB+N(\varepsilon,\delta)$ a basic neighbourhood of $AB$. There exists $\alpha>\varepsilon$ such that $A \in N\left(\alpha,\frac{\delta}{6}\right)$ and $B\in N\left(\alpha,\frac{\delta}{6}\right)$, then, if $\tilde{A}\in A+N\left(\frac{\varepsilon}{3\alpha},\frac{\delta}{6}\right)$ and $\tilde{B}\in B+N\left(\frac{\varepsilon}{3\alpha},\frac{\delta}{6}\right)$, we have
	$$\begin{aligned}
	AB-\tilde{A}\tilde{B}&=-(A-\tilde{A})(B-\tilde{B})+A(B-\tilde{B})+(A-\tilde{A})B\\
	&\in N\left(\frac{\varepsilon^2}{9\alpha},\frac{\delta}{3}\right)+
	N\left(\frac{\varepsilon}{3},\frac{\delta}{3}\right)+
	N\left(\frac{\varepsilon}{3},\frac{\delta}{3}\right)
	\subset N(\varepsilon,\delta).
	\end{aligned}$$
	
	For the density, we use Proposition \ref{pn->1}. Let $A\in \nalgebra_\tau$ and $(p_n)_{n\in\mathbb{N}}\subset \nalgebra_p$ be an increasing sequence of projections such that $p_n \xrightarrow{SOT} \mathbbm{1}$, $\tau(\mathbbm{1}-p_n)\to 0$ and $\displaystyle \bigcup_{n \in \mathbb{N}}p_n\hilbert\subset \Dom{A}$. Thus $(Ap_n)_{n\in\mathbb{N}}\in\nalgebra$ and $Ap_n\to A$, since the product is continuous and $p_n\xrightarrow[]{\tau} \mathbbm{1}$.
	
	It just remains to show that $\nalgebra_\tau$ is complete. Notice that the topology has a countable basis of neighbourhoods of zero, since $\left\{N\left(\frac{1}{n},\frac{1}{m}\right)\right\}_{n\in\mathbb{N}^\ast,m\in\mathbb{N}^\ast}$ is such a countable basis. This means we just have to prove every Cauchy sequence is convergent.
	
	Let $(A_n)_{n\in\mathbb{N}} \in \nalgebra_\tau$ be a Cauchy sequence. Since $\overline{\nalgebra}^\tau=\nalgebra_\tau$, there exists $(A^\prime_n)_{n\in\mathbb{N}} \subset \nalgebra$ such that $A_n-A^\prime_n\in N\left(\frac{1}{n},\frac{1}{n}\right)$, hence $(A^\prime_n)_{n\in\mathbb{N}}$ is a Cauchy sequence.
	
	For each $k\in \mathbb{N}$ there exists $N_k\in \mathbb{N}$ such that, $\forall n,m\geq N_k$, there exists a projection $q_k$ with $\|(A^\prime_m-A^\prime_n)q_k\|\leq 2^{-k}$ and $\tau(1-q_k)\leq 2^{-k}$.
	
	Define $p=\displaystyle \bigvee_{n\in\mathbb{N}}q_n$. By simply repeating calculation (\ref{increasingtau}), we see that $(p_n)_{n\in\mathbb{N}}\subset\nalgebra_p$, defined by $\displaystyle p_n=\bigwedge_{k\geq n}q_k$, is an increasing sequence of projections such that $\tau(1-p_k)\leq 2^{1-k}$.
	
	Let's now define $\displaystyle \Dom{A}=\bigcup_{k\in\mathbb{N}}p_k\hilbert$. Notice that $x\in \Dom{A}$ implies the existence of $j\in\mathbb{N}$ for which $x\in p_j\hilbert\subset p_l\hilbert$ for all $l\geq j$. Since, for any $k>j$ and $n,m \geq N_k$, we have $$\begin{aligned}
	\|(A^\prime_n-A^\prime_m)x\|&= \|(A^\prime_n-A^\prime_m)p_k x\|\\
	&\leq\|(A^\prime_n-A^\prime_m)p_k\| \| x\|\\
	&\leq \|(A^\prime_n-A^\prime_m)q_k\| \| x\|\\
	& \leq 2^{-k}\|x\|.
	\end{aligned}$$
	Hence $(A^\prime_nx)_{n\geq j} \subset \hilbert$ is a Cauchy sequence. Then, we can define $A:\Dom{A}\to\hilbert$ by 
	$$ Ax=\lim_{j<n\to \infty} A^\prime_n x.$$
	
	Since $\Dom{A}$ is $\tau$-dense and $A\eta \nalgebra$ by construction, it remains to show it is closable and $A_n\xrightarrow{\tau} A$. In order to prove it is closable, notice that $(A^{\prime \ast}_n)_{n\in\mathbb{N}}\in\nalgebra$ is again a Cauchy sequence, so we can repeat the previous construction to obtain $B:\Dom{B}\to\hilbert \in\nalgebra_\tau$ defined by $Bx=\lim A^{\prime \ast}_n x$. Hence, $\forall x \in \Dom{A}$ and $\forall y \in \Dom{B}$,
	$$\ip{Ax}{y}=\lim_{n\to\infty}\ip{A^\prime_nx}{y}	=\lim_{n\to \infty}\ip{x}{A_n^{\prime \ast}y}	=\ip{x}{By},$$
	which means $A\subset B^\ast$. Hence, $A$ is closable and, by Proposition \ref{equivalencetaumeasurability}, $\overline{A}\in \nalgebra_\tau$.
	
	Finally, for the convergence notice that for every $\varepsilon, \delta>0$ there exists $k\in \mathbb{N}$ such that $2^{-k}<\varepsilon$ and $2^{1-k}<\delta$, so, $\tau(1-p_k)<\delta$ and, $\forall n\geq N_k$, $$\begin{aligned}
	\|(\overline{A}-A^\prime_n)p_k\|&=\sup_{\|x\|\leq1}\left\|(A-A^\prime_n)p_k x\right\|\\
	&=\sup _{\|x\|\leq1}\left\|\lim_{m\to\infty} A^\prime_m p_k x -A^\prime_n p_k x\right\|\\
	&=\sup _{\|x\|\leq1}\lim_{m\to\infty}\left\| A^\prime_m p_k x -A^\prime_n p_k x\right\|\\
	&\leq\sup _{\|x\|\leq1}\lim_{m\to\infty}\left\| (A^\prime_m -A^\prime_n) p_k \right\|\|x\|\\
	&\leq\lim_{m\to\infty}\left\| (A^\prime_m -A^\prime_n) p_k \right\|\\
	&\leq 2^{-k}\\
	&<\varepsilon.\\
	\end{aligned}$$
	
\end{proof}

It is interesting to notice that analyticity pervades almost every subject in von Neumann algebras. In this section, we will use the linearity and continuity of the trace, Functional Calculus and Spectral Theory to take advantage of the well-known rigid behaviour of analytic functions to create the aforesaid inequalities. So, now, we will start a digression about the multi-variable Three-Line Theorem, since its proof is not too easy to find in standard literature. Here we followed reference \cite{Araki73}, although  prefer to create a proof based on the Phragm\'en-Lindel\"of theorem.


\begin{theorem}[Phragm\'en-Lindel\"of]\index{theorem! Phragm\'en-Lindel\"of}
	\label{PLT}
	Let $S=\{z\in \mathbb{C} \ | \ a <\Re{z} < b\}$, $a<b \in \mathbb{R}$, and let $f:\overline{S}\to\mathbb{C}$ be a continuous function, analytic on the strip $S$, such that $|f(z)|\leq M$ for all $z \in \partial S$. Suppose there exists a sequence of functions $g_n:\overline{S} \to \mathbb{C}$, analytic in $S$, such that \begin{enumerate}[(i)]
		\item $\displaystyle g_n(z) \xrightarrow{n\to \infty} 1 \quad \forall z \in \overline{S};$\\
		\item $\displaystyle R_n\doteq \max\left\{\sup_{y\in\mathbb{R}}{|g_n(a+\iu y)},\sup_{y\in\mathbb{R}}{|g_n(b+\iu y)}\right\}\xrightarrow{n\to \infty} 1;$\\
		\item $\displaystyle \lim_{|y|\to \infty}\left(\sup_{a\leq x\leq b}{|g_n(x+\iu y)f(x+\iu y)|}\right)=0.$\\
	\end{enumerate}
	Then $|f(z)| \leq M$ for all $z\in \overline{S}$.
\end{theorem}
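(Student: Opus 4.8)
The plan is to reduce the statement to the ordinary maximum modulus principle applied to the damped functions $g_n f$ on a family of rectangles exhausting the strip $S$, and then to remove the damping using the two limiting hypotheses on the sequence $(g_n)$. The role of the auxiliary sequence is precisely that $g_n f$ decays as $|\Im{z}| \to \infty$ (hypothesis $(iii)$), which lets the horizontal sides of a tall rectangle be neglected; the genuinely delicate ingredient of a Phragm\'en--Lindel\"of argument, namely the construction of such damping factors, is here supplied as a hypothesis, so the remaining work is essentially book-keeping.

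First I would fix $n \in \mathbb{N}$ and $\varepsilon > 0$. By hypothesis $(iii)$ there exists $Y_0 > 0$ such that $|g_n(x+\iu y) f(x+\iu y)| < \varepsilon$ whenever $a \le x \le b$ and $|y| > Y_0$. For each $Y > Y_0$ I would consider the compact rectangle $R_Y = \{z \in \mathbb{C} \mid a \le \Re{z} \le b, \ |\Im{z}| \le Y\}$. Since $g_n f$ is analytic in the interior of $R_Y$ (being a product of functions analytic on $S$) and continuous on $\overline{R_Y}$, the maximum modulus principle guarantees that $|g_n f|$ attains its maximum over $R_Y$ on the boundary $\partial R_Y$.

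Next I would estimate $|g_n f|$ on each part of $\partial R_Y$. On the two vertical sides $\Re{z} = a$ and $\Re{z} = b$ one has $|g_n(z)| \le R_n$ by the definition of $R_n$, and $|f(z)| \le M$ because $z \in \partial S$; hence $|g_n(z) f(z)| \le R_n M$ there. On the two horizontal sides $|\Im{z}| = Y > Y_0$ one has $|g_n(z) f(z)| < \varepsilon$ by the choice of $Y_0$. Consequently $|g_n(z) f(z)| \le \max\{R_n M, \varepsilon\}$ for every $z \in \overline{R_Y}$, with the bound uniform in $z$. Fixing an arbitrary $z \in S$ and letting $Y \to \infty$ (so that $z \in R_Y$ eventually), then letting $\varepsilon \to 0$, yields $|g_n(z) f(z)| \le R_n M$ for all $z \in S$.

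Finally I would pass to the limit $n \to \infty$: hypothesis $(i)$ gives $g_n(z) \to 1$ and hypothesis $(ii)$ gives $R_n \to 1$, so the inequality $|g_n(z) f(z)| \le R_n M$ becomes $|f(z)| \le M$ for every $z \in S$. Together with the assumed bound $|f| \le M$ on $\partial S$, this proves $|f(z)| \le M$ for all $z \in \overline{S}$. The only points requiring attention are the order of the limits --- fixing $n$ and exhausting the strip in $Y$ first, then sending $n \to \infty$ --- and the uniformity in $z$ of the boundary constant, which is exactly what the maximum modulus principle delivers; I expect no serious obstacle, since all the analytic control at infinity has been packaged into the hypotheses on $(g_n)$.
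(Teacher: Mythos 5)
Your proof is correct and follows essentially the same route as the paper's: hypothesis $(iii)$ kills the horizontal sides of a tall rectangle, the Maximum Modulus Principle bounds $|g_nf|$ by $R_nM$ (up to the small boundary term), and the limits $g_n\to 1$, $R_n\to 1$ remove the damping. Your $\varepsilon$-bookkeeping with $\max\{R_nM,\varepsilon\}$ is in fact slightly cleaner than the paper's fixed threshold $M/2$, which tacitly needs $R_n\geq\tfrac{1}{2}$ for large $n$.
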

\begin{proof}
	By hypothesis, there exists $K\in \mathbb{R}$ such that \mbox{$\displaystyle \sup_{a\leq x\leq b}{|g_n(x+\iu y)f(x+\iu y)|}< \frac{M}{2}$} for $|y|\geq K$. Now, by the Maximum Modulus Principle, 
	$$|g_n(z)f(z)| \leq M R_n \quad \forall z\in \{w\in \mathbb{C} \ | \ a \leq\Re{w} \leq b \ \textrm{ and } -K\leq \Im{w}<K\}.$$
	From the two inequalities, it follows that, for every $z\in\overline{S}$
	$$|f(z)|=\lim_{n\to\infty}|g_n(z)f(z)|\leq \lim_{n\to\infty}M R_n =M.$$
	
\end{proof}

\begin{theorem}[Multi-Variable Three-Line Theorem] \index{theorem!Multi-Variable Three-Line} \label{TLT} \ \hspace{\fill} \linebreak	Let $C\subset \mathbb{R}^n$ be convex set, $S=\{z=(x_1+\iu x_1,\ldots,x_n+\iu y_n)\in \mathbb{C}^n \ | \ \Re{z}\in C \}$, and let $f:\overline{S}\to\mathbb{C}$ be a continuous bounded function, analytic on the strip $S$. Then, for $x_1, x_2 \in \overline{S}$ and for all $1\leq t\leq 1$,
	$$\sup_{y\in\mathbb{R}^n}{|f(tx_1+(1-t)x_2+\iu y)|} \leq\left( \sup_{y\in\mathbb{R}^n}{|f(x_1+\iu y)|}\right)^t \left(\sup_{y\in\mathbb{R}^n}{|f(x_2+\iu y)|}\right)^{1-t}.$$
\end{theorem}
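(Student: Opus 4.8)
The plan is to reduce the $n$-variable statement to the classical one-variable three-line theorem by restricting $f$ to suitable affine complex lines, and to obtain that one-variable statement from the Phragm\'en--Lindel\"of Theorem \ref{PLT} just established. For $x \in \overline{C}$ write $M(x) = \sup_{y \in \mathbb{R}^n} |f(x + \iu y)|$; since $f$ is bounded, $M(x) \in [0, \|f\|_\infty]$, and the assertion is exactly that $\log M$ is convex along segments of $\overline{C}$, i.e.\ $M(tx_1 + (1-t)x_2) \le M(x_1)^{t} M(x_2)^{1-t}$.

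First I would record the one-variable version: if $g$ is analytic on the strip $\{0 < s < 1\}$ (writing $\zeta = s + \iu\sigma$), continuous and bounded on its closure, with $|g| \le M_0$ on $s = 0$ and $|g| \le M_1$ on $s = 1$, then $|g(t + \iu\sigma)| \le M_0^{1-t} M_1^{t}$. Assuming first $M_0, M_1 > 0$, set
\[
F(\zeta) = g(\zeta)\, M_0^{\zeta - 1}\, M_1^{-\zeta},
\]
which is analytic and bounded on the closed strip and satisfies $|F| \le 1$ on both edges (on $s = 0$, $|M_0^{\zeta-1}| = M_0^{-1}$ and $|M_1^{-\zeta}| = 1$; on $s = 1$, $|M_0^{\zeta-1}| = 1$ and $|M_1^{-\zeta}| = M_1^{-1}$). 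To pass from the edges to the interior I would apply Theorem \ref{PLT} with $M = 1$ and the auxiliary functions $g_n(\zeta) = e^{\zeta^2/n}$: these converge to $1$ pointwise, give $R_n = e^{1/n} \to 1$, and since $|g_n(\zeta)| = e^{(s^2 - \sigma^2)/n}$ decays like $e^{-\sigma^2/n}$ as $|\sigma| \to \infty$ while $F$ stays bounded, the product $g_n F$ tends to $0$ along vertical lines. Thus $|F| \le 1$ throughout, i.e.\ $|g(\zeta)| \le M_0^{1-s} M_1^{s}$; the degenerate cases $M_0 = 0$ or $M_1 = 0$ follow by replacing $M_j$ with $M_j + \varepsilon$ and letting $\varepsilon \to 0$.

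Then I would reduce the theorem to this lemma. Fix $x_1, x_2 \in \overline{C}$ and $y_0 \in \mathbb{R}^n$, and restrict $f$ to the affine complex line $\phi(\zeta) = x_2 + \zeta(x_1 - x_2) + \iu y_0$, putting $g(\zeta) = f(\phi(\zeta))$. For $s = \Re{\zeta} \in [0,1]$, convexity of $C$ gives $x_2 + s(x_1 - x_2) \in \overline{C}$, so $g$ is analytic on $\{0 < s < 1\}$ and continuous and bounded on the closure. On the edge $s = 0$ one has $\phi(\iu\sigma) = x_2 + \iu\big(y_0 + \sigma(x_1 - x_2)\big)$, whence $|g(\iu\sigma)| \le M(x_2)$; on $s = 1$, $\phi(1 + \iu\sigma) = x_1 + \iu\big(y_0 + \sigma(x_1 - x_2)\big)$, whence $|g(1 + \iu\sigma)| \le M(x_1)$. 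The lemma with $M_0 = M(x_2)$ and $M_1 = M(x_1)$, evaluated at the real point $\zeta = t$, yields $|f(tx_1 + (1-t)x_2 + \iu y_0)| = |g(t)| \le M(x_2)^{1-t} M(x_1)^{t}$; taking the supremum over $y_0 \in \mathbb{R}^n$ gives the claimed inequality.

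The main point requiring care is the reduction step: on each boundary edge one must ensure that the values of the restricted $g$ stay below the full $n$-dimensional supremum $M(x_i)$, which works precisely because $y_0 + \sigma(x_1 - x_2)$ traces a line contained in $\mathbb{R}^n$ as $\sigma$ varies, and one must verify that the single complex line meets $\overline{S}$ exactly over the strip $0 \le s \le 1$ (again using convexity of $C$), so that $g$ inherits both the correct strip geometry and boundedness. Checking the Phragm\'en--Lindel\"of hypotheses for the normalized function $F$ is then the routine part indicated above.
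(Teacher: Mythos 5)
Your reduction along the complex line $\phi(\zeta)=x_2+\zeta(x_1-x_2)+\iu y_0$, the normalization $F(\zeta)=g(\zeta)\,M_0^{\zeta-1}M_1^{-\zeta}$, and the application of Theorem \ref{PLT} with the multipliers $g_n(\zeta)=e^{\zeta^2/n}$ are all sound, and in fact this is the same mechanism as the paper's proof: its auxiliary function $\hat{f}^{\bar{y}}_{x_1,x_2}$ is exactly this normalized one-variable restriction, treated with the same Phragm\'en--Lindel\"of argument. Your $\varepsilon$-perturbation of a vanishing edge supremum is also a legitimate (and cleaner) replacement for the paper's appeal to the edge-of-the-wedge theorem in that degenerate case.

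There is, however, a genuine gap in the boundary case. You claim that for $x_1,x_2\in\overline{C}$ the restriction $g(\zeta)=f(\phi(\zeta))$ is analytic on $\{0<\Re{\zeta}<1\}$ because convexity gives $x_2+s(x_1-x_2)\in\overline{C}$. But membership of the real part in $\overline{C}$ only places $\phi(\zeta)$ in $\overline{S}$, where $f$ is merely continuous; analyticity of $f$ is available only on the open tube $S$, i.e.\ where the real part lies in $C$ itself. If both endpoints lie in $\overline{C}\setminus C$, the whole segment can lie in the boundary: take $C=\{(u,v)\in\mathbb{R}^2 \mid v>0\}$, $x_1=(0,0)$, $x_2=(1,0)$; then $\phi$ maps the entire strip into $\overline{S}\setminus S$, $g$ is nowhere analytic there, and your one-variable lemma does not apply. (By convexity the dichotomy is sharp: the open segment lies either entirely in $C$, where your argument is complete, or entirely in $\partial C$, where it breaks down.) This is precisely why the paper's proof has a second step: it first proves the inequality for $x_1,x_2\in C$ and then passes to $\overline{C}$ by choosing interior sequences $x_j^n\to x_j$ and applying the interior case to the damped functions $f_m(z)=f(z)\exp\bigl(\sum_{k}z_k^2/m\bigr)$. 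The Gaussian factor forces decay in $\Im{z}$, which is what makes the convergence $\sup_{y}|f_m(x_j^n+\iu y)|\to\sup_{y}|f_m(x_j+\iu y)|$ uniform --- a plain continuity argument fails because the supremum runs over the unbounded variable $y$ --- after which one lets $m\to\infty$. Your proof needs this, or an equivalent limiting argument, to cover the case $x_1,x_2\in\overline{S}$ actually asserted in the theorem.
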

\begin{proof}
	First, if $f$ vanishes on $\partial S$, then $f$ vanishes in $S$ thanks to the Edge-of-the-Wedge Theorem and the Identity Theorem. Hence, the inequality holds.
	
	If $f$ doesn't vanishes in $\partial S$, we can define, for $x_1,x_2 \in C$ and $\bar{y}\in \mathbb{R}^n$, we can define $\hat{f}^{\bar{y}}_{x_1,x_2}: \{z\in \mathbb{C} \ | \ 0<\Re{z}<1\} \to \mathbb{C}$ by $$\hat{f}^{\bar{y}}_{x_1,x_2}(z)=f(zx_1+(1-z)x_2 +\iu \bar{y})\left( \sup_{y\in\mathbb{R}^n}{|f(x_1+\iu y)|}\right)^{-z}\left(\sup_{y\in\mathbb{R}^n}{|f(x_2+\iu y)|}\right)^{z-1}.$$
	which is analytic on the strip \mbox{$S_1\doteq\{z\in \mathbb{C} \ | \ 0<\Re{z}<1\}$}, bounded and continuous on $\overline{S_1}$  and satisfies $|\hat{f}_{x_1,x_2}(z)|\leq 1$ for all $z\in \partial S_1$.
	
	By Theorem \ref{PLT}, using the sequence $g_n(z)=e^{\frac{z^2}{n}}$, we have that 
	$$\left|f(zx_1+(1-z)x_2 +\iu \bar{y})\left( \sup_{y\in\mathbb{R}^n}{|f(x_1+\iu y)|}\right)^{-z}\left(\sup_{y\in\mathbb{R}^n}{|f(x_2+\iu y)|}\right)^{z-1}\right|=|\hat{f}^{\bar{y}}_{x_1,x_2}(z)|\leq 1.$$
	Taking the supremum over $\bar{y}$ we have the desired result.
	
	It remains to prove the inequality when $x_1, x_2 \in \overline{C}$. In order to prove that, let $(x_j^n)_{n\in\mathbb{N}} \subset S$ be sequences such that $x_j^n \to x_j$, $j=1,2$.
	
	Applying what we have just showed to each of the functions $\displaystyle f_m(z)= f(z) \exp{\left(\sum_{k=1}^{n}{ \frac{z_k^2}{m}}\right)}$ we obtain $\left|\widehat{(f_m)}^{\bar{y}}_{x_1^n, x_2^n}(z)\right| \leq \exp{\left(\sum_{k=1}^{n}{ \frac{z_k^2}{m}}\right)}$ for every $z\in S_1$, since $x_1^n,x_2^n \in S$ for all $n \in \mathbb{N}$.
	
	Notice that the exponential decay with $\Im{z_k}$ assures that the convergence of $f_m(x_j^n+\iu y)$ to $f_m(x_j+\iu y)$ is uniform in $y$, \ie, $$ \sup_{y\in\mathbb{R}^n}{|f_m(x_j^n+\iu y)|} \xrightarrow{n\to \infty} \sup_{y\in\mathbb{R}^n}{|f_m(x_j+\iu y)|}.$$
	Thus, for every $z\in S_1$,
	$$\left|\widehat{(f_m)}^{\bar{y}}_{x_1, x_2}(z)\right| =\lim_{n\to \infty} \left|\widehat{(f_m)}^{\bar{y}}_{x_1^n, x_2^n}(z)\right|\leq \exp{\left(\sum_{k=1}^{n}{ \frac{z_k^2}{m}}\right)}$$ 
	
	Finally,
	$$\left|\hat{f}^{\bar{y},m}_{x_1, x_2}(z)\right|=\lim_{m\to \infty} \left|\widehat{(f_m)}^{\bar{y}}_{x_1, x_2}(z)\right|\leq \lim_{m\to\infty}\exp{\left(\sum_{k=1}^{n}{ \frac{z_k^2}{m}}\right)}=1 \quad \forall z \in S_1.$$
	
\end{proof}

\begin{lemma}
	\label{normtraceinequality}
	Let $\nalgebra$ be a von Neumann algebra, $\tau$ a normal faithful semifinite trace on $\nalgebra$, $A\in \nalgebra$ and $B\in \mathfrak{M}_\tau$. Then
	$$\left|\tau(AB)\right|\leq\tau(|AB|)\leq \|A\|\tau(|B|). $$
\end{lemma}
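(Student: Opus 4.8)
The plan is to reduce everything to two tools already available: the linear extension of $\tau$ to $\mathfrak{M}_\tau$ together with its traciality, and polar decompositions. First I would check that all three quantities are finite. Since $\tau$ is a trace, $\mathfrak{N}_\tau$ is simultaneously a left and a right ideal (Proposition \ref{simplepropweights}$(iii)$ combined with $\tau(A^\ast A)=\tau(AA^\ast)$), so $\mathfrak{M}_\tau=\mathfrak{N}_\tau^\ast\mathfrak{N}_\tau$ is a two-sided ideal of $\nalgebra$. Hence $AB\in\mathfrak{M}_\tau$, and if $AB=U|AB|$ is the polar decomposition (with $U\in\nalgebra$) then $|AB|=U^\ast(AB)\in\mathfrak{M}_\tau$; being positive, $|AB|\in\mathfrak{M}_{\tau}\cap\nalgebra_+=\mathfrak{F}_\tau$, so $\tau(|AB|)<\infty$. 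The same argument applied to $B$ gives $|B|\in\mathfrak{F}_\tau$, so $\tau(|B|)<\infty$.

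For the inequality $|\tau(AB)|\leq\tau(|AB|)$ I would invoke the Cauchy--Schwarz inequality for the sesquilinear form $\ip{X}{Y}_\tau=\tau(Y^\ast X)$ on $\mathfrak{N}_\tau/N_\tau$, which holds verbatim as in Proposition \ref{cauchyschwarz}. Setting $C=|AB|^{\frac{1}{2}}U^\ast$ and $D=|AB|^{\frac{1}{2}}$, both lie in $\mathfrak{N}_\tau$: indeed $\tau(D^\ast D)=\tau(|AB|)$, while $\tau(C^\ast C)=\tau\bigl(U|AB|U^\ast\bigr)=\tau\bigl(|AB|\,U^\ast U\bigr)=\tau(|AB|)$, using traciality and the fact that $U^\ast U=s^\nalgebra_R(AB)$ is the support projection of $|AB|$. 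Since $C^\ast D=U|AB|=AB$, Cauchy--Schwarz gives $|\tau(AB)|^2=|\tau(C^\ast D)|^2\leq\tau(C^\ast C)\tau(D^\ast D)=\tau(|AB|)^2$, which is the first estimate.

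The second inequality is the delicate step. I would introduce the functional $\rho(X)=\tau(X|B|)$ on $\nalgebra$, which is well defined and finite because $|B|\in\mathfrak{F}_\tau$ and $\mathfrak{M}_\tau$ is a two-sided ideal. Traciality yields $\rho(X^\ast X)=\tau(X^\ast X|B|)=\tau\bigl((X|B|^{\frac{1}{2}})^\ast(X|B|^{\frac{1}{2}})\bigr)\geq 0$, where $|B|^{\frac{1}{2}}\in\mathfrak{N}_\tau$ and $\mathfrak{N}_\tau$ is a left ideal; thus $\rho$ is positive, and by Proposition \ref{PPC} it is continuous with $\|\rho\|=\rho(\mathbbm{1})=\tau(|B|)$. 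Now write the polar decompositions $AB=U|AB|$ and $B=V|B|$, so that $|AB|=U^\ast AB=U^\ast AV|B|$ and therefore $\tau(|AB|)=\tau(U^\ast AV|B|)=\rho(U^\ast AV)$. Because the left-hand side is a nonnegative real number, $\tau(|AB|)=|\rho(U^\ast AV)|\leq\|\rho\|\,\|U^\ast AV\|\leq\tau(|B|)\,\|A\|$, where $\|U^\ast AV\|\leq\|A\|$ since $U$ and $V$ are partial isometries.

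The main obstacle is not any single computation but the bookkeeping that makes them legitimate: one must secure finiteness of all three traces from the two-sided ideal property of $\mathfrak{M}_\tau$, and one must repeatedly use the cyclicity $\tau(XY)=\tau(YX)$ with $X\in\nalgebra$ and $Y\in\mathfrak{M}_\tau$ (both for the support-projection identity in the first part and for the positivity of $\rho$ in the second). Once these are in place, combining the two displayed estimates gives $|\tau(AB)|\leq\tau(|AB|)\leq\|A\|\tau(|B|)$.
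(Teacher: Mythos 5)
Your proof is correct, and it takes a genuinely different route from the paper's. The paper argues in two stages: first, for $A,B\in\mathfrak{M}_\tau^+$ it proves $\tau(AB)\leq\|A\|\tau(B)$ by the elementary square-root trick $D=\sqrt{\|A\|\mathbbm{1}-A}$, so that $\tau\bigl(B^{\frac{1}{2}}(\|A\|\mathbbm{1}-A)B^{\frac{1}{2}}\bigr)=\tau\bigl((DB^{\frac{1}{2}})^\ast(DB^{\frac{1}{2}})\bigr)\geq 0$; second, for general $A,B$ it writes $A=u|A|$, $B=v|B|$, splits \emph{both} moduli into half-powers, and applies Cauchy--Schwarz symmetrically to get $|\tau(AB)|^2\leq\tau(|B||A^\ast|)\,\tau(|A||B^\ast|)$, after which the positive case is invoked twice. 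Note that the paper's argument thereby establishes only the outer inequality $|\tau(AB)|\leq\|A\|\tau(|B|)$ and never isolates the middle term $\tau(|AB|)$ (the stated chain follows by applying the end result with $w^\ast A$ in place of $A$, where $AB=w|AB|$, a step the paper leaves implicit). You instead prove the two displayed inequalities exactly as stated: Cauchy--Schwarz applied to the factorization $AB=(|AB|^{\frac{1}{2}}U^\ast)^\ast|AB|^{\frac{1}{2}}$ coming from the polar decomposition of the \emph{product} gives $|\tau(AB)|\leq\tau(|AB|)$, and the second estimate comes from viewing $\rho(\cdot)=\tau(\cdot\,|B|)$ as a positive functional with $\|\rho\|=\rho(\mathbbm{1})=\tau(|B|)$ via Proposition \ref{PPC}, together with $|AB|=U^\ast AV|B|$ and $\|U^\ast AV\|\leq\|A\|$. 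What each approach buys: yours handles the intermediate term explicitly and keeps the finiteness bookkeeping (the two-sided ideal property of $\mathfrak{M}_\tau$ for a trace) in plain view, while the paper's positive-case computation is more self-contained — your appeal to $|\rho(X)|\leq\rho(\mathbbm{1})\|X\|$ is essentially that same positive-case bound in abstract functional-analytic clothing. Both arguments use cyclicity of the extended trace with one factor only bounded (you for $\tau(U|AB|U^\ast)=\tau(|AB|)$ and the positivity of $\rho$, the paper for $\tau(|A|^{\frac{1}{2}}u^\ast|B|u|A|^{\frac{1}{2}})=\tau(|B|u|A|u^\ast)$), and in both cases this reduces to the definitional identity $\tau(X^\ast X)=\tau(XX^\ast)$ applied to elements of $\mathfrak{N}_\tau$, so you are at the same level of rigor as the original.
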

\begin{proof}
	First, let's prove the lemma in the case $A,B\in \mathfrak{M}_\tau^+$.
	
	Let $D=\sqrt{\|A\|\mathbbm{1}-A}$, then
	$$0\leq \left(DB^\frac{1}{2}\right)^\ast \left(DB^\frac{1}{2}\right)=B^\frac{1}{2}\left(\|A\|\mathbbm{1}-A\right)B^\frac{1}{2}.$$
	
	Now, from the trace positivity, it follows that
	$$0\leq \tau\left(B^\frac{1}{2}\left(\|A\|\mathbbm{1}-A\right)B^\frac{1}{2}\right)=\|A\|\tau(B)-\tau\left(B^\frac{1}{2}A B^\frac{1}{2}\right)=\|A\|\tau(B)-\tau\left(A B\right).$$
	
	For the general statement, let $A=u|A|$ and $B=v|B|$ be the polar decomposition of $A$ and $B$. Using the Cauchy-Schwarz inequality, we obtain
	$$	\begin{aligned}
	\left|\tau(AB)\right|^2&=\left|\tau(u|A|v|B|)\right|^2\\
	&=\left|\tau\left((|B|^\frac{1}{2}u|A|^\frac{1}{2})(|A|^\frac{1}{2}v|B|^\frac{1}{2})\right)\right|^2\\
	&\leq \tau\left(\left(|B|^\frac{1}{2}u|A|^\frac{1}{2}\right)^\ast \left(|B|^\frac{1}{2}u|A|^\frac{1}{2}\right)\right) \tau\left(\left(|A|^\frac{1}{2}v|B|^\frac{1}{2}\right)^\ast \left(|A|^\frac{1}{2}v|B|^\frac{1}{2}\right)\right)\\
	&= \tau\left(|A|^\frac{1}{2}u^\ast |B|u|A|^\frac{1}{2})\right) \tau\left(|B|^\frac{1}{2}v^\ast |A|v|B|^\frac{1}{2}\right)\\
	&= \tau\left(|B|u|A|u^\ast\right) \tau\left(|A|v|B|v^\ast\right)\\
	&= \tau\left(|B||A^\ast|\right) \tau\left(|A||B^\ast|\right).\\
	\end{aligned}$$
	Now, from the first part of the proof that holds for positive operators, we conclude that, for all $A\in\nalgebra$ and $B \in \nalgebra_\tau$,
	$$\left|\tau(AB)\right|^2\leq\tau\left(|B||A^\ast|\right) \tau\left(|A||B^\ast|\right)\leq\|A^\ast\|\tau(|B|)\|A\|\tau(|B^\ast|)=\|A\|^2\tau(|B|)^2.$$
	
\end{proof}

Let us start using what we have just presented to prove the aforementioned important inequalities and then define the noncommutative $L_p$-spaces. We refer to \cite{Ruscai72} for this proof and \cite{Nelson74} for further reading.

\begin{theorem}[H\"older's Inequality\index{inequality! H\"older}]
	\label{holder}
	Let $\nalgebra$ be a von Neumann algebra and $\tau$ a normal faithful semifinite trace in $\nalgebra$. Let also $A,B\in\nalgebra$ and $p,\, q>1$ such that $\frac{1}{p}+\frac{1}{q}=1$, then
	$$\tau(|AB|)\leq \tau(|A|^p)^\frac{1}{p}\tau(|B|^q)^\frac{1}{q}.$$
\end{theorem}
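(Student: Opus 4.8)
The plan is to prove the inequality by complex interpolation, applying the single-variable instance of the Multi-Variable Three-Line Theorem (Theorem \ref{TLT}) together with the norm--trace estimate of Lemma \ref{normtraceinequality}. We may assume $\tau(|A|^p),\tau(|B|^q)<\infty$, for otherwise the right-hand side is infinite and nothing is to prove. Write the polar decompositions $A=u|A|$, $B=v|B|$ and $AB=w|AB|$, so that $|AB|=w^\ast AB=w^\ast u\,|A|\,v\,|B|$ and hence
\begin{equation}
\tau(|AB|)=\tau\left(w^\ast u\,|A|\,v\,|B|\right).
\end{equation}
With this in mind I would define, on the closed strip $\overline{S}=\{z\in\mathbb{C}\mid 0\leq \Re z\leq 1\}$, the function
\begin{equation}
f(z)=\tau\left(w^\ast u\,|A|^{pz}\,v\,|B|^{q(1-z)}\right),
\end{equation}
where the complex powers are given by the functional calculus with the convention $0^{pz}=0$. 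Since $1-\tfrac1p=\tfrac1q$, one checks immediately that $f(\tfrac1p)=\tau(|AB|)$.

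Next I would estimate $f$ on the two boundary lines. On $\Re z=0$ factor $|B|^{q(1-\iu y)}=|B|^{q}\,|B|^{-q\iu y}$ and use cyclicity of $\tau$ together with Lemma \ref{normtraceinequality}: as $w^\ast,u,v,|A|^{p\iu y},|B|^{-q\iu y}$ all have norm at most one and $|B|^{q}\geq0$,
\begin{equation}
\left|f(\iu y)\right|=\left|\tau\left(\left(|B|^{-q\iu y}w^\ast u\,|A|^{p\iu y}v\right)|B|^{q}\right)\right|\leq \tau(|B|^{q}),\qquad y\in\mathbb{R}.
\end{equation}
Symmetrically, on $\Re z=1$ factor $|A|^{p(1+\iu y)}=|A|^{p}\,|A|^{p\iu y}$ to obtain $|f(1+\iu y)|\leq \tau(|A|^{p})$. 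Applying Theorem \ref{TLT} with $x_1=1$, $x_2=0$ and $t=\tfrac1p$, so that $tx_1+(1-t)x_2=\tfrac1p$, yields
\begin{equation}
\tau(|AB|)=f\!\left(\tfrac1p\right)\leq\left(\sup_{y}|f(1+\iu y)|\right)^{\frac1p}\left(\sup_{y}|f(\iu y)|\right)^{\frac1q}\leq \tau(|A|^{p})^{\frac1p}\tau(|B|^{q})^{\frac1q},
\end{equation}
which is the assertion.

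The main obstacle is verifying the hypotheses of Theorem \ref{TLT}, namely that $f$ be bounded and continuous on $\overline{S}$ and analytic in $S$: the operator $\log|A|$ need not be bounded, so $z\mapsto|A|^{pz}$ is not norm-analytic, and in the interior the factor $\tau(|B|^{q(1-\Re z)})$ (with exponent in $(0,q)$) may even fail to be finite. I would resolve this by spectral truncation. Let $e_n$ and $\tilde{e}_n$ be the spectral projections of $|A|$ and $|B|$ for the interval $[\tfrac1n,n]$, and set $A_n=Ae_n$, $B_n=B\tilde{e}_n$, so that $|A_n|=|A|e_n$ and $|B_n|=|B|\tilde{e}_n$. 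On their supports these operators are bounded with bounded inverse, whence $\log|A_n|,\log|B_n|$ are bounded and $|A_n|^{s},|B_n|^{s}$ have finite trace for every $s\geq0$ (indeed $\tau(e_n)\leq n^{p}\tau(|A|^{p})<\infty$). For $A_n,B_n$ the associated interpolation function is manifestly bounded, continuous on $\overline{S}$ and analytic in $S$, so the boundary estimates and the three-line argument above apply rigorously, giving
\begin{equation}
\tau(|A_nB_n|)\leq \tau(|A_n|^{p})^{\frac1p}\tau(|B_n|^{q})^{\frac1q}\leq \tau(|A|^{p})^{\frac1p}\tau(|B|^{q})^{\frac1q}.
\end{equation}
Finally I would let $n\to\infty$: one checks $|A|e_n\to|A|$ and $|B|\tilde{e}_n\to|B|$ in operator norm, hence $A_nB_n\to AB$ and $|A_nB_n|\to|AB|$ in norm, and the normality (equivalently, lower semicontinuity) of $\tau$ yields $\tau(|AB|)\leq\liminf_n\tau(|A_nB_n|)$, completing the proof. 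The only genuinely delicate points are the weak analyticity of the truncated interpolation function and this concluding semicontinuity step.
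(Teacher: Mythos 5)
Your proposal is correct and follows essentially the same route as the paper's proof: the same interpolation function $z\mapsto\tau\left(w^\ast u\,|A|^{pz}\,v\,|B|^{q(1-z)}\right)$ built from the three polar decompositions, the same boundary estimates via cyclicity and Lemma \ref{normtraceinequality}, the same application of the three-line theorem at $z=\tfrac1p$, and the same spectral-truncation device to secure analyticity and finiteness (the paper additionally normalizes its truncations $|A|_n^p,|B|_n^q$ by the traces and passes to the limit inside $f_n(\tfrac1p)$ directly, whereas you truncate on $[\tfrac1n,n]$ and conclude by norm convergence of $|A_nB_n|$ plus lower semicontinuity of the normal trace — a cosmetic, if anything more carefully flagged, variant of the same limiting step).
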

\begin{proof}
	First, note that if $\tau\left(|A|^p\right)=0$,  $\tau\left(|A|^p\right)=\infty$, $\tau\left(|B|^q\right)=0$, or $\tau\left(|B|^q\right)=\infty$ the inequality is trivial.
	
	On the other hand, if $0<\tau\left(|A|^p\right),\tau\left(|B|^q\right)<\infty$, we are able to define, for every $n\in\mathbb{N}$, $$|A|_n^p=\frac{1}{\tau\left(|A|^p\right)}\int^{\|A\|}_\frac{1}{n}\lambda^p dE^{|A|}_\lambda \quad \textrm{ and } \quad |B|_n^q=\frac{1}{\tau\left(|B|^q\right)}\int^{\|B\|}_\frac{1}{n} \lambda^q dE^{|B|}_\lambda,$$
	where $\left\{E^{|A|}_\lambda\right\}_{\lambda\in \mathbb{R}^+}$ and $\left\{E^{|B|}_\lambda\right\}_{\lambda\in \mathbb{R}^+}$ are the spectral resolutions of $|A|$ and $|B|$, respectively. These definitions guarantee that $|A|_n^p \to \frac{|A|^p}{\tau\left(|A|^p\right)^{\frac{1}{p}}}$ and $|B|_n^q \to \frac{|B|^q}{\tau\left(|B|^q\right)^{\frac{1}{q}}}$ monotonically in the SOT. In addition, the positivity of the trace gives us that $\tau(|A|^p_n), \tau(|B|^q_n)\leq 1$.

	Let $A=u|A|$, $B=v|B|$ and $AB=w|AB|$ be the respective polar decompositions of these operators.
	Since, for every $\varepsilon>0$, $\sigma\left(\varepsilon\mathbbm{1}+|A|\right),\sigma\left(\varepsilon\mathbbm{1}+|B|\right)\subset [\varepsilon,\infty)$, we can define a function $f_n:\mathbb{C} \to \mathbb{C}$ by
	$$\begin{aligned}
	f_n(z)&=\tau\left(w^\ast u\left(|A|_n^p\right)^z v \left(|B|_n^q\right)^{(1-z)}\right)\\
	&=\tau\left(\left(|B|_n^q\right)^{-z} w^\ast u\left(|A|_n^p\right)^z v |B|_n^q\right).
	\end{aligned}$$
	
	To see that this function is entire analytic, notice that there exists $A_n \in \nalgebra$ such that $$\left(|B|_n^q\right)^{-z}|B|_n w^\ast u\left(|A|_n^p\right)^z v = \sum_{j=1}^{\infty} C_j z^j,$$
	which can be obtained simply as a Taylor series. It follows from Lemma \ref{normtraceinequality} that, for any $\varepsilon>0$, there exists $N\in \mathbb{N}$ large enough such that 
	$$\begin{aligned}
	\left|f_n(z)-\sum_{j=1}^{N} \tau(C_j |B|_n^q ) z^j\right|&=\left|\tau\left(\left(\left(|B|_n^q\right)^{-z} |B|_n w^\ast u\left(|A|_n^p\right)^z v-\sum_{j=1}^{N} C_j z^j\right) |B|_n^q\right)\right|\\
	&\leq\left\|\left(|B|_n^q\right)^{-z} |B|_n w^\ast u\left(|A|_n^p\right)^z v-\sum_{j=1}^{N} C_j z^j\right\|\tau\left( |B|_n^q\right)\\
	&<\varepsilon.\\
	\end{aligned}.$$
	
	These function $f_n$ are also bounded in the strip $\{z\in \mathbb{C} \ | \ 0\leq \Re{z} \leq 1\}$, because 
	$$\begin{aligned}
	|f_n(z)|&\leq \left\|\left(|B|_n^q\right)^{-z} w^\ast u\left(|A|_n^p\right)^z v\right\| \tau\left(|B|_n^q\right)\\
	&\leq \left\||B|_n^q\right\|^{-\Re{z}}\left\||A|_n^p\right\|^{\Re{z}} \tau\left(|B|_n^q\right)\\
	&\leq  
	\max\left\{1,\left\||B|_n^q\right\|^{-1}\right\}\max\left\{1,\left\||A|_n^p\right\|\right\} \tau\left(|B|_n^q\right).\\
	\end{aligned}$$
	
	By the Three-Line Theorem\footnote{There is a confusion concerning the name of this result. It is known as Doetsch's Three-Line Theorem, Hadamard's Three-Line Theorem or the Phragm\'en-Lindel\"of Principle. The confusion occurs because it is a variant of the Three-Circle Theorem due to Hadamard and a consequence of the Phragm\'en-Lindel\"of Maximum Principle, but it was published by Doetsch before Hadamard's result. },
	$$\begin{aligned}
	\left|f_n\left(\frac{1}{p}\right)\right|& \leq \sup_{y\in \mathbb{R}}|f_n(1+ \iu y)|^\frac{1}{p} \sup_{y\in \mathbb{R}}|f_n(0+\iu y)|^\frac{1}{q}\\
	&=\sup_{y\in \mathbb{R}}|\tau\left(w^\ast u\left(|A|_n^p\right)^{1+\iu y} v \left(|B|_n^q\right)^{-\iu y}\right)|^\frac{1}{p} \sup_{y\in \mathbb{R}}|\tau\left(w^\ast u\left(|A|_n^p\right)^{\iu y} v \left(|B|_n^q\right)^{(1-\iu y)}\right)|^\frac{1}{q}\\
	&\leq\sup_{y\in \mathbb{R}}\left|\tau\left(|A|_n^p\right)\right|^\frac{1}{p} \sup_{y\in \mathbb{R}}\left|\tau\left(|B|_n^q\right)\right|^\frac{1}{q}\\
	&=\tau\left(|A|_n^p\right)^\frac{1}{p} \tau\left(|B|_n^q\right)^\frac{1}{q}\\
	\end{aligned} $$
	
	Then
	$$\begin{aligned}
	\tau(|AB|)& =\lim_{n\to \infty}f_n\left(\frac{1}{p}\right)\\
	&\leq \lim_{n\to \infty}\tau\left(|A|_n^p\right)^\frac{1}{p}\tau\left(|B|_n^q\right)^\frac{1}{q}\\
	&=\tau\left(|A|^p\right)^\frac{1}{p} \tau\left(|B|^q\right)^\frac{1}{q}.\\
	\end{aligned}$$
\end{proof}

We can generalise the H\"older inequality using Theorem \ref{TLT} as it follows. Although it is basically a repetition of the argument in \cite{Ruscai72} with minor modifications to use the multi-variable Three-Line Theorem, this proof was unknown by the author.
\begin{theorem}[H\"older Inequality\index{inequality! H\"older}]
	\label{gholder}
	Let $\nalgebra$ be a von Neumann algebra and $\tau$ a normal faithful semifinite trace in $\nalgebra$. Let also $A_i\in\nalgebra$, $i=1,\dots, k$ and $\displaystyle \sum_{i=1}^{k} p_i>1$ such that $\displaystyle \sum_{i=1}^{k}\frac{1}{p_i}=1$, then
	$$\tau\left(\left|\prod_{i=1}^{k}A_i\right|\right)\leq \prod_{i=1}^{k}\tau(|A_i|^{p_i})^\frac{1}{p_i}.$$
\end{theorem}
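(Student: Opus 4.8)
The plan is to replicate verbatim the structure of the two-operator proof (Theorem \ref{holder}), interpolating now in $k$ complex variables and replacing the single-variable Three-Line Theorem by its multi-variable version, Theorem \ref{TLT}, over the standard simplex. First I would dispose of the trivial cases in which some $\tau(|A_i|^{p_i})$ is $0$ or $\infty$, so that henceforth $0<\tau(|A_i|^{p_i})<\infty$ for every $i$. Writing the polar decompositions $A_i=u_i|A_i|$ and $\prod_{i=1}^{k}A_i=w\left|\prod_{i=1}^{k}A_i\right|$, the quantity to be estimated becomes $\tau\!\left(w^\ast\prod_{i=1}^{k}u_i|A_i|\right)$. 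For each $n\in\mathbb{N}$ I introduce the truncated, normalized positive operators
$$B_i^{(n)}=\frac{1}{\tau(|A_i|^{p_i})}\int_{\frac{1}{n}}^{\|A_i\|}\lambda^{p_i}\,dE^{|A_i|}_\lambda,$$
which converge monotonically in the SOT to $|A_i|^{p_i}/\tau(|A_i|^{p_i})$ and satisfy $\tau(B_i^{(n)})\leq 1$. Because each $B_i^{(n)}$ has spectrum bounded away from $0$, its complex powers $(B_i^{(n)})^{z}$ are well-defined bounded operators by Functional Calculus (Theorem \ref{funccalculus}), entire in $z$, with $(B_i^{(n)})^{\iu y}$ of norm $\leq 1$. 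I then define the $k$-variable function
$$f_n(z_1,\dots,z_k)=\tau\!\left(w^\ast\prod_{i=1}^{k}u_i\,(B_i^{(n)})^{z_i}\right).$$

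Next I would verify that $f_n$ satisfies the hypotheses of Theorem \ref{TLT} on the closed strip over the standard simplex $C=\chull{\{e_1,\dots,e_k\}}\subset\mathbb{R}^k$. Joint analyticity follows exactly as in the proof of Theorem \ref{holder}: the map $z\mapsto(B_i^{(n)})^{z_i}$ is norm-analytic, and linearity together with the continuity estimate of Lemma \ref{normtraceinequality} lets one control a Taylor expansion of the integrand, so $f_n$ is entire. For boundedness on the strip I isolate one trace-class factor, writing
$$f_n(z)=\tau\!\left(\Big(w^\ast\textstyle\prod_{i=1}^{k-1}u_i(B_i^{(n)})^{z_i}\,u_k(B_k^{(n)})^{z_k-1}\Big)B_k^{(n)}\right),$$
and applying Lemma \ref{normtraceinequality}: since each $\|(B_i^{(n)})^{z_i}\|$ depends only on $\Re z_i$ and is bounded for $\Re z_i$ in a compact interval, the leading operator has bounded norm, while $\tau(B_k^{(n)})\leq 1$.

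With these hypotheses in hand, Theorem \ref{TLT} yields that $g_n(x)\doteq\log\sup_{y\in\mathbb{R}^k}|f_n(x+\iu y)|$ is convex on $C$ (the two-point log-convexity along every segment is precisely convexity of $g_n$). Since $\sum_{i=1}^{k}\frac{1}{p_i}=1$, the diagonal point is the convex combination $\big(\tfrac{1}{p_1},\dots,\tfrac{1}{p_k}\big)=\sum_{i=1}^{k}\tfrac{1}{p_i}e_i$ of the vertices, so finite Jensen gives
$$\sup_{y}\big|f_n(\tfrac{1}{p}+\iu y)\big|\leq\prod_{i=1}^{k}\Big(\sup_{y}\big|f_n(e_i+\iu y)\big|\Big)^{\frac{1}{p_i}}.$$
At the vertex $e_j+\iu y$ one has $\Re z_i=0$ for $i\neq j$, so each $(B_i^{(n)})^{\iu y_i}$ is of norm $\leq 1$, while $(B_j^{(n)})^{1+\iu y_j}=B_j^{(n)}(B_j^{(n)})^{\iu y_j}$; pulling out $B_j^{(n)}$ and using Lemma \ref{normtraceinequality} together with the fact that $w^\ast$, the $u_i$ and the unitaries all have norm $\leq 1$ gives $|f_n(e_j+\iu y)|\leq\tau(B_j^{(n)})\leq 1$. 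Hence $|f_n(\tfrac{1}{p})|\leq 1$.

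Finally, letting $n\to\infty$ and using normality of $\tau$ with the monotone SOT convergence $B_i^{(n)}\to|A_i|^{p_i}/\tau(|A_i|^{p_i})$, one gets $f_n(\tfrac{1}{p})\to\tau\!\left(w^\ast\prod_i u_i|A_i|\right)\big/\prod_i\tau(|A_i|^{p_i})^{1/p_i}=\tau\!\left(\left|\prod_i A_i\right|\right)\big/\prod_i\tau(|A_i|^{p_i})^{1/p_i}$, which is $\leq 1$, yielding the claimed inequality. I expect the main obstacle to be bookkeeping rather than conceptual: choosing the ordering of the factors in $f_n$ so that the diagonal recovers the product and each vertex genuinely isolates a single heavy factor, and making rigorous that the two-point statement of Theorem \ref{TLT} upgrades to the $k$-vertex bound via convexity; the analyticity-plus-boundedness verification on the full polystrip, though routine, is where the truncation at $\tfrac{1}{n}$ is essential to keep the complex powers harmless.
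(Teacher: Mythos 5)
Your proof is correct and follows essentially the same route as the paper's: truncate and normalize via $|A_i|_n^{p_i}$, interpolate the entire function $f_n$ with the multi-variable Three-Line Theorem \ref{TLT} to get log-convexity of $\sup_y|f_n(x+\iu y)|$ over the simplex, bound the vertices by $\tau(B_j^{(n)})\leq 1$ using Lemma \ref{normtraceinequality} (with trace cyclicity), and pass to the limit by normality. The only difference is cosmetic: you use $k$ independent complex variables evaluated at $(1/p_1,\dots,1/p_k)$, while the paper builds the constraint in by using $k-1$ variables with last exponent $1-\sum_{i=1}^{k-1}z_i$ — an affine reparametrization of the same argument.
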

\begin{proof}
	Let $A_i=u_i|A_i|$ and $\displaystyle \prod_{i=1}^ {k}A_i=w\left|\prod_{i=1}^{k}A_i\right|$ be the respective polar decompositions of these operators. Following the same steps of the previous proof we can define the analytic function
	
	$$\begin{aligned}
	f_n(z_1,\dots,z_{k-1})&=\tau\left(w^\ast u_1\left(|A_1|_n^{p_1}\right)^{z_1}\dots u_k \left(|A_k|_n^{p_k}\right)^{z_{k-1}}\left(|A_k|_n^{p_k}\right)^{1-\sum_{i=1}^{k-1}z_i}\right)\\
	&=\tau\left(\left(|A_k|_n^{p_k}\right)^{-\sum_{i=1}^{k-1}z_i}w^\ast u_1\left(|A_1|_n^{p_1}\right)^{z_1}\dots u_k \left(|A_k|_n^{p_k}\right)^{z_{k-1}}|A_k|_n^{p_k}\right),\\
	\end{aligned}$$
	where $\displaystyle |A_i|_n^p=\frac{1}{\tau\left(|A_i|^p\right)}\int^{\|A_i\|}_\frac{1}{n}\lambda^p dE^{|A_i|}_\lambda, \quad i=1, \dots, k$.
	
	This function is bounded in the region
	$$B=\left\{(z_1,\dots,z_{k-1}) \in \mathbb{C}^k \ \middle| \  \Re{z_i}\geq 0, \ i=1,\dots k-1, \ \sum_{i=1}^{k-1} \Re{z_i}\leq 1\right\}.$$
	
	Using now Theorem \ref{TLT} we have that
	$$g(y_1,\dots, y_{k-1})=\log\left(\sup_{x_1,\dots,x_{k-1} \in \mathbb{R}}\left|f_n(x_1+\iu y_1,\dots, x_{k-1}+\iu y_{k-1})\right|\right)$$
	is a jointly convex function.
	
	This leads to an equivalent result of the one we had in Theorem \ref{holder}, namely
	$$\begin{aligned}
	\left|f_n\left(\frac{1}{p_1},\dots,\frac{1}{p_{k-1}}\right)\right|& \leq \left(\sup_{y_1,\dots,y_{k-1} \in \mathbb{R}}\left|f_n(1+\iu y_1,\iu y_2,\dots,\iu y_{k-1})\right|\right)^\frac{1}{p_1}\dots\times\\
	&\qquad \times \left(\sup_{y_1,\dots,y_{k-1} \in \mathbb{R}}\left|f_n(\iu y_1,\iu y_2,\dots,1+\iu y_{k-1})\right|\right)^\frac{1}{p_{k-1}}\times\\
	&\qquad \times \left(\sup_{y_1,\dots,y_{k-1} \in \mathbb{R}}\left|f_n(\iu y_1,\iu y_2,\dots,\iu y_{k-1})\right|\right)^\frac{1}{p_{k}}\\
	&\leq \prod_{i=1}^{k}\tau(|A_i|_n^{p_i})^\frac{1}{p_i}.
	\end{aligned} $$
	
	Then
	$$\tau(|AB|) =\lim_{n\to \infty}f_n\left(\frac{1}{p_1},\dots,\frac{1}{p_{k-1}}\right)\leq \lim_{n\to \infty} \prod_{i=1}^{k}\tau(|A_i|_n^{p_i})^\frac{1}{p_i} =\prod_{i=1}^{k}\tau(|A_i|^{p_i})^\frac{1}{p_i}.$$
	
\end{proof}

The reader should keep in mind that H\"older's inequality is a very interesting result to us, since it says something regarding the trace of a product and this is the case in Dyson's series. Nevertheless, it is used in the proof of the Minkowski Inequality which is imperative to define a normed vector space.

\begin{theorem}[Minkowski's Inequality\index{inequality! Minkowski}]
	\label{minkowski}
	Let $\nalgebra$ be a von Neumann algebra, $\tau$ a normal faithful semifinite trace in $\nalgebra$, and $p,\, q>1$ such that $\frac{1}{p}+\frac{1}{q}=1$. Then
	\begin{enumerate}[(i)]
		\item for every $A\in \nalgebra$, $\displaystyle \tau(|A|^p)^\frac{1}{p}=\sup\left\{\left|\tau(AB)\right| \ \middle | \ B\in \nalgebra,  \tau\left(|B|^q\right)\leq 1\right\};$
		\item for every $A,B \in \nalgebra$, $\displaystyle \|A+B\|_p\leq \|A\|_p+\|B\|_p$.
	\end{enumerate}
\end{theorem}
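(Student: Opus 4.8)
The plan is to prove the duality formula (i) first and then read off the triangle inequality (ii) as an immediate consequence, since (i) exhibits $\|\cdot\|_p$ as a supremum of moduli of linear functionals, and a supremum of subadditive quantities is subadditive. For (i) I would establish the two inequalities separately: the bound $\sup\{\,|\tau(AB)| : \tau(|B|^q)\le 1\,\}\le\|A\|_p$ follows directly from H\"older's inequality, while the reverse bound $\|A\|_p\le\sup\{\,|\tau(AB)|\,\}$ is obtained by exhibiting an explicit maximizer $B$ built from the polar decomposition of $A$.

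For the easy direction, fix $B\in\nalgebra$ with $\tau(|B|^q)\le 1$. The first inequality of Lemma \ref{normtraceinequality} gives $|\tau(AB)|\le\tau(|AB|)$, and Theorem \ref{holder} then yields
\begin{equation*}
|\tau(AB)|\le \tau(|AB|)\le \tau(|A|^p)^{\frac{1}{p}}\,\tau(|B|^q)^{\frac{1}{q}}\le \|A\|_p,
\end{equation*}
so taking the supremum over all admissible $B$ gives $\sup\{\,|\tau(AB)|\,\}\le\|A\|_p$.

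For the reverse inequality I would first dispose of the degenerate cases: if $\tau(|A|^p)=0$ then faithfulness of $\tau$ forces $A=0$ and both sides vanish, while the case $\tau(|A|^p)=+\infty$ is handled by truncating $A$ with the spectral projections $E_{(1/n,\infty)}$ of $|A|$, which have finite trace by semifiniteness, and letting $n\to\infty$. In the principal case $0<\tau(|A|^p)<\infty$, write the polar decomposition $A=u|A|$ and set
\begin{equation*}
B=\frac{|A|^{p-1}u^\ast}{\tau(|A|^p)^{\frac{1}{q}}}\in\nalgebra,
\end{equation*}
which lies in $\nalgebra$ because $A$ is bounded, so $|A|^{p-1}$ is bounded by functional calculus. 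Using the trace identity $\tau(uXu^\ast)=\tau(Xu^\ast u)$ together with the fact that $u^\ast u=s^\nalgebra_R(A)$ is the support projection of $|A|$ (hence fixes $|A|^p$), I would compute $\tau(AB)=\tau(|A|^p)^{\frac{1}{p}}=\|A\|_p$. The exponent bookkeeping $(p-1)q=p$, equivalent to $\frac{1}{p}+\frac{1}{q}=1$, then gives $|B|^q=u|A|^{p}u^\ast/\tau(|A|^p)$ and therefore $\tau(|B|^q)=1$, so $B$ is admissible and realizes the supremum, proving $\|A\|_p\le\sup\{\,|\tau(AB)|\,\}$.

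With (i) established, (ii) follows at once: for any $C\in\nalgebra$ with $\tau(|C|^q)\le 1$ one has $|\tau((A+B)C)|\le|\tau(AC)|+|\tau(BC)|$, and taking the supremum over such $C$ while applying (i) three times yields $\|A+B\|_p\le\|A\|_p+\|B\|_p$. The main obstacle is the reverse inequality in (i): one must verify carefully that the candidate $B$ satisfies $\tau(|B|^q)=1$, which relies on the support-projection manipulations for the possibly non-invertible $|A|$ and on the identity $(p-1)q=p$, and one must argue that the truncation in the case $\tau(|A|^p)=\infty$ simultaneously drives $\tau(AB)$ to infinity while keeping the constraint $\tau(|B|^q)\le 1$ intact.
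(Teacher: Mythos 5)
Your treatment of the main case is exactly the paper's proof: the upper bound via $|\tau(AB)|\le\tau(|AB|)\le\tau(|A|^p)^{1/p}\tau(|B|^q)^{1/q}$ (Lemma \ref{normtraceinequality} plus Theorem \ref{holder}), and, for $0<\tau(|A|^p)<\infty$, the same maximizer $B=|A|^{p-1}u^\ast/\tau(|A|^p)^{1/q}$ — note your normalization agrees with the paper's $\tau(|A|^p)^{(p-1)/p}$ since $\frac{p-1}{p}=\frac{1}{q}$ — with the same support-projection and $(p-1)q=p$ bookkeeping verifying $\tau(|B|^q)=1$ and $\tau(AB)=\tau(|A|^p)^{1/p}$. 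Deducing (ii) from (i) by subadditivity of the supremum is also the intended route (the paper's proof text only addresses (i) and leaves this standard step tacit), and your explicit handling of $\tau(|A|^p)=0$ is a small improvement, since the paper's formula for $B$ would divide by zero there.

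There is, however, a genuine gap in your case $\tau(|A|^p)=\infty$: the claim that the spectral projections $E_{(1/n,\infty)}$ of $|A|$ have finite trace ``by semifiniteness'' is false. Semifiniteness only says every nonzero projection dominates \emph{some} nonzero finite-trace projection; it does not make $\tau\left(E_{(1/n,\infty)}\right)$ finite. Concretely, take $A=\mathbbm{1}$ in a type $\mathrm{II}_\infty$ factor: then $\tau(|A|^p)=\tau(\mathbbm{1})=\infty$, while $E_{(1/n,\infty)}=\mathbbm{1}$ has infinite trace for all $n\geq 2$, and your truncation $AE_{(1/n,\infty)}=A$ changes nothing, so the reduction to the finite case never gets off the ground. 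Worse, in this example no nontrivial finite-trace truncation commuting with $|A|$ exists at all, since the von Neumann algebra generated by $|A|$ is $\mathbb{C}\mathbbm{1}$ and $\tau$ restricted to it is not semifinite. This is why the paper instead invokes semifiniteness and normality of $\tau$ to produce an increasing sequence $(A_n)_{n\in\mathbb{N}}\subset\mathfrak{M}_\tau$ approximating $A$; one workable implementation is to take finite-trace projections $p_n\nearrow\mathbbm{1}$ (these exist by semifiniteness) and set $A_n=Ap_n$, for which $|A_n|^p\leq\|A\|^p\, p_n$ gives $\tau(|A_n|^p)<\infty$, the maximizer $B_n$ for $A_n$ satisfies $\tau(AB_n)=\tau(A_nB_n)=\tau(|A_n|^p)^{1/p}$ because the right support of $A_n$ lies under $p_n$, and normality (lower semicontinuity of the trace) forces $\tau(|A_n|^p)\to\tau(|A|^p)=\infty$, so the supremum is infinite as required.
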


\begin{proof}
	By H\"older's inequality, for all $B\in \nalgebra$ such that $\tau\left(|B|^q\right)\leq 1$,
	$$|\tau(AB)|\leq \tau(|AB|)\leq \tau(|A|^p)^\frac{1}{p}\tau(|B|^p)^\frac{1}{q}\leq \tau(|A|^p)^\frac{1}{p}. $$
	
	On the other hand, if $\tau(|A|^p)<\infty$, let $A=u|A|$ be the polar decomposition of $A$ and define $B=\frac{|A|^{p-1}u^\ast}{\tau\left(|A|^p\right)^\frac{p-1}{p}}$. Then 
	$$|B|^2=B^\ast B=\frac{u|A|^{2p-2}u^\ast}{\tau\left(|A|^p\right)^\frac{2p-2}{p}} \Rightarrow \tau\left(|B|^q\right)=\tau\left(\frac{u|A|^{(p-1)q}u^\ast}{\tau\left(|A|^p\right)^\frac{(p-1)q}{p}}\right)=1.$$
	
	Moreover, $\tau(AB)=\frac{\tau(u|A|^p u^\ast)}{\tau\left(|A|^p\right)^\frac{p-1}{p}}=\tau\left(|A|^p\right)^\frac{1}{p}$.
	
	In case $\tau(|A|^p)=\infty$, we use semifiniteness and normality to take an increasing sequence of operators $(A_n)_{n\in\mathbb{N}} \subset \mathfrak{M}_\tau$ converging to $A$ and apply the recently proved result to this sequence. 
	
\end{proof}

Together, Theorem \ref{gholder} and Theorem \ref{minkowski} provide us with another generalization of H\"older's inequality. This inequality is obvious in the commutative case, but not in the noncommutative one.

\begin{corollary}[H\"older Inequality\index{inequality! H\"older}]
	\label{g2holder}
	Let $\nalgebra$ be a von Neumann algebra and $\tau$ a normal faithful semifinite trace in $\nalgebra$, let also $A,B\in\nalgebra$ and $p,\, q>1$ such that $\frac{1}{p}+\frac{1}{q}=\frac{1}{r}$, then
	$$\tau(|AB|^r)^\frac{1}{r}\leq \tau(|A|^p)^\frac{1}{p}\tau(|B|^q)^\frac{1}{q}.$$
\end{corollary}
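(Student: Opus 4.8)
The plan is to deduce the inequality from the duality characterization of the $p$-norm in Theorem \ref{minkowski}$(i)$ together with the three-factor H\"older inequality of Theorem \ref{gholder}. First I would dispose of the trivial cases in which one of $\tau(|A|^p)$ or $\tau(|B|^q)$ equals $0$ or $+\infty$, and observe that for $r=1$ the assertion is exactly Theorem \ref{holder}. Hence I may assume $r>1$ and introduce the H\"older conjugate $r'$ of $r$, defined by $\frac{1}{r}+\frac{1}{r'}=1$, which satisfies $1<r'<\infty$.

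Next I would apply Theorem \ref{minkowski}$(i)$ to the operator $AB$ with the conjugate pair $(r,r')$, which yields
$$\tau(|AB|^r)^{\frac{1}{r}}=\sup\left\{\left|\tau(ABC)\right| \ \middle| \ C\in\nalgebra, \ \tau(|C|^{r'})\leq 1\right\}.$$
The key observation is that the triple $p$, $q$, $r'$ is admissible for the three-factor H\"older inequality, since
$$\frac{1}{p}+\frac{1}{q}+\frac{1}{r'}=\frac{1}{r}+\frac{1}{r'}=1.$$
Thus, for every $C\in\nalgebra$ with $\tau(|C|^{r'})\leq 1$, Theorem \ref{gholder} applied to the product $A\cdot B\cdot C$, combined with the elementary bound $|\tau(X)|\leq\tau(|X|)$ recorded in Lemma \ref{normtraceinequality}, gives
$$\left|\tau(ABC)\right|\leq\tau(|ABC|)\leq\tau(|A|^p)^{\frac{1}{p}}\tau(|B|^q)^{\frac{1}{q}}\tau(|C|^{r'})^{\frac{1}{r'}}\leq\tau(|A|^p)^{\frac{1}{p}}\tau(|B|^q)^{\frac{1}{q}}.$$
Taking the supremum over all such $C$ and invoking the displayed duality identity then produces exactly $\tau(|AB|^r)^{\frac{1}{r}}\leq\tau(|A|^p)^{\frac{1}{p}}\tau(|B|^q)^{\frac{1}{q}}$.

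The steps above are essentially mechanical once the correct triple of exponents is identified; the only point that requires care—which I regard as the main (minor) obstacle—is the bookkeeping of the two ingredients used with shifted roles: Theorem \ref{minkowski}$(i)$ must be invoked with $AB$ in place of $A$ and the pair $(r,r')$ in place of $(p,q)$, whereas Theorem \ref{gholder} must be invoked with the exponents $(p,q,r')$ whose reciprocals sum to $1$. I would also flag that this argument genuinely requires $r\geq 1$, so that $r'$ exists; the regime $r<1$, not needed for the intended applications to Dyson's series, would demand a separate treatment.
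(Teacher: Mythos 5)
Your proposal is correct and is essentially identical to the paper's own proof: the paper also introduces the H\"older conjugate of $r$ (calling it $s$ where you write $r'$), applies Theorem \ref{minkowski}$(i)$ to $AB$ with the pair $(r,s)$, and bounds $|\tau(ABC)|$ via Theorem \ref{gholder} with the exponents $(p,q,s)$ before taking the supremum. Your additional remarks—the reduction of $r=1$ to Theorem \ref{holder} and the observation that the argument tacitly requires $r>1$ for the conjugate exponent to exist—are careful points the paper passes over in silence, but they do not change the route.
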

\begin{proof}
	Let $s>1$ such that $\frac{1}{r}+\frac{1}{s}=1$, then $\frac{1}{p}+{1}{q}+\left(1-{1}{r}\right)=\frac{1}{p}+{1}{q}+{1}{s}=1$. Hence using $(i)$ in Theorem \ref{minkowski} and Theorem \ref{gholder}, we get 
	$$\begin{aligned}
	\tau(|AB|^r)^{\frac{1}{r}}&=\sup\left\{\left|\tau(ABC)\right| \ \middle | \ C\in \nalgebra,  \tau\left(|C|^s\right)\leq 1\right\}\\
	&\leq \sup\left\{\tau(|A|^p)^\frac{1}{p}\tau(|B|^q)^\frac{1}{q}\tau(|C|^s)^\frac{1}{s} \ \middle | \ C\in \nalgebra,  \tau\left(|C|^s\right)\leq 1\right\}\\
	&\leq\tau(|A|^p)^\frac{1}{p}\tau(|B|^q)^\frac{1}{q}.
	\end{aligned}$$
	
\end{proof}

\begin{definition}\index{noncommutative $L_p$-space! Segal-Dixmier}
	Let $\nalgebra$ be a von Neumann algebra and $\tau$ a normal, faithful and semifinite trace on $\nalgebra$. We define the noncommutative $L_p$-space, denoted by $L_p(\nalgebra,\tau)$, as the completion of
	$$\left\{A\in \nalgebra \ \middle| \ \tau\left(|A|^p\right)<\infty\right\}$$
	with respect to the norm $\displaystyle\|A\|_p=\tau\left(|A|^p\right)^\frac{1}{p}$.
	
	We also set $L_\infty(\nalgebra, \tau)=\nalgebra$ with $\|A\|_\infty=\|A\|$.
	
\end{definition}

Now, it is easy to see that, for $p,\, q\geq1$ H\"older conjugated, the H\"older and Minkowski inequalities can be extended to the whole space $L_p(\nalgebra,\tau)$ through an argument of density  and normality of the trace. With this definition, Theorem \ref{normtraceinequality} and Theorem \ref{holder}, and Theorem \ref{minkowski} can be expressed as
$$\begin{aligned}
\|AB\|_1&\leq \|A\|_p\|B\|_q,\\
\|A+B\|_p &\leq \|A\|_p+\|B\|_p.\\
\end{aligned}$$ 
and this last equality is a triangular inequality for $\|\cdot\|_p$. It is important to notice that faithfulness guarantees $\|A\|_p=0 \Rightarrow A=0$, however semifiniteness was used only at the very end of Theorem \ref{minkowski} and it is completely irrelevant when talking about noncommutative $L_p$-spaces, since the trace is never infinity on these operators.

It is not our intention in this text to discuss this subject, but notice that if $\tau$ is not semifinite, we can define the noncommutative $L_p$ space to a "small" algebra $\overline{\mathfrak{M}_\tau}^{SOT}$.

\begin{theorem}
	\label{dualLp}
	Let $p,\, q\geq 1$ such that $\frac{1}{p}+\frac{1}{q}=1$. Then the function below is an isometric isomorphism:
	
\begin{center}
	\begin{tabular}{@{\hskip 2pt}c@{\hskip 2pt}c@{\hskip 2pt}c@{\hskip 2pt}c@{\hskip 2pt}c@{\hskip 2pt}c}
	$\Xi$: $L_p(\nalgebra,\tau)$&$\to$& $L_q(\nalgebra,\tau)^\ast$& & \\
	$A$ &$\mapsto$&$\tau_A:$&$L_q(\nalgebra,\tau)$&$\to $&$\mathbb{C}$\\
	& & &$B$&$\mapsto $&$\tau(AB)$.\\
	\end{tabular}
\end{center}
\end{theorem}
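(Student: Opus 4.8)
The plan is to verify that $\Xi$ is a linear isometry (which immediately gives injectivity and a closed range) and then to establish surjectivity, which is the genuine content of the statement. The forward estimates are routine consequences of the H\"older and Minkowski inequalities already proved; the representation of an arbitrary functional is where the work lies.

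First I would check that $\Xi$ is well defined, linear, and norm-decreasing. Linearity in $A$ is immediate from the linearity of $\tau$. For $A\in L_p(\nalgebra,\tau)$ and $B\in L_q(\nalgebra,\tau)$ with $\frac1p+\frac1q=1$, Theorem~\ref{holder}, extended from $\nalgebra$ to the completions by density together with normality of $\tau$, gives
\[
|\tau_A(B)|=|\tau(AB)|\le\tau(|AB|)\le\|A\|_p\,\|B\|_q .
\]
Hence $\tau_A=\Xi(A)$ is a bounded functional on $L_q(\nalgebra,\tau)$ with $\|\Xi(A)\|\le\|A\|_p$, so $\Xi$ indeed maps into $L_q(\nalgebra,\tau)^\ast$ and $\|\Xi\|\le1$. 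The reverse inequality is exactly the first assertion of Theorem~\ref{minkowski}: by part $(i)$,
\[
\|A\|_p=\tau(|A|^p)^{\frac1p}=\sup\bigl\{\,|\tau(AB)|\ \big|\ B\in\nalgebra,\ \tau(|B|^q)\le1\,\bigr\}=\sup\bigl\{\,|\tau_A(B)|\ \big|\ \|B\|_q\le1\,\bigr\}=\|\Xi(A)\| .
\]
Thus $\Xi$ preserves the norm, so it is injective and $\Xi\bigl(L_p(\nalgebra,\tau)\bigr)$ is a closed subspace of $L_q(\nalgebra,\tau)^\ast$. Only surjectivity then remains.

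For surjectivity I would fix $\varphi\in L_q(\nalgebra,\tau)^\ast$ and construct $A\in L_p(\nalgebra,\tau)$ with $\Xi(A)=\varphi$. The strategy is first to reduce, via the (polar/Jordan) decomposition of bounded functionals into self-adjoint and then positive and negative parts, to the case $\varphi\ge0$ on the positive cone of $L_q(\nalgebra,\tau)$; and then to represent a positive $\varphi$ by a positive $\tau$-measurable operator. The natural mechanism is a Radon--Nikodym-type representation: on a corner $e\,L_q(\nalgebra,\tau)\,e$ attached to a $\tau$-finite projection $e$, the restriction of $\varphi$ becomes a normal positive functional on the finite algebra $e\nalgebra e$, and a Riesz/Radon--Nikodym argument in the spirit of Proposition~\ref{commutantRN} and Theorem~\ref{TSRN} produces a positive $A_e\in e\nalgebra e$ with $\varphi(eBe)=\tau(A_eB)$. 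Letting $e$ increase to $\mathbbm{1}$ through $\tau$-finite projections and invoking normality of $\tau$ and the completeness of $\nalgebra_\tau$ (Proposition~\ref{measurablealgebra}) should assemble these corner pieces into a single positive $\tau$-measurable operator $A$ with $\varphi(B)=\tau(AB)$ on a dense subspace. Once such a candidate $A$ exists, membership in $L_p(\nalgebra,\tau)$ is forced by the isometry already established: by Theorem~\ref{minkowski}$(i)$, $\|A\|_p=\sup\{\,|\tau(AB)|:\|B\|_q\le1\,\}=\|\varphi\|<\infty$, and then $\Xi(A)$ agrees with $\varphi$ on a dense subspace and hence everywhere by continuity.

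I expect the surjectivity step to be the main obstacle, for two reasons. The operator $A$ delivered by the Radon--Nikodym construction is in general unbounded and only $\tau$-measurable, so one must work inside the topological $\ast$-algebra $\nalgebra_\tau$, control $A$ by its increasing net of $\tau$-finite spectral truncations, and use the normality of $\tau$ to pass to the limit; this is precisely where semifiniteness of $\tau$ is indispensable, and it is the delicate point of the whole argument. Moreover, the construction genuinely requires $q<\infty$: for $q=\infty$ (that is $p=1$) a functional $\varphi$ need not be normal, and $L_1(\nalgebra,\tau)$ is a proper subspace of $\nalgebra^\ast=L_\infty(\nalgebra,\tau)^\ast$, so surjectivity fails there. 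Accordingly the isomorphism in Theorem~\ref{dualLp} is to be understood for $1\le q<\infty$ (equivalently $1<p\le\infty$), the case $q=1$, $p=\infty$ giving the familiar identification $L_\infty(\nalgebra,\tau)=L_1(\nalgebra,\tau)^\ast$.
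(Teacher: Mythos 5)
Your proposal is correct in substance, and its first half (well-definedness and the isometry via H\"older's inequality together with Theorem~\ref{minkowski}$(i)$) coincides with the paper's proof. At the surjectivity step, however, you take a genuinely different route. The paper extends the restriction of $\phi$ to $L_q(\nalgebra,\tau)\cap\nalgebra_+$ to a normal semifinite weight and then applies the Pedersen--Takesaki--Radon--Nikodym theorem (Theorem~\ref{TPTRN2}) in one stroke: since the modular automorphism group of a trace is trivial, the invariance hypothesis is vacuous, and one obtains directly a single positive $H\eta\nalgebra$ with $\phi(A)=\tau(HA)$ on $L_q(\nalgebra,\tau)\cap\nalgebra$, extended by $\|\cdot\|_q$-density; the norm identity $\tau(|H|^p)^{1/p}=\|\phi\|$ then follows from Theorem~\ref{minkowski}$(i)$, exactly as you argue. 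You instead build the representing operator locally on corners $e\nalgebra e$ attached to $\tau$-finite projections and patch through $e\uparrow\mathbbm{1}$ inside $\nalgebra_\tau$, which is the classical Segal--Nelson route: it avoids the unbounded Radon--Nikodym machinery at the price of the patching work, and it makes explicit two points the paper glosses over --- the reduction of a general $\phi$ to positive functionals via a Jordan-type decomposition (the paper silently treats only the restriction to the positive cone), and the failure of surjectivity for $p=1$, $q=\infty$, where non-normal elements of $\nalgebra^\ast$ are not of the form $\tau(H\,\cdot\,)$, a caveat that the literal statement with ``$p,q\geq1$'' elides but your reading correctly repairs.

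One caution on your corner step: Proposition~\ref{commutantRN} and Theorem~\ref{TSRN} both presuppose the domination $\psi\leq\phi$, and $\|\cdot\|_q$-continuity of $\varphi$ does \emph{not} yield $\varphi\leq C\tau$ even on a $\tau$-finite corner (already in $L_\infty[0,1]$ a functional given by an unbounded $L_p$ density is $\|\cdot\|_q$-continuous but not dominated by the trace). The standard repair is to compare $\varphi$ with $\varphi+\tau$ on the corner, obtaining $0\leq K\leq\mathbbm{1}$ with $\varphi(\,\cdot\,)=(\varphi+\tau)(K\,\cdot\,)$ and setting $A_e=K(\mathbbm{1}-K)^{-1}$ --- precisely the device used in the paper's proof of Corollary~\ref{invariancecommute} --- or simply to invoke Theorem~\ref{TPTRN2} cornerwise; with that amendment your limit argument and the concluding appeal to Theorem~\ref{minkowski}$(i)$ go through.
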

\begin{proof}
	First note that, $\Xi$ is well defined. In fact, it follows by H\"older's inequality that $\tau_A$ belongs to $L_q(\nalgebra,\tau)^\ast$.
	
	From Theorem \ref{minkowski} $(i)$, it becomes obvious that $\Xi$ is a linear isometric injection. It remains to prove surjection.
	
	Let $\phi\in L_q(\nalgebra,\tau)^\ast$. Then the restriction  $\left.\phi\right|_{L_q(\nalgebra,\tau)\cap \nalgebra_+}$ can be extended to a normal semifinite weight on $\nalgebra$. By Theorem \ref{TPTRN2}, there exists a operator $H\eta\nalgebra$ such that
	$\phi(A)=\tau(HA), \ \forall A\in L_q(\nalgebra,\tau)\cap \nalgebra$ and, since $\phi$ is $\|\cdot\|$-continuous and $\overline{L_q(\nalgebra,\tau)\cap \nalgebra}^{\|\cdot\|_q}=L_q(\nalgebra,\tau)$, we have $\phi(A)=\tau(HA)$ for all $A\in L_q(\nalgebra,\tau)$.
	
	Finally,
	$$\begin{aligned}
	\tau(|H|^p)^\frac{1}{p}&=\sup\left\{\left|\tau(HA)\right| \ \middle | \ A\in \nalgebra,  \tau\left(|A|^q\right)\leq 1\right\}\\
	&=\sup\left\{\left|\phi(A)\right| \ \middle | \ A\in \nalgebra,  \tau\left(|A|^q\right)\leq 1\right\}\\
	&=\|\phi\|.
	\end{aligned}$$

\end{proof}

This last result is the famous identification $L_p(\nalgebra,\tau)^\ast=L_q(\nalgebra,\tau)$ where $p,\, q>1$ are H\"older conjugated.

\section{The Haagerup Noncommutative $L_p$-spaces}

The last two sections of this text will be devoted to two different generalizations of $L_p$-spaces for arbitrary von Neumann algebras, one due to Haagerup and another due to Araki and Masuda.

Hitherto we have been as detailed as possible, but henceforth the purpose of the text will be slightly different: sketching how to construct a generalized $L_p$-space. Terp's notes, \cite{terp81}, cover basically all that will be presented here.

We start this chapter noticing that, given a von Neumann algebra $\nalgebra$, the set $Aut(\nalgebra)$ becomes a topological group with respect to the topology generated by the family
$$\left\{U(\alpha;\omega_1,\dots,\omega_n) \ \middle| \ \alpha \in Aut(\nalgebra), \ n \in \mathbb{N}, \ \omega_i\in \nalgebra_\ast, \ i=1,\dots,n \right\},$$
where
$$U(\alpha;\omega_1,\dots,\omega_n)=\left\{\beta\in Aut(\nalgebra) \ \middle| \ \begin{aligned}&\|\omega_i\circ\alpha - \omega_i\circ\beta\|&<1,\\ &\|\omega_i\circ\alpha^{-1}- \omega_i\circ\beta^{-1}\|&<1,
\end{aligned} \quad i=1,\dots,n\right\}.$$

\begin{definition}
	A von Neumann \index{covariant system} covariant system consists of a triplet $(\nalgebra,G,\alpha)$ where $\nalgebra$ is a von Neumann algebra, $G$ is a locally compact group and $\alpha$ a continuous homomorphism of $G$ into $Aut(\nalgebra)$ (\ie, an action of $G$ on $\nalgebra$).  
\end{definition}

Consider the covariant system $(\nalgebra, G, \alpha)$ and let $\pi:\nalgebra\to B(\hilbert)$ be a normal representation. Consider the representations $\pi_\alpha$ of $\nalgebra$ and $\lambda$ of $G$ both in the Hilbert space $L_2(G,\nalgebra)$ defined by
$$\begin{aligned}
\left(\pi_\alpha(A)(f)\right)(t)&=\alpha^{-1}_t(A) f(t), & A\in \nalgebra, \ & \forall f\in L_2(G,\nalgebra), \ & \forall t\in G,\\
\left(\lambda(g)f\right)(t)&=f(g^{-1}t), & g\in G, \ & \forall f\in L_2(G,\nalgebra), \ & \forall t\in G.\\
\end{aligned}$$ 

\begin{definition}
	Let $(\nalgebra,G, \alpha)$ be a covariant system and $\pi:\nalgebra\to B(\hilbert)$ a normal representation. We define the crossed product $\nalgebra \rtimes_\alpha G$ as the von Neumann algebra generated by $\pi_\alpha(\nalgebra)\cup \lambda(G)$.
\end{definition}

Form more details in the theory of crossed products, we refer to \cite{pedersen79} and \cite{Takesaki2003}.

Let's fix a normal faithful semifinite weight $\varphi_0$ and denote by $\{\sigma^{\varphi_0}_t\}_{t\in\mathbb{R}}$ the modular automorphism group related to $\varphi_0$.

Now, we will make a change in notation and denote $\nalgebra\rtimes \{\sigma^{\varphi_0}_t\}_{t\in\mathbb{R}}\doteq\nalgebra\rtimes_\alpha \mathbb{R}$, where $\alpha=\{\sigma^{\varphi_0}_t\}_{t\in\mathbb{R}}$. The reason we are doing this is to put in evidence the modular automorphism group, which is a one-parameter group and it become obvious that the local compact group $G$ in the definition of crossed product must be $\mathbb{R}$. 

The next question we are interested in answering is: how are the weights in the crossed product related to the weights in $\pi_\alpha(\nalgebra)$?

In order to answer this question we need the following definitions:

\begin{definition}
	Let $\nalgebra$ be a von Neumann algebra, the extended positive part $\widehat{\nalgebra}_+$ of this algebras is defined to be the set of all maps $m:\nalgebra_\ast^+\to\overline{\mathbb{R}}_+$ satisfying the following conditions:
	\begin{enumerate}[(i)]
		\item $m(\lambda\phi)=\lambda m(\phi)$, $\forall \phi \in \nalgebra_\ast^+$ and $\forall \lambda\geq0$;
		\item $m(\phi+\psi)=m(\phi)+m(\psi)$, $\forall \phi,\psi \in \nalgebra_\ast^+$;
		\item $m$ is lower semicontinuous.
	\end{enumerate}
\end{definition}

It is easy to see that $\nalgebra_+$ can be seen as a subset of $\hat{\nalgebra}_+$.

\begin{definition}[Operator Valued Weight]
	Let $\nalgebra_1$, $\nalgebra_2$ be von Neumann algebras, $\nalgebra_2\subset\nalgebra_1$, an operator $T:\nalgebra_1^+\to\widehat{\nalgebra}_2^+$ is said to be an operator valued weight if it satisfies the following conditions:
	\begin{enumerate}[(i)]
		\item $T(\lambda A)=\lambda m(A)$, $\forall A \in \nalgebra_1^+$ and $\forall \lambda\geq0$;
		\item $T(A+B)=T(A)+T(B)$, $\forall A,B \in \nalgebra_1^+$;
		\item $T(B^\ast A B)=B^\ast T(A) B$, $\forall A\in\nalgebra_1^+$ and $\forall B\in \nalgebra_2$.
	\end{enumerate} 
	
	In addition, we say that $T$ is normal if $T(A_i)\to T(A)$, for every increasing net $(A_i)_{i\in I}$ such that $A_i\to A$.
	
\end{definition}

Consider the dual action $\theta$ of $\mathbb{R}$ in $\nalgebra\rtimes \{\sigma^{\varphi_0}_t\}_{t\in\mathbb{R}}$, characterized by
$$\begin{aligned}
&\theta_s A=A, &\ A\in \pi_\alpha(\nalgebra),\\
&\theta_s \lambda(t)=e^{-\iu st}\lambda(t), & \ t\in \mathbb{R}.\\
\end{aligned}$$ 
We have the following characterization:
$$\pi_\alpha(\nalgebra)=\left\{A\in \nalgebra\rtimes\{\sigma^{\phi_0}_t\}  \ \middle| \ \theta_t A=A \ \forall t\in\mathbb{R}\right\}. $$

\begin{lemma}
	\label{Tcrossedproduct}
	Let $\nalgebra$ be a von Neumann algebra. The following properties hold for the operator $T:\left(\nalgebra\rtimes\{\sigma^{\phi_0}_t\}\right)_+\to \nalgebra_+$, given by
	$$TA=\int_{-\infty}^{\infty}\theta_t(A)dt, \quad A\in \left(\nalgebra\rtimes\{\sigma^{\phi_0}_t\}\right)_+ \ ,$$
	and characterized by
	$$(\phi,TA)=\int_{-\infty}^{\infty}\phi\circ\theta_t(A)dt, \quad \forall \phi\in\nalgebra_\ast^+:$$
	\begin{enumerate}[(i)]
		\item $T$ is a normal faithful semifinite operator valued weight;
		\item There exists a unique normal faithful semifinite trace $\tau$ on $\nalgebra \rtimes\{\sigma^{\phi_0}_t\}$ such that $(D\phi\circ T:D\tau)_t=\lambda(t)$ for all $t\in \mathbb{R}$;
		\item The trace $\tau$ satisfies $\tau\circ\theta_t=e^{-t}\tau$ for all $t\in\mathbb{R}$.
	\end{enumerate}
\end{lemma}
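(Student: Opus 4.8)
The plan is to dispatch the three assertions in order, first isolating the purely algebraic content of $T$ and then bringing in the modular machinery developed earlier, in particular the Pedersen--Takesaki theorem (Theorem \ref{TPTRN2}), the weight-with-density construction of Lemma \ref{unboundedderivativeweight} together with the notation $\phi_H$, and uniqueness of the modular group (Theorem \ref{UniqueModCond}). For part (i) I would begin by observing that $T$ actually lands in $\pi_\alpha(\nalgebra)$: by translation invariance of Lebesgue measure $\theta_s(TA)=\int_{-\infty}^{\infty}\theta_{s+t}(A)\,dt=TA$, so $TA$ is fixed by the dual action and hence lies in the fixed-point algebra $\pi_\alpha(\nalgebra)$. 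The two linearity axioms of an operator valued weight are immediate from linearity of the integral; for the bimodule property, if $B\in\pi_\alpha(\nalgebra)$ then $\theta_t(B)=B$, so $\theta_t(B^\ast AB)=B^\ast\theta_t(A)B$ and therefore $T(B^\ast AB)=B^\ast(TA)B$. Normality follows from monotone convergence applied to $t\mapsto\phi\circ\theta_t(A_i)$ together with normality of each $\phi\in\nalgebra_\ast^+$, and faithfulness from continuity and positivity of $t\mapsto\phi\circ\theta_t(A)$: if $TA=0$ then $\int\phi\circ\theta_t(A)\,dt=0$ forces $\phi\circ\theta_t(A)\equiv0$, and evaluation at $t=0$ gives $\phi(A)=0$ for every $\phi$, whence $A=0$.

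The remaining and more delicate point of (i) is semifiniteness, where I would exhibit a $\sigma$-weakly dense left ideal inside $\mathfrak{N}_T$: for $x\in\mathfrak{N}_{\phi_0}$ and a test function $f\in\mathcal{C}^\infty_0(\mathbb{R})$ the elements $\pi_\alpha(x)\lambda(f)$, with $\lambda(f)=\int f(t)\lambda(t)\,dt$, can be shown to satisfy $T\big((\pi_\alpha(x)\lambda(f))^\ast\pi_\alpha(x)\lambda(f)\big)<\infty$, and products of such elements are $\sigma$-weakly dense in the crossed product; this is the step where I expect to reproduce the computations of \cite{terp81}.

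For part (ii), the key structural input is the identification of the modular automorphism group of the dual weight $\hat\phi_0\doteq\phi_0\circ T$. Because $\lambda(t)\pi_\alpha(x)\lambda(t)^\ast=\pi_\alpha(\sigma^{\phi_0}_t(x))$ and $\lambda(t)\lambda(s)\lambda(t)^\ast=\lambda(s)$, the automorphisms $\mathrm{Ad}(\lambda(t))$ agree with $\sigma^{\hat\phi_0}_t$ on the generators $\pi_\alpha(\nalgebra)\cup\lambda(\mathbb{R})$; after checking that $\mathrm{Ad}(\lambda(t))$ satisfies the modular condition for $\hat\phi_0$, Theorem \ref{UniqueModCond} yields $\sigma^{\hat\phi_0}_t=\mathrm{Ad}(\lambda(t))$. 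By Stone's theorem $\lambda(t)=h^{\iu t}$ for a positive non-singular self-adjoint $h$ whose spectral projections, being strong limits of functions of the $\lambda(t)$, are affiliated with the crossed product; moreover $\sigma^{\hat\phi_0}_t(h)=h^{\iu t}hh^{-\iu t}=h$, so $h$ is affiliated with the centralizer of $\hat\phi_0$. I would then set $\tau\doteq(\hat\phi_0)_{h^{-1}}$ through Lemma \ref{unboundedderivativeweight}; the density theory gives $(D\tau:D\hat\phi_0)_t=h^{-\iu t}$, hence $(D\hat\phi_0:D\tau)_t=h^{\iu t}=\lambda(t)$, and consequently $\sigma^\tau_t=\mathrm{Ad}(h^{-\iu t})\circ\sigma^{\hat\phi_0}_t=\mathrm{Ad}(h^{-\iu t}\lambda(t))=\mathrm{id}$, so $\tau$ is a trace. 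Uniqueness is immediate, since any competitor $\tau'$ satisfies $(D\tau:D\tau')_t=(D\tau:D\hat\phi_0)_t(D\hat\phi_0:D\tau')_t=h^{-\iu t}\lambda(t)=\mathbbm{1}$, forcing $\tau=\tau'$.

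Finally, for part (iii) I would combine two facts. First, $T\circ\theta_s=T$ by translation invariance, so $\hat\phi_0\circ\theta_s=\hat\phi_0$ is $\theta$-invariant. Second, applying $\theta_s$ to $\lambda(t)=h^{\iu t}$ and using $\theta_s(\lambda(t))=e^{-\iu st}\lambda(t)=(e^{-s}h)^{\iu t}$ gives $\theta_s(h)=e^{-s}h$, hence $\theta_{-s}(h^{-1/2})=e^{-s/2}h^{-1/2}$. Writing $\tau(A)=\hat\phi_0(h^{-1/2}Ah^{-1/2})$ and inserting $\theta_{-s}$ inside the $\theta$-invariant $\hat\phi_0$ yields $\tau(\theta_s(A))=\hat\phi_0\big(\theta_{-s}(h^{-1/2})A\,\theta_{-s}(h^{-1/2})\big)=e^{-s}\tau(A)$, which is exactly $\tau\circ\theta_s=e^{-s}\tau$. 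The principal obstacle throughout is part (ii): proving $\sigma^{\hat\phi_0}_t=\mathrm{Ad}(\lambda(t))$ rests on the dual-weight computations of \cite{terp81}, and it is precisely this innerness of the modular group that renders the crossed product semifinite and produces the trace.
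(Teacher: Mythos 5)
The paper itself contains no proof of this lemma: it is imported verbatim from Haagerup's article \cite{haagerup79}, and the surrounding section is announced as a proof-free sketch following \cite{terp81}, so there is no in-paper argument to compare yours against. Measured against the standard proof in those sources, your plan is correct and follows essentially the same route: the fixed-point and bimodule verifications in (i), with normality by monotone convergence and faithfulness by continuity of $t\mapsto\phi\circ\theta_t(A)$; the identification $\sigma^{\tilde\phi_0}_t=\mathrm{Ad}(\lambda(t))$ obtained by checking the modular condition and invoking Theorem \ref{UniqueModCond}; the Stone generator $h$ with $\lambda(t)=h^{\iu t}$ affiliated with the centralizer of the dual weight; $\tau=(\tilde\phi_0)_{h^{-1}}$ built through Lemma \ref{unboundedderivativeweight}, with traciality from $\sigma^\tau_t=\mathrm{Ad}(h^{-\iu t})\circ\mathrm{Ad}(h^{\iu t})=\mathrm{id}$ and uniqueness from the cocycle chain rule; and the scaling $\theta_s(h)=e^{-s}h$ delivering (iii), whose sign bookkeeping you carry out correctly. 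You have also honestly located the two genuinely hard inputs---semifiniteness of $T$ (equivalently of the dual weight), via elements $\pi_\alpha(x)\lambda(f)$ with $x\in\mathfrak{N}_{\phi_0}$, and the KMS verification for $\mathrm{Ad}(\lambda(t))$---and deferring both to the dual-weight computations of \cite{terp81} is exactly what the paper itself does.

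Two small technical repairs, neither of which affects the structure. First, $T$ does not take values in $\pi_\alpha(\nalgebra)_+$ but in the extended positive part $\widehat{\pi_\alpha(\nalgebra)}_+$ (the integral typically diverges, and the lemma's codomain as printed is already an abuse); your fixed-point argument $\theta_s(TA)=TA$ therefore needs the fact, proved by Haagerup, that the $\theta$-fixed elements of the extended positive part of the crossed product are precisely $\widehat{\pi_\alpha(\nalgebra)}_+$. Second, in (iii) the formula $\tau(A)=\tilde\phi_0\left(h^{-1/2}Ah^{-1/2}\right)$ is only literal through the spectral truncations of Lemma \ref{unboundedderivativeweight}; the manipulation survives because $\theta_{-s}$ carries the spectral family of $h$ onto that of $e^{s}h$, so the supremum over truncations is unchanged, but this intermediate step should be stated rather than performed on the unbounded $h^{-1/2}$ directly.
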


The last result was enunciated as a lemma because so it was called in the original paper \cite{haagerup79}, and because it is a preliminary result which will be used to prove the following theorem, giving us the answer to a question stated earlier.

It is quite curious that there exists a trace in the crossed product with such good properties. One may think it is enough to consider the Segal-Dixmier construction for such a trace, but it is not the case and this problem recalls the question aforementioned of identifying the relation between the weighs on the two algebras. 

\begin{theorem}
	Let $\nalgebra_1,\nalgebra_2$ be von Neumann algebras, $\nalgebra_2\subset\nalgebra_1$, and let $\phi$ and $\psi$ be normal faithful semifinite weights in $\nalgebra_1$ and $\nalgebra_2$, respectively. Then, there exists a unique operator valued weight $S$ from $\nalgebra_1$ to $\nalgebra_2$ such that $\phi=\psi\circ S$.
\end{theorem}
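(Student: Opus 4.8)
The plan is to pin $S$ down by a Radon--Nikodym reconstruction and then to check that the same formula produces an operator valued weight, treating uniqueness first. I would open by recording the compatibility that any solution forces: feeding $\phi=\psi\circ S$ into the modular condition of Definition \ref{ModularCondition} and testing it on elements of $\mathfrak{N}_\phi\cap\mathfrak{N}_\phi^\ast$ lying in $\nalgebra_2$ shows that $\nalgebra_2$ is globally $\{\tau^\phi_t\}$-invariant and that $\psi$ satisfies the modular condition for the restriction $\{\tau^\phi_t|_{\nalgebra_2}\}$; by the uniqueness of the modular automorphism group (Theorem \ref{UniqueModCond}) this restriction is exactly $\{\tau^\psi_t\}$. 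This compatibility, the same phenomenon governed by Corollary \ref{invariancecommute}, will be the standing structural input for the whole construction.

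For uniqueness, suppose $\psi\circ S_1=\psi\circ S_2=\phi$ and fix $A\in\nalgebra_1^+$. Any normal positive functional $\omega$ on $\nalgebra_2$ dominated by a multiple of $\psi$ can be written, by Theorem \ref{TSRNweights}, as $\omega=\psi(K\,\cdot\,K)$ with $K\in\nalgebra_2$, $0\le K\le\mathbbm{1}$. Since $K\in\nalgebra_2$, the bimodule identity gives $K\,S_i(A)\,K=S_i(KAK)$, so
$$(\omega,S_i(A))=\psi\!\left(K\,S_i(A)\,K\right)=(\psi\circ S_i)(KAK)=\phi(KAK),$$
which does not depend on $i$. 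As $\psi$ is faithful, such functionals separate $\widehat{\nalgebra}_2^+$: any $\rho\in(\nalgebra_2)_\ast^+$ is the increasing limit of the dominated functionals $\rho\wedge n\psi$, and lower semicontinuity of the elements $S_i(A)$ (normality) upgrades the equality from this family to all of $(\nalgebra_2)_\ast^+$. Hence $S_1(A)=S_2(A)$ for every $A$, that is $S_1=S_2$.

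For existence I would reverse this computation into a definition: given $\omega\in\Phi_\psi$, write $\omega=\psi(K_\omega\,\cdot\,K_\omega)$ via Theorem \ref{TSRNweights} and set $(\omega,S(A))\defeq\phi(K_\omega A K_\omega)$ for $A\in\nalgebra_1^+$. One must then verify that $\omega\mapsto(\omega,S(A))$ extends to an additive, positively homogeneous, lower semicontinuous map on $(\nalgebra_2)_\ast^+$, so that $S(A)\in\widehat{\nalgebra}_2^+$ is well defined; the operator valued weight axioms and the relation $\psi\circ S=\phi$ then come out by unwinding the definitions and using the normality and semifiniteness of $\phi$. The crux, and the step I expect to be hardest, is the additivity and representation-independence of this assignment: these are not formal and rest on the modular compatibility recorded above, which is what lets the factors $K_\omega$ be analysed through the Connes cocycle and the Pedersen--Takesaki--Radon--Nikodym theorem (Theorem \ref{TPTRN2}), and on interchanging suprema with $\phi$ (normality and Vigier's theorem, Theorem \ref{vigier}). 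A structurally cleaner but computationally heavier alternative for existence is to use the inclusion of crossed products $\nalgebra_2\rtimes_{\tau^\psi}\mathbb{R}\subseteq\nalgebra_1\rtimes_{\tau^\phi}\mathbb{R}$, legitimate by the same compatibility, and to descend the canonical operator valued weight $T$ of Lemma \ref{Tcrossedproduct} to the fixed-point algebras of the dual action; conceptually it is the same mechanism, with the delicate points transferred to the comparison of the two canonical traces.
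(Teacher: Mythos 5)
The first thing to record is that the paper itself contains no proof of this statement: the Haagerup section is explicitly a proof-free survey, and the theorem is imported from Haagerup's work on operator valued weights, where it carries an additional hypothesis your argument cannot dispense with — the existence of an $S$ with $\phi=\psi\circ S$ is \emph{equivalent} to the modular compatibility $\sigma^\phi_t(\nalgebra_2)=\nalgebra_2$ and $\sigma^\phi_t|_{\nalgebra_2}=\sigma^\psi_t$ for all $t$. Your opening paragraph derives this compatibility \emph{from} the assumed existence of $S$, and even that derivation is not the quick check you suggest: $\phi=\psi\circ S$ gives, on $\nalgebra_2$, only $\phi(B^\ast B)=\widehat{\psi}\left(B^\ast S(\mathbbm{1})B\right)$, not $\phi|_{\nalgebra_2}=\psi$, so the implication ``existence implies compatibility'' is the deep half of Haagerup's theorem, not a boundary-value computation. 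For the existence half you then invoke ``the modular compatibility recorded above'' — but at that point no $S$ exists yet, so the compatibility is unavailable; the argument is circular. The circularity is not repairable within the stated hypotheses: for $\nalgebra_1=M_2(\mathbb{C})$, $\nalgebra_2$ the diagonal subalgebra, $\phi$ a faithful state whose density matrix is not diagonal, and $\psi$ any faithful state on $\nalgebra_2$, no such $S$ exists, precisely because the compatibility condition fails.

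The uniqueness argument has an independent gap. The objects $\rho\wedge n\psi$ do not exist: $(\nalgebra_2)_\ast^+$ is a lattice if and only if $\nalgebra_2$ is abelian, and no substitute monotone approximation is available, because functionals dominated by multiples of $\psi$ need not approximate a given $\rho$ from below at all. Concretely, on $\nalgebra_2=B(\ell_2)$ take $\psi=\Tr\left(h^{1/2}\cdot\, h^{1/2}\right)$ with $h$ diagonal with entries $1/n!$, and $\rho=\omega_\xi$ with $\xi=\sum_n n^{-1}e_n$: every positive functional below $\rho$ is a multiple $t\rho$ (its density sits below the rank-one density of $\rho$), and $t\rho\leq c\psi$ with $t>0$ would force $\xi\in\Dom{h^{-1/2}}$, i.e.\ $\sum_n n!/n^2<\infty$ — so the only dominated functional below $\rho$ is $0$, and no increasing sequence of dominated functionals converges to $\rho$. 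Lower semicontinuity of elements of the extended positive part then cannot transport the equality $(\omega,S_1(A))=(\omega,S_2(A))$ from the dominated family to all of $(\nalgebra_2)_\ast^+$: under norm limits it yields only one-sided inequalities, and equality is exactly what the (unavailable) monotone-from-below approximation was supposed to provide. The pairing identity you use tacitly, $\left(\psi(K\cdot K),\,m\right)=\widehat{\psi}(KmK)$ for $m$ in the extended positive part, also needs an argument. Haagerup's actual proofs of both existence and uniqueness run through the modular machinery — balanced weights and the crossed product of Lemma \ref{Tcrossedproduct} — and your closing alternative is the right skeleton; but as written, both halves of the proposal rest on steps that either fail or presuppose the conclusion.
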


Now, for each normal weight $\phi$ on $\pi_\alpha(\nalgebra)$ consider the extension $\widehat{\phi}$ of this weight on $\nalgebra\rtimes\{\sigma^{\phi_0}_t\}_+^\wedge$ given in the following proposition:
\begin{proposition}
	Any normal weight $\phi$ on a von Neumann algebra $\nalgebra$ has a unique extension $\widehat{\phi}$ to $\widehat{\nalgebra}_+$ satisfying:
	\begin{enumerate}[(i)]
		\item $\phi(\lambda m)=\lambda\phi(m)$ for all $\lambda>0$ and for all $m\in\widehat{\nalgebra}_+$;
		\item $\phi(m+n)=\phi(m)+\phi(n)$ for all $m,n \in \widehat{\nalgebra}_+$;
		\item For every increasing net $(m_i)_{i\in I}$ such that $m_i\to m \in \widehat{\nalgebra}_+$, we have $\phi(m_i)\to \phi(m)$.
	\end{enumerate}
\end{proposition}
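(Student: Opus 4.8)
The plan is to reduce every assertion to the behaviour of $\phi$ on the ordinary positive cone $\nalgebra_+$, exploiting the fact that each element of $\widehat{\nalgebra}_+$ is an increasing limit of bounded positive operators. Concretely, I would first record the structural description of the extended positive part: for $m\in\widehat{\nalgebra}_+$ there is an increasing right-continuous family of projections $(e_\lambda)_{\lambda\geq 0}\subset\nalgebra$, with $e_\infty:=\mathbbm{1}-\sup_\lambda e_\lambda$, such that
$$m(\omega)=\int_{[0,\infty)}\lambda\,d\omega(e_\lambda)+\infty\cdot\omega(e_\infty),\qquad \omega\in\nalgebra_\ast^+.$$
Capping the spectrum at level $n$ then produces bounded operators
$$m_n:=\int_{[0,n]}\lambda\,de_\lambda+n(\mathbbm{1}-e_n)\in\nalgebra_+,$$
which increase with $n$ and satisfy $m_n(\omega)\uparrow m(\omega)$ for every $\omega$; in the ordering used in condition (iii) this says exactly $m=\sup_n m_n$ with $m_n\in\nalgebra_+$. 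I regard this spectral representation (together with the elementary observation that a pointwise supremum of an increasing net stays in $\widehat{\nalgebra}_+$, since homogeneity and additivity pass to sups and a supremum of lower semicontinuous maps is lower semicontinuous) as the technical backbone, and I expect it to be the main obstacle, the rest being essentially formal.

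Granting that, \emph{uniqueness} is forced with almost no effort. If $\widehat{\phi}$ is any map on $\widehat{\nalgebra}_+$ satisfying (i)--(iii) whose restriction to $\nalgebra_+$ is $\phi$, then for $m=\sup_n m_n$ as above condition (iii) gives $\widehat{\phi}(m)=\sup_n\widehat{\phi}(m_n)=\sup_n\phi(m_n)$, and the right-hand side depends only on $\phi$. Hence any two such extensions coincide on all of $\widehat{\nalgebra}_+$.

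For \emph{existence} I would use the decomposition of a normal weight into normal positive functionals: write $\phi=\sum_{i\in I}\omega_i$ with $\omega_i\in\nalgebra_\ast^+$ (the weight analogue of the characterization of normal functionals proved earlier in the text), and simply set
$$\widehat{\phi}(m)=\sum_{i\in I}m(\omega_i),\qquad m\in\widehat{\nalgebra}_+,$$
the sum taken in $\overline{\mathbb{R}}_+$. Evaluating at $m=A\in\nalgebra_+$ (viewed via $\omega\mapsto\omega(A)$) recovers $\widehat{\phi}(A)=\sum_i\omega_i(A)=\phi(A)$, so $\widehat{\phi}$ extends $\phi$. Properties (i) and (ii) follow termwise from the homogeneity and additivity built into each $m\in\widehat{\nalgebra}_+$, using that nonnegative extended-real sums may be freely rearranged. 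For (iii), if $m_j\uparrow m$ then $m_j(\omega_i)\uparrow m(\omega_i)$ for each $i$, and interchanging the supremum over $j$ with the sum over $i$ (legitimate for nonnegative terms) yields $\widehat{\phi}(m_j)=\sum_i m_j(\omega_i)\uparrow\sum_i m(\omega_i)=\widehat{\phi}(m)$.

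A final remark I would include is that existence could alternatively be obtained from the formula $\widehat{\phi}(m)=\sup\{\phi(A):A\in\nalgebra_+,\ A\leq m\}$; checking (i) and the inequality $\widehat{\phi}(m+n)\geq\widehat{\phi}(m)+\widehat{\phi}(n)$ is immediate, but the reverse half of additivity demands a Riesz-type decomposition of any bounded $C\leq m+n$ as $C=A+B$ with $A\leq m$ and $B\leq n$, which is precisely the step the functional-sum formula sidesteps. Either route leaves the spectral/structural representation of $\widehat{\nalgebra}_+$ as the one genuinely non-routine ingredient.
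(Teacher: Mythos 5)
Your outline is essentially correct, but you should know that the paper itself offers no proof of this proposition: it sits in the survey of Haagerup's construction, where the text explicitly switches to sketch mode and defers to Terp's notes \cite{terp81} and Haagerup's paper \cite{haagerup79}. So there is no in-paper argument to compare against; what you have reconstructed is precisely the standard argument from that literature. Your uniqueness step is sound: granted the spectral representation of $m\in\widehat{\nalgebra}_+$, the caps $m_n\in\nalgebra_+$ do increase pointwise to $m$ on $\nalgebra_\ast^+$ (one checks $m_{n+1}-m_n\geq \mathbbm{1}-e_{n+1}\geq 0$ and $m_n(\omega)\uparrow m(\omega)$ in both the $\omega(e_\infty)>0$ and $\omega(e_\infty)=0$ cases), so condition (iii) pins any extension down to $\widehat{\phi}(m)=\sup_n\phi(m_n)$, under the intended reading of ``$m_i\to m$'' as pointwise increasing convergence. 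Likewise your verifications of (i)--(iii) for $\widehat{\phi}(m)=\sum_i m(\omega_i)$ (termwise homogeneity and additivity, and interchange of increasing suprema with sums of nonnegative extended reals) are all legitimate.

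Two caveats on the imported inputs. First, the spectral representation of elements of $\widehat{\nalgebra}_+$, which you rightly flag as the non-routine backbone, is a genuine theorem of Haagerup nowhere established in this thesis; your argument is a proof modulo that citation, which is acceptable but should be stated as such. Second, and more seriously as a matter of attribution: your existence step invokes the decomposition $\phi=\sum_{i\in I}\omega_i$ with $\omega_i\in\nalgebra_\ast^+$ and describes it as ``the weight analogue of the characterization of normal functionals proved earlier in the text''. The text proves the sum decomposition only for normal positive \emph{functionals}, and separately constructs one particular faithful normal semifinite weight as a sum of states (Theorem \ref{existenceweight}); the statement that \emph{every} normal weight is a pointwise sum of normal positive functionals is a nontrivial theorem of Haagerup and does not follow from the order-continuity definition of normality by the paper's earlier arguments. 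Since the fact is true and standard, this is a misattribution rather than a fatal gap, but in a self-contained write-up it must be cited, not claimed as already proved. Your closing remark about the alternative formula $\widehat{\phi}(m)=\sup\left\{\phi(A) : A\in\nalgebra_+,\ A\leq m\right\}$ is accurate: with the sum formula in hand the two agree, whereas proving additivity and (iii) directly for the supremum formula requires exactly the Riesz-type decomposition you indicate.
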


Using this extension, we can define a normal weight on $\nalgebra\rtimes\{\sigma^{\phi_0}_t\}$ by
$$\tilde{\phi}=\widehat{\phi}\circ T;$$
it is called the dual weight of $\phi$.

\begin{lemma}
	The mapping $\phi\mapsto \tilde{\phi}$ is a bijection of the set of all normal semifinite weights on $\pi_\alpha(\nalgebra)$ onto the set of all normal semifinite weights $\psi$ on $\nalgebra\rtimes\{\sigma^{\phi_0}_t\}$ satisfying
	$$\psi \circ \theta_t=\psi \quad \forall t\in \mathbb{R}. $$
\end{lemma}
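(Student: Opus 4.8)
The plan is to verify first that the assignment $\phi\mapsto\tilde\phi$ genuinely lands in the stated target set, then to establish injectivity, and finally surjectivity, which I expect to be the crux. Throughout I will write $\mathcal{M}=\nalgebra\rtimes\{\sigma^{\phi_0}_t\}$ and exploit two facts already recorded: that $\pi_\alpha(\nalgebra)$ is exactly the fixed-point algebra $\{A\in\mathcal{M}\mid \theta_t A=A\ \forall t\}$ of the dual action $\theta$, and that $T$ is a normal faithful semifinite operator valued weight onto $\pi_\alpha(\nalgebra)$ by Lemma \ref{Tcrossedproduct}$(i)$.

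First I would check well-definedness. Since the canonical extension $\widehat{\phi}$ of a normal weight preserves addition, multiplication by positive scalars, and increasing limits, and since $T$ is normal, the composite $\tilde\phi=\widehat{\phi}\circ T$ is again a normal weight on $\mathcal{M}$; its semifiniteness follows by combining the semifiniteness of $T$ (which furnishes, for each element of a WOT-dense subset, an approximation whose image under $T$ is bounded) with the semifiniteness of $\phi$ on $\pi_\alpha(\nalgebra)$. The $\theta$-invariance is the decisive structural point: because the integral defining $T$ runs over all of $\mathbb{R}$ with translation-invariant Lebesgue measure, for every $s\in\mathbb{R}$ one has
\[
T(\theta_s A)=\int_{-\infty}^{\infty}\theta_t(\theta_s A)\,dt=\int_{-\infty}^{\infty}\theta_{t+s}(A)\,dt=\int_{-\infty}^{\infty}\theta_t(A)\,dt=TA,
\]
so $T\circ\theta_s=T$ and therefore $\tilde\phi\circ\theta_s=\widehat{\phi}\circ T\circ\theta_s=\widehat{\phi}\circ T=\tilde\phi$. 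Hence $\tilde\phi$ is $\theta$-invariant and the map takes values in the claimed set.

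For injectivity I would argue that $\phi$ is recoverable from $\tilde\phi$. Suppose $\tilde{\phi_1}=\tilde{\phi_2}$, that is $\widehat{\phi_1}\circ T=\widehat{\phi_2}\circ T$. Since $T$ is faithful and semifinite, the elements $A\in\mathcal{M}_+$ for which $T(A)\in\pi_\alpha(\nalgebra)_+$ is bounded have images $T(A)$ forming a subset of $\pi_\alpha(\nalgebra)_+$ large enough (weight-dense, via an increasing net converging to the identity) to determine any normal weight on $\pi_\alpha(\nalgebra)$. On such $A$ we have $\phi_i(T(A))=\tilde{\phi_i}(A)$, so $\phi_1$ and $\phi_2$ agree on this determining set and, by normality, everywhere; thus $\phi_1=\phi_2$.

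The main obstacle is surjectivity: given a normal semifinite weight $\psi$ on $\mathcal{M}$ with $\psi\circ\theta_t=\psi$ for all $t$, I must produce a normal semifinite weight $\phi$ on $\pi_\alpha(\nalgebra)$ with $\tilde\phi=\psi$. The strategy is to use the $\theta$-invariance of $\psi$ to factor it through the operator valued weight $T$ onto the fixed-point algebra. Concretely, one shows that a $\theta$-invariant $\psi$ restricts, in the weight sense, to a normal semifinite weight $\phi$ on $\pi_\alpha(\nalgebra)$, and that the identity $\psi=\widehat{\phi}\circ T$ then holds on a WOT-dense $\ast$-subalgebra of $\mathcal{M}$; the existence-and-uniqueness theorem for operator valued weights quoted above, applied to the inclusion $\pi_\alpha(\nalgebra)\subset\mathcal{M}$, is what makes the factorization $\psi=\phi\circ T$ available and pins down $\phi$ uniquely. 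I expect the delicate points to be, first, making precise the notion of the "restriction" of $\psi$ to the fixed-point algebra even though $T$ is infinite there, and second, propagating the equality $\psi=\widehat{\phi}\circ T$ from the weight-dense $\ast$-subalgebra to all of $\mathcal{M}_+$ by normality and semifiniteness. Once these are settled, combining the three steps shows that $\phi\mapsto\tilde\phi$ is a bijection onto the $\theta$-invariant normal semifinite weights, as asserted.
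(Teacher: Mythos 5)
First, for orientation: the paper itself offers no proof of this lemma --- it appears in the section on Haagerup's construction, which the author explicitly presents as a sketch, deferring to \cite{haagerup79} and \cite{terp81} --- so your proposal must be measured against the standard argument there. Your first step is essentially correct: $T\circ\theta_s=T$ by translation invariance of Lebesgue measure, hence $\tilde\phi\circ\theta_s=\tilde\phi$, and normality of $\tilde\phi=\widehat{\phi}\circ T$ follows from normality of $\widehat{\phi}$ and of $T$. (Semifiniteness of the composite is not automatic as you suggest; it is a genuine theorem of Haagerup on composing weights with operator valued weights, but it is fair to quote it.) Your injectivity sketch is morally sound but incomplete as written: two normal semifinite weights agreeing on the cone $\left\{T(A) \ \middle| \ A\in\mathcal{M}_+,\ T(A) \text{ bounded}\right\}$ are not automatically equal; one must exploit the bimodule property $T(b^\ast A b)=b^\ast T(A)b$ for $b\in\pi_\alpha(\nalgebra)$ to manufacture, for each $a\in\pi_\alpha(\nalgebra)_+$, an increasing net $a^{\frac{1}{2}}T(x_i x_i^\ast)a^{\frac{1}{2}}\nearrow a$ and then invoke normality.

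The genuine gap is surjectivity, and the specific step that fails is the proposed ``restriction of $\psi$ to the fixed-point algebra.'' Every weight in the target set is identically $+\infty$ on the nonzero positive part of $\pi_\alpha(\nalgebra)$: already for the dual weight one computes $T(\pi_\alpha(a))=\int_{-\infty}^{\infty}\theta_t(\pi_\alpha(a))\,dt=\int_{-\infty}^{\infty}\pi_\alpha(a)\,dt$, which is the element of the extended positive part taking the value $+\infty$ on every normal state not annihilating $a$, so $\tilde\phi(\pi_\alpha(a))=\infty$ for all $a\neq0$; the same must hold for any $\theta$-invariant $\psi$ in the image. Hence the naive restriction is the trivial weight and carries no information from which to reconstruct $\phi$. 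Your appeal to the existence--uniqueness theorem for operator valued weights also runs in the wrong direction: that theorem (which, correctly stated, requires the modular compatibility $\left.\sigma^\psi_t\right|_{\pi_\alpha(\nalgebra)}=\sigma^\phi_t$, a hypothesis the paper's loose formulation omits) produces $S$ from a \emph{given} pair $(\phi,\psi)$; it does not produce $\phi$ from $\psi$ and the fixed $T$. The standard proof instead goes through Connes cocycles: fixing the faithful weight $\varphi_0$ with dual weight $\tilde\varphi_0$, one shows that $u_t=(D\psi:D\tilde\varphi_0)_t$ is $\theta$-invariant because both weights are, hence $u_t\in\pi_\alpha(\nalgebra)$; the converse of Connes' cocycle Radon--Nikodym theorem then yields a unique normal semifinite weight $\phi$ on $\nalgebra$ with $(D\phi:D\varphi_0)_t=\pi_\alpha^{-1}(u_t)$, and the compatibility $(D\tilde\phi:D\tilde\varphi_0)_t=\pi_\alpha\left((D\phi:D\varphi_0)_t\right)$, together with the fact that equal cocycles force equal weights, gives $\tilde\phi=\psi$ --- and delivers injectivity simultaneously. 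Without this (or an equivalent) mechanism, your surjectivity argument remains a plan whose decisive step cannot be carried out as described.
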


Let $\tau$ be the unique trace described in \ref{Tcrossedproduct} $(ii)$ and let us look at the map $H\mapsto \tau(H\ \cdot)$.

\begin{lemma}
	The mapping $H\mapsto\tau(H\ \cdot)$ is a bijection of $\nalgebra\rtimes\{\sigma^{\phi_0}_t\}^\wedge_+$ onto the set of all normal weights on $\nalgebra\rtimes\{\sigma^{\phi_0}_t\}$. In particular, it is a bijection of the positive self-adjoint operators affiliated with $\nalgebra\rtimes\{\sigma^{\phi_0}_t\}$ onto the normal semifinite weights on $\nalgebra\rtimes\{\sigma^{\phi_0}_t\}$. 
\end{lemma}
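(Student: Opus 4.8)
The plan is to reduce everything to the Pedersen--Takesaki--Radon--Nikodym theorem (Theorem \ref{TPTRN2}) by exploiting the crucial feature of the present situation: $\mathcal{M}\defeq\nalgebra\rtimes\{\sigma^{\phi_0}_t\}$ carries the faithful normal semifinite \emph{trace} $\tau$ of Lemma \ref{Tcrossedproduct}$(ii)$. Being tracial, $\tau$ satisfies the modular condition of Definition \ref{ModularCondition} with the constant family $\tau^\tau_t=\mathrm{id}_{\mathcal{M}}$ (indeed $F_{A,B}(z)\equiv\tau(AB)=\tau(BA)$ works), so by the uniqueness in Theorem \ref{UniqueModCond} the modular automorphism group of $\tau$ is trivial and its centralizer is $\mathfrak{M}_{\tau^\tau}=\mathcal{M}$. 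Consequently every positive self-adjoint operator $H\eta\mathcal{M}$ is affiliated with the centralizer, and Lemma \ref{unboundedderivativeweight} (with $\phi=\tau$) guarantees that $\tau_H\defeq\tau(H\,\cdot\,)$ is a well-defined normal semifinite weight, with $\tau_{\sup_i H_i}=\sup_i\tau_{H_i}$ for increasing nets.

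First I would settle the ``in particular'' clause, i.e. the bijection between positive self-adjoint operators affiliated with $\mathcal{M}$ and normal semifinite weights. Well-definedness is the previous paragraph. For surjectivity, let $\psi$ be any normal semifinite weight; since the modular group of $\tau$ is trivial, $\psi$ is automatically invariant under $\{\tau^\tau_t\}_{t\in\mathbb{R}}$, so Theorem \ref{TPTRN2} applies verbatim and yields a positive $H\eta\mathfrak{M}_{\tau^\tau}=\mathcal{M}$ with $\psi=\tau_H$. The same theorem asserts uniqueness of $H$, which is injectivity; alternatively, in the bounded case $\tau_H=\tau_K$ forces $\tau(HA)=\tau(KA)$ on the trace ideal $\mathfrak{M}_\tau$, whence $H=K$ by faithfulness of $\tau$ and weak density of $\mathfrak{M}_\tau$, and the affiliated case follows by comparing spectral truncations. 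This establishes the restricted bijection.

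Next I would pass to the extended positive part. On one hand, every $H\in\widehat{\mathcal{M}}_+$ is the increasing supremum of its bounded truncations $H_n$, obtained from its spectral resolution $H=\int_{[0,\infty)}\lambda\,de_\lambda+(+\infty)e_\infty$, so extending $\tau_H\defeq\sup_n\tau_{H_n}$ through Lemma \ref{unboundedderivativeweight} defines a normal weight which need not be semifinite; thus the map does land in the set of all normal weights. On the other hand, for surjectivity onto all normal weights I would use that a normal weight $\psi$ is the supremum of the upward directed family $\Phi_\psi$ of normal positive functionals it strictly dominates, exactly as in equation \eqref{eq5}: $\psi(A)=\sup_{\omega\in\Phi_\psi}\omega(A)$. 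Each $\omega\in\Phi_\psi$ is bounded, hence a semifinite normal weight, so by the restricted bijection $\omega=\tau_{H_\omega}$ for a unique $H_\omega\eta\mathcal{M}_+$; the map $H\mapsto\tau_H$ is an order isomorphism onto its image---order preservation is Lemma \ref{centralizerorder}, and order reflection follows from faithfulness of $\tau$ by testing against spectral projections---so $(H_\omega)_{\omega\in\Phi_\psi}$ is an increasing net whose supremum $H\defeq\sup_{\omega}H_\omega$ lies in $\widehat{\mathcal{M}}_+$ and satisfies $\tau_H=\sup_\omega\tau_{H_\omega}=\sup_\omega\omega=\psi$. Injectivity on $\widehat{\mathcal{M}}_+$ then follows because $\tau_H=\tau_K$ forces agreement of all bounded spectral truncations, hence of the finite parts of the spectral resolutions, and of the infinite parts $e_\infty$ by comparing the projections on which the induced weight fails to be semifinite.

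The main obstacle I expect is precisely this last, non-semifinite, layer: extending the definition of $\tau_H$ and the supremum-compatibility of Lemma \ref{unboundedderivativeweight} from genuine affiliated operators to all of $\widehat{\mathcal{M}}_+$, where an infinite part $e_\infty$ is allowed, and proving the two-way correspondence there. This rests on the structure theory of the extended positive part---the representation of its elements by spectral resolutions and their recovery from the induced weight---together with the fact, used above, that every normal weight is a directed supremum of the normal functionals it dominates. Once these structural facts are in place, the bijection follows by gluing the semifinite bijection (Theorem \ref{TPTRN2}) to the order isomorphism $H\mapsto\tau_H$ via suprema.
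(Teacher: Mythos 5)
The paper offers no proof of this lemma to compare against: it sits in the survey of Haagerup's construction, where the text has explicitly switched to sketch mode and defers to \cite{haagerup79} and \cite{terp81}. Judged on its own merits, your proposal reconstructs the standard argument, and your treatment of the semifinite layer is correct and has the virtue of staying entirely inside the thesis: writing $\mathcal{M}=\nalgebra\rtimes\{\sigma^{\phi_0}_t\}$, the constant function $F_{A,B}(z)\equiv\tau(AB)=\tau(BA)$ does satisfy the modular condition for the trace of Lemma \ref{Tcrossedproduct}$(ii)$, so by the uniqueness in Theorem \ref{UniqueModCond} its modular group is trivial and $\mathfrak{M}_{\tau^\tau}=\mathcal{M}$; every normal semifinite weight is then vacuously invariant, Theorem \ref{TPTRN2} gives existence and uniqueness of $H$, and Lemma \ref{unboundedderivativeweight} gives well-definedness of $\tau_H$. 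That settles the ``in particular'' clause completely.

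Two steps in the extended layer are thinner than your wording suggests, and you should make them explicit. First, the representation $\psi=\sup_{\omega\in\Phi_\psi}\omega$ with $\Phi_\psi$ upward directed, for an \emph{arbitrary} normal weight, is Haagerup's theorem on normal weights; it is genuinely nontrivial (the thesis uses it tacitly in equation \eqref{eq5}, but only for semifinite weights in the GNS context), and the directedness of $\Phi_\psi$ is part of the same nontrivial package, so both must be cited or proved. Second, your injectivity argument on $\widehat{\mathcal{M}}_+$ --- that $\tau_H=\tau_K$ ``forces agreement of all bounded spectral truncations'' --- does not follow as stated: the truncations of $H$ cannot be read off the weight $\tau_H$ without an independent recovery mechanism, which is exactly what is at stake. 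The standard repair is to upgrade your order-isomorphism claim to all of $\widehat{\mathcal{M}}_+$ via the evaluation formula: every $\omega\in\mathcal{M}_\ast^+$ equals $\tau(h_\omega\,\cdot)$ for some $h_\omega\in L_1(\mathcal{M},\tau)_+$ by your restricted bijection, and one verifies $H(\omega)=\hat{\tau}\left(h_\omega^{1/2}Hh_\omega^{1/2}\right)$ first for bounded $H$ and then for all of $\widehat{\mathcal{M}}_+$ by normality on both sides; since elements of the extended positive part \emph{are} functions on $\mathcal{M}_\ast^+$, equality $\tau_H=\tau_K$ then yields $H(\omega)=K(\omega)$ for every $\omega$, i.e.\ $H=K$, and monotonicity in both directions comes for free. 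With these two inputs supplied, your gluing of the semifinite bijection to the directed-supremum argument closes the proof and coincides with the one in Terp's notes.
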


\begin{definition}
	For each normal weight $\phi$ on $\nalgebra$ we define $H_\phi$ as the unique element of $\widehat{N}_+$ such that $\tilde{\phi}=\tau(H_\phi\ \cdot)$.
\end{definition}

\begin{proposition}
	The mapping $\phi \mapsto H_\phi$ is a bijection of the set of all normal semifinite weights on $\nalgebra$ onto the set of all positive self-adjoint operators $H$ affiliated with  $\nalgebra\rtimes\{\sigma^{\phi_0}_t\}$ satisfying
	$$\theta_t H=e^{-t}H.$$
\end{proposition}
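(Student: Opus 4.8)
The plan is to realize $\phi\mapsto H_\phi$ as a composition of the two bijections just established and then, inside the second one, to identify exactly which affiliated operators are hit. First I would recall that $\pi_\alpha$ is a normal $\ast$-isomorphism of $\nalgebra$ onto $\pi_\alpha(\nalgebra)=\{A\in\nalgebra\rtimes\{\sigma^{\phi_0}_t\} \mid \theta_tA=A\}$, so that $\phi\mapsto\phi\circ\pi_\alpha^{-1}$ identifies the normal semifinite weights on $\nalgebra$ with those on $\pi_\alpha(\nalgebra)$. Composing with the dual-weight construction $\phi\mapsto\tilde\phi=\widehat\phi\circ T$, which by the preceding lemma is a bijection onto the normal semifinite weights $\psi$ on $\nalgebra\rtimes\{\sigma^{\phi_0}_t\}$ satisfying $\psi\circ\theta_t=\psi$, and finally with the inverse of the bijection $H\mapsto\tau(H\,\cdot)$, we obtain that $\phi\mapsto H_\phi$ is a bijection of the normal semifinite weights on $\nalgebra$ onto the set of positive self-adjoint operators $H$ affiliated with the crossed product for which the weight $\tau(H\,\cdot)$ is $\theta$-invariant. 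The whole problem therefore reduces to translating the invariance condition $\tau(H\,\cdot)\circ\theta_t=\tau(H\,\cdot)$ into a condition on $H$ alone.

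This is the heart of the argument, and it rests on the scaling property $\tau\circ\theta_s=e^{-s}\tau$ from Lemma~\ref{Tcrossedproduct}~$(iii)$ together with the fact that each $\theta_s$ is an automorphism of the crossed product, extended in the obvious way to affiliated operators and to the extended positive part. For $A$ positive in $\nalgebra\rtimes\{\sigma^{\phi_0}_t\}$ I would write
\begin{align*}
\tau(H\,\cdot)\bigl(\theta_s(A)\bigr)=\tau\bigl(H\theta_s(A)\bigr)=\tau\bigl(\theta_s(\theta_{-s}(H)A)\bigr)=e^{-s}\tau\bigl(\theta_{-s}(H)A\bigr)=\tau\bigl(e^{-s}\theta_{-s}(H)A\bigr),
\end{align*}
where the second equality uses that $\theta_s$ is multiplicative and fixes $\theta_s\theta_{-s}(H)=H$, and the third uses the scaling of $\tau$. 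Hence $\tau(H\,\cdot)$ is $\theta$-invariant precisely when $\tau\bigl(e^{-s}\theta_{-s}(H)A\bigr)=\tau(HA)$ for all such $A$ and all $s\in\mathbb{R}$; by the injectivity of $H\mapsto\tau(H\,\cdot)$ this is equivalent to $e^{-s}\theta_{-s}(H)=H$, i.e. $\theta_{-s}(H)=e^{s}H$, and replacing $s$ by $-s$ gives exactly $\theta_tH=e^{-t}H$ for all $t$.

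Putting the two paragraphs together yields the claimed bijection. The main obstacle I anticipate is not the algebraic identity above but its rigorous justification for unbounded $H$: one must make sense of $\theta_{-s}(H)$ and of the products $H\theta_s(A)$ at the level of the extended positive part, verify that the relation $\tau\circ\theta_s=e^{-s}\tau$ persists after extending $\tau$ to $\widehat{(\nalgebra\rtimes\{\sigma^{\phi_0}_t\})}_+$, and confirm that the injectivity of $H\mapsto\tau(H\,\cdot)$ may legitimately be applied to the $\theta$-transformed operator $e^{-s}\theta_{-s}(H)$, which is again positive, self-adjoint and affiliated with the crossed product. Once these technicalities are handled exactly as in the normality statements underlying the two quoted lemmas, the equivalence of the invariance and the scaling conditions is immediate and the proof is complete.
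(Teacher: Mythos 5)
Your argument is correct, and it is essentially the canonical one: the paper itself states this proposition \emph{without} proof, since it presents the Haagerup construction as a survey (deferring to \cite{haagerup79} and \cite{terp81}), so there is no internal proof to compare against; your proposal supplies exactly the argument found in those sources. The structure — compose the identification $\phi\mapsto\phi\circ\pi_\alpha^{-1}$ with the dual-weight bijection onto $\theta$-invariant normal semifinite weights and with the inverse of $H\mapsto\tau(H\,\cdot)$, then translate invariance of $\tau(H\,\cdot)$ into the scaling relation — is the right decomposition, and your key computation
$$\tau\bigl(H\theta_s(A)\bigr)=\tau\bigl(\theta_s(\theta_{-s}(H)A)\bigr)=e^{-s}\tau\bigl(\theta_{-s}(H)A\bigr)$$
combined with injectivity of $H\mapsto\tau(H\,\cdot)$ gives $e^{-s}\theta_{-s}(H)=H$, i.e. $\theta_tH=e^{-t}H$, as required. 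Two small points worth making explicit when you write this up: first, since $H$ is unbounded, the identity $\tau(H\,\cdot)\circ\theta_s=\tau\bigl(e^{-s}\theta_{-s}(H)\,\cdot\bigr)$ is cleanest at the level of $\tau_H(A)=\tau\bigl(H^{1/2}AH^{1/2}\bigr)$ (or via the unique normal extension $\hat\tau$ to the extended positive part, using uniqueness of that extension to propagate $\tau\circ\theta_s=e^{-s}\tau$); second, before invoking injectivity you should observe that $\psi\circ\theta_s$ is again a \emph{normal semifinite} weight and that $e^{-s}\theta_{-s}(H)$ is again positive self-adjoint and affiliated with the crossed product (immediate since $\theta_{-s}$ is an automorphism, implemented on spectral resolutions), so that both sides lie in the domain where the bijection $H\mapsto\tau(H\,\cdot)$ applies. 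You flag both technicalities yourself, and they are handled exactly as you anticipate, so the proof is complete.
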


We started this chapter looking at operators affiliated with some von Neumann algebra in order to define measurable operators with respect to a normal faithful semifinite trace. A natural question now is what happens if we look at the $\tau$-measurable operators?

\begin{lemma}
	Let $\phi$ be a normal semifinite weight on $\nalgebra$ and let $\left\{E^{H_\phi}_\lambda\right\}_{\lambda\in\mathbb{R}^+}$ be the spectral resolution of the positive operator $H_{\phi}$. Then
	$$\tau\left(E^{H_\phi}_{\lambda}\right)=\frac{1}{\lambda}\phi(\mathbbm{1}).$$
\end{lemma}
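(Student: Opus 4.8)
The plan is to exploit the two scaling relations that characterize $H_\phi$ and $\tau$, namely $\theta_t H_\phi = e^{-t}H_\phi$ and $\tau\circ\theta_t = e^{-t}\tau$, in order to determine $\lambda\mapsto\tau(E^{H_\phi}_\lambda)$ up to a multiplicative constant, and then to fix that constant by evaluating the dual weight $\tilde\phi$ in two different ways.

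First I would determine the functional form of $\lambda\mapsto \tau(E^{H_\phi}_\lambda)$. Since $\theta_t$ is a normal $\ast$-automorphism of $\nalgebra\rtimes\{\sigma^{\phi_0}_t\}$ and $H_\phi$ is affiliated with it, $\theta_t$ intertwines the Borel functional calculus of $H_\phi$; combined with $\theta_t H_\phi=e^{-t}H_\phi$ this gives, for every Borel set $S\subset(0,\infty)$,
$$\theta_t\!\left(E^{H_\phi}(S)\right)=E^{e^{-t}H_\phi}(S)=E^{H_\phi}(e^t S),$$
so in particular $\theta_t(E^{H_\phi}_{(\lambda,\infty)})=E^{H_\phi}_{(e^t\lambda,\infty)}$. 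Applying $\tau$ and using $\tau\circ\theta_t=e^{-t}\tau$ yields the functional equation
$$\tau\!\left(E^{H_\phi}_{(e^t\lambda,\infty)}\right)=e^{-t}\,\tau\!\left(E^{H_\phi}_{(\lambda,\infty)}\right),\qquad t\in\mathbb{R},\ \lambda>0.$$
Writing $f(\lambda)=\tau(E^{H_\phi}_{(\lambda,\infty)})$ and $s=e^t$, this reads $f(s\lambda)=s^{-1}f(\lambda)$, hence $f(\lambda)=C/\lambda$ with $C=\tau(E^{H_\phi}_{(1,\infty)})$. It then remains only to identify $C=\phi(\mathbbm{1})$.

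To pin down $C$ I would test the dual weight on functions of $H_\phi$. Fix a nonnegative bounded Borel function $g$ with $\supp(g)\subset[a,b]\subset(0,\infty)$, so that $g(H_\phi)$ is a bounded positive element. On the one hand, the previous step shows that the spectral measure of $H_\phi$ relative to $\tau$ has density $C\mu^{-2}\,d\mu$ on $(0,\infty)$, so the trace formula $\tilde\phi=\tau(H_\phi\,\cdot\,)$ gives
$$\tilde\phi\!\left(g(H_\phi)\right)=\tau\!\left(H_\phi\,g(H_\phi)\right)=\int_0^\infty \mu\,g(\mu)\,\frac{C}{\mu^2}\,d\mu=C\int_0^\infty\frac{g(\mu)}{\mu}\,d\mu.$$
On the other hand, $\tilde\phi=\widehat\phi\circ T$ with $T(x)=\int_{-\infty}^{\infty}\theta_t(x)\,dt$. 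Using $\theta_t(g(H_\phi))=g(e^{-t}H_\phi)$ and the substitution $s=e^{-t}\mu$, the inner integral $\int_{-\infty}^{\infty} g(e^{-t}\mu)\,dt=\int_0^\infty s^{-1}g(s)\,ds$ is independent of $\mu$, whence
$$T\!\left(g(H_\phi)\right)=\left(\int_0^\infty\frac{g(s)}{s}\,ds\right)\mathbbm{1},\qquad \tilde\phi\!\left(g(H_\phi)\right)=\left(\int_0^\infty\frac{g(s)}{s}\,ds\right)\phi(\mathbbm{1}),$$
the last equality following from the homogeneity of $\widehat\phi$ on multiples of $\mathbbm{1}$. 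Comparing the two expressions and choosing $g$ with $0<\int_0^\infty s^{-1}g(s)\,ds<\infty$ forces $C=\phi(\mathbbm{1})$, which is the claim.

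The main obstacles I anticipate are technical rather than conceptual: justifying that $\theta_t$ intertwines the functional calculus of the affiliated operator $H_\phi$ (via uniqueness of the spectral resolution together with normality of $\theta_t$), interchanging $\tau$ and $\widehat\phi$ with the spectral integrals (handled by restricting to $g$ supported away from $0$ and $\infty$, so that all operators involved are bounded and $\tau$-finite), and, when $\phi$ is not faithful, replacing $\mathbbm{1}$ by the support projection of $H_\phi$ in the identity $\int_0^\infty dE^{H_\phi}_\mu=\mathbbm{1}$; since $\phi$ is faithful precisely when $H_\phi$ is nonsingular, this reduction does not affect the stated identity.
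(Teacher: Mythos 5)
Your proof is correct, and since the paper states this lemma without proof (it sits in the expository sketch of Haagerup's construction, which explicitly defers to \cite{haagerup79} and \cite{terp81}), the right comparison is with those references: your two steps --- using the scaling relations $\theta_t H_\phi=e^{-t}H_\phi$ and $\tau\circ\theta_t=e^{-t}\tau$ to force $\tau\bigl(E^{H_\phi}_{(\lambda,\infty)}\bigr)=C/\lambda$, and then identifying $C$ by evaluating $\tilde{\phi}=\tau(H_\phi\,\cdot)=\widehat{\phi}\circ T$ on a spectral window $g(H_\phi)$ in two ways --- reproduce exactly the standard Haagerup--Terp argument, in which one takes $g=\chi_{(\lambda,\mu]}$ and reads off $C\log(\mu/\lambda)=\phi(\mathbbm{1})\log(\mu/\lambda)$. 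Your closing caveats are also the correct ones and do close the remaining gaps: $T\bigl(g(H_\phi)\bigr)=\bigl(\int_0^\infty s^{-1}g(s)\,ds\bigr)p$ where $p$ is the support projection of $H_\phi$, which is $\theta$-invariant and hence lies in $\pi_\alpha(\nalgebra)$ with $p=\pi_\alpha\bigl(s^\nalgebra(\phi)\bigr)$, so $\widehat{\phi}$ still returns $\phi(\mathbbm{1})$; and the identity persists in $[0,\infty]$, since $\phi(\mathbbm{1})<\infty$ forces $C<\infty$ by the bound $\tau\bigl(E^{H_\phi}_{(2^n\lambda,2^{n+1}\lambda]}\bigr)\leq (2^n\lambda)^{-1}\tau\bigl(H_\phi E^{H_\phi}_{(2^n\lambda,2^{n+1}\lambda]}\bigr)=(2^n\lambda)^{-1}\log 2\,\phi(\mathbbm{1})$ summed over $n$, while $C<\infty$ with $\phi(\mathbbm{1})=\infty$ would contradict your finite left-hand side.
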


\begin{corollary}
	Let $\phi$ be a normal semifinite weight on $\nalgebra$. Then $H_\phi$ is $\tau$-measurable if $\phi\in\nalgebra_\ast$.
\end{corollary}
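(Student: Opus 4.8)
The plan is to read off $\tau$-measurability directly from the preceding lemma together with the characterisation of $\tau$-measurable operators given in Proposition \ref{equivalencetaumeasurability}. Before starting I would record that $H_\phi$ falls within the scope of that machinery: by the proposition immediately preceding the cited lemma, $H_\phi$ is a positive self-adjoint operator affiliated with the crossed product $\nalgebra\rtimes\{\sigma^{\phi_0}_t\}$, hence it is closed and densely defined, and the relevant trace is the canonical normal faithful semifinite trace $\tau$ on $\nalgebra\rtimes\{\sigma^{\phi_0}_t\}$ furnished by Lemma \ref{Tcrossedproduct} $(ii)$.

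First I would observe that the hypothesis $\phi\in\nalgebra_\ast$ forces $\phi(\mathbbm{1})<\infty$. Indeed, a positive element of the predual is a bounded normal functional, and by Proposition \ref{PPC} its norm equals its value at the identity, so that $\phi(\mathbbm{1})=\|\phi\|<\infty$.

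Next I would invoke the preceding lemma, which gives, for the spectral resolution $\{E^{H_\phi}_\lambda\}_{\lambda\in\mathbb{R}_+}$ of $H_\phi$ (where $E^{H_\phi}_\lambda$ denotes the spectral projection of $H_\phi$ on $(\lambda,\infty)$, consistently with the convention introduced after Corollary \ref{ST}), the identity $\tau(E^{H_\phi}_\lambda)=\frac{1}{\lambda}\phi(\mathbbm{1})$. Combining this with the finiteness of $\phi(\mathbbm{1})$ yields $\tau(E^{H_\phi}_\lambda)=\frac{1}{\lambda}\phi(\mathbbm{1})<\infty$ for every $\lambda>0$, and moreover $\lim_{\lambda\to\infty}\tau(E^{H_\phi}_\lambda)=0$. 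Either statement is exactly condition $(vi)$ (respectively $(v)$) of Proposition \ref{equivalencetaumeasurability}, which is equivalent to $H_\phi$ being $\tau$-measurable; taking for instance $\lambda_0=1$ already suffices to exhibit a value of $\lambda$ with $\tau(E^{H_\phi}_{\lambda_0})<\infty$.

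There is essentially no serious obstacle here: the analytic content is entirely carried by the preceding lemma, and the only points requiring a word of care are bookkeeping ones — identifying $\phi(\mathbbm{1})$ with $\|\phi\|$ via Proposition \ref{PPC}, matching the spectral-projection notation of the lemma with that of Proposition \ref{equivalencetaumeasurability}, and confirming that $H_\phi$ genuinely qualifies as a closed densely defined operator affiliated with $\nalgebra\rtimes\{\sigma^{\phi_0}_t\}$ so that $\tau$-measurability is even meaningful for it.
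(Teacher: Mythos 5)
Your proof is correct and is exactly the argument the paper intends: the lemma $\tau\bigl(E^{H_\phi}_{\lambda}\bigr)=\tfrac{1}{\lambda}\phi(\mathbbm{1})$ is stated precisely so that, once $\phi\in\nalgebra_\ast$ gives $\phi(\mathbbm{1})=\|\phi\|<\infty$, condition $(vi)$ (equivalently $(v)$) of Proposition \ref{equivalencetaumeasurability} applies to the positive self-adjoint operator $H_\phi$ affiliated with $\nalgebra\rtimes\{\sigma^{\phi_0}_t\}$ and the canonical trace of Lemma \ref{Tcrossedproduct}. The paper itself omits the proof in this section, and your bookkeeping — matching $E^{H_\phi}_\lambda$ with $E_{(\lambda,\infty)}$ and checking that $\tau$-measurability is meaningful for $H_\phi$ — fills it in faithfully.
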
 

\begin{theorem}
	\label{identificationmeasurable}
	The mapping $\phi \mapsto H_\phi$ extends to a linear bijection of $\nalgebra_\ast$ onto
	$$\left\{H\in\left(\nalgebra\rtimes\{\sigma^{\phi_0}_t\}\right)_\tau \ \middle| \ \theta_t H=e^{-t}H, \ \forall t \in \mathbb{R} \right\}.$$
\end{theorem}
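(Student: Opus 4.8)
The plan is to build the linear bijection by combining the two preceding lemmas in this section with the $\tau$-measurability corollary just stated. The starting point is the established bijection $\phi \mapsto H_\phi$ between normal semifinite weights on $\nalgebra$ and positive self-adjoint operators $H$ affiliated with $\nalgebra\rtimes\{\sigma^{\phi_0}_t\}$ satisfying $\theta_t H = e^{-t} H$. The strategy is to restrict this correspondence to the positive normal functionals $\nalgebra_\ast^+$, show that the associated operators land precisely in the $\tau$-measurable ones satisfying the covariance relation, and then extend everything by linearity using the decomposition of an arbitrary normal functional into its positive and negative (real and imaginary) parts.

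First I would handle the positive cone. Given $\phi \in \nalgebra_\ast^+$, the preceding corollary already tells us $H_\phi$ is $\tau$-measurable, and the previous proposition guarantees $\theta_t H_\phi = e^{-t} H_\phi$, so $H_\phi$ lands in the target set. For the reverse direction, I would take a positive self-adjoint $\tau$-measurable $H$ with $\theta_t H = e^{-t} H$; by the bijection for weights it equals $H_\phi$ for a unique normal semifinite weight $\phi$, and I must verify $\phi \in \nalgebra_\ast^+$, i.e. that $\phi$ is bounded. Here the lemma computing $\tau\left(E^{H_\phi}_\lambda\right) = \tfrac{1}{\lambda}\phi(\mathbbm{1})$ is the key leverage: $\tau$-measurability of $H$ forces $\tau\left(E^{H}_{(\lambda,\infty)}\right)<\infty$ for some $\lambda$ (by Proposition \ref{equivalencetaumeasurability}), which bounds $\phi(\mathbbm{1})$ and hence shows $\phi$ is a bounded normal functional. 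This establishes the bijection on positive parts.

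Next I would pass to general $\phi \in \nalgebra_\ast$. Every normal functional decomposes as $\phi = (\phi_1 - \phi_2) + \iu(\phi_3 - \phi_4)$ with $\phi_j \in \nalgebra_\ast^+$, so I would define $H_\phi = (H_{\phi_1} - H_{\phi_2}) + \iu(H_{\phi_3} - H_{\phi_4})$, interpreting the arithmetic inside the $\ast$-algebra $\left(\nalgebra\rtimes\{\sigma^{\phi_0}_t\}\right)_\tau$ of $\tau$-measurable operators using the strong sum $\bm{+}$ and its closure, whose algebraic completeness is exactly Proposition \ref{measurablealgebra}. Linearity of the extension follows because $\tau$-measurable operators form a vector space and the defining covariance relation $\theta_t H = e^{-t} H$ is preserved under $\bm{+}$ and scalar multiplication. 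Injectivity follows from injectivity on the positive cone together with uniqueness of the functional decomposition, and surjectivity follows because any $\tau$-measurable $H$ with $\theta_t H = e^{-t} H$ can be split into self-adjoint and anti-self-adjoint parts, each then into positive and negative spectral pieces lying in the target set, each realized by a bounded positive functional.

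The main obstacle I anticipate is the well-definedness of the extended map, namely checking that the prescription $H_\phi = (H_{\phi_1} - H_{\phi_2}) + \iu(H_{\phi_3} - H_{\phi_4})$ is independent of the chosen decomposition and genuinely linear at the level of unbounded affiliated operators; the subtlety is that the correspondence $\phi \mapsto H_\phi$ for weights is only additive on the positive cone, so I must verify that the passage to differences commutes with the strong operations $\bm{+}$, which requires that the operators $H_{\phi_j}$ all have $\tau$-dense domains whose common refinement remains $\tau$-dense (guaranteed by Corollary \ref{intersec2taudense}) and that the identity $\tilde{\phi} = \tau(H_\phi\,\cdot\,)$ extends linearly through these differences. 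Controlling the interaction between the additive structure of dual weights and the closure operations defining sums of measurable operators is where the real work lies.
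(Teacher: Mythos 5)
You should first note that the thesis itself contains no proof of this theorem: the section on Haagerup's $L_p$-spaces explicitly announces that results are presented without proofs and defers to Haagerup and to Terp's notes, so there is no ``paper proof'' to diverge from. Measured against the standard argument (essentially Terp's), your outline is the right one: restrict to $\nalgebra_\ast^+$, where the preceding proposition gives the covariance $\theta_t H_\phi = e^{-t}H_\phi$ and the corollary gives $\tau$-measurability; conversely, for a positive $\tau$-measurable $H$ with $\theta_t H = e^{-t}H$ the weight bijection produces $\phi$ with $H = H_\phi$, and your use of the distribution formula $\tau\bigl(E^{H_\phi}_\lambda\bigr)=\tfrac{1}{\lambda}\phi(\mathbbm{1})$ together with Proposition \ref{equivalencetaumeasurability} to force $\phi(\mathbbm{1})<\infty$ is exactly the correct leverage. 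Your surjectivity argument is also sound: since $\theta_t$ is a $\ast$-automorphism scaling $\tau$ by $e^{-t}$, it preserves $\tau$-measurability, adjoints, and functional calculus, so the hermitian parts and the spectral pieces $K_\pm$ of a covariant $H$ again satisfy $\theta_t K_\pm = e^{-t}K_\pm$ and are realized by bounded positive functionals.

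Two points, however, need repair. First, the crux you flag at the end --- well-definedness and linearity of the extension --- is not merely a bookkeeping issue about $\tau$-dense domains: it reduces to the additivity $H_{\phi+\psi}=H_\phi \bm{+} H_\psi$ on $\nalgebra_\ast^+$, equivalently to the identity $\tau\bigl((a\bm{+}b)\,x\bigr)=\tau(a\,x)+\tau(b\,x)$ for positive $\tau$-measurable $a,b$, where the strong sum is a closure of an a priori non-closed operator. Nothing in the thesis (Proposition \ref{measurablealgebra}, Corollary \ref{intersec2taudense}) establishes this; it is a genuine lemma about the trace on measurable operators (provable by cutting with spectral projections $E^{a\bm{+}b}_{[0,n)}$ and using normality of $\tau$), and without it your definition $H_\phi = (H_{\phi_1}-H_{\phi_2})+\iu(H_{\phi_3}-H_{\phi_4})$ is not known to be independent of the decomposition. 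Second, your injectivity argument as stated is wrong: the decomposition of a normal functional into four positive parts is \emph{not} unique, so ``uniqueness of the functional decomposition'' cannot be invoked. The correct route is: once well-definedness and linearity are secured, one checks $H_{\phi^\ast}=H_\phi^\ast$, reduces to hermitian $\phi$ with Jordan decomposition $\phi=\phi_+-\phi_-$, and concludes from $H_\phi=0$ that $H_{\phi_+}=H_{\phi_-}$, hence $\phi_+=\phi_-$ by injectivity of the dual-weight correspondence on positives, hence $\phi=0$. With these two repairs your reconstruction is a complete and faithful version of the omitted proof.
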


The previous theorem suggests the following definition:

\begin{definition}[Haagerup's $L_p$-spaces]\index{noncommutative $L_p$-space! Haagerup}
	Let $\nalgebra$ be a von Neumann algebra, $\phi_0$ a nornal faithful semifinite weight on $\nalgebra$ and $1\leq p\leq\infty$, we define	
	$$L_p(\nalgebra)\doteq\left\{H\in\left(\nalgebra\rtimes\{\sigma^{\phi_0}_t\}\right)_\tau \ \middle| \ \theta_t H=e^{-\frac{t}{p}}H, \ \forall t \in \mathbb{R} \right\}.$$
\end{definition}  

A first important and non trivial comment is that this construction is independent of the choice of $\phi_0$.

In addition, by Theorem \ref{identificationmeasurable}, $L^1(\nalgebra)=\nalgebra_\ast$ and one can prove that $L^\infty(\nalgebra)=\nalgebra$.

Of course, much more is expected of a vector space to call it a $L_p$-space. Some of them will be enunciated below.

\begin{proposition}
	Let $1\leq p<\infty$ and let $A$ be a closed densely defined operator affiliated with $\nalgebra\rtimes\{\sigma^{\phi_0}_t\}$, $A=u|A|$ its polar decomposition. Then
	$$A\in L_p(\nalgebra)\Leftrightarrow u\in \nalgebra \textrm{ and }|A|^p \in L^1(\nalgebra).$$
\end{proposition}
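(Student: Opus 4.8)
The plan is to transfer the scaling condition defining $L_p(\nalgebra)$ through the polar decomposition, using that the dual action $\{\theta_t\}_{t\in\mathbb{R}}$ acts by $\ast$-automorphisms on the algebra of $\tau$-measurable operators and is compatible with both the functional calculus and the polar decomposition. Write $\mathcal{M}\doteq\nalgebra\rtimes\{\sigma^{\phi_0}_t\}$ and recall from Lemma \ref{Tcrossedproduct} that $\theta_t$ is a weakly continuous $\ast$-automorphism of $\mathcal{M}$ with $\tau\circ\theta_t=e^{-t}\tau$, and that the fixed-point algebra is $\pi_\alpha(\nalgebra)=\{B\in\mathcal{M}\mid\theta_t B=B \ \forall t\}$, identified with $\nalgebra$.

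First I would establish the preliminaries. Since $\tau\circ\theta_t=e^{-t}\tau$ forces $\theta_t(D(\varepsilon,\delta))\subset D(\varepsilon,e^{-t}\delta)$, the automorphism $\theta_t$ is continuous for the measure topology of Proposition \ref{measurablealgebra} and therefore extends uniquely to a $\ast$-automorphism of $\mathcal{M}_\tau$, which I still denote $\theta_t$. Being a $\ast$-automorphism, it commutes with the functional calculus of self-adjoint $\tau$-measurable operators, so for $B\in(\mathcal{M}_\tau)_+$ one has $\theta_t(B^{p})=\theta_t(B)^{p}$ and $\theta_t(B^{1/p})=\theta_t(B)^{1/p}$; moreover, by uniqueness of the polar decomposition, $\theta_t$ carries $A=u|A|$ to $\theta_t(A)=\theta_t(u)\,\theta_t(|A|)$ with $\theta_t(|A|)=|\theta_t(A)|\geq 0$ and $\theta_t(u)$ a partial isometry of the same initial support as $\theta_t(|A|)$. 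Finally, Proposition \ref{equivalencetaumeasurability} together with the identity $E^{|A|^{p}}_{(\lambda,\infty)}=E^{|A|}_{(\lambda^{1/p},\infty)}$ gives the equivalence $|A|\in\mathcal{M}_\tau\Leftrightarrow|A|^{p}\in\mathcal{M}_\tau$.

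For the implication ($\Leftarrow$), assume $u\in\pi_\alpha(\nalgebra)$, \ie, $\theta_t(u)=u$, and $|A|^{p}\in L^{1}(\nalgebra)$, \ie, $|A|^{p}\in\mathcal{M}_\tau$ with $\theta_t(|A|^{p})=e^{-t}|A|^{p}$. Then $|A|\in\mathcal{M}_\tau$ by the preliminary equivalence, hence $A=u|A|\in\mathcal{M}_\tau$. Applying the $p$-th root, $\theta_t(|A|)=\theta_t(|A|^{p})^{1/p}=(e^{-t}|A|^{p})^{1/p}=e^{-t/p}|A|$, and therefore $\theta_t(A)=\theta_t(u)\theta_t(|A|)=e^{-t/p}u|A|=e^{-t/p}A$, so $A\in L_p(\nalgebra)$. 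For the implication ($\Rightarrow$), assume $A\in L_p(\nalgebra)$, so $A\in\mathcal{M}_\tau$ and $\theta_t(A)=e^{-t/p}A$. Computing the polar decomposition of $\theta_t(A)$ in two ways, on one hand $\theta_t(A)=\theta_t(u)\,\theta_t(|A|)$, on the other $\theta_t(A)=e^{-t/p}A=u\,(e^{-t/p}|A|)$; since $e^{-t/p}|A|\geq 0$ has the same support projection as $|A|$ (hence as $u^\ast u$), the second expression is again a polar decomposition. By uniqueness I conclude $\theta_t(u)=u$ and $\theta_t(|A|)=e^{-t/p}|A|$ for every $t$. The first equality places $u$ in $\pi_\alpha(\nalgebra)=\nalgebra$; raising the second to the $p$-th power yields $\theta_t(|A|^{p})=e^{-t}|A|^{p}$, while $|A|^{p}\in\mathcal{M}_\tau$ by the preliminary equivalence, so $|A|^{p}\in L^{1}(\nalgebra)$.

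The main obstacle is the preliminary step: rigorously extending $\theta_t$ to a $\ast$-automorphism of $\mathcal{M}_\tau$ that is compatible with the functional calculus and with the polar decomposition. Once this is secured, both directions reduce to the uniqueness of the polar decomposition and the elementary identity $\theta_t(|A|)=e^{-t/p}|A|$, which are then routine.
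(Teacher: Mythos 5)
Your proof is correct. There is, however, no in-paper argument to compare it against: the paper explicitly presents the Haagerup construction ``without proofs,'' and this proposition is quoted from the literature (Haagerup \cite{haagerup79} and Terp's notes \cite{terp81}). Your argument is, in substance, the standard one from those references: extend the dual action to the $\tau$-measurable operators, note that $\theta_t$ scales $|A|$ by $e^{-t/p}$ exactly when it scales $|A|^p$ by $e^{-t}$, and split the scaling condition along the polar decomposition by uniqueness. Both implications are carried out correctly, and your reduction of the $\tau$-measurability of $|A|^p$ to that of $|A|$ via $E^{|A|^p}_{(\lambda,\infty)}=E^{|A|}_{(\lambda^{1/p},\infty)}$ together with criterion $(vi)$ of Proposition \ref{equivalencetaumeasurability} is exactly right, as is your verification that $\tau\circ\theta_t=e^{-t}\tau$ gives $\theta_t\left(D(\varepsilon,\delta)\right)\subset D(\varepsilon,e^{-t}\delta)$ and hence continuity of $\theta_t$ in the measure topology.

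One refinement on the step you yourself flag as the main obstacle. Extending $\theta_t$ by measure-topology continuity works, but the abstract extension still requires a (doable, yet nontrivial) argument that it commutes with the Borel functional calculus of \emph{unbounded} positive operators. It is cleaner to observe that the dual action is spatially implemented: $\theta_s=W_s\,\cdot\,W_s^\ast$ with $(W_s\xi)(t)=e^{-\iu st}\xi(t)$ on $L_2(\mathbb{R},\hilbert)$, which is precisely what the characterization $\theta_s\lambda(t)=e^{-\iu st}\lambda(t)$, $\theta_s A=A$ for $A\in\pi_\alpha(\nalgebra)$ encodes. Then $\theta_s(A)\doteq W_sAW_s^\ast$ is defined directly for every closed densely defined operator affiliated with $\nalgebra\rtimes\{\sigma^{\phi_0}_t\}$, carries spectral resolutions to spectral resolutions and polar decompositions to polar decompositions with no further work, and agrees with your continuous extension on $\left(\nalgebra\rtimes\{\sigma^{\phi_0}_t\}\right)_\tau$. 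With that substitution your two implications are complete: $\theta_t(u)=u$ and $\theta_t\left(|A|^p\right)=e^{-t}|A|^p$ yield $\theta_t(A)=e^{-t/p}A$; conversely, uniqueness of the polar decomposition of $\theta_t(A)=e^{-t/p}A=u\left(e^{-t/p}|A|\right)$ forces $\theta_t(u)=u$, hence $u\in\pi_\alpha(\nalgebra)$ by the fixed-point characterization, and $\theta_t(|A|)=e^{-t/p}|A|$.
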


\begin{definition}
	Consider the function given by
	$tr(H_\phi)=\phi(\mathbbm{1})$, $\phi\in \nalgebra_\ast$. Define
	$$\|A\|_p=tr\left(|A|^p\right)^{\frac{1}{p}}, \ A \in L_p(\nalgebra).$$
	
\end{definition}

\begin{theorem}[H\"older and Minkowski's Inequalities \index{inequality! H\"older} \index{inequality! Minkowski}]
	\label{holderminkowskihaagerup}
	Let $\nalgebra$ be a von Neumann algebra and $p,\, q>1$ such that $\frac{1}{p}+\frac{1}{q}=1$. Then
	\begin{enumerate}[(i)]
		\item $\displaystyle \|AB\|_1\leq \|A\|_p\|B\|_q$, for all $A\in L^p(\nalgebra)$ and for all $B\in L^q(\nalgebra)$;
		\item $\displaystyle\|A\|_p=\sup\left\{|tr(AB)| \ \middle| \ B\in L_q(\nalgebra), \ \|B\|_q\leq1\right\}$;
		\item $\displaystyle \|A+B\|_p\leq \|A\|_p+\|B\|_p$, for all $A,B\in L^p(\nalgebra)$.
	\end{enumerate}
\end{theorem}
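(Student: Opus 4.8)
The plan is to reduce all three inequalities to the tracial inequalities already established for the Segal--Dixmier $L_p$-spaces of the crossed product $\nalgebra\rtimes\{\sigma^{\phi_0}_t\}$ with its canonical trace $\tau$ (Theorem \ref{holder}, Corollary \ref{g2holder}, Theorem \ref{minkowski}), while carrying the finite Haagerup functional $tr$ through every step in place of $\tau$. First I would record the algebraic closure property that makes the statements meaningful: since $\left(\nalgebra\rtimes\{\sigma^{\phi_0}_t\}\right)_\tau$ is a $*$-algebra by Proposition \ref{measurablealgebra} and each dual-action map $\theta_t$ is a $*$-automorphism with $\theta_t(AB)=\theta_t(A)\theta_t(B)$, any $A\in L_p(\nalgebra)$ and $B\in L_q(\nalgebra)$ with $\frac1p+\frac1q=\frac1r$ produce a $\tau$-measurable product obeying $\theta_t(AB)=e^{-t/r}AB$, hence $AB\in L_r(\nalgebra)$. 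In particular $AB\in L_1(\nalgebra)$ when $\frac1p+\frac1q=1$, so that $tr(AB)$ and $\|AB\|_1=tr(|AB|)$ are well defined.

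The crux is the passage between the two ``integrations''. I would isolate as a lemma the compatibility of $tr$ with the $\tau$-measurable functional calculus: for $H\in L_1(\nalgebra)_+$ corresponding to $\phi\in\nalgebra_\ast^+$ under the bijection $H\mapsto H_\phi$ of Theorem \ref{identificationmeasurable}, $tr(H)=\phi(\mathbbm{1})$ defines a faithful positive functional on $L_1(\nalgebra)$ that is tracial, i.e.\ $tr(XY)=tr(YX)$ whenever $XY,YX\in L_1(\nalgebra)$, and that is monotone and normal on $L_1(\nalgebra)_+$. This is where the scaling relation $\tau\circ\theta_t=e^{-t}\tau$ of Lemma \ref{Tcrossedproduct} and the eigenvalue relations $\theta_tH_\phi=e^{-t/p}H_\phi$ do the work, since $\tau$ itself is identically $+\infty$ on the Haagerup $L_p$-elements and therefore cannot be used directly.

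With this lemma in hand, Hölder $(i)$ mirrors the proof of Theorem \ref{holder}. Writing the polar decompositions $A=u|A|$, $B=v|B|$, $AB=w|AB|$ and the spectral truncations $|A|_n^p$, $|B|_n^q$ built from the spectral resolutions of $|A|$, $|B|$, I would form the entire analytic function
$$f_n(z)=tr\!\left(w^\ast u\,(|A|_n^{p})^{z}\,v\,(|B|_n^{q})^{1-z}\right),$$
check its boundedness on the strip $\{0\le\Re z\le1\}$ (each truncation lands in a Haagerup space on which $tr$ is finite, by the scaling relation and Proposition \ref{equivalencetaumeasurability}), and apply the Three-Line Theorem to obtain $|f_n(\tfrac1p)|\le tr(|A|_n^{p})^{1/p}\,tr(|B|_n^{q})^{1/q}$; passing to the monotone SOT-limit then yields $tr(|AB|)\le\|A\|_p\|B\|_q$, which is $(i)$ with $r=1$. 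The variational identity $(ii)$ follows exactly as in Theorem \ref{minkowski}$(i)$: the bound ``$\ge$'' is immediate from $(i)$, and the reverse is realized by the explicit optimizer $B=|A|^{p-1}u^\ast/\|A\|_p^{p-1}$, for which one verifies $\|B\|_q\le1$ and $tr(AB)=\|A\|_p$ using the trace property of the lemma. Finally $(iii)$ is formal from $(ii)$, since $\|A+B\|_p=\sup_{\|C\|_q\le1}|tr((A+B)C)|\le\sup_{\|C\|_q\le1}|tr(AC)|+\sup_{\|C\|_q\le1}|tr(BC)|=\|A\|_p+\|B\|_p$.

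The main obstacle is precisely the second paragraph: honestly establishing that $tr$ behaves like a trace on $L_1(\nalgebra)$ and transfers the analytic-interpolation machinery to the Haagerup setting, because the elements of $L_p(\nalgebra)$ are not $\tau$-integrable and the truncation--limit argument must be justified entirely through $tr$ and the $\theta_t$-eigenvalue structure rather than by a direct appeal to Corollary \ref{g2holder} for $\tau$. Once that bridge is built, the three inequalities are near-verbatim transcriptions of the Segal--Dixmier proofs, and the duality $L_p(\nalgebra)^\ast=L_q(\nalgebra)$ runs in parallel to Theorem \ref{dualLp}.
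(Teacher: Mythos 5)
You should first note that the paper does not actually prove this theorem: the Haagerup section is explicitly a survey, and the statement is imported from \cite{haagerup79} and \cite{terp81} without proof. So the comparison here is against the standard literature proof. Your overall architecture is the right one and matches it in spirit: the tracial property of $tr$ on $L_1(\nalgebra)$ is indeed the crucial bridge, the explicit optimizer $B=|A|^{p-1}u^\ast/\|A\|_p^{p-1}$ in $(ii)$ is correct (its homogeneity check $\theta_t(|A|^{p-1})=e^{-t/q}|A|^{p-1}$ works since $(p-1)/p=1/q$, and $tr(AB)=\|A\|_p$ follows from the trace lemma), and deducing $(iii)$ formally from $(ii)$ is exactly how it is done.

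The genuine gap is the truncation step in your Hölder argument. In the Haagerup setting, every nonzero element of $L_p(\nalgebra)$ with $p<\infty$ is an \emph{unbounded} operator: by the unnumbered lemma in the paper giving $\tau\bigl(E^{H_\phi}_\lambda\bigr)=\frac{1}{\lambda}\phi(\mathbbm{1})$, the spectral projections of $|A|^p$ above every threshold have strictly positive trace, so $L_p(\nalgebra)\cap\nalgebra=\{0\}$. Moreover, since $\theta_t(|A|)=e^{-t/p}|A|$, the dual action moves spectral projections, $\theta_t\bigl(E^{|A|}_{(a,b)}\bigr)=E^{|A|}_{(e^{-t/p}a,\,e^{-t/p}b)}$, so a spectral truncation of $|A|$ is not $\theta$-homogeneous and lies in no $L_r(\nalgebra)$ at all; hence $tr(|A|_n^p)$ is simply undefined, and the normalizing constant $\tau(|A|^p)$ from the Segal--Dixmier truncation is $+\infty$. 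Your claim that ``each truncation lands in a Haagerup space on which $tr$ is finite'' is therefore false, and the monotone-limit passage collapses with it. The repair is to abandon truncation and work with the untruncated analytic family directly: for every $z$ in the closed strip one has $\theta_t\bigl(w^\ast u(|A|^p)^z v(|B|^q)^{1-z}\bigr)=e^{-tz}e^{-t(1-z)}\,w^\ast u(|A|^p)^z v(|B|^q)^{1-z}=e^{-t}(\cdot)$, so the product lies in $L_1(\nalgebra)$ and $tr$ applies; but analyticity and boundedness of $z\mapsto tr(\cdot)$ must then be extracted from the isometry $L_1(\nalgebra)\cong\nalgebra_\ast$ and continuity in the measure topology (Terp does this via dense sets of elements of the form $H_{\phi_0}^{1/2p}xH_{\phi_0}^{1/2p}$ with $x$ analytic for the modular group), which is genuinely new work rather than a transcription of Theorem \ref{holder}.
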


\begin{proposition} \, \hspace{10cm}
	\begin{enumerate}[(i)]
		\item The function $\|\cdot\|_p:L_p(\nalgebra)\to\mathbb{R}_+$ is a norm and $\left(L_p(\nalgebra),\|\cdot\|_p\right)$ is a Banach space;
		\item the topology induced on $L_p(\nalgebra)$ as a subspace of $\left(\nalgebra\rtimes\{\sigma^{\phi_0}_t\}\right)_\tau$ coincides with the $\|\cdot\|_p$-topology;
		\item 	let $p,\, q\geq 1$ such that $\frac{1}{p}+\frac{1}{q}=1$, then $L_p(\nalgebra)$ and $L_q(\nalgebra)$ form a dual pair with respect to the bilinear form
		$$\begin{aligned}
		(\cdot,\cdot): &L_p(\nalgebra)\times L_q(\nalgebra)&\to&\hspace{5 mm}\mathbb{C} \\
		&\hspace{9 mm} (A,B)&\mapsto&\hspace{1 mm}tr(AB).\\
		\end{aligned}$$
	\end{enumerate}
\end{proposition}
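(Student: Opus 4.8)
The plan is to treat the three assertions in the order (i)-norm axioms, then (ii), then completeness in (i), and finally (iii), since completeness is most cleanly obtained once the two topologies have been identified. First I would dispatch the algebraic norm axioms in (i): positive homogeneity is immediate from the definition $\|A\|_p=tr(|A|^p)^{1/p}$, while $\|A\|_p=0\Rightarrow A=0$ follows from faithfulness of the trace. The triangle inequality is precisely item (iii) of Theorem \ref{holderminkowskihaagerup}, so no new work is needed there; at this stage $\|\cdot\|_p$ is a genuine norm and the $\|\cdot\|_p$-topology is well defined.

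The heart of the argument is (ii). Since every $A\in L_p(\nalgebra)$ is a $\tau$-measurable operator affiliated with $\nalgebra\rtimes\{\sigma^{\phi_0}_t\}$, I would compare the basic neighbourhoods $N(\varepsilon,\delta)$ of Proposition \ref{measurablealgebra} with the $\|\cdot\|_p$-balls by means of the $\tau$-distribution of $|A|$. Concretely, I would establish a Chebyshev-type estimate of the form $\tau\!\left(E^{|A|}_{(\lambda,\infty)}\right)\le\lambda^{-p}\|A\|_p^{p}$, which shows that $\|A_n\|_p\to 0$ forces $A_n\to 0$ in the measure topology. For the converse direction I would use normality of the trace together with the generalized singular-value representation of $\|A\|_p$ to show that, on a measure-convergent net lying in $L_p(\nalgebra)$, the norm is controlled, giving measure $\Rightarrow$ norm convergence as well. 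The delicate point here — and what I expect to be the main obstacle — is that the Haagerup trace $tr$ on $L_1(\nalgebra)=\nalgebra_\ast$ is a priori a different object from the canonical trace $\tau$ on the crossed product; reconciling the two through the scaling relation $\tau\circ\theta_t=e^{-t}\tau$ and the identification $\phi\mapsto H_\phi$ of Theorem \ref{identificationmeasurable} is where the real content lies.

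With (ii) in hand, completeness in (i) is essentially formal. The defining relation $\theta_t H=e^{-t/p}H$ is closed in the measure topology because each $\theta_t$ is continuous, so $L_p(\nalgebra)$ is a closed subspace of the complete topological $\ast$-algebra $\left(\nalgebra\rtimes\{\sigma^{\phi_0}_t\}\right)_\tau$ of Proposition \ref{measurablealgebra}. A $\|\cdot\|_p$-Cauchy sequence is therefore Cauchy in measure by (ii), converges in measure to some measurable $A$, which lies in $L_p(\nalgebra)$ by closedness, and the convergence is again in $\|\cdot\|_p$ by (ii); hence $\left(L_p(\nalgebra),\|\cdot\|_p\right)$ is a Banach space.

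Finally, for (iii) I would note that item (i) of Theorem \ref{holderminkowskihaagerup} gives $|tr(AB)|\le\|A\|_p\|B\|_q$, so the bilinear form is well defined and separately continuous. Non-degeneracy then follows from item (ii) of the same theorem, since $\|A\|_p=\sup\{|tr(AB)|:\|B\|_q\le1\}$ shows that $tr(AB)=0$ for all $B\in L_q(\nalgebra)$ forces $\|A\|_p=0$, hence $A=0$; the symmetric statement separates the other variable, exhibiting $L_p(\nalgebra)$ and $L_q(\nalgebra)$ as a dual pair with respect to $(A,B)\mapsto tr(AB)$.
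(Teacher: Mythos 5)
Your proposal is sound, but note first that the paper itself offers no proof of this proposition: it sits in the survey section on Haagerup's construction, which the author explicitly frames as a sketch, deferring all proofs to Terp \cite{terp81} and Haagerup \cite{haagerup79}. Measured against the standard argument in those references, your route is essentially the canonical one, with a single refinement worth making. Your Chebyshev-type estimate is in fact an exact identity: for $A\in L_p(\nalgebra)$ one has $|A|^p\in L_1(\nalgebra)$, and the lemma in the paper stating $\tau\bigl(E^{H_\phi}_\lambda\bigr)=\frac{1}{\lambda}\phi(\mathbbm{1})$ gives
$$\tau\left(E^{|A|}_{(\lambda,\infty)}\right)=\tau\left(E^{|A|^p}_{(\lambda^p,\infty)}\right)=\lambda^{-p}\,tr\left(|A|^p\right)=\lambda^{-p}\|A\|_p^p .$$
This identity \emph{is} the reconciliation of $tr$ with $\tau$ that you flag as the delicate point, and once you have it the converse direction of (ii) is not delicate at all: for $A_n-A\in L_p(\nalgebra)$, membership in the neighbourhood $N(\varepsilon,\delta)$ of Proposition \ref{measurablealgebra} is, by the displayed identity, literally equivalent to $\|A_n-A\|_p\leq\varepsilon\,\delta^{1/p}$, so the two topologies coincide on $L_p(\nalgebra)$ with no appeal to normality or singular-value estimates needed. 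Your completeness argument then goes through as written: each $\theta_t$ scales $\tau$ by $e^{-t}$, hence maps $D(\varepsilon,\delta)$ onto $D(\varepsilon,e^{t}\delta)$ and is a homeomorphism for the measure topology, so the defining condition $\theta_tH=e^{-t/p}H$ cuts out a closed subset of the complete Hausdorff algebra $\left(\nalgebra\rtimes\{\sigma^{\phi_0}_t\}\right)_\tau$, and finiteness of $\|A\|_p$ for the limit is automatic since $\theta_t|A|^p=e^{-t}|A|^p$ places $|A|^p$ in $L_1(\nalgebra)=\nalgebra_\ast$ by Theorem \ref{identificationmeasurable}.

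One small gap in (iii): Theorem \ref{holderminkowskihaagerup}$(ii)$ as stated separates points only in the first variable. To separate the second variable of the form $(A,B)\mapsto tr(AB)$ you either need the tracial property $tr(AB)=tr(BA)$ — true in Haagerup's theory but nowhere stated in this paper — or you should invoke the norm formula with the roles of $p$ and $q$ exchanged and remark explicitly why that controls $tr(AB)$ rather than $tr(BA)$. A one-line justification of this symmetry would close the argument.
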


The results stated above make Haagerup's noncommutative $L_p$-spaces Banach spaces satisfying a duality relation for H\"older conjugated indices spaces.

We finish this section mentioning that, as always, traces come out in the construction and play a central role in the theory.

\section{The Araki-Masuda Noncommutative $L_p$-Spaces }

This section is devoted to present the definition of noncommutative $L_p$-spaces as suggested by Araki and Masuda in \cite{Araki82}, through Tomita-Takesaki modular operator theory. The great advantage of this approach is that its construction is based on the Hilbert space.

\begin{definition}\index{noncommutative $L_p$-space! Araki-Masuda}
Let $\nalgebra$ be a von Neumann algebra and $\Omega$ a cyclic and separating vector.
\begin{enumerate}[(i)]
\item for $1\leq p \leq 2$, define $L_p(\nalgebra,\Omega)$ as the competition of $\left(\hilbert, \|\cdot\|_p^\Omega\right)$, where
\begin{equation}
\label{normLp1}
\|\xi\|_p^\Omega=\inf\left\{\left\|\Delta_{\Phi, \Omega}^{\frac{1}{2}-\frac{1}{p}} \xi\right\| \ \middle| \ \|\Phi\|=1, \ s^\nalgebra(\Phi) \geq s^\nalgebra(\xi) \textrm{ and } \xi \in \Dom{\Delta_{\Phi \Omega}^{\frac{1}{2}-\frac{1}{p}}}\right\};
\end{equation}
\item for $2\leq p \leq \infty$, define
$$L_p(\nalgebra,\Omega)=\left\{\xi \in \bigcap_{\Phi\in \hilbert} \Dom{\Delta_{\Phi, \Omega}^{\frac{1}{2}-\frac{1}{p}}} \ \middle| \ \|\xi\|_p^\Omega<\infty \right\}$$
and the norm on this vector space by
\begin{equation}
\label{normLp2}
\|\xi\|_p^\Omega=\sup\left\{\left\|\Delta_{\Phi, \Omega}^{\frac{1}{2}-\frac{1}{p}}\xi\right\| \ \middle| \ \|\Phi\|=1 \right\}.
\end{equation}

\end{enumerate}
\end{definition}
We are dealing with $\|\cdot\|_p^\Omega$ as if it is a norm, which in fact it is stated in the next results.

\begin{lemma}
\label{LemmaModularDecreasing}
Let $\nalgebra$ be a von Neumann algebra, $\phi_1, \psi_2$ and $\psi$ weights and suppose that $\phi_1\leq\phi_2$, then $\|\Delta_{\phi_2\psi}\xi\|\leq \|\Delta_{\phi_1\psi}\xi\|$ for any $\xi \in \Dom{\Delta_{\phi_1\psi}}\cap\Dom{\Delta_{\phi_2\psi}}$.
\end{lemma}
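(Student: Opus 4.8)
The plan is to read the stated estimate off a comparison of the two relative modular operators as positive operators. The essential observation is that in $\Delta_{\phi\psi}$ the first weight occupies the ``denominator'' slot of the polar decomposition $S_{\phi,\psi}=J_{\phi\psi}\Delta_{\phi\psi}^{1/2}$, so that $\phi\mapsto\Delta_{\phi\psi}$ is order-\emph{reversing}; concretely I would first prove the operator inequality $\Delta_{\phi_2\psi}\le\Delta_{\phi_1\psi}$ and then pass from it to the norm bound of Lemma \ref{LemmaModularDecreasing} on the common domain $\Dom{\Delta_{\phi_1\psi}}\cap\Dom{\Delta_{\phi_2\psi}}$. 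The reversal of the order is exactly what produces the direction $\|\Delta_{\phi_2\psi}\xi\|\le\|\Delta_{\phi_1\psi}\xi\|$ demanded by the hypothesis $\phi_1\le\phi_2$.

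To obtain the operator inequality I would use the balanced weight of Definition \ref{balancedweight} together with the concrete description $S_{\phi,\psi}(A\Phi)=s^{\nalgebra^\prime}(\psi)A^\ast\Psi$ derived at the end of the Relative Modular Theory section, and the identity $\ip{\xi}{\Delta_{\phi\psi}\xi}=\|S_{\phi,\psi}\xi\|^2$ for $\xi\in\Dom{S_{\phi,\psi}}$, which follows from $S_{\phi,\psi}=J_{\phi\psi}\Delta_{\phi\psi}^{1/2}$ and $J_{\phi\psi}^\ast J_{\phi\psi}=\mathbbm{1}$ (cf. Proposition \ref{MOProp}). The hypothesis $\phi_1\le\phi_2$ is then fed in through Sakai--Radon--Nikodym (Theorem \ref{TSRN}): there is $H\in\nalgebra$ with $0\le H\le\mathbbm{1}$ and $\phi_1(\,\cdot\,)=\phi_2(H\,\cdot\,H)$. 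The goal is to manufacture from $H$ a contraction $c\in\nalgebra$, $\|c\|\le1$, intertwining the two anti-linear operators as $S_{\phi_2,\psi}=c\,S_{\phi_1,\psi}$ on a common core; this is the natural abstract replacement for the finite-dimensional computation in which $S_{\phi,\psi}$ acts as $X\mapsto \rho_\phi^{-1/2}X^\ast\rho_\psi^{1/2}$ and $c$ corresponds to the contraction $\rho_{\phi_2}^{-1/2}\rho_{\phi_1}^{1/2}$, whose contractivity is precisely $\rho_{\phi_1}^{1/2}\rho_{\phi_2}^{-1}\rho_{\phi_1}^{1/2}\le\mathbbm{1}$. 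Contractivity of $c$ immediately gives $\|S_{\phi_2,\psi}\xi\|\le\|S_{\phi_1,\psi}\xi\|$, i.e. $\ip{\xi}{\Delta_{\phi_2\psi}\xi}\le\ip{\xi}{\Delta_{\phi_1\psi}\xi}$ on the core, and $\Delta_{\phi_2\psi}\le\Delta_{\phi_1\psi}$ then follows by closedness of the associated forms together with the monotone-limit machinery already in place for unbounded Radon--Nikodym data (Lemma \ref{unboundedderivativeweight} and Vigier's Theorem \ref{vigier}).

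The remaining and genuinely delicate step is to convert the operator ordering $\Delta_{\phi_2\psi}\le\Delta_{\phi_1\psi}$ into the norm comparison for the operators themselves; this is where I expect the main obstacle to lie, since $t\mapsto t^{1/2}$ is operator monotone whereas $t\mapsto t^2$ is not, so $\|\Delta_{\phi_2\psi}\xi\|^2=\ip{\xi}{\Delta_{\phi_2\psi}^2\xi}$ is not controlled by the form inequality alone. The ordering does yield directly $\|\Delta_{\phi_2\psi}^{\alpha}\xi\|\le\|\Delta_{\phi_1\psi}^{\alpha}\xi\|$ for every $0<\alpha\le\tfrac12$ — which already covers the powers $\Delta^{1/2-1/p}$, $1\le p\le2$, in which these operators enter the $L_p$-norm \eqref{normLp1}. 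To reach the full power as literally stated I would return to the intertwiner, factoring $\Delta_{\phi_2\psi}=S_{\phi_2,\psi}^\ast S_{\phi_2,\psi}$ through the relations $S_{\phi_2,\psi}=c\,S_{\phi_1,\psi}$ and $S_{\phi_2,\psi}^\ast=S_{\phi_1,\psi}^\ast c^\ast$ (using $\|c\|\le1$ and $0\le H\le\mathbbm{1}$) and tracking the estimate through $S_{\phi,\psi}$ and its adjoint $F_{\phi,\psi}=S_{\phi,\psi}^\ast$. The crux of the argument is to keep every factorization valid on $\Dom{\Delta_{\phi_1\psi}}\cap\Dom{\Delta_{\phi_2\psi}}$ without leaving it, which I would secure by a core argument based on the closedness of $S_{\phi_1,\psi}$ and $S_{\phi_2,\psi}$.
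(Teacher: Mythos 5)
You are right to be suspicious of your own last step, and that is exactly where the proposal breaks down. From the form inequality $\Delta_{\phi_2\psi}\leq\Delta_{\phi_1\psi}$ you can conclude $\|\Delta_{\phi_2\psi}^{\alpha}\xi\|\leq\|\Delta_{\phi_1\psi}^{\alpha}\xi\|$ only for $0<\alpha\leq\frac{1}{2}$ (operator monotonicity of $t\mapsto t^{2\alpha}$, Löwner--Heinz), whereas the lemma asks for $\alpha=1$, which amounts to $\Delta_{\phi_2\psi}^2\leq\Delta_{\phi_1\psi}^2$ --- and squaring does not preserve operator order. Your proposed rescue, factoring $\Delta_{\phi_2\psi}=S_{\phi_1,\psi}^\ast c^\ast c\,S_{\phi_1,\psi}$ and ``tracking the estimate through $S$ and its adjoint'', cannot close this gap: you would need a bound of the shape $\|S_{\phi_1,\psi}^\ast K\eta\|\leq\|S_{\phi_1,\psi}^\ast\eta\|$ for the positive contraction $K=c^\ast c$ at $\eta=S_{\phi_1,\psi}\xi$, and no such bound holds for an unbounded closed operator: a contraction can rotate $\eta$ out of a direction where the adjoint is small into one where it is arbitrarily large (already a diagonal operator on $\ell_2$ with eigenvalues $1$ and $n$, together with a contraction mixing the two eigenvectors, defeats it), and $K\eta$ need not even remain in $\Dom{S_{\phi_1,\psi}^\ast}$. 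So the argument is incomplete at its decisive point; what you prove rigorously is the $\alpha\leq\frac{1}{2}$ version --- which is, admittedly, the version actually consumed by the Araki--Masuda norms \eqref{normLp1} and \eqref{normLp2} --- but not the statement as written.

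The paper's own proof is unrelated to Radon--Nikodym intertwiners and never forms an operator inequality, which is precisely how it sidesteps the monotonicity obstruction. It works on the dense subspace $\eta_\psi\left(\mathfrak{N}_{\phi_1}^\ast\cap\mathfrak{N}_{\phi_2}^\ast\cap\mathfrak{N}_{\psi}\right)$, where the action of the relative modular operator is available in closed form; it writes $\|\Delta_{\phi_i\psi}\eta_\psi(A)\|=\|\eta_{\phi_i}(A)\|$, compares the two right-hand sides directly from $\phi_1\leq\phi_2$ via the infimum characterization $\|\eta_{\phi_i}(A)\|=\inf_{N\in N_{\phi_i}}\|A+N\|$ and the induced relation between the null ideals $N_{\phi_1}$ and $N_{\phi_2}$, and then extends the inequality to all of $\Dom{\Delta_{\phi_1\psi}}\cap\Dom{\Delta_{\phi_2\psi}}$ by density and closedness. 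Because the inequality is established pointwise on GNS vectors, where $\Delta_{\phi_i\psi}$ is evaluated explicitly rather than through its order properties, the full power comes for free, and neither Theorem \ref{TSRN} nor any contraction intertwiner is needed (note also that Theorem \ref{TSRN} requires normal functionals, an assumption the lemma does not make). If you want to salvage your route, the honest statement it yields is monotonicity of $\phi\mapsto\|\Delta_{\phi\psi}^{\alpha}\xi\|$ for $0<\alpha\leq\frac{1}{2}$; for the lemma as literally stated you should abandon the operator-ordering detour and prove the norm identity on GNS vectors directly, as the paper does.
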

\begin{proof}
For every $A \in \mathfrak{N}_{\phi_1}^\ast\cap\mathfrak{N}_{\phi_2}^\ast\cap\mathfrak{N}_{\psi}$ we have
$$\begin{aligned}
\|\Delta_{\phi_2\psi}\eta_\psi(A)\|
&=\|\eta_{\phi_2}(A)\|\\
&=\inf_{N \in N_{\phi_2}}\|A+N\|\\
&\leq\inf_{N \in N_{\phi_1}}\|A+N\|\\
&=\|\eta_{\phi_1}(A)\|\\
&=\|\Delta_{\phi_1\psi}\eta_\psi(A)\|.
\end{aligned}$$

Using now that $\mathfrak{N}_{\phi_1}^\ast\cap\mathfrak{N}_{\phi_2}^\ast\cap\mathfrak{N}_{\psi}$ is dense in $\Dom{\Delta_{\phi_1\psi}}\cap\Dom{\Delta_{\phi_2\psi}}$, the statement follows.

\end{proof}

\begin{proposition}
The function $\|\cdot\|_p^\Omega$ defined in equation \eqref{normLp1} is a norm for each $1\leq p< 2$ and \eqref{normLp2} is a norm for $2\leq p <\infty$. Furthermore, $\left(L_p(\nalgebra,\Omega), \|\cdot\|_p^\Omega\right)$ is a Banach space. 
\end{proposition}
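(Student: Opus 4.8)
The plan is to treat the two ranges of $p$ separately, since the norm is defined by a supremum for $2\le p<\infty$ and by an infimum for $1\le p<2$, and these demand genuinely different arguments. As a guiding anchor I note that at $p=2$ both formulas collapse to $\|\xi\|_2^\Omega=\|\xi\|$: the exponent $\frac12-\frac1p$ vanishes, $\Delta_{\Phi,\Omega}^{0}=\mathbbm{1}$, and a normalised $\Phi$ with $s^\nalgebra(\Phi)\geq s^\nalgebra(\xi)$ always exists (e.g.\ $\Phi=\Omega/\|\Omega\|$, whose support is $\mathbbm{1}$ since $\Omega$ is cyclic and separating), so $L_2(\nalgebra,\Omega)=\hilbert$ isometrically. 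This fixes conventions and will be reused below.

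First I would handle $2\le p<\infty$. For each fixed $\Phi$ with $\|\Phi\|=1$ the map $N_\Phi(\xi)=\left\|\Delta_{\Phi,\Omega}^{\frac12-\frac1p}\xi\right\|$ is a seminorm on $\Dom{\Delta_{\Phi,\Omega}^{\frac12-\frac1p}}$, being a linear operator followed by the Hilbert norm; since membership in $L_p(\nalgebra,\Omega)$ forces $\xi$ into every such domain and makes the supremum finite, $\|\cdot\|_p^\Omega=\sup_{\|\Phi\|=1}N_\Phi$ inherits absolute homogeneity and subadditivity (both pass to suprema). For definiteness I would specialise to $\Phi_0=\Omega/\|\Omega\|$, where $\Delta_{\Phi_0,\Omega}=\|\Omega\|^{-2}\Delta_\Omega$; as $\Omega$ is cyclic and separating, $\Delta_\Omega$ and hence $\Delta_\Omega^{\frac12-\frac1p}$ are injective for $p>2$, so $\|\xi\|_p^\Omega\geq N_{\Phi_0}(\xi)=0$ already forces $\xi=0$. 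Completeness is the only real work: I would first record a domination $\|\xi\|_p^\Omega\geq c\|\xi\|$ for some $c>0$ (the estimate $\|\xi\|_p^\Omega\,\|\Omega\|\geq\|\xi\|$, obtained by choosing a $\Phi$ adapted to $\xi$), so that a $\|\cdot\|_p^\Omega$-Cauchy sequence $(\xi_n)$ converges in $\hilbert$ to some $\xi$. Then, fixing $m$ and letting $n\to\infty$, I use that each $\Delta_{\Phi,\Omega}^{\frac12-\frac1p}$ is closed and that $\zeta\mapsto\left\|\Delta_{\Phi,\Omega}^{\frac12-\frac1p}\zeta\right\|$ is lower semicontinuous under $\hilbert$-convergence (a weak-compactness argument identifying the weak limit of bounded images through closedness). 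This places $\xi-\xi_m$ in every domain, gives $\xi\in L_p(\nalgebra,\Omega)$, and yields $\|\xi-\xi_m\|_p^\Omega=\sup_\Phi N_\Phi(\xi-\xi_m)\leq\liminf_n\|\xi_n-\xi_m\|_p^\Omega\to0$.

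For $1\le p<2$ homogeneity is immediate (scaling $\xi$ by $\lambda\neq0$ leaves the constraint set fixed, since $s^\nalgebra(\lambda\xi)=s^\nalgebra(\xi)$, and factors $|\lambda|$ out of the infimum) and $\|0\|_p^\Omega=0$. The triangle inequality for the infimum is the genuine difficulty. The naive plan — pick near-optimal $\Phi_1,\Phi_2$ for $\xi,\eta$ and build one $\Phi$ dominating both states so that $\left\|\Delta_{\Phi,\Omega}^{\frac12-\frac1p}\zeta\right\|\leq\left\|\Delta_{\Phi_i,\Omega}^{\frac12-\frac1p}\zeta\right\|$ via Lemma \ref{LemmaModularDecreasing} — fails exactly here: for $p<2$ the exponent $\frac12-\frac1p$ is negative, and since $x\mapsto x^{-s}$ is operator monotone decreasing, passing to a negative power reverses the ordering of Lemma \ref{LemmaModularDecreasing}, so dominating $\omega_{\Phi_1},\omega_{\Phi_2}$ by a single normalised state works against us. I would therefore resolve it by duality, establishing the variational identity
\[
\|\xi\|_p^\Omega=\sup\left\{\,\left|\ip{\eta}{\xi}\right| \ \middle|\ \eta\in\hilbert,\ \|\eta\|_q^\Omega\leq1\,\right\},\qquad \tfrac1p+\tfrac1q=1,\ q\geq2,
\]
which exhibits $\|\cdot\|_p^\Omega$ as a supremum of moduli of $\hilbert$-continuous linear functionals and hence, by the already settled $q\geq2$ case, as a genuine norm: subadditivity and homogeneity are then automatic, and definiteness reduces to separation of points by the pairing, which holds because $L_q(\nalgebra,\Omega)$ is dense in $\hilbert$. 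The inequality ``$\leq$'' follows from the definitions and H\"older-type bounds on $\Delta_{\Phi,\Omega}^{\iu t}$, while ``$\geq$'' uses a near-optimal $\Phi$ together with Riesz representation; I would prove it by interpolating the $p=1,2$ endpoints with the three-line theorem (Theorem \ref{TLT}, resting on Theorem \ref{PLT}) applied to $z\mapsto\ip{\eta}{\Delta_{\Phi,\Omega}^{z}\xi}$.

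I expect this duality identity to be the main obstacle, as it is where the negative exponent forces one to leave the direct modular-operator estimates and reduce the low range to the high range. Once it is in place, the Banach property needs no separate argument for $1\le p<2$: there $L_p(\nalgebra,\Omega)$ is by definition the completion of the normed space $(\hilbert,\|\cdot\|_p^\Omega)$, hence Banach by construction, while for $2\le p<\infty$ completeness was obtained directly in the second step.
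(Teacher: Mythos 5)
Your split into the two regimes and your entire $2\le p<\infty$ half track the paper's proof closely: the paper likewise declares homogeneity and subadditivity immediate for the supremum and obtains definiteness by taking $\Phi=\Omega$ and invoking injectivity of the modular operator, which is exactly your $\Phi_0$ step. Where you differ in that half, you differ for the better: the paper's completeness argument only fixes each $\Phi$, extracts the limit $f(\Phi)=\lim_n\Delta^{\frac{1}{2}-\frac{1}{p}}_{\Phi,\Omega}\xi_n$, and breaks off there, whereas your combination of the lower bound $\|\xi\|\le\|\Omega\|\,\|\xi\|_p^\Omega$ (an auxiliary estimate the thesis states separately but leaves unproved, and which you too only assert, via ``a $\Phi$ adapted to $\xi$'' -- this does need a genuine argument, since no single fixed $\Phi$ gives it) with closedness of each $\Delta_{\Phi,\Omega}^{\frac{1}{2}-\frac{1}{p}}$ and the resulting lower semicontinuity of $\zeta\mapsto\bigl\|\Delta_{\Phi,\Omega}^{\frac{1}{2}-\frac{1}{p}}\zeta\bigr\|$ actually identifies the limit, places it in every domain, and closes the Cauchy estimate; the weak-compactness-plus-closed-graph step is sound because the graph is a subspace, hence weakly closed.

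For $1\le p<2$ you genuinely diverge. The paper attacks subadditivity of the infimum head-on: given near-optimal $\Phi_1,\Phi_2$ it passes to $(\Phi_1+\Phi_2)/\|\Phi_1+\Phi_2\|$, uses the lattice estimate $s^\nalgebra(\Phi_1)+s^\nalgebra(\Phi_2)-s^\nalgebra(\Phi_1)\wedge s^\nalgebra(\Phi_2)\geq s^\nalgebra(\Phi_1+\Phi_2)$ for the support constraint, and splits $\bigl\|\Delta^{\frac{1}{2}-\frac{1}{p}}_{\Phi_1+\Phi_2,\Omega}(\xi_1+\xi_2)\bigr\|$ into the two separate terms via Lemma \ref{LemmaModularDecreasing}. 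Your diagnosis that this direct route does not deliver the sharp triangle inequality is essentially correct, and for reasons the paper's computation glosses: $\omega_{\Phi_1+\Phi_2}$ does not dominate the $\omega_{\Phi_i}$ because of cross terms; Lemma \ref{LemmaModularDecreasing} is stated for the operator itself, and its transfer to the signed power $\frac{1}{2}-\frac{1}{p}<0$ is precisely what would need proof; and the renormalisation produces a factor $\|\Phi_1+\Phi_2\|^{\frac{2}{p}-1}$ that survives any optimisation, yielding at best a quasi-triangle inequality with a constant. Your substitute -- the variational identity $\|\xi\|_p^\Omega=\sup\bigl\{|\ip{\eta}{\xi}| : \|\eta\|_q^\Omega\le1\bigr\}$, which exhibits the infimum norm as a supremum of linear functionals controlled by the already-settled $q\ge2$ case -- is the route of Araki and Masuda themselves and buys sharp subadditivity for free. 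The honest caveat, which you do flag but should weigh more heavily: that identity is not cheaper than the proposition; it is essentially the dual-pair theorem the thesis quotes (without proof) immediately after this proposition, and its hard direction ``$\ge$'' needs the Araki--Masuda polar-decomposition machinery, not just the three-line argument you sketch for ``$\le$''. So your architecture is correct and more robust than the paper's exactly where the paper is weakest, but the crux for $1\le p<2$ is deferred to a theorem you would still have to prove independently, taking care not to quote the later duality theorem circularly. Both you and the paper agree that for $1\le p<2$ completeness requires no argument, $L_p(\nalgebra,\Omega)$ being defined there as a completion.
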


\begin{theorem}
Let $1\leq p, \, p^\prime \leq \infty$ such that $\frac{1}{p}+\frac{1}{p^\prime}=1$. Then $L_p(\nalgebra,\Omega)\times L_{p^\prime}(\nalgebra,\Omega)$ form a dual pair through the continuous sesquilinear form obtained extending
\begin{equation}
\label{dualpair}
L_p(\nalgebra,\Omega)\cap \hilbert\times L_{p^\prime}(\nalgebra,\Omega)\cap \hilbert\ni (\xi,\xi^\prime) \mapsto \ip{\xi}{\xi^\prime}.
\end{equation}
\end{theorem}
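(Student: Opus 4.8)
The plan is to establish the statement in two movements: first that the sesquilinear form \eqref{dualpair} is bounded with respect to the norms $\|\cdot\|_p^\Omega$ and $\|\cdot\|_{p^\prime}^\Omega$ (a H\"older inequality), which guarantees that it extends continuously to all of $L_p(\nalgebra,\Omega)\times L_{p^\prime}(\nalgebra,\Omega)$; and second that this bound is sharp (a Minkowski-type identity), which simultaneously yields that the extended form is non-degenerate in each variable, hence a genuine dual pairing. Since the pairing is symmetric under interchanging the roles of $p$ and $p^\prime$ together with complex conjugation, and since the case $p=p^\prime=2$ reduces to the Hilbert space inner product (where $\tfrac12-\tfrac1p=0$ and $L_2(\nalgebra,\Omega)=\hilbert$), I may assume throughout that $1\leq p\leq 2\leq p^\prime\leq\infty$.

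For the H\"older inequality, set $\alpha=\tfrac12-\tfrac1p\leq 0$ and observe that the conjugacy relation $\tfrac1p+\tfrac1{p^\prime}=1$ forces $\tfrac12-\tfrac1{p^\prime}=-\alpha\geq 0$. Fix $\xi\in L_p(\nalgebra,\Omega)\cap\hilbert$ and $\xi^\prime\in L_{p^\prime}(\nalgebra,\Omega)\cap\hilbert$, and let $\Phi$ be any unit vector with $s^\nalgebra(\Phi)\geq s^\nalgebra(\xi)$ and $\xi\in\Dom{\Delta_{\Phi,\Omega}^{\alpha}}$, as in the infimum defining \eqref{normLp1}. On the support of $\Phi$ the relative modular operator $\Delta_{\Phi,\Omega}$ is injective (Proposition \ref{deltaKernel}), so by spectral calculus $\Delta_{\Phi,\Omega}^{\alpha}\xi\in\Dom{\Delta_{\Phi,\Omega}^{-\alpha}}$ with $\Delta_{\Phi,\Omega}^{-\alpha}\Delta_{\Phi,\Omega}^{\alpha}\xi=\xi$; since moreover $\xi^\prime\in\Dom{\Delta_{\Phi,\Omega}^{-\alpha}}$ by the very definition of $L_{p^\prime}(\nalgebra,\Omega)$, self-adjointness of the real powers of $\Delta_{\Phi,\Omega}$ gives the factorisation $\ip{\xi}{\xi^\prime}=\ip{\Delta_{\Phi,\Omega}^{\alpha}\xi}{\Delta_{\Phi,\Omega}^{-\alpha}\xi^\prime}$. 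Applying the Cauchy-Schwarz inequality and then bounding $\|\Delta_{\Phi,\Omega}^{-\alpha}\xi^\prime\|=\|\Delta_{\Phi,\Omega}^{\frac12-\frac1{p^\prime}}\xi^\prime\|\leq\|\xi^\prime\|_{p^\prime}^\Omega$ via equation \eqref{normLp2}, I obtain $|\ip{\xi}{\xi^\prime}|\leq\|\Delta_{\Phi,\Omega}^{\alpha}\xi\|\,\|\xi^\prime\|_{p^\prime}^\Omega$. Taking the infimum over all admissible $\Phi$ and invoking equation \eqref{normLp1} yields the H\"older bound $|\ip{\xi}{\xi^\prime}|\leq\|\xi\|_p^\Omega\,\|\xi^\prime\|_{p^\prime}^\Omega$.

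With boundedness in hand the extension is routine: for $1\leq p<2$ the space $L_p(\nalgebra,\Omega)$ is by construction the completion of $(\hilbert,\|\cdot\|_p^\Omega)$, so $L_p(\nalgebra,\Omega)\cap\hilbert$ is $\|\cdot\|_p^\Omega$-dense, while for $p^\prime\geq 2$ one has $L_{p^\prime}(\nalgebra,\Omega)\subset\hilbert$, whence $L_{p^\prime}(\nalgebra,\Omega)\cap\hilbert=L_{p^\prime}(\nalgebra,\Omega)$. The H\"older bound then lets the form pass to the completion by uniform continuity, independently of the approximating sequences, and the extended form still satisfies $|\ip{\xi}{\xi^\prime}|\leq\|\xi\|_p^\Omega\,\|\xi^\prime\|_{p^\prime}^\Omega$, so it is continuous on $L_p(\nalgebra,\Omega)\times L_{p^\prime}(\nalgebra,\Omega)$.

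Finally, to promote this to a dual pairing I must show the inequality is sharp, that is $\|\xi\|_p^\Omega=\sup\{|\ip{\xi}{\xi^\prime}| : \xi^\prime\in L_{p^\prime}(\nalgebra,\Omega),\ \|\xi^\prime\|_{p^\prime}^\Omega\leq 1\}$ and symmetrically; non-degeneracy of \eqref{dualpair} in both arguments is then immediate, because a nonzero element has strictly positive norm and so cannot be annihilated by the whole opposite space. This reverse inequality is the genuine obstacle. The natural idea is to pick a unit vector $\Phi$ nearly realising the infimum defining $\|\xi\|_p^\Omega$ and to take $\xi^\prime$ proportional to $\Delta_{\Phi,\Omega}^{2\alpha}\xi$, so that the Cauchy-Schwarz step above becomes an equality; the difficulty is that $\|\xi^\prime\|_{p^\prime}^\Omega$ is a supremum over \emph{all} unit vectors $\Psi$, not merely over the optimising $\Phi$, and controlling $\|\Delta_{\Psi,\Omega}^{-\alpha}\xi^\prime\|$ uniformly in $\Psi$ requires the analytic machinery of the maps $z\mapsto\Delta_{\Phi,\Omega}^{\iu z}$ together with a complex-interpolation (three-line) argument in the spirit of Theorem \ref{TLT}, exactly as carried out by Araki and Masuda in \cite{Araki82}. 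I would therefore either reproduce their interpolation estimate or, consistent with the expository nature of this section, cite it; the remaining bookkeeping --- verifying the domain and support hypotheses so that the optimiser genuinely lies in $L_{p^\prime}(\nalgebra,\Omega)$ --- is then straightforward.
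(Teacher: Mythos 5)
You should know at the outset that the thesis itself offers no proof of this theorem to compare against: the section on the Araki--Masuda spaces is explicitly expository (the chapter introduction announces these constructions ``will be just presented without proofs''), and the statement is taken from \cite{Araki82}. Measured against that source, your outline is essentially correct, and it actually supplies more detail than the paper does. The H\"older half is sound: under the support condition $s^\nalgebra(\Phi)\geq s^\nalgebra(\xi)$ built into \eqref{normLp1}, and with $\Omega$ cyclic and separating, Proposition \ref{deltaKernel} places $\xi$ in the support subspace of $\Delta_{\Phi,\Omega}$, so $\Delta_{\Phi,\Omega}^{-\alpha}\Delta_{\Phi,\Omega}^{\alpha}\xi=\xi$ by spectral calculus on the injective part, the factorisation $\ip{\xi}{\xi^\prime}=\ip{\Delta_{\Phi,\Omega}^{\alpha}\xi}{\Delta_{\Phi,\Omega}^{-\alpha}\xi^\prime}$ is legitimate (both vectors lie in $\Dom{\Delta_{\Phi,\Omega}^{-\alpha}}$, the second because the intersection in the definition of $L_{p^\prime}(\nalgebra,\Omega)$ runs over all unit $\Phi$), and Cauchy--Schwarz together with \eqref{normLp2} and the infimum over admissible $\Phi$ gives $|\ip{\xi}{\xi^\prime}|\leq\|\xi\|_p^\Omega\,\|\xi^\prime\|_{p^\prime}^\Omega$, which is precisely the Araki--Masuda H\"older inequality. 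The density/extension step is routine for the reason you state, and your diagnosis of where the genuine difficulty sits is accurate: continuity of the form does not by itself make the pairing separating, and the reverse (norm-attainment) inequality requires uniform control of $\|\Delta_{\Psi,\Omega}^{-\alpha}\xi^\prime\|$ over \emph{all} unit vectors $\Psi$, obtained in \cite{Araki82} by exactly the kind of three-line interpolation that underlies Theorem \ref{TLT}. Citing that estimate rather than reproducing it is defensible here, since it is what the thesis itself does for the entire section.

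One refinement worth recording: you only need the full sharpness identity for non-degeneracy in the $L_p$ variable ($1\leq p<2$), where elements are abstract completion limits and the pairing is the only available separating device. In the $L_{p^\prime}$ variable there is a cheaper argument: a nonzero $\xi^\prime\in L_{p^\prime}(\nalgebra,\Omega)\subset\hilbert$ may be paired with its own image under the canonical inclusion $\hilbert\hookrightarrow L_p(\nalgebra,\Omega)$ --- injective because $\|\cdot\|_p^\Omega$ is a genuine norm on $\hilbert$ by the proposition preceding the theorem --- yielding $\ip{\xi^\prime}{\xi^\prime}=\|\xi^\prime\|^2\neq0$ directly, with no interpolation needed.
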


In fact, not much seems to be known about the Araki-Masuda $L_p$-Spaces. After the original article, \cite{Araki82}, that first presented the construction of these spaces, Masuda wrote a new one, \cite{masuda83}, that extends the construction to weights.

\chapter{Perturbation of $p$-Continuous KMS States}
\label{chapExtensionPerturb}
\setcounter{section}{1}

The idea of extending Araki's perturbation theory using noncommutative $L_p$-spaces was proposed by C. D. J\"akel and consists in a new approach to the problem. Now we start presenting the main results of this work. All that follows is entirely new.

It is quite clear that one of the key properties used in \cite{Araki73} and \cite{sakai91} to prove the convergence of the Dyson's series, or in \cite{Araki73.2} to prove the convergence of the expansional\footnote{see Chapter \ref{BoundedPert}.}, is that $\|A_1\ldots A_n\|\leq \|A_1\|\ldots \|A_n\|$, which is one of the axioms of Banach algebras. Unfortunately, this property does not hold in noncommutative $L_p$-spaces, in fact, these are not even algebras under the induced multiplication. In particular, we have $\|Q^n\|\leq \|Q\|^n$, but no similar property holds in noncommutative $L_p$-spaces.

We would like to reinforce that by a trace we mean a normal faithful semifinite trace 
in the text of this chapter.

\begin{proposition}
	\label{noconstant}
Let $\nalgebra$ be a von Neumann algebra, $\tau$ be a normal faithful semifinite trace on $\nalgebra$, and $A\in L_1(\nalgebra,\tau)$. There exists $M>0$ such that $\tau\left(|A|^n\right)\leq M^n$ for all $n\in\mathbb{N}$,
if and only if $A \in \nalgebra$.
\end{proposition}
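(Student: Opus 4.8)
The plan is to establish the two implications separately: the backward direction is a one-line consequence of Lemma \ref{normtraceinequality}, while the forward direction rests on a spectral estimate combined with the faithfulness of $\tau$.

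For the implication $A\in\nalgebra \Rightarrow \exists M$, observe that $A\in\nalgebra$ makes $|A|=\sqrt{A^\ast A}$ a bounded element of $\nalgebra$, while the hypothesis $A\in L_1(\nalgebra,\tau)$ gives $\tau(|A|)<\infty$, so that $|A|\in\mathfrak{F}_\tau=\mathfrak{M}_{\tau+}\subset\mathfrak{M}_\tau$. Writing $|A|^n=|A|^{n-1}|A|$ with the bounded factor $|A|^{n-1}\in\nalgebra$ and the $\tau$-integrable factor $|A|\in\mathfrak{M}_\tau$, I would apply Lemma \ref{normtraceinequality} to obtain
\[
\tau(|A|^n)\leq \bigl\||A|^{n-1}\bigr\|\,\tau(|A|)=\|A\|^{n-1}\tau(|A|).
\]
Choosing $M=\max\{\|A\|,\tau(|A|)\}$ then yields $\tau(|A|^n)\leq M^{n-1}M=M^n$ for every $n\in\mathbb{N}$.

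For the converse, suppose $\tau(|A|^n)\leq M^n$ for all $n$ and, aiming at a contradiction, assume $A\notin\nalgebra$, i.e. that $|A|$ is unbounded. Let $\{E_{(\lambda,\infty)}\}_{\lambda\geq0}$ be the spectral resolution of $|A|$ and fix any $M'>M$. If $|A|$ is unbounded then $E_{(M',\infty)}=\mathbbm{1}-E_{[0,M']}\neq0$, for otherwise $|A|\leq M'\mathbbm{1}$ would be bounded; faithfulness of $\tau$ then gives $c:=\tau(E_{(M',\infty)})>0$, and $c\leq (M')^{-1}\tau\bigl(|A|E_{(M',\infty)}\bigr)\leq (M')^{-1}\tau(|A|)<\infty$ using $A\in L_1$. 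By the spectral calculus together with positivity and normality of $\tau$,
\[
M^n\geq \tau(|A|^n)\geq \tau\bigl(|A|^n E_{(M',\infty)}\bigr)\geq (M')^n\,\tau\bigl(E_{(M',\infty)}\bigr)=(M')^n c,
\]
so that $c\leq (M/M')^n$. Letting $n\to\infty$ and using $M'>M$ forces $c=0$, contradicting $c>0$. Hence $|A|$ is bounded, and since a bounded operator affiliated with $\nalgebra$ lies in $\nalgebra''=\nalgebra$, we conclude $A\in\nalgebra$.

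The two displayed inequalities are routine; the step requiring most care is the precise meaning of $\tau(|A|^n)$ for the a priori unbounded positive operator $|A|^n$ and the justification of the middle estimate. I would make this rigorous through the classical Borel measure $\mu(S)=\tau(E_S)$ attached to the spectral measure of $|A|$, so that $\tau(|A|^n)=\int_0^\infty \lambda^n\,d\mu(\lambda)$ and
\[
\int_0^\infty \lambda^n\,d\mu \;\geq\; \int_{(M',\infty)} \lambda^n\,d\mu \;\geq\; (M')^n\,\mu\bigl((M',\infty)\bigr),
\]
which reduces the whole argument to an elementary estimate on a finite classical measure.
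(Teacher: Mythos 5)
Your proof is correct and follows essentially the same route as the paper: the backward direction is the identical decomposition $|A|^n=|A|^{n-1}|A|$ estimated via Lemma \ref{normtraceinequality}, and the forward direction is the paper's contrapositive argument (recast as a contradiction) comparing $\tau(|A|^n)\geq (M')^n\,\tau\bigl(E_{(M',\infty)}\bigr)$ with $M^n$, faithfulness giving $\tau\bigl(E_{(M',\infty)}\bigr)>0$. The only cosmetic differences are your explicit (and actually unnecessary) finiteness check on $c$ — if $c=\infty$ the contradiction is immediate — and the trivial case $A=0$, where your choice $M=\max\{\|A\|,\tau(|A|)\}$ degenerates to $0$ and one should simply take any $M>0$, as the paper does.
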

\begin{proof}
	$(\Rightarrow)$ Let's prove the contrapositive.
	Suppose $A$ is unbounded and let \\${\displaystyle |A|=\int_0^\infty \lambda dE_\lambda^{|A|}}$ be the spectral decomposition of $|A|$.
	
	For every $K>M$, $E_{(K,\infty)}$ is non-null, so $\tau(E_{(K,\infty)})>0$. Then,
	$$\tau\left(|A|^n\right)=\int_0^\infty \lambda^n \tau\left(dE_\lambda^{|A|}\right) \geq \int_K^\infty \lambda^n \tau\left(dE_\lambda^{|A|}\right)\geq K^n \tau\left(E_{[K,\infty)}\right). $$
	
	Now, we already know that there exists $N\in \mathbb{N}$ large enough such that, for all $n\geq N$, $M^n< K^n \tau\left(E_{[K,\infty)}\right)$.
	
	$(\Leftarrow)$ The case $A=0$ is trivial. Suppose $A\neq0$ is bounded. Then
	$$\begin{aligned}
	\tau\left(|A|^n\right)&=\tau\left(|A|^{n-1}|A|\right)\\
	&\leq \left\||A|^{n-1}\right\|\tau\left(|A|\right)\\
	&=\left\|A\right\|^n\frac{\tau\left(|A|\right)}{\|A\|}\\
	&\leq \left(\|A\|\max\left\{1,\frac{\tau\left(|A|\right)}{\|A\|}\right\}\right)^n
	\end{aligned}$$
	
\end{proof}

The next definition captures our intentions of having a convergent Dyson's series. In this definition, one subtle  difference is that the exponent cannot be passed out the trace, what is the $C^\ast$-condition for $p=\infty$. On the physical point of view, we do not want the high order terms in perturbation to affect our system too much, at least its integral.

\begin{definition}
	\label{defex} \glsdisp{exp}{\hspace{0pt}}
	Let $\nalgebra$ be a von Neumann algebra, $\tau$ be a normal faithful semifinite trace on $\nalgebra$, $1\leq p\leq\infty$ and $0<\lambda<\infty$. An operator $A \in L_p(\nalgebra,\tau)$ is said to be $(\tau,p,\lambda)$-exponentiable if 
	\begin{equation}
	\label{eq:defexponentiable}
	\sum_{n=1}^\infty\frac{\lambda^n\||A|^n\|_{p}}{n!}< \infty.
	\end{equation}
	Furthermore, an operator $A \in L_p(\nalgebra,\tau)$ is said to be $(\tau,p,\infty)$-exponentiable if 
	\begin{equation}
	\label{eq:defexponentiableinf}
	\sum_{n=1}^\infty\frac{\lambda^n\||A|^n\|_{p}}{n!}< \infty, \qquad \forall \lambda\in\mathbb{R}_+.
	\end{equation}
	We denote $$\ex^\tau_{p,\lambda}=\left\{A\in L_p\left(\nalgebra,\tau\right) \ \middle | \ A \textrm{ is } (\tau,p,\lambda)\textrm{-exponentiable } \right\}.$$
\end{definition}

Some properties can be seen directly from the definition. The first is that, if $\lambda\leq \lambda^\prime$, then $\ex^\tau_{p,\lambda}\subset\ex^\tau_{p,\lambda^\prime}$.
Another very useful property that we will use to simplify our presentation is that $$\ex^\tau_{p,\lambda}=\lambda \ex^\tau_{p,1}=\left\{\lambda A\in L_p(\nalgebra) \ | A\in \ex^\tau_{p,1} \right\}.$$
So, the only special case is $\ex^\tau_{p,\infty}$, for which we have $\displaystyle \ex^\tau_{p,\infty}=\bigcap_{\lambda\in\mathbb{R}_+}\ex^\tau_{p,\lambda}$.
Hence, it is enough to study $\ex^\tau_{p,1}$ and $\ex^\tau_{p,\infty}$.

\begin{notation}
In order to simplify the notation, we will denote $\ex^\tau_{p,1}=\ex^\tau_p$ and call a $(\tau,p,\lambda)$-exponentiable operator just a $(\tau,p)$-exponentiable operator.
\end{notation}

\begin{remark}
	Notice that equation \eqref{eq:defexponentiable} can be written in many forms for $1\leq p< \infty$
	$$\sum_{n=1}^\infty\frac{\||A|^n\|_{p}}{n!}=\sum_{n=1}^\infty\frac{\|A\|_{np}^n}{n!}=\sum_{n=1}^\infty\frac{\tau\left(|A|^{np}\right)^\frac{1}{p}}{n!}< \infty.$$
	
	We prefer equation \eqref{eq:defexponentiable} because it also includes the case $p=\infty$, for which
	$$\sum_{n=1}^{\infty} \frac{\||A|^n\|_{\infty}}{n!}\leq\sum_{n=1}^{\infty} \frac{\|A\|_{\infty}^n}{n!}=e^{\|A\|}-1<\infty.$$
	Hence, we have $\ex^\tau_\infty=\nalgebra$.
	
\end{remark}

In order so simplify calculations in our examples and constructions, we prove the following lemma.

\begin{lemma}
Let $\nalgebra$ be a von Neumann algebra and $\tau$ a normal faithful semifinite trace on $\nalgebra$. Then $A\in \ex^\tau_p$ if $$\sum_{n=1}^\infty\frac{1}{n!}\tau\left(|A|^{np}\right)< \infty.$$
\end{lemma}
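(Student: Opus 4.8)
The plan is to reduce the definition of $(\tau,p)$-exponentiability to the elementary scalar inequality $x^{1/p}\le 1+x$, valid for every $x\ge 0$ and every $p\ge 1$. Throughout I assume $1\le p<\infty$, the case $p=\infty$ being vacuous here since $\ex^\tau_\infty=\nalgebra$ and the quantities $\tau(|A|^{np})$ are not even defined. First I would recall from the remark following Definition \ref{defex} that, for finite $p$, the quantity entering the exponentiability series satisfies $\||A|^n\|_p=\tau\big(|A|^{np}\big)^{1/p}$, so that proving $A\in\ex^\tau_p$ amounts to establishing both the membership $A\in L_p(\nalgebra,\tau)$ and the convergence
$$\sum_{n=1}^\infty\frac{\tau\big(|A|^{np}\big)^{1/p}}{n!}<\infty.$$

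The membership is immediate: the $n=1$ term of the hypothesised series is $\tau(|A|^p)$, and since that series converges each of its terms is finite, whence $\|A\|_p=\tau(|A|^p)^{1/p}<\infty$ and $A\in L_p(\nalgebra,\tau)$. For the convergence of the exponentiability series I would apply the scalar inequality termwise with $x=\tau(|A|^{np})\ge 0$. To justify $x^{1/p}\le 1+x$ one splits into the two regimes: if $\tau(|A|^{np})\le 1$ then $\tau(|A|^{np})^{1/p}\le 1$, while if $\tau(|A|^{np})>1$ then, because $1/p\le 1$, one has $\tau(|A|^{np})^{1/p}\le \tau(|A|^{np})$. In either case $\tau(|A|^{np})^{1/p}\le 1+\tau(|A|^{np})$.

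Summing this estimate then gives
$$\sum_{n=1}^\infty\frac{\tau\big(|A|^{np}\big)^{1/p}}{n!}\le \sum_{n=1}^\infty\frac{1}{n!}+\sum_{n=1}^\infty\frac{\tau\big(|A|^{np}\big)}{n!}=(e-1)+\sum_{n=1}^\infty\frac{\tau\big(|A|^{np}\big)}{n!},$$
and both terms on the right are finite, the first by the exponential series and the second by hypothesis. Hence the exponentiability series converges and $A\in\ex^\tau_p$. There is no genuine obstacle in this argument; the only point requiring care is the direction of the comparison between $x$ and $x^{1/p}$, which reverses at $x=1$ and forces the two-case split — equivalently, the uniform bound $x^{1/p}\le\max\{1,x\}\le 1+x$.
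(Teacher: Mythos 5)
Your proof is correct and is essentially the paper's own argument: your scalar inequality $x^{1/p}\le\max\{1,x\}\le 1+x$ is exactly the paper's partition of the index set into $N_-=\{n:\tau(|A|^{np})\le 1\}$ and $N_+=\{n:\tau(|A|^{np})>1\}$, yielding the same bound $\sum_n \tau(|A|^{np})^{1/p}/n!\le (e-1)+\sum_n \tau(|A|^{np})/n!$. Your explicit remark that the $n=1$ term gives $A\in L_p(\nalgebra,\tau)$ is a small point the paper leaves implicit, but it is the same proof.
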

\begin{proof}
Define
$$\begin{aligned}
N_+=\left\{n\in\mathbb{N} \ \middle | \ \tau\left(\left|A\right|^{pn}\right)>1  \right\},\\
N_-=\left\{n\in\mathbb{N} \ \middle | \ \tau\left(\left|A\right|^{pn}\right)\leq 1  \right\}.\\
\end{aligned}$$

It is clear that
\begin{equation}
\begin{aligned}
\sum_{n=1}^{N}\frac{\|A\|_{np}^n}{n!}& =\sum_{n=1}^{N}\frac{1}{n!}\tau\left(\left|A\right|^{pn}\right)^\frac{1}{p}\\
&=\sum_{\substack{n\in N_- \\ n\leq N}}\frac{1}{n!}\tau\left(\left|A\right|^{pn}\right)^\frac{1}{p}+\sum_{\substack{n\in N_+ \\ n\leq N}}\frac{1}{n!}\tau\left(\left|A\right|^{pn}\right)^\frac{1}{p}\\
&\leq\sum_{n=1}^\infty\frac{1}{n!}+\sum_{n=1}^\infty\frac{1}{n!}\tau\left(\left|A\right|^{pn}\right).
\end{aligned}
\end{equation}
\end{proof}

The next step is to prove the set we have just defined is big, in some sense, in $L_p(\nalgebra,\tau)$.

\begin{proposition}
	$\ex^\tau_p$ and $\ex^\tau_\infty$ are $\|\cdot\|$-dense in $L_p(\nalgebra,\tau)$.
\end{proposition}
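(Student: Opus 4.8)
The plan is to exhibit a single $\|\cdot\|_p$-dense subset of $L_p(\nalgebra,\tau)$ that already lies inside the smaller of the two sets in question, so that both density statements follow at once. Here $\ex^\tau_p=\ex^\tau_{p,1}$, and I read $\ex^\tau_\infty$ as $\ex^\tau_{p,\infty}=\bigcap_{\lambda>0}\ex^\tau_{p,\lambda}$, which is the convention under which these are the only two cases one must study. Since $\ex^\tau_{p,\infty}\subset\ex^\tau_{p,1}=\ex^\tau_p$, it suffices to prove that $\ex^\tau_{p,\infty}$ is $\|\cdot\|_p$-dense; density of $\ex^\tau_p$ then holds a fortiori. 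The natural candidate for a dense set is $\mathfrak{D}=\left\{A\in\nalgebra \ \middle|\ \tau\left(|A|^p\right)<\infty\right\}$, which is $\|\cdot\|_p$-dense in $L_p(\nalgebra,\tau)$ by the very definition of $L_p(\nalgebra,\tau)$ as the $\|\cdot\|_p$-completion of $\mathfrak{D}$. Thus the whole argument reduces to the inclusion $\mathfrak{D}\subset\ex^\tau_{p,\infty}$.

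First I would handle $A=0$ trivially and then, for $0\neq A\in\mathfrak{D}$, estimate the summands in the exponentiability series. Writing $|A|^{np}=|A|^{(n-1)p}\,|A|^p$ with $|A|^{(n-1)p}\in\nalgebra$ and $|A|^p\in\mathfrak{M}_\tau$, Lemma \ref{normtraceinequality} yields
$$\tau\!\left(|A|^{np}\right)\leq\left\||A|^{(n-1)p}\right\|\,\tau\!\left(|A|^p\right)=\|A\|^{(n-1)p}\,\tau\!\left(|A|^p\right),$$
so that $\||A|^n\|_p=\tau\!\left(|A|^{np}\right)^{1/p}\leq\|A\|^{n-1}\,\|A\|_p$. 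Consequently, for every $\lambda\in\mathbb{R}_+$,
$$\sum_{n=1}^{\infty}\frac{\lambda^n\,\||A|^n\|_p}{n!}\leq\frac{\|A\|_p}{\|A\|}\sum_{n=1}^{\infty}\frac{(\lambda\|A\|)^n}{n!}=\frac{\|A\|_p}{\|A\|}\left(e^{\lambda\|A\|}-1\right)<\infty,$$
which is precisely condition \eqref{eq:defexponentiableinf} for $A$ to be $(\tau,p,\infty)$-exponentiable. Hence $\mathfrak{D}\subset\ex^\tau_{p,\infty}$.

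Finally I would assemble the chain $\mathfrak{D}\subset\ex^\tau_{p,\infty}\subset\ex^\tau_p\subset L_p(\nalgebra,\tau)$: as the leftmost set is $\|\cdot\|_p$-dense, both intermediate sets are $\|\cdot\|_p$-dense as well. The case $p=\infty$ is immediate, since then $\mathfrak{D}=\nalgebra=L_\infty(\nalgebra,\tau)$ and the series is dominated by $e^{\lambda\|A\|}-1$ for every bounded $A$. There is really no substantial obstacle in this proof: the only point demanding care is the submultiplicative trace estimate, for which Lemma \ref{normtraceinequality} is tailor-made, together with the observation that density is meant in the $\|\cdot\|_p$-topology carried by $L_p(\nalgebra,\tau)$ rather than in the operator norm.
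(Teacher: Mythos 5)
Your proof is correct, and it takes a genuinely different (though related) route from the paper's. The paper's own proof works with an arbitrary positive $A\in L_p(\nalgebra,\tau)$ directly: it truncates the spectral decomposition, $A_m=\int_0^m\lambda\, dE_\lambda$, verifies via the spectral measure that $\tau\left(\left(A_m^p\right)^n\right)\leq m^{p(n-1)}\tau\left(|A_m|^p\right)$, so that each $A_m\in\ex^\tau_{p,\infty}$, checks $\tau\left(|A-A_m|^p\right)=\int_m^\infty\lambda^p\,\tau\left(dE_\lambda\right)\to0$, and then reduces general elements to positive ones by the polarization identity. You instead exploit the fact that $L_p(\nalgebra,\tau)$ is by definition the $\|\cdot\|_p$-completion of $\mathfrak{D}=\nalgebra\cap L_p(\nalgebra,\tau)$, and prove the single inclusion $\mathfrak{D}\subset\ex^\tau_{p,\infty}$ via the submultiplicative trace estimate of Lemma \ref{normtraceinequality}, yielding $\||A|^n\|_p\leq\|A\|^{n-1}\|A\|_p$ and domination of the exponentiability series by $\frac{\|A\|_p}{\|A\|}\left(e^{\lambda\|A\|}-1\right)$; your applications of that lemma are legitimate, since $|A|^{(n-1)p}\in\nalgebra$ and $|A|^p\in\mathfrak{F}_\tau\subset\mathfrak{M}_\tau$. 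This is in fact exactly the alternative the paper itself acknowledges in the remark right after its proof (``the conclusion could also be obtained by Lemma \ref{normtraceinequality} and $\overline{\nalgebra\cap L_p(\nalgebra,\tau)}^{\|\cdot\|_p}=L_p(\nalgebra,\tau)$''), so both arguments share the same analytic core --- a $\|A\|^{n-1}$-type bound for bounded elements --- but your packaging is shorter: it avoids the polarization step and the explicit spectral computation, and it disposes of both sets at once through the chain $\mathfrak{D}\subset\ex^\tau_{p,\infty}\subset\ex^\tau_p$. What the paper's version buys in exchange is an explicit approximating sequence for a given, possibly unbounded, $A\in L_p(\nalgebra,\tau)$, namely the upper spectral cuts $A_m$, which are reused later in the thesis (the final proposition of Chapter \ref{chapExtensionPerturb} concerns precisely such cut-off sequences of perturbations). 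Your reading of the ambiguous symbol $\ex^\tau_\infty$ as $\ex^\tau_{p,\infty}$ is also the correct one --- the paper's proof indeed produces $(\tau,p,\infty)$-exponentiable approximants --- and your handling of the degenerate cases $A=0$ and $p=\infty$ is fine.
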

\begin{proof}
	It is enough to prove $\ex^\tau_p$ is dense $\|\cdot\|$-dense in $L_p(\nalgebra,\tau)$.  Let $A \in L_p(\nalgebra,\tau)$ be a positive operator and let its spectral decomposition be $\displaystyle A=\int_0^\infty \lambda dE_\lambda$.
	Define $\displaystyle A_m=\int_0^m \lambda dE_\lambda$. Then, for all $n\in\mathbb{N}$,
	$$\begin{aligned}
	\tau\left(\left(A_m^p\right)^n\right)& =\int_0^m \lambda^{pn} \tau(dE_\lambda)\\
	&=\int_0^1 \lambda^{pn} \tau(dE_\lambda)+\int_1^m \lambda^{pn} \tau(dE_\lambda)\\
	&\leq\int_0^1 \lambda^{p} \tau(dE_\lambda)+m^{p(n-1)}\int_1^m \lambda^{p} \tau(dE_\lambda)\\
	& \leq m^{p(n-1)}\int_0^m \lambda^{p} \tau(dE_\lambda)\\
	&=m^{p(n-1)}\tau\left(|A_m|^p\right)
	\end{aligned}.$$
	
	Hence $\left(A_m\right)_{m\in\mathbb{N}}$ is a sequence of $(\tau,p,\infty)$-exponentiable operators and
	$$\tau\left(|A-A_m|^p\right)=\int_m^\infty\lambda^p\tau\left(dE_\lambda\right)\xrightarrow{n\to \infty}0.$$

	For the general case, just remember the polarization identity implies every operator is a linear combination of four positive operators.

\end{proof}

Notice that the previous lemma shows us that $\nalgebra \cap L_p(\nalgebra,\tau)\subset \ex^\tau_{p,\infty}$ and \\ ${\|A^n\|_p\leq \max\{1,\|A\|^{n-1}\|A\|_p\}}$ for $A\geq 0$. It is not difficult to see that the conclusion could also be obtained by Lemma \ref{normtraceinequality} and the well-known result 
$$\overline{\nalgebra \cap L_p(\nalgebra,\tau)}^{\|\cdot\|_p}=L_p(\nalgebra,\tau),$$
which is in fact proved using an argument similar to what we have used above.

This comment raises doubts about the possible ``triviality'' of $\ex^\tau_p$ or $\ex^\tau_{p,\infty}$, I mean, although we have already proved that these sets are big enough to be dense, the set we used to prove density consists of bounded operators. The next example will answer this question.

\pagebreak

\begin{example}
	\label{exfunction}
	Consider a function $f:\mathbb{R}\setminus\{0\}\to\mathbb{R}$ given by
\begin{figure}[H]
	\centering
	\begin{minipage}{0.4\textwidth}
		\vspace{-2cm}
		\centering
		$$f(x)=\begin{cases}m & \textrm{if } \frac{1}{(m+1)!}\leq |x| < \frac{1}{m!}, \, m\in\mathbb{N}\\
		0 & \textrm{if } |x|\geq1.\\ \end{cases}$$
	\end{minipage} \begin{minipage}{0.58\textwidth}
		\centering
		\includegraphics[width=0.95\linewidth, height=0.25\textheight]{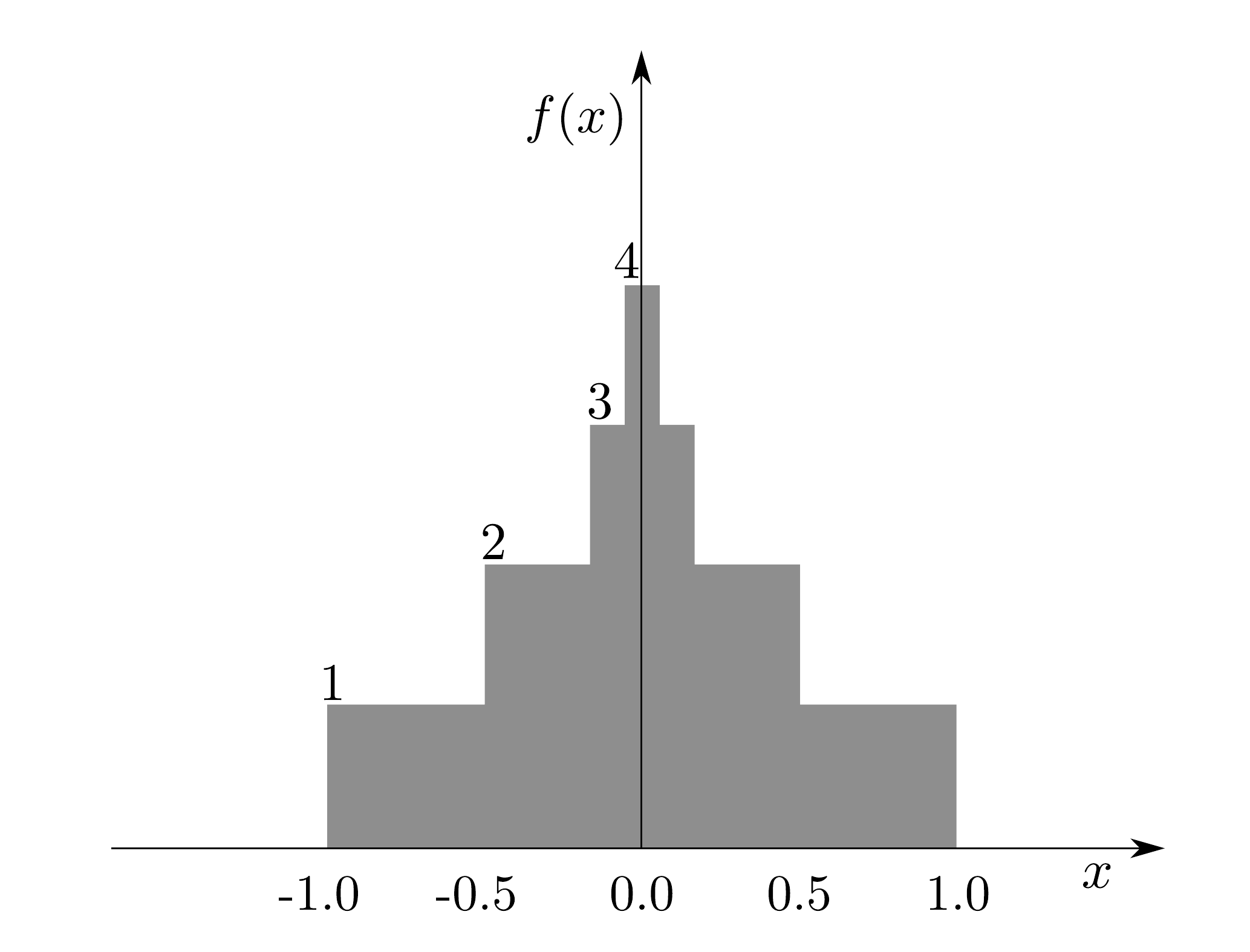}
		\caption{Example of $\left(\int_{K}\cdot dx,1,\infty\right)$-exponentiable operator of $L_1\left(L_\infty(K),\int_{K}\cdot dx\right)$.}
		\label{fig:func}
	\end{minipage}
\end{figure}
	
	This is a positive unbounded integrable function with compact support in $\mathbb{R}$ and, for each $\lambda>0$, 
	$$\begin{aligned}
	\sum_{n=1}^\infty\frac{1}{n!}\int_{\mathbb{R}} \left(\lambda f(x)\right)^n dx &= 2\sum_{n=1}^\infty\frac{1}{n!}\sum_{m=1}^{\infty} \lambda^n m^n \left(\frac{1}{m!}-\frac{1}{(m+1)!}\right)\\
	&=2\sum_{n=1}^\infty\frac{1}{n!}\sum_{m=1}^{\infty} \left(\frac{\lambda^n m^{n+1}}{(m+1)!}\right)\\
	&=\frac{2}{\lambda}\sum_{m=1}^\infty\frac{1}{(m+1)!}\sum_{n=1}^{\infty} \left(\frac{(\lambda m)^{n+1}}{n!}\right)\\
	&=\frac{2}{\lambda}\sum_{m=1}^\infty\frac{1}{(m+1)!}m (e^{\lambda m}-1)\\
	&=\frac{2}{\lambda}\sum_{m=1}^\infty\left(\frac{(m+1)e^{\lambda m}}{(m+1)!}-\frac{1}{e^\lambda}\frac{e^{\lambda(m+1)}}{(m+1)!}-\frac{(m+1)}{(m+1)!}+\frac{1}{(m+1)!}\right)\\
	&=\frac{2}{\lambda}\left(\left(e^{e^\lambda}-1\right)-\left(\frac{e^{e^\lambda}-e^\lambda-1}{e^\lambda}\right)-\left(e-1\right)+\left(e-2\right)\right)\\
	&=\frac{2}{\lambda}\frac{\left(e^{e^\lambda}-1\right)\left(e^\lambda-1\right)}{e^\lambda}.
	\end{aligned}$$
	 Of course, we don't need the exact result and in the forth step of the previous calculation it was already obvious this sum would be less than $e^{e^\lambda}$.
	 
	 For a measurable set $K\in\mathbb{R}\setminus\{0\}$ such that $0$ is an accumulation point of $K$, the restriction of $f$ to $K$ is an example of an unbounded $\left(\int_{K}\cdot dx,1,\infty\right)$-exponentiable operator of $L_1\left(L_\infty(K),\int_{K}\cdot dx\right)$.
	 
	 It is obvious that any integrable function, whose graph is in the shaded region of Figure \ref{fig:func} is also $\left(\int_{K}\cdot dx,1,\infty\right)$-exponentiable.
	 
\end{example}

One could wonder whether $\ex^\tau_p$ is a vector space or not, the following example shows the negative answer is the right one. Another consequence of this examples is that, for $\lambda<\lambda^\prime$, $\ex^\tau_{p,\infty}\subsetneq \ex^\tau_{p,\lambda}\subsetneq \ex^\tau_{p,\lambda^\prime}$. This is very important and non trivial, since it means that $\left\{\ex^\tau_p\right\}_{p\in \overline{R}_+}$ or even  $\left\{\ex^\tau_{p,\lambda}\right\}_{p\in \overline{\mathbb{R}}_+, \lambda\in\overline{\mathbb{R}}_+}$ are, in general, non trivial gradations of $\nalgebra=\ex^\tau_{\infty,\lambda}=\ex^\tau_{\infty,\lambda}$ for every $\lambda\in\overline{\mathbb{R}}_+$.

\begin{example}
	\label{exfunctionbadbehaved}
	Consider a function $f:\mathbb{R}\setminus\{0\}\to\mathbb{R}$ given by
	\begin{figure}[H]
		\centering
		
		\begin{minipage}{0.4\textwidth}
			\centering
			$$f(x)=\begin{cases}m & \textrm{if } (2e)^{-m-1}\leq |x| < (2e)^{-m}, \ m\in\mathbb{N},\\
			0 & \textrm{if } |x|\geq2e.\\ \end{cases}$$
		\end{minipage} 
	\end{figure}
	
	This is a positive unbounded integrable function with compact support in $\mathbb{R}$ and
	$$\begin{aligned}
	\sum_{n=1}^\infty\frac{1}{n!}\int_{\mathbb{R}} f(x)^n dx &= 2\sum_{n=1}^\infty\frac{1}{n!}\sum_{m=1}^{\infty} m^n \left((2e)^{-m}-(2e)^{-m-1}\right)\\
	&=\frac{4e}{2e-1}\sum_{m=1}^\infty \left(2^{-m}-(2e)^{-m}\right)\\
	&=\frac{4e}{2e-1}\left(1-\frac{1}{2e-1}\right).\\
	\end{aligned}$$
	
	Hence, again we have that for any measurable set $K\in\mathbb{R}\setminus\{0\}$ the restriction of $f$ \mbox{to $K$} is an example of a $\left(\int_{K}\cdot dx\right)$-exponentiable operator for $L_1\left(L_\infty(K),\int_{K}\cdot dx\right)$, but it does not hold for $2f$. In fact, for any $k\in\mathbb{k}$,
	$$\begin{aligned}
	\sum_{n=1}^N\frac{1}{n!}\int_{\mathbb{R}} (2f(x))^{n} dx &=2 \sum_{n=1}^N\frac{1}{n!}\sum_{m=1}^{\infty} (2m)^{n} \left((2e)^{-m}-(2e)^{-m-1}\right)\\
	&=\frac{4e}{2e-1}\sum_{n=1}^N\frac{1}{n!}\sum_{m=1}^{\infty} (2m)^{n} (2e)^{-m}\\
	&>\frac{4e}{2e-1}\sum_{n=1}^N\frac{1}{n!}\sum_{m=k}^{\infty} (2k)^{n} (2e)^{-m}\\
	&=\frac{8e^2}{(2e-1)^2}(2e)^{-k}\sum_{n=1}^N\frac{(2k)^n}{n!}.
\end{aligned}$$

For any $M>0$, there exists $k\in\mathbb{N}$ such that $\left(\frac{e}{2}\right)^k-1>M$. Setting $\varepsilon=(2e)^{-k}$, there exists $N=N(k)>0$ such that $\displaystyle e^{2k}-\sum_{n=1}^N\frac{(2k)^n}{n!}<\varepsilon$. It follows now that
$$\begin{aligned}
(2e)^{-k}\sum_{n=1}^N\frac{(2k)^n}{n!}\geq (2e)^{-k}(e^{2k}-\epsilon)=\left(\frac{e}{2}\right)^k-\varepsilon (2e)^k\geq \left(\frac{e}{2}\right)^k-1>M.
\end{aligned}$$

\end{example}

Although we have presented examples just for $p=1$, it is enough to take the $p$-th root to obtain an example for any $p>1$.

\begin{example}
	In order to construct an example in a noncommutative von Neumann algebra it is sufficient that there exists a monotonic decreasing sequence of projections $(P_n)_{n\in\mathbb{N}}\in\nalgebra_p$ such that $P_n \xrightarrow[]{\|\cdot\|_1} 0$, which is true if there exists any $\tau$-measurable unbounded operator.
		
	In fact, fix $1\leq p$. If there exists such a sequence, we can suppose without loss of generality, by taking a subsequence if necessary, that $\tau(P_n)\leq \frac{1}{(e^n-1)2^n}$. Define the positive unbounded $\tau$-measurable operator $$A=\sum_{n=1}^\infty n^\frac{1}{p}(P_n-P_{n+1}).$$
	
	It follows from the definition that
	$$\sum_{m=1}^\infty \frac{1}{m!}\tau(|A|^{pm})=\sum_{m=1}^\infty \frac{1}{m!}\sum_{n=1}^\infty n^m\tau(P_n-P_{n+1})\leq \sum_{m=1}^\infty \sum_{n=1}^\infty \frac{n^m}{m!} \frac{1}{(e^n-1)2^n}= \sum_{n=1}^\infty \frac{1}{2^n}=1.$$
	Thus $A\in\ex^\tau_p$.
\end{example}

It is not difficult to see, with help of the spectral decomposition, that if an operator is in $L_p(\nalgebra,\tau)\cap L_q(\nalgebra,\tau)$ with $1\leq p < q < \infty$, then it is in $L_r(\nalgebra,\tau)$ for every $p\leq r \leq q$. More than that, it follows by analyticity and the Three-Line Theorem, a special case of the Riesz-Thorin Theorem, that:
\begin{equation}
\label{eq:riesz-thorin}
\begin{aligned}
&\|A\|_r \leq \|A\|_p^{\frac{p}{(q-p)}\left(\frac{q}{r}-1\right)}\|A\|_q^{{\frac{q}{(q-p)}\left(1-\frac{p}{r}\right)}},& \quad \textrm{ if } q<\infty;\\
&\|A\|_r \leq \|A\|_p^\frac{p}{r}\|A\|_\infty^{1-\frac{p}{r}},& \quad \textrm{ if } q=\infty.
\end{aligned}
\end{equation}
 
An analogous property holds for $(\tau,p)$-exponentiable operators:

\begin{proposition}
	Let $1\leq p < q \leq \infty$. Then $\ex^\tau_p\cap\ex^\tau_q\subset \ex^\tau_r$ for every $p\leq r\leq q$.
\end{proposition}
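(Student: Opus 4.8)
The plan is to reduce everything to the interpolation inequality \eqref{eq:riesz-thorin}, applied term by term to the powers $|A|^n$, and then to close the argument with H\"older's inequality for series. First I would dispose of the endpoints: for $r=p$ or $r=q$ the inclusion is immediate from the hypothesis $A\in\ex^\tau_p\cap\ex^\tau_q$, so I may assume $p<r<q$. Fix such an $A$. Since both defining series converge, each individual term is finite, whence $\||A|^n\|_p<\infty$ and $\||A|^n\|_q<\infty$ for every $n$; that is, the positive operator $|A|^n$ lies in $L_p(\nalgebra,\tau)\cap L_q(\nalgebra,\tau)$, which is exactly the condition needed to apply \eqref{eq:riesz-thorin} to it.

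Assume first $q<\infty$ and set $\theta=\frac{p}{q-p}\left(\frac{q}{r}-1\right)$, so that $1-\theta=\frac{q}{q-p}\left(1-\frac{p}{r}\right)$, both lying in $(0,1)$, and $\frac{1}{r}=\frac{\theta}{p}+\frac{1-\theta}{q}$. Applying \eqref{eq:riesz-thorin} to $|A|^n$ yields $\||A|^n\|_r\le\||A|^n\|_p^{\theta}\||A|^n\|_q^{1-\theta}$. Splitting $\frac{1}{n!}=\left(\frac{1}{n!}\right)^\theta\left(\frac{1}{n!}\right)^{1-\theta}$, each summand in the series for $\ex^\tau_r$ factors as $\left(\frac{\||A|^n\|_p}{n!}\right)^\theta\left(\frac{\||A|^n\|_q}{n!}\right)^{1-\theta}$, and H\"older's inequality for series with conjugate exponents $\frac{1}{\theta}$ and $\frac{1}{1-\theta}$ would give
$$\sum_{n=1}^\infty \frac{\||A|^n\|_r}{n!}\le\left(\sum_{n=1}^\infty \frac{\||A|^n\|_p}{n!}\right)^{\theta}\left(\sum_{n=1}^\infty \frac{\||A|^n\|_q}{n!}\right)^{1-\theta}.$$
Both factors on the right are finite by assumption, so the left-hand side is finite and $A\in\ex^\tau_r$.

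For the remaining case $q=\infty$ I would use $\ex^\tau_\infty=\nalgebra$, so that $A\in\nalgebra$ and $\||A|^n\|_\infty=\|A\|^n$. Here \eqref{eq:riesz-thorin} reads $\||A|^n\|_r\le\||A|^n\|_p^{p/r}\|A\|^{n(1-p/r)}$, and the same splitting of $\frac{1}{n!}$ together with H\"older's inequality for the conjugate exponents $\frac{r}{p}$ and $\frac{r}{r-p}$ produces
$$\sum_{n=1}^\infty \frac{\||A|^n\|_r}{n!}\le\left(\sum_{n=1}^\infty \frac{\||A|^n\|_p}{n!}\right)^{p/r}\left(\sum_{n=1}^\infty \frac{\|A\|^n}{n!}\right)^{1-p/r}\le\left(\sum_{n=1}^\infty \frac{\||A|^n\|_p}{n!}\right)^{p/r}\left(e^{\|A\|}-1\right)^{1-p/r}<\infty,$$
so again $A\in\ex^\tau_r$. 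I do not expect a genuine obstacle here; the only points requiring care are verifying that $|A|^n$ truly belongs to $L_p(\nalgebra,\tau)\cap L_q(\nalgebra,\tau)$ so the interpolation inequality is applicable, and matching the interpolation exponent $\theta$ with the H\"older conjugate pair so that the two series on the right are exactly the ones known to converge.
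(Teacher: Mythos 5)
Your proof is correct and takes essentially the same route as the paper: both arguments apply the interpolation inequality \eqref{eq:riesz-thorin} termwise to $|A|^n$ (using that $|A|^n\in L_p(\nalgebra,\tau)\cap L_q(\nalgebra,\tau)$ for every $n$) and then sum the resulting series. The only difference is the closing step --- the paper bounds $\||A|^n\|_p^{\theta}\||A|^n\|_q^{1-\theta}\leq\max\bigl\{\||A|^n\|_p,\||A|^n\|_q\bigr\}\leq\||A|^n\|_p+\||A|^n\|_q$ and adds the two convergent series, while your H\"older-for-series argument produces the slightly sharper multiplicative bound $\bigl(\sum_{n}\||A|^n\|_p/n!\bigr)^{\theta}\bigl(\sum_{n}\||A|^n\|_q/n!\bigr)^{1-\theta}$; both close the argument correctly.
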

\begin{proof}
	Let $A\in \ex^\tau_p\cap\ex^\tau_q$, in particular $|A|^n\in L_p(\nalgebra,\tau)\cap L_q(\nalgebra,\tau)$ for all $n\in\mathbb{N}$.
	
	Using equation \eqref{eq:riesz-thorin} we get that, for all $p\leq r\leq q$,
	$$\||A|^n\|_r\leq \max\bigl\{\||A|^n\|_p,\||A|^n\|_q\bigr\}\leq \||A|^n\|_p+\||A|^n\|_q ,$$
	$$\begin{aligned}
	\sum_{n=1}^N \frac{\||A|^n\|_r}{n!}&=\sum_{n=1}^N \frac{\||A|^n\|_p+\||A|^n\|_q}{n!}\\
	&=\sum_{n=1}^\infty \frac{\||A|^n\|_p}{n!}+\sum_{n=1}^\infty \frac{\||A|^n\|_q}{n!}\\
	&<\infty.
	\end{aligned}$$	
\end{proof}

Although $\ex^\tau_p$, in general, are not vector spaces, they still have very convenient geometric structure for perturbations.

\begin{proposition}
	\label{exconvex}
	\begin{enumerate}[(i)]
		\item $\ex^\tau_p$ is a balanced and convex set;
		
		\item for every $A\in\ex^\tau_p$ and $B\in \nalgebra$ with $\|B\|\leq 1$, $BA\in \ex^\tau_p$;
		
		\item if $1\leq p, \, q,\, r\leq \infty$ are such that $\frac{1}{p}+\frac{1}{q}=\frac{1}{r}$ and $A,B \in \nalgebra_\tau$,
		$$\sum_{n=1}^\infty\frac{\tau(|A|^{np})}{n!} \quad , \quad \sum_{n=1}^\infty\frac{\tau(|B|^{nq})}{n!}<\infty \ \Rightarrow \ \sum_{n=1}^\infty\frac{\tau(|AB|^{nr})}{n!}<\infty;$$
		
		\item $\ex^\tau_{p,\infty}$ is a subspace of $L_p(\nalgebra)$.
	\end{enumerate}
\end{proposition}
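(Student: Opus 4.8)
The plan is to reduce all four assertions to elementary estimates on the scalar sequence $\||A|^n\|_p$, using the identity $\||A|^n\|_p=\tau(|A|^{np})^{1/p}=\|A\|_{np}^n$, which holds because $|A|^n$ is a positive operator. With this, membership in $\ex^\tau_p$ (resp.\ in $\ex^\tau_{p,\lambda}$) becomes convergence of $\sum_n\tfrac{\|A\|_{np}^n}{n!}$ (resp.\ of $\sum_n\tfrac{\lambda^n\|A\|_{np}^n}{n!}$), so each part follows by combining the triangle inequality for $\|\cdot\|_{np}$ (Minkowski, Theorem \ref{minkowski}) or the relevant H\"older inequality with a convexity argument. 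For part (i), balancedness is immediate since $|\mu A|^n=|\mu|^n|A|^n$ gives $\||\mu A|^n\|_p\le\||A|^n\|_p$ termwise when $|\mu|\le1$. For convexity I set $C=tA+(1-t)B$ with $0\le t\le1$; Minkowski gives $\|C\|_{np}\le t\|A\|_{np}+(1-t)\|B\|_{np}$, and the convexity of $x\mapsto x^n$ on $[0,\infty)$ then yields $\|C\|_{np}^n\le t\|A\|_{np}^n+(1-t)\|B\|_{np}^n$; summing against $\tfrac1{n!}$ shows $C\in\ex^\tau_p$. The case $p=\infty$ is covered by $\ex^\tau_\infty=\nalgebra$.

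For part (ii), I would first record the endpoint inequality $\|BA\|_s\le\|B\|\,\|A\|_s$, valid for every $1\le s\le\infty$ and every $B\in\nalgebra$; this is the $L_\infty$ case of H\"older and can be deduced from Lemma \ref{normtraceinequality} together with the duality in Theorem \ref{minkowski}(i), or directly from $B^\ast B\le\|B\|^2\mathbbm{1}$. Applying it with $s=np$ gives $\||BA|^n\|_p=\|BA\|_{np}^n\le\|B\|^n\|A\|_{np}^n\le\||A|^n\|_p$ when $\|B\|\le1$, so the series bound for $BA$ is inherited from that of $A$.

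Part (iii) is the core of the proposition and where the genuine work lies. Writing $a_n=\|A\|_{np}^n$, $b_n=\|B\|_{nq}^n$ and $c_n=\|AB\|_{nr}^n$, the generalized H\"older inequality (Corollary \ref{g2holder}), applied at each level $n$ with $\tfrac1{np}+\tfrac1{nq}=\tfrac1{nr}$, gives $c_n\le a_nb_n$, hence $\tau(|AB|^{nr})=c_n^r\le a_n^rb_n^r$. The hypotheses read precisely $\sum_n\tfrac{a_n^p}{n!}<\infty$ and $\sum_n\tfrac{b_n^q}{n!}<\infty$. The delicate point is to pass from this pointwise bound to summability, and here I would invoke the classical H\"older inequality for sequences with the conjugate exponents $u=p/r$ and $v=q/r$, which satisfy $\tfrac1u+\tfrac1v=\tfrac{r}{p}+\tfrac{r}{q}=1$ and $u,v\ge1$ (since $\tfrac1r\ge\tfrac1p,\tfrac1q$). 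Splitting $\tfrac{a_n^rb_n^r}{n!}=\tfrac{a_n^r}{(n!)^{1/u}}\cdot\tfrac{b_n^r}{(n!)^{1/v}}$ and applying sequence H\"older gives $\sum_n\tfrac{a_n^rb_n^r}{n!}\le\big(\sum_n\tfrac{a_n^{ru}}{n!}\big)^{1/u}\big(\sum_n\tfrac{b_n^{rv}}{n!}\big)^{1/v}$; since $ru=p$ and $rv=q$, both factors are finite by hypothesis. The degenerate cases with $\infty$ reduce to part (ii) or to submultiplicativity of $\|\cdot\|$.

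Finally, for part (iv) I would use $\ex^\tau_{p,\infty}=\bigcap_{\lambda>0}\ex^\tau_{p,\lambda}$. Closure under scalars is immediate from $\sum_n\tfrac{\lambda^n\||\mu A|^n\|_p}{n!}=\sum_n\tfrac{(\lambda|\mu|)^n\||A|^n\|_p}{n!}$, which is finite for every $\lambda$ because $A\in\ex^\tau_{p,\lambda|\mu|}$. For additivity, Minkowski gives $\|A+B\|_{np}\le\|A\|_{np}+\|B\|_{np}\le2\max\{\|A\|_{np},\|B\|_{np}\}$, whence $\||A+B|^n\|_p\le2^n\big(\||A|^n\|_p+\||B|^n\|_p\big)$; therefore for any $\lambda>0$, $\sum_n\tfrac{\lambda^n\||A+B|^n\|_p}{n!}\le\sum_n\tfrac{(2\lambda)^n\||A|^n\|_p}{n!}+\sum_n\tfrac{(2\lambda)^n\||B|^n\|_p}{n!}$, and both sums converge since $A,B\in\ex^\tau_{p,2\lambda}$. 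The ability to absorb the factor $2^n$ into the free parameter $\lambda$ is exactly what makes $\ex^\tau_{p,\infty}$ a subspace while the fixed-$\lambda$ sets $\ex^\tau_p$ are not, consistent with Example \ref{exfunctionbadbehaved}.
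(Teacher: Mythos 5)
Your proof is correct and follows essentially the same route as the paper's: Minkowski plus an elementary power estimate for (i), the endpoint bound $\|BA\|_{np}\le\|B\|\,\|A\|_{np}$ via Theorem \ref{minkowski}(i) for (ii), the levelwise generalized H\"older inequality (Corollary \ref{g2holder}) combined with sequence H\"older at the conjugate exponents $p/r$, $q/r$ for (iii), and absorption of constants into the free parameter $\lambda$ for (iv). The only noteworthy difference is cosmetic: your Jensen step $(\lambda a+(1-\lambda)b)^n\le\lambda a^n+(1-\lambda)b^n$ in (i) is cleaner than, and quietly repairs, the paper's misprinted intermediate inequality $(\lambda a+(1-\lambda)b)^n\le\max\{\lambda^n a^n,(1-\lambda)^n b^n\}$, which is false as written (take $a=b$, $\lambda=\tfrac12$) even though the final bound $a^n+b^n$ it aims at still holds.
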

\begin{proof}
	$(i)$ 	It is obvious that, for $A\in\ex^\tau_p$ and $|\lambda|\leq 1$ we have
	$$\begin{aligned}
	\sum_{n=1}^\infty\frac{\|\lambda A\|_{np}^n}{n!} &=\sum_{n=1}^\infty\frac{|\lambda|^n \|A\|_{np}^n}{n!}\leq\sum_{n=1}^\infty\frac{\|A\|_{np}^n}{n!}.\\
	\end{aligned}$$	
	
	Let $A,B\in \ex^\tau_p$ and let $0<\lambda<1$. Then,
	$$\begin{aligned}
	\sum_{n=1}^\infty\frac{\|\lambda A +(1-\lambda)B\|_{np}^n}{n!}&\leq\sum_{n=1}^\infty\frac{1}{n!}\left(\lambda \|A\|_{np}+(1-\lambda)\|B\|_{np}\right)^n\\
	&\leq\sum_{n=1}^\infty\frac{1}{n!}\max\left\{\lambda^n\|A\|_{np}^{n},(1-\lambda)^{n}\|B\|_{np}^{n}\right\}\\
	&\leq\sum_{n=1}^\infty\frac{1}{n!}\left(\lambda^n\|A\|_{np}^{n}+(1-\lambda)^{n}\|B\|_{np}^{n}\right)\\
	&\leq\sum_{n=1}^\infty\frac{1}{n!}\|A\|_{np}^{n}+\sum_{n=1}^\infty\frac{1}{n!}\|B\|_{np}^{n};\\
	\end{aligned}$$
	
	$(ii)$ it follows trivially from $(i)$ in  Theorem \ref{minkowski};
	
	$(iii)$ it follows from Corollary \ref{g2holder} that 
	$$\begin{aligned}
	\sum_{n=1}^N\frac{\tau\left(|AB|^{nr}\right)}{n!}&\leq \sum_{n=1}^N\frac{1}{n!}\tau\left(|A|^{np}\right)^{\frac{r}{p}}\tau\left(|B|^{nq}\right)^{\frac{r}{q}}\\
	&=\sum_{n=1}^N\left(\frac{\tau\left(|A|^{np}\right)}{n!}\right)^\frac{r}{p}\left(\frac{\tau\left(|B|^{nq}\right)}{n!}\right)^\frac{r}{q}\\
	&\leq\left(\sum_{n=1}^N\frac{\tau\left(|A|^{np}\right)}{n!}\right)^\frac{r}{p}\left(\sum_{n=1}^N\frac{\tau\left(|B|^{nq}\right)}{n!}\right)^\frac{r}{q}.\\
	\end{aligned}$$
	
	$(iv)$ Notice that $A\in\ex^\tau_{p,\infty}$ if and only if $\lambda A\in\ex^\tau_p$ for every $\lambda\in\mathbb{R}$. It follows by item $(i)$ that, if $\alpha, \, \beta \in \mathbb{C}$ and $A,B\in A\in\ex^\tau_{p,\infty}$,
	$$\alpha A+\beta B=\left(|\alpha|+|\beta| \right)\left(\frac{|\alpha|}{|\alpha|+|\beta|}\left(\frac{\alpha}{|\alpha|}A\right)+\frac{|\beta|}{|\alpha|+|\beta|}\left(\frac{\beta}{|\beta|}B\right)\right)\in \left(|\alpha|+|\beta| \right)\ex^\tau_p\subset \ex^\tau_{p,\infty}.$$
\end{proof}

The following lemma justifies the choice of the name ``exponentiable'' for such operators.

\begin{lemma}
\label{ExpansionalinLp}
For each $A\in\ex^\tau_{p,\lambda}$ and $B\eta \nalgebra$ positive, define $A(t)=B^{it}AB^{-it}$. Then, for $0\leq t < \lambda$,
$$\mathbbm{1}-Exp_r\left(\int_0^t;A(s) ds\right) \quad \textrm{ and } \quad \mathbbm{1}- Exp_l\left(\int_0^t;A(s) ds\right)\in L_p\left(\nalgebra,\tau\right).$$
\end{lemma}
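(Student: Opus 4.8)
The plan is to recognise that, after subtracting the identity (the $n=0$ term of Definition \ref{DysonSeries}), what must be controlled is the tail series
$$Exp_r\left(\int_0^t;A(s)\,ds\right)-\mathbbm{1}=\sum_{n=1}^\infty \int_0^t dt_1\cdots\int_0^{t_{n-1}}dt_n\, A(t_n)\cdots A(t_1),$$
and to prove that this series converges \emph{absolutely} in the Banach space $L_p(\nalgebra,\tau)$. The whole argument rests on one elementary observation: since $B$ is positive and affiliated with $\nalgebra$, each $B^{\iu s}$ is a unitary of $\nalgebra$, and conjugation by a unitary leaves the trace — hence every $\|\cdot\|_q$ — invariant. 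Concretely $A(s)^\ast A(s)=B^{\iu s}|A|^2 B^{-\iu s}$, so $|A(s)|=B^{\iu s}|A|B^{-\iu s}$ and therefore $\||A(s)|^n\|_q=\||A|^n\|_q$ for every $q$ and every $s\in\mathbb{R}$. In particular, since $A\in\ex^\tau_{p,\lambda}$ forces $\||A|^n\|_p<\infty$, i.e. $A\in L_{np}(\nalgebra,\tau)$, the same holds for each $A(s)$.

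Then I would bound each summand. Since $A(t_i)\in L_{np}(\nalgebra,\tau)$ with $\tfrac{1}{p}=\sum_{i=1}^n\tfrac{1}{np}$, iterating the generalised H\"older inequality (Corollary \ref{g2holder}) along the chain of exponents $\tfrac{1}{q_k}=\tfrac{k}{np}$ gives
$$\left\|A(t_n)\cdots A(t_1)\right\|_p\le\prod_{i=1}^n\|A(t_i)\|_{np}=\|A\|_{np}^{\,n}=\||A|^n\|_p,$$
uniformly in $(t_1,\dots,t_n)$. Combining this with Minkowski's integral inequality (the Bochner-integral form of Theorem \ref{minkowski}) over the simplex $0\le t_n\le\cdots\le t_1\le t$, whose volume is $t^n/n!$, yields
$$\left\|\int_0^t dt_1\cdots\int_0^{t_{n-1}}dt_n\,A(t_n)\cdots A(t_1)\right\|_p\le\frac{t^n}{n!}\,\||A|^n\|_p.$$
Summing and using $t^n\le\lambda^n$ for $0\le t<\lambda$, the defining property \eqref{eq:defexponentiable} of $\ex^\tau_{p,\lambda}$ gives
$$\sum_{n=1}^\infty\left\|\int_0^t dt_1\cdots\int_0^{t_{n-1}}dt_n\,A(t_n)\cdots A(t_1)\right\|_p\le\sum_{n=1}^\infty\frac{\lambda^n\||A|^n\|_p}{n!}<\infty.$$
Since $L_p(\nalgebra,\tau)$ is complete, the series converges in $\|\cdot\|_p$ and its limit is the desired element; hence $\mathbbm{1}-Exp_r(\int_0^t;A(s)\,ds)\in L_p(\nalgebra,\tau)$. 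The argument for $Exp_l$ is identical, only the order of the factors in each summand being reversed, which affects none of the norm estimates.

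The main obstacle I anticipate is one of legitimacy rather than of size: because $A$ is typically unbounded, the products $A(t_n)\cdots A(t_1)$ live in $L_p(\nalgebra,\tau)$, not in $\nalgebra$, so the multiple integrals must be read as Bochner integrals of $L_p$-valued maps, and one must check that the integrand is strongly measurable — indeed continuous — before Minkowski's integral inequality applies. This reduces to knowing that $s\mapsto A(s)=B^{\iu s}AB^{-\iu s}$ is $\|\cdot\|_{np}$-continuous, i.e. that conjugation by $\{B^{\iu s}\}$ is a strongly continuous one-parameter group of isometries of each $L_q(\nalgebra,\tau)$; the joint continuity of $n$-fold H\"older multiplication $L_{np}\times\cdots\times L_{np}\to L_p$ (a telescoping estimate from Corollary \ref{g2holder}) then makes $(t_1,\dots,t_n)\mapsto A(t_n)\cdots A(t_1)$ continuous, hence Bochner integrable over the compact simplex. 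Establishing this strong continuity, via the isometry property together with a density argument on $\nalgebra\cap L_q(\nalgebra,\tau)$, is the one technical point that requires care; every other step is a routine consequence of H\"older, Minkowski and the exponentiability hypothesis described above.
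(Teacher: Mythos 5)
Your proof is correct and follows essentially the same route as the paper's: a term-by-term bound via the generalised H\"older inequality with equal exponents, the unitary invariance $\||A(s)|^n\|_{p}=\||A|^n\|_{p}$ coming from conjugation by $B^{\iu s}$, the simplex volume $t^n/n!$, and completeness of $L_p(\nalgebra,\tau)$ applied to the resulting $\|\cdot\|_p$-Cauchy partial sums. Your closing discussion of Bochner measurability and $\|\cdot\|_{np}$-continuity of $s\mapsto A(s)$ supplies a point of rigour the paper passes over silently (it only notes via Proposition \ref{measurablealgebra} that each term lies in $\nalgebra_\tau$), but it does not change the argument.
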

\begin{proof}
	
	Since $A\in \nalgebra_\tau \Rightarrow A(t)\in \nalgebra_\tau$, Proposition \ref{measurablealgebra} implies that each term in the definition of these operators is in $\nalgebra_\tau$ except for the identity. 
		
	 In addition, using Theorem \ref{gholder} for $p_i=n$, $i=1, \ldots, n$, we have, for every $N>M$, that
	 
	\begin{equation}
	\label{eq:calculation1}
	\begin{aligned}
	\Bigg\|\sum_{n=N}^{M}\int_{0}^{t} dt_1\ldots \int_{0}^{t_{n-1}}dt_{n}&A(t_n)\ldots A(t_1)\Bigg\|_p\\	&\leq\sum_{n=N}^{M}{\int_{0}^{t} dt_1\ldots \int_{0}^{t_{n-1}}dt_{n}\left\|A(t_n)\ldots A(t_1)\right\|_p}\\	
	&=\sum_{n=N}^{M}\int_{0}^{t} dt_1\ldots \int_{0}^{t_{n-1}}dt_{n}\tau\left(\left|A(t_n)\ldots A(t_1)\right|^p\right)^\frac{1}{p}\\
	&=\sum_{n=N}^{M}\int_{0}^{t} dt_1\ldots \int_{0}^{t_{n-1}}dt_{n}\tau\left(\left|A\right|^{pn}\right)^\frac{1}{p}\\
	&=\sum_{n=N}^{M}\frac{t^n}{n!}\tau\left(\left|A\right|^{pn}\right)^\frac{1}{p}\\
	\end{aligned}
	\end{equation}
	which shows simultaneously,\, that each term is in $L_p(\nalgebra,\tau)$ and the partial sum is a $\|\cdot\|_p$-Cauchy sequence. The thesis follows by completeness.
	
\end{proof}

Hitherto, we have defined a set, namely $\ex^\tau_p$, that we assert is the right set to take our perturbation, but the reader should be warned after so many comments about the duality relations between $L_p$-spaces that we will demand some extra ``dual'' property on the original state. This motivates our next definition.

\begin{definition}
	Let $\nalgebra$ be a von Neumann algebra and $\tau$ a faithful normal semifinite trace on $\nalgebra$. We say that a state $\phi$ on $\nalgebra$ is $\|\cdot\|_p$-continuous if it is continuous on $\left(\nalgebra\cap L_p(\nalgebra,\tau),\|\cdot\|_p\right)$.
	
	Of course such a state can be continuously extended to $\left(L_p(\nalgebra,\tau),\|\cdot\|_p\right)$ in a unique way.
\end{definition}

To clarify the previous definition, remember that $\nalgebra\cap L_p(\nalgebra,\tau)$ (or even $\nalgebra\cap L_1(\nalgebra,\tau)$) is $\|\cdot\|_p$-dense in $L_p(\nalgebra,\tau)$.
\pagebreak

\begin{proposition}
	\label{defpcont}
Let $\nalgebra$ be a von Neumann algebra and $\tau$ a faithful normal semifinite trace on $\nalgebra$ and $1\leq p, \, q\leq \infty$ such that $\frac{1}{p}+\frac{1}{q}=1$. A state $\phi$ on $\nalgebra$ is $\|\cdot\|_p$-continuous if and only if there exists $H\in L_q(\nalgebra,\tau)$, $H\eta \mathfrak{M}_{\tau^\phi}$, such that 
$$\phi(A)=\tau_H(A) \quad \forall A\in\nalgebra,$$ in the sense of Lemma \ref{unboundedderivativeweight}.
\end{proposition}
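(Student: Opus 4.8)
The converse is the routine direction, so I would dispatch it first. Assuming $H\in L_q(\nalgebra,\tau)$ with $H\eta\mathfrak{M}_{\tau^\phi}$ and $\phi=\tau_H$, for every $A\in\nalgebra\cap L_p(\nalgebra,\tau)$ the H\"older inequality (Theorem \ref{holder}, extended to all of $L_p$ by density and normality of $\tau$, or Corollary \ref{g2holder} with $r=1$) yields $|\phi(A)|=|\tau(HA)|\leq\tau(|HA|)\leq\|H\|_q\|A\|_p$. Hence $\phi$ is bounded on $\bigl(\nalgebra\cap L_p(\nalgebra,\tau),\|\cdot\|_p\bigr)$, i.e. $\|\cdot\|_p$-continuous.

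For the direct implication the plan is to produce $H$ by duality and then extract its finer properties. Since $\phi$ is $\|\cdot\|_p$-continuous it extends uniquely to an element of $L_p(\nalgebra,\tau)^\ast$; by the isometric isomorphism $\Xi$ of Theorem \ref{dualLp} (applied with the roles of $p$ and $q$ interchanged, both being $\geq1$ and conjugate) there is a unique $H\in L_q(\nalgebra,\tau)$ with $\phi(A)=\tau_H(A)=\tau(HA)$ for all $A\in L_p(\nalgebra,\tau)$, in particular for all $A\in\nalgebra\cap L_p(\nalgebra,\tau)$. Using that $\phi$ is a state, hence self-adjoint, together with the trace property $\overline{\tau(HA)}=\tau(H^\ast A)$ and the injectivity of $\Xi$, I would first conclude $H=H^\ast$. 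Positivity of $H$ then follows from positivity of $\phi$: writing $H=H_+-H_-$ through the Spectral Theorem (Corollary \ref{ST}) and testing $\phi$ against the finite-trace projections $E_n=E^{|H|}_{[1/n,n]}E^{H}_{(-\infty,0)}\in\nalgebra\cap L_p(\nalgebra,\tau)$, one gets $0\leq\phi(E_n)=\tau(HE_n)=-\tau(H_-E_n)\leq0$, so faithfulness of $\tau$ forces $H_-E_n=0$ for every $n$, whence $H_-=0$ and $H\geq0$.

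It then remains to identify $\phi$ with the weight $\tau_H$ on all of $\nalgebra_+$ and to establish the affiliation $H\eta\mathfrak{M}_{\tau^\phi}$. Because $\tau$ is a trace its modular group is trivial, so $\mathfrak{M}_{\tau^{\tau}}=\nalgebra$ and the extended weight $\tau_H$ of Lemma \ref{unboundedderivativeweight} is well defined for the positive $H\eta\nalgebra$ just obtained; it is normal and semifinite. As $\tau_H$ and the (now normal) state $\phi$ agree on the $\sigma$-weakly dense $\ast$-subalgebra $\nalgebra\cap L_p(\nalgebra,\tau)$ and are both normal, I expect them to coincide on $\nalgebra_+$, and since $\tau_H(\mathbbm{1})=\phi(\mathbbm{1})=1<\infty$ the identity $\phi(A)=\tau_H(A)$ holds for every $A\in\nalgebra$. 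Finally, since the reference weight $\tau$ is a trace, the modular automorphism group of $\tau_H$ is computed exactly as in the proof of Corollary \ref{invariancecommute} (the Pedersen--Takesaki formula, Theorem 2.11 of \cite{Takesaki2003}) to be $\tau^{\tau_H}_t(A)=H^{\iu t}AH^{-\iu t}$. By the uniqueness of the group satisfying the modular condition (Theorem \ref{UniqueModCond}) this is $\tau^{\phi}_t=\operatorname{Ad}(H^{\iu t})$, and since $H$ commutes with its own imaginary powers one gets $\tau^{\phi}_t(H)=H$ for all $t$, i.e. $H\eta\mathfrak{M}_{\tau^\phi}$.

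The main obstacle I anticipate is this last step: making the modular-group computation rigorous when $\phi$, equivalently $H$, fails to be faithful, which forces passing to the support projection $s^\nalgebra(\phi)$ and working inside $s^\nalgebra(\phi)\nalgebra s^\nalgebra(\phi)$ before concluding the affiliation, together with the bookkeeping needed to promote the equality $\phi=\tau_H$ from $\nalgebra\cap L_p(\nalgebra,\tau)$ to all of $\nalgebra_+$ via normality. Everything else reduces to the duality of Theorem \ref{dualLp}, the H\"older/Minkowski estimates, and the already established structure theory for $\tau_H$.
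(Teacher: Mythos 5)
Your proposal is correct and its backbone coincides with the paper's proof: the converse via the (extended) H\"older inequality, and the direct implication by continuously extending $\phi$ to an element of $L_p(\nalgebra,\tau)^\ast$, invoking the duality $\Xi$ of Theorem \ref{dualLp} to produce $H\in L_q(\nalgebra,\tau)$, and then restricting to the WOT-dense set $\nalgebra\cap L_p(\nalgebra,\tau)$. You differ in how you treat the two steps the paper compresses. First, you actually prove $H=H^\ast$ and $H\geq 0$ by spectral tests against finite-trace projections; the paper never records this, although Lemma \ref{unboundedderivativeweight} is only formulated for positive $H$, so this is a worthwhile addition. Second, for the affiliation $H\eta\mathfrak{M}_{\tau^\phi}$ the paper simply says ``by the same argument as in Theorem \ref{TPTRN}'', i.e.\ the invariance computation with $J_\phi$ and $\Delta_\phi^{\pm\frac{1}{2}}$ in the GNS representation, whereas you identify $\tau^\phi_t=\mathrm{Ad}\left(H^{\iu t}\right)$ via the Pedersen--Takesaki formula (the same device the paper deploys inside Corollary \ref{invariancecommute}) together with uniqueness of the group satisfying the modular condition (Theorem \ref{UniqueModCond}); both routes work, yours being more explicit at the price of the support-projection reduction you correctly anticipate when $H$ fails to be faithful.

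One shared soft spot deserves flagging: passing from $\phi=\tau_H$ on $\nalgebra\cap L_p(\nalgebra,\tau)$ to all of $\nalgebra$ genuinely needs normality of $\phi$ --- density alone is not enough, since a state with a singular component vanishing on $\nalgebra\cap L_p(\nalgebra,\tau)$ (e.g.\ on $B(\hilbert)$ with the standard trace, where $\nalgebra\cap L_p$ consists of compact operators, take a convex combination of a normal state and a state annihilating the compacts) is $\|\cdot\|_p$-continuous in the sense of the paper's definition but is not of the form $\tau_H$. The paper's proof silently relies on normality via the bare WOT-density claim; your parenthetical ``(now normal)'' is likewise not derived from $\|\cdot\|_p$-continuity and should be treated as a standing hypothesis (similarly the boundary case $p=\infty$, $q=1$, which the paper dismisses as ``analogous'', only holds for normal states since $L_\infty(\nalgebra,\tau)^\ast\neq L_1(\nalgebra,\tau)$). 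Since the paper's own argument makes the same leap, this does not count against your proposal relative to the paper, but it is the one point where neither text is complete.
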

\begin{proof}
	As mentioned in Definition \ref{defpcont}, we can continuously extended $\phi$ to $L_p(\nalgebra,\tau)$. By the dual relation between $L_p(\nalgebra,\tau)$ and $L_q(\nalgebra,\tau)$ stated in Theorem \ref{dualLp}, there exists $H\in L_q(\nalgebra,\tau)$ such that $\phi(A)=\tau(HA)$ for all $A\in L_p(\nalgebra,\tau)$. Such $H$ must be affiliated with $\mathfrak{M}_{\tau^\phi}$ by the same argument as in Theorem \ref{TPTRN}.
	
	In particular, $\phi(A)=\tau(HA)$ for all $A\in \nalgebra\cap L_p(\nalgebra,\tau)$, but $\nalgebra\cap L_p(\nalgebra,\tau)$ is \mbox{WOT-dense} in $\nalgebra$, since the trace is semifinite.
	
	The cases $p=1,\infty$, are analogous and the other part of the equivalence is trivial.
	
\end{proof}

The next two theorems can be seen as the key to guarantee Dyson's series is convergent. It is time to stress how important Araki's multiple-time KMS condition is for the theory. Here it is used with the same purposes of the original Araki's article \cite{Araki73}. Mentioning an interesting connection, this property is also used in the Araki's noncommutative $L_p$-spaces, what makes us believe there is a natural way to extend this result.
 
\begin{theorem}
	\label{TR0}
	Let $\nalgebra$ be a von Neumann algebra, $\phi$ a faithful state in $\nalgebra$, $p,q>1$ such that $\frac{1}{p}+\frac{1}{q}=1$, and $n\in\mathbb{N}$. Let also $\left(\hilbert_\phi,\Phi, \pi_\phi\right)$ be the GNS representation throughout $\phi$, $\tau$ a normal faithful semifinite trace on $B(\hilbert_\phi)$, $Q_i, J_\phi Q_i J_\phi\in L_{2mq}\left(B(\hilbert_\phi),\tau\right)$ such that $\|J_\phi Q_i J_\phi\|_{2mq}=\|Q_i\|_{2mq}$ for all $1\leq i,m\leq n$ and suppose $\phi$ is $\|\cdot\|_p$-continuous. Then, if $\Phi \in\Dom{Q_1}$ and $\Delta_\Phi^{\iu z_{j-1}}Q_{j-1}\ldots \Delta_\Phi^{\iu z_1}Q_1\Phi \in \Dom{Q_j}$ for every $-\frac{1}{2}\leq \Im{z_j} \leq 0$ and for every $2\leq j\leq n$,
	
	$$Q_n\Delta_\Phi^{\iu z_{n-1}}Q_{n-1}\ldots \Delta_\Phi^{\iu z_1}Q_1\Phi \in \Dom{\Delta_\Phi^{\iu z}} \textrm{ for } -\frac{1}{2}\leq\Im{z}\leq0 \textrm{ and }$$
	$$A^n(z_1,\ldots, z_n)\Phi \doteq \Delta_\Phi^{\iu z_n} Q_n \Delta_\Phi^{\iu z_{n-1}}Q_{n-1}\ldots \Delta_\Phi^{\iu z_1}Q_1\Phi$$ is analytic on $S_\frac{1}{2}=\displaystyle \left\{(z_1,\ldots,z_n)\in \mathbb{C}^n \ \middle | \ \Im{z_i}<0, \ 1\leq i\leq n, \textrm{ and } -\frac{1}{2}<\sum_{i=1}^n\Im{z_i}<0\right\}$ and bounded in its closure by
	$$ \left\|A^n(z_1,\ldots, z_n)\Phi\right\|\leq \|H\|_p^\frac{1}{2} \prod_{j=1}^{n}\|Q_i\|_{2nq}.$$
\end{theorem}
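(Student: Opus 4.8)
The plan is to treat $(z_1,\dots,z_n)\mapsto A^n(z_1,\dots,z_n)\Phi$ as an analytic $\hilbert_\phi$-valued function on $S_{\frac12}$ and to extract the bound from a maximum-modulus argument. First I would settle analyticity and continuity. The hypotheses say exactly that each partial product $\Delta_\Phi^{\iu z_{j-1}}Q_{j-1}\cdots\Delta_\Phi^{\iu z_1}Q_1\Phi$ lands in $\Dom{Q_j}$ and, after $Q_n$, in $\Dom{\Delta_\Phi^{\iu z}}$; since $z\mapsto\Delta_\Phi^{\iu z}\xi=e^{\iu z\log\Delta_\Phi}\xi$ is analytic on the relevant strip whenever $\xi$ lies in the corresponding domain (spectral theorem, in the spirit of Lemma \ref{lemma6}) and the $Q_j$ are closed, the composition is weakly analytic on the open region and continuous on $\overline{S_{\frac12}}$. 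By the proposition identifying weak and strong analyticity, this is automatically strong analyticity, so $A^n(\cdot)\Phi$ is a genuine analytic vector-valued map.

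Next I would reduce to a scalar problem amenable to the multi-variable Three-Line Theorem. For a fixed unit vector $\eta\in\hilbert_\phi$ set $f(z_1,\dots,z_n)=\ip{\eta}{A^n(z_1,\dots,z_n)\Phi}$; this is scalar, analytic on $S_{\frac12}$ and bounded on its closure. After the change of variables $w_j=\iu z_j$ (so $\Re w_j=-\Im z_j$) the region becomes the set of $w$ whose real parts lie in the simplex $C=\{r\in\mathbb{R}^n\mid r_j\ge0,\ \sum_{j}r_j\le\tfrac12\}$, and Theorem \ref{TLT} applies. Writing an interior real part $r\in C$ as the convex combination of the vertices $\mathbf 0$ and $\tfrac12 e_j$, the Three-Line Theorem gives
\[
\sup_{y}|f(r+\iu y)|\le\Big(\sup_{y}|f(\mathbf 0+\iu y)|\Big)^{1-2\sum_j r_j}\prod_{j=1}^{n}\Big(\sup_{y}|f(\tfrac12 e_j+\iu y)|\Big)^{2r_j},
\]
so that, taking the supremum over $\eta$ and using $\sup_{\|\eta\|=1}|\ip{\eta}{\xi}|=\|\xi\|$, it suffices to bound $\|A^n\Phi\|$ on the $n+1$ distinguished faces corresponding to the vertices of $C$.

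The heart of the matter is the boundary estimate. Let $H$ be the $\tau$-density of the vector state $\omega_\Phi=\ip{\Phi}{\cdot\,\Phi}$, which agrees with $\phi$ on $\pi_\phi(\nalgebra)$ and is placed in $L_p$ by the $\|\cdot\|_p$-continuity through Proposition \ref{defpcont}. On the vertex $\mathbf 0$ every $\Delta_\Phi^{\iu z_j}$ is unitary; absorbing these norm-one factors, $\|A^n\Phi\|^2$ equals a trace $\tau(H\,X^\ast X)$ with $X=Q_n U_{n-1}\cdots U_1 Q_1$ a product of the $Q_j$ and unitaries, and $X\in L_{2q}$ with $\|X\|_{2q}\le\prod_j\|Q_j\|_{2nq}$ by the generalized H\"older inequality (Theorem \ref{gholder}); pairing $X^\ast X\in L_q$ against $H\in L_p$ yields $\|A^n\Phi\|\le\|H\|_p^{\frac12}\prod_j\|Q_j\|_{2nq}$. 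On a vertex $\tfrac12 e_j$ the factor $\Delta_\Phi^{\iu z_j}$ carries a full $\Delta_\Phi^{\frac12}$ inside the product; here I would reflect the portion of the product lying to the left of this $\Delta_\Phi^{\frac12}$ through the modular conjugation, using $J_\phi\Phi=\Phi$, $\|J_\phi\zeta\|=\|\zeta\|$ and $S_\Phi=J_\phi\Delta_\Phi^{\frac12}$ to convert $\Delta_\Phi^{\frac12}$ into the involution, thereby turning the left operators into their $J_\phi(\cdot)J_\phi$-images. The hypothesis $J_\phi Q_j J_\phi\in L_{2nq}$ with $\|J_\phi Q_j J_\phi\|_{2nq}=\|Q_j\|_{2nq}$ is precisely what keeps the $L_{2nq}$-norms unchanged after reflection, so the same H\"older-plus-density computation produces the identical bound $\|H\|_p^{\frac12}\prod_j\|Q_j\|_{2nq}$. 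Since all $n+1$ vertex bounds coincide, the log-convex interpolation above transports the bound to all of $\overline{S_{\frac12}}$.

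I expect the boundary estimate at the vertices $\tfrac12 e_j$ to be the main obstacle: for genuinely unbounded $L_p$-operators $Q_j$ with modular operators interspersed through the product, one must rigorously justify both the domain manipulations that let $\Delta_\Phi^{\frac12}$ be reflected across $J_\phi$ and the rewriting of $\|A^n\Phi\|^2$ as a convergent trace to which Theorem \ref{gholder} applies; identifying the correct $\tau$-density on $B(\hilbert_\phi)$ and reconciling it with the $\nalgebra$-level density of Proposition \ref{defpcont} is part of this work. A secondary technical point is verifying that $f$ is bounded on $\overline{S_{\frac12}}$ so that the Phragm\'en--Lindel\"of form of the Three-Line Theorem (Theorem \ref{PLT}, Theorem \ref{TLT}) is legitimately available; this follows from continuity on the closed region together with the vertex bounds, but must be confirmed before the interpolation is invoked.
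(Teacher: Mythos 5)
Your boundary analysis coincides with the paper's: at the vertex where all $\Im{z_j}=0$ the modular unitaries become the automorphisms $\tau^\phi_t$ and H\"older (Theorem \ref{gholder}) against the density $H$ of Proposition \ref{defpcont} gives $\|H\|_p^{1/2}\prod_j\|Q_j\|_{2nq}$, and at a vertex carrying a full $\Delta_\Phi^{1/2}$ the reflection through $J_\phi$, with the hypothesis $\|J_\phi Q_jJ_\phi\|_{2mq}=\|Q_j\|_{2mq}$ preserving the norms, yields the same bound — this is exactly the paper's cases $(i)$ and $(ii)$. But there is a genuine gap upstream of those estimates: you treat the membership $Q_n\Delta_\Phi^{\iu z_{n-1}}\cdots Q_1\Phi\in\Dom{\Delta_\Phi^{\iu z}}$, and the continuity and boundedness of $A^n(z_1,\ldots,z_n)\Phi$ on $\overline{S_{1/2}}$, as if they were hypotheses. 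They are the conclusion. The stated hypotheses only place the intermediate vectors in $\Dom{Q_j}$; nothing entitles you to apply the final unbounded power $\Delta_\Phi^{\iu z_n}$, so your scalar function $f(z)=\ip{\eta}{A^n(z)\Phi}$ is not a priori defined on the region where you want to invoke Phragm\'en--Lindel\"of, and ``continuity on the closed region'' is precisely what cannot be assumed. The paper closes this circle by duality: it pairs against the dense set $\nanalytic\Phi$ of modular-analytic vectors and moves the modular power across the inner product, studying $\overline{\ip{Q_n\Delta_\Phi^{\iu z_{n-1}}\cdots Q_1\Phi}{\Delta_\Phi^{-\iu\bar{z}}A\Phi}_\phi}$, which is manifestly well defined because $\Delta_\Phi^{-\iu\bar{z}}A\Phi$ makes sense for analytic $A$. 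The vertex bounds then bound this functional uniformly; Hahn--Banach together with the Riesz Representation Theorem produce a vector $\Omega(z)$ realizing it, which is by definition $\Delta_\Phi^{\iu z}Q_n\cdots Q_1\Phi$, and this is what yields the domain assertion, with weak analyticity upgrading to strong analyticity afterwards.

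A second omission compounds the first: even granting definedness, your one-line claim that the composition of the closed unbounded operators $Q_j$ with the families $\Delta_\Phi^{\iu z}$ is weakly analytic is not an argument. The paper obtains analyticity by induction on $n$ combined with the spectral truncations $Q_{j,k}=U_j\int_0^{k}\lambda\, dE^{|Q_j|}_\lambda$: the truncated functions $\bar{f}_k$ are honestly analytic (bounded operators), the vertex estimates are uniform in $k$, and the limit function inherits analyticity as a uniformly bounded limit of analytic functions. Your substitution of the multi-variable Three-Line Theorem \ref{TLT} for the paper's appeal to the maximum at extremal points of $S_{1/2}$ is a harmless variation; the missing Hahn--Banach/Riesz duality step and the truncation-plus-induction scaffolding are not, since without them neither the domain statement nor the analyticity claim in the theorem is actually established.
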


\begin{proof}
	Let's proceed by induction on $n$.
	
	For $n=1$, let $Q_1=U|Q_1|$ be the polar decomposition of $Q_1$ and $\displaystyle|Q_1|=\int_0^\infty \lambda dE^{|Q_1|}_\lambda$ the spectral decomposition of $|Q_1|$. Since $\Phi\in\Dom{Q_1}$, $\displaystyle Q_1\Phi=U\lim_{k\to \infty} Q_{1,k}\Phi$, where $\displaystyle Q_{1,k}=\int_0^k \lambda dE^{|Q_1|}_\lambda$.
	Define the following functionals on $\nanalytic\Phi$:
	$$\begin{aligned}
	f_k^z(A\Phi) &\doteq\ip{UQ_{1,k}\Phi}{\Delta_\Phi^{-\iu \bar{z}}A\Phi}_\phi,\\
	f^z(A\Phi)&\doteq\lim_{k\to\infty}f_k^z(A\Phi)=\ip{Q_1\Phi}{\Delta_\Phi^{-\iu \bar{z}}A\Phi}_\phi.\\	
	\end{aligned}$$
	
	Of course, for fixed $A\Phi$, $\bar{f_k}(z)=\overline{f_k^z(A\Phi)}$ is entire analytic and, using Theorem \ref{gholder} we obtain
	\begin{equation}
	\label{calculationX17}
	\begin{aligned}
	\left|\bar{f}(t)\right|&=\lim_{k\to\infty}\left|\bar{f}_k(t)\right|\\
	&=\lim_{k\to\infty}\left|\ip{\Delta_\Phi^{-\iu t}A\Phi}{UQ_{1,k}\Phi}_\phi\right|\\
	&\leq \left\|\Delta_\Phi^{-\iu t}A\Phi\right\| \lim_{k\to\infty} \|UQ_{1,k}\Phi\|\\
	& \leq\left\|A\Phi\right\|\lim_{k\to\infty}\phi\left(Q_{1,k} U^\ast U Q_{1,k} \right)^\frac{1}{2}\\
	&\leq \left\|A\Phi\right\|\lim_{k\to\infty}\tau\left(H^\frac{1}{2}Q_{1,k} U^\ast U Q_{1,k} H^\frac{1}{2}\right)^\frac{1}{2}\\
	& \leq\left\|A\Phi\right\|\|H\|_p^\frac{1}{2}\||Q_1|^2\|_q^\frac{1}{2}\\
	&\leq\left\|A\Phi\right\|\|H\|_p^\frac{1}{2}\|Q_1\|_{2q}, \quad \forall t\in \mathbb{R};\\
	\end{aligned}
	\end{equation}
	\begin{equation}
	\label{calculationX17.1}
	\begin{aligned}	
	\left|\bar{f}\left(t+\frac{1}{2}\iu\right)\right|&=\lim_{k\to\infty}\left|\bar{f}_k\left(t+\frac{1}{2}\iu\right)\right|\\
	&=\lim_{k\to\infty}\ip{\Delta_\Phi^{\frac{1}{2}}\Delta_\Phi^{-\iu t}A\Phi}{UQ_{1,k}\Phi}_\phi\\
	&\leq \left\|\Delta_\Phi^{-\iu t}A\Phi\right\| \lim_{k\to\infty} \|J_\Phi Q_{1,k} U^\ast\Phi\|\\
	&= \left\|A\Phi\right\| \lim_{k\to\infty} \| Q_{1,k} U^\ast\Phi\|\\
	&\leq \left\|A\Phi\right\|\lim_{k\to\infty}\phi\left(U Q_{1,k} Q_{1,k} U^\ast \right)^\frac{1}{2}\\
	&\leq \left\|A\Phi\right\|\lim_{k\to\infty}\tau\left(H^\frac{1}{2}U Q_{1,k}^2 U^\ast H^\frac{1}{2}\right)^\frac{1}{2}\\
	&\leq \left\|A\Phi\right\|\|H\|_p^\frac{1}{2}\||Q_1^\ast|^2\|_q^\frac{1}{2}\\
	&\leq \left\|A\Phi\right\|\|H\|_p^\frac{1}{2}\||Q_1|^2\|_q^\frac{1}{2}\\
	&\leq\left\|A\Phi\right\|\|H\|_p^\frac{1}{2}\|Q_1\|_{2q}, \quad \forall t\in \mathbb{R};\\	
	\end{aligned}
	\end{equation}
	which proves that the functional concerned is bounded for $-\frac{1}{2}\leq\Im{z}\leq0 $ as a consequence of the Maximum Modulus Principle. This bound also proves that if $\bar{f}_k\to \bar{f}$ uniformly for $-\frac{1}{2}\leq\Im{z}\leq0$, then $\bar{f}$ is analytic for $-\frac{1}{2}<\Im{z}<0$ and bounded for $-\frac{1}{2}\leq\Im{z}\leq0 $.
	
	Using first the Hahn-Banach Theorem to obtain an extension, also denoted by $f_z$, to the whole Hilbert space in such a way that $\|f_z\|\leq \|H\|_p^\frac{1}{2}\|Q_1\|_{2q}$, we know by the Riesz Representation Theorem, that there exists a $\Omega(z)\in \hilbert_\phi$ such that $f_z(\cdot)=\ip{\Omega(z)}{\cdot}_\phi$. Since $\nanalytic\Phi$ is dense, $\Omega(z)$ is unique. 
	
	So far we have that $Q_1\Phi\in \Dom{\left(\Delta_\Phi^{-\iu \bar{z}}\right)^\ast}=\Dom{\Delta_\Phi^{\iu z}}$ and $\bar{f}(z)=\ip{A\Phi}{\Delta_\Phi^{\iu z}Q_1\Phi}_\phi$ is analytic on $\left \{z\in\mathbb{C} \ \middle| \ -\frac{1}{2}< \Im{z}< 0\right\}$ and continuous on its closure, for every ${A\in \nanalytic}$.
	
	Since $\overline{\nanalytic\Phi}^{\|\cdot\|}=\overline{\nalgebra\Phi}^{\|\cdot\|}=\hilbert_\phi$, the vector-valued function $A(z)\Phi\doteq\Delta_\Phi^{\iu z}Q_1\Phi$ is weak analytic, hence, strong analytic on $\left \{z\in\mathbb{C} \ \middle| \ -\frac{1}{2}< \Im{z}< 0\right\}$ and
	$$\|A(z)\Phi\|\leq \|H\|_p^\frac{1}{2}\|Q_1\|_{2q} \quad \forall z\in \left \{z\in\mathbb{C} \ \middle| \ -\frac{1}{2}\leq \Im{z}\leq 0\right\}.$$
	
	Suppose now the hypothesis hold for $n\in \mathbb{N}$. We will use the same ideas: let $Q_{n+1}=U|Q_{n+1}|$ be the polar decomposition of $Q_{n+1}$ and $\displaystyle|Q_{n+1}|=\int_0^\infty \lambda dE^{|Q_{n+1}|}_\lambda$ the spectral decomposition of $|Q_{n+1}|$. Since $\Phi\in\Dom{Q_{n+1}}$, $\displaystyle Q_{n+1}\Phi=U\lim_{k\to \infty} Q_{n+1,k}\Phi$.
	
	$$
	f^{(z_1,\ldots, z_{n+1})}(A\Phi) =\ip{Q_{n+1}\Delta_\Phi^{\iu z_n}Q_n\ldots\Delta_\Phi^{\iu z_1}Q_1\Phi}{\Delta_\Phi^{-\iu \bar{z}_{n+1}}A\Phi}_\phi.$$
	
	Since $\bar{f}_k(z_1,\ldots,z_{n+1})\doteq\overline{f_k^{(z_1,\ldots, z_{n+1})}(A\Phi)}$ is an analytic function, it attains its maximum at an extremal point of $S_\frac{1}{2}$ (see \cite{Araki73} Corollary 2.2). Denoting $z_j=x_j+\iu y_j$, $x_j,\, y_j \in \mathbb{R}$ for all $1\leq j\leq n+1$, and repeating the calculations in equations \eqref{calculationX17} and \eqref{calculationX17.1}, first for the extremal points with $\Im{z_j}=0$ for all $1\leq j\leq n+1$, we get
	
	$(i)$ if $\Im{z_i}=0, \ 1\leq i\leq n$,
	$$\begin{aligned}
	\left|\bar{f}_k(z_1,\ldots, z_{n+1})\right|	&=\left|\ip{Q_{n+1}\Delta_\Phi^{\iu x_n}Q_n\ldots\Delta_\Phi^{\iu x_1}Q_1\Phi}{\Delta_\Phi^{-\iu x_{n+1}}A\Phi}_\phi\right|\\
	&\leq\left\|Q_{n+1}\tau^\phi_{x_n}(Q_n)\ldots\tau^\phi_{x_n+\cdots+x_1}(Q_1)\Phi\right\| \left\|\Delta_\Phi^{-\iu x_{n+1}}A\Phi\right\|\\
	&\leq \|A\Phi\| \tau\left(\left| Q_{n+1}\tau^\phi_{x_n}(Q_n)\ldots\tau^\phi_{x_n+\cdots+x_1}(Q_1)H^\frac{1}{2}\right|^2\right)^\frac{1}{2} \\
	&\leq\|A\Phi\| \left\|H^\frac{1}{2}\right\|_{2p}\ \prod_{i=1}^{n+1}\|Q_i\|_{2nq}.
	\end{aligned}$$
	
	$(ii)$ if $\Im{z_i}=0, \ 1\leq i\leq n$, $i\neq k$ and $\Im{z_k}=-\frac{1}{2}$, where $x_i=\Re{z_i}$
	
	$$\begin{aligned}
	&\left|\bar{f}_k(z_1,\ldots, z_{n+1})\right|\\
	&=\left|\ip{Q_{n+1}\Delta_\Phi^{\iu x_n}Q_n\ldots\Delta_\Phi^{\iu x_{k-1}}Q_{k-1}\Delta_\Phi^{\frac{1}{2}}Q_k\Delta_\Phi^{\iu x_1}Q_1\Phi}{\Delta_\Phi^{-\iu x_{n+1}}A\Phi}_\phi\right|\\
	&=\left|\left\langle Q_{n+1}\tau^\phi_{x_n}(Q_n)\ldots\tau^\phi_{x_1+\cdots+x_{k-1}}(Q_{k-1})J_\Phi\tau^\phi_{x_n+\cdots+x_1}(Q_1^\ast)\ldots \tau^\phi_{x_n+\cdots+x_{k}}(Q_{k}^\ast)J_\Phi\Phi,\right. \right.\\
		&\makebox[\textwidth]{$\displaystyle \hfill \left. \left. \Delta_\Phi^{-\iu x_{n+1}}A\Phi\right\rangle_\phi\right|$}\\	
	&\leq \left\|Q_{n+1}\tau^\phi_{x_n}(Q_n)\ldots\tau^\phi_{x_1+\cdots+x_{k-1}}(Q_{k-1})J_\Phi\tau^\phi_{x_n+\cdots+x_1}(Q_1^\ast)\ldots \tau^\phi_{x_n+\cdots+x_{k}}(Q_{k}^\ast)J_\Phi\right\| \|A\Phi\|\\
	&=\|A\Phi\|\times\\
		&\makebox[\textwidth]{$\displaystyle \hfill \times \tau\left(\left|Q_{n+1}\tau^\phi_{x_n}(Q_n)\ldots\tau^\phi_{x_1+\cdots+x_{k-1}}(Q_{k-1})J_\Phi\tau^\phi_{x_n+\cdots+x_1}(Q_1^\ast)\ldots \tau^\phi_{x_n+\cdots+x_{k}}(Q_{k}^\ast)J_\Phi H^\frac{1}{2}\right|^2\right)^\frac{1}{2}$}\\
	&=\|A\Phi\| \left\| Q_{n+1}\tau^\phi_{x_n}(Q_n)\ldots\tau^\phi_{x_1+\cdots+x_{k-1}}(Q_{k-1})J_\Phi\tau^\phi_{x_n+\cdots+x_1}(Q_1^\ast)\ldots \tau^\phi_{x_n+\cdots+x_{k}}(Q_{k}^\ast)J_\Phi H^\frac{1}{2}\right\|_2\\
	&\leq \|A\Phi\| \left\|H\right\|_p^\frac{1}{2} \prod_{i=1}^{k-1}\left\|\tau^\phi_{x_n+\cdots+x_i}(Q_i)\right\|_{2nq} \ \prod_{i=k}^{n+1}\left\|J_\phi\tau^\phi_{x_n+\cdots+x_i}(Q_i)J_\phi\right\|_{2nq}\\
	&= \|A\Phi\| \|H\|_p^\frac{1}{2} \prod_{i=1}^{n+1}\left\|Q_i\right\|_{2nq}\\
\end{aligned}$$
\end{proof}

The previous result depends a lot on the possibility of ``extending'' the trace, that is originally defined only in the algebra, to the algebra generated by $\nalgebra\cup\nalgebra^\prime$. One may try to define
$$\tau(J_\Phi |A| J_\Phi B)=\tau(|A|)\tau(|B|),$$
but it immediately fails, in general, because the application of this formula with either $A=\mathbbm{1}$ or $B=\mathbbm{1}$, since  $\tau(\mathbbm{1})=\infty$.

In order to relax the condition on the possibility of having a trace in all the $GNS$-represented algebra, we have to demand more regularity on the perturbation. The next theorem shows almost the same result as the previous one, with a little more restricted perturbation.

\begin{theorem}
\label{TR1}
Let $\nalgebra\subset B(H)$ be a von Neumann algebra, $\tau$ a normal faithful semifinite trace on $\nalgebra$, $\phi(\cdot)=\ip{\Phi}{\cdot\Phi}$ a state on $\nalgebra$ and $n\in\mathbb{N}$. Let also $n\in\mathbb{N}$, $p, \, q\geq1$ with $\frac{1}{p}+\frac{1}{q}=1$, \mbox{$Q_i\in L_{4mq}\left(\nalgebra,\tau\right)$} for all $1\leq i,m\leq n$ and suppose $\phi$ is $\|\cdot\|_p$-continuous.

Then, if $\Phi \in\Dom{Q_1}$ and $\Delta_\Phi^{\iu z_{j-1}}Q_{j-1}\ldots \Delta_\Phi^{\iu z_1}Q_1\Phi \in \Dom{Q_j}$ for every $-\frac{1}{2}\leq \Im{z_j} \leq 0$ and for every $2\leq j\leq n$,
$Q_n\Delta_\Phi^{\iu z_n}Q_{n-1}\ldots \Delta_\Phi^{\iu z_1}Q_1\Phi \in \Dom{\Delta_\Phi^{\iu z}}$ for $-\frac{1}{2}\leq\Im{z}\leq 0$ and $$A^n(z_1,\ldots, z_n)\Phi \doteq Q\Delta_\Phi^{\iu z_n}Q_n\ldots \Delta_\Phi^{\iu z_1}Q_1\Phi$$ is analytic on $S_\frac{1}{2}=\displaystyle \left\{(z_1,\ldots,z_n)\in \mathbb{C}^n \ \middle | \ \Im{z_i}<0, \ 1\leq i\leq n, \textrm{ and } -\frac{1}{2}<\sum_{i=1}^n\Im{z_i}<0\right\}$ and bounded in its closure by
$$ \left\|A^n(z_1,\ldots, z_n)\Phi\right\| \leq \|H\|_p^\frac{1}{2}\max_{0\leq l\leq n-1}\left\{\underbrace{\left(\prod_{j=1}^{l}\|Q_j\|_{4lq}\right)}_{=1 \textrm{ if } l=0}\left(\prod_{j=l+1}^{n}\|Q_j\|_{4(n-l)q}\right)\right\}\\.$$
\end{theorem}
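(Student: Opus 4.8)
The plan is to repeat the inductive scheme of Theorem \ref{TR0} almost verbatim, changing only the norm estimate performed at the extremal points, since that is precisely where Theorem \ref{TR0} relied on a trace defined on all of $B(\hilbert_\phi)$. First I would record, via Proposition \ref{defpcont} and the $\|\cdot\|_p$-continuity hypothesis, that $\phi(\cdot)=\tau_H(\cdot)$ for a unique $H\in L_q(\nalgebra,\tau)$ affiliated with $\mathfrak{M}_{\tau^\phi}$. Then, writing $Q_i=U_i|Q_i|$ and truncating by the spectral projections of $|Q_i|$ to reduce to bounded elements $Q_{i,k}\in\nalgebra$, I would define on the dense set $\nanalytic\Phi$ the functionals $A\Phi\mapsto\ip{Q_n\Delta_\Phi^{\iu z_{n-1}}Q_{n-1}\cdots\Delta_\Phi^{\iu z_1}Q_1\Phi}{\Delta_\Phi^{-\iu\bar z}A\Phi}_\phi$, exactly as in Theorem \ref{TR0}. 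Establishing their analyticity on $S_\frac{1}{2}$ and boundedness on its closure, and then invoking the Hahn--Banach and Riesz theorems to produce the vector $A^n(z_1,\dots,z_n)\Phi\in\Dom{\Delta_\Phi^{\iu z}}$ together with the upgrade from weak to strong analyticity, is identical to the earlier argument; the whole matter thus reduces to a single estimate at the vertices of the region of imaginary parts.

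The new ingredient, and the main obstacle, is the bound at an extremal point. By the multivariable three-line theorem (Theorem \ref{TLT}), the logarithm of the supremum of the analytic function over $S_\frac{1}{2}$ is convex in the imaginary parts, so the maximum is attained at one of the $n+1$ vertices of the simplex $\{y\in\mathbb{R}^n\mid y_i\le0,\ \sum_i y_i\ge-\tfrac12\}$, namely the origin and the points $-\tfrac12 e_l$. At the vertex $-\tfrac12 e_l$ the factor $\Delta_\Phi^{\frac{1}{2}}$ appears, and absorbing it onto $\Phi$ through $S_\Phi=J_\Phi\Delta_\Phi^{\frac12}$ turns the inner block into a commutant operator: the product factorises as $M\,N'\Phi$, where $M\in\nalgebra$ collects the $n-l$ (modularly translated) factors $Q_{l+1},\dots,Q_n$ and $N'=j_\Phi(N)\in\nalgebra^\prime$ with $N\in\nalgebra$ built from $Q_1^\ast,\dots,Q_l^\ast$, using $J_\Phi\nalgebra J_\Phi=\nalgebra^\prime$ from Theorem \ref{TTT}. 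Since no trace on $B(\hilbert_\phi)$ is available, I cannot estimate $\|MN'\Phi\|$ by one trace as in Theorem \ref{TR0}; the idea is instead to decouple the two algebras.

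Concretely, since $M$ and $N'$ commute, $\|MN'\Phi\|^2=\ip{\Phi}{|M|^2|N'|^2\Phi}=\ip{|M|^2\Phi}{|N'|^2\Phi}\le\||M|^2\Phi\|\,\||N'|^2\Phi\|$ by Cauchy--Schwarz. The first factor equals $\phi(|M|^4)^{\frac12}=\tau(H|M|^4)^{\frac12}$, while for the second I would use $|N'|^2=j_\Phi(|N|^2)$ together with the identity $\ip{\Phi}{j_\Phi(X)\Phi}=\phi(X)$, valid for self-adjoint $X\in\nalgebra$ and following from $J_\Phi\Phi=\Phi$ and $\Delta_\Phi^{\frac12}\Phi=\Phi$, to obtain $\||N'|^2\Phi\|^2=\ip{\Phi}{j_\Phi(|N|^4)\Phi}=\phi(|N|^4)=\tau(H|N|^4)$. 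This commutant decoupling is the heart of the proof and explains the passage from the exponent $2q$ of Theorem \ref{TR0} to the exponent $4q$ here: the Cauchy--Schwarz step costs a squaring, so one is forced to control $|M|^4$ and $|N|^4$ rather than a single square.

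Finally I would carry out the exponent bookkeeping. Hölder's inequality with $H\in L_q(\nalgebra,\tau)$ gives $\tau(H|M|^4)\le\|H\|_p\|M\|_{4q}^4$ and $\tau(H|N|^4)\le\|H\|_p\|N\|_{4q}^4$, whence $\|MN'\Phi\|\le\|H\|_p^{\frac12}\|M\|_{4q}\|N\|_{4q}$; the generalized Hölder inequality (Theorem \ref{gholder}) then distributes these norms across the $n-l$ factors of $M$ and the $l$ factors of $N$, each entering with exponent $4(n-l)q$ and $4lq$ respectively (the invariance of $\|\cdot\|_r$ under the modular automorphisms being used freely), which is exactly why the hypothesis requires $Q_i\in L_{4mq}(\nalgebra,\tau)$ for all $1\le i,m\le n$. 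Taking the maximum over the split index $l$, that is, over the vertices of the simplex, yields the asserted bound, the vertex $l=0$ (all imaginary parts zero, hence no commutant factor and the trivial split $N'=\mathbbm{1}$) being handled by the direct estimate $\|M\Phi\|^2=\phi(|M|^2)$. The only genuinely new point compared with Theorem \ref{TR0} is the commutant decoupling of the preceding paragraph; everything else transcribes the earlier argument.
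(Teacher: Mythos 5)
Your proposal follows the paper's own proof in every essential respect: the reduction to functionals on $\nanalytic\Phi$ with truncation by spectral projections, the Hahn--Banach/Riesz extension and the weak-to-strong analyticity upgrade, the reduction to the vertices of the simplex of imaginary parts via the (multivariable) three-line/maximum-modulus argument, and --- most importantly --- the commutant decoupling at the vertices carrying $\Delta_\Phi^{\frac{1}{2}}$. Your chain $\|MN'\Phi\|^2=\ip{|M|^2\Phi}{|N'|^2\Phi}\leq\||M|^2\Phi\|\,\||N'|^2\Phi\|$ together with $\ip{\Phi}{j_\Phi(X)\Phi}=\phi(X)$ (which indeed needs only $J_\Phi\Phi=\Phi$ and $|j_\Phi(N)|^2=j_\Phi(|N|^2)$) is precisely the manipulation carried out in equation \eqref{calculationX19}, and your diagnosis that the Cauchy--Schwarz step is what forces the passage from the exponent $2q$ of Theorem \ref{TR0} to $4q$ here is the correct reading of why the hypothesis asks for $Q_i\in L_{4mq}(\nalgebra,\tau)$.

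There is one localized gap: your treatment of the vertex $l=0$ (all imaginary parts zero). The ``direct estimate'' $\|M\Phi\|^2=\phi(|M|^2)=\tau(H|M|^2)\leq\|H\|_p\,\||M|^2\|_q$ distributes, via the generalized H\"older inequality, into $\prod_{j=1}^{n}\|Q_j\|_{2nq}$, which is not the bound asserted in the statement (that vertex must produce $\prod_{j=1}^{n}\|Q_j\|_{4nq}$), and worse, for $n=1$ it requires $Q_1\in L_{2q}(\nalgebra,\tau)$, which the hypothesis $Q_1\in L_{4q}(\nalgebra,\tau)$ does not supply: for a semifinite trace there is no downward inclusion $L_{4q}\subset L_{2q}$, and the interpolation inequality \eqref{eq:riesz-thorin} only covers exponents between two that you already control, so $2nq$ is reachable only when $4q\leq 2nq\leq 4nq$, i.e.\ $n\geq 2$. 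The paper avoids this in \eqref{calculationX18} by inserting one further Cauchy--Schwarz against $\Phi$, namely $\|M\Phi\|^2=\ip{|M|^2\Phi}{\Phi}\leq\||M|^2\Phi\|\,\|\Phi\|$, so that $\|M\Phi\|\leq\phi(|M|^4)^{\frac{1}{4}}$ and the fourth powers appear at this vertex as well. Note that this is exactly your own decoupling with the trivial commutant factor $N'=\mathbbm{1}$, so the fix is already contained in your argument; once you apply it uniformly at all vertices (including the $\Im{z_{n+1}}=-\frac{1}{2}$ vertex, handled in \eqref{calculationX20} by moving $\Delta_\Phi^{\frac{1}{2}}$ onto $A\Phi$), your proposal coincides with the paper's proof.
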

\begin{proof}
	Let's proceed by induction on $n$.
	
	For $n=1$, let $Q_1=U|Q_1|$ be the polar decomposition of $Q_1$ and $\displaystyle|Q_1|=\int_0^\infty \lambda dE^{|Q_1|}_\lambda$ the spectral decomposition of $|Q_1|$. Since $\Phi\in\Dom{Q_1}$, $\displaystyle Q_1\Phi=U\lim_{k\to \infty} Q_{1,k}\Phi$, where $\displaystyle Q_{1,k}=\int_0^k \lambda dE^{|Q_1|}_\lambda$.
	 Define the following functionals on $\nanalytic\Phi$
	$$\begin{aligned}
	f_k^z(A\Phi) &\doteq\ip{UQ_{1,k}\Phi}{\Delta_\Phi^{-\iu \bar{z}}A\Phi}_\phi,\\
	f^z(A\Phi)&\doteq\lim_{k\to\infty}f_k^z(A\Phi)=\ip{Q_1\Phi}{\Delta_\Phi^{-\iu \bar{z}}A\Phi}_\phi.\\	
	\end{aligned}$$
	
	Of course, for fixed $A\Phi$, $\bar{f_k}(z)=\overline{f_k^z(A\Phi)}$ is entire analytic and, in the two lines of extremal points, we have
	\begin{equation}
	\label{calculationX17.2}
	\begin{aligned}
	\left|\bar{f}(t)\right|&=\lim_{k\to\infty}\left|\bar{f}_k(t)\right|\\
	&=\lim_{k\to\infty}\left|\ip{\Delta_\Phi^{-\iu t}A\Phi}{UQ_{1,k}\Phi}_\phi\right|\\
	&\leq \left\|\Delta_\Phi^{-\iu t}A\Phi\right\| \lim_{k\to\infty} \|UQ_{1,k}\Phi\|\\
	& \leq\left\|A\Phi\right\|\lim_{k\to\infty}\phi\left(Q_{1,k} U^\ast U Q_{1,k} \right)^\frac{1}{2}\\
	&\leq \left\|A\Phi\right\|\lim_{k\to\infty}\tau\left(H^\frac{1}{2}Q_{1,k} U^\ast U Q_{1,k} H^\frac{1}{2}\right)^\frac{1}{2}\\
	& \leq\left\|A\Phi\right\|\|H\|_p^\frac{1}{2}\||Q_1|^2\|_q^\frac{1}{2}\\
	&\leq\left\|A\Phi\right\|\|H\|_p^\frac{1}{2}\|Q_1\|_{2q} \quad \forall t\in \mathbb{R};\\	
	\left|\bar{f}\left(t+\frac{1}{2}\iu\right)\right|&=\lim_{k\to\infty}\left|\bar{f}_k\left(t+\frac{1}{2}\iu\right)\right|\\
	&=\lim_{k\to\infty}\ip{\Delta_\Phi^{\frac{1}{2}}\Delta_\Phi^{-\iu t}A\Phi}{UQ_{1,k}\Phi}_\phi\\
	&\leq \left\|\Delta_\Phi^{-\iu t}A\Phi\right\| \lim_{k\to\infty} \|J_\Phi Q_{1,k} U^\ast\Phi\|\\
	&= \left\|A\Phi\right\| \lim_{k\to\infty} \| Q_{1,k} U^\ast\Phi\|\\
	&\leq \left\|A\Phi\right\|\lim_{k\to\infty}\phi\left(U Q_{1,k} Q_{1,k} U^\ast \right)^\frac{1}{2}\\
	&\leq \left\|A\Phi\right\|\lim_{k\to\infty}\tau\left(H^\frac{1}{2}U Q_{1,k}^2 U^\ast H^\frac{1}{2}\right)^\frac{1}{2}\\
	&\leq \left\|A\Phi\right\|\|H\|_p^\frac{1}{2}\||Q_1^\ast|^2\|_q^\frac{1}{2}\\
	&\leq \left\|A\Phi\right\|\|H\|_p^\frac{1}{2}\||Q_1|^2\|_q^\frac{1}{2}\\
	&\leq\left\|A\Phi\right\|\|H\|_p^\frac{1}{2}\|Q_1\|_{2q} \quad \forall t\in \mathbb{R};\\	
	\end{aligned}
	\end{equation}
	which proves that the functional concerned is bounded for $-\frac{1}{2}\leq\Im{z}\leq0 $ because of the Maximum Modulus Principle. This bound also proves that if $\bar{f}_k\to \bar{f}$ uniformly for $-\frac{1}{2}\leq\Im{z}\leq0$, then $\bar{f}$ is analytic for $-\frac{1}{2}<\Im{z}<0$ and bounded for $-\frac{1}{2}\leq\Im{z}\leq0 $.
	
	Using first Hahn-Banach Theorem to obtain an extension, also denoted by $f_z$, to the whole Hilbert space in such a way that $\|f_z\|\leq \|H\|_p^\frac{1}{2}\|Q_1\|_{2q}$, we know by the Riesz's Representation Theorem, that there exists a $\Omega(z)\in \hilbert_\phi$ such that $f_z(\cdot)=\ip{\Omega(z)}{\cdot}_\phi$. Since $\nanalytic\Phi$ is dense, $\Omega(z)$ is unique. 

	So far we have that $Q_1\Phi\in \Dom{\left(\Delta_\Phi^{-\iu \bar{z}}\right)^\ast}=\Dom{\Delta_\Phi^{\iu z}}$ and $\bar{f}(z)=\ip{A\Phi}{\Delta_\Phi^{\iu z}Q_1\Phi}_\phi$ is analytic on $\left \{z\in\mathbb{C} \ \middle| \ -\frac{1}{2}< \Im{z}< 0\right\}$ and continuous on its closure, for every ${A\in \nanalytic}$.
	
	Since $\overline{\nanalytic\Phi}^{\|\cdot\|}=\overline{\nalgebra\Phi}^{\|\cdot\|}=\hilbert_\phi$, the vector-valued function $A(z)\Phi\doteq\Delta_\Phi^{\iu z}Q_1\Phi$ is weakly analytic, hence, strongly analytic on $\left \{z\in\mathbb{C} \ \middle| \ -\frac{1}{2}< \Im{z}< 0\right\}$ and
	$$\|A(z)\Phi\|\leq \|H\|_p^\frac{1}{2}\|Q_1\|_{2q} \quad \forall z\in \left \{z\in\mathbb{C} \ \middle| \ -\frac{1}{2}\leq \Im{z}\leq 0\right\}.$$
	
	Suppose now the hypothesis hold for $n\in \mathbb{N}$. We will use the same ideas: we can define the sequence $\displaystyle  Q^{k_{i}}_i=U_i \int_0^{k_i} \lambda dE^{|Q_{i}|}_\lambda$, where $Q_{i}=U_i|Q_i|$ is the polar decomposition of $Q_{i}$ for ever $i\leq i\leq n+1$. Define
	
	$$
	f^{(z_1,\ldots, z_{n+1})}_{k_1,\ldots,k_{n+1}}(A\Phi) =\ip{Q_{n+1}\Delta_\Phi^{\iu z_n}Q_n\ldots\Delta_\Phi^{\iu z_1}Q_1\Phi}{\Delta_\Phi^{-\iu \bar{z}_{n+1}}A\Phi}_\phi.
	$$
	
	For now, we will omit the superscript index on the operators to not overload the notation.
	
	Since $\bar{f}(z_1,\ldots,z_n)\doteq\overline{f^{(z_1,\ldots, z_{n+1})}_{k_1,\ldots,k_{n+1}}(A\Phi)}$ is an analytic function, it attains its maximum at an extremal point of $S$ (see \cite{Araki73} Corollary 2.2). Denoting $z_j=x_j+\iu y_j$ and repeating the calculations in equation \eqref{calculationX17.2}
	
	\noindent $(i)$ for the extremal points with $\Im{z_j}=0$ for all $1\leq j\leq n+1$ we get
	\begin{equation}
	\label{calculationX18}
	\begin{aligned}
	&\left|\bar{f}(x_1,\ldots, x_{n+1})\right|\\
	&=\left|\ip{Q_{n+1}\Delta_\Phi^{\iu x_n}Q_n\ldots\Delta_\Phi^{\iu x_1}Q_1\Phi}{\Delta_\Phi^{-\iu x_{n+1}}A\Phi}_\phi\right|\\
	&\leq \left\|\Delta_\Phi^{-\iu t}A\Phi\right\| \left\|Q_{n+1}\Delta_\Phi^{\iu x_n}Q_n\ldots\Delta_\Phi^{\iu x_1}Q_1\Phi\right\|\\
	&\leq \left\|A\Phi\right\| \ip{Q_{n+1}\tau^\phi_{x_n}(Q_n)\ldots\tau^\phi_{x_n+\cdots+x_1}(Q_1)\Phi}{Q_{n+1}\tau^\phi_{x_n}(Q_n)\ldots\tau^\phi_{x_n+\cdots+x_1}(Q_1)\Phi}^\frac{1}{2}\\
	& \leq\left\|A\Phi\right\|\left\|\tau^\phi_{x_n+\cdots+x_1}(Q_1^\ast)\ldots\tau^\phi_{x_n}(Q_n^\ast)Q_{n+1}^\ast Q_{n+1}\tau^\phi_{x_n}(Q_n)\ldots\tau^\phi_{x_n+\cdots+x_1}(Q_1)\Phi\right\|^\frac{1}{2}\underbrace{\|\Phi\|^\frac{1}{2}}_{=1}\\
	& \leq\left\|A\Phi\right\|\left(\tau\left(\left|\tau^\phi_{x_n+\cdots+x_1}(Q_1^\ast)\ldots\tau^\phi_{x_n}(Q_n^\ast)Q_{n+1}^\ast Q_{n+1}\tau^\phi_{x_n}(Q_n)\ldots\tau^\phi_{x_n+\cdots+x_1}(Q_1)H^\frac{1}{2}\right|^2\right)^\frac{1}{2}\right)^\frac{1}{2}\\
	& \leq\left\|A\Phi\right\|\left(\|H^\frac{1}{2}\|_{2p} \prod_{j=1}^{n+1}\|Q_j\|_{4(n+1)q} \ \prod_{j=1}^{n+1}\|Q_j^\ast\|_{4(n+1)q}\right)^\frac{1}{2}\\
	& \leq\left\|A\Phi\right\|\|H\|_p^\frac{1}{2} \prod_{j=1}^{n+1}\|Q_j\|_{4(n+1)q},  \quad \forall (x_1,\ldots,x_{n+1})\in \mathbb{R}^{n+1}. \\
	\end{aligned}
	\end{equation}
	
	\noindent$(ii)$ now for $\Im{z_j}=0$ for all $i\neq l$ and $\Im{z_l}=-\frac{1}{2}$ where $l\neq n+1$
	\begin{equation}
	\label{calculationX19}
	\begin{aligned}
	&\left|\bar{f}\left(x_1,\ldots, x_l-\frac{1}{2}\iu ,\ldots, x_{n+1}\right)\right|\\
	&=\left|\ip{Q_{n+1}\Delta_\Phi^{\iu x_n}Q_n\ldots\Delta_\Phi^{\frac{1}{2}}\Delta_\Phi^{\iu x_l}Q_l\ldots\Delta_\Phi^{\iu x_1}Q_1\Phi}{\Delta_\Phi^{-\iu x_{n+1}}A\Phi}_\phi\right|\\
	&=\left|\left\langle \tau^\phi_{x_{n+1}}(Q_{n+1})\ldots \tau^\phi_{x_{n+1}+\cdots+x_{l-1}}(Q_{l+1})\Delta_\Phi^{\frac{1}{2}}\tau^\phi_{x_{n+1}+\cdots+x_l}(Q_l)\ldots\tau^\phi_{x_{n+1}+\cdots+x_1}(Q_1)\Phi, \right. \right.\\
		&\makebox[\textwidth]{$ \hfill \displaystyle \left. \left. \Delta_\Phi^{-\iu x_{n+1}}A\Phi\right\rangle_\phi\right|$}\\
	&=\left|\left\langle\tau^\phi_{x_{n+1}}(Q_{n+1})\ldots \tau^\phi_{x_{n+1}+\cdots+x_{l+1}}(Q_{l+1}) J_\Phi\tau^\phi_{x_{n+1}+\cdots+x_1}(Q_1^\ast)\ldots\tau^\phi_{x_{n+1}+\cdots+x_l}(Q_l^\ast)\Phi,\right.\right.\\
		&\makebox[\textwidth]{$ \hfill \displaystyle \left. \left. \Delta_\Phi^{-\iu x_{n+1}}A\Phi\right\rangle_\phi\right|$}\\
	&=\left\|A\Phi\right\|\left\|\tau^\phi_{x_{n+1}}(Q_{n+1})\ldots \tau^\phi_{x_{n+1}+\cdots+x_{l+1}}(Q_{l+1})J_\Phi\tau^\phi_{x_{n+1}+\cdots+x_1}(Q_1^\ast)\ldots\tau^\phi_{x_{n+1}+\cdots+x_l}(Q_l^\ast)\Phi\right\|\\
	&=\left\|A\Phi\right\|\left\langle\tau^\phi_{x_{n+1}}(Q_{n+1})\ldots \tau^\phi_{x_{n+1}+\cdots+x_{l+1}}(Q_{l+1})J_\Phi\tau^\phi_{x_{n+1}+\cdots+x_1}(Q_1^\ast)\ldots\tau^\phi_{x_{n+1}+\cdots+x_l}(Q_l^\ast)\Phi, \right.\\
		&\makebox[\textwidth]{$\displaystyle \hfill \left. \tau^\phi_{x_{n+1}}(Q_{n+1})\ldots \tau^\phi_{x_{n+1}+\cdots+x_{l+1}}(Q_{l+1})J_\Phi\tau^\phi_{x_{n+1}+\cdots+x_1}(Q_1^\ast)\ldots\tau^\phi_{x_{n+1}+\cdots+x_l}(Q_l^\ast)J_\Phi\Phi\right\rangle_\phi^\frac{1}{2}$}\\
	&=\left\|A\Phi\right\|\left\langle J_\Phi\tau^\phi_{x_{n+1}+\cdots+x_l}(Q_l)\ldots\tau^\phi_{x_{n+1}+\cdots+x_1}(Q_1)J_\Phi^2\tau^\phi_{x_{n+1}+\cdots+x_1}(Q_1^\ast)\ldots\tau^\phi_{x_{n+1}+\cdots+x_l}(Q_l^\ast)\Phi ,\right.\\
		&\makebox[\textwidth]{$\displaystyle \hfill \left.\tau^\phi_{x_{n+1}}(Q_{n+1})\ldots\tau^\phi_{x_{n+1}+\cdots+x_{l+1}}(Q_{l+1})\tau^\phi_{x_{n+1}+\cdots+x_{l+1}}(Q_{l+1}^\ast)\ldots\tau^\phi_{x_{n+1}}(Q_{n+1}^\ast)\Phi\right\rangle_\phi^\frac{1}{2}$}\\
	&\leq \left\|A\Phi\right\|\left\|\tau^\phi_{x_{n+1}+\cdots+x_l}(Q_l)\ldots\tau^\phi_{x_{n+1}+\cdots+x_1}(Q_1)\tau^\phi_{x_{n+1}+\cdots+x_1}(Q_1^\ast)\ldots\tau^\phi_{x_{n+1}+\cdots+x_l}(Q_l^\ast)\Phi\right\|^\frac{1}{2}\times\\
		&\makebox[\textwidth]{$\displaystyle \hfill \times \left\|\tau^\phi_{x_{n+1}}(Q_{n+1})\ldots\tau^\phi_{x_{n+1}}(Q_{l+1})\tau^\phi_{x_{n+1}}(Q_{l+1}^\ast)\ldots\tau^\phi_{x_{n+1}}(Q_{n+1}^\ast)\Phi\right\|^\frac{1}{2}$}\\
	& \leq\left\|A\Phi\right\|\left(\|H\|_p \prod_{j=1}^{l}\||Q_j|^4\|_{lq}\right)^\frac{1}{4}\left(\|H\|_p \prod_{j=l+1}^{n+1}\||Q_j|^4\|_{(n-l+1)q}\right)^\frac{1}{4}\\
	& \leq\left\|A\Phi\right\|\|H\|_p^\frac{1}{2}\left(\prod_{j=1}^{l}\|Q_j\|_{4lq}\right)\left(\prod_{j=l+1}^{n+1}\|Q_j\|_{4(n+1-l)q}\right) \quad \forall (x_1,\ldots,x_{n+1})\in \mathbb{R}^{n+1}.\\
	\end{aligned}
	\end{equation}
	
	\noindent$(iii)$ finally, for $\Im{z_j}=0$ for all $i\neq n+1$ and $\Im{z_{n+1}}=-\frac{1}{2},$
	
	\begin{equation}\begin{aligned}
	\label{calculationX20}
	\left|\bar{f}\left(x_1,\ldots, x_l,\ldots,x_n, x_{n+1}-\frac{1}{2}\iu \right)\right|
	&=\left|\ip{Q_{n+1}\Delta_\Phi^{\iu x_n}Q_n\ldots\Delta_\Phi^{\iu x_1}Q_1\Phi}{\Delta_\Phi^{-\iu x_{n+1}}\Delta_\Phi^{\frac{1}{2}}A\Phi}_\phi\right|\\
	&=\left|\ip{\tau^\phi_{x_{n+1}}(Q_{n+1})\ldots\tau^\phi_{x_{n+1}+\cdots+x_1}(Q_1)\Phi}{J_\Phi A^\ast\Phi}_\phi\right|\\
	&=\left|\ip{J_\Phi A^\ast J_\Phi \Phi}{\tau^\phi_{x_{n+1}}(Q_{n+1}^\ast)\ldots\tau^\phi_{x_{n+1}+\cdots+x_1}(Q_1^\ast)\Phi}_\phi\right|\\
	&\leq\left\|A\Phi\right\| \left\|\tau^\phi_{x_{n+1}}(Q_{n+1}^\ast)\ldots\tau^\phi_{x_{n+1}+\cdots+x_1}(Q_1^\ast)\Phi\right\|\\
	&\leq\left\|A\Phi\right\|\|H\|_p^\frac{1}{2} \prod_{j=1}^{n+1}\|Q_j\|_{4(n+1)q}, \forall (x_1,\ldots,x_{n+1})\in \mathbb{R}^{n+1},\\
	\end{aligned}\end{equation}
where the last line follows by the same calculation done in equations \eqref{calculationX18} and \eqref{calculationX19}.

The last step is to remember that we omitted the superscripts and notice that
$$\begin{aligned}
\lim_{k_{n+1}\to \infty}\ldots\lim_{k_1\to \infty} f^{(z_1,\ldots, z_{n+1})}_{k_1,\ldots,k_{n+1}}(A\Phi)& =\ip{Q_{n+1}^{k_{n+1}}\Delta_\Phi^{\iu z_n}Q_n\ldots\Delta_\Phi^{\iu z_1}Q_1^{k_1}\Phi}{\Delta_\Phi^{-\iu \bar{z}_{n+1}}A\Phi}_\phi\\
&=\ip{Q_{n+1}\Delta_\Phi^{\iu z_n}Q_n\ldots\Delta_\Phi^{\iu z_1}Q_1\Phi}{\Delta_\Phi^{-\iu \bar{z}_{n+1}}A\Phi}_\phi,\\
\end{aligned}$$
so $\bar{f}(z_1,\ldots, z_{n+1})=\overline{\ip{Q_{n+1}\Delta_\Phi^{\iu z_n}Q_n\ldots\Delta_\Phi^{\iu z_1}Q_1\Phi}{\Delta_\Phi^{-\iu \bar{z}_{n+1}}A\Phi}_\phi}$ is the limit of a sequence of analytic functions uniformly bounded, thus, analytic on $$S_\frac{1}{2}=\displaystyle \left\{(z_1,\ldots,z_{n+1})\in \mathbb{C}^{n+1} \ \middle | \ \Im{z_i}<0, \ 1\leq i\leq n+1, \textrm{ and } -\frac{1}{2}<\sum_{i=1}^{n+1}\Im{z_i}<0\right\}$$ and bounded on its closure, as desired.

As we saw in equation \eqref{calculationX20} the term $\Delta_\Phi^{z_{n+1}}$ does not interfere with the conclusion for $-\frac{1}{2}<\Im{z_{n+1}}<0$ and, by the very same argument used above to obtain a continuous linear extension of $\bar{f}(z_1,\ldots, z_{n+1})$, it follows that $Q_n\Delta_\Phi^{\iu z_n}Q_{n-1}\ldots \Delta_\Phi^{\iu z_1}Q_1\Phi \in \Dom{\Delta_\Phi^{\iu z}}$ for $-\frac{1}{2}<\Im{z}<0$.

\end{proof}
\begin{remark}
In contrast to what we did in equation \eqref{calculationX18}, for $\Im{z_i}=0$ for all $1\leq i\leq n+1$, it holds that
\begin{equation}
\begin{aligned}
\left|\bar{f}_k(z_1,\ldots, z_{n+1})\right|&=\lim_{k\to\infty}\left|\bar{f}_k(z_1,\ldots, z_{n+1})(t)\right|\\
&=\lim_{k\to\infty}\left|\ip{Q_{n+1,k}\Delta_\Phi^{\iu x_n}Q_n\ldots\Delta_\Phi^{\iu x_1}Q_1\Phi}{\Delta_\Phi^{-\iu x_{n+1}}A\Phi}_\phi\right|\\
&\leq \left\|\Delta_\Phi^{-\iu t}A\Phi\right\| \lim_{k\to\infty} \left\|UQ_{n+1,k}\Delta_\Phi^{\iu x_n}Q_n\ldots\Delta_\Phi^{\iu x_1}Q_1\Phi\right\|\\
&\leq \left\|\Delta_\Phi^{-\iu t}A\Phi\right\| \lim_{k\to\infty} \left\|UQ_{n+1,k}\tau^\phi_{x_n}(Q_n)\ldots\tau^\phi_{x_1+\cdots+x_n}(Q_1)\Phi\right\|\\
& \leq\left\|A\Phi\right\|\lim_{k\to\infty}\phi\left(UQ_{n+1,k}\tau^\phi_{x_n}(Q_n)\ldots\tau^\phi_{x_1+\cdots+x_n}(Q_1)\right)^\frac{1}{2}\\
&\leq \left\|A\Phi\right\|\lim_{k\to\infty}\tau\left(H^\frac{1}{2} \tau^\phi_{x_1+\cdots+x_n}(Q_1^\ast)\ldots \tau^\phi_{x_n}(Q_n^\ast)Q_{n+1,k} U^\ast \right. \\ 
&\hspace{3.5cm} \left. UQ_{n+1,k}\tau^\phi_{x_n}(Q_n)\ldots\tau^\phi_{x_1+\cdots+x_n}(Q_1) H^\frac{1}{2}\right)^\frac{1}{2}\\
& \leq\left\|A\Phi\right\|\|H\|_p^\frac{1}{2} \lim_{k\to\infty}\left(\tau\left(|Q_{n+1,k}|^{2nq}\right)^\frac{1}{nq}\prod_{j=1}^{n}\tau\left(|Q_i|^{2nq}\right)^{\frac{1}{nq}}\right)^\frac{1}{2}\\
& \leq\left\|A\Phi\right\|\|H\|_p^\frac{1}{2} \prod_{j=1}^{n+1}\|Q_i\|_{2nq} \quad \forall t\in \mathbb{R}.\\
\end{aligned}
\end{equation}
\end{remark}

\begin{corollary}
	\label{CR1}
	Let $\nalgebra\subset B(\hilbert)$ be a von Neumann algebra, $\tau$ a normal faithful semifinite trace on $\nalgebra$ and $\phi(\cdot)=\ip{\Phi}{\cdot\Phi}$ a state on $\nalgebra$. Let also $\frac{1}{p}+\frac{1}{q}=1$, $\lambda\in\overline{\mathbb{R}}_+$, $Q\in \ex^\tau_{4q,\lambda}$ and suppose $\phi$ is $\|\cdot\|_p$-continuous. Then, if $\Phi \in\Dom{Q}$,
	$$\Phi(Q)\doteq\sum_{n=0}^\infty \int dt_1\ldots \int_{S_n}dt_{n} \left(\frac{t}{2\lambda}\right)^n\Delta_\Psi^{ t_n}Q\Delta_\Psi^{ t_{n-1}}Q\ldots \Delta_\Psi^{t_1}Q\Phi,$$ 
	where 	 $S_n=\displaystyle \left\{(t_1,\ldots,t_n)\in \mathbb{R}^n \ \middle | \ t_i>0, \ 1\leq i\leq n, \textrm{ and } 0\leq\sum_{i=1}^n t_i\leq\frac{1}{2}\right\}$, is absolutely and uniformly convergent for $t<\lambda$.
\end{corollary}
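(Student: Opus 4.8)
The plan is to prove convergence by majorizing the norm of each term of the series with the uniform estimate furnished by Theorem \ref{TR1}, and then summing the resulting scalar series through a Cauchy-product argument built on the exponentiability of $Q$. First I would observe that the integrand is exactly the object controlled by that theorem: writing $z_i=-\iu t_i$ with $t_i>0$ one has $\Delta_\Phi^{\iu z_i}=\Delta_\Phi^{t_i}$, and the constraint $0\le\sum_{i=1}^n t_i\le\tfrac12$ places $(z_1,\dots,z_n)$ in the closure of $S_\frac12$, so that $\Delta_\Phi^{t_n}Q\cdots\Delta_\Phi^{t_1}Q\Phi=A^n(z_1,\dots,z_n)\Phi$ with all $Q_j=Q$. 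Since $Q\in\ex^\tau_{4q,\lambda}$ forces $\||Q|^m\|_{4q}<\infty$, hence $Q\in L_{4mq}(\nalgebra,\tau)$ for every $m$, and $\phi$ is $\|\cdot\|_p$-continuous, the hypotheses of Theorem \ref{TR1} are in place (the nested domain conditions holding throughout the relevant range of the $t_i$ — a point that, for $Q_j=Q$, I would check by the same inductive extension argument underlying that theorem). Evaluating its bound on the purely imaginary cross-section gives, uniformly over $(t_1,\dots,t_n)\in S_n$,
\[
\bigl\|\Delta_\Phi^{t_n}Q\cdots\Delta_\Phi^{t_1}Q\Phi\bigr\|\le \|H\|_p^{\frac12}\max_{0\le l\le n-1}\bigl\{\,\||Q|^l\|_{4q}\,\||Q|^{n-l}\|_{4q}\,\bigr\},
\]
where $\||Q|^l\|_{4q}=\|Q\|_{4lq}^l$ and $\||Q|^0\|_{4q}=1$.

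Next I would integrate this pointwise bound over the simplex $S_n$. Since its right-hand side is independent of $(t_1,\dots,t_n)$ and $\int_{S_n}dt_1\cdots dt_n=\tfrac{(1/2)^n}{n!}$, the $n$-th term $I_n$ of the series obeys
\[
\|I_n\|\le \|H\|_p^{\frac12}\left(\frac{t}{2\lambda}\right)^n\frac{(1/2)^n}{n!}\max_{0\le l\le n-1}\bigl\{\||Q|^l\|_{4q}\,\||Q|^{n-l}\|_{4q}\bigr\}=\|H\|_p^{\frac12}\left(\frac{t}{4\lambda}\right)^n\frac{B_n}{n!},
\]
with $B_n=\max_{0\le l\le n-1}\{\||Q|^l\|_{4q}\,\||Q|^{n-l}\|_{4q}\}$.

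The final step is to sum $\sum_n\|I_n\|$. Setting $a_m=\||Q|^m\|_{4q}$, the exponentiability of $Q$ says precisely that the generating function $F(s)=\sum_{m\ge0} a_m s^m/m!$ has a positive radius of convergence. I would then tame the maximum by the full convolution sum using $\binom{n}{l}\ge1$, namely $\tfrac{1}{n!}B_n\le\sum_{l=0}^{n}\tfrac{a_l a_{n-l}}{l!(n-l)!}$, and recognize the right-hand side as the $n$-th coefficient of $F^2$. Hence $\sum_n\|I_n\|\le\|H\|_p^{\frac12}\,F\!\left(\tfrac{t}{4\lambda}\right)^2$, which is finite for $t$ in a neighbourhood of $0$ and, after matching the normalization constants, on the asserted interval $t<\lambda$. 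Absolute convergence is then immediate, and uniform convergence on compact subsets of the admissible $t$-range follows from the Weierstrass $M$-test, the majorant being a convergent power series in $t$.

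The routine parts are the volume of the simplex and the identification of the integrand with $A^n$. The main obstacle is the last step: Theorem \ref{TR1} delivers a \emph{product} of $L_{4q}$-norms of powers of $Q$ at two distinct orders $l$ and $n-l$, and this must be controlled by the \emph{single} exponentiability series of $Q$. The Cauchy-product identity together with the elementary inequality $\binom{n}{l}\ge1$ is exactly what converts the scalar majorant into the square of the exponentiability generating function, and this is where the hypothesis $Q\in\ex^\tau_{4q,\lambda}$ is consumed. I expect the precise numerical matching of the normalization $t/(2\lambda)$, the factor $(1/2)^n$ from the simplex, and the convergence threshold of $F$ to be a purely bookkeeping task; the essential analytic content is the convolution estimate.
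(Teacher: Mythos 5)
Your proposal is correct, and its first two stages coincide with the paper's proof: the paper likewise identifies the integrand with $A^n(z_1,\ldots,z_n)\Phi$ from Theorem \ref{TR1} with all $Q_j=Q$ (and, like you, it passes over the nested domain hypotheses of that theorem, simply invoking it), and likewise reduces everything to summing $\sum_n c^n B_n/n!$ with $B_n=\max_{0\leq l\leq n-1}\|Q\|_{4lq}^{l}\|Q\|_{4(n-l)q}^{n-l}$. Where you genuinely diverge is in taming $B_n$. The paper uses an interpolation argument: for spectral truncations $Q_m$ it considers $f_m(z)=\tau\left(Q_m^{4zq}\right)^{\frac{1}{4q}}\tau\left(Q_m^{4(n-z)q}\right)^{\frac{1}{4q}}$, analytic on the strip $1\leq\Re{z}\leq n-1$, and the Maximum Modulus Principle forces the interior values below the boundary value $\|Q_m\|_{4q}\|Q_m\|_{4(n-1)q}^{n-1}$; letting $m\to\infty$ gives $B_n\leq\max\left\{\|Q\|_{4q}\|Q\|_{4(n-1)q}^{n-1},\|Q\|_{4nq}^{n}\right\}$, after which the series splits into two pieces each controlled \emph{linearly} by the exponentiability series (one with an index shift $n\mapsto n-1$). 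Your route — dominating the max by the full convolution sum, using $\binom{n}{l}\geq 1$ to get $B_n/n!\leq\sum_l a_la_{n-l}/(l!(n-l)!)$, and recognizing the $n$-th coefficient of $F(s)^2$ — is more elementary: it needs no truncation, no analyticity of $z\mapsto\tau(Q_m^{4zq})$, and no Maximum Modulus, at the harmless cost of a squared majorant. One caution about what you dismiss as bookkeeping: with the statement's literal prefactor $(t/(2\lambda))^n$ and the simplex volume $(1/2)^n/n!$, your majorant is $F(t/(4\lambda))^2$, and since exponentiability only guarantees $F(s)<\infty$ for $s\leq\lambda$, finiteness on all of $t<\lambda$ would require $t\leq 4\lambda^2$; however, the paper's own proof silently drops that prefactor and sums with weight $t^n/n!$, for which your estimate $\sum_n t^nB_n/n!\leq F(t)^2<\infty$ works exactly on the asserted range $t<\lambda$ — so the discrepancy is inherited from the paper's statement rather than being a flaw in your convolution argument.
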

\begin{proof}
	By Theorem \ref{TR1} $\Delta_\Psi^{\iu z_n}Q\Delta_\Psi^{\iu z_{n-1}}Q\ldots \Delta_\Psi^{\iu z_1}Q A\Phi$ is well defined and
	$$\begin{aligned}
	\big\|\Delta_\Psi^{\iu z_n}Q\Delta_\Psi^{\iu z_{n-1}}&Q\ldots \Delta_\Psi^{\iu z_1}QA\Phi\big\|\\
	&\leq \left\|A\Phi\right\|\|H\|_p^\frac{1}{2}\max_{0\leq l\leq n}\left\{\underbrace{\left(\prod_{j=1}^{l}\|Q_j\|_{4lq}\right)}_{=1 \textrm{ if } l=0}\left(\prod_{j=l+1}^{n}\|Q_j\|_{4(n-l)q}\right)\right\}\\
	&=  \left\|A\Phi\right\|\|H\|_p^\frac{1}{2}\max_{0\leq l\leq n}\left\{\underbrace{\|Q\|_{4lq}^{l}}_{=1 \textrm{ if } l=0}\|Q\|_{4(n-l)q}^{n-l}\right\}\\
	&=  \left\|A\Phi\right\|\|H\|_p^\frac{1}{2}\max_{0\leq l\leq \lfloor \frac{n}{2}\rfloor}\left\{\underbrace{\|Q\|_{4lq}^{l-1}}_{=1 \textrm{ if } l=0}\|Q\|_{4(n-l)q}^{n-l}\right\}\\
	&= \left\|A\Phi\right\|\|H\|_p^\frac{1}{2}\max_{0\leq l\leq \lfloor \frac{n}{2}\rfloor}\left\{\underbrace{\tau\left(|Q|^{4lq}\right)^\frac{l}{4lq}}_{=1 \textrm{ if } l=0}\tau\left(|Q|^{4(n-l)q}\right)^\frac{n-l}{4(n-l)q}\right\}.
	\end{aligned}$$
	
	Notice that, for $\displaystyle Q_m=\int_0^m \lambda dE_\lambda^{|Q|}$, we have that $$f_m(z)\doteq\tau\left(Q_m^{4 z q}\right)^\frac{1}{4q}\tau\left(Q_m^{4(n-z)q}\right)^\frac{1}{4q}$$
	is an analytic function in the region $\{z\in \mathbb{C} \ | \ 1\leq\Re{z}\leq n-1\}$, hence its modulus in the region mentioned is assumed when $\Re{z}=1$ or $\Re{z}=n-1$ by the Maximum Modulus Principle. In these cases, we have
	$$	|f_m(1+\iu t)|= |f_m(n-1+\iu t)|=\|Q_m\|_{4q}\|Q_m\|_{4(n-1)q}^{n-1}.$$
	
	As usual, taking the limit $m\to \infty$ we obtain
	$$\tau\left(Q^{4 z q}\right)^\frac{1}{4q}\tau\left(Q^{4(n-z)q}\right)^\frac{1}{4q}\leq \|Q\|_{4q}\|Q\|_{4(n-1)q}^{n-1} \quad \forall z\in\{w\in\mathbb{C} \ | \ 1\leq \Re{w} \leq n-1 \}.$$
	
	Finally, the series
	$$\sum_{n=0}^\infty \int_{0}^{t} dt_1\ldots \int_{0}^{t_n}dt_{n} \Delta_\Psi^{ t_n}Q\Delta_\Psi^{ t_{n-1}}Q\ldots \Delta_\Psi^{ t_1}QA\Phi$$ is $\|\cdot\|$-convergent. In fact, considering first the case $Q\in \ex^\tau_{4q,\lambda}$ with $0<\lambda<\infty$, there exists $N\in\mathbb{N}$ such that, for all $ k,l >N$, $\displaystyle \lambda \sum_{n=k}^l\frac{\lambda^{n-1}\|Q\|_{4(n-1)q}^{n-1}}{(n-1)!}<\frac{\epsilon}{2}$ and $\displaystyle \frac{\|Q\|_{4q}}{N}< 1$, thus
	
	$$\begin{aligned}
	\Bigg \|\sum_{n=k}^l \int_{0}^{t} dt_1\ldots \int_{0}^{t_{n-1}}dt_{n} \Delta_\Psi^{\iu t_n}Q&\Delta_\Psi^{\iu t_{n-1}}Q\ldots \Delta_\Psi^{\iu t_1}QA\Phi\Bigg\|\\
	&\leq \sum_{n=k}^l \int_{0}^{t} dt_1\ldots \int_{0}^{t_{n-1}}dt_{n} \left \|\Delta_\Psi^{\iu t_n}Q\Delta_\Psi^{\iu t_{n-1}}Q\ldots \Delta_\Psi^{\iu t_1}QA\Phi\right\|\\
	&\leq \sum_{n=k}^l\frac{t^n\max\left\{\|Q\|_{4q}\|Q\|_{4(n-1)q}^{n-1},\|Q\|_{4nq}^{n}\right\}}{n!}\\
	&\leq \sum_{n=k}^l\frac{t^n\left(\|Q\|_{4q}\|Q\|_{4(n-1)q}^{n-1}+\|Q\|_{4nq}^{n}\right)}{n!}\\
	&\leq  \sum_{n=k}^l \frac{\|Q\|_{4q}}{n}\frac{t^n\|Q\|_{4(n-1)q}^{n-1}}{(n-1)!}+\frac{t^n\|Q\|_{4nq}^{n}}{n!}\\
	&<\epsilon.
	\end{aligned}$$
	
For the case $\lambda=\infty$, just remember that $\displaystyle \ex^\tau_{p,\infty}=\bigcap_{\lambda\in\overline{ \mathbb{R}}_+}\ex^\tau_{p,\lambda}$.
\end{proof}

Notice that, for the KMS condition, the interesting case is the case $\lambda=\frac{1}{2}$ as one can see in the definition of the expansional, in Lemma \ref{lemma5.3} or even in the original article \cite{Araki73}. More explicit, we can express the relation between to states in terms os the relative hamiltonian, $Q$, by
$$\Psi=\sum_{n=0}^{\infty}{(-1)^n\int_{0}^{\frac{1}{2}} dt_1\ldots \int_{0}^{t_{n-1}}dt_{n}\Delta^{t_n}_\Phi Q\Delta^{t_{n-1}-t_n}_\Phi\ldots \Delta^{t_1-t_2}_\Phi Q\Delta^{t_1}_\Phi}\Phi.$$

Another comment in this direction is about the interpretation of $\lambda$ in $\ex^\tau_{p,\lambda}$. As one can check in \cite{sakai91}, in the case of bounded operators for which our result gives an extension, we have equation \eqref{Duhamel}
$$e^{t(A+B)}e^{-tB}\xi= Exp_l\left(\int_0^t;A(s) ds\right)\xi \, , \qquad \xi \in G(B),$$
where $G(A)$ are the set of geometric vectors\footnote{The author discovered this concept in Sakai's book \cite{sakai91} a little after discovering the first version of the theorems in this chapter.} with respect to $B$, which are defined as the vectors with the property that there exists a positive constant $M_\xi$ such that $\|B^n\xi\|\leq M_\xi^n \|\xi\|$. The similarity with our approach is notorious, nevertheless they are not nearly the same concept, as one can notice by Proposition \ref{noconstant}.

It is also possible to obtain a complex version of this formula, namely
$$e^{z(A+B)}e^{-zB}\xi=\sum_{n=0}^{\infty}{\int_{0}^{1} dt_1\ldots \int_{0}^{t_{n-1}}dt_{n} z^n A(t_1)\ldots A(t_n)}\xi  \, , \qquad \xi \in G(B),$$
which evidences more our interests in the case $\lambda=\frac{1}{2}$.

We already know that the modular automorphism groups of $\phi$ and $\psi$ can be related as follows:
$$\begin{aligned}
u^{\phi\psi}_t&=Exp_r\left(\int_0^t;-\iu \tau^\psi_s(Q) ds\right)\\
\hat{u}^{\phi\psi}_t&=Exp_l\left(\int_0^t;\iu \tau^\psi_s(h) ds\right).\\
\end{aligned}$$
$$\begin{aligned}
\left(u^{\phi\psi}_t\right)^\ast&=\hat{u}^{\phi\psi}_t\\
u^{\phi\psi}_t \hat{u}^{\phi\psi}_t&=\hat{u}^{\phi\psi}_t u^{\phi\psi}_t=\mathbbm{1}\\
u^{\phi\psi}_t \tau^\psi_t(A)&=\tau^\phi_t(A) \hat{u}^{\phi\psi}_t \, , \qquad A\in \nalgebra.
\end{aligned}$$
Thus, Lemma \ref{ExpansionalinLp} says that this operator is in $L_{4q}(\nalgebra,\tau)$ if $Q$ is $(\tau,p,\lambda)$-exponentiable. Moreover, the KMS vector representation of the $\|\cdot\|_p$-continuous state is in the domain of $u^{\phi\psi}_t$ for $t<\lambda$.

We would add at this point that the author does not believe these results prove any kind of stability of KMS states. This belief is based on the necessity of adding a ``dual'' continuity property on the state, hence, all the stability we proved seems to be a consequence of that continuity. An important exception is the stability of the domain of the Modular Operator, which allows us to extend the multiple-time KMS condition to unbounded operators, proved in Theorem \ref{TR0} and \ref{TR1}.

\begin{proposition}
	Let $(Q_n)_{n\in\mathbb{N}} \subset\ex_{4q,\lambda}^\tau$ be a sequence such that $Q_n\xrightarrow{\|\cdot\|_{4mq}}Q\in\ex_{4q,\lambda}^\tau$, $\|Q_n\|_{4mq}\leq \|Q\|_{4mq}$ and $\|Q-Q_n\|_{4mq} \leq M$ for all $m\in\mathbb{N}$. In addition, suppose that, for each fixed $n\in\mathbb{N}$,
	$\Phi \in\Dom{Q_1}$ and $\Delta_\Phi^{\iu z_{j-1}}Q_{j-1}\ldots \Delta_\Phi^{\iu z_1}Q_1\Phi \in \Dom{Q_j}$ for every $\frac{1}{2}\leq z_j \leq 0$ and for every $2\leq j\leq n$, where $Q_j$ can be either $Q_n$ or $Q$. Then
	$$Exp_{l,r}\left(\int_0^t;Q_n(s) ds\right)\Phi\to Exp_{l,r}\left(\int_0^t;Q(s) ds\right)\Phi \ , \quad t<\lambda.$$
\end{proposition}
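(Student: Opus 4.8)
The plan is to read the statement as a dominated-convergence result for Dyson series. Fix $t<\lambda$ and, writing $Q_k(s)=\Delta_\Phi^{\iu s}Q_k\Delta_\Phi^{-\iu s}$ and $Q(s)=\Delta_\Phi^{\iu s}Q\Delta_\Phi^{-\iu s}$, expand both $Exp_r\left(\int_0^t;Q_k(s)\,ds\right)\Phi$ and $Exp_r\left(\int_0^t;Q(s)\,ds\right)\Phi$ as the series of Corollary \ref{CR1}. The domain hypotheses $\Delta_\Phi^{\iu z_{j-1}}Q_{j-1}\cdots Q_1\Phi\in\Dom{Q_j}$, assumed for every admissible choice of the factors among $Q_k$ and $Q$, guarantee that every term of each series is well defined, and since $Q,Q_k\in\ex_{4q,\lambda}^\tau$ both series converge absolutely and uniformly for $t<\lambda$. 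It therefore suffices to show that the series of differences tends to $0$ as $k\to\infty$; I would do this by establishing termwise convergence together with a summable majorant uniform in $k$, so that $\lim_k$ may be exchanged with the summation over the Dyson order $n$ and with the integration over the simplices $S_n$.

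For the estimate I would use the telescoping identity. Denoting by $P^{(k)}_n$ and $P_n$ the $n$-th order integrands built from $Q_k$ and from $Q$ respectively, one has
\[
P^{(k)}_n-P_n=\sum_{j=1}^{n}\underbrace{Q\cdots Q}_{j-1}\;(Q_k-Q)\;\underbrace{Q_k\cdots Q_k}_{n-j},
\]
where the modular powers $\Delta_\Phi^{\iu(\cdot)}$ sit between consecutive factors exactly as in the unperturbed product. Each summand is a product of the type treated in Theorem \ref{TR1}, with operator list consisting of $j-1$ copies of $Q$, one copy of $Q_k-Q$, and $n-j$ copies of $Q_k$. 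Applying the bound of Theorem \ref{TR1} to this list, evaluated on $\Phi$, controls $\|(P^{(k)}_n-P_n)\Phi\|$ by $\|H\|_p^{\frac12}$ times a maximum of products of the norms $\|Q\|_{4mq}$, $\|Q_k-Q\|_{4mq}$ and $\|Q_k\|_{4mq}$.

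At this point I would invoke the two quantitative hypotheses. Using $\|Q_k\|_{4mq}\leq\|Q\|_{4mq}$ to replace every $\|Q_k\|$ by $\|Q\|$ and $\|Q_k-Q\|_{4mq}\leq M$ for the single distinguished factor, the sum over $j$ is bounded by $n$ times a product of $\|Q\|$-norms with one factor $M$; combining this with the simplex volume and the analyticity estimate $\tau(|Q|^{4zq})^{\frac{1}{4q}}\tau(|Q|^{4(n-z)q})^{\frac{1}{4q}}\leq\|Q\|_{4q}\|Q\|_{4(n-1)q}^{\,n-1}$ from the proof of Corollary \ref{CR1}, one obtains a majorant of the form $\sum_n \tfrac{n\,t^n}{n!}\,M\,\|H\|_p^{\frac12}\,(\text{powers of }\|Q\|)$ that is independent of $k$ and converges for $t<\lambda$, the extra factor $n$ being absorbed by the factorial since $Q\in\ex^\tau_{4q,\lambda}$. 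This supplies the uniform majorant. For termwise convergence I would instead retain the factor $\|Q_k-Q\|_{4mq}$, which tends to $0$ as $k\to\infty$ for each fixed $m$; hence each summand, and so each $n$-th order term, tends to $0$. Dominated convergence for the series then yields $\|\big(Exp_r(\int_0^t;Q_k)-Exp_r(\int_0^t;Q)\big)\Phi\|\to 0$, and the argument for $Exp_l$ is identical after reversing the order of the factors.

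The main obstacle is precisely the bookkeeping in this last step: one must verify that replacing a single norm by $M$ (for the majorant) or by $\|Q_k-Q\|_{4mq}$ (for the pointwise limit), together with the combinatorial factor $n$ coming from the telescoping sum, still yields a scalar series convergent for $t<\lambda$, and that the convergence is uniform enough in $k$ to justify interchanging the limit with both $\sum_n$ and the simplex integrals. All the analytic and domain technicalities — well-definedness of the iterated products, boundedness on the closed strip, and the passage to strong limits — are already furnished by Theorem \ref{TR1}, so the remaining work reduces to the convergence of these scalar majorants.
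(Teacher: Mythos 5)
Your proposal is correct and follows essentially the same route as the paper's proof: the identical telescoping decomposition of the $n$-th order Dyson integrand into terms with $j-1$ factors $Q$, one factor $Q-Q_n$, and the remaining factors $Q_n$, bounded via Theorem \ref{TR1}, with $\|Q_n\|_{4mq}\leq\|Q\|_{4mq}$ collapsing the mixed norms, the interpolation estimate (the paper uses equation \eqref{eq:riesz-thorin}, equivalent to the Maximum-Modulus estimate you cite from Corollary \ref{CR1}) reducing the maximum to $\|Q\|_{4q}\|Q\|_{4(m-1)q}^{m-1}$, and the combinatorial factor $m$ absorbed into the factorial. Your concluding dominated-convergence argument is exactly the paper's explicit $\epsilon/3$ splitting, where the uniform bound $\|Q-Q_n\|_{4mq}\leq M$ controls the tail of the series uniformly in $n$ and the pointwise convergence $\|Q-Q_n\|_{4mq}\to 0$ handles the finitely many remaining orders.
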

\begin{proof}
	First, notice that $Q$ and $Q_n$, $n\in\mathbb{N}$, are densely defined closed operators. Furthermore, they have a common core because of the increasing hypothesis and the properties of $\tau$-dense subsets.
	
	Define 
	$$A^m_j(z_1,\ldots, z_n)\Phi=\Delta_\Phi^{\iu z_m} Q \ldots\Delta_\Phi^{\iu z_{j-1}}Q\Delta_\Phi^{\iu z_{j}}(Q-Q_n)\Delta_\Phi^{\iu z_{j+1}}Q_n\ldots \Delta_\Phi^{\iu z_1}Q_n\Phi$$.
	
	Using a telescopic sum argument, we have, for $m>1$,
	$$\begin{aligned}
	\left\|\Delta_\Phi^{\iu z_m} Q\ldots \Delta_\Phi^{\iu z_1}Q\Phi-\Delta_\Phi^{\iu z_n} Q_n\ldots \Delta_\Phi^{\iu z_1}Q_n\Phi\right\|\\
	&\hspace{-4.2cm}=\left\|\sum_{j=1}^m A^m_j(z_1,\ldots, z_n)\Phi \right\|\\
	&\hspace{-4.2cm}\leq \sum_{j=1}^m \left\| A^m_j(z_1,\ldots, z_n)\Phi \right\|\\
	&\hspace{-4.2cm}\leq \sum_{j=1}^m \|H\|_p^\frac{1}{2}\max_{0\leq l\leq m-1}\left\{\|Q\|_{4lq}^l\|Q-Q_n\|_{4(m-l)q}\|Q_n\|_{4(m-l)q}^{m-l-1}\right\}\\
	&\hspace{-4.2cm}\leq m \, \|H\|_p^\frac{1}{2}\max_{0\leq l\leq m-1}\left\{\|Q\|_{4lq}^l\|Q\|_{4(m-l)q}^{m-l-1}\|Q-Q_n\|_{4(m-l)q}\right\}\\
	&\hspace{-4.2cm}= m \, \|H\|_p^\frac{1}{2}\max_{0\leq l\leq m-1}\left\{\|Q\|_{4lq}^l\|Q\|_{4(m-l)q}^{m-l}\frac{\|Q-Q_n\|_{4(m-l)q}}{\|Q\|_{4(m-l)q}}\right\}.\\
	\end{aligned}$$
	
	Applying equation \eqref{eq:riesz-thorin} to the inequality above we get
	
	$$\begin{aligned}
	\left\|\Delta_\Phi^{\iu z_m} Q\ldots \Delta_\Phi^{\iu z_1}Q\Phi-\Delta_\Phi^{\iu z_n} Q_n\ldots \Delta_\Phi^{\iu z_1}Q_n\Phi\right\|\\
	&\hspace{-3.9cm} \leq m \, \|H\|_p^\frac{1}{2}\|Q\|_{4q}\|Q\|_{4(m-1)q}^{m-1}\max_{0\leq l\leq m-1}\left\{\frac{\|Q-Q_n\|_{4(m-l)q}}{\|Q\|_{4(m-l)q}}\right\}.\\
	\end{aligned}$$
	Hence,
	\begin{equation}
	\label{calcx34}
	\begin{aligned}
	\left\|Exp_{l,r}\left(\int_0^t;Q(s) ds\right)\Phi-Exp_{l,r}\left(\int_0^t;Q_n(s) ds\right)\Phi\right\|\\
	&\hspace{-7cm}\leq \|H\|_p^\frac{1}{2}\|Q-Q_n\|_{4q}+\\
	&\hspace{-7cm} +\sum_{m=2}^{\infty}\frac{t^m}{m!} m \, \|H\|_p^\frac{1}{2}\|Q\|_{4q}\|Q\|_{4(m-1)q}^{m-1}\max_{0\leq l\leq m-1}\left\{\frac{\|Q-Q_n\|_{4(m-l)q}}{\|Q\|_{4(m-l)q}}\right\}\\
	&\hspace{-7cm} = \|H\|_p^\frac{1}{2}\|Q-Q_n\|_{4q}+ \\
	& \hspace{-7cm} + \|H\|_p^\frac{1}{2}\|Q\|_{4q} \, t \sum_{m=2}^{\infty}\frac{t^{m-1}}{(m-1)!} \|Q\|_{4(m-1)q}^{m-1}\max_{0\leq l\leq m-1}\left\{\frac{\|Q-Q_n\|_{4(m-l)q}}{\|Q\|_{4(m-l)q}}\right\}.\\
	\end{aligned}
	\end{equation}
	
	Finally, let $\epsilon>0$ be given. Since $Q\in \ex_{4q,\lambda}^\tau$, there exists $m_0\in\mathbb{N}$ such that, for all $m\leq m_0$, 
	$$\sum_{m=M}^{\infty}\frac{t^{m-1}}{(m-1)!} \|Q\|_{4(m-1)q}^{m-1}<\frac{\epsilon}{3M}.$$
	By hypothesis, there also exists $n_0\in\mathbb{N}$ such that 
	$$\frac{\|Q-Q_n\|_{4mq}}{\max\left\{\|Q\|_{4nq},1\right\}}<\epsilon \left[3 \lambda \|H\|_p^\frac{1}{2}  \sum_{m=2}^{\infty}\frac{t^{m-1}}{(m-1)!} \|Q\|_{4(m-1)q}^{m-1}\right]^{-1} \, , \quad \forall n\geq n_0. $$
	
	It follows from equation \eqref{calcx34} that
	$$\left\|Exp_{l,r}\left(\int_0^t;Q(s) ds\right)\Phi-Exp_{l,r}\left(\int_0^t;Q_n(s) ds\right)\Phi\right\|<\epsilon \, , \quad \forall n\geq n_0.$$
\end{proof}

One of the consequences of the previous proposition is that the sequence of Araki's perturbations obtained by the upper cut in the spectral decomposition of the modulus of a $\ex_{4q,\lambda}^\tau$-perturbation converges to the perturbation described in Corollary \ref{CR1}.
\renewcommand\thechapter{C}

\chapter{Conclusions and Perspectives}


We have already mentioned in the introduction that our goal was to extend the theory of perturbations of KMS states to some class of unbounded perturbations using noncommutative $L_p$ spaces. We achieved our goal defining $\ex^\tau_{p,\lambda}$ and proving that this is a set that includes some unbounded operators whose elements give rise to convergent Dyson's series related to a $\|\cdot\|_p$-continuous state. This is Corollary \ref{CR1}.

As important as the Corollary \ref{CR1} are the theorems that make it possible. Theorem \ref{TR0} and \ref{TR1} provide a good understanding of the domain of the Modular Operator with respect to a $\|\cdot\|_p$-continuous state, as well as important bounds for the multiple-time KMS condition, which is well defined and analytic thanks to those theorems.

The unsurprising, but valuable result that Araki's perturbation theory is a special case of our results, namely, the case $p=1$ and $q=\infty$ in the aforementioned results, shows that we made a gradation of an important theorem with many physical applications. 

No less important, our review on the theory of weights and Noncommutative $L_p$-Spaces yielded some interesting results, for example, Theorem \ref{gholder} which is a generalization of H\"older's inequality that is the core to the results in Chapter \ref{chapExtensionPerturb} and many other small results. In fact, ``review'' is not the appropriate word here, since most of the topic is not  developed in the standard literature.  

We finish this work with a huge amount of unanswered interesting questions, some of which, not answered due to lack of time. Working on some of them is still in our plans.

The first one has been raised on the very begining of this work. We would like to understand the relation between the unbounded perturbations described using noncommutative $L_p$ spaces and that of  J. Derezinski, V. Jaksic and C.-A. Pillet presented in \cite{Derezinski03}. This task seems quite difficult to be completely achieved, since we don't even know examples of perturbations as in \cite{Derezinski03}, but our theory may lead to an example, which would be a great accomplishment.

Another question that has been in our minds from the very beginning of this thesis is how to extend our results to the Araki's noncommutaive $L_p$-spaces. In fact, it was our first idea to use this description, since the multiple-time KMS condition seems natural in this description, as well as, suitable duality and H\"older inequality.

One of the questions that appears during the development concerns Lemma \ref{ExpansionalinLp}. It says that the expansional has $\tau$-dense domain. We have shown in Theorem \ref{TR1} some interesting vectors that lie in the domain of the expansional. But, what are the other vector that lies in the domain of the expansional, if there are any?

Thanks to Prof. Jean-Bernard Bru, whom I immensely thank for that, the author got in touch with an interesting article, \cite{nittis2016}, published a short time ago that shows applications of noncommutative $L_p$-spaces in Linear Response Theory. In this article, derivations are presented in the context of noncommutative $L_p$-spaces and applied to some physical problems. At that time those perturbations were unknown by the author, but Sakai's version of bounded perturbations using derivations was well known. The natural question is if there exists an approach to our results using derivations.

Physical applications of the theory developed in Chapter \ref{chapExtensionPerturb} are very desirable too. As the article \cite{nittis2016} suggests, linear response is a good candidate for that purpose. In addition, find states with physical significance, \eg quasi-free states, that are $\|\cdot\|_p$-continuous is fundamental for applications. 

We also didn't have time to write down some results we believe we already have. This is the case of an analogous of Corollary \ref{PerturbationsareDense} showing that the vector states obtained by perturbations constitute a dense set.

\renewcommand\thechapter{S}

\renewcommand{\thechapter}{\arabic{chapter}}



\begin{appendices} 

\chapter{Complementary Results in Functional Analysis} 
\label{AppCRFA}

\begin{definition}[Direct Sum]
\label{DS}
Let $\left(X_\alpha\right)_{\alpha}$ be a family of Banach spaces, the direct sum of this family is defined as follows
$$\bigoplus_\alpha X_\alpha=\left\{(x_\alpha)_\alpha \in \prod_{\alpha}X_\alpha \ \middle| \ \sum_{\alpha} \|x_\alpha\| < \infty \right\}$$
\end{definition}

\begin{remark}
A sum of non-negative numbers over an arbitrary set of index is defined as the supremum of the sum for all finite number of terms. It is well know that a sum such as in the previous definition does make sense only if it vanishes in a cocountable number of index.

It is not difficult to see this direct sum is the closure of the usual direct sum.
\end{remark}

\section{Hahn-Banach Theorem}

We will refer to the next result as Hahn-Banach Theorem, or more specifically, as Geometric Hahn-Banach Theorem but it was proved long after S. Banach's works or even H. Hahn did the generalization known nowadays as (Analytical) Hahn-Banach Theorem.

\begin{theorem}[Hahn-Banach] \index{theorem! Hahn-Banach}
	\label{realHBT}
	Let $V$ be a vector space over $\mathbb{R}$, let $p:V\to \mathbb{R}$ be a sublinear functional and let $f:V_0 \to \mathbb{R}$ be a linear functional defined on a subspace $V_0\subset V$ such that $f(x) \leq p(x) \ \forall x \in V_0$. Then, there exists a linear functional $\tilde{f}:V \to \mathbb{R}$ de $f$ such that $ \tilde{f}_{|_{V_0}}=f$ and $\tilde{f}(x) \leq p(x) \ \forall x \in V$.
\end{theorem}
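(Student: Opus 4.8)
The plan is to follow the classical two-part strategy: first establish a single-step extension lemma, and then iterate it by means of Zorn's lemma. First I would prove that whenever $g$ is a linear functional on a subspace $W$ with $V_0 \subseteq W \subsetneq V$, satisfying $g|_{V_0} = f$ and $g(x) \leq p(x)$ for all $x \in W$, and $z \in V \setminus W$, then $g$ admits an extension to $W \oplus \mathbb{R}z$ that is still dominated by $p$. The only freedom is the value $c = \tilde{g}(z)$, and the domination requirement $g(w) + \lambda c \leq p(w + \lambda z)$ for all $w \in W$ and all $\lambda \in \mathbb{R}$ splits, after dividing by $|\lambda|$ and invoking the positive homogeneity of $p$, into a lower bound arising from the case $\lambda < 0$ and an upper bound arising from $\lambda > 0$.

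The crux of the single-step lemma is to verify that these two bounds are compatible, i.e.
\[
\sup_{w'' \in W}\big(g(w'') - p(w'' - z)\big) \leq \inf_{w' \in W}\big(p(w' + z) - g(w')\big).
\]
This I would obtain from the subadditivity of $p$ together with the linearity of $g$: for any $w', w'' \in W$,
\[
g(w') + g(w'') = g(w' + w'') \leq p(w' + w'') \leq p(w' + z) + p(w'' - z),
\]
and rearranging gives $g(w'') - p(w'' - z) \leq p(w' + z) - g(w')$ for every such pair. Taking the supremum on the left and the infimum on the right yields the inequality, and any $c$ lying between these two real numbers then defines an extension dominated by $p$.

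Next I would set up the Zorn's lemma argument. Consider the collection of pairs $(W, g)$ with $V_0 \subseteq W \subseteq V$, $g$ linear on $W$, $g|_{V_0} = f$, and $g \leq p$ on $W$, partially ordered by declaring $(W_1, g_1) \preceq (W_2, g_2)$ if and only if $W_1 \subseteq W_2$ and $g_2|_{W_1} = g_1$. This collection is nonempty, since it contains $(V_0, f)$, and for any chain the union of the domains equipped with the common functional is again a member of the collection and an upper bound for the chain. Zorn's lemma then produces a maximal element $(\tilde{W}, \tilde{f})$.

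Finally I would conclude by contradiction: if $\tilde{W} \neq V$, choose $z \in V \setminus \tilde{W}$ and apply the single-step lemma to produce a proper extension of $(\tilde{W}, \tilde{f})$, contradicting maximality. Hence $\tilde{W} = V$, and $\tilde{f}$ is the desired functional with $\tilde{f}|_{V_0} = f$ and $\tilde{f} \leq p$ on $V$. The main obstacle is precisely the compatibility inequality in the single-step lemma; once that is in place, the transfinite passage is routine. I note that, since the statement is over $\mathbb{R}$, no complexification step is required here.
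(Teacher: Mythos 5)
Your proof is correct: it is the classical one-step extension lemma (with the compatibility inequality $g(w')+g(w'')\leq p(w'+z)+p(w''-z)$ obtained from subadditivity) followed by Zorn's lemma, and every step is sound. Note that the paper does not prove this theorem at all --- it explicitly defers to standard references such as Kreyszig --- and your argument is precisely the standard proof found there, so there is nothing in the paper to diverge from.
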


We will not present the proof of the previous theorem since it is a standard result that can be found in almost any Functional Analysis's book, such as \cite{Kreyszig89}.

Another standard result found in the same book was obtained by F. Bohnenblust and A. Sobczyk in 1938 as a generalization of an idea by F. Murray in a paper of 1936.

\begin{theorem}[Hahn-Banach for Normed Spaces]\index{theorem! Hahn-Banach}
	\label{normedHBT}
	Let $(V,||.||)$ be a normed vector space and let $f:V_0 \to \mathbb{K}$ be a bounded linear functional defined in a subspace $V_0\subset V$. Then, there exists a bounded linear functional $\tilde{f}:V \to \mathbb{K}$ such that $ \tilde{f}_{|_{V_0}}=f$ and  $$\sup_{\substack{x\in V\\ ||x||\leq1}}{|\tilde{f}(x)|}=\sup_{\substack{x\in V_0\\ ||x||\leq 1}}{|f(x)|}.$$
\end{theorem}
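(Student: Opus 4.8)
The plan is to deduce this norm-preserving extension theorem from the analytic Hahn-Banach Theorem (Theorem \ref{realHBT}), treating the real and complex scalar fields separately. Write $M=\sup_{x\in V_0,\,\|x\|\leq 1}|f(x)|$ for the norm of $f$ on $V_0$; the whole point is to extend $f$ while keeping this number unchanged.

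First I would settle the case $\mathbb{K}=\mathbb{R}$. The natural candidate for the dominating sublinear functional is $p:V\to\mathbb{R}$ given by $p(x)=M\|x\|$, which is subadditive and positively homogeneous by the norm axioms, hence sublinear. On $V_0$ one has $f(x)\leq|f(x)|\leq M\|x\|=p(x)$, so Theorem \ref{realHBT} yields a linear extension $\tilde f:V\to\mathbb{R}$ with $\tilde f|_{V_0}=f$ and $\tilde f(x)\leq p(x)=M\|x\|$ for all $x\in V$. Applying this inequality to $-x$ and using linearity gives $-\tilde f(x)=\tilde f(-x)\leq M\|x\|$, so $|\tilde f(x)|\leq M\|x\|$; thus $\|\tilde f\|\leq M$. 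Since $\tilde f$ restricts to $f$ on $V_0$, the supremum defining $\|\tilde f\|$ is at least $M$, and equality follows.

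For $\mathbb{K}=\mathbb{C}$ I would use the Bohnenblust--Sobczyk decomposition. Viewing $V$ as a real vector space, set $g(x)=\Re f(x)$, a real-linear functional on $V_0$ satisfying $f(x)=g(x)-\iu\, g(\iu x)$ (this identity is checked by comparing real and imaginary parts, using $\Re f(\iu x)=-\Im f(x)$). Since $|g(x)|\leq|f(x)|\leq M\|x\|$, the real case just proved produces a real-linear extension $\tilde g:V\to\mathbb{R}$ with $|\tilde g(x)|\leq M\|x\|$ everywhere. I then define $\tilde f(x)=\tilde g(x)-\iu\,\tilde g(\iu x)$ and verify that $\tilde f$ is complex-linear: additivity and real-homogeneity are inherited from $\tilde g$, while homogeneity under multiplication by $\iu$ follows from $\tilde f(\iu x)=\tilde g(\iu x)-\iu\,\tilde g(-x)=\iu\big(\tilde g(x)-\iu\,\tilde g(\iu x)\big)=\iu\,\tilde f(x)$. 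On $V_0$ it reduces to $f$ by the decomposition above.

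The step that requires the most care, and which I expect to be the main obstacle, is recovering the norm bound in the complex case, because the inequality for $\tilde g$ controls only the real part. The device is a rotation: for fixed $x$ write $\tilde f(x)=r\,e^{\iu\theta}$ with $r=|\tilde f(x)|\geq 0$, so that $|\tilde f(x)|=e^{-\iu\theta}\tilde f(x)=\tilde f(e^{-\iu\theta}x)$ is real and therefore equals its own real part $\tilde g(e^{-\iu\theta}x)$. Hence $|\tilde f(x)|=\tilde g(e^{-\iu\theta}x)\leq M\|e^{-\iu\theta}x\|=M\|x\|$, giving $\|\tilde f\|\leq M$, and the reverse inequality is again automatic from the extension property. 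This completes both cases and hence the theorem.
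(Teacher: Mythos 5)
Your proof is correct: the real case via the sublinear majorant $p(x)=M\|x\|$ and the complex case via the Bohnenblust--Sobczyk decomposition $f(x)=g(x)-\iu\,g(\iu x)$ together with the rotation trick $|\tilde f(x)|=\tilde g(e^{-\iu\theta}x)$ is exactly the canonical argument. The paper itself omits the proof of Theorem \ref{normedHBT}, citing \cite{Kreyszig89} and attributing the result to Bohnenblust and Sobczyk, and your deduction from Theorem \ref{realHBT} is precisely the standard route those references supply.
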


\begin{theorem}[de Mazur-Dieudonn\'e]\index{theorem! Mazur Dieudonn\'e}
\label{TMD}
Let $V$ be a topological vector space, $M$ a subspace of $V$ and $A\subset V$ an open convex subset with $A\cap M=\emptyset$, then there exists a maximal closed subspace $H$ of $V$ disjoint of $A$ and containing $M$.
\end{theorem}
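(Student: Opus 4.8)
The statement is the Mazur--Dieudonn\'e theorem: given a topological vector space $V$, a subspace $M$, and an open convex set $A$ with $A\cap M=\emptyset$, I want a \emph{maximal} closed subspace $H$ of $V$ with $M\subset H$ and $H\cap A=\emptyset$. The crucial point, which my earlier attempt glossed over, is that ``maximal subspace'' means maximal among \emph{all} proper subspaces of $V$ disjoint from $A$, i.e.\ $H$ is a hyperplane (codimension one), and the force of the theorem is that one cannot merely take a subspace maximal among those \emph{containing $M$} and call it done---one must show the maximal such subspace already absorbs all directions outside $A$. The convexity and openness of $A$ are exactly what make this possible.

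\textbf{First steps.} First I would invoke Zorn's lemma on the family
$$\mathcal{S}=\left\{N\subset V \ \middle| \ N \text{ is a subspace}, \ M\subset N, \ N\cap A=\emptyset\right\},$$
ordered by inclusion. A chain in $\mathcal{S}$ has its union as an upper bound (the union of a chain of subspaces is a subspace, it contains $M$, and it misses $A$ since each member does), so a maximal element $H$ exists. Next I would show $H$ is closed: if not, its closure $\overline{H}$ is again a subspace containing $M$, and I claim $\overline{H}\cap A=\emptyset$. Here openness of $A$ is decisive---if $x\in\overline{H}\cap A$, then since $A$ is open it is a neighborhood of $x$, so it meets $H$, contradicting $H\cap A=\emptyset$. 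Thus $\overline{H}\in\mathcal{S}$, and maximality forces $H=\overline{H}$, so $H$ is closed.

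\textbf{The main obstacle: showing $H$ is a maximal subspace of $V$.} This is the genuine content and where convexity enters. I must show no proper subspace strictly contains $H$ while still avoiding $A$---equivalently, that $H$ has codimension $1$, so that any strictly larger subspace is all of $V$ (and $V\cap A=A\neq\emptyset$). The plan is to argue by contradiction: suppose $\dim(V/H)\geq 2$. Passing to the quotient $V/H$ with its quotient topology, the image $\tilde{A}$ of $A$ is an open convex set (open because the quotient map is open; convex because linear images of convex sets are convex), not containing $0$ (since $H\cap A=\emptyset$). In a topological vector space of dimension at least $2$, an open convex set missing the origin cannot separate the space into the complement of a single line through $0$; concretely, I would show that the complement of $\tilde{A}$ contains a line $L$ through $0$ along which one can enlarge $H$. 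The key geometric lemma is that for an open convex $\tilde{A}$ with $0\notin\tilde{A}$ in dimension $\geq 2$, there exists a one-dimensional subspace $L\neq\{0\}$ with $L\cap\tilde{A}=\emptyset$; pulling $L$ back gives a subspace $H'\supsetneq H$ with $M\subset H'$ and $H'\cap A=\emptyset$, contradicting maximality. Establishing this separating-line lemma---using that a convex set missing $0$ lies ``on one side'' in a suitable sense, via a supporting functional obtained from the geometric Hahn--Banach theorem (Theorem \ref{normedHBT} and its consequences)---is the technical heart and the step I expect to require the most care, since it is precisely where one leverages convexity rather than just the subspace/closure bookkeeping.

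\textbf{Conclusion.} Once $H$ is shown to have codimension $1$, it is automatically a maximal proper subspace: any subspace strictly containing $H$ equals $V$, which meets $A$. Therefore $H$ is a maximal closed subspace of $V$ disjoint from $A$ and containing $M$, as required. I would remark that in the normed or locally convex setting the separating-line lemma is cleanest, and that the openness hypothesis on $A$ is used twice---once to pass to the closure, and once to ensure the quotient image $\tilde{A}$ is open---so both hypotheses on $A$ are genuinely needed.
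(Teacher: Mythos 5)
Your overall architecture is a genuinely different route from the paper's. The paper argues directly and constructively: fix $a\in A$, form the Minkowski functional $\rho_{A-a}$ of the open convex set $A-a$ (a continuous sublinear functional), define $\phi(x-\lambda a)=\lambda$ on the span of $M\cup\{a\}$, check $\phi\leq\rho_{A-a}$ there, extend by the analytic Hahn--Banach theorem (Theorem \ref{realHBT}) to $\tilde\phi\leq\rho_{A-a}$ on all of $V$, and take $H=\ker{(\tilde\phi)}$; continuity of $\rho_{A-a}$ makes $\tilde\phi$ continuous, so $H$ is a closed hyperplane, and $\tilde\phi<1$ on $A-a$ forces $H\cap A=\emptyset$. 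Your Zorn-plus-quotient scheme is a classical alternative, and your bookkeeping steps are all correct: the union of a chain in $\mathcal{S}$ is an upper bound, $\overline{H}\cap A=\emptyset$ because $A$ is open, the image $\tilde A$ in $V/H$ is open (quotient maps are open) and convex with $0\notin\tilde A$, and pulling back a line $L$ with $L\cap\tilde A=\emptyset$ contradicts maximality.

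However, there is a genuine gap exactly where you locate ``the technical heart'': you propose to prove the separating-line lemma via ``a supporting functional obtained from the geometric Hahn--Banach theorem,'' and that is circular. The existence of a nonzero continuous functional supporting an open convex set in a general topological vector space is precisely the content of the theorem you are proving; indeed, had you such a functional on $V/H$, its kernel would already give the hyperplane and the lemma would be superfluous. Moreover, the result you cite, Theorem \ref{normedHBT}, is Hahn--Banach for \emph{normed} spaces, and the quotient $V/H$ of a general topological vector space need not be normable (a priori it may admit no nonzero continuous functionals at all). The lemma must instead be proved bare-handed, and it can be, over $\mathbb{R}$: set $P=\bigcup_{t>0}t\tilde A$, an open convex cone with $0\notin P$, whence $P\cap(-P)=\emptyset$ (otherwise $0=\frac{1}{2}w+\frac{1}{2}(-w)\in P$ by convexity); if every nonzero $v$ lay in $P\cup(-P)$, the two disjoint open sets $P$ and $-P$ would disconnect $W\setminus\{0\}$, contradicting the path-connectedness of $W\setminus\{0\}$ when $\dim W\geq2$; any $v\notin P\cup(-P)$ spans the required line. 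Finally, your argument as written is a real-scalar argument: in the complex case the lemma yields only a \emph{real} line, whereas maximality in $\mathcal{S}$ concerns complex subspaces, so you need the same reduction the paper performs at the end of its proof (solve the real problem, then pass to $H\cap\iu H$, i.e.\ replace $\phi$ by $\tilde\phi(x)=\phi(x)-\iu\phi(\iu x)$). With these two repairs your route goes through; its payoff is isolating the finite-dimensional geometry and avoiding Minkowski functionals, while the paper's proof is shorter and obtains continuity of the separating functional for free from continuity of $\rho_{A-a}$.
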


\begin{proof}
First, suppose $V$ is a topological vector space over $\mathbb{R}$.

Let $a\in A$, then, $A-a$ is an open and convex subset which contains the origin, thus the Minkowski functional $\rho_{A-a}$ is a continuous sublinear functional satisfying 
\begin{equation}
\label{mbola}
A-a = \{x\in V| \rho_{A-a}(x)<1\} \textrm{ e } A = \{x\in V | \rho_{A-a}(x-a)<1\}
\end{equation}

So, since $M\cap A =\emptyset$, it follows that $\rho_{A-a}(x-a)\geq 1 \ \forall x \in M$.

Define now $N=\langle M \cup \{a\} \rangle$ and let $\phi: N \to \mathbb{R}$ be given by $\phi(x-\lambda a)=\lambda$ which is clearly linear and:

If $x\in M$ and $\lambda>0$ then $\displaystyle \phi(x-\lambda a)=\lambda \leq \lambda \rho_{A-a}\left(\frac{x}{\lambda} -a\right)=\rho_{A-a}(x-\lambda a)$.

If $x\in M$ and $\lambda<0$ then $\phi(x-\lambda a)=\lambda <0 \leq \rho_{A-a}(x-\lambda a)$.

This means $\phi$ is dominated by $\rho_{A-a}$ in $N$ and, using the Hahn-Banach Theorem, we obtain an extension $\tilde{\phi}$ of $\phi$ such that $\tilde{\phi}(x)\leq \rho_{A-a}(x) \ \forall x \in V$, in particular, $\tilde{\phi}$ is continuous, due to this and equation \ref{mbola}, we must have $\tilde{\phi}(x)<1 \ \forall x \in A-a$.

Define the maximal subspace $H=\ker{(\tilde{\phi})}$, which is closed since $\tilde{\phi}$ is continuous, and contain $M=\ker{(\phi)}$. Moreover,
$$x\in H \Rightarrow 0=\tilde{\phi}(x)=\tilde{\phi}(x-a) + \tilde{\phi}(a)=\tilde{\phi}(x-a)+\phi(a) \leq \rho_{A-a}(x-a)-1 $$
It means, $\displaystyle \rho_{A-a}(x-a)\geq 1 \ \forall x\in H \Rightarrow H\cap A=\emptyset$ by \ref{mbola}.

For the complex topological vector space case, $M_\mathbb{R} \subset V_\mathbb{R}$ is a subspace, and by the previous proof, there exists a maximal closed subspace $H$ of $V_\mathbb{R}$ disjoint of $A$ such that $M \subset H$. Identify the maximal closed subspace with the kernel of a continuous functional $\phi \in V_\mathbb{R}^*$ such that $H=\ker{(\phi)}$ and consider $\tilde{H} = H\cap \iu H$, which is also disjoint of $A$, and the functional $\tilde{\phi} \in V^*$ given by $\tilde{\phi}(x)=\phi(x)-\iu \phi(\iu x)$.
$$\ker{(\tilde{\phi})}=\{x \in V \ | \ \phi(x)=0 \textrm{ e } \phi(\iu x)=0\}=H\cap iH$$ and thus $\tilde{H}$ is a maximal closed  subspace of $V$.
$$M=\iu M \subset \iu H \textrm{ e } M \subset H \Rightarrow M \subset H\cap \iu H.$$

\end{proof}

\begin{corollary}
\label{TMD2}
Let $V$ be a topological vector space, $M$ an affine linear manifold of $V$ and $A\subset X$ an open convex subset disjoint of $M$, then there exists a closed hyperplane $H$ of $V$ disjoint of $A$ and containing $M$.
\end{corollary}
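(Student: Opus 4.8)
The plan is to reduce the affine statement to the linear one already established in Theorem \ref{TMD} by a translation argument. Since $M$ is a nonempty affine linear manifold, I would first fix a point $m_0\in M$ and pass to the translated objects
$$M_0=M-m_0,\qquad A_0=A-m_0.$$
By construction $M_0$ is a genuine linear subspace of $V$ (it contains the origin, namely $m_0-m_0$, and is closed under the vector operations because $M$ was affine), while $A_0$ is still open and convex, being the image of $A$ under the translation homeomorphism $x\mapsto x-m_0$. Moreover $A\cap M=\emptyset$ forces $A_0\cap M_0=\emptyset$: if some $x-m_0$ lay in both, then $x$ would lie in $A\cap M$. Thus the hypotheses of Theorem \ref{TMD} are met for the pair $(M_0,A_0)$.

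Next I would apply Theorem \ref{TMD} directly: it furnishes a maximal closed subspace $H_0$ of $V$, disjoint from $A_0$ and containing $M_0$. Recall from the proof of that theorem that such an $H_0$ is exactly the kernel of a nonzero continuous linear functional, i.e.\ a closed subspace of codimension one. I would then translate back and set
$$H=H_0+m_0.$$
This $H$ is a closed hyperplane: translation by $m_0$ is a homeomorphism, so $H$ is closed, and being a coset of a codimension-one closed subspace it is precisely an affine hyperplane, of the form $\{x\in V \mid \tilde\phi(x)=\tilde\phi(m_0)\}$ for the continuous functional $\tilde\phi$ with $H_0=\ker\tilde\phi$. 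The containment $M\subseteq H$ follows from $M=M_0+m_0\subseteq H_0+m_0=H$, and disjointness $H\cap A=\emptyset$ follows because any $h\in H\cap A$ would give $h-m_0\in H_0\cap A_0=\emptyset$, a contradiction.

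I do not expect a serious obstacle here; the corollary is essentially a bookkeeping consequence of Theorem \ref{TMD}. The only points demanding care are the verifications that translation preserves all the relevant structure simultaneously (openness and convexity of $A_0$, closedness of $H$, and the three incidence relations), and the remark that a translate of the maximal closed subspace produced by Theorem \ref{TMD} genuinely deserves the name \emph{closed hyperplane}. Both are routine once one records that $x\mapsto x-m_0$ is a linear homeomorphism and that the subspace from Theorem \ref{TMD} has codimension one.
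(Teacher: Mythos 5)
Your proof is correct: the paper in fact states this corollary without any proof, treating it as an immediate consequence of Theorem \ref{TMD}, and your translation argument ($M_0=M-m_0$, $A_0=A-m_0$, apply the theorem, translate back) is precisely the intended routine reduction. You also correctly handle the one point worth flagging, namely that the maximal closed subspace $H_0$ from Theorem \ref{TMD} is the kernel of a continuous linear functional, so its translate $H_0+m_0=\{x\in V \mid \tilde\phi(x)=\tilde\phi(m_0)\}$ genuinely is a closed (affine) hyperplane.
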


\begin{corollary}
\label{CTMD}
Let $V$ be a locally convex space, $M$ a closed affine linear manifold of $V$ and $K\subset X$ a compact convex subset which if disjoint of $M$, then there exists a closed hyperplane $H$ of $V$ which is disjoint of $K$ and contain $M$ .
adon\end{corollary}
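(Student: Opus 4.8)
The plan is to reduce Corollary \ref{CTMD} to its affine open-convex counterpart, Corollary \ref{TMD2}, by thickening the compact convex set $K$ into an open convex set that remains disjoint from $M$. First I would exploit local convexity together with the compactness of $K$ and the closedness of $M$ to produce a convex open neighbourhood $U$ of the origin such that $(K+U)\cap M=\emptyset$. Once this is available, the reduction is immediate.

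To build $U$, for each $x\in K$ I would use that $x\notin M$ and that $M$ is closed to pick an open neighbourhood of the form $x+W_x$ with $(x+W_x)\cap M=\emptyset$; by local convexity one may then choose a convex symmetric open neighbourhood $U_x$ of $0$ with $U_x+U_x\subset W_x$. The family $\{x+U_x\}_{x\in K}$ covers the compact set $K$, so a finite subfamily $x_1+U_{x_1},\dots,x_n+U_{x_n}$ already does, and I would set $U=\bigcap_{i=1}^{n} U_{x_i}$, which is a convex open neighbourhood of $0$. For $y\in K$ one has $y\in x_i+U_{x_i}$ for some $i$, whence $y+U\subset x_i+U_{x_i}+U_{x_i}\subset x_i+W_{x_i}$, a set disjoint from $M$; therefore $(K+U)\cap M=\emptyset$.

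Then $A\doteq K+U$ is open (being a union of translates of the open set $U$), convex (a sum of two convex sets, using convexity of both $K$ and $U$), and disjoint from $M$ by the previous step. Applying Corollary \ref{TMD2} to the closed affine linear manifold $M$ and the open convex set $A$ yields a closed hyperplane $H$ with $M\subset H$ and $H\cap A=\emptyset$. Since $0\in U$ gives $K=K+0\subset K+U=A$, the hyperplane $H$ is automatically disjoint from $K$, which is exactly the assertion of the corollary.

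The main obstacle is the separation-by-fattening step: one must be careful to keep the neighbourhood $U$ convex, so that $K+U$ is convex and Corollary \ref{TMD2} genuinely applies, and this is precisely where local convexity of $V$ is indispensable. The compactness of $K$ is what permits the uniform choice of $U$ through the extraction of a finite subcover, while the closedness of $M$ is what guarantees that each $x+W_x$ can be taken to miss $M$ in the first place. The remaining verifications (openness and convexity of $A$, and the inclusion $K\subset A$) are routine.
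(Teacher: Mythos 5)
Your proof is correct and takes essentially the same route as the paper: thicken $K$ to the open convex set $A=K+U$ disjoint from the closed affine manifold $M$ and then invoke Corollary \ref{TMD2}, noting $K\subset A$ since $0\in U$. The only difference is that you spell out the routine compactness-plus-local-convexity argument producing the convex neighbourhood $U$ with $(K+U)\cap M=\emptyset$, a step the paper asserts in a single line without proof.
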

\begin{proof} Let $U$ be a neighbourhood of $0$ such that $(K+U)\cap M =\emptyset$. Since $V$ is locally convex, we can assume $U$ is a convex set, hence $K+U$ is an open convex (not empty) subset of $V$ which is disjoint of $M$. By Corollary \ref{TMD2}, there exists a hyperplane $H$ disjoint of $K$ and containing $M$.

\end{proof}

\begin{corollary}
\label{C2TMD}
Let $V$ be a locally convex space and $M \subset V$ a subspace. Then $x\in \overline{M}$ if, and only if, $x^*(x)=0$ for all $x^* \in V^*$ which vanish in $M$.
\end{corollary}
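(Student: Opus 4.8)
The plan is to prove the two implications separately, the nontrivial one being the converse, which I would handle by contraposition using the separation result already established in Corollary \ref{CTMD}.

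First I would dispatch the direct implication. Suppose $x\in\overline{M}$ and let $x^\ast\in V^\ast$ vanish on $M$. Since $x^\ast$ is continuous, $\ker{x^\ast}=(x^\ast)^{-1}(\{0\})$ is closed; as it contains $M$, it must contain $\overline{M}$, and hence $x^\ast(x)=0$. This direction invokes no separation theorem, only the continuity of $x^\ast$ and the fact that the preimage of a closed set is closed.

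The substance lies in the converse, which I would prove by contraposition: assuming $x\notin\overline{M}$, I produce some $x^\ast\in V^\ast$ with $x^\ast|_M=0$ but $x^\ast(x)\neq0$. Here $\overline{M}$ is a closed subspace, in particular a closed affine linear manifold, while $K=\{x\}$ is a compact convex subset disjoint from $\overline{M}$. Corollary \ref{CTMD} then furnishes a closed hyperplane $H$ that contains $\overline{M}$ and is disjoint from $\{x\}$, so $x\notin H$.

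The key step, and the only one requiring care, is to realize $H$ as the kernel of a continuous functional vanishing on $M$. Since $\overline{M}$ is a \emph{subspace}, we have $0\in\overline{M}\subset H$, so the closed hyperplane $H$ passes through the origin and is therefore of the form $H=\ker{x^\ast}$ for some nonzero $x^\ast\in V^\ast$. By construction $M\subset\overline{M}=\ker{x^\ast}$ forces $x^\ast|_M=0$, whereas $x\notin H=\ker{x^\ast}$ gives $x^\ast(x)\neq0$, completing the contrapositive and hence the equivalence. I expect the main (and only minor) obstacle to be the justification that a closed hyperplane through the origin coincides with the kernel of a continuous linear functional — a standard fact in topological vector spaces — together with the observation that $H$ genuinely passes through the origin, which is precisely where the subspace (rather than merely affine) structure of $M$ is used.
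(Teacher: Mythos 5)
Your proof is correct and follows essentially the same route as the paper: the forward direction by continuity of $x^\ast$, and the converse by contraposition via Corollary \ref{CTMD} applied to the closed subspace $\overline{M}$ and the compact convex set $\{x\}$. The only difference is that you spell out the step the paper leaves implicit --- that the separating closed hyperplane passes through the origin (since $0\in\overline{M}\subset H$) and is therefore the kernel of a continuous functional vanishing on $M$ --- which is in any case built into the paper's construction, where the hyperplane in Theorem \ref{TMD} is produced as a level set $f^{-1}(\{c\})$ of a continuous functional.
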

\begin{proof}

($\Rightarrow$) Obvious.

($\Leftarrow$) Of course $\overline{M}$ is a closed subspace of $V$, if $x \notin \overline{M}$ then we fall back in the conditions of Corollary \ref{CTMD} since $\{x\}$ is a compact convex set which does not intercept $M$. We conclude that there exists $x^* \in V^*$ such that $M\subset \ker{(x^*)}$ e $\{x\}\cap \ker{(x^*)}=\emptyset \Rightarrow x^*(x)\neq 0.$

\end{proof}

\begin{corollary}
	\label{C3TMD}
	Let $V$ be a locally convex space and $x\in V$, if $x^\ast(x)=0$ for all $x^* \in V^*$ then $x=0$.
\end{corollary}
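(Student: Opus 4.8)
The plan is to obtain this as an immediate special case of Corollary \ref{C2TMD}, taking the subspace there to be the trivial one, $M=\{0\}$. First I would observe that every continuous linear functional $x^\ast\in V^\ast$ vanishes on $M=\{0\}$, simply because $x^\ast(0)=0$ by linearity. Hence the hypothesis of the corollary — that $x^\ast(x)=0$ for every $x^\ast\in V^\ast$ — is exactly the statement that $x^\ast(x)=0$ for all $x^\ast\in V^\ast$ that vanish on $M$. Invoking the nontrivial (``if'') direction of Corollary \ref{C2TMD}, I then conclude $x\in\overline{M}=\overline{\{0\}}$.

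It remains only to identify $\overline{\{0\}}$. In the Hausdorff setting implicit throughout this work, singletons are closed, so $\overline{\{0\}}=\{0\}$ and therefore $x=0$, which is the assertion.

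Alternatively, to keep the argument self-contained without passing explicitly through the closure characterization, I would argue by contraposition directly from Corollary \ref{CTMD}: if $x\neq 0$, then the compact convex set $K=\{x\}$ is disjoint from the closed subspace $M=\{0\}$, so there exists a closed hyperplane $H\supset M$ with $H\cap K=\emptyset$. Writing $H=\ker(x^\ast)$ for a suitable $x^\ast\in V^\ast$, the relation $x\notin H$ forces $x^\ast(x)\neq 0$, contradicting the hypothesis; hence $x=0$.

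There is no genuine obstacle here, since the whole content is already packaged in the separation theorems established above; the corollary is a one-line deduction. The only point deserving a word of care is the passage from $x\in\overline{\{0\}}$ to $x=0$, which rests on the ambient locally convex space being Hausdorff (equivalently, on $\{0\}$ being closed). Both formulations above discharge this in a single step.
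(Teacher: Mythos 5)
Your first argument is exactly the paper's proof: apply Corollary \ref{C2TMD} with $M=\{0\}$, noting that every $x^\ast\in V^\ast$ vanishes there, to get $x\in\overline{\{0\}}=\{0\}$. Your explicit remark that the last identification needs the Hausdorff property (which the paper tacitly assumes), together with the alternative separation argument via Corollary \ref{CTMD}, is sound but goes beyond what the paper records.
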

\begin{proof}
Of course, the set of functionals vanishing in the subspace $M=\{0\}$ is $V^\ast$. By Corollary \ref{C2TMD} we conclude that $x\in \overline{M}=\{0\}$.

\end{proof}

\section{Krein-Milman Theorem}\index{theorem! Krein-Milman}
\label{AppKMT}

\begin{definition}[Face]

Let $V$ a topological vector space and $C$ an convex subset, a non-empty closed and convex set $F\subset C$ is said to be an \emph{extremal set} or a \emph{face} of $C$ if given $x,y \in C$ and $\lambda \in (0,1)$ with $\lambda x +(1-\lambda)y \in F$ then $ x,y \in F$.

\end{definition}

That is, a face is a set such that if it contains any point in the interior of a straight segment, then it contains the whole segment.

\begin{definition}
\label{ExtDef}
Let $V$ be an topological vector space and $C$ a convex subset. An extremal point in $C$ is a point $x\in C$ such that $\{x\}$ is a face of $C$.

We denote by $\mathcal{E}(C)= \{x \in C  \ | \ x \textrm{ is an extremal point of C}\}$. \glsdisp{extr}{\hspace{0pt}}

\end{definition}

\begin{proposition}
Let $V$ be a topological vector space and $C$ a convex subset, the following conditions are equivalent:
\begin{enumerate}[(i)]
	\item $x \in \mathcal{E}(C)$;
	\item $x=\lambda y +(1-\lambda) z$, with $y,z \in C$ and $\lambda \in (0,1) \Rightarrow x=y=z$; 
	\item $C\setminus\{x\}$ is convex;
\end{enumerate}
\end{proposition}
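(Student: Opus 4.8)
The statement to prove is the equivalence of three characterizations of an extremal point of a convex set $C$ in a topological vector space: (i) $x \in \mathcal{E}(C)$, i.e. $\{x\}$ is a face; (ii) whenever $x = \lambda y + (1-\lambda)z$ with $y,z \in C$ and $\lambda \in (0,1)$, then $x = y = z$; and (iii) $C \setminus \{x\}$ is convex.

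The plan is to prove the cycle $(i) \Rightarrow (ii) \Rightarrow (iii) \Rightarrow (i)$, since each implication is short and elementary, essentially an unwinding of the relevant definition. First I would establish $(i) \Rightarrow (ii)$: assuming $\{x\}$ is a face and that $x = \lambda y + (1-\lambda)z$ with $y,z \in C$ and $\lambda \in (0,1)$, the defining property of a face applied to the point $x = \lambda y + (1-\lambda)z \in \{x\}$ forces $y, z \in \{x\}$, hence $y = z = x$. This is immediate from the definition.

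Next, for $(ii) \Rightarrow (iii)$, I would argue by contradiction or directly. Take two points $u, v \in C \setminus \{x\}$ and $\mu \in (0,1)$, and suppose the convex combination $w = \mu u + (1-\mu)v$ failed to lie in $C \setminus \{x\}$. Since $C$ is convex, $w \in C$, so the only way to fail is $w = x$. But then $x = \mu u + (1-\mu)v$ with $u, v \in C$ and $\mu \in (0,1)$, and hypothesis (ii) gives $u = v = x$, contradicting $u, v \neq x$. Hence $w \in C \setminus \{x\}$, proving convexity. For $(iii) \Rightarrow (i)$, I would assume $C \setminus \{x\}$ is convex and show $\{x\}$ is a face. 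The set $\{x\}$ is trivially non-empty, convex, and closed (a singleton, assuming the ambient space is $T_1$, which is the usual convention here). For the extremality condition, suppose $y, z \in C$ and $\lambda \in (0,1)$ with $\lambda y + (1-\lambda)z = x$. If either $y$ or $z$ lay in $C \setminus \{x\}$, then since both points of a segment with an endpoint structure must be examined carefully — more precisely, if both $y, z \in C \setminus \{x\}$, convexity of $C \setminus \{x\}$ would give $x = \lambda y + (1-\lambda)z \in C \setminus \{x\}$, an absurdity; and if exactly one, say $y = x$ but $z \neq x$, then $x = \lambda x + (1-\lambda)z$ forces $(1-\lambda)(x - z) = 0$, whence $z = x$ since $\lambda \neq 1$, again a contradiction. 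Therefore $y = z = x$, so $\{x\}$ is a face.

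None of the three implications presents a genuine obstacle; the content is purely definitional, and the only subtlety worth flagging is the closedness requirement in the definition of a face, which I would dispatch by noting the standing $T_1$ (indeed Hausdorff locally convex) setting of the surrounding appendix so that singletons are closed. The main care needed is simply to keep the roles of $C$ and $C \setminus \{x\}$ straight in the last implication and to handle the degenerate case where one of the two combining points already equals $x$.
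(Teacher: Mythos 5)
Your proof is correct and follows essentially the same route as the paper: the paper treats $(i)\Leftrightarrow(ii)$ as immediate from the definition and proves $(ii)\Leftrightarrow(iii)$ by exactly the contradiction arguments you give, only organized as two equivalences rather than your cycle $(i)\Rightarrow(ii)\Rightarrow(iii)\Rightarrow(i)$. If anything, you are slightly more careful than the paper, which silently absorbs the degenerate case $y=x$, $z\neq x$ into the remark ``if $y\neq x$ then $z\neq x$'' and never addresses the closedness of the singleton $\{x\}$ required by the definition of a face, a point you correctly dispatch via the standing $T_1$ hypothesis.
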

\begin{proof}

$(i)\Leftrightarrow (ii)$ It follows from definition.

$(ii)\Leftrightarrow (iii)$ Let $y,z \in C\setminus\{x\}$ and $\lambda \in (0,1)$, then follows from convexity of $C$ that $\lambda y +(1-\lambda) z \in C$. On the other hand the extremicity of $x$ implies $\lambda y +(1-\lambda) z \neq x$ and then $C\setminus\{x\}$ is convex.

Let $y,z \in C$  and $\lambda \in (0,1)$. Suppose by absurd that $y\neq x$, so we would have $z\neq x$ and hence $x=\lambda y +(1-\lambda) z \in C\setminus\{x\}$ since it is convex, a contradiction, so we must have $x=y=z$.

\end{proof}

\begin{proposition}
\label{exex}
Let $V$ be a locally convex space and $C\subset V$ a non-empty compact convex subset, then $\mathcal{E}(C)\neq\emptyset$.
\end{proposition}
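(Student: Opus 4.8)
The final statement to prove is Proposition~\ref{exex}: if $V$ is a locally convex space and $C \subset V$ is a non-empty compact convex subset, then $\mathcal{E}(C) \neq \emptyset$. This is the existence-of-extreme-points lemma that underlies the Krein--Milman Theorem, and the standard route is a Zorn's lemma argument on the collection of faces ordered by reverse inclusion.

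The plan is as follows. First I would consider the family $\mathcal{F}$ of all faces (extremal sets) of $C$, ordered by reverse inclusion, so that $F_1 \preceq F_2 \Leftrightarrow F_1 \supseteq F_2$. The set $C$ itself is a face, so $\mathcal{F} \neq \emptyset$. The key preliminary observation is that faces are closed and, being closed subsets of the compact set $C$, are themselves compact; moreover any nonempty intersection of faces is again a face. To apply Zorn's lemma I would take a chain $\{F_i\}_{i \in I}$ in $\mathcal{F}$ and consider $F = \bigcap_{i \in I} F_i$. Since the $F_i$ are compact with the finite intersection property (a chain is totally ordered, so finite subfamilies have nonempty intersection equal to the smallest member), compactness forces $F \neq \emptyset$; one then checks directly from the definition that an intersection of faces is a face, so $F \in \mathcal{F}$ is an upper bound for the chain. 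Zorn's lemma then yields a minimal face $F_0$ (minimal with respect to inclusion).

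Next I would show that a minimal face $F_0$ is a single point, which gives the extreme point. Suppose for contradiction that $F_0$ contains two distinct points $x \neq y$. By Corollary~\ref{C3TMD} (or directly by the Hahn--Banach separation in the locally convex setting) there exists a continuous linear functional $\varphi \in V^\ast$ with $\varphi(x) \neq \varphi(y)$, so $\varphi$ is non-constant on $F_0$. Since $F_0$ is compact and $\Re\varphi$ is continuous and real-valued, it attains its maximum $m$ on $F_0$; set
$$
F_1 = \{ z \in F_0 \ | \ \Re\varphi(z) = m \}.
$$
This set is nonempty, closed, convex, and a proper subset of $F_0$ (since $\varphi$ is non-constant). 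The crucial step is to verify that $F_1$ is itself a face of $C$: if $z = \lambda u + (1-\lambda) w \in F_1$ with $u, w \in C$ and $\lambda \in (0,1)$, then since $F_1 \subset F_0$ and $F_0$ is a face we first get $u, w \in F_0$, whence $\Re\varphi(u), \Re\varphi(w) \leq m$; the convex-combination identity $m = \Re\varphi(z) = \lambda \Re\varphi(u) + (1-\lambda)\Re\varphi(w)$ together with $\lambda \in (0,1)$ forces $\Re\varphi(u) = \Re\varphi(w) = m$, so $u, w \in F_1$. Thus $F_1 \in \mathcal{F}$ and $F_1 \subsetneq F_0$, contradicting the minimality of $F_0$. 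Therefore $F_0 = \{x_0\}$ for some $x_0$, and $\{x_0\}$ being a face means precisely that $x_0 \in \mathcal{E}(C)$.

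The main obstacle, and the step requiring the most care, is the verification that the collection $\mathcal{F}$ has the properties needed for Zorn's lemma: namely that arbitrary intersections of faces are again faces (easy from the definition) and, more importantly, that chains have nonempty intersection. This last point is exactly where compactness of $C$ is indispensable, invoked through the finite intersection property of the nested closed sets; without compactness the argument collapses. The separation step at the end is comparatively routine given the Hahn--Banach corollaries already established in the appendix, so the real content is the topological bookkeeping in the Zorn's lemma setup.
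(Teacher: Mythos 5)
Your proof is correct and follows essentially the same route as the paper's: Zorn's lemma on the family of faces ordered by reverse inclusion, with compactness of $C$ supplying the finite intersection property for chains, followed by shrinking a minimal face via a continuous linear functional obtained from the Hahn--Banach separation results to show it is a singleton. The only (cosmetic) divergences are that you separate the two points directly via Corollary \ref{C3TMD} applied to $x-y$ and maximize $\Re\varphi$, whereas the paper separates $\{x\}$ from the affine manifold through $y$ and minimizes $f$; in the complex case you should just note that $\varphi$ may be replaced by $\iu\varphi$ so that $\Re\varphi$ itself distinguishes $x$ from $y$.
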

\begin{proof}
Denote by $\mathcal{F}$ the family of all faces of $C$ partially ordered by $\leq$, where $F_1\leq F_2$ if, and only if, $F_2\subset F_1$. Of course such family is non-empty since $C\in\mathcal{F}$.

Let $\mathcal{F}_0\subset\mathcal{F}$ be a chain. Since $\mathcal{F}_0$ is a totally ordered set and its elements are closed, $\mathcal{F}_0$ has the finite intersection property and it follows from compactness of $C$ that $\displaystyle \bigcap_{F\in \mathcal{F}_0}F \neq \emptyset$. It is easy to see that $\displaystyle \bigcap_{F\in \mathcal{F}_0}F$ is a face of $C$, since intersections preserve the closed convex and extreme properties of a set, furthermore, it is clearly an upper bound of $\mathcal{F}_0$. It follows then by Zorn's lemma that there exists $\tilde{F} \subset \mathcal{F}$ maximal.

Let us show $\tilde{F}$ is a unitary set. In order to do that, suppose it is not true, that is, take $x,y \in \tilde{F}$ with $x\neq y$. Consider then the compact and convex set $D=\{x\}$ and the linear affine manifold $M=\{0\}+y$, by Theorem \ref{TMD2} there exists a hyperplane $H$ which contains $M$ and is disjoint of $D$.

Let $f \in V^*$ and $c \in \mathbb{K}$ such that $H=f^{-1}(\{c\})$ and define 
$$ \tilde{F}_0=\left\{x\in \tilde{F} \ \middle | \ f(x)=\inf_{y \in \tilde{F}}{f(x)}\right\}.$$

Let's show that $\tilde{F}_0$ is a face of $\tilde{F}$ and consequently a face of $C$, in fact, suppose $x=\lambda y + (1-\lambda) z$ with $x\in \tilde{F}_0$, $y,z \in \tilde{F}$ and $\lambda \in (0,1)$, then
$$\inf_{y \in \tilde{F}}{f(x)}=f(\lambda y + (1-\lambda) z)=\lambda f(y) + (1-\lambda) f(z) \leq \inf_{y \in \tilde{F}}{f(x)} \Rightarrow f(x)=f(y)=f(z)$$
It follows from the definition that $y,z \in \tilde{F}_0$ and thus it is a face. Note now that we cannot have $x$ and $y$ simultaneously as elements of $\tilde{F}_0$, so $\tilde{F} \leq \tilde{F}_0$ and this contradicts the maximality of $\tilde{F}$. Therefore $\tilde{F}$ is unitary and this guarantees the existence of maximal points of $C$.

\end{proof}

\begin{theorem}[Krein-Milman]\index{theorem! Krein-Milman}
\label{TKM}
Let $V$ be a locally convex space and $C$ a compact convex subset, then $C=\cchull{\mathcal{E}(C)}$.
\end{theorem}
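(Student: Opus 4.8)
The plan is to establish the two inclusions $\cchull{\mathcal{E}(C)}\subseteq C$ and $C\subseteq\cchull{\mathcal{E}(C)}$ separately. The first is immediate: since $\mathcal{E}(C)\subseteq C$ and $C$ is convex, one has $\chull{\mathcal{E}(C)}\subseteq C$, and because $C$ is compact (hence closed, as $V$ is Hausdorff), passing to closures gives $\cchull{\mathcal{E}(C)}\subseteq C$. So essentially all the work lies in the reverse inclusion.

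For the reverse inclusion I would argue by contradiction. Set $K=\cchull{\mathcal{E}(C)}$ and note that, by the inclusion just proved, $K$ is a closed subset of the compact set $C$, hence itself compact, and it is convex. Suppose there were a point $x_0\in C\setminus K$. Since $\{x_0\}$ is compact convex and $K$ is closed convex and disjoint from it, Corollary \ref{CTMD} (geometric Hahn-Banach separation) provides a closed hyperplane separating them; passing to the real part of the associated functional yields a continuous real-linear functional $g:V\to\mathbb{R}$ together with a constant strictly separating $x_0$ from $K$, so that, after a harmless normalization, $g(x_0)<\inf_{y\in K}g(y)$.

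The heart of the argument, and the step I expect to be the main obstacle, is to manufacture an extreme point of $C$ that lies inside $K$, contradicting the strict separation. First I would consider the minimizing face $C_g=\{x\in C \mid g(x)=\min_{y\in C}g(y)\}$, which is non-empty by compactness of $C$ and continuity of $g$, and which is readily verified to be a closed convex face of $C$: if $x=\lambda y+(1-\lambda)z$ lies in $C_g$ with $y,z\in C$ and $\lambda\in(0,1)$, then $g(y)=g(z)=g(x)$, forcing $y,z\in C_g$. Applying Proposition \ref{exex} to the non-empty compact convex set $C_g$ produces an extreme point $e$ of $C_g$. The key lemma to isolate is that an extreme point of a face is extreme in the ambient set: indeed, if $e=\lambda y+(1-\lambda)z$ with $y,z\in C$ and $\lambda\in(0,1)$, the face property forces $y,z\in C_g$, and extremality of $e$ in $C_g$ then gives $y=z=e$, so $e\in\mathcal{E}(C)\subseteq K$.

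Finally I would close the argument with the contradiction: on one hand $g(e)=\min_{x\in C}g(x)\leq g(x_0)$ because $x_0\in C$; on the other hand $e\in K$ gives $g(e)\geq\inf_{y\in K}g(y)>g(x_0)$. This is impossible, so $C\setminus K=\emptyset$, that is $C=\cchull{\mathcal{E}(C)}$. The only subtlety requiring care is the passage from the separating hyperplane of Corollary \ref{CTMD} to a genuine strict inequality for a real-linear functional (in particular handling the complex-scalar case by taking real parts, exactly as in the proof of the Mazur-Dieudonné theorem), but once that is in place the face-and-minimization mechanism delivers the result cleanly.
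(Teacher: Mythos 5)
Your proposal is correct and follows essentially the same route as the paper's own proof: separate a hypothetical point of $C\setminus\cchull{\mathcal{E}(C)}$ from the compact convex set $\cchull{\mathcal{E}(C)}$ by a geometric Hahn--Banach argument, pass to the minimizing face $F=\{x\in C \mid f(x)=\min_C f\}$, and extract an extreme point of $F$ that is extreme in $C$ (the paper's $\mathcal{E}(F)\subseteq\mathcal{E}(C)$, your ``extreme point of a face is extreme in the ambient set''), yielding the same contradiction. Your version is, if anything, slightly more careful on two points the paper glosses over: you invoke Corollary \ref{CTMD} (compact-convex versus closed affine manifold) rather than Corollary \ref{TMD2}, and you explicitly note the real-part normalization needed to obtain a strict real separation in the complex case.
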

\begin{proof}
By Proposition \ref{exex}, $\mathcal{E}(C)\neq \emptyset$. Suppose that $C\setminus \cchull{\mathcal{E}(C)}\neq \emptyset$ and take $x\in C\setminus \cchull{\mathcal{E}(C)}$, note that $\cchull{\mathcal{E}(C)} \subset C$ is compact since it is a closed set contained in a compact one. By theorem \ref{TMD2} there exists a closed hyperplane $H$ containing $\{x\}$ and disjoint of $\cchull{\mathcal{E}(C)}$.

Let $f \in V^*$ and $c \in \mathbb{K}$ such that $H=f^{-1}(\{c\})$, we can assume without loss of generality that $f(x)=c < f(y) \ \forall y \in \cchull{\mathcal{E}(C)}$, and take $\displaystyle F=\left\{x\in C \ \middle | \ f(x)=\inf_{y \in C}{f(x)}\right\}$. We have already seem in the proof of Proposition \ref{exex} that $F$ is a proper face of $C$ and that $\mathcal{E}(F)\neq \emptyset$, furthermore, if $y\in \mathcal{E}(F)$, $\displaystyle f(y)=\inf_{z \in C}{f(z)} \leq f(x) = c < f(w) \ \forall w \in \cchull{\mathcal{E}(C)}$, hence $y \notin \cchull{\mathcal{E}(C)}$. On the other hand we must have $\mathcal{E}(F)\subset \mathcal{E}(C)$. This leads us to a contradiction, so we conclude that $C=\cchull{\mathcal{E}(C)}$.

\end{proof}

\begin{lemma}
\label{extremealidentity}
Let $\calgebra$ be a $C^\ast$-algebra and $\mathcal{S}$ be its unit sphere, then $\mathcal{S}$ has an extremal point if, and only if, $\calgebra$ has an identity.
\end{lemma}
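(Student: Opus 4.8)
The statement to prove is Lemma \ref{extremealidentity}: the unit sphere $\mathcal{S}$ of a $C^\ast$-algebra $\calgebra$ has an extremal point if and only if $\calgebra$ is unital.

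The plan is to prove both implications. For the easy direction $(\Leftarrow)$, I would show that if $\calgebra$ has an identity $\mathbbm{1}$, then $\mathbbm{1}$ itself is an extremal point of the unit ball $B_1$ (and hence of the unit sphere, since $\|\mathbbm{1}\|=1$). This follows immediately from the Kadison characterization of extremal points of $B_1$ proved earlier in the excerpt: the extremal points are exactly the partial isometries $U$ with $(\mathbbm{1}-UU^\ast)\calgebra(\mathbbm{1}-U^\ast U)=\{0\}$. Taking $U=\mathbbm{1}$, we have $U^\ast U=UU^\ast=\mathbbm{1}$, so $\mathbbm{1}-UU^\ast=\mathbbm{1}-U^\ast U=0$ and the condition is trivially satisfied. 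Thus $\mathbbm{1}\in\mathcal{E}(B_1)$.

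For the harder direction $(\Rightarrow)$, suppose $U$ is an extremal point of the unit sphere $\mathcal{S}$. The goal is to manufacture an identity from $U$. By the same Kadison theorem, $U$ is a partial isometry with $(\mathbbm{1}-UU^\ast)\calgebra(\mathbbm{1}-U^\ast U)=\{0\}$ (working, as in that theorem, in the algebra with an adjoined identity if necessary, but then showing the identity actually lies in $\calgebra$). First I would establish that $P=U^\ast U$ and $Q=UU^\ast$ are projections and that $U$ is an isometry on $P\hilbert$. The key step is to show $P=Q=\mathbbm{1}$; I expect the orthogonality condition $(\mathbbm{1}-Q)\calgebra(\mathbbm{1}-P)=\{0\}$ together with extremality to force both projections to act as the identity on $\calgebra$. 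Concretely, I would argue that for $A\in\calgebra$, the element $U^\ast A$ must reproduce $A$: one shows $AP=A$ and $QA=A$ for all $A$ by using that any failure would allow perturbing $U$ within $\mathcal{S}$, contradicting extremality. Then $U^\ast$ serves as the sought identity, since $AU^\ast U = A$ and $UU^\ast A = A$ give $U^\ast$ acting as a two-sided unit on the relevant one-sided modules, and one identifies the genuine identity element.

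The main obstacle I anticipate is the $(\Rightarrow)$ direction, specifically proving that the partial isometry extremal point actually yields a \emph{two-sided} identity lying in $\calgebra$ rather than merely in a unitization. The subtlety is that extremality of $U$ in the \emph{sphere} (all norm-one elements) is slightly stronger than extremality in the ball, and I must exploit precisely this to rule out the possibility that $U$ is a proper partial isometry. The cleanest route is probably to show $s^\calgebra_L(A)\leq Q$ and $s^\calgebra_R(A)\leq P$ for every $A$, so that $P$ and $Q$ dominate all left and right supports; since the supremum of all such supports is the identity of the WOT-closure, and extremality prevents $P,Q$ from being proper, I conclude $U^\ast=UU^\ast U^\ast\cdots$ stabilizes to a genuine unit. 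I would complete the argument by verifying that $E\defeq U^\ast U$ acting as identity forces $\calgebra$ to be unital with $\mathbbm{1}$ expressible in terms of $U$.
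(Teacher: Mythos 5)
Your $(\Leftarrow)$ direction (identity implies an extremal point exists) is sound, and it takes a different route from the paper: the paper argues directly, writing $\mathbbm{1}=\frac{A+B}{2}$ with $A,B$ in the sphere, passing to the self-adjoint parts $\tilde{A}=\frac{A+A^\ast}{2}$, $\tilde{B}=\frac{B+B^\ast}{2}$, and using spectral theory to force $\tilde A=\tilde B=\mathbbm{1}$ and then $A=B=\mathbbm{1}$, whereas you invoke the Kadison characterization of $\mathcal{E}(B_1)$ with $U=\mathbbm{1}$. That shortcut is legitimate (and your implicit passage between sphere- and ball-extremality is harmless, though your remark that sphere-extremality is ``slightly stronger'' is backwards: for a norm-one element the two notions coincide, since in any decomposition $x=\lambda y+(1-\lambda)z$ with $y,z\in B_1$ the triangle inequality forces $\|y\|=\|z\|=1$).

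The $(\Rightarrow)$ direction, however, has a genuine gap: the central claims of your sketch are false. Extremality does \emph{not} force $P=U^\ast U$ and $Q=UU^\ast$ to equal $\mathbbm{1}$, nor does it give $s^{\calgebra}_L(A)\leq Q$ for all $A$, nor is $U^\ast$ an identity. The unilateral shift $S$ on $\ell_2$ is a counterexample to all three: it is extremal in the unit ball of $B(\hilbert)$ (here $S^\ast S=\mathbbm{1}$, so $(\mathbbm{1}-SS^\ast)B(\hilbert)(\mathbbm{1}-S^\ast S)=\{0\}$ trivially), yet $SS^\ast\neq\mathbbm{1}$, $S^\ast\neq\mathbbm{1}$, and $SS^\ast$ dominates no rank-one projection onto the missing basis vector. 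Even the trivial example of a unitary $U\neq\mathbbm{1}$ in a unital algebra already refutes ``$U^\ast$ serves as the sought identity.'' What the Kadison condition actually yields, and what your sketch never reaches, is the paper's key step: the element $h=U^\ast U+UU^\ast\in\calgebra$ is invertible. Indeed, if it were not, spectral theory (in the unitization) produces a positive $B$ with $\|B\|=1$ and $hB=0$; then $\|UB\|=\|BU^\ast UB\|^{\frac{1}{2}}\leq\|BhB\|^{\frac{1}{2}}=0$ and likewise $\|BU\|=0$, so $B=(\mathbbm{1}-UU^\ast)B(\mathbbm{1}-U^\ast U)$, which the extremality condition forces to be $0$ --- contradicting $\|B\|=1$. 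Invertibility of $h$ then puts the identity inside $\calgebra$: writing $h^{-1}=\alpha\mathbbm{1}+C$ with $C\in\calgebra$ in the unitization gives $\mathbbm{1}=hh^{-1}=\alpha h+hC\in\calgebra$. Without this invertibility argument there is no bridge from ``extremal partial isometry'' to ``$\calgebra$ is unital,'' so as written your forward implication does not go through.
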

\begin{proof}
$(\Rightarrow)$ If $\calgebra$ has an identity $\mathbbm{1}$, we can write it as $\mathbbm{1}=\frac{A+B}{2}$ with $A,B \in \mathcal{S}$. It follows that $\mathbbm{1}= \frac{\tilde{A}+\tilde{B}}{2}$ with $\tilde{A}=\frac{A+A^\ast}{2}$ and $\tilde{B}=\frac{B+B^\ast}{2}$. Now, since $\tilde{A}=2\mathbbm{1}-\tilde{B}$, they are self-adjoint elements (they commute) and by Theorem \ref{ST}, $\tilde{A},\tilde{B}\geq \mathbbm{1}$. But $\tilde{A},\tilde{B} \in \mathcal{S}$, hence $\tilde{A}\leq \mathbbm{1}$ and $\mathbbm{1}\leq\tilde{B}$, thus we conclude that $\tilde{A}=\tilde{B}=\mathbbm{1}$.

Returning to the definition of the operators, $A=2\mathbbm{1}-A^\ast$ and it follows that $A$ is a normal operator such that $2A=AA^\ast+A^\ast A$, hence positive. By the above argument $A=\mathbbm{1}$ and the using the analogous argument for $B$, it follows that $\mathbbm{1}$ is an extremal point.

$(\Leftarrow)$ 
Suppose now $A \in \mathcal{E}(\mathcal{S})$. Of course $\sigma(A^\ast A),\sigma(AA^\ast)\subset {0,1}$, otherwise it is easy (again by \ref{ST}) to construct a positive operator $B\subset \calgebra$ such that $\|B\| \leq 1$, $\|A \pm B\| =1$, in particular, $A^\ast A$ and $AA^\ast$ are projections.


Now, let $B\in \{C-C A^\ast A-A^\ast A C A^\ast A+A^\ast A C A^\ast A  \ | \ C \in \calgebra\}$ such that $\|B\|\leq1$. A straight forward calculation using that $A^\ast A$ is a projection shows that $B^\ast A A^\ast B=0$, thus $\| A^\ast B\|=\|B^\ast A(B^\ast A)^\ast\|^{\frac{1}{2}}=0 \Rightarrow B^\ast A=A^\ast B=0$ and $A^\ast A B^\ast B=0$.
So, we must have 
\begin{equation}\label{eq2}\begin{aligned}
\|A\pm B\|&
=\|(A^\ast\pm B^\ast)(A\pm B)\|^{\frac{1}{2}}\\
&=\|A^\ast A\pm(A^\ast B+B^\ast A)-B^\ast B\|^{\frac{1}{2}}\\
&=\|A^\ast A+B^\ast B\|^{\frac{1}{2}}=\max\{\|A\|,\|B\|\}\\
&\leq 1.\end{aligned} \end{equation}

From equation \ref{eq2} we conclude, since $A$ is an extremal point,
$$
A=\frac{1}{2}\frac{A+B}{\|A+B\|}+\frac{1}{2}\frac{A-B}{\|A-B\|} \Rightarrow B=0 \Rightarrow$$
$$\{C-C A^\ast A-A A^\ast C +A A^\ast C A^\ast A  \ | \ C \in \calgebra\}=\{0\}
$$

Now, define $h=A^\ast A+AA^\ast$, and suppose it does not have an inverse, that means, via the identification in \ref{TGN} and Theorem \ref{TI}, there exists a positive operator $B \in \calgebra$ with $\|B\|=1$ and $hB=0$. But then $$\|AB\|=\|B A^\ast\|=\|B A^\ast A B\|^{\frac{1}{2}}\leq \|B h B\|=0.$$

Doing the analogous estimation to $\|BA\|$ we conclude
$$\| B-B A^\ast A-A A^\ast B +A A^\ast B A^\ast A\|=\|B\|= 1.$$

Since it is a contradiction we must have that $hh^{-1}$ is an identity.

\end{proof}

\begin{definition}
	\label{defAppId}
	Let $\calgebra$ be a $C^\ast$-algebra. A net $(E_i)_{i\in I}\subset\calgebra$ is said to be an approximate identity \index{approximate identity} if, for all $A\in\calgebra$,
	$$\lim_{i\in I}\|AE_i-A\|=0.$$
\end{definition}

In the previous definition we defined a right approximate identity, but in a $C^\ast$-algebra, right and left approximate identity are the same.

The next result is a original proof of a well known result in the theory of $C^\ast$-algebras. After write this proof and consider to publish it, the author become aware that the same technique was used to prove it in Takesaki's book \cite{Takesaki2003}.
\begin{theorem}[Segal] \index{theorem! Segal}
\label{ExisAppId}
Every $C^\ast$-algebra $\calgebra$ contains a positive
approximate identity.
\end{theorem}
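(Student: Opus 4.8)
The goal is to construct a positive approximate identity $(E_i)_{i\in I}$ for an arbitrary $C^\ast$-algebra $\calgebra$, possibly without unit. The standard and most transparent approach is to take the net indexed by the directed set of finite-sized ``small'' positive elements and to use the functional calculus together with the $C^\ast$-condition. Concretely, I would consider the set
$$
\Lambda = \left\{ A \in \calgebra_+ \ \middle| \ \|A\| < 1 \right\}
$$
and direct it by the natural order $\leq$ coming from positivity (Definition \ref{defpositive} and Proposition \ref{EquiPositive}); the first thing to check is that this is genuinely a directed set, i.e. that any two elements $A,B\in\Lambda$ have an upper bound in $\Lambda$. This is the point where the absence of a unit bites, so I would handle it by passing through the functional calculus. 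Adjoin a unit (so we may invert), and for $A \in \Lambda$ set $E_A = A(\mathbbm{1}+A)^{-1}$, which by Theorem \ref{funccalculus} lies in the closure of the algebra generated by $A$, hence in $\calgebra$ itself, is positive, and satisfies $\|E_A\| < 1$. The map $A \mapsto E_A$ corresponds to the operator-monotone function $t \mapsto t/(1+t)$ on $[0,\infty)$.

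The heart of the proof is the monotonicity argument: I would first show that $A \leq B$ implies $E_A \leq E_B$, using that $t \mapsto t/(1+t)$ is operator-monotone (equivalently, $t \mapsto 1/(1+t)$ is operator-antitone, which follows from the integral representation or directly from the resolvent identity). Then, to establish directedness of the index net, given $A, B \in \Lambda$ I would produce a common upper bound; the clean way is to re-index by $\Lambda$ itself with the order $\leq$, and verify that for $A, B \leq C$ one has $E_A, E_B \leq E_C$, reducing everything to finding upper bounds inside $\Lambda$. The remaining work is the convergence estimate: for a fixed $X \in \calgebra$, I want $\|X - X E_A\| \to 0$ along the net. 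The standard computation is
$$
\left\| X(\mathbbm{1} - E_A) \right\|^2 = \left\| (\mathbbm{1}-E_A) X^\ast X (\mathbbm{1}-E_A) \right\|
$$
by the $C^\ast$-condition, and then one bounds $X^\ast X$ by choosing the index $A$ to be (a functional-calculus modification of) $X^\ast X$ itself, so that $(\mathbbm{1}-E_A) X^\ast X (\mathbbm{1}-E_A)$ becomes small; explicitly, with $A = n X^\ast X(\mathbbm{1}+nX^\ast X)^{-1}$ one gets a spectral estimate of order $1/n$ via the functional calculus applied to $|X|^2$.

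The main obstacle I anticipate is precisely the combination of two difficulties: establishing that the chosen index family is directed when no unit is available, and getting a \emph{uniform} convergence estimate rather than a merely pointwise (in $x\in\hilbert$) one. The first is resolved by working with $E_A = A(\mathbbm{1}+A)^{-1}$ inside the unitization but checking that every $E_A$ and all the needed order relations stay inside $\calgebra$ (Theorem \ref{funccalculus} guarantees this, since these are continuous functions of $A$ vanishing at $0$). The second is resolved by the $C^\ast$-identity trick above, which converts the norm of $X(\mathbbm{1}-E_A)$ into the norm of a positive element squeezed between spectral projections of $|X|^2$, where a direct spectral-decomposition estimate (Corollary \ref{ST}) gives the bound. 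I would also remark that right and left approximate identities coincide here because, once $\|X - XE_A\|\to 0$ is shown for all $X$, applying it to $X^\ast$ and taking adjoints yields $\|X - E_A X\| \to 0$ as well, so the net is a two-sided positive approximate identity, completing the proof.
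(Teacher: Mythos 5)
Your overall route---the constructive textbook argument via functional calculus and operator monotonicity---is a genuinely different approach from the paper's, but as written it contains a fatal error in the construction itself. You index by $\Lambda=\{A\in\calgebra_+ : \|A\|<1\}$ and take the net elements to be $E_A=A(\mathbbm{1}+A)^{-1}$. Since $\sigma(A)\subset[0,1)$, the functional calculus gives $\sigma(E_A)\subset[0,\tfrac12)$, so $\mathbbm{1}-E_A$ is invertible in the unitization with $\|(\mathbbm{1}-E_A)^{-1}\|<2$. Hence for every $X\in\calgebra$ and \emph{every} $A\in\Lambda$,
$$\|X\|=\left\|X(\mathbbm{1}-E_A)(\mathbbm{1}-E_A)^{-1}\right\|< 2\,\|X-XE_A\|,$$
so $\|X-XE_A\|>\|X\|/2$ identically, and the net $(E_A)_{A\in\Lambda}$ can never be an approximate identity. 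Your final estimate exhibits the same defect concretely: choosing the index $A=nX^\ast X(\mathbbm{1}+nX^\ast X)^{-1}$ composes $t\mapsto t/(1+t)$ with itself, giving $E_A=nX^\ast X(\mathbbm{1}+2nX^\ast X)^{-1}$, and then $\|(\mathbbm{1}-E_A)X^\ast X(\mathbbm{1}-E_A)\|=\sup_{t\in\sigma(X^\ast X)}t\bigl(\tfrac{1+nt}{1+2nt}\bigr)^2$, which tends to $\tfrac14\|X\|^2$ rather than being of order $1/n$.

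The repair is standard, and you should note you have conflated its two usual versions. Either (a) index by all of $\calgebra_+$---where directedness is trivial, since $a,b\le a+b$---and let the approximate identity be $E_a=a(\mathbbm{1}+a)^{-1}\in\Lambda$; then $a=nX^\ast X$ gives $\mathbbm{1}-E_a=(\mathbbm{1}+nX^\ast X)^{-1}$ and $\|(\mathbbm{1}-E_a)X^\ast X(\mathbbm{1}-E_a)\|\le\sup_{t\ge0}t(1+nt)^{-2}\le\tfrac{1}{4n}$; or (b) take the net to be the inclusion $(\lambda)_{\lambda\in\Lambda}$ itself, in which case the directedness of $\Lambda$ is exactly the nontrivial point and requires the order isomorphism $\lambda\mapsto\lambda(\mathbbm{1}-\lambda)^{-1}$ between $\Lambda$ and $\calgebra_+$ together with the operator monotonicity you cite; your ``clean way'' paragraph is circular here, since producing upper bounds inside $\Lambda$ is precisely what must be proved. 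In either version you must also use monotonicity along the net (if $a\le b$ then $X(\mathbbm{1}-E_b)X^\ast\le X(\mathbbm{1}-E_a)X^\ast$) to pass from smallness at the special indices to convergence of the whole net, a step you omit. Finally, be aware that the paper proves the theorem by an entirely different, nonconstructive route: it represents $\calgebra$ faithfully on $\hilbert$ via Theorem \ref{TGN}, applies Banach--Alaoglu and the Krein--Milman Theorem \ref{TKM} to the weak-operator closure of the positive part of the unit ball, invokes Lemma \ref{extremealidentity} (extreme points of the unit sphere correspond to an identity) to obtain a net $E_i$ converging weakly to the identity, and then upgrades weak to strong convergence through the estimate $\|x-E_ix\|^4\le\|x\|^2\langle x-E_ix,x\rangle$. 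Your approach, once corrected as above, has the advantage of being explicit and of yielding the norm statement $\|X-XE_a\|\to0$ directly, whereas the paper's compactness argument is softer.
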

\begin{proof}

First we recall Theorem \ref{TGN}, to reduce to the case of a subspace of $B(\hilbert)$.

Let $\mathcal{P}=\{A\in\calgebra \ | \ A\geq0\}$. Then $\mathcal{P}$ is a convex pointed cone, since by the Banach-Alaoglu Theorem the unit ball is weak-operator compact, $K=\overline{\mathcal{P}\cap B_{B(\hilbert)}}^{WOT}=\overline{\mathcal{P}}^{WOT}\cap\overline{\mathcal{S}_{B(\hilbert)}}^{WOT}$ must be so, thus $K=\overline{conv}^{WOT}{\left(\mathcal{E}(K)\right)}$ due to Theorem \ref{TKM}. On the other hand, according to Lemma \ref{extremealidentity} there must exists an identity in the weak-operator closure of $\mathcal{S}_{B(\hilbert)}$. Let $(E_i)_{i\in I} \subset \mathcal{S}_{B(\hilbert)}$ be a net, which is convergent to this identity denoted by $\mathbbm{1}$. This means, for all $x, y\in \hilbert$ we have $\ip{x-E_i x}{y}\rightarrow 0$.

Using Theorem \ref{sqrt}, for each index $\alpha$ there exists a unique positive operator $\sqrt{\mathbbm{1}-E_i}$ such that $\sqrt{\mathbbm{1}-E_i}^2=\mathbbm{1}-E_i$. It follows that
$$\begin{aligned}
\|x-E_i x\|^4
&=\ip{(\mathbbm{1}-E_i)^{\frac{1}{2}}x}{(\mathbbm{1}-E_i)^{\frac{3}{2}}x}\\
&\leq \|(\mathbbm{1}-E_i)^{\frac{1}{2}}x\| \|(\mathbbm{1}-E_i)^{\frac{3}{2}}x\|\\
&\leq \|E_i\|^{3} \|x\|^2 \ip{x-E_i x}{x}\\
&\leq \|x\|^2 \ip{x-E_i x}{x}\\
\end{aligned}$$
and from this we conclude, for each fixed $x\in \hilbert$, $\|E_i x -x\|\rightarrow 0$, $E_i \in \mathcal{P}\cap \mathcal{S}_\calgebra$.
\end{proof}

\end{appendices}

\glsaddallunused 
\printglossaries 


\printthesisindex 

\begin{spacing}{0.9}





\printbibliography[heading=bibintoc, title={References}]

\end{spacing}

\end{document}